\preto\fullcite{\AtNextCite{\defcounter{maxnames}{99}}}
\newtheorem*{theorem*}{Theorem}
\newtheorem{theorem}{Theorem}
\numberwithin{theorem}{subsection}
\newtheorem*{lemma*}{Lemma}
\newtheorem{lemma}[theorem]{Lemma}
\newtheorem*{proposition*}{Proposition}
\newtheorem{proposition}[theorem]{Proposition}
\newtheorem{corollary}[theorem]{Corollary}
\theoremstyle{remark}
\theoremstyle{definition}
\newtheorem{definition}[theorem]{Definition}
\newtheorem{example}[theorem]{Example}
\newtheorem{examples}[theorem]{Examples}
\newcommand{\suff}{\textnormal{\textsf{suf}}}
\newcommand{\Expr}{\textnormal{\textsf{Expr}}}
\newcommand{\BA}{\textnormal{\textsf{BA}}}
\newcommand{\BExpr}{\textnormal{\textsf{BExpr}}}
\newcommand{\At}{\textnormal{\textsf{At}}}
\newcommand{\GS}{\textnormal{\textsf{GS}}}
\newcommand{\GSM}{\textnormal{\textsf{GS}}^{-}}
\newcommand{\false}{\texttt{false}}
\newcommand{\true}{\texttt{true}}
\newcommand{\assert}{\texttt{assert}}
\newcommand{\aand}{\texttt{and}}
\newcommand{\oor}{\texttt{or}}
\newcommand{\nnot}{\texttt{not}}
\newcommand{\ddo}{\texttt{do}}
\newcommand{\iif}{\texttt{if}}
\newcommand{\tthen}{\texttt{then}}
\newcommand{\eelse}{\texttt{else}}
\newcommand{\wwhile}{\texttt{while}}
\newcommand{\row}{\textnormal{\textsf{row}}}
\newcommand{\GLStar}{\textnormal{\textsf{GL}}^{*}}
\newcommand{\LStar}{\textnormal{\textsf{L}}^{*}}
\newcommand{\Lsharp}{\textnormal{\textsf{L}}^{\sharp}}
\newcommand{\C}{\mathscr{C}}
\newcommand{\EM}{{\C}^T}
\newcommand{\KL}{{\C}_T}
\newcommand{\CSL}{\textnormal{\textsf{CSL}}}
\newcommand{\CABA}{\textnormal{\textsf{CABA}}}
\newcommand{\CDL}{\textnormal{\textsf{CDL}}}
\newcommand{\Set}{\textnormal{\textsf{Set}}}
\newcommand{\obs}{\textnormal{\textsf{obs}}}
\newcommand{\Bialg}{\textnormal{\textsf{Bialg}}}
\newcommand{\Coalg}{\textnormal{\textsf{Coalg}}}
\newcommand{\im}{\textnormal{\textsf{im}}}
\newcommand{\Vect}{\textnormal{-\textsf{Vect}}}
\newcommand{\Der}{\textnormal{\textsf{Der}}}
\newcommand{\dimension}{\textnormal{\textsf{dim}}}
\newcommand{\img}{\im}
\newcommand{\gen}{\textnormal{\textsf{gen}}}
\newcommand{\ext}{\textnormal{\textsf{ext}}}
\newcommand{\expa}{\textnormal{\textsf{exp}}}
\newcommand{\free}{\textnormal{\textsf{free}}}
\newcommand{\Sub}{\textnormal{\textsf{Sub}}}
\newcommand{\minn}{\textnormal{\textsf{min}}}
\newcommand{\M}{\textnormal{\textsf{M}}}
\newcommand{\Cat}{\textnormal{\textsf{Cat}}}
\newcommand{\ev}{\textnormal{\textsf{ev}}}
\newcommand{\Pow}{\mathcal{P}}
\newcommand{\lang}{L}
\newcommand{\id}{\textnormal{\textsf{id}}}
\newcommand{\st}{\textnormal{\textsf{st}}}
\newcommand{\supp}{\textnormal{\textsf{supp}}}
\newcommand{\Perm}{\textnormal{\textsf{Perm}}}
\newcommand{\dom}{\textnormal{\textsf{dom}}}
\newcommand{\cod}{\textnormal{\textsf{cod}}}
\newcommand{\Hom}{\textnormal{\textsf{Hom}}}
\newcommand{\Galg}{\textnormal{\textsf{GAlg}}}
\newcommand{\Balg}{\textnormal{\textsf{BAlg}}}
\newcommand{\Mat}{\textnormal{\textsf{Mat}}}
\newcommand{\Alg}{\textnormal{\textsf{Alg}}}
\newcommand{\Algb}{\textnormal{\textsf{Alg}}_{\textnormal{\textsf{B}}}}
\newcommand{\Klb}{\textnormal{\textsf{Kl}}_{\textnormal{\textsf{B}}}}
\newcommand{\Nom}[1]{#1\textnormal{\textsf{-Nom}}}
\begin{document}

\begin{titlepage}
\centering

\Huge
\textbf{
Canonical Algebraic Generators \\
\huge in Automata Learning}
\vfill
\vfill
\vfill
\vfill
\Large
Stefan Jens Zetzsche
\vfill
\vfill
\vfill
\vfill
\Large
A dissertation submitted in partial fulfillment of the requirements for the degree of \\
Doctor of Philosophy \\
at \\
University College London\\
(UCL)
\vfill
\vfill
\vfill
\vfill
Department of Computer Science
\vfill
August, 2023
\end{titlepage}
\cleardoublepage
\addcontentsline{toc}{chapter}{Declaration}
\chapter*{Declaration}

I, Stefan Jens Zetzsche confirm that the work presented in this thesis is my own. Where information has been derived from other sources, I confirm that this has been indicated in the thesis.	
\cleardoublepage
\addcontentsline{toc}{chapter}{Abstract}
\chapter*{Abstract}

Many methods for the verification of complex computer systems require the existence of a tractable mathematical abstraction of the system, often in the form of an automaton. In reality, however, such a model is hard to come up with, in particular manually. Automata learning is a technique that can \emph{automatically} infer an automaton model from a system -- by observing its behaviour. The majority of automata learning algorithms is based on the so-called $\LStar$ algorithm. The acceptor learned by $\LStar$ has an important property: it is \emph{canonical}, in the sense that, it is, up to isomorphism, the \emph{unique} deterministic finite automaton of \emph{minimal size} accepting a given regular language. Establishing a similar result for other classes of acceptors, often with side-effects, is of great practical importance. Non-deterministic finite automata, for instance, can be exponentially more succinct than deterministic ones, allowing verification to scale. Unfortunately, identifying a canonical size-minimal \emph{non-deterministic} acceptor of a given regular language is in general not possible: it can happen that a regular language is accepted by two \emph{non-isomorphic} non-deterministic finite automata of minimal size. In particular, it thus is unclear which one of the automata should be targeted by a learning algorithm. In this thesis, we further explore the issue and identify (sub-)classes of acceptors that admit canonical size-minimal representatives.

In more detail, the contributions of this thesis are three-fold. 

First, we expand the automata (learning) theory of Guarded Kleene Algebra with Tests (GKAT), an efficiently decidable logic expressive enough to model simple imperative programs. In particular, we present $\GLStar$, an algorithm that learns the unique size-minimal GKAT automaton for a given deterministic language, and prove that $\GLStar$ is more efficient than an existing variation of $\LStar$. We implement both algorithms in OCaml, and compare them on example programs.

Second, we present a category-theoretical framework based on generators, bialgebras, and distributive laws, which identifies, for a wide class of automata with side-effects in a monad, canonical target models for automata learning. Apart from recovering examples from the literature, we discover a new canonical acceptor of regular languages, and present a unifying minimality result.
 
Finally, we show that the construction underlying our framework is an instance of a more general theory. First, we see that deriving a minimal bialgebra from a minimal coalgebra can be realized by applying a monad on a category of subobjects with respect to an epi-mono factorisation system. Second, we explore the abstract theory of generators and bases for algebras over a monad: we discuss bases for bialgebras, the product of bases, generalise the representation theory of linear maps, and compare our ideas to a coalgebra-based approach.

\cleardoublepage
\addcontentsline{toc}{chapter}{Impact}
\chapter*{Impact}

\paragraph{Outside Academia}

As hardware and software systems continue to grow in complexity, methods for their verification become increasingly important. Classical model checking approaches to verification require the prior existence of a rich model of the system of interest, able to express all its relevant behaviour. In reality such a model is often unavailable, for instance, when the system comes in the form of a black-box with no access to the source code, or the system is simply too complex for manual processing. The black-box automata learning technique addresses this issue and has been successfully applied in a wide range of use cases, from finding bugs in network protocols \cite{de2015protocol}, reverse engineering smartcard reader for internet banking \cite{chalupar2014automated}, and other industrial applications \cite{hagerer2002model}. A comprehensive survey can be found in \cite{vaandrager2017model}. 

One of the bottle-necks for learning algorithms in an industrial setting is scalability. Identifying canonical target acceptors of minimal size is thus of great practical importance. As it happens, establishing uniqueness and minimality results for classes of acceptors with side-effects, which are often exponentially more succinct than their deterministic counterparts, can be surprisingly difficult \cite{denis2001residual}. In this thesis, we provide a general categorical framework that incorporates existing constructions of canonical minimal automata with side-effects and unveils new ones, and present a unifying minimality result. Another variable that impacts scalability is the maximum number of observation and equivalence queries an algorithm requires to learn a model. In this thesis, we present $\GLStar$, which deduces automata representations of simple imperative programs, and generally requires less queries than existing approaches. This is particularly interesting in view of potential applications to network verification \cite{smolka2019scalable}.

\paragraph{Inside Academia}

The lack of a canonical target acceptor when learning non-deterministic models has lead to a number of independent approaches, for different variants of non-determinism \cite{berndt2017learning, esposito2002learning}. More recently, there have been efforts to give a unifying perspective on those approaches. In this thesis, we present a category-theoretic framework that generalises ideas of Van Heerdt \cite{van2016master, van2020phd, van2020learning}, the notion of a scoop by Arbib and Manes \cite{arbib1975fuzzy}, and the universal-algebraic treatment of Myers et al. \cite{MyersAMU15}. We present the first general algorithm for the construction of succinct automata and give a unifying minimality result. In particular, we discover a previously unknown canonical acceptor of regular languages. By using bialgebras and distributive laws, we are able to clarify the central role of algebraic generators and bases, which has previously been underappreciated. We further explore this direction by developing the independent abstract theory of generators and bases. We expect those results to be also valuable outside of the automata learning community.

Guarded Kleene Algebra with Tests is a variation of Kleene Algebra with Tests that one obtains by restricting the union and iteration operations to guarded versions \cite{smolka2019guarded}. Recently, this variation has become the subject of increasing interest \cite{schmid2021guarded}. We contribute to the development of its automata (learning) theory, for example, by establishing the existence of a unique size-minimal acceptor. Our treatment leads to numerous directions that could be further explored; for instance improving efficiency through more compact data-structures, or an adaption to a probabilistic extension. 

The results of this thesis have been published at the 37th and 38th International Conference on Mathematical Foundations of Programming Semantics \cite{zetzsche2021, zetzsche2022guarded}, and at the 10th Conference on Algebra and Coalgebra in Computer Science \cite{zetzsche2020generators}. Additionally, some of the results have been presented at the 8th Symposium on Compositional Structures \cite{zetzsche2021}.
\cleardoublepage
\chapter*{Acknowledgements}%
\addcontentsline{toc}{chapter}{Acknowledgements}

First of all, my thanks go to my two supervisors, Alexandra and Matteo. 
I am grateful for you giving me the chance to pursue a PhD in the first place. Studying abroad, in London, has been something I have always wanted to do. At all times, you two have been patient and encouraging. Thank you for supporting and guiding me through good and bad times. Alexandra, I have learnt a lot from you. Your sharp intellect, good heart, and work ethic will stay with me. I will also remember your delicious cooking. Matteo, you have always had the right advice for me. Your writing has been as beautiful and simple as the food of Italy. I will miss you two.

I would also like to thank all the peers that have taken the time to review my work and provide valuable feedback. Over time, I received many insightful comments on my submissions, often from anonymous reviewers. My coauthors Alex, Gerco, and Matteo have given me fruitful guidance during numerous discussions. I am particularly grateful to Mehrnoosh and Stefan, for agreeing to be my examiners, and for studying my thesis in thorough detail. I am also thankful to Fabio and James, who chaired my transfer and first year viva, respectively. Both experiences were challenging, but helped me a lot with progressing towards an independent researcher.

I am very fortunate to have been a part of the Programming Principles, Logic, and Verification group. All the numerous seminars, reading groups, and informal conversations were a welcome change to my daily routine. Over the years I shared my office with many inspiring people. My first desk has been in a room with Gerco, Jana, Paul, and Tobias, all of which are greatly missed. Later, I moved to the basement, which I shared with Jas, Leo, Linpeng, Louis, Mateo, Robin, Tiago, Will, and Wojciech. The collaborative spirit in our department allowed me to converse with researchers of all experiences. Thank you, Bas, Benjamin, Christoph, Diana, Fredrik, Jurriaan, Lachlan, Maria, Paul, Simon, Sonia, Tao, Todd,  and everyone else that made our group so unique.

During my PhD I was able to attend, volunteer, and mentor at various conferences and schools, first in person, and later online. Among others, I am grateful to the organisers of the following events: Syco Birmingham 2018, Syco Strathclyde 2018, VeTSS Workshop on Formal Methods and Tools for Security 2018, Calco 2019, Facebook Proofs for Bugs 2019, Mfps 2019, Scottish Programming Languages and Verification Summer School 2019, VeTSS Verified Software Workshop 2019, Pldi 2020, Popl 2020, Splash 2020, Cav 2021,  Icalp 2021, Mfps 2021, Popl 2021, Syco Tallinn 2021, Cav 2022, Mfps 2022. One of the highlights has been the trip to Popl in New Orleans, in early 2020, just before the first Covid cases became public. I remember exploring the town with Jana and Tobias, and dancing to live music, late at night. 

In addition to my own studies, I have been working as a Teaching Assistant for a number of courses. I would recommend everyone to do the same during their PhD; it is a humbling experience. Over the years, I was lucky to meet many students, often bright, always unique in character. For some courses, I collaborated with other PhD students, which was particularly fun. Thank you, Tao, Thomas, and Todd!

I was fortunate enough to experience academia not only from the inside, but also from the outside. Towards the end of my PhD, I completed two twelve-week internships in the tech industry: first at Amazon, then at Meta. Both experiences have been very rewarding and have greatly  broadened my horizon. It would be a hopeless attempt to name all of the people I met along the way. There are, however, a few people I would like to thank particularly. 

First of all, there is Byron, who, when I hesitantly reached out to him with the idea of doing an Amazon internship, replied, to my surprise, instantly, and encouraged me in his unique way. Thank you, Byron, for forwarding me to the right places. 

During the internship I was supervised by Rustan, who was based in Seattle, while I was working from London. I could not have asked for a better match. Rustan was very generous. He took the time to meet me virtually on a daily basis and has answered my endless questions in impressive clarity and depth, and with great spirit. On top of all, Rustan turned out to be also a tremendous chef and host. 

During my time in the London office I met many great people. Thank you Caroline, Claudia, Daniel, Ilina, Sacha, and everyone else. You have made working remotely so much easier for me. 

As Covid cases were coming down, I was able to complete my second internship at Meta in person, in London. The Hack language team has been more than welcoming. Thank you Andrew, Frank, Henri, Max, Michael, Mistral, Scott, and everyone else I met along the way. Our trips to Cambridge and Menlo Park have been highlights I'll cherish. In particular I would like to thank Mistral, my supervisor, who has been a great mentor and friend. I miss our coffee runs and trips to the climbing hall. 

Around the end of 2020, I applied to the SIGPLAN long-term mentorship program. I am happy to say that since then I have met my mentor on an almost bi-weekly basis. Thank you, Ravi, your support and sincerity mean a lot to me. I hope we manage to see each other in person in the near future.

Thanks also to my friends and family for their continuous support during this long journey. In particular, to my parents, Frank and Gabriele, for their unconditional love over all the years. Without you two this thesis would not have been possible. 

Finally, I'd like to say thanks to Aurora. Thank you for your love, advice, and understanding. For helping me through the lows, and for celebrating the highs. You have played a great part in the success of this thesis.

\footnotetext{My research has been supported by GCHQ via the VeTSS grant \emph{Automated Black-Box Verification of Networking Systems} (4207703/RFA 15845) and by the ERC via the Consolidator Grant \emph{AutoProbe} (101002697).}

\tableofcontents
\chapter{Introduction}

As hardware and software systems continue to grow in size and complexity, methods for their analysis become increasingly important. To study the characteristics of a system, classical approaches require the existence of a simplified mathematical model that captures the relevant behaviour of the system. One of the simplest models is the \emph{automaton}, which can be thought of as a diagram generated by the states of the system and transitions between them. Unfortunately, in reality a complex model is only rarely available, for instance, when the system comes in the form of a black-box with no access to the source code, or the system is too complex for manual processing.

The aim of \emph{automata learning} is to automatically infer a minimal sized automata representation of a system by observing its behaviour. The incremental approach has been successfully applied to a wide range of verification tasks from finding bugs in network protocols \cite{de2015protocol}, reverse engineering smartcard reader for internet banking \cite{chalupar2014automated}, and industrial applications \cite{hagerer2002model}. A comprehensive survey of the key developments in automata learning can be found in \cite{vaandrager2017model}. As a result of its success, automata learning has inspired numerous adaptions which target more expressive models. 

This thesis develops along two orthogonal axes. On the one hand, we contribute to the branches of automata learning by developing an algorithm that efficiently learns a model that captures the behaviour of a simple imperative program by applying domain-specific optimisations. On the other hand, we further develop the abstract perspective on automata learning by presenting a mathematical framework that unifies numerous constructions of minimal target models, and unveils new ones.
 
The introduction is structured as follows.
 In \Cref{intro_sec_automata_learning} we present the seminal $\LStar$ automata learning algorithm. We continue with an example run of $\LStar$ in \Cref{intro-examplerun}. The content of  \Cref{intro_sec_othertypes} is an overview of adaptions of $\LStar$ to other types of target models. In \Cref{gkat-sec-intro} we present Guarded Kleene Algebra with Tests, and discuss its ramifications with automata learning. We then discuss the concept of size-minimality for target models in  \Cref{intro_sec_minimality}. The unifying perspective of category theory on automata learning is explored in \Cref{intro_sec_categorical}. 
 We continue with a presentation of the main objectives of the thesis in \Cref{intro_sec_main_objectives}, and conclude with an overview of the structure and the contributions of the thesis in \Cref{intro_sec_contributions}.

\section{Automata Learning}

\label{intro_sec_automata_learning}

Automata learning can be broadly divided into \emph{active} and \emph{passive} learning algorithms.
Active learning algorithms infer a model from a system by interacting with (or \emph{querying}) the system during the execution of the algorithm. In contrast, during passive learning, there is no direct access to a system for the duration of the run of an algorithm. Instead, passive learning algorithms try to model a system based on potentially insufficient stored data obtained through prior observation. In this thesis, we will focus on active automata learning.

\begin{figure}
\tiny
\centering
\begin{tikzpicture}[node distance=5.5em]
	\node[state, initial, initial text=, minimum size=2.5em] (q1) {};
	\node[state, right of=q1, above of=q1, minimum size=2.5em] (q2) {};
		\node[state, below of=q2, minimum size=2.5em] (q3) {};
		\node[state, below of=q3, minimum size=2.5em] (q4) {};
		\node[state, right of=q3, minimum size=2.5em, accepting] (q5) {};
		\node[state, right of=q5, minimum size=2.5em] (q6) {};
	    \path[->]
	(q1) edge[above] node{$a$} (q2)
		(q1) edge[above] node{$b$} (q3)
		(q1) edge[above] node{$c$} (q4)
		(q2) edge[left] node{$b,c$} (q5)
		(q2) edge[bend left, above] node{$a$} (q6)
		(q3) edge[above] node{$a,c$} (q5)
		(q3) edge[bend right, below	] node{$b$} (q6)
		(q4) edge[bend right, below] node{$c$} (q6)
		(q4) edge[left] node{$a,b$} (q5)
		(q5) edge[above] node{$a,b,c$} (q6)
		(q6) edge[loop right] node{$a,b,c$} (q6)
        ;
\end{tikzpicture}
\caption{Up to isomorphism, the unique size-minimal DFA accepting the language $\lbrace ab,ac, ba,bc,ca,cb \rbrace \subseteq \lbrace a, b, c \rbrace^*$ \cite{arnold1992note}}
\label{minimaldfaexample}
\end{figure}
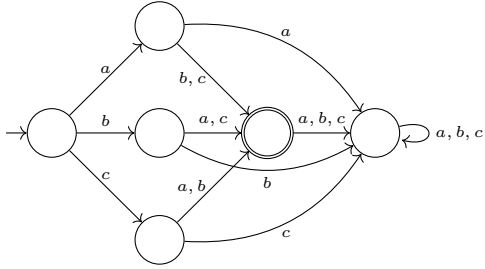

A \emph{non-deterministic finite automaton} (NFA) over a fixed input alphabet set of characters consists of a set of states, partitioned into \emph{accepting} and \emph{rejecting} states, a distinguished initial state, and a transition function that assigns to each state and input character a \emph{set} of next states. If at every transition the set of next states consists of a single state, then we speak of a \emph{deterministic finite automaton} (DFA). Any NFA can be depicted as a directed graph of nodes, which represent its states, and arrows, which represent transitions, thus are labelled by input characters. The node representing the initial state is annotated by an arrow without domain. Nodes that correspond to accepting states are indicated by a double circle. A simple example of a DFA over the input alphabet $\lbrace a, b, c \rbrace$ with six states of which one is accepting is given in \Cref{minimaldfaexample}.

A finite sequence of characters in the input alphabet is referred to as \emph{word}. A set of words is called a \emph{language} (over the input alphabet). A word is \emph{accepted} by a DFA, if consecutively reading in its characters leads to a transition from the initial state to an accepting state. A DFA \emph{accepts} the language that consists of the words it accepts. A language that is accepted by a DFA is called \emph{regular}. For any regular language there is a unique size-minimal DFA accepting it, defined up to a structure-preserving bijection. For example, the size-minimal DFA accepting the regular language of words over the alphabet $\lbrace a,b,c \rbrace$ that have length two and start and end in different characters is depicted in \Cref{minimaldfaexample}. 

In active automata learning it is usually assumed that the behaviour of the system one tries to model  is given in terms of an unknown regular language. In consequence, there exists, in principle, an (unknown) size-minimal DFA that accurately models the behaviour of the system. Under this assumption, the most simple form of interacting with a system is through a \emph{membership query}. During such an idealised interaction, the learning algorithm \emph{submits} a word to the system and observes its response. Either, the word is \emph{accepted}, that is, it is an element of the regular language that represents the behaviour of the system, or, it is \emph{rejected}.

In \cite{moore1956gedanken} Moore showed that membership queries alone are insufficient to deduce, in finite time, a DFA that correctly models the behaviour of a system. Later, Angluin \cite{angluin1981note} proved that with the auxiliary knowledge of the number $n$ of states of the minimal DFA accepting the behaviour of the system, there exists a correct algorithm that converges in finite time, but performs a number of membership queries that is exponential in $n$ in the worst case. Consequently, Angluin described and studied several other types of queries, in particular, the \emph{equivalance query} \cite{angluin1988queries}. During such a query, a hypothesis model is submitted, and either accepted, if it correctly models the target behaviour, or rejected. In the case of rejection, a \emph{counterexample} is provided, that is, a word that is either incorrectly accepted or incorrectly rejected by the hypothesis.

In her seminal work \cite{angluin1987learning} Angluin presented $\LStar$, an algorithm that learns the minimal DFA accepting a target regular language, by assuming a \emph{minimally adequate teacher}, which can answer membership and equivalence queries. The algorithm runs in time polynomial in the number of states of the minimal DFA and the maximum length of any counterexample. 

The minimal adequate teacher framework should be understood as an elegant mathematical abstraction, rather than a precise implementation guideline.  For instance, verifying whether the behaviour of the hypothesis matches the target language can in practice easily become unfeasible.  
Equivalence queries are thus often approximated via testing \cite{1702519}. If a bound to the size of the target model is known, or the target language is represented by a known automaton for experimental purposes, equivalence queries can be performed exactly. In the latter case, for example, one can utilise efficient bisimulation algorithms (see e.g. \cite{hopcroft1971linear, bonchi2013checking}).

\section{An Example Run of $\LStar$}

\label{intro-examplerun}

\begin{figure*}
\centering
\begin{subfigure}[b]{.15\textwidth}
	\centering
		\resizebox{0.65 \textwidth}{!}{
		\begin{tabular}{ c|c}
		 & $\varepsilon$ \\
		 \hline $\varepsilon$ & $1$   \\
		 \hline
		 \hline $a$ & $0$
	\end{tabular}
	}
	\caption{}
	\label{lstar_example_a}
	\end{subfigure}
	\begin{subfigure}[b]{.15\textwidth}
	\centering
		\resizebox{0.65 \textwidth}{!}{
		\begin{tabular}{ c|c}
		 & $\varepsilon$  \\
		 \hline $\varepsilon$ & $1$   \\
		 \hline $a$ & $0$  \\
		 \hline
		 \hline $aa$ & $1$
	\end{tabular}
	}
	\caption{}
		\label{lstar_example_b}
	\end{subfigure}
	\begin{subfigure}[b]{.3\textwidth}
	\centering
			\resizebox{0.8 \textwidth}{!}{
				\begin{tikzpicture}[node distance=5em]
	\node[state, shape=circle, initial, accepting, initial text=] (x) {$\varepsilon$};
		\node[state, shape=circle, right of=x] (y) {$a$};
	    \path[->]
	(x) edge[above, bend left] node{$a$} (y)
	(y) edge[below, bend left] node{$a$} (x)
	;
	\end{tikzpicture}
	}
	\caption{}
		\label{lstar_example_c}
\end{subfigure}
	\begin{subfigure}[b]{.3\textwidth}
	\centering
	\resizebox{0.8 \textwidth}{!}{
		\begin{tabular}{ c|c|c|c|c }
		 & $\varepsilon$ & $a$ & $aa$ & $aaa$   \\
		 \hline $\varepsilon$ & $1$ & $0$ & $1$ & $1$   \\
		 \hline $a$ & $0$ & $1$ & $1$ & $1$ \\
		 \hline
		 \hline $aa$ & $1$ & $1$ & $1$ & $1$  \\ 
	\end{tabular}
	}
	\caption{}
		\label{lstar_example_d}
	\end{subfigure}
		\begin{subfigure}[b]{.3\textwidth}
	\centering
	\resizebox{0.8 \textwidth}{!}{
		\begin{tabular}{ c|c|c|c|c }
		 & $\varepsilon$ & $a$ & $aa$ & $aaa$   \\
		 \hline $\varepsilon$ & $1$ & $0$ & $1$ & $1$   \\
		 \hline $a$ & $0$ & $1$ & $1$ & $1$ \\
		 \hline $aa$ & $1$ & $1$ & $1$ & $1$  \\ 
		 		 \hline
		 		 		 \hline $aaa$ & $1$ & $1$ & $1$ & $1$  \\ 
	\end{tabular}
	}
	\caption{}
	\label{lstar_example_e}
	\end{subfigure}
		\begin{subfigure}[b]{.3\textwidth}
	\centering
		\resizebox{1.2 \textwidth}{!}{
				\begin{tikzpicture}[node distance=5em]
	\node[state, shape=circle, initial, accepting, initial text=] (x) {$\varepsilon$};
		\node[state, shape=circle, right of=x] (y) {$a$};
		\node[state, shape=circle, right of=y, accepting] (z) {$aa$};
	    \path[->]
	(x) edge[above] node{$a$} (y)
	(y) edge[above] node{$a$} (z)
	(z) edge[loop above] node{$a$} (z)
	;
	\end{tikzpicture}
	}
	\caption{}
	\label{lstar_example_f}
\end{subfigure}
\caption{An example run of (a variation of) Angluin's $\LStar$ algorithm for the target regular language $1 + a \cdot a \cdot a^* = \lbrace \varepsilon, aa, aaa, ... \rbrace \subseteq \lbrace a \rbrace^*$}
\label{lstarexamplerun_intro}
\end{figure*}
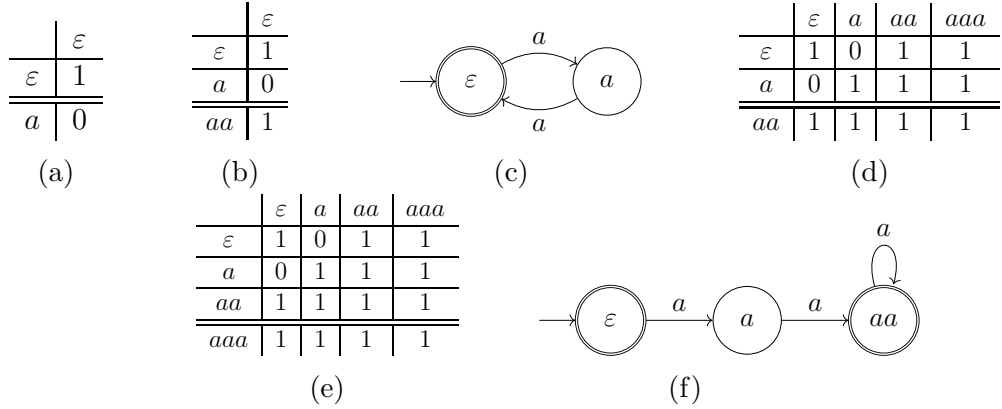

We will now give an intuitive account of how Angluin's $\LStar$ algorithm can be used to learn via membership and equivalence queries the minimal DFA accepting a given regular language. In our example, we fix the regular target language $L = 1 + a \cdot a \cdot a^*$, which consists of words over the singleton input alphabet $\lbrace a \rbrace$, either of length $0$ or of length at least $2$.  The complete run of $\LStar$ (more precisely, a slight variation\footnote{In contrast to the original presentation of $\LStar$, which adds rows for all prefixes of a counterexample, we use a variation by Maler and Pnueli \cite{maler1995learnability}, which adds columns for all suffixes of a counterexample. This has the advantage that \emph{consistency} checks become redundant.} of it) for $L$ can be found in \Cref{lstarexamplerun_intro}. Its result, the minimal DFA accepting $L$, has $3$ states, and is depicted in \Cref{lstar_example_f}.

 At the heart of $\LStar$ is a data-structure called \emph{observation table}, which consists of partial information about $L$, gathered by performing membership queries. The binary value of an observation table at row $i$ and column $j$ specifies whether the word $w_i \cdot w_j$, obtained by concatenating the words $w_i$ and $w_j$ at respective indices, is contained in $L$, or not. In the former case, the table contains a one, and in the latter case a zero. An observation table consists of two disjoint parts: an upper part, and a lower part, visually distinguished by two horizontal lines. Every row in the upper part of the table denotes the state of a hypothesis automaton that is not necessarily well-defined yet. The lower part of the table is used to establish the transitional structure of the automaton. During the run of $\LStar$, the observation table is extended until a well-defined hypothesis can be constructed. The hypothesis is then checked for equivalence, leading to termination, if positive, or further refinement of the table via a counterexample, if negative.
 
 Initially, the observation table for $L$ is initiated with one upper row and one column, both indexed by the empty word $\varepsilon$. Since the concatenation $\varepsilon = \varepsilon \cdot \varepsilon$ is accepted by $L$, the upper left entry of the table in \Cref{lstar_example_a} contains a one. To deduce from the table an automaton, we need to determine the state that is reached from the state indexed by $\varepsilon$, when reading in the character $a$. To do so, we construct the lower row indexed by $a = \varepsilon \cdot a$. The resulting table is depicted in \Cref{lstar_example_a}. 
 
 Since the row indexed by $a$ differs from all rows in the upper part of the table (that is, the one indexed by $\varepsilon$), the automaton potentially corresponding to the table in \Cref{lstar_example_a} contains at least two states. Formally, we move the row indexed by $a$ to the upper part of the table. In consequence, we also need to determine the state that is reached from the state indexed by $a$, when reading in the character $a$. The new table, with a row indexed by $a \cdot a$ in its lower part, is depicted in \Cref{lstar_example_b}. As every row in the lower part of the table coincides with one upper row, we now can deduce a well-defined automaton.
   
  The well-defined hypothesis corresponding to the observation table in \Cref{lstar_example_b} is given in \Cref{lstar_example_c}. Its initial state corresponds to the row indexed by $\varepsilon$. The initial state is also accepting, because the corresponding row contains a one at the column indexed by $\varepsilon$. Is the behaviour of the hypothesis given by the target language $L$? There exists at least one counterexample that witnesses that this is not the case. For example, the word $aaa$ is accepted by $L$, but rejected by the hypothesis. 
  
  To account for the imprecise behaviour, we add to the observation table in \Cref{lstar_example_b} one column for each suffix of the counterexample $aaa$. The updated table is depicted in \Cref{lstar_example_d}. While the rows indexed by $\varepsilon$ and $aa$ have previously contained identical values, they now differ, as is witnessed by their entry at the column indexed by $a$. As the lower row indexed by $aa$ now differs from all upper rows, we move it to the upper part of the table.
 To derive a well-defined transitional structure, we add a new row indexed by $aaa$ to the lower part of the table. The resulting structure is depicted in \Cref{lstar_example_e}. 
 
 Since every row in the lower part of the table corresponds to one row in the upper part of the table, we can derive the well-defined hypothesis \Cref{lstar_example_f}. As one verifies, the language accepted by the hypothesis is precisely the target language $L$, which completes the execution of $\LStar$.

 \section{Other Types of Models}

\label{intro_sec_othertypes}

Since its publication, Angluin's seminal $\LStar$ algorithm \cite{angluin1987learning} for learning the minimal DFA accepting a given regular language has inspired numerous variations. On the one hand, authors have adjusted $\LStar$ by using more efficient data structures  \cite{kearns1994introduction, isberner2014ttt} and handling counterexamples differently \cite{rivest1993inference}. On the other hand, $\LStar$ has been extended to output models other than DFAs, often either equally expressive, but more succinct, or more expressive, but still efficiently learnable. In this section, we will focus on the latter type, and give a few examples of such target models.

Some authors have remarked that DFAs are inappropriate to capture the behaviour of many complex systems \cite{shahbaz2009inferring}. This is due to the behaviour of such systems being often naturally characterized in terms of complex input-output pairs: the system receives an input from the environment, transitions, and produces an output to the environment. DFAs lack such general input-output behaviour, being classifiers, which either output $0$ (reject) or $1$ (accept).  

A more natural model for systems exhibiting general input-output behaviour are \emph{Mealy machines} \cite{mealy1955method}, which often are also more concise than DFAs. 
	Learning algorithms for Mealy machines based on $\LStar$ have appeared in \cite{pena1998new, shahbaz2009inferring}. Practical introductions to the general development of active learning, with a focus on Mealy machines, were given in \cite{steffen2011introduction, steffen2012active}.
	
	Mealy machines are as expressive as \emph{Moore automata} \cite{moore1956gedanken}, which generalise DFA from the two-element Boolean set to an arbitrary output set. Angluin's $\LStar$ algorithm can naturally be extended to Moore automata. An optimized version for learning the products of Moore automata has been presented in \cite{moerman2019learning}.
	A passive learning algorithm for Moore automata is the subject of \cite{giantamidis2021learning}.
	
Another class of automata exhibiting complex input-output behaviour are \emph{weighted automata}, which generalise NFAs from the two-element Boolean semiring to arbitrary semirings. Weighted automata have been used in text processing \cite{mohri2008speech}, character recognition \cite{breuel2008ocropus}, image processing \cite{albert2009digital, culik1993image}, bioinformatics \cite{allauzen2008sequence}, and formal verification \cite{aminof2011formal}.
The characteristics of weighted automata over arbitrary semirings have been extensively studied \cite{salomaa2012automata, mohri2009weighted}. An active-learning algorithm for weighted automata over the field of rationals inspired by $\LStar$ appeared first in \cite{bergadano1994learning}, and has later been generalized to weighted automata over arbitrary fields \cite{bergadano1996learning}.
A passive learning algorithm is the subject of \cite{balle2012spectral}. A survey of the developments until 2015 can be found in  \cite{balle2015learning}. The active learning of weighted automata over general semirings is explored in \cite{van2020learningweighted}. 

In many situations automata dealing with infinite instead of finite structures are a more natural and realistic model of the behaviour of a system. We would like to emphasise two particular classes of such models. First, automata that allow infinite input alphabets (typically equipped with rich additional structure), while characterising words of finite length. Second, automata that require finite input alphabets, but characterise words of infinite length. 

Three examples of the former type are \emph{register automata}, \emph{symbolic automata}, and \emph{nominal automata}.

Register automata (originally called \emph{finite-memory automata} \cite{kaminski1994finite}) extend deterministic automata to infinite input alphabets by introducing a  register that enables the storage of data for future comparison \cite{d2019symbolic}. 
Active learning algorithms for register automata have been the subject of numerous publications \cite{howar2012inferring, bollig2013fresh, aarts2015learning, cassel2016active}. A review of the developments was given in \cite{isberner2014learning, aarts2014algorithms}. 

Symbolic automata are automata with transitions labelled by predicates over an input alphabet that is a Boolean algebra of possibly infinite size. The theoretical aspects of symbolic automata have been extensively studied \cite{tamm2018theoretical}. A minimisation procedure, for example, appears in \cite{d2014minimization}. The learnability of symbolic automata \emph{in the limit} is the content of \cite{fisman2022inferring}. Angluin-style learning algorithms for symbolic automata appear in \cite{maler2014learning, drews2017learning}. 

 \emph{Nominal sets} are sets that are finitely supported with respect to a group action of permutations on a countably infinite set \cite{pitts2013nominal}. Originally introduced as an alternative to set theory, they have been later rediscovered for name binding in the context of programming languages. Nominal automata appear, among others, in \cite{lasota2014automata, schroder2017nominal}. Essentially, they generalise DFA to orbit-finite nominal sets and equivariant functions. An adaptation of $\LStar$ to nominal automata is the content of  \cite{moerman2017learning,moerman2019residual}. 
 
An example of the latter type are \emph{Büchi automata}, which accept the $\omega$-regular languages of words of infinite length. Büchi automata are central to the automata-based model checking  approach to verification \cite{vardi1986automata}. As such, they have been used to describe properties of distributed systems \cite{alpern1987recognizing}, are relevant to the synthesis of reactive systems \cite{pnueli1989synthesis}, and appear in termination proofs for programs \cite{lee2001size}. The first learning algorithm for Büchi automata accepting a strict subclass of $\omega$-regular languages has been introduced by Maler and Pnueli \cite{maler1995learnability}. The first learning algorithm accepting the full class of $\omega$-regular languages has appeared in \cite{Farzan2008Extending}, and is based on ideas in \cite{angluin1987learning} and \cite{calbrix1993ultimately}. Another learning algorithm accepting the complete class of $\omega$-regular languages appeared in \cite{li2021novel, li2017novel}. Among others, it introduced a more efficient data structure \cite{kearns1994introduction,isberner2014ttt} to the work of Angluin and Fisman \cite{angluin2016learning}.

Apart of above cases, $\LStar$ has also been extended to non-deterministic finite automata \cite{bollig2009angluin}, universal finite automata \cite{angluin2015learning}, and alternating finite automata \cite{angluin2015learning, berndt2017learning}, all of which are as expressive as DFAs.

\section{Guarded Kleene Algebra with Tests}

\label{gkat-sec-intro}

\begin{figure}
\begin{subfigure}[b]{.23 \columnwidth}
\centering
							\adjustbox{valign=m}{
$(\wwhile\ b\ \ddo\ p); q$
}
	\label{}
		\caption{$e$}
\end{subfigure}
\begin{subfigure}[b]{.4 \columnwidth}
	\centering
	\footnotesize
								\adjustbox{valign=m}{
\begin{tikzpicture}[node distance=6em]
	\node[state, shape=circle, initial, initial text=] (x) {};
		\node[state, shape=circle, right of=x, label=above:{$\Rightarrow  b, \overline{b} \mid 1$}] (y) {};
	    \path[->]
	(x) edge[loop above] node{$b \mid p$} (x)
	(x) edge[above] node{$\overline{b} \mid q$} (y)
	;
	\end{tikzpicture}}
		\caption{$\mathscr{X}_e: X \rightarrow (2 + \Sigma \times X)^{\At}$}	
					\label{triangle-gkatautomaton}
\end{subfigure}
\begin{subfigure}[b]{.33 \columnwidth}
\centering
							\adjustbox{valign=m}{
$\lbrace \overline{b}qb, \overline{b}q \overline{b}, b p \overline{b} q b, b p \overline{b} q \overline{b}, ... \rbrace$}
		\caption{$\llbracket e \rrbracket = \llbracket \mathscr{X}_e \rrbracket \subseteq (\At \cdot \Sigma)^* \cdot \At$}
			\label{triangle-intro-lang}
\end{subfigure}
\caption{The interplay between expressions, automata, and languages in GKAT, for $\Sigma = \lbrace p, q \rbrace$ and $\At = \lbrace b, \overline{b} \rbrace$}
\label{trianglegkatintro_automaton}
\end{figure}

Even though there are already numerous extensions of Angluin's seminal $\LStar$ algorithm, for many different types of target models, one can still find interesting unexplored domains for its potential application. One such domain is \emph{Guarded Kleene Algebra with Tests} (GKAT), a logic that has recently become the subject of increasing interest \cite{smolka2019guarded, schmid2021guarded}. This attention stems from a few remarkable characteristics. 

First, it is based on $\emph{Kleene Algebra with Tests}$, a well-studied logic with sound mathematical foundations \cite{kozen1996kleene, kozen2001automata}. Through this relationship many constructions for it are either directly induced, or at least hinted at. Second, the behaviour of GKAT expressions $e,f,...$ can be identified with the one of simple imperative programs (cf. \Cref{gkatexpressionsasprograms} on \cpageref{gkatexpressionsasprograms}). It is, for instance, possible to iterate expressions, $e; f$, or build new expression via program-flow constructions such as $\iif\ b\ \tthen\ e\ \eelse\ f$ and $\wwhile\ b\ \ddo\ e$. Third, while remaining overall sufficiently expressive, equivalence of expressions in GKAT is more efficiently decidable than in its foundations. This makes the logic particular attractive for scale-sensitive applications such as network verification \cite{smolka2019scalable}. 

For black-box learning in the spirit of $\LStar$, GKAT is particularly well suited, because of its well-behaved interplay of expressions, automata, and languages, that closely resembles the one of regular expressions, DFAs, and regular languages. In \Cref{triangle-intro-lang}, for example, is depicted the language which models the behaviour of both the GKAT expression $(\wwhile\ b\ \ddo\ p); q$, and the GKAT automaton in \Cref{triangle-gkatautomaton}. More generally, one can, for any GKAT expression, efficiently construct a GKAT automaton that accepts the same language, and, reversely, for every automaton of a particular kind, one can find a language equivalent expression (for details see \cite{smolka2019guarded}). 

In this thesis we investigate how the ideas behind $\LStar$ can be used to derive, from program traces, behavioural equivalent GKAT automata representations, that is, simple abstractions of imperative programs. Having GKAT's potential applications in mind, we are particularly interested in domain-specific optimisations that reduce the number of involved membership queries, and ways to construct automata representations that are as compact as possible.

\section{Size-Minimality}

\label{intro_sec_minimality}

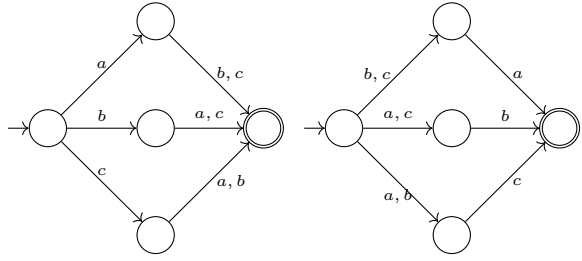
\begin{figure}
\tiny
\centering
	\begin{tikzpicture}[node distance=5.5em]
	\node[state, initial, initial text=, minimum size=1.9em] (q1) {};
	\node[state, right of=q1, above of=q1, minimum size=1.9em] (q2) {};
		\node[state, below of=q2, minimum size=1.9em] (q3) {};
		\node[state, below of=q3, minimum size=1.9em] (q4) {};
		\node[state, right of=q3, minimum size=1.9em, accepting] (q5) {};
	    \path[->]
	(q1) edge[above] node{$a$} (q2)
		(q1) edge[above] node{$b$} (q3)
		(q1) edge[above] node{$c$} (q4)
		(q2) edge[right] node{$b,c$} (q5)
		(q3) edge[above] node{$a,c$} (q5)
		(q4) edge[right] node{$a,b$} (q5)
        ;
\end{tikzpicture}
\begin{tikzpicture}[node distance=5.5em]
	\node[state, initial, initial text=, minimum size=1.9em] (q1) {};
	\node[state, right of=q1, above of=q1, minimum size=1.9em] (q2) {};
		\node[state, below of=q2, minimum size=1.9em] (q3) {};
		\node[state, below of=q3, minimum size=1.9em] (q4) {};
		\node[state, right of=q3, minimum size=1.9em, accepting] (q5) {};
	    \path[->]
	(q1) edge[left] node{$b,c$} (q2)
		(q1) edge[above] node{$a,c$} (q3)
		(q1) edge[below] node{$a,b$} (q4)
		(q2) edge[right] node{$a$} (q5)
		(q3) edge[above] node{$b$} (q5)
		(q4) edge[right] node{$c$} (q5)
        ;
\end{tikzpicture}
\caption{Two non-isomorphic size-minimal NFA accepting the language $\lbrace ab,ac,$ $ba,bc,ca,cb \rbrace \subseteq \lbrace a, b, c \rbrace^*$ \cite{arnold1992note}}
\label{example_nonisomorphic_nfa}
\end{figure}

The deterministic finite-state automaton for a given regular target language derived by $\LStar$ has a remarkable property: first, it accepts the target language, and second, every other deterministic finite-state automaton accepting the target language has either a state space of greater size, or is equivalent up to structure-preserving bijection. In other words, $\LStar$ learns a \emph{canonical} representation: the unique size-minimal DFA. 

As is well-known, the canonical representation of a regular language as DFA can be explicitly constructed, via the \emph{Myhill-Nerode relation} \cite{nerode1958linear}. Under this construction, states of the minimal DFA for a language $L \subseteq A^*$ over an input alphabet $A$ can be identified with equivalence classes of \emph{residual languages} of the type $w^{-1} L = \lbrace w \cdot u \mid u \in A^* \rbrace$, for $w \in A^*$. It is not hard to see that the observation table data-structure of $\LStar$ closely resembles this construction. Indeed, at every step, rows indexed by a word $w \in A^*$ approximate the residual language $w^{-1} L$, since their entry at a column indexed by $u \in A^*$ coincides with the evaluation $u \in w^{-1}L$. This illustrates that the design of a learning algorithm should start with the identification and explicit construction of a \emph{canonical} target model. The algorithm itself is then merely a derivation.

Unfortunately, not all classes of acceptors admit a canonical representative. For example, while there exists, up to isomorphism, precisely one size-minimal DFA accepting the regular language over the alphabet $\lbrace a, b, c \rbrace$ of words of length two, starting and ending in different characters (\Cref{minimaldfaexample}), there exist at least two size-minimal NFAs that are non-isomorphic (\Cref{example_nonisomorphic_nfa}). This immediately leads to the question: what is a canonical NFA for a given regular language? Answering this question is of great practical importance, as NFAs can be exponentially more succinct than their deterministic counterparts. In this case, the advantage is little, with the smallest DFA being of size $6$, whereas a NFA can be constructed with $5$ states. Generally, however, the difference increases with the size of the state-spaces. 
	The problem has been approached independently, for various types of acceptors with side-effects. Most approaches have in common their restriction to a subclass that admits a canonical representative. Some of the better known examples are: the \emph{\'atomaton}~\cite{BrzozowskiT14}, the \emph{canonical residual finite-state automaton} (short \emph{canonical RFSA} and also known as \emph{jiromaton})~\cite{denis2001residual}, the \emph{minimal xor automaton}~\cite{VuilleminG210}, and the \emph{distromaton}~\cite{MyersAMU15}. The canonical RFSA for the regular language $L = (a+b)^*a$ over the input alphabet $\lbrace a, b \rbrace$, for instance, is depicted in \Cref{introjiromaton}. It is minimal in the subclass of those NFAs accepting $L$, for which each state accepts a join of residuals of $L$. Its similarity to the minimal deterministic finite automaton for $L$ in \Cref{introm(l)} is striking.

	\begin{figure}[t]
	\centering
	\begin{subfigure}[b]{.45 \columnwidth}
		\footnotesize
		\center
	\begin{tikzpicture}[node distance=6em]
	\node[state, initial, initial text=] (x) {};	
		\node[state, right of=x, accepting] (y) {};
	    \path[->]
	(x) edge[loop above] node{$b$} (x)
	(y) edge[loop right] node{$a$} (y)
	(x) edge[above, bend left] node{$a$} (y)
	(y) edge[below, bend left] node{$b$} (x)
	;
	\end{tikzpicture}.
		\caption{The size-minimal DFA}
	\label{introm(l)}
		\end{subfigure}
	\begin{subfigure}[b]{.45 \columnwidth}
		\centering
				\footnotesize
							\adjustbox{valign=m}{
		\begin{tikzpicture}[node distance=6em]
			\node[state, initial, initial text=] (x) {};
				\node[state, right of=x, accepting] (y) {};	
		    \path[->]
		(x) edge[loop above] node{$a,b$} (x)
		(x) edge[above, bend left] node{$a$} (y)
		(y) edge[below, bend left] node{$a,b$} (x)
		(y) edge[loop right] node{$a$} (y)
		;
		\end{tikzpicture}
		}
	\caption{The canonical RFSA}
		\label{introjiromaton}
		\end{subfigure}
		\caption{Two canonical acceptors for $(a+b)^*a$}
\end{figure}
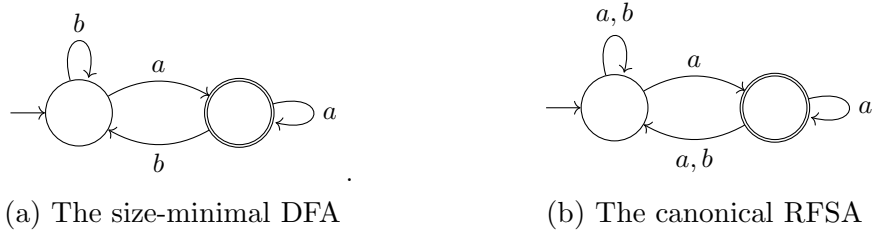
	 
	 Somewhat surprisingly, the state spaces of all the above canonical minimal representatives consist of \emph{generators}, for different algebraic structures. For example, the state-space of the \'atomaton is given by the \emph{atoms}\footnote{A non-zero element $a$ in a Boolean algebra B is called an \emph{atom}, if for all $x \in B$ with $x \leq a$ it follows $x = 0$ or $x = a$. A Boolean algebra $B$ is \emph{atomic}, if for all $x \in B$ there exists a decomposition $x = \vee_{I} a_i$, where $\lbrace a_i \mid i \in I \rbrace$ is some set of atoms.} for a complete atomic Boolean algebra, the states of the canonical RFSA are the \emph{join-irreducibles}\footnote{A non-zero element $a$ in a lattice $L$ is called \emph{join-irreducible}, if for all $y,z \in L$ with $a=y \vee z$ it follows $a = y $ or $a = z$. For any $x$ in a finite lattice $L$ there exists a decomposition $x = \vee_{I} a_i$, where $\lbrace a_i \mid i \in I \rbrace$ is some set of join-irreducibles.} of a complete semi-lattice (for example, the states in \Cref{introjiromaton} are the irreducibles of the lattice in \Cref{overlineml}), and the state-space of the minimal xor automaton consists of a \emph{basis} for a vector-space. All these subsets of algebraic structures have in common that they contain the \emph{minimal} amount of information to generate from it, by performing a closure with respect to algebraic operations, the full structure. 
For example, a subset of a vector space is called a basis for the former if every vector can be uniquely written as a finite linear combination of basis elements. Part of the importance of bases stems from the convenient consequences that follow from their existence. For instance, linear transformations between vector
spaces admit matrix representations relative to pairs of bases, which can be used for efficient numerical calculations.
	 
	 This observation immediately leads to numerous questions.	Is the size-minimality of the generators underlying the state-spaces related to the size-minimality of the models? How are the underlying algebraic structures related to the type of side-effects of the models? What is the connection to the Myhill-Nerode construction for the minimal deterministic finite automaton? Is there a general procedure for the construction of minimal representatives?  
	
	Some of above questions were answered by Myers et al. \cite{MyersAMU15}, whose approach is based on an equivalence between finite algebras in a locally finite variety and finite structured sets and relations. Their construction, however, is restricted to non-deterministic automata, and does not provide a general algorithm to construct a succinct automaton. A different unifying perspective was given by van Heerdt \cite{van2020learning, van2016master, van2020phd}. One of the central notions in van Heerdt's work is the concept of a \emph{scoop}, originally introduced by Arbib and Manes \cite{arbib1975fuzzy}, and essentially a simple \emph{category-theoretic} generalisation of algebraic generators. In this thesis, we refer to scoops as generators, and further develop their abstract theory.
		
\section{Categorical Perspective}

\label{intro_sec_categorical}

Category theory is a mathematical framework that provides a unifying birds-eye perspective on different mathematical structures. The theory's central subjects are objects and the relations between them. Relations are treated as first class objects: they are \emph{data}, rather than just a \emph{property}. Examples of categories occur in all areas of mathematics and computer science. Some of the simplest cases are the categories sets and functions, and the category of vector spaces and linear maps.  One of the attractive characteristics of category theory is that it allows simple unifying characterisations of constructions that deduce new mathematical objects from existing ones. For instance, one can show that the cartesian product is for sets and functions what the direct product is for groups and group homomorphisms. For this thesis, it will be sufficient to work with a relatively simple subset of category theory. 
We are particularly interested in \emph{algebras} and \emph{coalgebras}, and their combination into \emph{bialgebras}.

\paragraph{Algebras}

\begin{figure}
\centering
\begin{tabular}{c|c}
${\Set}^T$ & $TX$ \\
\hline
\hline
complete semi-lattices & $\lbrace f: X \rightarrow 2 \rbrace$ \\
\hline	
$\mathbb{K}$-vector spaces & $\lbrace f: X \rightarrow \mathbb{K} \mid \supp(f)\ \textnormal{is finite} \rbrace$ \\
\hline
complete atomic Boolean algebras & $\lbrace f: (X \rightarrow 2) \rightarrow 2 \rbrace$ \\
\hline 
complete distributive lattices & $\lbrace f: ((X \rightarrow 2), \subseteq) \rightarrow (2, \leq) \mid f\ \textnormal{is monotone} \rbrace$\end{tabular}
	\caption{Algebraic structures as algebras over a monad on the category of sets}
	\label{intro-algebras}
\end{figure}

	In the category theoretic approach to universal algebra, algebraic structures are typically captured as \emph{algebras} over a \emph{monad} \cite{eilenberg1965adjoint, linton1966some}. 
	
	  In the context of computer science, monads have been introduced by Moggi, as a general perspective on exceptions, side-effects, and continuations \cite{moggi1988computational, moggi1990abstract, moggi1991notions}. Intuitively, they are a categorification of \emph{closure operators}\footnote{Closure operators are monotone functions $T: P \rightarrow P$ that satisfy $x \leq T(x)$ and $T^2(x) = T(x)$ for all $x \in P$.} on partially ordered sets. A simple example of a monad $T$ on the category of sets is the $\emph{free}$ $\mathbb{K}\emph{-vector space monad}$ $\mathcal{V}_{\mathbb{K}}$, for any field $\mathbb{K}$. It assigns to a set $X$ the set $\mathcal{V}_{\mathbb{K}}(X)$ of finitely-supported\footnote{\label{support}The \emph{support} of $\varphi: X \rightarrow \mathbb{K}$ is defined by $\supp(\varphi) = \lbrace x \in X \mid \varphi(x) \not = 0 \rbrace$. If the set $\supp(f)$ is finite, then we say $f$ has \emph{finite support}, or is \emph{finitely-supported}.} functions $\varphi: X \rightarrow \mathbb{K}$; maps an element $x \in X$ to $\eta(x) \in \mathcal{V}_{\mathbb{K}}(X)$, the \emph{Dirac measure}\footnote{The Dirac measure $\eta(x): X \rightarrow \mathbb{K}$ for $x \in X$ satisfies $\eta(x)(y) = 1$, if $x = y$, and $0$ otherwise.}; and flattens $\Phi \in \mathcal{V}_{\mathbb{K}}^2(X)$ to $\mu(\Phi) \in \mathcal{V}_{\mathbb{K}}(X)$ in the usual manner: if we write $\Phi$ as formal linear combination $\sum_{\varphi} \Phi_{\varphi} \cdot \varphi$, where $\Phi_{\varphi}$ is short for $\Phi(\varphi)$, then $\mu(\Phi)(x) = \sum_{\varphi} \Phi_{\varphi} \cdot \varphi(x)$. Most algebraic theories admit a monad on the category of sets by assigning to a set the set underlying the algebraic structure it freely generates. For example, the free vector space monad above is of this type.
	
	An algebra over a monad $T$ on the category of sets consists of a set $X$ with a function $h: TX \rightarrow X$ that  \emph{interprets} elements in $TX$ in a way that is coherent with the monad structure.
	A $\mathbb{K}$-vector space, for instance, is an algebra for the free $\mathbb{K}$-vector space monad $\mathcal{V}_{\mathbb{K}}$. It is given by a set $X$ with a function $h: \mathcal{V}_{\mathbb{K}}(X) \rightarrow X$ that coherently interprets a finitely-supported function $\lambda: X \rightarrow \mathbb{K}$ (a \emph{formal} linear combination) as an \emph{actual} linear combination $h(\lambda) = \sum_x \lambda_x \cdot x \in X$ \cite{coumans2010scalars}. Any set $X$ induces a \emph{free} algebra $TX$ over a set monad, by making use of the monad structure. A brief list of monads and the theories their algebras correspond to is given in \Cref{intro-algebras}.
	
It is straightforward to see that under the above perspective a basis for a $\mathbb{K}$-vector space consists of a subset $Y \subseteq X$ and a function $d$ that assigns to a vector $x \in X$ a finitely-supported function $d(x) \in \mathcal{V}_{\mathbb{K}}(Y)$ such that $h(d(x)) = x$ for all $x \in X$ and $d(h(\lambda)) = \lambda$ for all finitely-supported functions $\lambda : Y \rightarrow \mathbb{K}$. In other words, the restriction of $h$ to finitely-supported functions with domain $Y$ is a bijection with inverse $d$, and surjectivity corresponds to the fact that the subset $Y$ generates the vector space, while injectivity captures that $Y$ does so uniquely. The concept easily generalises to algebras over arbitrary monads on arbitrary categories by making the subset relation explicit. The generality also makes formal the intuitive observation that any set is a generator for the free algebra it induces.

\begin{figure*}[t]
\centering
	\begin{tikzcd}[ ampersand replacement=\&]
		X \ar{d}[swap]{k} \ar{r}{\{-\}} \&
			\mathcal{P}(X) \ar{dl}{k^\sharp} \ar[dashed]{r}{} \&
			2^{A^*} \ar{d}{} \\
		2 \times \mathcal{P}(X)^A \ar[dashed]{rr}[below]{} \&
			\&
			2 \times (2^{A^*})^A
	\end{tikzcd}
	\qquad
			\begin{tikzcd}[ampersand replacement=\&]
			X \ar{d}[swap]{k} \ar{r}{\eta} \&
			TX \ar{dl}{k^\sharp} \ar[dashed]{r}{} \&
				\Omega \ar{d}{} \\
			FTX \ar[dashed]{rr}[below]{} \&
				\&
				F\Omega
		\end{tikzcd}
\caption{Generalised determinisation of automata with side-effects in a monad}
\label{gen-det-diagrams-intro}
\end{figure*}

\paragraph{Coalgebras}

In the category theoretic approach to state-based systems, systems are typically captured as \emph{coalgebras} over an \emph{endofunctor} \cite{jacobs2017introduction, rutten2019method, rutten2000universal}.

Endofunctors are categorifications of monotone functions $F: P \rightarrow P$ on a partially ordered set. They assign to any object $X$ an object $FX$, and to every morphism $f: X \rightarrow Y$ a morphism $Ff: FX \rightarrow FY$. In particular, every monad thus is an endofunctor. A coalgebra over an endofunctor $F$ on the category of sets consists of a set $X$ with a function $k: X \rightarrow FX$. While algebras describe the \emph{construction} of states, their dual notion, coalgebras, capture the \emph{deconstruction} of states. A simple example of an endofunctor on the category of sets is the functor $F$ that assigns to a set $X$ the set $2 \times X^A$, where $A$ is any fixed input set, and operates on functions as one expects. A coalgebra for it is simply an \emph{unpointed} (i.e. without a specified initial state) deterministic automaton: the function $k$ pairs the final state function and the transition function assigning a next state to each letter $a \in A$. In general, the definition of the functor $F$ describes the \emph{type}, or the \emph{dynamics} of a system. By varying the underlying category and the type of $F$, one can recover many different types of transition systems. 

There are many advantages to using the coalgebraic abstraction of state-based systems. Among others, it allows one to set aside irrelevant specifics of concrete instantiations, and instead work with elegant, universal properties. For instance, a central idea in the theory of systems is the notion of observable \emph{behaviour}. In the coalgebraic formalism, the semantics of a system is conveniently captured as a unique structure-preserving function into a \emph{final} coalgebra $\Omega$. For example, for the above functor $F$ with $FX = 2 \times X^A$, the final coalgebra is carried by the set of all languages $A^* \rightarrow 2$, and the final coalgebra homomorphism assigns to a state $x$ of an unpointed deterministic automaton the language it accepts, when given the initial state $x$. This simplicity has lead to, among others, coalgebra successfully serving as a framework for the generalisation of automata learning algorithms in the style of $\LStar$ \cite{jacobs2014automata, van2016master, van2020phd}. 
 
\paragraph{Bialgebras}

\begin{figure}
		\centering
							\adjustbox{valign=m}{
			\begin{tikzpicture}[node distance=6em]
				\node[state] (0) {$x_0$};
				\node[state,  right of=0, initial, initial text=] (x) {$x_1$};
					\node[state,  right of=x, accepting] (y) {$x_2$};	
			    \path[->]
			(0) edge[loop above] node{$a,b$} (0)
			(x) edge[loop above] node{$b$} (x)
			(x) edge[above, bend left] node{$a$} (y)
			(y) edge[below, bend left] node{$b$} (x)
			(y) edge[loop right] node{$a$} (y)
			;
			\end{tikzpicture}
			}
			\qquad
					\adjustbox{valign=m}{
					\tiny
			\resizebox{0.3 \columnwidth}{!}{%
				\begin{tabular}{ c|c|c|c } 
 $\vee$ & $x_1$ & $x_2$ &  $x_0$ \\
 \hline 
$x_1$ & $x_1$ & $x_2$ & $x_1$ \\ 
 \hline
 $x_2$ & $x_2$ & $x_2$ & $x_2$ \\
 \hline
 $x_0$ & $x_1$ & $x_2$ & $x_0$
\end{tabular}
}}
					\caption{The minimal CSL-structured DFA accepting $(a+b)^*a \subseteq \lbrace a, b \rbrace^*$}
		\label{overlineml-intro}
		\end{figure}
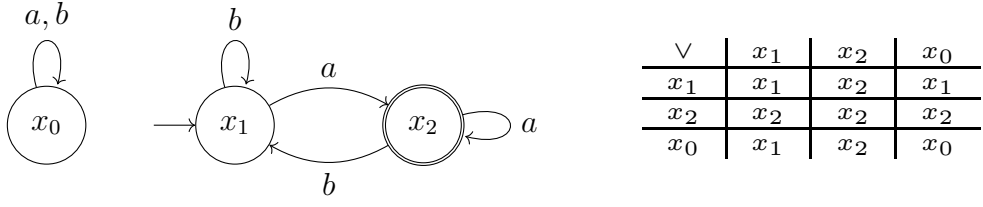

Of particular interest for us are systems that have both an algebraic and a coalgebraic component, interacting with each other in a well-defined way. 

One simple example of such a system is the unpointed deterministic automaton $k^{\sharp} : \mathcal{P}(X) \rightarrow 2 \times \mathcal{P}(X)^A$ one obtains from determinising an unpointed non-deterministic automaton $k: X \rightarrow 2 \times \mathcal{P}(X)^A$ via the classical powerset-construction\footnote{The powerset-construction assigns to an unpointed non-deterministic automaton $\langle \varepsilon, \delta \rangle: X \rightarrow 2 \times \mathcal{P}(X)^A$ the unpointed deterministic automaton $\langle \varepsilon^{\sharp}, \delta^{\sharp} \rangle:  \mathcal{P}(X) \rightarrow 2 \times \mathcal{P}(X)^A$ defined by $\varepsilon^{\sharp}(U) = \vee_{u \in U} \varepsilon(u)$ and $\delta^{\sharp}(U)(a) = \cup_{u \in U} \delta(u)(a)$.}. As one verifies, the coalgebraic structure of the lifting $k^{\sharp}$ has an additional property: it preserves the join semi-lattice structure $\cup: \mathcal{P}^2(X) \rightarrow \mathcal{P}(X)$ of its state-space. Indeed, a union $U \cup V$ is accepted if and only if either $U$ or $V$ is accepted; and reading in $a$ at a state $U \cup V$ leads to a state $(U \cup V)_a$ that can equivalently be reached by taking the union $U_a \cup V_a$ of states reached by reading in $a$ at $U$ and $V$ separately, respectively. 

As seen on the right of  \Cref{gen-det-diagrams-intro}, the classical powerset-construction is an instance of a more general procedure which is parametric in an endofunctor $F$ and a monad $T$, that interact via a \emph{distributive law}\footnote{A distributive law $\lambda$ consists of a family of functions $\lambda_X: TFX \rightarrow FTX$, one for each set $X$, satisfying certain coherence conditions.} $\lambda$ \cite{silva2010generalizing, rutten2013generalizing}. Under this perspective, a \emph{succinct} coalgebra $k: X \rightarrow FTX$ with side-effects in a monad $T$ is transformed into a deterministic coalgebra $k^{\sharp} : TX \rightarrow FTX$ that interacts well with the freely generated algebra $\mu: T^2X \rightarrow TX$, and therefore is called a $\lambda$-\emph{bialgebra}. An example of a bialgebra that intertwines the coalgebraic structure of a DFA with the algebraic structure of a complete semi-lattice (CSL) is depicted in \Cref{overlineml-intro}. Cases of succinct coalgebras are numerous, and include, aside of non-deterministic automata, probabilistic automata, nominal automata, and weighted automata. Active learning algorithms for succinct coalgebras have been studied by van Heerdt \cite{HeerdtMSS19, van2020learning, van2020learning}. Distributive laws have originally been used to compose monads \cite{beck1969distributive}, but have since been generalised in a wide range of ways \cite{Street2009}. Bialgebras occur, among others, in a category-theoretic perspective on Structural Operational Semantics (SOS) \cite{turi1997towards, klin2011bialgebras, lenisa2000distributivity, jacobs2006bialgebraic}.  

In this thesis, we study a construction that is reverse to the generalized determinisation procedure. That is, we are looking for answers to the following questions. Can we transform a given bialgebra into a language equivalent succinct coalgebra that is size-minimal among all solutions? For instance, by exploiting the additional algebraic structure of the state-space of a bialgebra through an identification of algebraic generators? Is it possible to recover canonical automata from this transformation? Potentially, by identifying minimal generators for a minimal bialgebra?

\section{Main Objectives}

\label{intro_sec_main_objectives}

The objectives of this thesis are two-fold.

On the one hand, we would like to add to the automata (learning) theory of Guarded Kleene Algebra with Tests. It still being a relatively new logic, there are numerous central but open questions we plan to address, such as the existence of a minimal acceptor for a given language. Having the logic's potential applications in mind, a main goal of ours is to investigate variations and improve the efficiency of existing black-box learning algorithms that could be applied.

On the other hand, we aim to give a category-theoretic framework that unifies existing ways to construct minimal automata of different types. Canonical models of minimal size are the targets of automata learning: the design of algorithms often closely follows their construction. We expect this abstract approach to clarify and emphasise the currently under-appreciated role of algebraic generators, and lead to the discovery of new canonical acceptors. Part of our objective is to work generally enough such that our results are of value also outside the context of automata theory.

\section{Overview and Contributions}

\label{intro_sec_contributions}

This thesis consists of five parts. Chapter 1 and 2 contain an introduction and the preliminaries, respectively. The main contributions are given in chapter 3 to 5. Their content is as follows.

\paragraph{Chapter 3}

In this chapter, we explore the automata theory of Guarded Kleene Algebra with Tests (GKAT). In particular, we present $\GLStar$ (\Cref{GlStaralgorithm}), an active learning algorithm that derives a GKAT automaton
representation of a black-box, by observing its behaviour through queries.

For any GKAT automaton, we define a second automaton, which we call its \emph{minimisation} (\Cref{minimdef}). In a series of results, we prove central properties about the minimisation of an automaton (\Cref{sizeminimal}, \Cref{minimalbisim}, \Cref{minimalunique}, \Cref{minimizationcoequation}).
We show that if $\GLStar$ is instantiated with the language accepted by a particular type of GKAT automaton, then the algorithm terminates with its minimisation in finite time (\Cref{correctnesstheorem}). We show that the semantics of GKAT automata can be reduced to the well-known semantics of Moore automata (\Cref{embeddinglanguage}, \Cref{minimalembeddingiso}).
 A complexity analysis (\Cref{complexity}) shows that it is more efficient to learn a representation with $\GLStar$ than with an existing variation of $\LStar$ for Moore automata. We implement $\GLStar$ and $\LStar$ in OCaml \cite{Ocaml} and compare their performances on example programs (\Cref{comparisongraph}).

The contributions of this chapter are based on the following publication (published at the 38th International Conference on Mathematical Foundations of Programming Semantics), of which the author of this thesis is the main author:

\fullcite{zetzsche2022guarded}

\paragraph{Chapter 4}

In this chapter, we provide a general categorical framework based on bialgebras and distributive law homomorphisms that unifies constructions of canonical non-deterministic automata and unveils new ones.

We strictly improve the expressivity of previous work \cite{HeerdtMSS19, arbib1975fuzzy} by including the \'atomaton (\Cref{atomatonexample}) and the distromaton (\Cref{distromatonexample}), which were previously excluded. While other frameworks restrict themselves to the category of sets \cite{HeerdtMSS19}, we are able to include canonical acceptors in other categories, such as the canonical \emph{nominal} RFSA (\Cref{nominalexample}). By relating vector spaces over the unique two element field with complete atomic Boolean algebras, we discover a previously unknown canonical mod-2 weighted acceptor for regular languages (\Cref{minimalxorcabaexample}). Finally, we show that every regular language satisfying a suitable property parametric in two monads admits a size-minimal succinct acceptor (\Cref{minimalitytheorem}) and establish a size comparison between different acceptors (\Cref{minimalityimplications}).

The contributions of this chapter are based on the following publication (published at the 37th International Conference on Mathematical Foundations of Programming Semantics), of which the author of this thesis is the main author:

\fullcite{zetzsche2021}

\paragraph{Chapter 5}

In this chapter, we show that the construction of canonical automata underlying our categorical framework is an instance of a more general theory.

First, we see (\Cref{closure_of_coalg}) that deriving a minimal bialgebra from a minimal coalgebra can be realized by applying a monad (\Cref{inducedmonad}) on a category of subobjects with respect to an epi-mono factorisation system (\Cref{subobject_def}). We then explore the abstract theory of generators and bases for algebras over a monad. We define a category of algebras with generators (\Cref{generatorcategory}), which, we show, is in adjunction with the category of Eilenberg-Moore algebras (\Cref{galgfreeforgetful}). We discuss products of generators and bases, and see that, under certain assumptions, the category of algebras with generators is monoidal (\Cref{monoidalproduct}). The content of \Cref{representationtheorysec} is a generalisation of the representation theory of vector spaces. In \Cref{basesforbialgebrasec} we discuss bases for bialgebras, which are algebras over a particular monad. A comparison of our ideas with an alternative approach that generalises bases as coalgebras is the content of \Cref{basesascoaglebrassec}. Signatures, equations, and finitary monads are discussed in \Cref{varietiessec}. Finally, in \Cref{finitelygeneratedsec}, we relate our work to the theory of locally finitely presentable categories.

The contributions of this chapter are based on the following publication (published at the 10th Conference on Algebra and Coalgebra in Computer Science), of which the author of this thesis is the main author:

\fullcite{zetzsche2020generators}

\chapter{Preliminaries}

In this chapter, we will review the basic mathematical tools necessary to follow the more advanced constructions in this thesis. Readers familiar with the foundations of the theories of automata and categories may want to skip these pages.

\section{Automata and Behaviour}

The main subjects of this thesis are generalisations of automata and ways to construct them by observing the behaviour of a black-box system. In this section, we briefly recall the very basic definitions of (classical) automata theory. The presentation is entirely standard. For texts that present the full theory we recommend e.g. \cite{kozen1997automata}.

We begin with the definition of a non-deterministic automaton. Whether the non-deterministic or deterministic case is more elementary, thus should be given first, is a matter of taste, since the two notions are equivalent, as is well-known.  

\begin{definition}[Non-Deterministic Automaton]
	A \emph{non-deterministic automaton} (NA) is a tuple $\mathscr{X} = (X,\Sigma, \delta, \varepsilon, x_0)$ consisting of:
	\begin{itemize}
		\item  a set $X$ called \emph{state space};
		\item  a finite non-empty set $\Sigma$ called \emph{alphabet};
		\item  a transition function $\delta: X \rightarrow \mathcal{P}(X)^{\Sigma}$;
		\item  a function $\varepsilon: X \rightarrow 2$ characterising \emph{accepting states};
		\item an \emph{initial state} $x_0 \in X$.
	\end{itemize}
	If the state space $X$ is finite, we speak of a \emph{non-deterministic finite automaton} (NFA).
\end{definition}

A non-deterministic automaton for which the set $\delta(x)(a)$ consists of only one state, for all $x \in X$ and $a \in \Sigma$, is called a \emph{deterministic automaton} (DA). A deterministic automaton with finite state space is called a \emph{deterministic finite automaton} (DFA).

Every non-deterministic automaton $\mathscr{X}$ induces a deterministic automaton $\mathscr{X}^{\sharp} = (\mathcal{P}(X), \Sigma, \delta^{\sharp}, \varepsilon^{\sharp}, \lbrace x_0 \rbrace)$, via the \emph{powerset construction}, defined by:
\[
	\delta^{\sharp}(U)(a) = \bigcup_{u \in U} \delta(u)(a), \qquad
	\varepsilon^{\sharp}(U) = \bigvee_{u \in U} \varepsilon(u).
\]

 Clearly $\mathscr{X}$ is finite if and only if $\mathscr{X}^{\sharp}$ is finite. 

We say that a state $x \in X$ in a DA \emph{transitions} to a state $y \in X$ via an input character $a \in \Sigma$, if $\delta(x)(a) = y$, in which case we write $x \overset{a}{\rightarrow} y$. A state $x \in X$ in a DA is \emph{accepting}, if $\varepsilon(x) = 1$, and \emph{rejecting} otherwise.
The transition function $\delta: X \rightarrow X^{\Sigma}$ of a DA can be inductively extended to a function $\widehat{\delta}: X \rightarrow X^{\Sigma^*}$ that operates on words as follows:
\[
\widehat{\delta}(x)(\varepsilon) = x, \qquad \widehat{\delta}(x)(av) = \widehat{\delta}(\delta(x)(a))(v).
\] 
For any $w \in \Sigma^*$, we write $\widehat{\delta}_w: X \rightarrow X$ for the function defined by $\widehat{\delta}_w(x) = \widehat{\delta}(x)(w)$.
It is not hard to see that $\widehat{\delta}$ is a right-action of the free monoid $\Sigma^*$ on the set $X$, that is, it satisfies $\widehat{\delta}_{\varepsilon} = \id_X$ and $\widehat{\delta}_{v \cdot w} = \widehat{\delta}_w \circ \widehat{\delta}_v$. A state $x \in X$ in a DA is \emph{reachable} (from the initial state $x_0$), if there exists a word $w \in \Sigma^*$, such that $\widehat{\delta}(x_0)(w) = x$. A DA is \emph{reachable}, if all of its states are reachable. Composing $\widehat{\delta}$ with the characterising function $\varepsilon$ yields a function 
\[ \llbracket - \rrbracket= \varepsilon^{\Sigma^*} \circ \widehat{\delta}
: X \rightarrow 2^{\Sigma^*}\]
 that assigns to any state $x \in X$ its \emph{behaviour}, or \emph{semantics}, $\llbracket x \rrbracket \in 2^{\Sigma^*}$. A deterministic automaton is \emph{observable}, if $\llbracket - \rrbracket$ is injective, that is, states can be distinguished by observing their behaviour. A word $w \in \Sigma^*$ is \emph{accepted} by a state $x \in X$ of a DA, if it induces a transition from $x$ to an accepting state, $\llbracket x \rrbracket(w) = 1$. The language accepted by a deterministic automaton $\mathscr{X}$ consists of the words accepted by its initial state, \[ \llbracket \mathscr{X} \rrbracket = \llbracket x_0 \rrbracket: \Sigma^* \rightarrow 2. \]
Note that we sometimes implicitly identify characteristic functions with subsets, that is, in this case, we may speak of the \emph{set} of accepted words $\llbracket \mathscr{X} \rrbracket \subseteq \Sigma^*$. We call two deterministic automata $\mathscr{X}$ and $\mathscr{Y}$ \emph{language equivalent}, if they accept the same languages, $\llbracket \mathscr{X} \rrbracket = \llbracket \mathscr{Y} \rrbracket$.

The semantics of a NA $\mathscr{X}$ is defined in terms of its determinisation, that is, $\llbracket \mathscr{X} \rrbracket = \llbracket \mathscr{X}^{\sharp} \rrbracket$. It is not hard to see that NFA and DFA accept the same class of languages: the \emph{regular} languages.

Regular languages are unique in that they have a particularly nice property: for any regular language, there exists a (uniquely defined up-to isomorphism) deterministic finite automaton that accepts it and has the smallest state-space among all deterministic finite automata with the same behaviour. To this end, we make the following definition.

\begin{definition}[Minimal Deterministic Automaton]
	The \emph{minimal} deterministic automaton for a language $L \subseteq \Sigma^*$ is the deterministic automaton $\M_L = (\Der(L), \delta, \varepsilon,L)$, where:
	\begin{itemize}
		\item $\Der(L) = \lbrace w^{-1}L \mid w \in \Sigma^* \rbrace$;
		\item $w^{-1}L = \lbrace v \in \Sigma^* \mid wv \in L \rbrace$;
		\item $\delta(w^{-1}L)(a) = (wa)^{-1}L$;
		\item $\varepsilon(w^{-1}L) =  \begin{cases}
 1 & w \in L \\
 0 & \textnormal{else}
 \end{cases}.$
	\end{itemize}
\end{definition}

The states of the minimisation are called \emph{derivatives} or \emph{residuals} (of $L$) and correspond to the equivalence classes of the \emph{Myhill-Nerode relation} $\simeq_{L} \mathord{\subseteq}\  \Sigma^* \times \Sigma^*$:
\[
w_1 \simeq_{L} w_2 :\Leftrightarrow w_1^{-1}L = w_2^{-1}L.
\]

A central result of classical automata theory says that the Myhill-Nerode relation $\simeq_{L}$ admits finitely many equivalence classes if and only if $L$ is regular.

\begin{lemma}[\cite{nerode1958linear}]
	$\M_L$ is finite if(f) $L$ is regular.
\end{lemma} 

The minimal DFA $\M_L$ for a regular language $L$ is reachable and observable, and it accepts $L$. It deserves its name since it is minimal in the following sense: For any DFA $\mathscr{X}$ accepting $L$, it holds $\vert \M_L \vert \leq \vert \mathscr{X} \vert$ with $\vert \M_L \vert = \vert \mathscr{X} \vert$ if and only if there is an isomorphism $\M_L \cong \mathscr{X}$. Note that this implies that all DFA of minimal-size accepting $L$ are isomorphic to $\M_L$. As we will see later, the existence of a uniquely defined size-minimal acceptor in this sense is somewhat special to DFA.

	\section{Categories}

In this section we give a brief introduction to the foundations of category theory. Our presentation is standard and may be skipped by readers familiar with categories, functors, natural transformations, adjoints, and monads. The scope of the presentation is limited by the applications of category theory in this thesis. Some elementary notions such as limits and Kan extensions are omitted. There are many great introductory texts that go into more depth, for instance  \cite{mac2013categories,awodey2010category,leinster2014basic}.

\paragraph{Objects and Morphisms} 	The central subjects of category theory are objects and the relations between them. Most notably, relations are not just a \emph{property}, but are treated as \emph{data}, witnessed by morphisms.

\begin{definition}[Category]
	A \emph{category} $\mathscr{C}$ consists of the following data:
	\begin{itemize}
		\item A class of \emph{objects} $A,B,C,D,..., X, Y, Z$
		\item A class of \emph{morphisms} or \emph{arrows} $f, g, h, ...$
		\item A binary operation that assigns to each morphism $f$ two objects 
			$
			\dom(f)$, and $\cod(f)
			$
			called the \emph{domain} and \emph{codomain} of $f$, respectively. The expression
			$
			f: X \rightarrow Y
			$
			indicates that $X = \dom(f)$ and $Y = \cod(f)$. 
		\item A binary operation that assigns to any two morphisms $f: X \rightarrow Y$ and $g: Y \rightarrow Z$ a morphism
			$
			g \circ f: X \rightarrow Z
			$			
			called the \emph{composition} of $f$ and $g$.
		\item For any object $X$, there is a morphism
			$
			1_X = \id_X: X \rightarrow X
			$
			called the \emph{identitiy morphism} of $X$.
	\end{itemize}
	The data is required to satisfy the following constraints:
	\begin{itemize}
		\item If $f: A \rightarrow B$, $g: B \rightarrow C$ and $h: C \rightarrow D$, then
			$
			h \circ (g \circ f) = (h \circ g) \circ f.
			$
		\item If $f: X \rightarrow Y$, then 
			$
			f \circ 1_X = f = 1_Y \circ f.
			$
	\end{itemize}
\end{definition}

The class of morphisms with domain $X$ and codomain $Y$ is denoted by $\Hom_{\mathscr{C}}(X,Y)$ or $\mathscr{C}(X,Y)$. If convenient, we omit parentheses and write $fg$ for $f \circ g$. The expressions $X \in \mathscr{C}$ and $f \in \mathscr{C}$ indicate that $X$ is an element of the class of objects of $\mathscr{C}$ and $f$ is an element of the class of morphisms of $\mathscr{C}$, respectively. 

A category is \emph{locally small}, if its class of morphisms $\mathscr{C}(X,Y)$ is a set for any choice of objects, and \emph{small}, if it is locally small and its class of objects is a set. 

The canonical example of a category is $\Set$, which has sets as objects and functions as morphisms.  Many other examples stem from algebraic theories. In this thesis, we are mainly interested in the following cases:
\begin{itemize}
	 \item The category $\mathbb{K}\Vect$ has as objects vector spaces over a field $\mathbb{K}$ and as morphisms $\mathbb{K}$-linear maps. 
 \item The category $\CSL$ has as objects complete join-semi lattices, and as morphisms functions that preserve all joins. 
	\item The category $\CABA$ has as objects complete atomic Boolean algebras, and as morphisms Boolean algebra homomorphisms that preserve all meets and joins. 
	\item The category $\CDL$ has as objects completely distributive lattices, and as morphisms functions that preserve all meets and all joins.
	\item The category $\Nom{\mathbb{A}}$ has as objects finitely supported nominal\footnote{\label{nomdef}Let $\Perm(\mathbb{A})$ be the set of \emph{permutations} on $\mathbb{A}$, i.e. the bijective functions $\pi: \mathbb{A} \rightarrow \mathbb{A}$. A nominal $\mathbb{A}$-set is a set $X$ equipped with an action of the permutation group $(\Perm(\mathbb{A}), \id_{\mathbb{A}}, \circ)$. We say that a nominal set has \emph{finite support}, if for each $x \in X$, there exists a finite set $A_x \subseteq \mathbb{A}$ such that for all $\pi \in \Perm(\mathbb{A})$ with $\pi.a = a$ for all $a \in A_x$, we have $\pi.x = x$. A function $f: X \rightarrow Y$ between nominal sets is \emph{equivariant} if $f(\pi.x) = \pi.f(x)$ for all $\pi \in \Perm(\mathbb{A}), x \in X$.} $\mathbb{A}$-sets, and as morphisms equivariant functions, for any countable set $\mathbb{A}$.
	\end{itemize}
	
	In some sense, categories are generalisations of monoids. Indeed, any monoid $M$ induces a category that consists of one object, $\star$, and a family of morphisms $(f_m: \star \rightarrow \star)_{m \in M}$, composed by $f_m f_n = f_{m n}$. The category induced by the trivial monoid is called the \emph{final category} and is denoted by $1_{\Cat}$. It consists of one object $\star$ and one morphism, the identity $1_{\star}$.	
	
Categories induce, and can be composed to, new categories. For example, the \emph{opposite} or \emph{dual} category $\mathscr{C}^{\textnormal{op}}$ of a category $\mathscr{C}$ has the same objects as $\mathscr{C}$, but an arrow $f: X \rightarrow Y$ in $\mathscr{C}^{\textnormal{op}}$ is an arrow $\tilde{f}: Y \rightarrow X$ in $\mathscr{C}$. Another example is the \emph{product category} $\mathscr{C} \times \mathscr{D}$ of categories $\mathscr{C}, \mathscr{D}$, which consists of pairs of objects and pairs of morphisms, and component-wise defined composition. 

\paragraph{Diagrammatical Proofs}	
Many proofs in this thesis are given diagrammatically, via a method called \emph{diagram chasing}. To this end, assume we are given the following objects and morphisms between them:
\[
\begin{tikzcd}
	A \arrow{r}{f} \arrow{d}[left]{i} & B \arrow{r}{g} \arrow{d}{l} & C \arrow{d}{h} \\
	D \arrow{r}[below]{j} & E \arrow{r}[below]{k} & F
\end{tikzcd}
\]
We say that the outer diagram \emph{commutes}, if $hgf = kji$. To prove that the outer diagram commutes, it is sufficient to show that the two inner diagrams commute, that is, $lf = ji$ and $hg = kl$. Diagrammatical proofs \emph{divide and conquer}: they slice larger diagrams into smaller diagrams whose commutativity is known.

\paragraph{Universal Properties}

Many constructions in category theory are given in terms of \emph{universal properties}, that define an object, if it exists, uniquely up to unique isomorphism. Below we give a few basic examples of such characterisations:

\begin{itemize}
	\item A \emph{product} of two objects $X,Y$ consists of an object $X \times Y$ and two morphisms $\pi_X: X \times Y \rightarrow X$ and $\pi_Y: X \times Y \rightarrow Y$ that satisfy the following universal property: For every object $Z$ and two morphisms $f_X: Z \rightarrow X$ and $f_Y: Z \rightarrow Y$, there exists a unique morphism $\langle f_X, f_Y \rangle: Z \rightarrow X \times Y$, such that $f_X = \pi_X \circ \langle f_X, f_Y \rangle$ and $f_Y = \pi_Y \circ \langle f_X, f_Y \rangle$:
	\[
	\begin{tikzcd}[row sep = 2.5em, column sep =5em]
		Z \arrow{r}{f_X} \arrow{d}[left]{f_Y} \arrow[dashed]{dr}{\exists! \langle f_X, f_Y \rangle} & X \\
		Y & X \times Y \arrow{u}[right]{\pi_X} \arrow{l}[below]{\pi_Y}
	\end{tikzcd}.
	\] The morphisms $\pi_X,\pi_Y$ are referred to as the \emph{projections} of the product. In the category of sets and functions, all binary products exist: they are given by the cartesian product $X \times Y = \lbrace (x,y) \mid x \in X,\ y \in Y \rbrace$ with its usual projections.
	\item An \emph{exponential} of objects $Y, Z$ in a category with binary products consists of an object $Z^Y$ and a morphism $\varepsilon: Z^Y \times Y \rightarrow Z$, called \emph{evaluation}, such that for any object $X$ and morphism $f: X \times Y \rightarrow Z$ there is a unique arrow $f^{\dagger}: X \rightarrow Z^Y$ such that $\varepsilon \circ (f^{\dagger} \times 1_Y) = f$: 
	\[
		\begin{tikzcd}
		X \times Y \arrow{d}{f} \\
		Z
	\end{tikzcd}
	\qquad
	\begin{tikzcd}
		X \arrow[dashed]{d}{\exists! f^{\dagger}} \\
		Z^Y
	\end{tikzcd}
	\qquad 
	\begin{tikzcd}
		X \times Y \arrow{dr}{f} \arrow{d}[left]{f^{\dagger} \times 1_Y}  \\
		Z^Y \times Y \arrow{r}[below]{\varepsilon} & Z
	\end{tikzcd}
	\] A category that has all finite products and exponentials is called \emph{cartesian closed}. The category of sets and functions is cartesian closed: an exponential $X^Y$ consists of all functions $f: Y \rightarrow X$, and $\varepsilon$ satisfies $\varepsilon(f)(y) = f(y)$.
	\item An object $0$ is \emph{initial}, if for any object $X$ there is a unique morphism $!_X: 0 \rightarrow X$. In the category of sets and functions, there is an initial object, the empty-set $\emptyset$.
\end{itemize}

\paragraph{Duality}

The theory of categories has built in a notion of \emph{duality}, which admits for any construction a second one, of symmetric importance. Many well-known mathematical entities can be shown to come in such duality pairs. 

One simple example of such a pair is given by the cartesian product and the disjoint union of sets. Indeed, by defining the \emph{coproduct} $X + Y$ in $\mathscr{C}$ as the product $X \times Y$ in $\mathscr{C}^{\textnormal{op}}$, one can show that in $\Set$ the coproduct of two sets is given by their disjoint union with its obvious embeddings.
 
  Another example follows from defining an object $1$ in $\mathscr{C}$ as $\emph{final}$, if it is initial in $\mathscr{C}^{\textnormal{op}}$. That is, for any object $X$, there exists a unique morphism $!_X: X \rightarrow 1$. In the category $\Set$, any singleton set $\lbrace \star \rbrace$ is a final object.

\paragraph{Functors}

At the heart of category theory lies the idea that relationships between entities are equally as important as the entities themselves. That is, for any type of object, there should exist a corresponding type of morphism that preserves their structure. The natural type of structure-preserving morphism that corresponds to categories is called a \emph{functor} and defined below.

\begin{definition}[Functor]
	A \emph{functor} $F: \mathscr{C} \rightarrow \mathscr{D}$ between categories $\mathscr{C}$ and $\mathscr{D}$ consists of the following data:
	\begin{itemize}
		\item For each object $X$ in $\mathscr{C}$ there is an object $F(X)$ in $\mathscr{D}$.
		\item For each morphism $f: X \rightarrow Y$ in $\mathscr{C}$ there is a morphism $F(f): F(X) \rightarrow F(Y)$ in $\mathscr{D}$.
	\end{itemize} 
	The data is subject to the following constraints:
	\begin{itemize}
		\item If $X$ is an object in $\mathscr{C}$, then
		$
		F(1_X) = 1_{F(X)}
		$.
		\item If $f: X \rightarrow Y$ and $g: Y \rightarrow Z$ are morphisms in $\mathscr{C}$, then
		$
		F(g \circ f) = F(g) \circ F(f)$.
	\end{itemize}
\end{definition}

If convenient, we omit parentheses and write $FX$ for $F(X)$, and $Ff$ for $F(f)$.

 For every category $\mathscr{C}$ there exists an \emph{identity functor} $1_{\mathscr{C}}: \mathscr{C} \rightarrow \mathscr{C}$, which maps objects and morphisms to themselves. The \emph{composition} of two functors $F: \mathscr{C} \rightarrow \mathscr{D}$ and $G: \mathscr{D} \rightarrow \mathscr{E}$ is the functor $G \circ F: \mathscr{C} \rightarrow \mathscr{E}$ defined by $G \circ F(X) = G(F(X))$ and $G \circ F(f) = G(F(f))$. There exists a category, denoted by $\Cat$, that has as objects small\footnote{For the same reason one cannot have a set of all sets, one can not construct a category that contains \emph{all} categories as objects. A standard way to deal with the issue is to allow only \emph{small} categories as objects. There are other ways (e.g. by using the language of higher category-theory) to avoid running into paradoxes. We will not encounter $\Cat$ again, thus refrain from elaborating.} categories and as morphisms functors between them. Many familiar categories can be recovered as universal constructions within this category.
For example, the product category $\mathscr{C} \times \mathscr{D}$ can be recognised as the product of $\mathscr{C}, \mathscr{D}$ in $\Cat$.

There are numerous examples of functors, in every corner of mathematics. Simple generic cases are given by the \emph{constant}\footnote{For any object $X \in \mathscr{D}$, the constant functor $F_X: \mathscr{C} \rightarrow \mathscr{D}$ satisfies $F_X(Y) = X$ and $F_X(f) = 1_X$. } functor $F_X: \mathscr{C} \rightarrow \mathscr{D}$ (for any $X \in \mathscr{D}$), the \emph{diagonal}\footnote{The diagonal functor $\Delta: \mathscr{C} \rightarrow \mathscr{C} \times \mathscr{C}$ is defined by $\Delta(X) = (X,X)$ and $\Delta(f) = (f,f)$.} functor $\Delta: \mathscr{C} \rightarrow \mathscr{C} \times \mathscr{C}$, and the  \emph{product}\footnote{For any category $\mathscr{C}$ with binary products, the  product functor $\Pi: \mathscr{C} \times \mathscr{C} \rightarrow \mathscr{C}$ is defined on objects by $\Pi(X,Y) = X \times Y$, and on morphisms $f: X \rightarrow X'$ and $g: Y \rightarrow Y'$ by $\Pi(f,g) = \langle f \pi_{X}, g \pi_{Y} \rangle: X \times Y \rightarrow X' \times Y'$.} functor $\Pi: \mathscr{C} \times \mathscr{C} \rightarrow \mathscr{C}$ (for any category $\mathscr{C}$ with binary products). A few more concrete examples are:
\begin{itemize}
\item The \emph{dual vector space} functor $(-)^*: (\mathbb{K}\Vect)^{\textnormal{op}} \rightarrow \mathbb{K}\Vect$ is defined on vector spaces over a field $\mathbb{K}$ by $V^* = \mathbb{K}\Vect(V, \mathbb{K})$, and on linear maps $f$ by $f^*(g) = gf$. 
\item The \emph{free vector space} functor $\mathcal{V}_{\mathbb{K}}: \Set \rightarrow  \mathbb{K}\Vect$ over a field $(\mathbb{K}, +, \cdot)$ is defined on objects by $\mathcal{V}_{\mathbb{K}}(X) = \lbrace \varphi: X \rightarrow \mathbb{K} \mid \supp(\varphi)\ \textnormal{is finite}\footref{support} \rbrace$, equipped with the vector space structure induced by $\mathbb{K}$, and on morphisms $f: X \rightarrow Y$ by $\mathcal{V}_{\mathbb{K}}(f)(\varphi)(y) = \sum_{x \in f^{-1}(y)} \varphi(x) \in \mathbb{K}$. The \emph{forgetful} functor $U: \mathbb{K}\Vect \rightarrow \Set$ maps a vector space to its set of states, and a linear map to itself, viewed as function.
\item The \emph{free complete join-semi lattice} functor $\mathcal{P}: \Set \rightarrow \CSL$ is defined on sets $X$ by $\mathcal{P}X = 2^X$, equipped with the join-semi lattice structure induced by $2$, and on morphisms $f: X \rightarrow Y$ by $2^f(\varphi)(y) = \vee_{x \in f^{-1}(y)} \varphi(x) \in 2$. Equivalently, if we identify subsets with their characteristic functions, $\mathcal{P}X = \lbrace U \mid U \subseteq X \rbrace$ and 
$\mathcal{P}f(U) = \lbrace f(u) \mid u \in U \rbrace$. As before, the \emph{forgetful} functor $U: \CSL \rightarrow \Set$ maps objects and morphisms to their underlying sets and functions.
 \end{itemize}

\paragraph{Natural Transformations}

Morphisms relate objects, functors relate categories, and \emph{natural transformations} relate functors. (Historically the interest in these notions has in fact been inverse to what the progression suggests. That is, Mac Lane had an interest in natural transformations in the context of homology that pre-dates the formal introduction of a functor \cite{mac2013categories}.)

\begin{definition}[Natural Transformation]
	A \emph{natural transformation} $\eta: F \Rightarrow G$ between functors $F, G: \mathscr{C} \rightarrow \mathscr{D}$ on categories $\mathscr{C}, \mathscr{D}$ consists of a family of morphisms $(\eta_X: FX \rightarrow GX)_{X \in \mathscr{C}}$
		in $\mathscr{D}$, subject to the following \emph{naturality} constraint: If $f: X \rightarrow Y$ is a morphism in $\mathscr{C}$, then the following diagram commutes:
		\[
			\begin{tikzcd}
		FX \arrow{d}[left]{Ff} \arrow{r}{\eta_X} & GX \arrow{d}{Gf} \\
		FY \arrow{r}[below]{\eta_Y} & GY
	\end{tikzcd}
		\]
\end{definition}
The class of natural transformations between $F$ and $G$ is denoted by $\textnormal{Nat}(F,G)$. For any functor $F$ there exists an \emph{identity transformation} $1_F \in \textnormal{Nat}(F,F)$ that is defined by $(1_F)_X = 1_{FX}$.
A \emph{natural isomorphism} is a natural transformation for which the morphism $\eta_X$ is an isomorphism for every object $X$. 

The classical example of a natural transformation is $\eta: 1_{\mathbb{K}\Vect} \Rightarrow (-)^{**}$, given component-wise as the linear map $\eta_V: V \rightarrow V^{**}$ defined by $\eta_V(x) = \ev_x: V^* \rightarrow \mathbb{K}$, where $\ev_x(f) = f(x)$. If $V$ is finite-dimensional, the embedding $\eta_V$ is an isomorphism, $V \cong V^{**}$. Since the definition of $\eta_V$ does not require the choice of a basis for $V$, it is canonical or \emph{natural}. (On the other hand, any isomorphism witnessing $V \cong V^*$ for finite-dimensional $V$ requires the choice of a basis, thus is \emph{not} natural.)

 Natural transformations compose with functors. That is, for any transformation $\eta: F \Rightarrow G$ between functors $F, G: \mathscr{C} \rightarrow \mathscr{D}$, and functor $H: \mathscr{D} \rightarrow \mathscr{E}$, there is a natural transformation $H \eta: HF \Rightarrow HG$ defined component-wise by $(H \eta)_X = H\eta_X$, and for any functor $K: \mathscr{B} \rightarrow \mathscr{C}$, there is a natural transformation $\eta_K: FK \Rightarrow GK$ defined by $(\eta_K)_X = \eta_{KX}$.
Natural transformations can also be composed with each other, both \emph{vertically}\footnote{If $\eta: F \Rightarrow G$ and $\varepsilon: G \Rightarrow H$ are natural transformation between functors $F,G,H: \mathscr{C} \rightarrow \mathscr{D}$, then their vertical composition $\varepsilon \eta: F \Rightarrow H$ is defined component-wise by $(\varepsilon \eta)_X = \varepsilon_X \circ \eta_X$.} and \emph{horizontally}\footnote{If $\eta: F \Rightarrow G$ is a natural transformation between functors $F,G: \mathscr{C} \rightarrow \mathscr{D}$ and $\varepsilon: J \Rightarrow K$ is a natural transformation between functors $J,K: \mathscr{D} \rightarrow \mathscr{E}$, then their horizontal composition $\varepsilon \star \eta: JF \Rightarrow KG$ is defined as the composition $\varepsilon \star \eta := \varepsilon_G J\eta$.}.
Vertical and horizontal composition satisfy the exchange law $(\varepsilon' \eta') \star (\varepsilon \eta) = (\varepsilon' \star \varepsilon)(\eta' \star \eta)$.

If $\mathscr{C}$ is a small category and $\mathscr{D}$ is any category, one can form the \emph{functor category} $\mathscr{D}^{\mathscr{C}}$. Its objects are functors $F,G: \mathscr{C} \rightarrow \mathscr{D}$, and morphisms are natural transformations $\eta: F \Rightarrow G$, composed vertically. A natural transformation is an isomorphism in a functor category if and only if it is a natural isomorphism.

\paragraph{Adjoints}

In many situations, two categories are not isomorphic, but still closely related to each other. Often this weaker form equivalence can be captured by a notion called \emph{adjointness}. Below, we give three different definitions, which all can be shown to be equivalent\footnote{For instance, \Cref{adjunctiondef1} and \Cref{adjunctiondef2} are equivalent via the relations 
$ \phi_{X,Y}(f) = G(f) \circ \eta_Y$ and  $\eta_Y = \phi_{FY, Y}(1_{FY})$.
Similarly, the equivalence of \Cref{adjunctiondef1} and \Cref{adjunctiondef3} follows from
$\phi^{-1}_{X,Y}(g) = \varepsilon_X \circ F(g)$ and $\varepsilon_X = \phi^{-1}_{X, GX}(1_{GX})$.} to each other. Having different formulations at hand will make it easier for us to deduce structure relevant for later purposes. 

\begin{definition}[Adjunction 1]
\label{adjunctiondef1}
We call a functor $F: \mathscr{D} \rightarrow \mathscr{C}$ \emph{left adjoint} to a functor $G: \mathscr{C} \rightarrow \mathscr{D}$, and write $F \dashv G$, if there exists a natural isomorphism

\[
\phi_{X,Y}: \mathscr{C}(FY, X) \cong \mathscr{D}(Y, GX).
\]
between functors of type $\mathscr{D}^{\textnormal{op}} \times \mathscr{C} \rightarrow \Set$.
\end{definition}

\begin{definition}[Adjunction 2]
\label{adjunctiondef2}
We call a functor $G: \mathscr{C} \rightarrow \mathscr{D}$ a \emph{right adjoint} functor, if for each object $Y$ in $\mathscr{D}$ there exists an object $FY$ in $\mathscr{C}$ and a morphism 
$
\eta_Y: Y \rightarrow GFY
$
such that for every object $X$ in $\mathscr{C}$ and every morphism $g: Y \rightarrow GX$, there exists a unique morphism $g^{\sharp}: FY \rightarrow X$ making the following diagram commute:
	\[
	\begin{tikzcd}
		GFY \arrow{r}{Gg^{\sharp}} & GX \\
		Y \arrow{u}{\eta_Y} \arrow{ur}[below]{g} & 
	\end{tikzcd}
	\]
\end{definition}
In the above situation, one can show that $F$ extends to a functor $F: \mathscr{D} \rightarrow \mathscr{C}$ by defining it on morphisms $g: Y \rightarrow X$ as $Fg := (\eta_{X} \circ g)^{\sharp}$, and $\eta$ extends to a natural transformation $\eta: 1_{\mathscr{D}} \Rightarrow GF$. The functor $F$ is called a \emph{left adjoint} to $G$.

\begin{definition}[Adjunction 3]
\label{adjunctiondef3}
We call a functor $F: \mathscr{D} \rightarrow \mathscr{C}$ a \emph{left adjoint functor}, if for each object $X \in \mathscr{C}$ there exists an object $GX$ in $\mathscr{D}$ and a morphism
$
\varepsilon_X: FGX  \rightarrow  X
$
such that for every object $Y$ in $\mathscr{D}$ and every morphism $f: FY \rightarrow X$, there exists a unique morphism $f^{\dagger}: Y \rightarrow GX$ making the following diagram commute:
	\[
		\begin{tikzcd}
		FGX \arrow{d}[left]{\varepsilon_X}  & FY  \arrow{l}[above]{Ff^{\dagger}}  \arrow{dl}[below]{f}   \\
		 X  &
	\end{tikzcd}
	\]	
\end{definition}
In the above situation, one can show that $G$ extends to a functor $G: \mathscr{C} \rightarrow \mathscr{D}$ by defining it on morphisms $f: Y \rightarrow X$ as $Gf := (f \circ \varepsilon_Y)^{\dag}$, and $\varepsilon$ extends to a natural transformation $\varepsilon: FG \Rightarrow 1_{\mathscr{C}}$. The functor $G$ is called a \emph{right adjoint} to $F$.

The natural transformations $\eta$ and $\varepsilon$ are referred to as the \emph{unit} and \emph{counit} of the adjunction, respectively. We further define the natural transformations $\mu := G \varepsilon_F$ and  $\delta := F\eta_G$. The unit $\eta$ and the counit $\varepsilon$ of an adjunction satisfy the \emph{triangle identities} $(G\varepsilon) \eta_G = 1_G$ and $\varepsilon_F (F\eta) = 1_F$.
Adjoints are unique up to isomorphism, that is, if $F \dashv G$ and $F \dashv H$, then there exists a natural isomorphism $G \cong H$. Adjunctions can be composed: if $F \dashv G$ is an adjunction between $\mathscr{C}$ and $\mathscr{D}$, and $J \dashv K$ an adjunction between $\mathscr{D}$ and $\mathscr{E}$, then one can show that $JF \dashv GK$ is an adjunction between $\mathscr{C}$ and $\mathscr{E}$. We conclude with a brief list of adjunctions between previously defined functors:

\begin{itemize}
	\item Assume $\mathscr{C}$ has all binary products. Then $\Delta \dashv  \Pi$, that is, the diagonal functor $\Delta: \mathscr{C} \rightarrow \mathscr{C} \times \mathscr{C}$ is left adjoint to the product functor $\Pi: \mathscr{C} \times \mathscr{C} \rightarrow \mathscr{C}$.
	\item Assume $\mathscr{C}$ has all binary products and exponentials. Then $- \times Y \dashv (-)^{Y}$ for any object $Y \in \mathscr{C}$, i.e. the partial product functor $- \times Y: \mathscr{C} \rightarrow \mathscr{C}$ is left adjoint to the partial exponential $(-)^Y: \mathscr{C} \rightarrow \mathscr{C}$.
	\item The free vector space functor $\mathcal{V}_{\mathbb{K}}: \Set \rightarrow \mathbb{K}\textnormal{-Vect}$ over a field $(\mathbb{K}, +, \cdot)$ is left adjoint to the forgetful functor $U:  \mathbb{K}\textnormal{-Vect} \rightarrow \Set$. The unit of the adjunction $\mathcal{V}_{\mathbb{K}} \dashv U$ is given by $\eta_X(x)(y) = \lbrack x = y \rbrack$ and $\mu$ satisfies $\mu_X(\Phi)(x) = \sum_{\varphi \in \mathcal{V}_{\mathbb{K}}(X)} \Phi(\varphi) \cdot \varphi(x)$ for $\Phi \in \mathcal{V}_{\mathbb{K}}^2(X)$\end{itemize}

\paragraph{Monads}

Given an adjunction $F \dashv G$, it is not unnatural to ask what structure the endofunctors arising as the compositions $G F$ and $F G$ can be equipped with. Two particularly interesting choices are the structures $(G F, \eta, \mu)$ and $(F G, \varepsilon, \delta)$. Axiomatising the characteristics of the former choice leads to a structure called \emph{monad}, whereas the latter choice leads to a structure called \emph{comonad}. While the two notions are dual, thus symmetric, we are particularly interested in the former case.

\begin{definition}[Monad]
\label{def:monad}
A \emph{monad} on a category $\mathscr{C}$ is a tuple $(T, \eta, \mu)$ consisting of an endofunctor $T: \mathscr{C} \rightarrow \mathscr{C}$ and natural transformations
	$
	\eta: 1_{\mathscr{C}} \Rightarrow T$ and $\mu: T^2 \Rightarrow T$
	satisfying the following two commutative diagrams:
	\begin{equation}
	\label{monadlaw}
		\begin{tikzcd}
		T^3 \arrow{r}{\mu_T} \arrow{d}[left]{T\mu} & T^2 \arrow{d}{\mu} \\
		T^2 \arrow{r}[below]{\mu} & T
	\end{tikzcd}
	\qquad
	\begin{tikzcd}
		T \arrow{r}{\eta_T} \arrow{dr}{1_T} \arrow{d}[left]{T\eta} & T^2 \arrow{d}{\mu} \\
		T^2 \arrow{r}[below]{\mu} & T
	\end{tikzcd}.
	\end{equation}
\end{definition}

By an abuse of notation we will refer to a monad by its underlying endofunctor. 

A morphism $(F, \alpha): (\mathscr{C}, S) \rightarrow (\mathscr{D}, T)$  between a monad $S$ on a category $\mathscr{C}$ and a monad $T$ on a category $\mathscr{D}$ consists of a functor $F: \mathscr{C} \rightarrow \mathscr{D}$ and a natural transformation $\alpha: TF \Rightarrow FS$ satisfying $\alpha \circ \eta^T F = F\eta^S$ and $F\mu^S \circ \alpha S \circ T \alpha = \alpha \circ \mu^T F$ \cite{STREET1972149}. 
		The composition of monad morphisms $(F, \alpha): (\mathscr{C}, S) \rightarrow (\mathscr{D}, T)$ and $(G, \beta): (\mathscr{D}, T) \rightarrow (\mathscr{E}, U)$ is the monad morphism $(GF, G\alpha \circ \beta F): (\mathscr{C}, S) \rightarrow (\mathscr{E}, U)$ \cite{STREET1972149}.

A convenient way to view a monad on $\mathscr{C}$ is as categorified monoid in the category of endofunctors on $\mathscr{C}$, equipped with functor composition as monoidal product and the identity functor $1_{\mathscr{C}}$ as unit. Under this perspective, the constraints \eqref{monadlaw} are precisely the associativity and unit laws of a monoid, respectively.

We continue with a list of monads that will be relevant in the course of this thesis:

\begin{examples}
\label{exampleofmonads}	
\begin{itemize}
\item The \emph{powerset} monad $\mathcal{P}$ on $\Set$ assigns to a set $X$ the set  
	$\mathcal{P}X = X \rightarrow 2$, and to a function $f$ the function $\mathcal{P}f$ defined by $\mathcal{P}f(\varphi)(y) = \vee_{x \in f^{-1}(y)} \varphi(x)$. Its unit satisfies $\eta_X(x)(y) = \lbrack x = y \rbrack$, where $\lbrack x = y \rbrack = 1$, if $x = y$, and $0$ otherwise, and its multiplication $\mu_X(\Phi)(x) = \vee_{\varphi \in 2^X} \Phi(\varphi) \wedge \varphi(x)$. Equivalently, if one identifies characteristic functions with subsets, $\mathcal{P}X = \lbrace U \mid U \subseteq X \rbrace$, $\mathcal{P}f(U) = \lbrace f(u) \mid u \in U \rbrace$, $\eta_X(x) = \lbrace x \rbrace$, and $\mu_X(\Phi) = \cup_{U \in \Phi} U$.
	\item The \emph{free vector space} monad $\mathcal{V}_{\mathbb{K}}$ over a field $(\mathbb{K}, +, \cdot)$ on $\Set$ is defined by $\mathcal{V}_{\mathbb{K}}(X) = \lbrace f: X \rightarrow \mathbb{K} \mid \supp(f)\ \textnormal{is finite} \rbrace$ and  $\mathcal{V}_{\mathbb{K}}(\varphi)(y) = \sum_{x \in f^{-1}(y)} \varphi(x)$. Its unit is given by $\eta_X(x)(y) = \lbrack x = y \rbrack$ and its multiplication by $\mu_X(\Phi)(x) = \sum_{\varphi \in \mathcal{V}_{\mathbb{K}}(X)} \Phi(\varphi) \cdot \varphi(x)$ for $\Phi \in \mathcal{V}_{\mathbb{K}}^2(X)$.
The free vector space monad over the unique two element field $(\mathbb{Z}_2, \oplus, \wedge)$ -- with the \emph{exclusive} disjunction $\oplus$  as addition, and the normal conjunction $\wedge$ as multiplication -- is denoted by $\mathcal{R}$.
	\item The \emph{neighbourhood} monad $\mathcal{H}$ on $\Set$ assigns to a set $X$ the double dual function space
	$\mathcal{H}X = (X \rightarrow 2) \rightarrow 2$, and to a function $f$ the function $\mathcal{H}f$ defined by $\mathcal{H}f(\Phi)(\varphi) = \Phi(\varphi \circ f)$. Its unit is given by $\eta_X^{\mathcal{H}}(x)(\varphi) = \varphi(x)$, and its multiplication satisfies $\mu^{\mathcal{H}}_X(\Psi)(\varphi) = \Psi(\eta^{\mathcal{H}}_{2^X}(\varphi))$.
	\item  The \emph{monotone neighbourhood} monad $\mathcal{A}$ on $\Set$ assigns to a set $X$ the set of monotone functions $\mathcal{A}X =  (X \rightarrow 2, \subseteq) \rightarrow (2, \leq)$, and otherwise coincides with the neighbourhood monad.
	\item The \emph{nominal powerset} monad $\mathcal{P}_{\textnormal{n}}$ on $\Nom{\mathbb{A}}$ assigns to a nominal set $X$ the nominal set $\mathcal{P}_{\textnormal{n}}X = \lbrace B \subseteq X \mid B \textnormal{ finitely supported} \rbrace$, where $\pi.B := \lbrace \pi.b \mid b \in B \rbrace$, and otherwise coincides with the classical powerset monad $\mathcal{P}$ \cite{pitts2013nominal}.
\end{itemize}
\end{examples}

\paragraph{Algebras}

Every adjunction yields a monad. As it turns out, the inverse holds too. That is, given a monad, there is always an adjunction that yields it. In fact, there often is more than one adjunction giving rise to a monad. Below we construct the extreme solutions, using the Eilenberg-Moore and Kleisli categories.

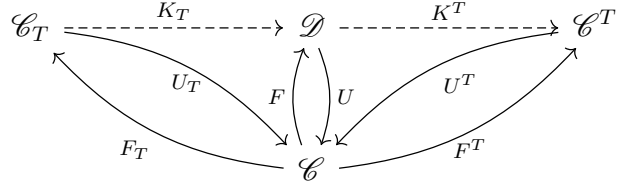
\begin{figure}[t]
\centering
\begin{tikzcd}[column sep=7em, row sep =3em]
\KL  \arrow[dashed]{r}{K_T} \arrow[bend left=20]{dr}[below]{U_T} & \mathscr{D} \arrow[bend left=20]{d}{U}  \arrow[dashed]{r}{K^T} & \EM  \arrow[bend right=20]{dl}{U^T} \\
& \mathscr{C} \arrow[bend right=20]{ur}[below]{F^T} \arrow[bend left=20]{u}{F} \arrow[bend left=20]{ul}{F_T} &	
\end{tikzcd}	
\caption{The category of adjunctions for a monad $T$ on $\mathscr{C}$}
\label{comparisonfunctors}
\end{figure}

\begin{definition}[Eilenberg-Moore Category]
\label{def:algebraovermonad}
	An \emph{algebra over a monad} $T$ on $\mathscr{C}$ is a pair $(X, h)$ consisting of an object $X$ and a morphism $h: TX \rightarrow X$ in $\mathscr{C}$ satisfying the following two commutative diagrams:
	\[
	\begin{tikzcd}
	T^2X \arrow{r}{Th} \arrow{d}[left]{\mu_X} & TX \arrow{d}{h} \\
	TX \arrow{r}[below]{h} & X	
	\end{tikzcd}
	\qquad
	\begin{tikzcd}
	X \arrow{d}[left]{\eta_X} \arrow{dr}{\id_X} & \\
	TX \arrow{r}[below]{h} & X
	\end{tikzcd}.
	\]
A homomorphism $f: (X, h_X) \rightarrow (Y, h_Y)$ between $T$-algebras is a morphism $f: X \rightarrow Y$ such that $h_Y \circ Tf = f \circ h_X$. The \emph{Eilenberg-Moore category} consists of $T$-algebras and $T$-algebra homomorphisms and is denoted by $\EM$.
\end{definition}

The categories $\mathscr{C}$ and  $\EM$ can be related as follows. On the one hand, there is the forgetful functor $U^T: \EM \rightarrow \mathscr{C}$, defined by $U^T(X,h) = X$ and $U^T(f) = f$. On the other hand, there is the free algebra functor $F^T: \mathscr{C} \rightarrow \EM$, which satisfies $F^T(X) = (TX, \mu_X)$ and $F^T(f) = Tf$. One can show that the pair satisfies an adjoint relation $F^T \dashv U^T$ that gives rise to $T$, and is final among all adjoint pairs inducing $T$. That is, for any adjunction $F \dashv U$ between functors with induced monad $T = UF$, there exists a comparison functor $K^T:  \mathscr{D} \rightarrow \EM $ that has the property $U^T K^T = U$ and $K^T F = F^T$. One can show that $K^T$ is the unique functor with this property. An adjunction is \emph{monadic}, if its comparison functor is an isomorphism.

The adjoint pair at the other end of the spectrum can be constructed as follows.
\begin{definition}[Kleisli Category]
The \emph{Kleisli category} for a monad $T$ on $\mathscr{C}$, denoted by $\KL$, has the same objects as $\mathscr{C}$; a morphism $f: X \nrightarrow Y$ in $\KL$ is a morphism $f: X \rightarrow TY$ in $\mathscr{C}$; and the composition of $f: X \nrightarrow Y$ and $g: Y \nrightarrow Z$ in $\KL$ is the composition $\mu_Z \circ Tg \circ f: X \rightarrow TZ$ in $\mathscr{C}$.	
\end{definition}

  Analogously as before, the categories $\mathscr{C}$ and $\KL$ can be related via two functors. On the one hand, there is the functor $U_T: \KL \rightarrow \mathscr{C}$, defined by $U_T(X) = TX$ and $U_T(f: X \nrightarrow Y) = \mu_Y \circ Tf$. On the other hand, there is the functor $F_T: \mathscr{C} \rightarrow \KL$, defined by $F_T(X) = X$ and $F_T(f: X \rightarrow Y) = \eta_Y \circ f$. The two functors satisfy an adjoint relation $F_T \dashv U_T$ that gives rise to the monad $T = U_TF_T$, and for any adjunction $F \dashv U$ with induced monad $T = UF$, there exists a comparison functor $K_T: \KL \rightarrow \mathscr{D}$ that is uniquely defined by the property $U K_T = U_T$ and $K_T F_T = F$. 
 
 \Cref{comparisonfunctors} summarises the initiality of the Kleisli category $\KL$ and the finality of the Eilenberg-Moore category $\EM$ among adjoint pairs giving rise to $T$.

Often, the Eilenberg-Moore category of algebras over a monad can be recognised as a category of algebras over some theory in the sense of universal algebra. (This observation can be formalised, see e.g. \Cref{varietiessec}.) Below, we list the identifications that are of particular interest for this thesis:

\begin{example}
\label{eilenbergexample}
\begin{itemize}
	\item The category $\Set^{\mathcal{P}}$ is isomorphic to $\CSL$, the category of complete join-semi lattices and functions that preserve all joins (see e.g. \cite{jacobs2011bases}).
	\item The category $\Set^{\mathcal{H}}$ is isomorphic to $\CABA$, the category of complete atomic Boolean algebras and Boolean algebra homomorphisms that preserve all meets and all joins (see e.g. \cite{jacobs2015recipe}).
	\item The category $\Set^{\mathcal{A}}$ is isomorphic to $\CDL$, the category of completely distributive lattices and functions that preserve all meets and all joins (see e.g. \cite{jacobs2015recipe}).
	\item The category $\Set^{\mathcal{R}}$ is isomorphic to $\mathbb{Z}_2\Vect$, the category of vector spaces over the unique two element field and linear maps (see e.g. \cite{jacobs2011bases}).
\end{itemize}
\end{example}

\paragraph{Coalgebras}

We conclude with a few words about \emph{coalgebras}. As the name suggests, coalgebras are dual to algebras. However, one typically refrains from requiring the dual of the Eilenberg-Moore laws, and instead simply works with coalgebras over an endofunctor (instead of a comonad). Over the last few years, the theory of coalgebras has become increasingly popular as a unifying framework for the study of infinite data types and state-based systems \cite{rutten2000universal}.

\begin{definition}[Coalgebra]
\label{def:coalgebra}
	A \emph{coalgebra} for an endofunctor $F$ on a category $\mathscr{C}$ is a pair $(X, k)$ consisting of an object $X$ and a morphism $k\colon X \rightarrow FX$ in $\mathscr{C}$. 
	\end{definition}

If $(X,k)$ is a $F$-coalgebra and $x \in X$, we call the tuple $(X,k,x)$ either a $x$-\emph{pointed} $F$-coalgebra, or an $F$-\emph{automaton}.

One of the most basic examples of coalgebras in the category of sets and functions are unpointed deterministic automata: they are of the type $k \colon X \to FX$, where $FX = 2 \times X^A$ and $k$ pairs the final state function and the transition function assigning a next state to each letter $a\in A$.

Crucial in the theory of coalgebras is the notion of homomorphism, which allows to relate states of coalgebras of the same behaviour. A homomorphism $f\colon (X, k_X)  \rightarrow (Y,k_Y)$ between $F$-coalgebras is a morphism $f\colon X \rightarrow Y$ satisfying  $k_Y \circ f = Ff \circ k_X$. The category of $F$-coalgebras and homomorphisms is denoted by $\Coalg(F)$.

 If it exists, the final object of this category is of particular importance.

\begin{definition}[Final Coalgebra]
\label{def:finalcoalgebra}
	An $F$-coalgebra $(\Omega, k_{\Omega})$ is \emph{final} if every $F$-coalgebra $(X, k)$ admits a unique homomorphism $\obs_{(X, k)}: (X, k) \rightarrow (\Omega, k_{\Omega})$.
\end{definition}

 The unique final coalgebra homomorphism can be understood as the observable behaviour of a system. For example, for the functor $FX = 2 \times X^A$, the final $F$-coalgebra is the set of all languages $\mathcal{P}(A^\star)$ and the final coalgebra homomorphism assigns to a state $x$ of an unpointed deterministic automaton the language in $\mathcal{P}(A^*)$ it accepts\footnote{For a deterministic automaton given by $\varepsilon: X \rightarrow 2$ and $\delta: X \rightarrow X^A$, acceptance is coinductively defined as a function $\obs: X \rightarrow 2^{A^*}$ by $\obs(x)(\varepsilon) = \varepsilon(x)$ and $\obs(x)(av) = \obs(\delta(x)(a))(v)$.} when given the initial state $x$. More generally, for any $F$ with $FX = B \times X^A$, the final $F$-coalgebra exists. Its underlying state-space is the set of generalised languages $A^* \rightarrow B$. We say that a $x$-pointed $F$-coalgebra $(X,k,x)$ \emph{accepts} the generalised language $\obs_{(X, k)}(x) \in B^{A^*}$. 
 
 In the course of this thesis we will encounter situations that require us to to deploy simultaneously both an algebraic and a coalgebraic perspective.

\chapter{Learning Guarded Programs}

\label{gkatsection}

Guarded Kleene Algebra with Tests (GKAT) is the fragment of Kleene Algebra with Tests (KAT) that arises by replacing the union and iteration operations of KAT with predicate-guarded variants. GKAT is more efficiently decidable than KAT and expressive enough to model simple imperative programs, making it attractive for applications to e.g. network verification. In this chapter, we further explore GKAT's automata theory, and present $\GLStar$, an algorithm for learning the GKAT automaton representation of a black-box, by observing its behaviour. A complexity analysis shows that it is more efficient to learn a representation of a GKAT program with $\GLStar$ than with Angluin's existing $\LStar$ algorithm. We implement $\GLStar$ and $\LStar$ in OCaml and compare their performances on example programs.

\section{Introduction}

As hardware and software systems grow in size and complexity, practical and scalable methods for verification tasks become increasingly important. Classical model checking approaches to verification require a rich model of the system of interest, able to express all its relevant behaviour. In reality such a model however is rarely available, for instance, when the system comes in the form of a black-box with no access to the source code, or the system is simply too complex for manual processing. 

\emph{Automata learning}, or regular inference, aims to automatically infer an automata model by observing the behaviour of the system. The incremental approach has been successfully applied to a wide range of verification tasks from finding bugs in network protocols [21], reverse engineering smartcard reader for internet banking \cite{chalupar2014automated}, and industrial applications \cite{hagerer2002model}. A comprehensive survey of the field can be found in \cite{vaandrager2017model}. The majority of modern learning algorithms is based on Angluin's $\LStar$ algorithm \cite{angluin1987learning}, which learns the unique minimal deterministic finite automaton (DFA) accepting a given regular language, or more generally, the unique minimal Moore automaton accepting a weighted language (\Cref{LStaralgorithm}). In many situations, however, targeting a DFA is not feasible, due to an explosion in the size of the state-space. Such cases instead require types of models specifically tailored for their domain-specific purposes.

For instance, modern networking systems can operate on very large data sets, making them very challenging to model. As a result, controlling, reasoning about, or extending networks can be surprisingly difficult. One approach to modernise the field that has recently gained popularity is \textit{Software Defined Networking} (SDN) \cite{feamster2014road}. Modern SDN programming languages, notably \emph{NetKAT} \cite{anderson2014netkat}, allow operators to model their network and dynamically fine tune forwarding behaviour in response to events such as traffic shifts.
Globally, NetKAT is based on \emph{Kleene Algebra} (KA) \cite{kozen1994completeness}, the sound and complete theory of regular expressions \cite{kleene1951representation}. Locally, it incorporates \emph{Boolean algebra}, the theory of predicates. Both logics have been unified in the well developed theory of \textit{Kleene Algebra with Tests} (KAT) \cite{kozen1997kleene}, which subsumes propositional Hoare logic and can be used to model standard imperative programming constructs. The automata theory for NetKAT has been introduced in \cite{foster2015coalgebraic}. 

Verifying properties about realistic networks reduces in NetKAT to deciding the behavioural equivalence of pairs of automata. Unfortunately, NetKAT's decision procedure is PSPACE-complete, mainly due its foundations in KAT.   
As a consequence, more efficiently decidable fragments of KAT have been considered. In \cite{smolka2019scalable} it was hinted that the \textit{guarded fragment} of KAT is notably more efficiently decidable than the full language, while still remaining sufficiently expressive for networking purposes. 
The idea has been taken further in \cite{smolka2019guarded}, which formally introduced \emph{Guarded Kleene Algebra with Tests} (GKAT), a variation on KAT that arises by replacing the union and iteration operations from KAT with guarded variants. In contrast to KAT, the equational theory of GKAT is decidable in (almost) linear time. These properties make GKAT a promising candidate for the foundations of a SDN programming language that is more efficiently decidable than NetKAT.

\begin{algorithm}[t]
	\begin{algorithmic}
		\State {$S,E \gets \lbrace \varepsilon \rbrace$}
		\Repeat 
				\While{$T = (S, E, \row: S \cup S \cdot A \rightarrow B^E)$ is not closed}	
					\State{find $t \in S \cdot A$ with $\row(t) \neq \row(s)$ for all $s \in S$}
					\State{$S \gets S \cup \lbrace t \rbrace $}
				\EndWhile
				\State {construct and submit $m(T)$ to the teacher}
				\If { the teacher replies \emph{no} with a counterexample $z \in A^*$ }
					\State { $E \gets E \cup \suff(z)$ }
				\EndIf
		\Until {the teacher replies \emph{yes}} \\
		\Return $m(T)$
	\end{algorithmic}
\caption{Angluin's $\LStar$ algorithm for Moore automata with input alphabet $A$ and output alphabet $B$}
\label{LStaralgorithm}
\end{algorithm}

In view of the potential applications of GKAT to the field of verification, this chapter further investigates its automata theory. In detail, we make the following contributions:
\begin{itemize}
	\item For any GKAT automaton, we define a second automaton, which we call its minimisation (\Cref{minimdef}). We show that in the class of \emph{normal} automata, the minimisation of an automaton is the unique size-minimal normal automaton accepting the same language (\Cref{sizeminimal}). We show that the minimisation of a normal automaton is isomorphic to the automaton that arises by identifying semantically equivalent pairs among reachable states (\Cref{minimalbisim}), and that the minimisations of two language equivalent normal automata are isomorphic (\Cref{minimalunique}). Finally, we show that minimising a normal GKAT automaton preserves important invariants such as the nesting coequation (\Cref{minimizationcoequation}). 
	\item We present $\GLStar$, an active-learning algorithm (\Cref{GlStaralgorithm}) that incrementally infers a GKAT automaton from a black-box by querying an \emph{oracle} (\Cref{learningpart}). We show that if the oracle is instantiated with the language accepted by a finite normal GKAT automaton, then the algorithm terminates with its minimisation in finite time (\Cref{correctnesstheorem}).
	\item We show that the semantics of GKAT automata \eqref{gkatautomatasemantics} can be reduced to the well-known semantics\footnote{In the language of Coalgebra, the semantics is given by the final coalgebra homomorphism for the functor defined by $FX = B \times X^A$, where $A = \At \cdot \Sigma = \lbrace \alpha \cdot p \mid \alpha \in \At,\ p \in \Sigma \rbrace$ and $B = 2^{\At}$, for finite sets $\Sigma$ and $\At$. The carrier of the final coalgebra for $F$ is $\mathcal{P}((\At \cdot \Sigma)^* \cdot \At)$, the set of \emph{guarded string languages}; the semantics of GKAT automata is given by the subclass of \emph{deterministic} guarded string languages.} of Moore automata \eqref{Mooreautomataacceptance}. That is, there exists a language preserving embedding of GKAT automata into Moore automata (\Cref{embeddinglanguage}), which maps the minimisation of a normal GKAT automaton to the language equivalent minimal Moore automaton (\Cref{minimalembeddingiso}). In consequence, GKAT programs could thus, in principle, be also represented by Moore automata, instead of GKAT automata.
	\item We present a complexity analysis which shows that for GKAT programs it is more efficient to learn a GKAT automaton representation with $\GLStar$ than a Moore automaton representation with the $\LStar$ algorithm (\Cref{complexity}). We implement $\GLStar$ and $\LStar$ in OCaml \cite{Ocaml} and compare their performances on example programs (\Cref{comparisongraph}).
	\end{itemize}

\section{Overview of the Approach}

In this section, we give an overview of this chapter through examples. We begin by presenting \Cref{LStaralgorithm}, a slight variation of Angluin's $\LStar$ algorithm for finite Moore automata. We exemplify the algorithm by executing it for the language semantics of a simple GKAT program. We then propose a new algorithm, which, instead of a Moore automaton, infers a GKAT automaton.
\subsection{$\LStar$ Algorithm}

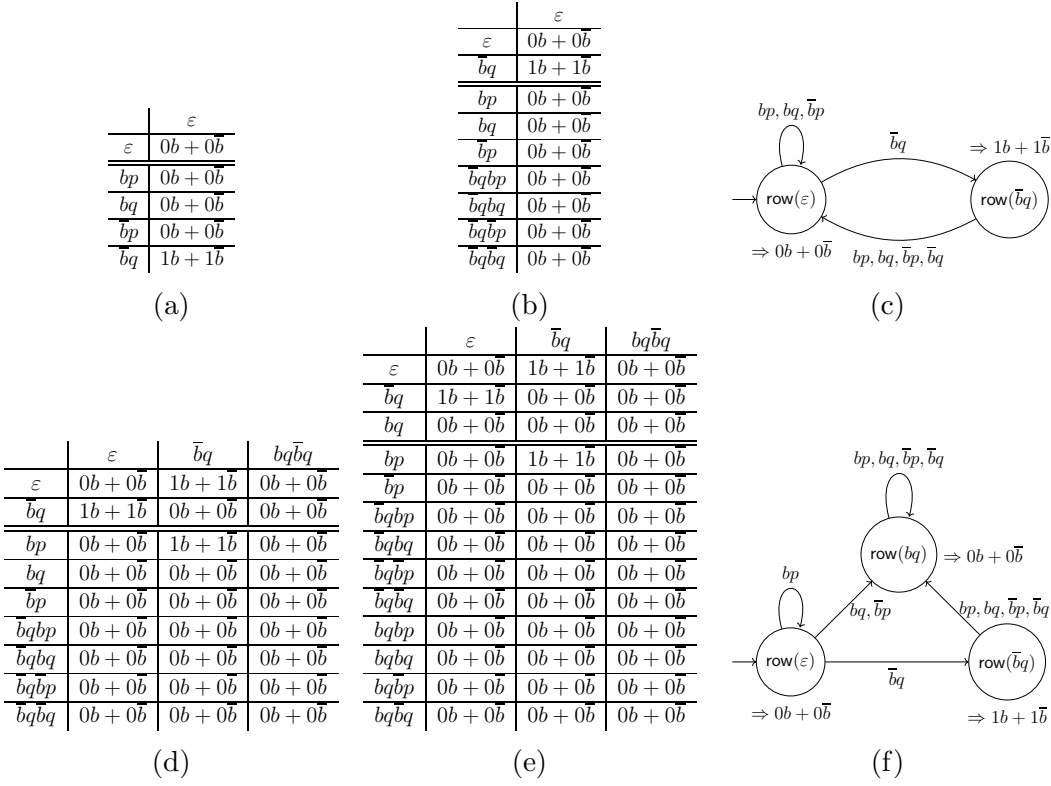
\begin{figure*}[t]
\centering
\begin{subfigure}[b]{.3\textwidth}
	\centering
		\resizebox{0.4 \textwidth}{!}{
		\begin{tabular}{ c|c}
		 & $\varepsilon$ \\
		 \hline $\varepsilon$ & $0b + 0\overline{b}$   \\
		 \hline
		 \hline $b p$ & $0b + 0\overline{b}$  \\
		 \hline $b q$ & $0b + 0\overline{b}$  \\
		 \hline $\overline{b} p$ & $0b + 0\overline{b}$  \\
		 \hline $\overline{b} q$ & $1b + 1\overline{b}$ 
	\end{tabular}
	}
	\caption{}
		\label{LStarT1}
	\end{subfigure}
	\begin{subfigure}[b]{.3\textwidth}
	\centering
		\resizebox{0.45 \textwidth}{!}{
		\begin{tabular}{ c|c}
		 & $\varepsilon$  \\
		 \hline $\varepsilon$ & $0b + 0\overline{b}$   \\
		 \hline $\overline{b} q$ & $1b + 1\overline{b}$  \\
		 \hline
		 \hline $b p$ & $0b + 0\overline{b}$  \\
		 \hline $b q$ & $0b + 0\overline{b}$  \\
		 \hline $\overline{b} p$ & $0b + 0\overline{b}$  \\
		 \hline $\overline{b} q b p$ & $0b + 0\overline{b}$  \\
		 \hline $\overline{b} q b q$ & $0b + 0\overline{b}$  \\
		 \hline $\overline{b} q \overline{b} p$ & $0b + 0\overline{b}$  \\
		 \hline $\overline{b} q \overline{b} q$ & $0b + 0\overline{b}$  \\		 
	\end{tabular}
	}
	\caption{}
			\label{LStarT2}
	\end{subfigure}
	\begin{subfigure}[b]{.3\textwidth}
	\centering
			\resizebox{\textwidth}{!}{
				\begin{tikzpicture}[node distance=12em]
	\node[state, shape=circle, initial, initial text=, label=below:{$\Rightarrow 0b + 0\overline{b}$}] (x) {$\row(\varepsilon)$};
		\node[state, shape=circle, right of=x, label=above:{$\Rightarrow  1b + 1\overline{b}$}] (y) {$\row(\overline{b}q)$};
	    \path[->]
	(x) edge[above, bend left] node{$\overline{b}q$} (y)
	(x) edge[loop above] node{$bp, bq, \overline{b}p$} (x)
	(y) edge[below, bend left] node{$bp, bq, \overline{b}p, \overline{b}q$} (x)
	;
	\end{tikzpicture}
	}
	\caption{}
			\label{LStarmT2}
\end{subfigure}
\\ 
	\begin{subfigure}[b]{.3\textwidth}
	\centering
	\resizebox{ \textwidth}{!}{
		\begin{tabular}{ c|c|c|c }
		 & $\varepsilon$ & $\overline{b} q$ & $bq\overline{b}q$   \\
		 \hline $\varepsilon$ & $0b + 0\overline{b}$ & $1b + 1\overline{b}$ & $0b + 0\overline{b}$   \\
		 \hline $\overline{b} q$ & $1b + 1\overline{b}$ & $0b + 0\overline{b}$ & $0b + 0\overline{b}$ \\
		 \hline
		 \hline $b p$ & $0b + 0\overline{b}$ & $1b + 1\overline{b}$ & $0b + 0\overline{b}$  \\
		 \hline $b q$ & $0b + 0\overline{b}$ & $0b + 0\overline{b}$ & $0b + 0\overline{b}$ \\
		 \hline $\overline{b} p$ & $0b + 0\overline{b}$ & $0b + 0\overline{b}$ & $0b + 0\overline{b}$  \\
		 \hline $\overline{b} q b p$ & $0b + 0\overline{b}$ & $0b + 0\overline{b}$ & $0b + 0\overline{b}$ \\
		 \hline $\overline{b} q b q$ & $0b + 0\overline{b}$ & $0b + 0\overline{b}$ & $0b + 0\overline{b}$ \\
		 \hline $\overline{b} q \overline{b} p$ & $0b + 0\overline{b}$ & $0b + 0\overline{b}$ & $0b + 0\overline{b}$  \\
		 \hline $\overline{b} q \overline{b} q$ & $0b + 0\overline{b}$ & $0b + 0\overline{b}$ & $0b + 0\overline{b}$  \\		 
	\end{tabular}
	}
	\caption{}
				\label{LStarT3}
	\end{subfigure}
	\begin{subfigure}[b]{.3\textwidth}
	\centering
	\resizebox{ \textwidth}{!}{
			\begin{tabular}{ c|c|c|c }
		 & $\varepsilon$ & $\overline{b} q$ & $bq\overline{b}q$   \\
		 \hline $\varepsilon$ & $0b + 0\overline{b}$ & $1b + 1\overline{b}$ & $0b + 0\overline{b}$   \\
		 \hline $\overline{b} q$ & $1b + 1\overline{b}$ & $0b + 0\overline{b}$ & $0b + 0\overline{b}$ \\
		 \hline $b q$ & $0b + 0\overline{b}$ & $0b + 0\overline{b}$ & $0b + 0\overline{b}$ \\
		 \hline
		 \hline $b p$ & $0b + 0\overline{b}$ & $1b + 1\overline{b}$ & $0b + 0\overline{b}$  \\
		 \hline $\overline{b} p$ & $0b + 0\overline{b}$ & $0b + 0\overline{b}$ & $0b + 0\overline{b}$  \\
		 \hline $\overline{b} q b p$ & $0b + 0\overline{b}$ & $0b + 0\overline{b}$ & $0b + 0\overline{b}$ \\
		 \hline $\overline{b} q b q$ & $0b + 0\overline{b}$ & $0b + 0\overline{b}$ & $0b + 0\overline{b}$ \\
		 \hline $\overline{b} q \overline{b} p$ & $0b + 0\overline{b}$ & $0b + 0\overline{b}$ & $0b + 0\overline{b}$  \\
		 \hline $\overline{b} q \overline{b} q$ & $0b + 0\overline{b}$ & $0b + 0\overline{b}$ & $0b + 0\overline{b}$  \\	
		 \hline $b q b p$ & $0b + 0\overline{b}$ & $0b + 0\overline{b}$ & $0b + 0\overline{b}$ \\
		 \hline $b q b q$ & $0b + 0\overline{b}$ & $0b + 0\overline{b}$ & $0b + 0\overline{b}$ \\
		 \hline $b q \overline{b} p$ & $0b + 0\overline{b}$ & $0b + 0\overline{b}$ & $0b + 0\overline{b}$  \\
		 \hline $b q \overline{b} q$ & $0b + 0\overline{b}$ & $0b + 0\overline{b}$ & $0b + 0\overline{b}$	 
	\end{tabular}	
	}
	\caption{}
					\label{LStarT4}
	\end{subfigure}
	\begin{subfigure}[b]{.3\textwidth}
	\centering
\label{minimalmoore}
			\resizebox{\textwidth}{!}{
\begin{tikzpicture}[node distance=6em]
	\node[state, shape=circle, initial, initial text=,  label=below:{$\Rightarrow 0b + 0\overline{b}$}] (x) {$\row(\varepsilon)$};
	\node[state, shape=circle, right of=x, above of=x, label=right:{$\Rightarrow  0b + 0\overline{b}$}] (y) {$\row(bq)$};
	\node[state, shape=circle, right of=y, below of=y, label=below:{$\Rightarrow  1b + 1\overline{b}$}] (z) {$\row(\overline{b}q)$};
	    \path[->]
	(x) edge[loop above] node{$b p$} (x)
	(x) edge[below] node{$\overline{b} q$} (z)
	(y) edge[loop above] node{$bp, bq, \overline{b}p, \overline{b}q$} (y)
	(z) edge[right] node{$bp, bq, \overline{b}p, \overline{b}q$}(y)
	(x) edge[right] node{$bq, \overline{b}p$} (y)
	;
\end{tikzpicture}
}
	\caption{}
				\label{LStarmT4}
\end{subfigure}
\caption{An example run of Angluin's $\LStar$ algorithm for the target language $\llbracket (\wwhile\ b\ \ddo\ p); q \rrbracket$}
\label{lstarexamplerun}
\end{figure*}

Angluin's $\LStar$ algorithm learns the minimal DFA accepting a given regular language \cite{angluin1987learning}. The algorithm has since been modified and generalised for a broad class of transition systems. The variation we present here step-wise infers the minimal Moore automaton accepting a generalised language $L: A^* \rightarrow B$ for an input alphabet $A$ and an output alphabet $B$ \cite{moore1956gedanken}. The algorithm assumes the existence of a \emph{teacher} (or \emph{oracle}), which can respond to two types of queries:
\begin{itemize}
	\item \textbf{Membership queries}, consisting of a word $w \in A^*$, to which the teacher returns the output $L(w) \in B$;
	\item \textbf{Equivalence queries}, consisting of a hypothesis Moore automaton $H$, to which the teacher responds \emph{yes}, if $H$ accepts $L$, and \emph{no} otherwise, providing a counterexample $z \in A^*$ in the symmetric difference of $L$ and the behaviour of $H$.
\end{itemize}
The algorithm incrementally builds an \emph{observation table}, which contains partial information about the language $L$ obtained by performing membership queries.
A table consists of two parts: a top part, with rows indexed by a finite set $S \subseteq A^*$; and a bottom-part, with rows ranging over $S \cdot A$. Columns are indexed by a finite set $E \subseteq A^*$. For any $t \in S \cup S \cdot A$ and $e \in E$, the entry at row $t$ and column $e$, denoted by $\row(t)(e)$, is given by the output $L(te) \in B$.
Note that the sets $S$ and $S \cdot A$ can intersect. In such a case, elements in the intersection are only shown in the top part. We refer to a table as a tuple $T = (S, E, \row)$, leaving the language $L$ implicit. 

	Given a table $T$, one can construct a Moore automaton $m(T) = (X, \delta, \varepsilon, x)$, where $X = \lbrace \row(s) \mid s \in S \rbrace$ is a finite set of states; the transition function $\delta: X \rightarrow X^A$ is given by $\delta(\row(s), a) = \row(sa)$; the output function $\varepsilon: X \rightarrow B$ satisfies $\varepsilon(\row(s)) = \row(s)(\varepsilon)$ (we abuse notation by writing $\varepsilon$ both for the empty string and for the output function); and
		$x = \row(\varepsilon)$ is the initial state.
	For $m(T)$ to be well-defined, the table $T$ has to satisfy $\varepsilon \in S$ and $\varepsilon \in E$, and two properties called closedness and consistency. An observation table is \emph{closed} if for all $t \in S \cdot A$ there exists an $s \in S$ such that $\row(t) = \row(s)$. An observation table is \emph{consistent}, if whenever $s, s' \in S$ satisfy $\row(s) = \row(s')$, then $\row(sa) = \row(s'a)$ for all $a \in A$. A table is consistent in particular if the function $\row$ is injective. 

The algorithm incrementally updates the table to satisfy those properties. If a well-defined hypothesis $m(T)$ can be constructed, the algorithm poses an equivalence query to the teacher, and either terminates, or refines the hypothesis with a counterexample $z \in A^*$. Since we respond to a negative equivalence query by adding the suffixes\footnote{The set $\suff(z)$ of suffixes for $z \in A^*$ is defined by $\suff(\varepsilon) = \lbrace \varepsilon \rbrace$ and $\suff(aw) = \lbrace aw \rbrace \cup \suff(w)$.}  of a counterexample to the set $E$ (opposed to adding the prefixes of a counterexample to the set $S$), rows will always be distinct, rendering consistency trivial\footnote{This variation of $\LStar$ has been introduced by Maler and Pnueli \cite{maler1995learnability}.}. At all times, the set $S$ is prefix-closed and the set $E$ is suffix-closed\footnote{A set $X \subseteq A^*$ is called \emph{suffix-closed}, if $\suff(z) \subseteq X$ for all $z \in X$.}.

\subsubsection{Example of execution}

We now execute Angluin's $\LStar$ (\Cref{LStaralgorithm}) for the target language
\begin{equation}
\label{acceptedlanguageexample}
	L = \llbracket (\wwhile\ b\ \ddo\ p); q \rrbracket =   \lbrace \overline{b}qb, \overline{b}q \overline{b}, b p \overline{b} q b, b p \overline{b} q \overline{b}, ... \rbrace  \subseteq (\At \cdot \Sigma)^* \cdot \At,
	\end{equation}
	where $\At = \lbrace b, \overline{b} \rbrace$ is a finite set of \emph{atoms} and $\Sigma = \lbrace p, q \rbrace$ is a finite set of \emph{actions}. The language $L$ represents the semantics of a program that performs the action $p$  while $b$ is true, and otherwise continues with $q$. It can be viewed as a generalised language $\widehat{L}$ with input alphabet $A = (\At \cdot \Sigma)$ and output alphabet $B = 2^{\At}$ via currying.  We denote functions $f \in B$ as formal sums $\sum_{\alpha \in \At} f(\alpha)\alpha$. A query to $\widehat{L}$ requires $\vert \At \vert$ many queries to $L$.
	
	Initially, the sets $S$ and $E$ are set to the singleton $\lbrace \varepsilon \rbrace$. We build the observation table in \Cref{LStarT1}. Since the row indexed by $\overline{b}q$ does not appear in the upper part, i.e. differs from the row indexed by $\varepsilon$, the table is not closed.
		To resolve the closedness defect we add $\overline{b}q$ to $S$. The observation table (\Cref{LStarT2}) is now closed. We derive from it the hypothesis depicted in \Cref{LStarmT2}. 
		Next, we pose an equivalence query, to which the oracle replies \emph{no} and informs us that the word $z = bq \overline{b}q$ has been falsely classified. Indeed, given $z$, the language accepted by the hypothesis outputs $1b + 1 \overline{b}$, whereas \eqref{acceptedlanguageexample} produces  $0b + 0\overline{b}$.	
To respond to the counterexample $z$, we add its suffixes to $E$. In this case, there are only the two suffixes $\overline{b} q$ and $bq\overline{b}q$. The next observation table (\Cref{LStarT3}) again is not closed: the row indexed by e.g. $bq$ does not equal any of the two upper rows indexed by $\varepsilon$ and $\overline{b}q$.
	To resolve the closedness defect we add $bq$ to $S$, and obtain the table in \Cref{LStarT4}.
The observation table is now closed. We derive from it the automaton in \Cref{LStarmT4}. Next, we pose an equivalence query, to which the oracle replies \emph{yes}.

\subsection{$\GLStar$ Algorithm}

	\begin{algorithm}[t]
	\begin{algorithmic}
		\State {$S \gets \lbrace \varepsilon \rbrace, E \gets \At $}
		\Repeat 
				\While{$T = (S, E, \row: S \cup S \cdot (\At \cdot \Sigma)\rightarrow 2^E)$ is not closed}	
					\State{find $t \in S \cdot (\At \cdot \Sigma)$ with $\row(t)(e) = 1$ for some $e \in E$, but $\row(t) \neq \row(s)$ for all $s \in S$}
					\State{$S \gets S \cup \lbrace t \rbrace $}
				\EndWhile
				\State {construct and submit $m(T)$ to the teacher}
				\If { the teacher replies \emph{no} with a counterexample $z \in (\At \cdot \Sigma)^* \cdot \At$ }
					\State { $E \gets E \cup \suff(z)$ }
				\EndIf
		\Until {the teacher replies \emph{yes}} \\
		\Return $m(T)$
	\end{algorithmic}
\caption{The $\GLStar$ algorithm for GKAT automata}
\label{GlStaralgorithm}
\end{algorithm}

In this section, we propose a new algorithm (\Cref{GlStaralgorithm}) for learning GKAT program representations, which we call $\GLStar$. The new algorithm modifies \Cref{LStaralgorithm} by addressing a number of observations. 

First, we note that the Moore automaton in \Cref{LStarmT4} admits multiple transitions to $\row(bq)$, a \emph{sink-state}, which does not accept any words.
	Second, we observe that languages induced by GKAT programs are \emph{deterministic}\footnote{Deterministic in the sense that, whenever two strings agree on the first $n$ atoms, then they agree on their first $n$ actions (or lack thereof).}. Such languages are naturally represented by GKAT automata, which keep some transitions implicit.
	Third, in some cases\footnote{For instance, the entries of the row indexed by $bq$ in \Cref{LStarT3} must all be zero, since the row indexed by $bp$ admits a non-zero entry.} the deterministic nature of the target language allows us to fill-in parts of the observation table without membership queries.
	Fourth, the cells of the observation table are labelled by functions, each of which requires two membership queries to \eqref{acceptedlanguageexample}; as a consequence, table extensions require an unfeasible amount of queries.

As before, we assume two finite sets, $\At$ and $\Sigma$, and a deterministic language $L \subseteq (\At \cdot \Sigma)^* \cdot \At$. The oracle of $\GLStar$ can answer two types of queries: membership queries consist of a word $w \in (\At \cdot \Sigma)^* \cdot \At$, to which the oracle returns the output $L(w) \in 2$; equivalence queries consist of a hypothesis GKAT automaton $H$, to which the oracle responds \emph{yes}, if $H$ accepts $L$, and \emph{no} otherwise, providing a counterexample $z \in (\At \cdot \Sigma)^* \cdot \At$ in the symmetric difference of $L$ and the language accepted by $H$.

An observation table in $\GLStar$ consists of two parts: a top part, with rows indexed by a finite set $S \subseteq (\At \cdot \Sigma)^*$; and a bottom-part, with rows ranging over $S \cdot \At \cdot \Sigma$. Columns range over a finite set $E \subseteq (\At \cdot \Sigma)^* \cdot \At$. The entry of the observation table at row $t$ and column $e$, denoted by $\row(t)(e)$, is given by $L(te) \in 2$. We refer to a table by $T= (S, E, \row)$ and leave the deterministic language $L$ implicit.

Given a table $T$, we construct an automaton $m(T) = (X, \delta, x)$, where $X = \lbrace \row(s) \mid s \in S \rbrace$ is a set of states; $x = \row(\varepsilon)$ is the initial state; and $\delta: X \rightarrow (2 + \Sigma \times X)^{\At}$ evaluates $\delta(\row(s))(\alpha)$ to $(p, \row(s \alpha p))$, if there exists an $e \in E$ and a $p \in \Sigma$ with $\row(s \alpha p)(e) = 1$; to $1$, if $ \row(s)(\alpha) = 1$; and to $0$, otherwise.

Most of the properties a table needs to satisfy such that the hypothesis $m(T)$ is well-defined are guaranteed by the construction of \Cref{GlStaralgorithm}, since $L$ is deterministic. We only have to verify that the table is \emph{closed}, that is, for all $t \in S \cdot \At \cdot \Sigma$ with $\row(t)(e) = 1$ for some $e \in E$, there exists some $s \in S$ such that $\row(t) = \row(s)$. As in the case of $\LStar$, the algorithm incrementally updates the table until closedness is guaranteed. It then constructs a well-defined hypothesis, and poses an equivalence query to the teacher. If the oracle replies \emph{yes}, the algorithm terminates, and if the response is \emph{no}, it adds the suffixes\footnote{The set $\suff(z)$ of suffixes for $z \in A^* \cdot B$ is defined by $\suff(w b) = \lbrace vb \mid v \in \suff(w) \rbrace$.}  of a counterexample $z \in (\At \cdot \Sigma)^* \cdot \At$ to $E$.

The differences between $\GLStar$ and $\LStar$ (instantiated for $A = \At \cdot \Sigma$ and $B = 2^{\At}$) are essentially a consequence of currying. In the former case, the set $E$ contains elements of type $(\At \cdot \Sigma)^* \cdot \At$, and the table is filled with booleans in $2$; in the latter case, the set $E$ contains elements of type $(\At \cdot \Sigma)^*$, and the table is filled with functions $\At \rightarrow 2$. This, however, does not mean that $\GLStar$ is merely a shift in perspective: its new types induce independent definitions, and termination needs to be established with novel correctness proofs (\Cref{learningpart}). A thorough comparison is given in \Cref{comparisonmoore}.

\subsubsection{Example of execution}

\label{glstarexamplerunsection}

   \begin{figure*}
\centering
\begin{subfigure}[b]{.13\textwidth}
	\centering
		\resizebox{0.8 \textwidth}{!}{
		\begin{tabular}{ c|c|c}
		 & $b$ & $\overline{b}$ \\
		 \hline $\varepsilon$ & 0 & 0  \\
		 \hline
		 \hline $b p$ & 0 & 0 \\
		 \hline $b q$ & 0 & 0 \\
		 \hline $\overline{b} p$ & 0 & 0 \\
		 \hline $\overline{b} q$ & 1 & 1
	\end{tabular}
	}
	\caption{}
	\label{GLStarT1}
	\end{subfigure}
	\hfill
	\begin{subfigure}[b]{.13\textwidth}
	\centering
		\resizebox{0.85 \textwidth}{!}{
		\begin{tabular}{ c|c|c}
		 & $b$ & $\overline{b}$ \\
		 \hline $\varepsilon$ & 0 & 0  \\
		 \hline $\overline{b} q$ & 1 & 1 \\
		 \hline
		 \hline $b p$ & 0 & 0 \\
		 \hline $b q$ & 0 & 0 \\
		 \hline $\overline{b} p$ & 0 & 0 \\
		 \hline $\overline{b} q b p$ & 0 & 0 \\
		 \hline $\overline{b} q b q$ & 0 & 0 \\
		 \hline $\overline{b} q \overline{b} p$ & 0 & 0 \\
		 \hline $\overline{b} q \overline{b} q$ & 0 & 0 \\		 
	\end{tabular}
	}
	\caption{}
		\label{GLStarT2}
	\end{subfigure}
	\hfill
		\begin{subfigure}[b]{.22\textwidth}
	\centering
	\resizebox{\textwidth}{!}{
	\begin{tikzpicture}[node distance=7em]
	\node[state, shape=circle, initial, initial text=, label=above:{$\Rightarrow b \mid 0$}] (x) {$\row(\varepsilon)$};
		\node[state, shape=circle, right of=x, label=above:{$\Rightarrow  b, \overline{b} \mid 1$}] (y) {$\row(\overline{b}q)$};
	    \path[->]
	(x) edge[above] node{$\overline{b} \mid q$} (y)
	;
	\end{tikzpicture}
	}
	\caption{}
		\label{GLStarmT2}
	\end{subfigure}
	\hfill
	\begin{subfigure}[b]{.2\textwidth}
	\centering
	\resizebox{\textwidth}{!}{
		\begin{tabular}{ c|c|c|c|c }
		 & $b$ & $\overline{b}$ & $bp\overline{b}qb$ & $\overline{b}qb$  \\
		 \hline $\varepsilon$ & 0 & 0 & 1 & 1  \\
		 \hline $\overline{b} q$ & 1 & 1 & 0 & 0\\
		 \hline
		 \hline $b p$ & 0 & 0 & 1 & 1 \\
		 \hline $b q$ & 0 & 0 & 0 & 0\\
		 \hline $\overline{b} p$ & 0 & 0 & 0 & 0 \\
		 \hline $\overline{b} q b p$ & 0 & 0 & 0 & 0 \\
		 \hline $\overline{b} q b q$ & 0 & 0 & 0 & 0\\
		 \hline $\overline{b} q \overline{b} p$ & 0 & 0 & 0 & 0 \\
		 \hline $\overline{b} q \overline{b} q$ & 0 & 0 & 0 & 0 \\		 
	\end{tabular}
	}
	\caption{}
		\label{GLStarT3}
	\end{subfigure}
	\hfill
	\begin{subfigure}[b]{.22\textwidth}
	\centering
	\resizebox{\textwidth}{!}{
	\begin{tikzpicture}[node distance=7em]
	\node[state, shape=circle, initial, initial text=] (x) {$\row(\varepsilon)$};
		\node[state, shape=circle, right of=x, label=above:{$\Rightarrow  b, \overline{b} \mid 1$}] (y) {$\row(\overline{b}q)$};
	    \path[->]
	(x) edge[loop above] node{$b \mid p$} (x)
	(x) edge[above] node{$\overline{b} \mid q$} (y)
	;
	\end{tikzpicture}
	}
	\caption{}
		\label{GLStarmT3}
	\end{subfigure}
	
\caption{An example run of $\GLStar$ for the target language $\llbracket (\wwhile\ b\ \ddo\ p); q \rrbracket$}
\label{glstarexamplerun}
\end{figure*}

We now execute \Cref{GlStaralgorithm} for the target language \eqref{acceptedlanguageexample}.
Initially, $S = \lbrace \varepsilon \rbrace$ and $E = \At$.	
		We build the observation table in \Cref{GLStarT1}.  Since the bottom row indexed by $\overline{b}q$ contains a non-zero entry and differs from all upper rows (in this case, only the row indexed by $\varepsilon$), the table is not closed. We resolve the closedness defect by adding  $\overline{b}q$ to $S$. The observation table (\Cref{GLStarT2}) is now closed. Note that the row indexed by $\overline{b}q$ indicates that the words $\overline{b}qb$ and $\overline{b}q\overline{b}$ are accepted. Since we know the target language is deterministic, the last four rows of the table can be filled with zeroes, without performing any membership queries. From \Cref{GLStarT2} we derive the hypothesis depicted in \Cref{GLStarmT2}. Next, we pose an
equivalence query, to which the oracle replies $\emph{no}$ and provides us with the counterexample $z = bp\overline{b}qb$, which is in the language \eqref{acceptedlanguageexample}, but not accepted by the hypothesis.	
	We respond to the counterexample by adding its suffixes $bp\overline{b}qb$, $\overline{b}qb$ and $b$ to $E$. The resulting observation table is depicted in \Cref{GLStarT3}.
	The table is closed, since the only non-zero bottom row is the one indexed by $bp$, which coincides with the upper row indexed by $\varepsilon$. Since the row indexed by $bp$ has a non-zero entry, the row indexed by $bq$ can automatically be filled with zeroes.
	We derive from \Cref{GLStarT3} the automaton in \Cref{GLStarmT3}. Finally, we pose an equivalence query, to which the oracle replies \emph{yes}.

\section{Guarded Kleene Algebra with Tests}
This section recalls the syntax and semantics of Guarded Kleene Algebra with Tests (GKAT). For most parts, we follow the relevant bits of the original presentation in \cite{smolka2019guarded}. We additionally introduce a notion of similarity between GKAT automata. 

\subsection{Syntax}

The syntax of GKAT is inductively built from disjoint non-empty sets of \emph{primitive tests}, $T$, and \emph{actions}, $\Sigma$. In a first step, one generates from $T$ a set of Boolean expressions, $\BExpr$. In a second step, the set is extended with $\Sigma$, to the full set of GKAT expressions, $\Expr$:
\begin{align*}
	b,c,d \in \BExpr &::= 0 \mid 1 \mid t \in T \mid b \cdot c \mid b + c \mid \overline{b} \\
e, f, g \in \Expr &::= p \in \Sigma \mid b \in \BExpr \mid e \cdot f \mid e +_b f \mid e^{(b)}
\end{align*}
By a slight abuse of notation, we will sometimes write $e f$ for $e \cdot f$ and keep parenthesis implicit, e.g. $bc + d$ should be read as $(b \cdot c) + d$.

It is natural to view GKAT expressions as uninterpreted imperative programs. Under this view, one makes the identifications depicted in \Cref{gkatexpressionsasprograms}.

\begin{figure*}
	\begin{gather*}
	0 \equiv \false \qquad 1 \equiv \true \qquad t \equiv t \qquad 
	b \cdot c \equiv b\ \aand\ c \qquad b + c \equiv b\ \oor\ c  \\
	 \overline{b} \equiv \nnot\ b \qquad 	p \equiv \ddo\ p \qquad b \equiv \assert\ b
	 \qquad
	 e \cdot f \equiv e; f \qquad e^{(b)} \equiv \wwhile\ b\ \ddo\ e \\
	  e +_b f \equiv \iif\ b\ \tthen\ e\ \eelse\ f
\end{gather*}
\caption{Identifying GKAT expressions with imperative programs}
\label{gkatexpressionsasprograms}
\end{figure*}

Readers familiar with KAT will notice that the grammar for GKAT is similar to the one of KAT. It differs in that GKAT replaces KAT's union $(+)$ with the guarded union $(+_b)$, and KAT's iteration $(e^*)$ with the guarded iteration $(e^{(b)})$. GKAT's expressions can be encoded within KAT's grammar via the standard embedding that maps a conditional $e +_b f$ to $be + \overline{b}f$, and a while-loop $e^{(b)}$ to $(be)^*\overline{b}$.

\subsection{Semantics: Language Model}

In this section, we recall the language semantics of GKAT, which assigns to a program the traces it could produce once executed. 
Intuitively, an execution trace is a string of the shape $\alpha_0 p_1 \alpha_1 ... p_n \alpha_n$. It can be thought of as a sequence of states $\alpha_i$ a system is in at point $i$ in time, beginning with $\alpha_0$ and ending in $\alpha_n$, intertwined with actions $p_i$ that transition from the state $\alpha_{i-1}$ to the state $\alpha_{i}$.

More formally, let $\equiv_{\BA}$ denote the equivalence relation between Boolean expressions induced by the Boolean algebra axioms. The quotient $\BExpr/_{\equiv_{\BA}}$, that is, the free Boolean algebra on generators $T$, admits a natural preorder $\leq$ defined by $b \leq c \Leftrightarrow b + c \equiv_{\BA} c$. The minimal nonzero elements with respect to this order are called \emph{atoms}, the set of which is denoted by $\At$. If $T = \lbrace t_1, ..., t_n \rbrace$ is finite, an atom $\alpha \in \At$ is of the form $\alpha = c_1 \cdot ... \cdot c_n$ with $c_i \in \lbrace t_i, \overline{t_i} \rbrace$.

A \emph{guarded string} is an element of the set $\GS := \At \cdot (\Sigma \cdot \At)^*$, or equivalently, $(\At \cdot \Sigma)^* \cdot \At$. The set of guarded strings without terminating atom is $\GSM := (\At \cdot \Sigma)^*$. 

A guarded string language $L \subseteq \GS$ is \emph{deterministic} \cite[Def. 2.2]{smolka2019guarded}, if, whenever $\alpha_1 p_1 ... \alpha_{n-1} p_{n-1}\alpha_n v \in L$ and $\alpha_1 q_1 ... \alpha_{n-1}q_{n-1}\alpha_n w \in L$, then $p_i = q_i$ for all $1 \leq i \leq n-1$, and either $v = w = \varepsilon$, or $v = p_n v'$ and $w = q_n w'$ with $p_n = q_n$. The set of deterministic guarded string languages is denoted by $\mathscr{L}$.

Guarded strings can be partially composed via the \emph{fusion product} defined by $v \alpha \diamond \beta w := v \alpha w$, if $\alpha = \beta$, and undefined otherwise. The partial product lifts to a total function on guarded languages by $L \diamond K := \lbrace v \diamond w \mid v \in L, w \in K \rbrace$.
The $n$-th power of a guarded language is inductively defined by $L^0 := \At$ and $L^{n+1} := L^n \diamond L$.
For $B \subseteq \At$ and $\overline{B} := \At \setminus B$, the guarded sum and the guarded iteration of languages are given by
\[
L +_B K := (B \diamond L) \cup (\overline{B} \diamond K) \qquad L^{(B)} := \cup_{n \geq 0} (B \diamond L)^n  \diamond \overline{B}.
\]
The \emph{language model} of GKAT is given by the semantic function $\llbracket - \rrbracket : \Expr \rightarrow \mathscr{P}(\GS)$, which is inductively defined as follows:  
\begin{gather*}
	\llbracket p \rrbracket := \lbrace \alpha p \beta \mid \alpha, \beta \in \At \rbrace \qquad \llbracket b \rrbracket := \lbrace \alpha \in \At \mid \alpha \leq b \rbrace  \\
	\llbracket e \cdot f \rrbracket := \llbracket e \rrbracket \diamond \llbracket f \rrbracket \qquad \llbracket e +_b f \rrbracket := \llbracket e \rrbracket +_{\llbracket b \rrbracket} \llbracket f \rrbracket \qquad
	\llbracket e^{(b)} \rrbracket := \llbracket e \rrbracket^{(\llbracket b \rrbracket)}. 
\end{gather*}
Equivalently, the language semantics of GKAT can be constructed by post-composing the embedding of GKAT expressions into KAT expressions with the semantics of KAT.
The language $\llbracket e \rrbracket$ accepted by a GKAT program $e$ is deterministic.

\begin{example}
\label{examplewhileloop}
Let tests and actions be defined by $T := \lbrace b \rbrace$ and $\Sigma := \lbrace p , q\rbrace$, respectively. Then there exist only two atoms, $\At = \lbrace b, \overline{b} \rbrace$. The language model assigns to 
$ p^{(b)}q \equiv (\wwhile\ b\ \ddo\ p); q$ the guarded deterministic language in \eqref{acceptedlanguageexample}.
\end{example}

\subsection{Semantics: Automata Model}

\begin{figure}
\centering
\adjustbox{scale=0.8}{
		\begin{tikzpicture}[node distance=5em]
	\node[state, shape=circle, initial, initial text=] (x) {$x$};
		\node[state, shape=circle, left of=x, below of=x] (y) {$y$};
		\node[state, shape=circle, right of=x, below of=x, label=right:{$\Rightarrow  b, \overline{b} \mid 1$}] (z) {$z$};
	    \path[->]
	 (x) edge[left] node{$b \mid p$} (y)
	 (x) edge[right] node{$\overline{b} \mid q$} (z)
	(y) edge[loop left] node{$b \mid p$} (y)
	(y) edge[above] node{$\overline{b} \mid q$} (z)
	;
	\end{tikzpicture}
	}
	\caption{The Thompson-automaton $\mathscr{X}_{p^{(b)}q}$ for $T = \lbrace b \rbrace$ and $\Sigma = \lbrace p , q\rbrace$}
			\label{thompsonautomaton}
\end{figure}

In this section, we recall the automata model of GKAT, the central subject of this chapter. As before, we assume two finite sets of tests $T$ and actions $\Sigma$, the former of which induces a finite set of atoms, $\At$.  

Let $G$ be the functor on the category of sets which is defined on objects by $GX = (2 + \Sigma \times X)^{\At}$ (where $2 = \lbrace 0, 1 \rbrace$) and on morphisms in the usual way. 
A \emph{$G$-coalgebra} (cf. \Cref{def:coalgebra}) consists of a pair $\mathscr{X} = (X, \delta)$, where $X$ is a set called \emph{state-space} and $\delta: X \rightarrow GX$ is a function called \emph{transition map}. A $G$-coalgebra \emph{homomorphism} $f: (X, \delta^X) \rightarrow (Y, \delta^Y)$ is a function $f: X \rightarrow Y$  that commutes with the transition maps, $\delta^{Y} \circ f = Gf \circ \delta^X$. More concretely \cite[Def. 5.7.]{smolka2019guarded}, $f$ is a $G$-coalgebra homomorphism, if for all $\alpha \in \At, p \in \Sigma$, and $x, y \in X$, 
\begin{itemize}
	\item if $\delta^X(x)(\alpha) \in 2$, then $\delta^Y(f(x))(\alpha) = \delta^X(x)(\alpha)$; and 
	\item if $\delta^X(x)(\alpha) = (p, y)$, then $\delta^Y(f(x))(\alpha) = (p, f(y))$.
\end{itemize} 
A \emph{$G$-automaton} is a $G$-coalgebra $\mathscr{X}$ with a designated initial state $x \in X$. A homomorphism $f: (\mathscr{X}, x) \rightarrow (\mathscr{Y}, y)$ between $G$-automata is a homomorphism between the underlying $G$-coalgebras that maps initial state to initial state, $f(x) = y$. 

For each state $x \in X$, given an input $\alpha \in \At$, a $G$-coalgebra either i) halts and accepts, that is, satisfies $\delta(x)(\alpha) = 1$; ii) halts and rejects, that is, satisfies $\delta(x)(\alpha) = 0$; or iii) produces an output $p$ and moves to a new state $y$, that is, satisfies $\delta(x)(\alpha) = (p, y)$. Intuitively, for each state $x \in X$, a guarded string $\alpha_0p_1\alpha_1...p_n \alpha_n$ is accepted, if the $G$-coalgebra in state $x$ produces the output $p_1...p_n$, halts and accepts. Formally, one defines a function $\llbracket - \rrbracket: X \rightarrow \mathscr{P}(\GS)$ as follows:
		\begin{align}
		\label{gkatautomatasemantics}
		\begin{split}
			\alpha \in \llbracket x \rrbracket &:\Leftrightarrow \delta(x)(\alpha) = 1; \\
			\alpha p w \in \llbracket x \rrbracket &:\Leftrightarrow \exists y \in X: \delta(x)(\alpha) = (p, y) \textnormal{ and } w \in  \llbracket y \rrbracket.
		\end{split}	
		\end{align}
								A $G$-coalgebra is \emph{observable}, if the function $\llbracket - \rrbracket$ is injective.
								
	A guarded string $w$ is \emph{accepted} by $x$, if $w \in \llbracket x \rrbracket$. The language accepted by a $G$-automaton, $\llbracket \mathscr{X} \rrbracket$, is the language accepted by its initial state $\llbracket x \rrbracket$. 
	Every language accepted by a $G$-automaton is deterministic \cite[Thm. 5.8]{smolka2019guarded}.
	 Conversely, one can equip the set of deterministic languages with a $G$-coalgebra structure $(\mathscr{L}, \delta^{\mathscr{L}})$ defined by:
\begin{align*}
			\delta^{\mathscr{L}}(L)(\alpha) = \begin{cases}
				(p, (\alpha p)^{-1}L) & \textnormal{ if } (\alpha p)^{-1}L \not = \emptyset \\
				1 & \textnormal{ if } \alpha \in L \\
				0 & \textnormal{ otherwise}
			\end{cases},
		\end{align*}
		where $(\alpha p)^{-1}L = \lbrace w \in \GS \mid \alpha p w \in L \rbrace$. Note that $\delta^{\mathscr{L}}(L)$ is well-defined because $L$ is deterministic.
	Since $\llbracket L \rrbracket = L$ for any $L \in \mathscr{L}$ \cite[Thm. 5.8]{smolka2019guarded}, every deterministic language can be recognized by a $G$-automaton with possibly infinitely many states. 
	
	A $G$-coalgebra $(X, \delta)$ is \emph{normal}, if it only transitions to \emph{live} states, that is, $\delta(x)(\alpha) = (p, y)$ implies $\llbracket y \rrbracket \not = \emptyset$. For any $G$-automaton $\mathscr{X}$ one can construct a language equivalent normal $G$-automaton $\widehat{\mathscr{X}}$ \cite[Lem. 5.6]{smolka2019guarded}. If $\mathscr{X}$ is normal, the function $\llbracket - \rrbracket: X \rightarrow \mathscr{P}(\GS)$ is the unique coalgebra homomorphism $\llbracket - \rrbracket: (X, \delta) \rightarrow (\mathscr{L}, \delta^{\mathscr{L}})$ \cite[Thm. 5.8]{smolka2019guarded}. 
	
	Two states $x,y \in X$ of a normal coalgebra accept the same language, $\llbracket x \rrbracket = \llbracket y \rrbracket$, if and only if they are \emph{bisimilar}\footnote{In \Cref{similarityappendix} we introduce the notion of similarity, which is to bisimilarity what a partial order is to equality.}, $x \simeq y$, that is, there exists a binary relation $R \subseteq X \times X$ with $x R y$, such that if $x' R y'$, then the following two implications hold:
	\begin{itemize}
		\item if $\delta(x')(\alpha) \in 2$, then $\delta(y')(\alpha) = \delta(x')(\alpha)$; and
		\item if $\delta(x')(\alpha) = (p, x'')$, then $\delta(y')(\alpha) = (p, y'')$ and $x'' R y''$ for some $y'' \in X$.
	\end{itemize}
	Bisimilarity is a symmetric relation and can be extended to two coalgebras by constructing a coalgebra that has the disjoint union of their state-spaces as state-space.

Using a construction that is reminiscent of Thompson's construction for regular expressions \cite{thompson1968programming}, it is possible to interpret a GKAT expression $e$ as an automaton $\mathscr{X}_e$ that accepts the same language \cite{smolka2019guarded}. Alternatively, one can use a construction \cite{smolka2019guarded} that mirrors Kozen's syntactic form of Brzozowski's derivatives for KAT \cite{kozen2017coalgebraic}. 

\begin{example}
\label{thompsonexample}
	The Thompson-automaton assigned to the expression $p^{(b)}q$ is depicted in \Cref{thompsonautomaton}. It is normal and reachable, but not observable, since the states $x$ and $y$ are bisimilar, $x \simeq y$, thus accept the same language, $\llbracket x \rrbracket = \llbracket y \rrbracket$. It also is equivalent to the expression by which it is generated, that is, it  satisfies $\llbracket \mathscr{X}_{p^{(b)}q} \rrbracket = \llbracket p^{(b)}q \rrbracket$.
\end{example}

 \subsection{A Note on Similarity}
 \label{similarityappendix}
 
 In this section we briefly introduce a notion of similarity that is to bisimulation, what a partial order is to equality. Our construction addresses the coalgebraic side of the proposal to replace the primitive notion of equality (equivalence) of GKAT expressions with a partial order of GKAT expressions \cite{smolka2019guarded}. We acknowledge that similarity has been studied more generally, for arbitrary coalgebras \cite{baltag2000logic,hughes2004simulations,hesselink2000fixpoint,levy2011similarity}.
 
 \begin{definition}
		\label{simulation}
		Let $\mathscr{X}$ be a $G$-coalgebra. A \emph{simulation} is a binary relation $R \subseteq X \times X$, such that if $xRy$, then:
		\begin{itemize}
			\item if $\delta(x)(\alpha) = 1 $, then $\delta(y)(\alpha) = 1$;
			\item if $\delta(x)(\alpha) = (p, x')$, then $\delta(y)(\alpha) = (p, y')$ and $x'Ry'$ for some $y' \in X$.
		\end{itemize}
		States $x$ and $y$ are \emph{similar}, $x \precsim y$, if there exists a simulation relating $x$ to $y$. 
	\end{definition}

The result below shows that similarity is more fundamental than bisimilarity, and that the definition of the latter naturally arises from the former.

\begin{lemma}
\label{bisim-sim}
	$x \simeq y$ if and only if $x \precsim y$ and $y \precsim x$.
\end{lemma}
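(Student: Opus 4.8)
The plan is to prove the two implications separately, exploiting the single structural observation that a bisimulation is a simulation carrying extra information about the halt-and-reject behaviour, and that this extra information can be reconstructed from a pair of simulations using the fact that $\delta(x)(\alpha)$ takes a single, determinate value in $2 + \Sigma \times X$.

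For the direction $x \simeq y \Rightarrow x \precsim y \wedge y \precsim x$, I would first note that every bisimulation is in particular a simulation in the sense of \Cref{simulation}. The transition clause of a simulation is verbatim the second defining clause of $\simeq$, and the accept clause ($\delta(x)(\alpha)=1 \Rightarrow \delta(y)(\alpha)=1$) is exactly the instance $\delta(x')(\alpha) = 1 \in 2$ of the first clause $\delta(x')(\alpha) \in 2 \Rightarrow \delta(y')(\alpha) = \delta(x')(\alpha)$. Hence a bisimulation relating $x$ to $y$ already witnesses $x \precsim y$. Since bisimilarity is symmetric, $x \simeq y$ also gives $y \simeq x$, and applying the same observation to a bisimulation relating $y$ to $x$ yields $y \precsim x$.

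For the converse, suppose $x \precsim y$ and $y \precsim x$, witnessed by simulations $R$ with $xRy$ and $S$ with $ySx$. I would define $B := \lbrace (u,v) \mid uRv \text{ and } vSu \rbrace$, which contains $(x,y)$, and claim $B$ is a bisimulation. For $(u,v) \in B$ and $\alpha \in \At$, the accept case $\delta(u)(\alpha)=1$ gives $\delta(v)(\alpha)=1$ directly from $uRv$. The transition case $\delta(u)(\alpha)=(p,u')$ gives, from $uRv$, that $\delta(v)(\alpha)=(p,v')$ with $u'Rv'$; then feeding this transition of $v$ into the simulation $S$ (via $vSu$) forces $\delta(u)(\alpha)=(p,u'')$ with $v'Su''$, and since $\delta(u)(\alpha)$ is a single value we get $u''=u'$, so $v'Su'$ and thus $(u',v')\in B$.

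The step I expect to be the main obstacle is the halt-and-reject case $\delta(u)(\alpha)=0$, which must also be matched by $\delta(v)(\alpha)=0$ to satisfy the $\in 2$ clause of bisimilarity, yet neither simulation mentions rejection explicitly. The resolution is to argue by cases on the determinate value of $\delta(v)(\alpha)$ using $vSu$: if $\delta(v)(\alpha)=1$ then $S$ forces $\delta(u)(\alpha)=1$, and if $\delta(v)(\alpha)=(p,v')$ then $S$ forces $\delta(u)(\alpha)=(p,u')$; both contradict $\delta(u)(\alpha)=0$, leaving $\delta(v)(\alpha)=0$ as the only possibility. This is where determinacy of the coalgebra map does the real work, and it is the essential reason two simulations suffice to recover the symmetric reject condition of a bisimulation. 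Concluding that $B$ is a bisimulation with $xBy$ then gives $x \simeq y$.
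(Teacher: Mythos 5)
Your proposal is correct and follows essentially the same route as the paper: the forward direction observes that a bisimulation is already a simulation (with the reverse direction obtained via symmetry, which the paper instead spells out by verifying that the reversed relation is a simulation), and the converse direction builds the bisimulation as the intersection $R \cap S^{r}$ and uses determinacy of $\delta$ to recover the reject case and to match the transition targets. The step you flag as the main obstacle --- reconstructing $\delta(v)(\alpha)=0$ by eliminating the other two cases via the second simulation --- is exactly the argument the paper gives.
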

\begin{proof}
\begin{itemize}
	\item Assume the bisimilarity $x \simeq y$ is witnessed by some relation $R$. We show that $R$ witnesses the similarity $x \precsim y$. Clearly $xRy$ by definition. Let $x'Ry'$ for arbitrary $x', y' \in X$, then we find:
	\begin{itemize}
		\item If $\delta(x')(\alpha) = 1$, then $\delta(y') = 1$ since $R$ is a bisimulation.
		\item If $\delta(x')(\alpha) = (p, x'')$, then $\delta(y')(\alpha) = (p, y'')$ and $x''Ry''$ for some $y'' \in X$, since $R$ is a bisimulation.
	\end{itemize}
	Similarly, we show that the reverse relation $R^r$ witnesses the similarity  $y \precsim x$. Clearly $y R^r x$, since by construction $x R y$. Let $y' R^r x'$, i.e. $x' R y'$, for arbitrary $x', y' \in X$, then we find:
	\begin{itemize}
		\item If $\delta(y')(\alpha) = 1$, then $\delta(x')(\alpha) = 1$, since as $R$ is a bisimulation we could otherwise falsely deduce $\delta(y')(\alpha) = 0$ or $\delta(y')(\alpha) \not \in 2$.
		\item If $\delta(y')(\alpha) = (p, y'')$, then $\delta(x')(\alpha) = (p,x'')$ with $y'' R^r x''$, i.e. $x''Ry''$. Indeed, since $R$ is a bisimulation, if $\delta(x')(\alpha) \in 2$, it falsely follows $\delta(y')(\alpha) \in 2$, and if $\delta(x')(\alpha) = (q,x'')$, it follows $\delta(y')(\alpha) = (q, y''')$ with $x'' R y'''$, as $R$ is a bisimulation. It remains to observe $(p, y'') = \delta(y')(\alpha) = (q, y''')$, which implies $p = q$ and $y'' = y'''$.	\end{itemize}
		
		\item Assume $x \precsim y$ and $y \precsim x$ are witnessed by relations $R_1 \subseteq X \times X$ and $R_2 \subseteq X \times X$, respectively. We define $R := R_1 \cap R_2^r$, and show that $R$ is a bisimulation witnessing $x \simeq y$. Clearly $x R_1 y$ and $x R_2^r y$, i.e. $x R y$. Thus let $x' R y'$ for arbitrary $x', y' \in X$, then we find:
			\begin{itemize}
			\item If $\delta(x')(\alpha) = 0$, then $\delta(y')(\alpha) = 0$. Indeed, if $\delta(y')(\alpha) = 1$ or $\delta(y')(\alpha) \not \in 2$, we could falsely deduce $\delta(x')(\alpha) = 1$ or $\delta(x')(\alpha) \not \in 2$, as $y' R_2 x'$, and $R_2$ is a simulation.	
			\item If $\delta(x')(\alpha) = 1$, then $\delta(y')(\alpha) = 1$, since $x'R_1y'$, and $R_1$ is a simulation. 
			\item If $\delta(x')(\alpha) = (p, x'')$, then (i) $\delta(y')(\alpha) = (p, y'')$ with $x'' R_1 y''$, since $x' R_1 y'$, and $R_1$ is a simulation; and (ii) $y'' R_2 x''$, since $y' R_2 x'$ implies $\delta(x')(\alpha) = (p, x''') = (p, x'')$ for $y'' R_2 x'''$.
			Thus we find by definition of $R$ that $x'' R y''$. \qedhere
			\end{itemize}
			\end{itemize}
\end{proof}

\begin{lemma}
\label{simimpliesinclusion}
	If $x \precsim y$ then $\llbracket x \rrbracket \subseteq \llbracket y \rrbracket$.
\end{lemma}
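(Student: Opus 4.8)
The plan is to fix a simulation $R \subseteq X \times X$ witnessing $x \precsim y$ and then prove, by induction on the length of guarded strings, the \emph{strengthened} claim that $\llbracket x' \rrbracket \subseteq \llbracket y' \rrbracket$ holds for \emph{every} pair with $x' R y'$, not merely for the originally given $x$ and $y$. Strengthening the statement in this way is the essential move: a single transition step leads from a related pair $(x', y')$ to a \emph{new} related pair $(x'', y'')$ of successor states, so an induction hypothesis phrased only about $(x,y)$ would fail to apply. Once the strengthened claim is established, instantiating it at the pair $x R y$ yields the desired inclusion $\llbracket x \rrbracket \subseteq \llbracket y \rrbracket$.

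For the base case, the shortest guarded strings are single atoms $\alpha \in \At$. If $\alpha \in \llbracket x' \rrbracket$, then by the semantics \eqref{gkatautomatasemantics} we have $\delta(x')(\alpha) = 1$. Since $x' R y'$ and $R$ is a simulation, the first clause of \Cref{simulation} gives $\delta(y')(\alpha) = 1$, whence $\alpha \in \llbracket y' \rrbracket$.

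For the inductive step, consider a guarded string of the form $\alpha p w'$ with $\alpha p w' \in \llbracket x' \rrbracket$. By \eqref{gkatautomatasemantics} there is a state $x'' \in X$ with $\delta(x')(\alpha) = (p, x'')$ and $w' \in \llbracket x'' \rrbracket$. Applying the second clause of \Cref{simulation} to $x' R y'$ produces a state $y'' \in X$ with $\delta(y')(\alpha) = (p, y'')$ and $x'' R y''$. As $w'$ is strictly shorter than $\alpha p w'$, the induction hypothesis applied to the pair $x'' R y''$ yields $w' \in \llbracket y'' \rrbracket$; together with $\delta(y')(\alpha) = (p, y'')$ this gives $\alpha p w' \in \llbracket y' \rrbracket$ by \eqref{gkatautomatasemantics}, completing the induction.

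I do not expect any genuine obstacle in this argument: it is a routine traversal matching the two clauses defining a simulation against the two clauses defining the language semantics. The only point that requires care is the generalisation of the induction hypothesis over all pairs in $R$, since the obvious attempt to induct on strings while keeping the states fixed would not survive the transition step. It is worth noting that, combined with \Cref{bisim-sim}, this lemma immediately recovers the familiar fact that bisimilar states accept the same language, namely $x \simeq y$ implies $\llbracket x \rrbracket = \llbracket y \rrbracket$.
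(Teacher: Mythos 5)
Your proof is correct and takes essentially the same route as the paper: induction on the length of the guarded string, matching the two clauses of the simulation definition against the two clauses of the language semantics. The only difference is presentational — you make explicit the strengthening of the induction hypothesis to all pairs in $R$, whereas the paper's proof leaves this implicit by silently using that the successor pair $(x'', y'')$ is again similar when invoking the IH.
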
	
\begin{proof}
The proof is similar to the one of its bisimilar counterpart \cite[Lemma 5.2]{smolka2019guarded}. 

We prove $w \in \llbracket x \rrbracket$ implies $w \in \llbracket y \rrbracket$ for all $w \in \GS$ by induction on the length of $w$.
\begin{itemize}
	\item For the induction base, let $w = \alpha$, then:
	\begin{align*}
		\alpha \in \llbracket x \rrbracket \Leftrightarrow\  & \delta(x)(\alpha) = 1 && \textnormal{(Definition of } \llbracket - \rrbracket) \\
		\Rightarrow\  & \delta(y)(\alpha) = 1 && (x \precsim y) \\
		\Leftrightarrow\ & \alpha \in \llbracket y \rrbracket && \textnormal{(Definition of } \llbracket - \rrbracket)
	\end{align*}
	\item For the induction step, let $w = \alpha p v$, then we derive: 
	\begin{align*}
		\alpha p v \in \llbracket x \rrbracket & \Leftrightarrow \delta(x)(\alpha) = (p, x'),\ v \in \llbracket x' \rrbracket & & \textnormal{(Definition of } \llbracket - \rrbracket) \\
		&\Rightarrow \delta(y)(\alpha) = (p, y'),\ v \in \llbracket y' \rrbracket && (x \precsim y,\ \textnormal{IH}) \\
		&\Leftrightarrow \alpha p v \in \llbracket y \rrbracket && \textnormal{(Definition of } \llbracket - \rrbracket)
	\end{align*} 
\end{itemize}
\end{proof}

\begin{lemma}
\label{simlanguageequiv}
	Let $L_1, L_2 \in \mathscr{L}$, then $L_1 \subseteq L_2$ iff $L_1 \precsim L_2$ in $(\mathscr{L}, \delta^{\mathscr{L}})$.
\end{lemma}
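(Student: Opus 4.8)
The plan is to prove the two implications separately, noting that the right-to-left direction is essentially immediate from results already established, while the left-to-right direction requires exhibiting an explicit simulation. For the direction $L_1 \precsim L_2 \Rightarrow L_1 \subseteq L_2$, I would simply invoke \Cref{simimpliesinclusion} applied to the states $L_1, L_2$ of the coalgebra $(\mathscr{L}, \delta^{\mathscr{L}})$, which yields $\llbracket L_1 \rrbracket \subseteq \llbracket L_2 \rrbracket$; combined with the identity $\llbracket L \rrbracket = L$ for every $L \in \mathscr{L}$ recalled earlier, this gives $L_1 \subseteq L_2$ at once.

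For the converse, the plan is to show that the relation
\[
R := \lbrace (K_1, K_2) \in \mathscr{L} \times \mathscr{L} \mid K_1 \subseteq K_2 \rbrace
\]
is a simulation on $(\mathscr{L}, \delta^{\mathscr{L}})$ in the sense of \Cref{simulation}; since $L_1 \subseteq L_2$ means exactly $L_1 R L_2$, this immediately delivers $L_1 \precsim L_2$. So I would fix $(K_1, K_2) \in R$, a letter $\alpha \in \At$, and check the two clauses of the simulation definition against the cases of $\delta^{\mathscr{L}}$. In the accepting case $\delta^{\mathscr{L}}(K_1)(\alpha) = 1$ we have $\alpha \in K_1 \subseteq K_2$, and I must argue $\delta^{\mathscr{L}}(K_2)(\alpha) = 1$; in the transition case $\delta^{\mathscr{L}}(K_1)(\alpha) = (p, (\alpha p)^{-1}K_1)$ there is a word $\alpha p w \in K_1 \subseteq K_2$, so $(\alpha p)^{-1} K_2 \neq \emptyset$, and I must show $\delta^{\mathscr{L}}(K_2)(\alpha) = (p, (\alpha p)^{-1}K_2)$ together with $(\alpha p)^{-1}K_1 \subseteq (\alpha p)^{-1}K_2$. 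The latter inclusion is immediate, since $w \in (\alpha p)^{-1}K_1$ gives $\alpha p w \in K_1 \subseteq K_2$.

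The main obstacle will be justifying that the output of $\delta^{\mathscr{L}}(K_2)(\alpha)$ lands in the \emph{same} case (and with the same action $p$) as $\delta^{\mathscr{L}}(K_1)(\alpha)$, rather than jumping to a higher-priority clause. This is exactly where determinism of $K_2$ enters: in the accepting case I would rule out $(\alpha p)^{-1}K_2 \neq \emptyset$ by observing that $\alpha \in K_2$ and $\alpha p w \in K_2$ would violate the determinism condition at $n = 1$ (the empty suffix cannot coexist with a suffix beginning in an action); in the transition case, determinism guarantees that $p$ is the unique action with $(\alpha p)^{-1}K_2 \neq \emptyset$, so the first clause of $\delta^{\mathscr{L}}$ returns precisely $(p, (\alpha p)^{-1}K_2)$. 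I would also note that $(\alpha p)^{-1}K_1$ and $(\alpha p)^{-1}K_2$ are again deterministic (this is already implicit in $\delta^{\mathscr{L}}$ being a well-defined $G$-coalgebra on $\mathscr{L}$), so the pair stays inside $R$. Everything else is routine unfolding of the definitions.
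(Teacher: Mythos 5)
Your proposal is correct and follows essentially the same route as the paper: the forward implication via \Cref{simimpliesinclusion} together with $\llbracket L \rrbracket = L$, and the converse by verifying that the inclusion relation itself is a simulation on $(\mathscr{L}, \delta^{\mathscr{L}})$. The only difference is that you spell out explicitly why determinism forces $\delta^{\mathscr{L}}(K_2)(\alpha)$ into the matching clause of the case split, a point the paper leaves implicit in its remark that $\delta^{\mathscr{L}}$ is well defined on deterministic languages.
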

\begin{proof}
	\Cref{simimpliesinclusion} shows that $L_1 \precsim L_2$  implies $L_1 = \llbracket L_1 \rrbracket \subseteq \llbracket L_2 \rrbracket = L_2$.
	
	 Conversely, we show that $\subseteq$ is a simulation.
	 	Assume $L_1 \subseteq L_2$, then we compute:
	 	\begin{align*}
	 		\delta^{\mathscr{L}}(L_1)(\alpha) = 1 &\Leftrightarrow \alpha \in L_1 && \textnormal{(Definition of } \delta^{\mathscr{L}}) \\
	 		&\Rightarrow \alpha \in L_2 && (L_1 \subseteq L_2) \\
	 		&\Leftrightarrow \delta^{\mathscr{L}}(L_2)(\alpha) = 1 && \textnormal{(Definition of } \delta^{\mathscr{L}})
	 	\end{align*}
	 	Moreover, we find: 
	 	\begin{align*}
	 		\delta^{\mathscr{L}}(L_1)(\alpha) = (p, L^1) 
	 		& \Leftrightarrow \emptyset \not = L^1 =   (\alpha p)^{-1}L_1&& \textnormal{(Definition of } \delta^{\mathscr{L}}) \\
	 		& \Rightarrow \emptyset \not = L^1 = (\alpha p)^{-1}L_1 \subseteq (\alpha p)^{-1}L_2 = L^2 && (L_1 \subseteq L_2) \\
	 		& \Leftrightarrow \delta^{\mathscr{L}}(L_2)(\alpha) = (p, L^2),\ L^1 \subseteq L^2 && \textnormal{(Definition of } \delta^{\mathscr{L}}) 
	 	\end{align*}
	\end{proof}
	
	\begin{corollary}
\label{simnormal}
	Let $\mathscr{X}$ be a normal $G$-coalgebra, then $x \precsim y$ iff $\llbracket x \rrbracket \subseteq \llbracket y \rrbracket$.
\end{corollary}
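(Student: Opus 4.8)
The statement splits into two implications, of which only one requires normality. The plan is to dispatch the left-to-right direction, $x \precsim y \Rightarrow \llbracket x \rrbracket \subseteq \llbracket y \rrbracket$, immediately, since it holds for arbitrary $G$-coalgebras and is precisely \Cref{simimpliesinclusion}. I would then concentrate the whole argument on the converse, $\llbracket x \rrbracket \subseteq \llbracket y \rrbracket \Rightarrow x \precsim y$, and isolate exactly where normality enters.

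For the converse I would exhibit an explicit witnessing simulation, namely the relation $R \subseteq X \times X$ defined by $x' \mathbin{R} y' :\Leftrightarrow \llbracket x' \rrbracket \subseteq \llbracket y' \rrbracket$. This is nothing but the pullback along $\llbracket - \rrbracket$ of the inclusion order on $\mathscr{L}$, which \Cref{simlanguageequiv} already identifies with similarity in $(\mathscr{L}, \delta^{\mathscr{L}})$; since for a normal coalgebra $\llbracket - \rrbracket$ is a $G$-coalgebra homomorphism into $(\mathscr{L}, \delta^{\mathscr{L}})$, one expects simulations to transport backwards along it. As $\llbracket x \rrbracket \subseteq \llbracket y \rrbracket$ gives $x \mathbin{R} y$ immediately, it then suffices to verify that $R$ satisfies the two clauses of \Cref{simulation}, so I would assume $x' \mathbin{R} y'$, i.e. $\llbracket x' \rrbracket \subseteq \llbracket y' \rrbracket$, and check each clause.

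The acceptance clause is routine and needs no normality: if $\delta(x')(\alpha) = 1$ then $\alpha \in \llbracket x' \rrbracket \subseteq \llbracket y' \rrbracket$, and by the defining equation of $\llbracket - \rrbracket$ this is equivalent to $\delta(y')(\alpha) = 1$. The transition clause is the crux and the one place normality is indispensable. Suppose $\delta(x')(\alpha) = (p, x'')$. Normality of $\mathscr{X}$ guarantees that $x''$ is live, $\llbracket x'' \rrbracket \neq \emptyset$, so I may pick some $w \in \llbracket x'' \rrbracket$; then $\alpha p w \in \llbracket x' \rrbracket \subseteq \llbracket y' \rrbracket$, and unfolding the semantics forces $\delta(y')(\alpha) = (p, y'')$ for some $y'' \in X$. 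It then remains to check $x'' \mathbin{R} y''$, i.e. $\llbracket x'' \rrbracket \subseteq \llbracket y'' \rrbracket$: for any $v \in \llbracket x'' \rrbracket$ we have $\alpha p v \in \llbracket x' \rrbracket \subseteq \llbracket y' \rrbracket$, and since $\delta$ is a function the move of $y'$ on $\alpha$ is the fixed pair $(p, y'')$, so the semantics yields $v \in \llbracket y'' \rrbracket$.

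The only genuine obstacle is this transition clause, and it is exactly where normality does its work: without it, $x''$ could be a dead state reached by a transition that $y'$ is under no obligation to match, breaking the simulation. The non-emptiness supplied by normality produces a concrete witness $w$ that forces $y'$ to take a matching $(p,\cdot)$-move, after which determinacy of $\delta$ pins down the target $y''$ and lets language inclusion propagate to the successors; everything else is bookkeeping against the definition of $\llbracket - \rrbracket$.
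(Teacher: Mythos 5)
Your proof is correct, and it takes the same overall shape as the paper's --- one direction is delegated to \Cref{simimpliesinclusion}, and for the converse both you and the paper exhibit the identical witnessing relation $R = \lbrace (s,t) \mid \llbracket s \rrbracket \subseteq \llbracket t \rrbracket \rbrace$ --- but the verification that $R$ is a simulation is carried out along a genuinely different route. The paper never touches the inductive definition of acceptance directly: it uses normality only to invoke the fact that $\llbracket - \rrbracket$ is a $G$-coalgebra homomorphism into $(\mathscr{L}, \delta^{\mathscr{L}})$, then transports each clause through that homomorphism and applies \Cref{simlanguageequiv} (language inclusion equals similarity in $\mathscr{L}$) to move between $\delta$ and $\delta^{\mathscr{L}}$. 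You instead argue elementarily from the defining equations of $\llbracket - \rrbracket$, using normality in its raw form (liveness of successors) to manufacture a witness $w \in \llbracket x'' \rrbracket$ that forces $y'$ to take a matching $(p,\cdot)$-transition, after which functionality of $\delta$ pins down the successor. Your version is more self-contained and makes the role of normality more transparent --- it is visibly the non-emptiness of $\llbracket x'' \rrbracket$ that does the work, whereas in the paper normality is hidden inside the citation of the homomorphism property; the paper's version buys uniformity with the surrounding development (it mirrors the bisimilarity counterpart in \cite{smolka2019guarded} and reuses \Cref{simlanguageequiv}, which was proved precisely for this purpose). Both arguments are sound; no gap.
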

	\begin{proof}
	The proof is similar to the one of its bisimilar counterpart \cite[Corollary 5.9]{smolka2019guarded}.
	
	From \Cref{simimpliesinclusion} it follows that $x \precsim y$ implies $\llbracket x \rrbracket \subseteq \llbracket y \rrbracket$.	
	
	 Conversely, assume $\llbracket x \rrbracket \subseteq \llbracket y \rrbracket$. We define a relation $R:= \lbrace (s,t) \in X \times X \mid \llbracket s \rrbracket \subseteq \llbracket t \rrbracket \rbrace$. In order to show $x \precsim y$ it is sufficient to prove that $R$ is a simulation. Since $\mathscr{X}$ is normal, $\llbracket - \rrbracket$ is a $G$-coalgebra homomorphism. 
	\begin{itemize}
		\item Suppose $s R t$ and $\delta(s)(\alpha) = 1$. As $\llbracket - \rrbracket$ is a $G$-coalgebra homomorphism it follows $\delta^{\mathscr{L}}(\llbracket s \rrbracket)(\alpha) = 1$. Since $\llbracket s \rrbracket \subseteq \llbracket t \rrbracket$ implies $\llbracket s \rrbracket \precsim \llbracket t \rrbracket$ by \Cref{simlanguageequiv}, 
		we thus can deduce $\delta^{\mathscr{L}}(\llbracket t \rrbracket)(\alpha) = 1$. 
		Since  $\llbracket - \rrbracket$ is a $G$-coalgebra homomorphism, we can conclude $ \delta(t)(\alpha) = 1 $.
		\item Suppose $s R t$ and $\delta(s)(\alpha) = (p,s')$. Since $\llbracket - \rrbracket$ is a $G$-coalgebra homomorphism it follows
$
			\delta^{\mathscr{L}}(\llbracket s \rrbracket)(\alpha) = (p, \llbracket s' \rrbracket) .
$
Since $\llbracket s \rrbracket \subseteq \llbracket t \rrbracket$ implies $\llbracket s \rrbracket \precsim \llbracket t \rrbracket$ by \Cref{simlanguageequiv}, we deduce $\delta^{\mathscr{L}}(\llbracket t \rrbracket)(\alpha) = (p, L)$ for some $L \in \mathscr{L}$ with $\llbracket s' \rrbracket \precsim L$ in $\mathscr{L}$. Since 
$\llbracket - \rrbracket$ is a $G$-coalgebra homomorphism, it follows $L = \llbracket t' \rrbracket$ with $\delta(t)(\alpha) = (p, t')$. Thus we have $\llbracket s' \rrbracket \precsim \llbracket t' \rrbracket$, or equivalently $\llbracket s' \rrbracket \subseteq \llbracket t' \rrbracket$ by \Cref{simlanguageequiv}. The latter implies $s' R t'$ by definition of $R$. Thus we find $\delta(t)(\alpha) = (p, t')$ and $s' R t'$. \qedhere
	\end{itemize} 
\end{proof}

\section{The Minimal Representation $m(\mathscr{X})$}

The automaton $\mathscr{X}_e$ assigned to an expression $e$ by the Thompson construction is not always the most efficient representation of the language $\llbracket e \rrbracket$. For instance, as seen in \Cref{thompsonexample}, the Thompson-automaton $\mathscr{X}_{p^{(b)}q}$ in \Cref{thompsonautomaton} contains redundant structure, since its states $x$ and $y$ exhibit the same behaviour. In this section, we show that any $G$-automaton $\mathscr{X}$ admits an equivalent \emph{minimal} representation, $m(\mathscr{X})$.
\subsection{Reachability}

We begin by formally defining what it means for a state of a $G$-automaton to be reachable, and show that restricting an automaton to its reachable states leaves important properties invariant.
 
\begin{definition}
	Let $(X, \delta)$ be a $G$-coalgebra. We write $\rightarrow\ \subseteq X \times \GSM \times X$ for the smallest relation satisfying:
\begin{gather}
\label{transitiondef}
	\frac{}{x \xrightarrow{\varepsilon} x} \quad
	\frac{\delta(x)(\alpha) = (p, y)}{x \xrightarrow{\alpha p} y} \quad
	\frac{x \xrightarrow{\alpha_1 p_1 ... \alpha_{n-1} p_{n-1}} y \quad , \quad y \xrightarrow{\alpha_n p_n} z}{x \xrightarrow{\alpha_1 p_1 ... \alpha_n p_n} z}.
\end{gather}
The states \emph{reachable} from $x \in X$ are 
$r(x) := \lbrace y \in X \mid \exists w \in \GSM: x \xrightarrow{w} y \rbrace$, and their \emph{witnesses} are $R(x) := \lbrace w \in \GSM \mid \exists x_w \in X : x \xrightarrow{w} x_w \rbrace$.
\end{definition}

The following result shows that a state reached by a word is uniquely defined.

\begin{lemma}
\label{uniquenessstates}
If $x \xrightarrow{w} x_w^1$ and $x \xrightarrow{w} x_w^2$, then $x_w^1 = x_w^2$.
\end{lemma}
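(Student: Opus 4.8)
The plan is to proceed by induction on the length of the word $w \in \GSM$, exploiting the fact that $\delta(x)(\alpha)$ is a \emph{single} element of $2 + \Sigma \times X$ for each state $x$ and atom $\alpha$; this determinism of the transition map is what ultimately forces uniqueness of the reached state.

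First I would establish the base case $w = \varepsilon$. Inspecting the three rules in \eqref{transitiondef}, only the reflexivity rule can have a conclusion whose label is the empty word: the single-step rule produces a label $\alpha p$, and the composition rule always appends a pair $\alpha_n p_n$, so its conclusion has length at least one. Hence any derivation of $x \xrightarrow{\varepsilon} z$ must be an instance of the reflexivity rule, giving $z = x$; in particular $x_w^1 = x = x_w^2$.

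For the inductive step I would write $w = w' \alpha p$ with $w' \in \GSM$ shorter than $w$, and show that any derivation of $x \xrightarrow{w} z$ factors as $x \xrightarrow{w'} y$ together with $\delta(y)(\alpha) = (p, z)$ for some intermediate state $y$. This holds whether the last rule applied is the single-step rule (the case $w' = \varepsilon$, with $y = x$) or the composition rule (with the stated $y$, since the final premise $y \xrightarrow{\alpha p} z$ can only arise from the single-step rule and thus records $\delta(y)(\alpha) = (p,z)$). Applying the induction hypothesis to $w'$ yields a unique intermediate state, so the two given derivations of $x \xrightarrow{w} x_w^1$ and $x \xrightarrow{w} x_w^2$ pass through a common $y$. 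Then $\delta(y)(\alpha) = (p, x_w^1)$ and $\delta(y)(\alpha) = (p, x_w^2)$, and since $\delta(y)(\alpha)$ is a single value in $2 + \Sigma \times X$, equating the two gives $x_w^1 = x_w^2$.

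The only genuine subtlety — the place I would take care — is this rule analysis: confirming that a transition labelled by a nonempty word must factor through exactly the ``prefix plus final pair'' split dictated by the composition rule, so that the induction hypothesis applies to precisely the shorter word $w'$. Once this inversion of the inductively defined relation is made explicit, the determinism of $\delta$ closes the argument immediately, with no computation required.
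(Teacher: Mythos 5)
Your proposal is correct and follows essentially the same route as the paper's proof: induction on the length of $w$, with the base case forced by the reflexivity rule, the inductive step decomposing $w = w'\alpha p$ into a prefix transition followed by a single step, and the determinism of $\delta$ closing the argument. The extra care you take with rule inversion is a slightly more explicit rendering of the same factorisation the paper invokes implicitly.
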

\begin{proof}
	We show the statement by induction on the length of $w \in \GSM$:
	\begin{itemize}
		\item The induction base $w = \varepsilon$ follows from the base case of \eqref{transitiondef}: $x \xrightarrow{\varepsilon} x_w^i$ iff $ x_w^i = x$ for $i = 1,2$.
		\item For the induction step let $w = v \alpha p$ for some $v \in \GSM$. By \eqref{transitiondef} there exist $x_v^1, x_v^2 \in X$ such that $x \xrightarrow{v} x_v^1 \xrightarrow{\alpha p} x_w^1$ and $x \xrightarrow{v} x_v^2 \xrightarrow{\alpha p} x_w^2$. From the induction hypothesis it follows $x_v^1 = x_v^2$. Thus, by \eqref{transitiondef}, $(p, x_w^1) =  \delta(x_v^1)(\alpha) = \delta(x_v^2)(\alpha) = (p, x_w^2)$, which yields $x_w^1 = x_w^2$. \qedhere
	\end{itemize}
\end{proof}

Given a  $G$-coalgebra $(X, \delta)$, we call a subset $A \subseteq X$ $\delta$\emph{-invariant}, if $y \in A$ and $\delta(y)(\alpha) = (p, z)$, then $z \in A$. In such a situation, we write $\mathscr{X}^A = (A, \delta^A)$ for the well-defined restriction of $\mathscr{X} = (X, \delta)$ to $A$.

We denote the sub-automaton one obtains by restricting a $G$-automaton $\mathscr{X} = (X, \delta, x)$ to the $\delta$-invariant set of states reachable from the initial state $x \in X$ by $r(\mathscr{X}) := \mathscr{X}^{r(x)}$, and call an automaton \emph{reachable}, if $\mathscr{X} = r(\mathscr{X})$. Following \cite[Def. 15]{van2017calf}, we call a normal, reachable, and observable automaton $\emph{minimal}$.

The set $R(x)$ of words witnessing the reachability of states in $\mathscr{X} = (X, \delta, x)$ can be equipped with a $G$-automaton structure $R(\mathscr{X}) := (R(x), \partial, \varepsilon)$, where $\partial(w)(\alpha) = (p, w\alpha p)$, if $\delta(x_w)(\alpha) = (p, x_{w \alpha p})$ for some $x_{w \alpha p} \in X$, and $\partial(w)(\alpha) = \delta(x_w)(\alpha)$ otherwise. The automaton $r(\mathscr{X})$ can then be recovered as the image of the automata homomorphism $f: R(\mathscr{X}) \rightarrow \mathscr{X}$ defined by $f(w) = x_w$. In other words, there exists an epi-mono factorisation $R(\mathscr{X}) \twoheadrightarrow r(\mathscr{X}) \hookrightarrow \mathscr{X}$.

We conclude with a list of important properties preserved by restricting an automaton to its reachable states. Among those properties are \emph{well-nestedness} \cite{smolka2019guarded} and \emph{satisfing the nesting coequation} \cite{schmid2021guarded}. We refer the reader to \Cref{relatedwork} for a high-level comparison between the two notions. 

The definition of well-nestedness requires the following construction.

\begin{definition}[\cite{smolka2019guarded}]
		Let $\mathscr{X} = (X, \delta)$ be a $G$-coalgebra. The \emph{uniform continuation} of $A \subseteq X$ by $h \in G(X)$  is the $G$-coalgebra $\mathscr{X}\lbrack A, h \rbrack = (X, \delta\lbrack A, h \rbrack)$, where:
		\[
		\delta\lbrack A, h \rbrack(x)(\alpha) = \begin{cases}
			h(\alpha) & \textnormal{if } x \in A, \delta(x)(\alpha) = 1 \\
			\delta(x)(\alpha) & \textnormal{else }
		\end{cases}.
		\]
	\end{definition}
	
	Further let $\mathscr{X} + \mathscr{Y} := (X + Y, \delta^{\mathscr{X}} + \delta^{\mathscr{Y}})$ be the disjoint union of automata, where $\delta^{\mathscr{X}} + \delta^{\mathscr{Y}}(x)(\alpha) = \delta^{\mathscr{X}}(x)(\alpha)$, if $x \in X$, and $\delta^{\mathscr{Y}}(x)(\alpha)$ otherwise.

\begin{definition}[\cite{smolka2019guarded}]
		The class of \emph{well-nested} $G$-coalgebras is defined as follows:
		\begin{itemize}
			\item If $\mathscr{X} = (X, \delta)$ has no transitions, i.e. if $\delta: X \rightarrow 2^{\At}$, then $\mathscr{X}$ is well-nested.
			\item If $\mathscr{X}$ and $\mathscr{Y}$ are well-nested and $h \in G(X + Y)$, then $(\mathscr{X} + \mathscr{Y})\lbrack X, h \rbrack$ is well-nested.
		\end{itemize}
	\end{definition}

While the following results are stated in terms of arbitrary $\delta$-invariant subsets, we are particularly interested in $r(x)$, the subset of states reachable by an initial state $x$.

\begin{lemma}
		\label{invariantsimple}
		The restriction of a well-nested $G$-coalgebra to a $\delta$-invariant subset is well-nested.
			\end{lemma}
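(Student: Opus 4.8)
The plan is to argue by structural induction on the derivation witnessing that $\mathscr{X} = (X, \delta)$ is well-nested, keeping a $\delta$-invariant subset $A \subseteq X$ fixed throughout. Note first that $\delta$-invariance is exactly what makes the restriction $\mathscr{X}^A = (A, \delta^A)$ well-defined, since any transition $\delta(y)(\alpha) = (p,z)$ out of a state $y \in A$ lands in some $z \in A$. In the base case, $\mathscr{X}$ has no transitions, i.e. $\delta : X \to 2^{\At}$; then every subset is $\delta$-invariant and the restriction $\delta^A : A \to 2^{\At}$ again has no transitions, so $\mathscr{X}^A$ is well-nested by the first clause of the definition.

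For the inductive step, suppose $\mathscr{X} = (\mathscr{Y} + \mathscr{Z})[Y, h]$ with $\mathscr{Y} = (Y, \delta^{\mathscr{Y}})$ and $\mathscr{Z} = (Z, \delta^{\mathscr{Z}})$ well-nested, $X = Y + Z$, and $h \in G(Y + Z)$, so that $\delta = (\delta^{\mathscr{Y}} + \delta^{\mathscr{Z}})[Y, h]$. Write $A_Y := A \cap Y$ and $A_Z := A \cap Z$, so $A = A_Y + A_Z$. My first step would be to check that $A_Y$ is $\delta^{\mathscr{Y}}$-invariant and $A_Z$ is $\delta^{\mathscr{Z}}$-invariant. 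This holds because on every state of $Z$, and on every state of $Y$ whose $\delta^{\mathscr{Y}}$-value differs from $1$, the map $\delta$ agrees with the disjoint union $\delta^{\mathscr{Y}} + \delta^{\mathscr{Z}}$; hence a $\delta^{\mathscr{Y}}$- (resp. $\delta^{\mathscr{Z}}$-) transition out of a state of $A$ is also a $\delta$-transition, and so lands in $A$ and in the correct summand. By the induction hypothesis, $\mathscr{Y}^{A_Y}$ and $\mathscr{Z}^{A_Z}$ are therefore well-nested.

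It then remains to exhibit $\mathscr{X}^A$ as a uniform continuation $(\mathscr{Y}^{A_Y} + \mathscr{Z}^{A_Z})[A_Y, h']$ for a suitable $h' \in G(A)$, which concludes the argument via the second clause. I would set $h'(\alpha) := h(\alpha)$ whenever there exists $x \in A_Y$ with $\delta^{\mathscr{Y}}(x)(\alpha) = 1$, and $h'(\alpha) := 0$ otherwise. A case split over the three possibilities, $x \in A_Z$, $x \in A_Y$ with $\delta^{\mathscr{Y}}(x)(\alpha) = 1$, and $x \in A_Y$ with $\delta^{\mathscr{Y}}(x)(\alpha) \neq 1$, then shows that the restricted map $\delta^A$ coincides with $(\delta^{\mathscr{Y}\upharpoonright A_Y} + \delta^{\mathscr{Z}\upharpoonright A_Z})[A_Y, h']$.

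The one genuinely load-bearing point — and the main obstacle beyond bookkeeping — is verifying that $h'$ really lands in $G(A)$, i.e. that the continuation data restricts to the subcoalgebra rather than pointing to states outside $A$. The key observation is that $h'(\alpha)$ equals $h(\alpha)$ only for atoms $\alpha$ activated by some $x \in A_Y$ with $\delta^{\mathscr{Y}}(x)(\alpha) = 1$; for such an $x$ we have $\delta(x)(\alpha) = h(\alpha)$ with $x \in A$, so $\delta$-invariance of $A$ forces $h(\alpha) \in 2 + \Sigma \times A$, which is exactly what is needed. For all remaining atoms the value of $h'$ does not affect $\delta^A$ and may be chosen freely (here the choice $0$). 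I would also briefly record the degenerate cases $A_Y = \emptyset$ or $A_Z = \emptyset$, where the empty restriction is well-nested through the (vacuously satisfied) base case.
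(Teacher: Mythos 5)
Your proof is correct and follows essentially the same route as the paper's: structural induction on the well-nestedness derivation, showing $A \cap Y$ and $A \cap Z$ are invariant for the component coalgebras, applying the induction hypothesis, and exhibiting $\mathscr{X}^A$ as a uniform continuation of the restricted disjoint union. The only (inessential) difference is your choice of restricted continuation $h'$ — you keep $h(\alpha)$ only on atoms activated by some accepting state in $A \cap Y$, whereas the paper keeps $h(\alpha)$ whenever it is $1$ or a pair $(p,x)$ with $x \in A$ — but both agree on all atoms that affect $\delta^A$, and your invariance argument correctly justifies that $h'$ lands in $G(A)$.
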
	
	\begin{proof}
		We show the statement by induction on the well-nested structure of $\mathscr{X}$.
		As before, we write $\mathscr{X}^A = (A, \delta^A)$ for the well-defined restriction of $\mathscr{X} = (X, \delta)$ to a $\delta$-invariant subset $A \subseteq X$.
	\begin{itemize}
		\item For the induction base assume that $\mathscr{X} = (X, \delta)$ satisfies $\delta: X \rightarrow 2^{\At}$, and $A \subseteq X$ is a $\delta$-invariant set. Then clearly the restriction is of type $\delta^A: A \rightarrow 2^{\At}$, i.e. $\mathscr{X}^A = (A, \delta^A)$ is well-nested.
		\item For the induction step let $\mathscr{Y} = (Y, \delta^{\mathscr{Y}})$ and $\mathscr{Z} = (Z, \delta^{\mathscr{Z}})$ be well-nested $G$-coalgebras, $h \in G(Y + Z)$, and $\mathscr{X} = (Y + Z, (\delta^{\mathscr{Y}} + \delta^{\mathscr{Z}})\lbrack Y, h \rbrack)$. Moreover let $A \subseteq Y + Z$ be a 
		$(\delta^{\mathscr{Y}} + \delta^{\mathscr{Z}})\lbrack Y, h \rbrack$-invariant set. We would like to show that $\mathscr{X}^A$ is well-nested.
		The induction hypothesis reads: \begin{itemize}
			\item for all $\delta^{\mathscr{Y}}$-invariant sets $B \subseteq Y$, the subcoalgebra $\mathscr{Y}^B = (B, (\delta^{\mathscr{Y}})^B)$ is well-nested;
			\item for all $\delta^{\mathscr{Z}}$-invariant sets $C \subseteq Z$, the subcoalgebra $\mathscr{Z}^C = (C, (\delta^{\mathscr{Z}})^C)$ is well-nested.
		\end{itemize} 
		We begin by showing that $A \cap Y \subseteq Y$ and $A \cap Z \subseteq Z$ are $\delta^{\mathscr{Y}}$- and $\delta^{\mathscr{Z}}$-invariant sets, respectively. Let $\delta^{\mathscr{Y}}(x)(\alpha) = (p, y)$ for $x \in A \cap Y$ and $y \in Y$. Then by definition \[ (\delta^{\mathscr{Y}} + \delta^{\mathscr{Z}})\lbrack Y, h \rbrack(x)(\alpha) = \delta^{\mathscr{Y}}(x)(\alpha) = (p, y), \] which by $(\delta^{\mathscr{Y}} + \delta^{\mathscr{Z}})\lbrack Y, h \rbrack$-invariance of $A$ implies $y \in A$. It hence follows $y \in A \cap Y$.
			Analogously one dedues that $A \cap Z$ is $\delta^{\mathscr{Z}}$-invariant.
		Thus $\mathscr{Y}^{A \cap Y} = (A \cap Y, (\delta^{\mathscr{Y}})^{A \cap Y})$ and $\mathscr{Z}^{A \cap Z} = (A \cap Z, (\delta^{\mathscr{Z}})^{A \cap Z}) $ are well-defined, and moreover, by the induction hypothesis they are well-nested.
		
		We observe the equality $A = A \cap Y + A \cap Z$, which follows from $A \subseteq Y + Z$, and define $\overline{h} \in G(A \cap Y + A \cap Z) = G(A)$ by:
		\begin{equation}
			\label{newh}
			\overline{h}(\alpha) = \begin{cases}
							1 & \textnormal{if } h(\alpha) = 1 \\
				(p, x) & \textnormal{if } h(\alpha) = (p, x),\ x \in A \\				
				0  & \textnormal{else } 
			\end{cases}
		\end{equation}
		
		It follows 
		$\mathscr{X}^A = (A \cap Y + A \cap Z, ((\delta^{\mathscr{Y}})^{A \cap Y} + (\delta^{\mathscr{Z}})^{A \cap Z})\lbrack A \cap Y, \overline{h} \rbrack)$, since for any $x \in A$ it holds:
		\begin{align*}
			&((\delta^{\mathscr{Y}} + \delta^{\mathscr{Z}})\lbrack Y, h \rbrack)^A(x)(\alpha) \\
			&= \begin{cases}
				\delta^{\mathscr{Y}}(x)(\alpha) & x \in A \cap Y,\ \delta^{\mathscr{Y}}(x)(\alpha) \not = 1 \\
				h(\alpha) & x \in A \cap Y,\ \delta^{\mathscr{Y}}(x)(\alpha) = 1  \\
				\delta^{\mathscr{Z}}(x)(\alpha) & x \in A \cap Z
			\end{cases} \\
			&= \begin{cases}
				\delta^{\mathscr{Y}}(x)(\alpha) & x \in A \cap Y,\ \delta^{\mathscr{Y}}(x)(\alpha) \not = 1 \\
				1 & x \in A \cap Y,\ \delta^{\mathscr{Y}}(x)(\alpha) = 1,\ h(\alpha) = 1  \\
								0 & x \in A \cap Y,\ \delta^{\mathscr{Y}}(x)(\alpha) = 1,\ h(\alpha) = 0  \\
								(p, x') & x \in A \cap Y,\ \delta^{\mathscr{Y}}(x)(\alpha) = 1,\ h(\alpha) = (p, x')  \\					
				\delta^{\mathscr{Z}}(x)(\alpha) & x \in A \cap Z
			\end{cases} \\
			& \overset{(\star)}{=} \begin{cases}
				\delta^{\mathscr{Y}}(x)(\alpha) & x \in A \cap Y,\ \delta^{\mathscr{Y}}(x)(\alpha) \not = 1 \\
				1 & x \in A \cap Y,\ \delta^{\mathscr{Y}}(x)(\alpha) = 1, \overline{h}(\alpha) = 1  \\
								0 & x \in A \cap Y,\  \delta^{\mathscr{Y}}(x)(\alpha) = 1, \overline{h}(\alpha) = 0  \\
								(p, x') & x \in A \cap Y,\ \delta^{\mathscr{Y}}(x)(\alpha) = 1, \overline{h}(\alpha) = (p, x')  \\					
				\delta^{\mathscr{Z}}(x)(\alpha) & x \in A \cap Z
			\end{cases} \\
			& = \begin{cases}
				(\delta^{\mathscr{Y}})^{A \cap Y}(x)(\alpha) & x \in A \cap Y,\ (\delta^{\mathscr{Y}})^{A \cap Y}(x)(\alpha) \not = 1 \\
				\overline{h}(\alpha) & x \in A \cap Y,\ (\delta^{\mathscr{Y}})^{A \cap Y}(x)(\alpha) = 1  \\
				(\delta^{\mathscr{Z}})^{A \cap Z}(x)(\alpha) & x \in A \cap Z
			\end{cases} \\
			&= ((\delta^{\mathscr{Y}})^{A \cap Y} + (\delta^{\mathscr{Z}})^{A \cap Z})\lbrack A \cap Y, \overline{h} \rbrack(x)(\alpha),
		\end{align*} 
		where we use for $(\star)$ that $((\delta^{\mathscr{Y}} + \delta^{\mathscr{Z}})\lbrack Y, h \rbrack)^A(x)(\alpha) = (p, x')$ for $x \in A$, implies $x' \in A$, as $A$ is $(\delta^{\mathscr{Y}} + \delta^{\mathscr{Z}})\lbrack Y, h \rbrack$-invariant. \qedhere
		\end{itemize}
		\end{proof}
	
Restricting to a $\delta$-invariant subset not only preserves well-nestedness, but also language semantics.	
	
\begin{lemma}
	\label{invariantlanguage}
Let $\mathscr{X}^A = (A, \delta^A)$ be the restriction of a $G$-coalgebra $\mathscr{X} = (X, \delta)$ to a $\delta$-invariant subset $A \subseteq X$. Then 
$\llbracket a \rrbracket_{\mathscr{X}} = \llbracket a \rrbracket_{\mathscr{X}^A}$ for all $a \in A$.
	\end{lemma}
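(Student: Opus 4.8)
The plan is to prove the set equality $\llbracket a \rrbracket_{\mathscr{X}} = \llbracket a \rrbracket_{\mathscr{X}^A}$ by establishing the biconditional $w \in \llbracket a \rrbracket_{\mathscr{X}} \Leftrightarrow w \in \llbracket a \rrbracket_{\mathscr{X}^A}$ for every guarded string $w \in \GS$ and every $a \in A$, proceeding by induction on the length of $w$, in the same style as the proof of \Cref{simimpliesinclusion}. The single fact that drives the entire argument is that, by $\delta$-invariance of $A$, the restriction satisfies $\delta^A(a)(\alpha) = \delta(a)(\alpha)$ for all $a \in A$ and $\alpha \in \At$: whenever $\delta(a)(\alpha) = (p, y)$, the target state $y$ again lies in $A$, so this value is a genuine element of $2 + \Sigma \times A$ and restricting the codomain loses no information. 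This is precisely the well-definedness condition already used to introduce $\mathscr{X}^A$.

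For the base case $w = \alpha$, the semantics in \eqref{gkatautomatasemantics} reduces to the local condition $\delta(a)(\alpha) = 1$, which coincides with $\delta^A(a)(\alpha) = 1$ by the identity above, giving $\alpha \in \llbracket a \rrbracket_{\mathscr{X}} \Leftrightarrow \alpha \in \llbracket a \rrbracket_{\mathscr{X}^A}$. For the inductive step $w = \alpha p v$, I would unfold \eqref{gkatautomatasemantics}: membership of $\alpha p v$ in $\llbracket a \rrbracket_{\mathscr{X}}$ is equivalent to the existence of some $y$ with $\delta(a)(\alpha) = (p, y)$ and $v \in \llbracket y \rrbracket_{\mathscr{X}}$. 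Here $\delta$-invariance is essential: it forces the witness $y$ to belong to $A$, so that the same transition $\delta^A(a)(\alpha) = (p, y)$ is present in $\mathscr{X}^A$, and so that the induction hypothesis is applicable to $y \in A$ together with the strictly shorter string $v$, yielding $v \in \llbracket y \rrbracket_{\mathscr{X}} \Leftrightarrow v \in \llbracket y \rrbracket_{\mathscr{X}^A}$. Chaining these equivalences closes the induction.

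The only genuine obstacle is ensuring that the witness state stays inside $A$, which is exactly what $\delta$-invariance provides and what makes the induction hypothesis applicable at the target; everything else is a routine unfolding of the acceptance condition. One might hope to shortcut the argument by noting that the inclusion $A \hookrightarrow X$ is a $G$-coalgebra homomorphism $\mathscr{X}^A \to \mathscr{X}$ and invoking preservation of behaviour. I would avoid this route, however, since $\llbracket - \rrbracket$ is only characterised as the (unique) coalgebra homomorphism into $(\mathscr{L}, \delta^{\mathscr{L}})$ for \emph{normal} coalgebras, and no normality is assumed here; the direct inductive argument is therefore both cleaner and of the correct generality.
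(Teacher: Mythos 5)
Your proposal is correct and follows essentially the same route as the paper's own proof: an induction on the length of $w \in \GS$, where the base case unfolds \eqref{gkatautomatasemantics} to the condition $\delta(a)(\alpha) = 1 \Leftrightarrow \delta^A(a)(\alpha) = 1$, and the inductive step uses $\delta$-invariance of $A$ to keep the witness state inside $A$ so that the induction hypothesis applies to the shorter string. Your closing remark about avoiding the homomorphism-into-$(\mathscr{L}, \delta^{\mathscr{L}})$ shortcut is also well taken, since that characterisation is only available for normal coalgebras, which the lemma does not assume.
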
	
	\begin{proof}
	We show $w \in \llbracket a \rrbracket_{\mathscr{X}}$ iff $w \in \llbracket a \rrbracket_{\mathscr{X}^A}$ for all $a \in A$ and $w \in \GS$ by induction on the length of $w$. 
	\begin{itemize}
		\item For the induction base assume $w = \alpha$, then we deduce:
		\begin{align*}
			\alpha \in \llbracket a \rrbracket_{\mathscr{X}} & \Leftrightarrow \delta(a)(\alpha) = 1 && \textnormal{(Definition of }\llbracket - \rrbracket)  \\
			& \Leftrightarrow \delta^A(a)(\alpha) = 1 && (a \in A) \\
			&\Leftrightarrow  \alpha \in  \llbracket a \rrbracket_{\mathscr{X}^A} && \textnormal{(Definition of }\llbracket - \rrbracket).
		\end{align*}
		\item For the induction step let $w = \alpha p v$, then we find: \begin{align*}
			\alpha p v \in  \llbracket a \rrbracket_{\mathscr{X}} \Leftrightarrow\ & \exists x \in X: \delta(a)(\alpha) = (p, x),\ v \in \llbracket x \rrbracket_{\mathscr{X}} && \textnormal{(Definition of }\llbracket - \rrbracket) \\
			 \Leftrightarrow\ &
			 \exists b \in A: \delta(a)(\alpha) = (p, b),\ v \in \llbracket b \rrbracket_{\mathscr{X}} && (a \in A,\ \delta \textnormal{-inv}) \\
			\Leftrightarrow\ & \exists b \in A: \delta^A(a)(\alpha) = (p, b),\ v \in \llbracket b \rrbracket_{\mathscr{X}^A} && (a, b \in A, \textnormal{ IH)}  \\
			\Leftrightarrow\ & \alpha p v \in \llbracket a \rrbracket_{\mathscr{X}^A} && \textnormal{(Definition of }\llbracket - \rrbracket).
		\end{align*}	
	\end{itemize}
	\end{proof}

In consequence, we immediately obtain that restricting to a $\delta$-invariant subset preserves normality.

\begin{lemma}
	\label{invariantnormal}
		The restriction of a normal $G$-coalgebra to a $\delta$-invariant subset is normal.
	\end{lemma}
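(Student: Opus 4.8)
The plan is to read off normality of the restriction directly from the preservation of language semantics already established in \Cref{invariantlanguage}; indeed, this is why the statement can be obtained ``immediately''. Concretely, suppose $\mathscr{X} = (X, \delta)$ is normal and $A \subseteq X$ is $\delta$-invariant, with restriction $\mathscr{X}^A = (A, \delta^A)$. To show that $\mathscr{X}^A$ is normal, I would take an arbitrary transition $\delta^A(a)(\alpha) = (p, b)$ with $a \in A$ and $\alpha \in \At$, and verify that the target state $b$ is live in $\mathscr{X}^A$, that is, $\llbracket b \rrbracket_{\mathscr{X}^A} \neq \emptyset$.

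First I would observe that, by definition of the restriction to a $\delta$-invariant subset, $\delta^A(a)(\alpha) = (p, b)$ forces $\delta(a)(\alpha) = (p, b)$ in $\mathscr{X}$, and that $\delta$-invariance of $A$ guarantees $b \in A$, so that both expressions $\llbracket b \rrbracket_{\mathscr{X}}$ and $\llbracket b \rrbracket_{\mathscr{X}^A}$ are well-defined. Normality of $\mathscr{X}$ applied to the transition $\delta(a)(\alpha) = (p, b)$ then yields $\llbracket b \rrbracket_{\mathscr{X}} \neq \emptyset$. Finally, I would invoke \Cref{invariantlanguage}, which gives $\llbracket b \rrbracket_{\mathscr{X}^A} = \llbracket b \rrbracket_{\mathscr{X}}$ (as $b \in A$), to conclude $\llbracket b \rrbracket_{\mathscr{X}^A} \neq \emptyset$, as required.

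There is no genuine obstacle here: the entire content of the statement is carried by \Cref{invariantlanguage}, and once language semantics are known to agree on $A$ between $\mathscr{X}$ and $\mathscr{X}^A$, non-emptiness of the accepted language of any target state transfers verbatim. The only point requiring care is the bookkeeping, namely ensuring that the target of each transition in $\mathscr{X}^A$ genuinely lands back inside $A$ so that the two language functions are comparable on it, which is exactly what $\delta$-invariance supplies.
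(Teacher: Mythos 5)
Your proof is correct and follows exactly the same route as the paper's: take a transition $\delta^A(a)(\alpha) = (p,b)$ in the restriction, observe it comes from a transition in $\mathscr{X}$ with $b \in A$, apply normality of $\mathscr{X}$ to get $\llbracket b \rrbracket_{\mathscr{X}} \neq \emptyset$, and transfer this to $\mathscr{X}^A$ via \Cref{invariantlanguage}. Nothing is missing.
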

		\begin{proof}
		Let $\mathscr{X} = (X, \delta)$ be a normal $G$-coalgebra and $A \subseteq X$ a $\delta$-invariant subset. We write $\mathscr{X}^A = (A, \delta^A)$ for the restriction of $\mathscr{X}$ to $A$. 
			Assume for $a, b \in A$ we have $\delta^A(a)(\alpha) = (p, b)$. Since $a \in A$, we have $\delta(a)(\alpha) = (p,b)$, which by normality of $\mathscr{X}$ implies $\emptyset \not = \llbracket b \rrbracket_{\mathscr{X}}$. From $b \in A$ and \Cref{invariantlanguage} we thus can deduce  $\emptyset \not = \llbracket b \rrbracket_{\mathscr{X}^A}$.
\end{proof}

We will conclude this section with a summarising result. We say that a $G$-automaton $\mathscr{X}$ \emph{satisfies the nesting coequation}, if the final coalgebra homomorphism $\obs_{\mathscr{X}} = \llbracket \cdot \rrbracket: \mathscr{X} \rightarrow \mathcal{L}$ factors through the \emph{coequation} $\lbrace \llbracket e \rrbracket \mid e \in \Expr \rbrace$ -- that is, for any $x \in X$ there exists an expression $e_x \in \Expr$ such that $\llbracket e_x \rrbracket = \llbracket x \rrbracket$. The interested reader will find more details in \cite{dahlqvist2021write, schmid2021guarded}. For our purposes it is sufficient to know that the class of all $G$-automata satisfying the nesting coequation forms a \emph{covariety} \cite{schmid2021guarded}. Covarieties are a categorical dualization of \emph{varieties}, which are well-known from universal algebra (cf. \Cref{varietiessec}). Birkhoff's famous HSP theorem states that varieties are closed under homomorphic images (H), subalgebras (S), and products (P) \cite{birkhoff1935structure}.
 Covarieties enjoy similarly desirable properties: they are closed under homomorphic images, subcoalgebras, and coproducts \cite{dahlqvist2021write}.

\begin{proposition}
\label{reachablesubcoalgebra}
Let $\mathscr{X}$ be a $G$-automaton, then $r(\mathscr{X})$ is well-nested, normal, or satisfies the nesting coequation, whenever $\mathscr{X}$ does. Moreover, $r(\mathscr{X})$ accepts the same language as $\mathscr{X}$.
\end{proposition}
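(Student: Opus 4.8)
The plan is to reduce each of the four claims to a result established earlier in this section, using the single observation that the reachable set $r(x)$ is $\delta$-invariant and contains the initial state. The latter holds because $x \xrightarrow{\varepsilon} x$ by the base rule of \eqref{transitiondef}, so $x \in r(x)$; the former is recorded in the definition of $r(\mathscr{X}) = \mathscr{X}^{r(x)}$. Thus $r(\mathscr{X})$ is precisely the restriction of $\mathscr{X}$ to a $\delta$-invariant subset, and each preservation statement becomes an instance of a corresponding invariance lemma.

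First, for well-nestedness I would invoke \Cref{invariantsimple} directly with $A = r(x)$: the restriction of a well-nested coalgebra to a $\delta$-invariant subset is well-nested. Normality follows identically from \Cref{invariantnormal}. For the language claim, \Cref{invariantlanguage} gives $\llbracket a \rrbracket_{\mathscr{X}} = \llbracket a \rrbracket_{r(\mathscr{X})}$ for every $a \in r(x)$; applying this to the initial state $x \in r(x)$ and recalling that the language of an automaton is the language accepted by its initial state yields $\llbracket \mathscr{X} \rrbracket = \llbracket r(\mathscr{X}) \rrbracket$.

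The nesting coequation is the one case not covered by an explicit invariance lemma, and here I would instead appeal to the categorical structure. By the epi-mono factorisation $R(\mathscr{X}) \twoheadrightarrow r(\mathscr{X}) \hookrightarrow \mathscr{X}$, the automaton $r(\mathscr{X})$ is a subcoalgebra of $\mathscr{X}$. Since the class of $G$-automata satisfying the nesting coequation forms a covariety, and covarieties are closed under subcoalgebras, $r(\mathscr{X})$ satisfies the nesting coequation whenever $\mathscr{X}$ does.

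There is no real obstacle: the genuine work has already been carried out in \Cref{invariantsimple}, \Cref{invariantlanguage}, and \Cref{invariantnormal}, together with the covariety closure property, so the proof is purely a matter of verifying the hypotheses ($\delta$-invariance of $r(x)$ and $x \in r(x)$) and assembling the pieces. The only point deserving a moment's care is to confirm that the abstract notion of subcoalgebra used in the covariety argument coincides with the concrete restriction $\mathscr{X}^{r(x)}$, which is exactly what the mono $r(\mathscr{X}) \hookrightarrow \mathscr{X}$ from the factorisation supplies.
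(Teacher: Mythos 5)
Your proposal is correct and follows essentially the same route as the paper's own proof: $\delta$-invariance of $r(x)$ feeds into \Cref{invariantsimple}, \Cref{invariantnormal}, and \Cref{invariantlanguage}, and the nesting coequation is handled via the epi-mono factorisation $R(\mathscr{X}) \twoheadrightarrow r(\mathscr{X}) \hookrightarrow \mathscr{X}$ together with closure of covarieties under subcoalgebras. Nothing is missing.
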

\begin{proof}
	Let us write $\mathscr{X} = (X, \delta, x)$. From \eqref{transitiondef} it is immediate that $r(x) \subseteq X$ is $\delta$-invariant. Thus, \Cref{invariantsimple} and \Cref{invariantnormal}, respectively, imply that $r(\mathscr{X}) = (r(x), \delta^{r(x)}, x)$ is well-nested, or normal, whenever $\mathscr{X}$ is. From \Cref{invariantlanguage} it further follows \[ \llbracket r(\mathscr{X}) \rrbracket = \llbracket x \rrbracket_{r(\mathscr{X})} = \llbracket x \rrbracket_{\mathscr{X}} = \llbracket \mathscr{X} \rrbracket. \]
Since the class of all $G$-automata satisfying the nesting coequation forms a covariety, it is closed under subautomata. As there exists an epi-mono factorisation
 \[ R(\mathscr{X}) \twoheadrightarrow r(\mathscr{X}) \hookrightarrow \mathscr{X} \]
 the automaton $r(\mathscr{X})$ thus satisfies the nesting coequation, whenever $\mathscr{X}$ does. 	
\end{proof}

\subsection{Minimality}

Recall that the state-space of the minimal DFA for a regular language consists of the equivalence classes of the Myhill-Nerode equivalence relation \cite{nerode1958linear}. 

Similarly, we define the state-space of the minimisation of a GKAT automaton $\mathscr{X}$ as the equivalence classes of the equivalence relation $\equiv_{\llbracket \mathscr{X} \rrbracket}$ on $\GSM$, defined for any guarded string language $L \subseteq \GS$ by: 
	\begin{equation}
	\label{equivdef}
		v \equiv_L w :\Leftrightarrow \forall u \in \GS: vu \in L \textnormal{ iff } wu \in L.
	\end{equation}
	Let $v^{-1}L = \lbrace u \in \GS \mid v u \in L \rbrace$ be the derivative of $L$ with respect to $v$. Then two words $v, w$ are equivalent with respect to $\equiv_L$ iff the derivatives $v^{-1}L$ and $w^{-1}L$ coincide. 	

\begin{definition}
\label{minimdef}
	The \emph{minimisation} of a $G$-automaton $\mathscr{X} = (X, \delta, x)$ is $m(\mathscr{X}) := (\lbrace w^{-1}\llbracket \mathscr{X} \rrbracket \mid w \in R(x) \rbrace, \partial, \llbracket \mathscr{X} \rrbracket)$ with:
\begin{align}
\label{minimaltransition}
	\partial(L)(\alpha) := \begin{cases}
		(p, (\alpha p)^{-1}L) & \textnormal{if } (\alpha p)^{-1}L \not= \emptyset \\
		1 & \textnormal{if } \alpha \in L \\
		0 & \textnormal{otherwise}
	\end{cases},
\end{align}
for $L \in \lbrace w^{-1}\llbracket \mathscr{X} \rrbracket \mid w \in R(x) \rbrace$.
\end{definition}

A few remarks on the well-definedness of above definition are in order. The language accepted by a $G$-automaton is deterministic, and taking the derivative of a language preserves its deterministic nature. Thus only one of the three cases in \eqref{minimaltransition} occurs. Since $\varepsilon \in R(x)$ and $\varepsilon^{-1}L = L$, the initial state of the minimisation is well-defined. Transitioning to a new state is well-defined since $v^{-1}(w^{-1}L) = (wv)^{-1}L$.

	It is not hard to see that on a high-level the minimisation can be recovered as the image of the final automata homomorphism $\llbracket - \rrbracket: R(\mathscr{X}) \rightarrow \mathcal{L}$, which, as the result below shows, satisfies $\llbracket w \rrbracket_{R(\mathscr{X})} = w^{-1}\llbracket \mathscr{X} \rrbracket$. 
	
	\begin{lemma}
	Let $\mathscr{X}$ be a $G$-automaton with initial state $x \in X$. Then $\llbracket w \rrbracket_{R(\mathscr{X})} = w^{-1}\llbracket \mathscr{X} \rrbracket$ for all $w \in R(x)$.
\end{lemma}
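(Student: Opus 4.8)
The plan is to factor the claim through the reachability homomorphism and then reduce it to a derivative computation inside $\mathscr{X}$ itself. Recall from the construction of $R(\mathscr{X})$ that the map $f \colon R(\mathscr{X}) \to \mathscr{X}$, $f(w) = x_w$, is a $G$-automaton homomorphism, where $x_w$ is the unique state with $x \xrightarrow{w} x_w$ (uniqueness being exactly \Cref{uniquenessstates}, which is what makes the notation $x_w$, and hence $\partial$, well-defined). The first step I would take is to record that $G$-coalgebra homomorphisms preserve the semantics $\llbracket - \rrbracket$: this is a routine induction on the length of a guarded string $u$, where the induction base uses the clause ``$\delta^X(s)(\alpha)\in 2 \Rightarrow \delta^Y(g(s))(\alpha)=\delta^X(s)(\alpha)$'' and the step uses ``$\delta^X(s)(\alpha)=(p,s') \Rightarrow \delta^Y(g(s))(\alpha)=(p,g(s'))$'' together with the two defining clauses of \eqref{gkatautomatasemantics}; determinism of the transition value makes both directions immediate. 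Applying this to $f$ yields $\llbracket w \rrbracket_{R(\mathscr{X})} = \llbracket x_w \rrbracket_{\mathscr{X}}$, so it remains to prove $\llbracket x_w \rrbracket_{\mathscr{X}} = w^{-1}\llbracket \mathscr{X} \rrbracket$ for all $w \in R(x)$.

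The second step is to prove this remaining identity by induction on the length of $w \in \GSM$, writing $L := \llbracket \mathscr{X} \rrbracket = \llbracket x \rrbracket_{\mathscr{X}}$. For the base case $w = \varepsilon$, the base rule of \eqref{transitiondef} gives $x_\varepsilon = x$, while $\varepsilon^{-1}L = L$, so both sides equal $L$. For the step, any nonempty $w$ factors as $w = v\alpha p$ with $v \in R(x)$, $\alpha \in \At$, $p \in \Sigma$, and $\delta(x_v)(\alpha) = (p, x_w)$. I would combine two ingredients: first, the derivative decomposition $(v\alpha p)^{-1}L = (\alpha p)^{-1}(v^{-1}L)$, which is immediate from associativity of guarded-string concatenation (for $u \in \GS$ one has $(v\alpha p)u = v(\alpha p u)$ with $\alpha p u \in \GS$); second, the single-step identity $(\alpha p)^{-1}\llbracket x_v \rrbracket = \llbracket x_w \rrbracket$, obtained from the second clause of \eqref{gkatautomatasemantics} since $\alpha p u \in \llbracket x_v \rrbracket \Leftrightarrow \delta(x_v)(\alpha) = (p, x_w)\ \text{and}\ u \in \llbracket x_w \rrbracket$. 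Feeding in the induction hypothesis $v^{-1}L = \llbracket x_v \rrbracket$ then gives $(v\alpha p)^{-1}L = (\alpha p)^{-1}\llbracket x_v \rrbracket = \llbracket x_w \rrbracket$, closing the induction.

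The only real subtlety, and where I would be most careful, is the bookkeeping at the interface between the derivative operator, which strips a prefix $\alpha p \in \At \cdot \Sigma$, and the transition clause of the semantics, which is phrased for guarded strings of shape $\alpha p u$. This is precisely where determinism does the work: because $\delta(x_v)(\alpha)$ is a single value and reached states are unique by \Cref{uniquenessstates}, the existential quantifier in the second clause of \eqref{gkatautomatasemantics} collapses onto the unique successor $x_w$, so no case analysis or choice of witness is needed. I would also note that the three-way definition of $\partial$ never introduces spurious cases for this argument, since the identity is only ever invoked along genuinely realized transitions $\delta(x_v)(\alpha) = (p, x_w)$; the halting cases of $\partial$ are handled uniformly by the homomorphism step and play no role in the derivative recursion.
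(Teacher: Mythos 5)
Your proof is correct, but it takes a genuinely different route from the paper's. The paper proves $u \in \llbracket w \rrbracket_{R(\mathscr{X})} \Leftrightarrow u \in w^{-1}\llbracket \mathscr{X} \rrbracket$ directly by a single induction on the length of the suffix $u$ (with $w$ quantified inside the induction hypothesis, so that the hypothesis is applied at the longer word $w\alpha p$ and the shorter suffix $v$), unfolding the definition of $\partial$ at each step. You instead factor the statement into two reusable pieces: (i) the map $w \mapsto x_w$ is a $G$-automaton homomorphism $R(\mathscr{X}) \to \mathscr{X}$ (which the paper asserts when introducing $R(\mathscr{X})$) and homomorphisms preserve $\llbracket - \rrbracket$ in both directions, giving $\llbracket w \rrbracket_{R(\mathscr{X})} = \llbracket x_w \rrbracket_{\mathscr{X}}$; and (ii) $\llbracket x_w \rrbracket_{\mathscr{X}} = w^{-1}\llbracket \mathscr{X} \rrbracket$ by induction on the \emph{prefix} $w$, using $(v\alpha p)^{-1}L = (\alpha p)^{-1}(v^{-1}L)$ and the single-step identity $(\alpha p)^{-1}\llbracket x_v \rrbracket = \llbracket x_w \rrbracket$. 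Both inductions go through; your observation that the exhaustive, mutually exclusive case split on $\delta(s)(\alpha)$ upgrades the one-directional homomorphism clauses to a biconditional is exactly the point that makes step (i) work without a normality assumption. What your decomposition buys is modularity: the semantics-preservation fact and the identity $\llbracket x_w \rrbracket = w^{-1}L$ are each independently useful (the paper essentially re-derives the latter inside the base case of its own induction, where the step ``$\alpha \in \llbracket x_w \rrbracket_{\mathscr{X}} \Leftrightarrow w\alpha \in \llbracket \mathscr{X} \rrbracket$'' silently hides an induction on $w$). The paper's version is more self-contained and avoids invoking the homomorphism property of $f$. Note also that your step (ii) implicitly uses prefix-closedness of $R(x)$ and the collapse of the existential in \eqref{gkatautomatasemantics} onto the unique successor via \Cref{uniquenessstates}; you flag both, so there is no gap.
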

\begin{proof}
	We prove $u \in 	\llbracket w \rrbracket_{R(\mathscr{X})}$ iff $u \in w^{-1}\llbracket \mathscr{X} \rrbracket$ for all $u \in \GS$ and $w \in R(x)$ by induction on the length of $u$.
	\begin{itemize}
		\item For the induction base assume $u = \alpha$, then  we find:
		\begin{align*}
			\alpha \in 	\llbracket w \rrbracket_{R(\mathscr{X})} &\Leftrightarrow \partial(w)(\alpha) = 1 && \textnormal{(Definition of } \llbracket - \rrbracket)\\
			&\Leftrightarrow \delta(x_w)(\alpha) = 1 &&  \textnormal{(Definition of } \partial) \\
			&\Leftrightarrow \alpha \in \llbracket x_w \rrbracket_{\mathscr{X}} && \textnormal{(Definition of } \llbracket - \rrbracket) \\
			&\Leftrightarrow w\alpha \in \llbracket \mathscr{X} \rrbracket && \textnormal{(Definition of } \llbracket - \rrbracket) \\
			&\Leftrightarrow \alpha \in w^{-1}\llbracket \mathscr{X} \rrbracket && \textnormal{(Definition of } w^{-1}\llbracket \mathscr{X} \rrbracket)
		\end{align*}
	\item For the induction step let $u = \alpha p v$, then it follows:
	\begin{align*}
		&\alpha p v \in \llbracket w \rrbracket_{R(\mathscr{X})} \\
		\Leftrightarrow\ & \partial(w)(\alpha) = (p, w \alpha p ),\ v \in \llbracket w \alpha p \rrbracket_{R(\mathscr{X})} && \textnormal{(Definition of } \llbracket - \rrbracket) \\
		\Leftrightarrow\ & \delta(x_w)(\alpha) = (p, x_{w \alpha p}),\ v \in (w \alpha p)^{-1}\llbracket \mathscr{X} \rrbracket && \textnormal{(Definition of } \partial, \textnormal{ IH}) \\
		\Leftrightarrow\ & \delta(x_w)(\alpha) = (p, x_{w \alpha p}),\ v \in \llbracket x_{w \alpha p}  \rrbracket_{\mathscr{X}} && \textnormal{(Definition of } (w \alpha p)^{-1}\llbracket \mathscr{X} \rrbracket) \\
		\Leftrightarrow\ & \alpha p v \in \llbracket x_w \rrbracket_{{\mathscr{X}}} && \textnormal{(Definition of } \llbracket - \rrbracket) \\
		\Leftrightarrow\ & \alpha p v \in w^{-1}\llbracket \mathscr{X} \rrbracket && \textnormal{(Definition of } w^{-1}\llbracket \mathscr{X} \rrbracket) 	\end{align*}\qedhere
	\end{itemize}
\end{proof}
	
	In other words, there exists an epi-mono factorisation $R(\mathscr{X}) \twoheadrightarrow m(\mathscr{X}) \hookrightarrow \mathcal{L}$.

\subsubsection{Properties of $m(\mathscr{X})$}

\label{properitesofminimal}

In this section we prove properties of $m(\mathscr{X})$, which one would expect to hold by a minimisation construction. We begin by showing that minimising a normal automaton results in a reachable acceptor.

\begin{lemma}
\label{reachableminimal}
Let $\mathscr{X}$ be a normal $G$-automaton with initial state $x \in X$. Then $\llbracket \mathscr{X} \rrbracket \xrightarrow{w} w^{-1}\llbracket \mathscr{X} \rrbracket$ in $m(\mathscr{X})$ for all $w \in R(x)$. In particular, $m(\mathscr{X})$ is reachable. 
\end{lemma}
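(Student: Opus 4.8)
The plan is to prove the transition claim $\llbracket \mathscr{X} \rrbracket \xrightarrow{w} w^{-1}\llbracket \mathscr{X} \rrbracket$ by induction on the length of the witness $w \in R(x)$, mirroring the inductive structure of the transition relation \eqref{transitiondef}. For the base case $w = \varepsilon$, the initial state of $m(\mathscr{X})$ is $\llbracket \mathscr{X} \rrbracket = \varepsilon^{-1}\llbracket \mathscr{X} \rrbracket$, and the reflexivity rule of \eqref{transitiondef} immediately gives $\llbracket \mathscr{X} \rrbracket \xrightarrow{\varepsilon} \llbracket \mathscr{X} \rrbracket$.

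For the inductive step I would write a nonempty witness as $w = v \alpha p$ with $v \in \GSM$, $\alpha \in \At$, $p \in \Sigma$. Since $x \xrightarrow{w} x_w$, the derivation must factor as $x \xrightarrow{v} x_v \xrightarrow{\alpha p} x_w$, so that $v \in R(x)$ and $\delta(x_v)(\alpha) = (p, x_{v\alpha p})$. The induction hypothesis then supplies $\llbracket \mathscr{X} \rrbracket \xrightarrow{v} v^{-1}\llbracket \mathscr{X} \rrbracket$ in $m(\mathscr{X})$, and it remains to glue on a single step $v^{-1}\llbracket \mathscr{X} \rrbracket \xrightarrow{\alpha p} (v\alpha p)^{-1}\llbracket \mathscr{X} \rrbracket$ via the composition rule, using $(\alpha p)^{-1}(v^{-1}\llbracket \mathscr{X} \rrbracket) = (v\alpha p)^{-1}\llbracket \mathscr{X} \rrbracket$. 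By the definition of $\partial$ in \eqref{minimaltransition}, establishing this single step amounts to verifying that $(v\alpha p)^{-1}\llbracket \mathscr{X} \rrbracket \neq \emptyset$; determinism of the derivative $v^{-1}\llbracket \mathscr{X} \rrbracket$ then forces the first clause of \eqref{minimaltransition} to fire with exactly this $p$, yielding $\partial(v^{-1}\llbracket \mathscr{X} \rrbracket)(\alpha) = (p, (v\alpha p)^{-1}\llbracket \mathscr{X} \rrbracket)$.

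The crux — and the only place normality enters — is the nonemptiness $(v\alpha p)^{-1}\llbracket \mathscr{X} \rrbracket \neq \emptyset$. Here I would combine the preceding lemma, which gives $w^{-1}\llbracket \mathscr{X} \rrbracket = \llbracket w \rrbracket_{R(\mathscr{X})}$, with the fact that $G$-coalgebra homomorphisms preserve the language semantics, applied to the homomorphism $f \colon R(\mathscr{X}) \rightarrow \mathscr{X}$, $f(w) = x_w$, from the reachability discussion. This identifies $(v\alpha p)^{-1}\llbracket \mathscr{X} \rrbracket = \llbracket x_{v\alpha p} \rrbracket_{\mathscr{X}}$. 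Since $\delta(x_v)(\alpha) = (p, x_{v\alpha p})$ and $\mathscr{X}$ is normal, the target state $x_{v\alpha p}$ is live, i.e.\ $\llbracket x_{v\alpha p} \rrbracket_{\mathscr{X}} \neq \emptyset$, which is exactly what we need. (Preservation of the semantics under homomorphisms is a routine induction on word length, which I would either cite or dispatch inline.)

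Finally, the ``in particular'' clause is immediate: by definition the state space of $m(\mathscr{X})$ is $\lbrace w^{-1}\llbracket \mathscr{X} \rrbracket \mid w \in R(x) \rbrace$, and the statement just proved exhibits every such state as reachable from the initial state $\llbracket \mathscr{X} \rrbracket$, whence $m(\mathscr{X}) = r(m(\mathscr{X}))$. I expect the main obstacle to be conceptual rather than computational: recognising that normality is precisely what prevents the first clause of $\partial$ from being bypassed by the accept/reject clauses at a would-be dead state, which is what would otherwise break reachability.
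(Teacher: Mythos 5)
Your proposal is correct and takes essentially the same route as the paper's proof: induction on the length of $w$, with the base case from the reflexivity rule of \eqref{transitiondef} and the inductive step combining the induction hypothesis, the normality-derived nonemptiness $(\alpha p)^{-1}(v^{-1}\llbracket \mathscr{X} \rrbracket) \neq \emptyset$, and the composition rule. The only difference is one of detail: where the paper simply asserts that normality implies this nonemptiness, you justify it explicitly via the identification $(v\alpha p)^{-1}\llbracket \mathscr{X} \rrbracket = \llbracket x_{v\alpha p} \rrbracket_{\mathscr{X}}$ and liveness of the target state, which is a faithful unpacking of the same argument.
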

\begin{proof}
We prove the statement by induction on the length of $w \in R(x)$:
\begin{itemize}
	\item For the induction base, let $w = \varepsilon$, then $\llbracket \mathscr{X} \rrbracket \xrightarrow{\varepsilon} \llbracket \mathscr{X} \rrbracket = \varepsilon^{-1}\llbracket \mathscr{X} \rrbracket$ by the base case of \eqref{transitiondef}.
	\item In the induction step, let $w = v \alpha p$ with $v \in \GSM$. By the definition of reachability, $v \in R(x)$. From the induction hypothesis we deduce $\llbracket \mathscr{X} \rrbracket \xrightarrow{v} v^{-1}\llbracket \mathscr{X} \rrbracket$ in $m(\mathscr{X})$. The normality of $\mathscr{X}$ implies the inequality $(\alpha p)^{-1}(v^{-1}\llbracket \mathscr{X} \rrbracket) \not = \emptyset$. From \eqref{minimaltransition} it thus follows $v^{-1}\llbracket \mathscr{X} \rrbracket \xrightarrow{\alpha p} (\alpha p)^{-1}(v^{-1}\llbracket \mathscr{X} \rrbracket) = w^{-1} \llbracket \mathscr{X} \rrbracket$. We conclude  $\llbracket \mathscr{X} \rrbracket \xrightarrow{w = v \alpha p} w^{-1} \llbracket \mathscr{X} \rrbracket$ by \eqref{transitiondef}. \qedhere
\end{itemize}
\end{proof}

The next result proves that minimisation preserves language semantics.

\begin{lemma}
\label{minimallanguage}
	Let $\mathscr{X}$ be a $G$-automaton, then $\llbracket L \rrbracket = L$ for all $L$ in $m(\mathscr{X})$. In particular, $\llbracket m(\mathscr{X}) \rrbracket = \llbracket \mathscr{X} \rrbracket$.
\end{lemma}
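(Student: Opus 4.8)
The plan is to read off $m(\mathscr{X})$ as a subcoalgebra of the coalgebra $\mathcal{L} = (\mathscr{L}, \delta^{\mathscr{L}})$ of deterministic guarded string languages, and then reduce the claim to two facts that are already available: that restricting to a $\delta$-invariant subset preserves the language semantics (\Cref{invariantlanguage}), and that $\llbracket L \rrbracket = L$ holds for every $L \in \mathscr{L}$ inside $\mathcal{L}$ (the identity of \cite{smolka2019guarded} recalled above). The key observation is purely syntactic: the transition map $\partial$ of $m(\mathscr{X})$ defined in \eqref{minimaltransition} is, case for case, exactly the restriction of $\delta^{\mathscr{L}}$ to the state space $A := \lbrace w^{-1}\llbracket \mathscr{X} \rrbracket \mid w \in R(x) \rbrace \subseteq \mathscr{L}$. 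Consequently, once $A$ is known to be $\delta^{\mathscr{L}}$-invariant, the $G$-coalgebra underlying $m(\mathscr{X})$ is literally the restriction $\mathcal{L}^A$.

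First I would record the invariance of $A$. This is precisely the content of the epi--mono factorisation $R(\mathscr{X}) \twoheadrightarrow m(\mathscr{X}) \hookrightarrow \mathcal{L}$ noted just before the statement: the mono exhibits $m(\mathscr{X})$ as a subcoalgebra of $\mathcal{L}$, so $A$ is $\delta^{\mathscr{L}}$-invariant and $\partial = (\delta^{\mathscr{L}})^A$. (If one prefers a direct check: for $L = w^{-1}\llbracket \mathscr{X} \rrbracket$ with $w \in R(x)$, a transition $\delta^{\mathscr{L}}(L)(\alpha) = (p, M)$ means $M = (\alpha p)^{-1}L = (w\alpha p)^{-1}\llbracket \mathscr{X} \rrbracket \neq \emptyset$ using $v^{-1}(w^{-1}L) = (wv)^{-1}L$; a nonempty derivative forces an outgoing transition $\delta(x_w)(\alpha) = (p, y)$ in $\mathscr{X}$, whence $x \xrightarrow{w} x_w \xrightarrow{\alpha p} y$ witnesses $w\alpha p \in R(x)$ and $M \in A$.) With invariance established, \Cref{invariantlanguage} applied to $\mathcal{L}^A \subseteq \mathcal{L}$ gives $\llbracket L \rrbracket_{m(\mathscr{X})} = \llbracket L \rrbracket_{\mathcal{L}^A} = \llbracket L \rrbracket_{\mathcal{L}}$ for all $L \in A$, and the cited identity yields $\llbracket L \rrbracket_{\mathcal{L}} = L$. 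This proves $\llbracket L \rrbracket = L$ for every $L$ in $m(\mathscr{X})$.

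For the final claim, the initial state of $m(\mathscr{X})$ is $\llbracket \mathscr{X} \rrbracket = \varepsilon^{-1}\llbracket \mathscr{X} \rrbracket \in A$ (as $\varepsilon \in R(x)$), so instantiating the first part at $L = \llbracket \mathscr{X} \rrbracket$ gives $\llbracket m(\mathscr{X}) \rrbracket = \llbracket \llbracket \mathscr{X} \rrbracket \rrbracket_{m(\mathscr{X})} = \llbracket \mathscr{X} \rrbracket$. I expect the only genuine obstacle to be the invariance step --- checking that successors of reachable-derivative states are again reachable-derivative states --- which is exactly where nonemptiness of a derivative must be turned into extended reachability in $\mathscr{X}$; everything after that is a direct appeal to the two prior results. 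A fully self-contained alternative would bypass \Cref{invariantlanguage} by inducting on the length of $u$ to show $u \in \llbracket L \rrbracket_{m(\mathscr{X})} \Leftrightarrow u \in L$, mirroring the proof of the preceding lemma, but the subcoalgebra argument is shorter given what is available.
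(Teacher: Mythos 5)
Your proof is correct, but it takes a genuinely different route from the paper's. The paper proves the lemma by a direct, self-contained induction on the length of $v \in \GS$, showing $v \in \llbracket w^{-1}\llbracket \mathscr{X} \rrbracket \rrbracket$ iff $v \in w^{-1}\llbracket \mathscr{X} \rrbracket$ using only the definition of $\llbracket - \rrbracket$ and \eqref{minimaltransition}; the ``in particular'' part is then exactly your final computation. You instead argue structurally: since $\partial$ is verbatim the formula for $\delta^{\mathscr{L}}$, the state space $A = \lbrace w^{-1}\llbracket \mathscr{X} \rrbracket \mid w \in R(x)\rbrace$ is a $\delta^{\mathscr{L}}$-invariant subset and $m(\mathscr{X})$ is the restriction $\mathcal{L}^A$, so \Cref{invariantlanguage} together with the identity $\llbracket L \rrbracket = L$ on $(\mathscr{L}, \delta^{\mathscr{L}})$ (cited from \cite[Thm. 5.8]{smolka2019guarded}) yields the claim. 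Your route buys a conceptual explanation --- semantics restricted to any subcoalgebra of $\mathcal{L}$ is the identity --- and reuses available results, at the price of depending on them; the paper's induction needs neither the invariance argument nor the external theorem. One caution: your primary justification of invariance leans on the paper's informal remark about the factorisation $R(\mathscr{X}) \twoheadrightarrow m(\mathscr{X}) \hookrightarrow \mathcal{L}$, whose epi half is actually problematic for non-normal $\mathscr{X}$ (the map $w \mapsto w^{-1}\llbracket \mathscr{X} \rrbracket$ need not be a coalgebra homomorphism out of $R(\mathscr{X})$ when a transition of $\mathscr{X}$ leads to a dead state), and the lemma is stated for arbitrary $G$-automata. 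Fortunately you only need the mono half, and your parenthetical direct check --- turning nonemptiness of $(w\alpha p)^{-1}\llbracket \mathscr{X} \rrbracket$ into $w\alpha p \in R(x)$ via the acceptance definition and \Cref{uniquenessstates} --- establishes exactly that without any normality assumption, so the proof stands as written.
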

\begin{proof}
We show $v \in \llbracket w^{-1}\llbracket \mathscr{X} \rrbracket \rrbracket$ iff $v \in w^{-1}\llbracket \mathscr{X} \rrbracket$ for all $v \in \GS$, $w \in R(x)$, by induction on the length of $v$:
\begin{itemize}
	\item For the induction base, let $v = \varepsilon$. Then we can compute:
	\begin{align*}
		\alpha \in 	\llbracket w^{-1}\llbracket \mathscr{X} \rrbracket \rrbracket & \Leftrightarrow 
		\partial(w^{-1}\llbracket \mathscr{X} \rrbracket)(\alpha) = 1 && \textnormal{(Definition of } \llbracket - \rrbracket) \\
		& \Leftrightarrow \alpha \in w^{-1}\llbracket \mathscr{X} \rrbracket && \textnormal{(Definition of } \partial)
			\end{align*} 
\item In the induction step, let $v = \alpha p u$. Then we have the following equivalences:
\begin{align*}
	\alpha p u \in 	\llbracket w^{-1}\llbracket \mathscr{X} \rrbracket \rrbracket 
	& \Leftrightarrow u \in \llbracket (w\alpha p)^{-1}\llbracket \mathscr{X} \rrbracket \rrbracket && \textnormal{(Definition of } \llbracket - \rrbracket, \eqref{minimaltransition})\\
	& \Leftrightarrow u \in (w \alpha p)^{-1}\llbracket \mathscr{X} \rrbracket && \textnormal{(IH)} \\
	& \Leftrightarrow w \alpha p u \in \llbracket \mathscr{X} \rrbracket && \textnormal{(Definition of } (-)^{-1}\llbracket \mathscr{X} \rrbracket) \\
	& \Leftrightarrow \alpha p u \in w^{-1}\llbracket \mathscr{X} \rrbracket && \textnormal{(Definition of } (-)^{-1}\llbracket \mathscr{X} \rrbracket)
\end{align*}
		 \end{itemize}
In particular,	$
	\llbracket m(\mathscr{X}) \rrbracket = \llbracket \llbracket \mathscr{X} \rrbracket \rrbracket = \llbracket \varepsilon^{-1}\llbracket \mathscr{X} \rrbracket \rrbracket = \varepsilon^{-1}\llbracket \mathscr{X} \rrbracket = \llbracket \mathscr{X} \rrbracket.
	$	
\end{proof}

An immediate consequence of above statement is that the states of the minimisation can be distinguished by their observable behaviour, that is, different states accept different languages. Another implication of \Cref{minimallanguage} is the normality of the minimisation: all states are \emph{live}.

\begin{corollary}
\label{minimalnormalobservable}
		Let $\mathscr{X}$ be a $G$-automaton, then $m(\mathscr{X})$ is normal and observable.
\end{corollary}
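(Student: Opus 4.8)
The plan is to derive both properties directly from \Cref{minimallanguage}, which already establishes the key fact that every state $L$ of $m(\mathscr{X})$ accepts precisely itself, $\llbracket L \rrbracket = L$. Once this is in hand, observability and normality are each a one-line unfolding.

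I would dispatch observability first, since it is essentially a restatement of the lemma. Suppose two states $L_1$ and $L_2$ of $m(\mathscr{X})$ satisfy $\llbracket L_1 \rrbracket = \llbracket L_2 \rrbracket$. Applying \Cref{minimallanguage} to each gives $L_1 = \llbracket L_1 \rrbracket = \llbracket L_2 \rrbracket = L_2$. Hence the semantic map $\llbracket - \rrbracket$ is injective on the state-space, which is exactly the definition of observability.

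For normality I would inspect the three-case definition \eqref{minimaltransition} of $\partial$. A transition of the shape $\partial(L)(\alpha) = (p, L')$ can only come from the first clause, so necessarily $L' = (\alpha p)^{-1}L$ and the guard $(\alpha p)^{-1}L \neq \emptyset$ holds. Since $L'$ is again a state of $m(\mathscr{X})$, \Cref{minimallanguage} applies to it and yields $\llbracket L' \rrbracket = L' = (\alpha p)^{-1}L \neq \emptyset$. Thus every successor reached by a genuine move is live, which is precisely normality.

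There is no substantial obstacle here: the corollary is really just a repackaging of \Cref{minimallanguage}. The only point that merits a moment of care is confirming that the target $L' = (\alpha p)^{-1}L$ of a transition is itself an element of the state-space before invoking the lemma on it. This is guaranteed by the well-definedness remarks accompanying \Cref{minimdef} (using $v^{-1}(w^{-1}L) = (wv)^{-1}L$, together with the observation that nonemptiness of the guard forces $w\alpha p$ to witness reachability, so that $L'$ is indeed of the form $(w\alpha p)^{-1}\llbracket \mathscr{X} \rrbracket$ with $w\alpha p \in R(x)$).
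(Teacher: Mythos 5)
Your proposal is correct and follows essentially the same route as the paper: both parts are obtained by directly unfolding \Cref{minimallanguage}, with observability coming from $L_1 = \llbracket L_1 \rrbracket = \llbracket L_2 \rrbracket = L_2$ and normality from $\llbracket L_2 \rrbracket = L_2 = (\alpha p)^{-1}L_1 \neq \emptyset$ for any transition target. The extra care you take to confirm that the transition target is itself a state of $m(\mathscr{X})$ is handled in the paper by the well-definedness remarks following \Cref{minimdef}, so nothing is missing.
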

\begin{proof}
\begin{itemize}
	\item By \Cref{minimallanguage}, $\llbracket L_1 \rrbracket = \llbracket L_2 \rrbracket$ implies  $L_1 = L_2$, which shows that $\llbracket - \rrbracket$ is injective. By definition, this proves that $m(\mathscr{X})$ is observable.
	\item Assume $\partial(L_1)(\alpha) = (p, L_2)$, then $L_2 = (\alpha p)^{-1}L_1 \not = \emptyset$ by \eqref{minimaltransition}. It thus follows from 
	 \Cref{minimallanguage} that $\llbracket L_2 \rrbracket = L_2 \not = \emptyset$, which shows that $m(\mathscr{X})$ is normal. \qedhere
\end{itemize}
\end{proof}

Since $m(\mathscr{X})$ is normal, reachable, and observable, if $\mathscr{X}$ is normal, it is, by our definition, \emph{minimal} (cf. \cite[Def. 15]{van2017calf}). Its size-minimality among normal automata language equivalent to $\mathscr{X}$ follows from the abstract definition, cf. \Cref{sizeminimal}.
	
	\subsubsection{Identifying $m(\mathscr{X})$} 
	\label{identifyingsection}

	In this section, we identify the minimisation of a normal $G$-automaton with an alternative, but equivalent, construction. In consequence, we are able to derive that the minimisation of a normal automaton is size-minimal among language equivalent normal automata and preserves the nesting coequation.
	We begin by observing its universality in the following sense.

\begin{figure*}
\centering
\begin{subfigure}[b]{0.4\textwidth}
\centering
\adjustbox{scale=0.8,center}{
			\begin{tikzcd}[row sep=small]
		R(\mathscr{X}) \arrow[twoheadrightarrow]{r}{} \arrow[twoheadrightarrow]{dd}{} & r(\mathscr{X})  \arrow[dashed, twoheadrightarrow]{ddl}{\pi} \\
		& \mathscr{X} \arrow{d} \arrow[hookleftarrow]{u}{} \\
		m(\mathscr{X}) \arrow[hookrightarrow]{r} & \mathcal{L}
		\end{tikzcd}	
	}
		\caption{The morphism $\pi$ as unique diagonal}
				\label{uniquediagonal}
\end{subfigure}
\hfill
\begin{subfigure}[b]{0.55\textwidth}
\centering
\adjustbox{scale=0.8,center}{
			\begin{tikzcd}[row sep=small]
			e \arrow{d} && f \arrow{d}\\
				\mathscr{X}_e \arrow{d} & & \mathscr{X}_e \arrow{d} \\
				\widehat{\mathscr{X}_e} \arrow{d} & & \widehat{\mathscr{X}_f} \arrow{d} \\
				m(\widehat{\mathscr{X}_e}) \arrow{rr}{\cong} & & \arrow{ll} m(\widehat{\mathscr{X}_f})
			\end{tikzcd}
		}
			\caption{$\llbracket e \rrbracket = \llbracket f \rrbracket$ iff  $m(\widehat{\mathscr{X}_e})$ and $m(\widehat{\mathscr{X}_f})$ are isomorphic}
			\label{kozencompletenessdiagram}
\end{subfigure}
\caption{A high-level view of the notions introduced in \Cref{identifyingsection}}
\end{figure*}
	
\begin{proposition}
\label{uniquehomminimal}
Let $\mathscr{X}$ and $\mathscr{Y}$ be normal $G$-automata with $\llbracket \mathscr{X} \rrbracket =  \llbracket \mathscr{Y} \rrbracket$, and $y \in Y$ the initial state of $\mathscr{Y}$. Then $\pi: r(\mathscr{Y}) \rightarrow m(\mathscr{X})$ with $\pi(z) = w_z^{-1}\llbracket \mathscr{X} \rrbracket$, for $y \xrightarrow{w_z} z$ in $\mathscr{Y}$, is a (surjective) $G$-automata homomorphism, uniquely defined.
\end{proposition}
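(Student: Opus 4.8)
The plan is to recognise that $\pi$ is nothing but the semantic map of $\mathscr{Y}$, corestricted to $m(\mathscr{X})$, and then to read off all four claims from properties of the deterministic-language coalgebra $(\mathscr{L}, \delta^{\mathscr{L}})$ together with normality and observability. The starting observation is that whenever $y \xrightarrow{w_z} z$ in the normal automaton $\mathscr{Y}$, the preceding lemma (via the factorisation $R(\mathscr{Y}) \twoheadrightarrow r(\mathscr{Y})$, $w \mapsto y_w$) gives $w_z^{-1}\llbracket \mathscr{Y} \rrbracket = \llbracket z \rrbracket_{\mathscr{Y}}$. Since $\llbracket \mathscr{X} \rrbracket = \llbracket \mathscr{Y} \rrbracket$, this yields $\pi(z) = w_z^{-1}\llbracket \mathscr{X} \rrbracket = \llbracket z \rrbracket_{\mathscr{Y}}$, a quantity manifestly independent of the chosen witness $w_z$. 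By \Cref{invariantlanguage} we may equally write $\pi(z) = \llbracket z \rrbracket_{r(\mathscr{Y})}$, so $\pi$ is exactly the (corestricted) semantics of the subautomaton $r(\mathscr{Y})$.

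For well-definedness it remains to check that $\pi(z)$ really is a state of $m(\mathscr{X})$, i.e. that $\pi(z) = v^{-1}\llbracket \mathscr{X} \rrbracket$ for some $v \in R(x)$, where $x$ is the initial state of $\mathscr{X}$. The crux here is the following characterisation of the reachability witnesses of a normal automaton $\mathscr{Z}$ with initial state $z_0$: for $w \in \GSM$ one has $w \in R(z_0)$ if and only if $w = \varepsilon$ or $w^{-1}\llbracket \mathscr{Z} \rrbracket \neq \emptyset$. The forward direction uses normality (the last transition of a nonempty witnessing run lands in a live state) together with $w^{-1}\llbracket \mathscr{Z} \rrbracket = \llbracket z_{0,w} \rrbracket$; the backward direction unfolds the semantics \eqref{gkatautomatasemantics}, since reading $w$ off an accepted word $wu$ forces all intermediate transitions to exist. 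Applying this to $\mathscr{X}$: if $w_z \neq \varepsilon$ then $z$ is reached by a nonempty run in $\mathscr{Y}$, so $\pi(z) = \llbracket z \rrbracket_{\mathscr{Y}}$ is nonempty by normality, whence $w_z^{-1}\llbracket \mathscr{X} \rrbracket \neq \emptyset$ and $w_z \in R(x)$; and if $w_z = \varepsilon$ then $w_z \in R(x)$ trivially. In all cases $\pi(z)$ is a genuine state of $m(\mathscr{X})$.

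To see that $\pi$ is a homomorphism, note that the transition map $\partial$ of $m(\mathscr{X})$ is literally the restriction of $\delta^{\mathscr{L}}$, so the inclusion $\iota \colon m(\mathscr{X}) \hookrightarrow \mathscr{L}$ is a subcoalgebra. Since $\mathscr{Y}$ is normal, $\llbracket - \rrbracket_{\mathscr{Y}}$ is a $G$-coalgebra homomorphism into $(\mathscr{L}, \delta^{\mathscr{L}})$, and by \Cref{invariantnormal} the same holds for $\llbracket - \rrbracket_{r(\mathscr{Y})}$ on the $\delta$-invariant subautomaton $r(\mathscr{Y})$. As $\pi$ is precisely this homomorphism corestricted along $\iota$ (whose image lands in $m(\mathscr{X})$ by the previous paragraph), $\pi$ is a $G$-coalgebra homomorphism; it preserves the initial state because $y \xrightarrow{\varepsilon} y$ gives $\pi(y) = \varepsilon^{-1}\llbracket \mathscr{X} \rrbracket = \llbracket \mathscr{X} \rrbracket$. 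This realises $\pi$ as the unique diagonal of the evident epi-mono square, matching \Cref{uniquediagonal}. Surjectivity then follows from the same reachability characterisation applied in both directions: the state set of $m(\mathscr{X})$ consists of $\llbracket \mathscr{X} \rrbracket$ together with the nonempty derivatives $w^{-1}\llbracket \mathscr{X} \rrbracket$, and every nonempty such derivative $w^{-1}\llbracket \mathscr{Y} \rrbracket$ forces $w \in R(y)$, hence is realised as $\pi(y_w)$, while $\llbracket \mathscr{X} \rrbracket = \pi(y)$.

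Finally, uniqueness is immediate from observability: by \Cref{minimalnormalobservable} the automaton $m(\mathscr{X})$ is observable, so $\iota$ is monic, and $r(\mathscr{Y})$ is normal; any $G$-automata homomorphism $g \colon r(\mathscr{Y}) \to m(\mathscr{X})$ composes with $\iota$ to a coalgebra homomorphism $r(\mathscr{Y}) \to \mathscr{L}$, of which there is exactly one for normal coalgebras. Hence $\iota \circ g = \iota \circ \pi$, and since $\iota$ is mono, $g = \pi$. The genuine obstacle in this argument is the reachability-witness characterisation bridging the two automata: everything else is bookkeeping around the identity $\pi = \llbracket - \rrbracket_{r(\mathscr{Y})}$, but the fact that a witness of reachability in $\mathscr{Y}$ also witnesses reachability in $\mathscr{X}$ (and conversely, for surjectivity) rests entirely on matching nonemptiness of derivatives with the existence of transition runs, which is exactly where normality is essential.
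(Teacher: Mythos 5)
Your proof is correct, and it takes a more structural route than the paper's for the two substantive parts of the claim. Where the paper verifies that $\pi$ is a coalgebra homomorphism by a direct case analysis on $\delta^{\mathscr{Y}}(z)(\alpha) \in \lbrace 0, 1, (p,z')\rbrace$ and proves uniqueness by chasing transitions with \Cref{reachableminimal} and \Cref{uniquenessstates}, you identify $\pi$ outright with the semantic map $\llbracket - \rrbracket_{r(\mathscr{Y})}$ corestricted along the subcoalgebra inclusion $m(\mathscr{X}) \hookrightarrow \mathscr{L}$, and then read off both the homomorphism property and uniqueness from the fact that a normal coalgebra admits exactly one homomorphism into $(\mathscr{L},\delta^{\mathscr{L}})$ together with $\iota$ being monic. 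This is essentially a fleshed-out version of the ``high-level'' remark the paper makes only after its proof (that $\pi$ is the unique diagonal of \Cref{uniquediagonal}); it buys conceptual economy and makes the role of finality explicit, at the cost of having to verify that $m(\mathscr{X})$ really is a $\delta^{\mathscr{L}}$-invariant subset of $\mathscr{L}$ and that the image of $\llbracket-\rrbracket_{r(\mathscr{Y})}$ lands in it — which you do via your reachability-witness characterisation ($w \in R(z_0)$ iff $w=\varepsilon$ or $w^{-1}\llbracket\mathscr{Z}\rrbracket \neq \emptyset$, for normal $\mathscr{Z}$). That characterisation, and the way it transfers witnesses between $\mathscr{X}$ and $\mathscr{Y}$ through nonemptiness of derivatives, is exactly the content of the paper's well-definedness and surjectivity bullets, so those parts coincide with the paper's argument in substance; you are also right to single it out as the one place where normality is genuinely load-bearing. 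One small point in your favour: you treat the $w_z=\varepsilon$ case separately when arguing $w_z \in R(x)$, which is slightly more careful than the paper's corresponding step.
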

\begin{proof}
We have to show that $\pi$ is well-defined, surjective, preserves initial states, is a $G$-coalgebra homomorphism, and is unique. In this order:
\begin{itemize}
	\item Let $z \in r(y)$, then by definition there exists at least one $w_1 \in R(y)$ such that $y \xrightarrow{w_1} z$ in $\mathscr{Y}$. Since $\mathscr{Y}$  is normal, we have $w_1^{-1} \llbracket \mathscr{X} \rrbracket =  w_1^{-1} \llbracket \mathscr{Y} \rrbracket \not = \emptyset$.  Hence there exists some $z' \in X$, such that $x  \xrightarrow{w_1} z'$ in $\mathscr{X}$, that is, $w_1 \in R(x)$, where $x$ is the initial state of $\mathscr{X}$.  Assume there exists a second $w_2 \in R(y)$, such that $y \xrightarrow{w_2} z$ in $\mathscr{Y}$. Then we have:
	\begin{align*}
		w_1 u \in \llbracket \mathscr{X} \rrbracket & \Leftrightarrow w_1u \in  \llbracket \mathscr{Y} \rrbracket && (\llbracket \mathscr{X} \rrbracket = \llbracket \mathscr{Y} \rrbracket) \\
	 & \Leftrightarrow  u \in \llbracket z \rrbracket && \textnormal{(Definition of } \llbracket - \rrbracket) \\
	& \Leftrightarrow  w_2u \in \llbracket \mathscr{Y} \rrbracket && \textnormal{(Definition of } \llbracket - \rrbracket) \\
	& \Leftrightarrow w_2 u \in \llbracket \mathscr{X} \rrbracket && (\llbracket \mathscr{X} \rrbracket = \llbracket \mathscr{Y} \rrbracket)
	\end{align*}
	for all $u \in \GS$. In other words, $w_1 \equiv_{\llbracket \mathscr{X} \rrbracket} w_2$, or, equivalently, $w_1^{-1}\llbracket \mathscr{X} \rrbracket = w_2^{-1}\llbracket \mathscr{X} \rrbracket$. Thus $\pi$ is a well-defined function.
\item  Let $w \in R(x)$, then by definition there exists $x_w \in X$ with $x \xrightarrow{w} x_w$ in $\mathscr{X}$. Since $\mathscr{X}$ is normal, $w^{-1}\llbracket \mathscr{Y} \rrbracket = w^{-1}\llbracket \mathscr{X} \rrbracket \not = \emptyset$, i.e. $y \xrightarrow{w} y_w$ in $\mathscr{Y}$, for some $y_w \in Y$. Thus, by construction, $\pi(y_w) = w^{-1}\llbracket \mathscr{X} \rrbracket$, which shows that $\pi$ is surjective. 
\item Initial states are preserved since by \eqref{transitiondef} we have $y \xrightarrow{\varepsilon} y$, which by definition of $\pi$ implies $\pi(y) = \varepsilon^{-1}\llbracket \mathscr{X} \rrbracket = \llbracket \mathscr{X} \rrbracket$.
\item $\pi$ is a $G$-coalgebra homomorphism:
\begin{itemize}
	\item Let $\delta^{\mathscr{Y}}(z)(\alpha) = 0$, then $\delta^{m(\mathscr{X})}(\pi(z))(\alpha) \not = 1$, since otherwise $w_z \alpha \in \llbracket \mathscr{X} \rrbracket = \llbracket \mathscr{Y} \rrbracket$ by the definition of $\delta^{m(\mathscr{X})}$, which would imply the contradiction $1 = \delta^{\mathscr{Y}}(z)(\alpha) = 0$. Assume $\delta^{m(\mathscr{X})}(\pi(z))(\alpha) = (p, (w_z \alpha p)^{-1}\llbracket \mathscr{X} \rrbracket)$. By definition of $\delta^{m(\mathscr{X})}$ there exists some $v \in \GS$, such that $w_z \alpha p v \in \llbracket \mathscr{X} \rrbracket = \llbracket \mathscr{Y} \rrbracket$. Hence it follows $\delta^{\mathscr{Y}}(z)(\alpha) \not = 0$ by the definition of $\llbracket - \rrbracket$, which is a contradiction. 
	We can thus conclude $\delta^{m(\mathscr{X})}(\pi(z))(\alpha) = 0$
	\item 
	Let $\delta^{\mathscr{Y}}(z)(\alpha) = 1$, then $w_z \alpha \in \llbracket \mathscr{Y} \rrbracket = \llbracket \mathscr{X} \rrbracket$ by the definition of $\llbracket - \rrbracket$. From the definition of $\delta^{m(\mathscr{X})}$, it follows $\delta^{m(\mathscr{X})}(\pi(z))(\alpha) = 1$.
		\item 
	Let $\delta^{\mathscr{Y}}(z)(\alpha) = (p, z')$, then, by normality of $\mathscr{Y}$, there exists some $v \in \llbracket z' \rrbracket \not = \emptyset$. The latter implies $w_z \alpha p v \in \llbracket \mathscr{Y} \rrbracket = \llbracket \mathscr{X} \rrbracket$. By the definitions of $\delta^{m(\mathscr{X})}$ and $w_{z'}$, it follows $\delta^{m(\mathscr{X})}(\pi(z))(\alpha) = (p, (w_z \alpha p)^{-1} \llbracket \mathscr{X} \rrbracket) = (p, \pi(z'))$.	\end{itemize}
\item Let $g: r(\mathscr{Y}) \rightarrow m(\mathscr{X})$ be any $G$-automata homomorphism.
	Let $z \in r(y)$, then by definition there exists $w_z \in R(y)$, such that $y \xrightarrow{w_z} z$ in $\mathscr{Y}$, and thus in $r(\mathscr{Y})$. Since $g$ is a $G$-automata homomorphism, it follows $\llbracket \mathscr{X} \rrbracket = g(y) \xrightarrow{w_z} g(z)$ in $m(\mathscr{X})$. By \Cref{reachableminimal}, on the other hand, we have $\llbracket \mathscr{X} \rrbracket \xrightarrow{w_z} w_z^{-1}\llbracket \mathscr{X} \rrbracket$ in $m(\mathscr{X})$. From \Cref{uniquenessstates} it thus follows $g(z) = w_z^{-1}\llbracket \mathscr{X} \rrbracket = \pi(z)$. \qedhere
\end{itemize}
\end{proof}	

The next result shows that the minimisation of a normal $G$-automaton is isomorphic to the automaton that arises by identifying semantically equivalent pairs among reachable states. 

\begin{lemma}
\label{minimalbisim}
	Let $\mathscr{X}$ be a normal $G$-automaton with initial state $x \in X$ and $\pi: r(\mathscr{X}) \twoheadrightarrow m(\mathscr{X})$ as in \Cref{uniquehomminimal}, then $y \simeq z$ iff $\pi(y) = \pi(z)$ for all $y, z \in r(x)$. Consequently, $m(\mathscr{X})$ is isomorphic to $r(\mathscr{X})/{\simeq} $.
	\end{lemma}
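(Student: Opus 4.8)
The plan is to reduce the stated biconditional to a purely semantic one and then invoke the characterisations of similarity and bisimilarity on normal coalgebras. First I would record that $r(\mathscr{X})$ is normal by \Cref{reachablesubcoalgebra}, so that \Cref{bisim-sim} together with \Cref{simnormal} yields, for $y,z \in r(x)$, the chain $y \simeq z \Leftrightarrow (y \precsim z \text{ and } z \precsim y) \Leftrightarrow (\llbracket y \rrbracket \subseteq \llbracket z \rrbracket \text{ and } \llbracket z \rrbracket \subseteq \llbracket y \rrbracket) \Leftrightarrow \llbracket y \rrbracket = \llbracket z \rrbracket$. Hence it suffices to prove that $\pi(y) = \pi(z)$ if and only if $\llbracket y \rrbracket_{r(\mathscr{X})} = \llbracket z \rrbracket_{r(\mathscr{X})}$, and for this I would establish the sharper identity $\pi(z) = \llbracket z \rrbracket_{r(\mathscr{X})}$ for every $z \in r(x)$.

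The key step, and the only one genuinely requiring an argument, is this identity. By \Cref{uniquehomminimal} the map $\pi \colon r(\mathscr{X}) \to m(\mathscr{X})$ is a $G$-automata homomorphism, and the epi-mono factorisation $R(\mathscr{X}) \twoheadrightarrow m(\mathscr{X}) \hookrightarrow \mathscr{L}$ exhibits the inclusion $\iota \colon m(\mathscr{X}) \hookrightarrow \mathscr{L}$ as a homomorphism into $(\mathscr{L}, \delta^{\mathscr{L}})$. Composing, $\iota \circ \pi \colon r(\mathscr{X}) \to \mathscr{L}$ is a homomorphism out of the normal coalgebra $r(\mathscr{X})$; since the semantics map $\llbracket - \rrbracket \colon r(\mathscr{X}) \to \mathscr{L}$ is the \emph{unique} such homomorphism, we obtain $\iota \circ \pi = \llbracket - \rrbracket_{r(\mathscr{X})}$. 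As the states of $m(\mathscr{X})$ are by definition derivatives of $\llbracket \mathscr{X} \rrbracket$ and $\iota$ is literally their inclusion into $\mathscr{L}$, this reads $\pi(z) = \llbracket z \rrbracket_{r(\mathscr{X})}$, as desired. Feeding this back into the reduction of the previous paragraph gives at once $\pi(y) = \pi(z) \Leftrightarrow y \simeq z$.

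For the concluding isomorphism I would argue that $\simeq$, being the greatest bisimulation, is a bisimulation equivalence on $r(\mathscr{X})$, so that the quotient $r(\mathscr{X})/{\simeq}$ carries the unique $G$-automaton structure making the canonical projection $q \colon r(\mathscr{X}) \twoheadrightarrow r(\mathscr{X})/{\simeq}$ a homomorphism (with initial state $[x]$). The biconditional just proved says precisely that $\simeq$ is the kernel of $\pi$, so $\pi$ factors as $\pi = \overline{\pi} \circ q$ for a unique map $\overline{\pi} \colon r(\mathscr{X})/{\simeq} \to m(\mathscr{X})$, which is a homomorphism by the universal property of $q$. It is injective because $\overline{\pi}([y]) = \overline{\pi}([z])$ forces $\pi(y) = \pi(z)$, hence $y \simeq z$, i.e. $[y] = [z]$; and surjective because $\pi$ is and $\pi = \overline{\pi}\circ q$. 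A bijective homomorphism of $\Set$-based $G$-coalgebras has an inverse that is again a homomorphism, and $\overline{\pi}$ preserves initial states since $\overline{\pi}([x]) = \pi(x) = \llbracket \mathscr{X} \rrbracket$; thus $\overline{\pi}$ is an isomorphism of $G$-automata, giving $m(\mathscr{X}) \cong r(\mathscr{X})/{\simeq}$.

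I expect the main obstacle to be the identity $\pi(z) = \llbracket z \rrbracket_{r(\mathscr{X})}$: the map $\pi$ is defined through a choice of reachability witness $w_z$ and a derivative $w_z^{-1}\llbracket \mathscr{X} \rrbracket$, so the content lies in recognising, via uniqueness of the semantics homomorphism on the normal coalgebra $r(\mathscr{X})$, that this coincides with the intrinsic behaviour of $z$. Everything else is a routine application of the standard factorisation of a coalgebra homomorphism through its kernel.
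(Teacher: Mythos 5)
Your proof is correct, and its skeleton is the same as the paper's: reduce $\pi(y)=\pi(z)$ to $\llbracket y\rrbracket=\llbracket z\rrbracket$ and then invoke the fact that on normal coalgebras language equality coincides with bisimilarity. The one place you diverge is the middle identity $\pi(z)=\llbracket z\rrbracket$: the paper gets it by simply unfolding the definitions of $\pi$ and of the derivative (the equality $w_z^{-1}\llbracket\mathscr{X}\rrbracket=\llbracket z\rrbracket$ is immediate from the definition of $\llbracket-\rrbracket$ and the transition relation), whereas you derive it abstractly from the uniqueness of the coalgebra homomorphism $r(\mathscr{X})\to(\mathscr{L},\delta^{\mathscr{L}})$ on the normal coalgebra $r(\mathscr{X})$, composing $\pi$ with the inclusion $m(\mathscr{X})\hookrightarrow\mathscr{L}$. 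Both are sound; the paper's version is more elementary, while yours makes explicit why the construction is canonical (it is forced by finality) and would transfer to other settings where the same uniqueness principle holds. You also spell out the quotient argument for $m(\mathscr{X})\cong r(\mathscr{X})/{\simeq}$, which the paper leaves implicit behind the word ``consequently''; your treatment (bisimilarity is the kernel of $\pi$, the induced map on the quotient is a bijective homomorphism, and bijective coalgebra homomorphisms preserving the initial state are isomorphisms) is exactly the standard completion of that step.
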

\begin{proof}
	The statement follows from the following chain of equivalences:
	\begin{align*}
		\pi(y) = \pi(z) &\Leftrightarrow (w_y)^{-1}\llbracket \mathscr{X} \rrbracket = (w_z)^{-1}\llbracket \mathscr{X} \rrbracket && (\textnormal{Definition of } \pi) \\
		& \Leftrightarrow \llbracket y \rrbracket = \llbracket z \rrbracket && \textnormal{(Definition of } (-)^{-1}\llbracket \mathscr{X} \rrbracket) \\
		& \Leftrightarrow y \simeq z && (\mathscr{X}\textnormal{ is normal)}
	\end{align*} \qedhere
\end{proof}
	
On a high level, the automata homomorphism $\pi$ can be recovered as the unique (surjective) diagonal making the diagram in \Cref{uniquediagonal} commute.

In \Cref{reachablesubcoalgebra} it was noted that the reachable subautomaton $r(\mathscr{X})$ satisfies the nesting coequation, whenever $\mathscr{X}$ does. By  \Cref{uniquehomminimal} there exists an epimorphism $\pi: r(\mathscr{X}) \twoheadrightarrow m(\mathscr{X})$, if $\mathscr{X}$ is normal. Since coalgebras satisfying a coequation form a covariety, which is closed under homomorphic images \cite{dahlqvist2021write,schmid2021guarded}, we thus can deduce the following result. 

\begin{corollary}
\label{minimizationcoequation}
		Let $\mathscr{X}$ be a normal $G$-automaton, then $m(\mathscr{X})$ satisfies the nesting coequation, whenever $\mathscr{X}$ does.
\end{corollary}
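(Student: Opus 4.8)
The plan is to realize $m(\mathscr{X})$ as a homomorphic image of the reachable subautomaton $r(\mathscr{X})$ and then appeal to the fact that the class of $G$-automata satisfying the nesting coequation is a covariety, hence closed under homomorphic images. Two ingredients are therefore needed: first, that $r(\mathscr{X})$ inherits the nesting coequation from $\mathscr{X}$; second, that there is a surjective $G$-automata homomorphism from $r(\mathscr{X})$ onto $m(\mathscr{X})$.

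For the first ingredient I would simply invoke \Cref{reachablesubcoalgebra}, which already records that $r(\mathscr{X})$ satisfies the nesting coequation whenever $\mathscr{X}$ does (proved there via the epi-mono factorisation $R(\mathscr{X}) \twoheadrightarrow r(\mathscr{X}) \hookrightarrow \mathscr{X}$ together with closure of covarieties under subcoalgebras). For the second ingredient I would instantiate \Cref{uniquehomminimal} with $\mathscr{Y} := \mathscr{X}$. Since the premise $\llbracket \mathscr{X} \rrbracket = \llbracket \mathscr{X} \rrbracket$ holds trivially and $\mathscr{X}$ is normal by hypothesis, that proposition yields a surjective homomorphism $\pi : r(\mathscr{X}) \twoheadrightarrow m(\mathscr{X})$.

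Combining these, $m(\mathscr{X})$ is a homomorphic image of $r(\mathscr{X})$, and the latter lies in the covariety. Because covarieties are closed under homomorphic images, $m(\mathscr{X})$ lies in the covariety as well, i.e. it satisfies the nesting coequation, which is what we want.

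The argument is short precisely because almost all of the work has been pushed into the two cited results; the only real subtlety is checking that the hypotheses of \Cref{uniquehomminimal} are met when specialising to $\mathscr{Y} = \mathscr{X}$. Here normality of $\mathscr{X}$ is exactly what is required to obtain the surjection $\pi$: normality guarantees that every reachable witness word yields a nonempty derivative, so reachability transfers to $m(\mathscr{X})$ and $\pi$ is onto. Once that is in place, the covariety closure property does the rest.
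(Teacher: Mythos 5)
Your proposal is correct and follows essentially the same route as the paper's own proof: invoke \Cref{reachablesubcoalgebra} to transfer the nesting coequation to $r(\mathscr{X})$, obtain the surjection $\pi: r(\mathscr{X}) \twoheadrightarrow m(\mathscr{X})$ from \Cref{uniquehomminimal} using normality of $\mathscr{X}$, and conclude by closure of covarieties under homomorphic images. No gaps.
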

\begin{proof}
In \Cref{reachablesubcoalgebra} it was noted that the reachable subcoalgebra $r(\mathscr{X})$ satisfies the nesting coequation, whenever $\mathscr{X}$ does. By  \Cref{uniquehomminimal} there exists an epimorphism $\pi: r(\mathscr{X}) \twoheadrightarrow m(\mathscr{X})$ for any normal automaton $\mathscr{X}$. The claim follows since coalgebras satisfying a coequation form a covariety, which is in particular closed under homomorphic images \cite{dahlqvist2021write,schmid2021guarded}.
\end{proof}

We continue with the observation that two normal $G$-automata are language equivalent if and only if their minimisations are isomorphic. As depicted in \Cref{kozencompletenessdiagram}, this implies that two expressions $e$ and $f$ are language equivalent if and only if the minimisations of their normalised Thompson automata are isomorphic. A similar idea occurs in Kozen's completeness proof for Kleene Algebra \cite[Theorem 19]{kozen1994completeness}.   
	
\begin{corollary}
\label{minimalunique}
	Let $\mathscr{X}$ and $\mathscr{Y}$ be normal $G$-automata, then $\llbracket \mathscr{X} \rrbracket = \llbracket \mathscr{Y} \rrbracket$ iff $m(\mathscr{X}) \cong m(\mathscr{Y})$. 
\end{corollary}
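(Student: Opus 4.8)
The plan is to prove the two implications separately, with the backward direction essentially immediate and the forward direction resting on the observation that, for a normal automaton, the minimisation is completely determined by the language it accepts.

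For the backward direction, suppose $m(\mathscr{X}) \cong m(\mathscr{Y})$. Isomorphic $G$-automata accept the same language, so $\llbracket m(\mathscr{X}) \rrbracket = \llbracket m(\mathscr{Y}) \rrbracket$. By \Cref{minimallanguage} we have $\llbracket m(\mathscr{X}) \rrbracket = \llbracket \mathscr{X} \rrbracket$ and $\llbracket m(\mathscr{Y}) \rrbracket = \llbracket \mathscr{Y} \rrbracket$, and combining these three equalities yields $\llbracket \mathscr{X} \rrbracket = \llbracket \mathscr{Y} \rrbracket$.

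For the forward direction, write $L := \llbracket \mathscr{X} \rrbracket = \llbracket \mathscr{Y} \rrbracket$ and let $x, y$ denote the initial states of $\mathscr{X}, \mathscr{Y}$. My aim is to show that $m(\mathscr{X})$ and $m(\mathscr{Y})$ in fact coincide as $G$-automata. First I would argue that they share the same underlying state space. By \Cref{uniquehomminimal} there is a surjective $G$-automata homomorphism $\pi : r(\mathscr{Y}) \twoheadrightarrow m(\mathscr{X})$ given by $\pi(z) = w_z^{-1} L$ for any witness $y \xrightarrow{w_z} z$ in $\mathscr{Y}$. Since $\pi$ is well-defined independently of the chosen witness and surjective, its image --- which is precisely the state space of $m(\mathscr{X})$ --- equals $\lbrace w^{-1} L \mid w \in R(y) \rbrace$; indeed, every $z \in r(y)$ contributes some $w_z \in R(y)$, and conversely each $w \in R(y)$ is a witness of the state it reaches, so $w^{-1} L = \pi(y_w)$ lies in the image. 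By \Cref{minimdef} applied to $\mathscr{Y}$, the set $\lbrace w^{-1} L \mid w \in R(y) \rbrace$ is exactly the state space of $m(\mathscr{Y})$.

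It then remains only to note that both the transition function $\partial$ from \eqref{minimaltransition} and the initial state $L = \varepsilon^{-1} L$ of a minimisation depend on $L$ alone, and in no other way on the automaton from which the minimisation was built. Hence, once the state spaces are known to agree, $m(\mathscr{X})$ and $m(\mathscr{Y})$ are literally the same $G$-automaton, and in particular isomorphic. The one genuinely delicate point is the equality of the two state spaces: although the witness sets $R(x)$ and $R(y)$ differ in general, the surjectivity of the homomorphism supplied by \Cref{uniquehomminimal} collapses this discrepancy, forcing $\lbrace w^{-1} L \mid w \in R(x) \rbrace$ and $\lbrace w^{-1} L \mid w \in R(y) \rbrace$ to determine the same collection of nonempty derivatives of $L$.
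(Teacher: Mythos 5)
Your proof is correct; the backward direction coincides with the paper's, but the forward direction takes a genuinely different route. The paper first checks (via \Cref{minimallanguage}, \Cref{minimalnormalobservable}, and \Cref{reachableminimal}) that $m(\mathscr{X})$ and $m(\mathscr{Y})$ are themselves normal, reachable, and language equivalent, then applies \Cref{uniquehomminimal} \emph{twice} to obtain homomorphisms $\pi_1: m(\mathscr{Y}) \to m(\mathscr{X})$ and $\pi_2: m(\mathscr{X}) \to m(\mathscr{Y})$, and finally uses the uniqueness clause of \Cref{uniquehomminimal} to conclude $\pi_1\pi_2 = \id$ and $\pi_2\pi_1 = \id$. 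You instead apply \Cref{uniquehomminimal} once, to $\pi: r(\mathscr{Y}) \twoheadrightarrow m(\mathscr{X})$, and exploit its surjectivity together with the witness-independence of $\pi(z) = w_z^{-1}L$ to identify the state space of $m(\mathscr{X})$ with $\lbrace w^{-1}L \mid w \in R(y)\rbrace$; since the transition map \eqref{minimaltransition} and the initial state in \Cref{minimdef} are determined by $L$ alone, this yields the literal equality $m(\mathscr{X}) = m(\mathscr{Y})$. Your version buys a marginally stronger conclusion (equality rather than isomorphism) and a more elementary final step, at the cost of leaning on the concrete presentation of the minimisation as a set of derivatives; the paper's double-application-plus-uniqueness argument is the standard abstract pattern and would survive any change to how $m(\mathscr{X})$ is concretely presented. (One could even shortcut your state-space comparison by noting that the first bullet of the proof of \Cref{uniquehomminimal} already gives $R(x) = R(y)$ directly from normality and language equality.) Both arguments are sound.
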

\begin{proof}
	 We begin by assuming $\llbracket \mathscr{X} \rrbracket = \llbracket \mathscr{Y} \rrbracket$. From \Cref{minimallanguage} and \Cref{minimalnormalobservable} we know that $m(\mathscr{X})$ and $m(\mathscr{Y})$ are normal and accept the same language as $\mathscr{X}$ and $\mathscr{Y}$. From \Cref{reachableminimal} it follows that $m(\mathscr{X})$ and $m(\mathscr{Y})$ are reachable. \Cref{uniquehomminimal} thus implies that there exist $G$-automata homomorphisms $\pi_1: m(\mathscr{Y}) \rightarrow m(\mathscr{X})$ and $\pi_2: m(\mathscr{X}) \rightarrow m(\mathscr{Y})$. From the uniqueness property in \Cref{uniquehomminimal} we deduce $\pi_1 \pi_2 = \id_{m(\mathscr{X})}$ and $\pi_2 \pi_1 = \id_{m(\mathscr{Y})}$. Hence $\pi_2: m(\mathscr{X}) \rightarrow m(\mathscr{Y})$ is an isomorphism with inverse $\pi_1$.
	 
	 Conversely, assume $m(\mathscr{X})$ is isomorphic to $m(\mathscr{Y})$. Then it immediately follows $\llbracket m(\mathscr{X}) \rrbracket = \llbracket  m(\mathscr{Y}) \rrbracket$, which implies $\llbracket  \mathscr{X} \rrbracket = \llbracket \mathscr{Y} \rrbracket$ by \Cref{minimallanguage}.
\end{proof}	

We conclude with the size-minimality of the minimisation of a normal automaton among language equivalent normal automata.

\begin{corollary}
\label{sizeminimal}
Let $\mathscr{X}$ and $\mathscr{Y}$ be normal $G$-automata with $\llbracket \mathscr{X} \rrbracket = \llbracket \mathscr{Y} \rrbracket$. Then $\vert m(\mathscr{X}) \vert \leq \vert \mathscr{Y} \vert$, where $\vert m(\mathscr{X}) \vert = \vert \mathscr{Y} \vert$ iff $m(\mathscr{X}) \cong \mathscr{Y}$.
\end{corollary}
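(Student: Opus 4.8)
The plan is to read off the inequality directly from the surjective homomorphism furnished by \Cref{uniquehomminimal}, and then to upgrade an equality of sizes to an isomorphism by a short counting argument.

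First I would invoke \Cref{uniquehomminimal} with the two automata exactly as given: since $\mathscr{X}$ and $\mathscr{Y}$ are normal and $\llbracket \mathscr{X} \rrbracket = \llbracket \mathscr{Y} \rrbracket$, there is a surjective $G$-automata homomorphism $\pi: r(\mathscr{Y}) \twoheadrightarrow m(\mathscr{X})$. Surjectivity of $\pi$ gives $\vert m(\mathscr{X}) \vert \leq \vert r(\mathscr{Y}) \vert$, and since the reachable part $r(\mathscr{Y})$ is a subautomaton of $\mathscr{Y}$ (its state space $r(y)$ is a subset of $Y$), we have $\vert r(\mathscr{Y}) \vert \leq \vert \mathscr{Y} \vert$. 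Chaining these two bounds yields $\vert m(\mathscr{X}) \vert \leq \vert \mathscr{Y} \vert$, the desired inequality.

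For the equality clause, one direction is immediate: if $m(\mathscr{X}) \cong \mathscr{Y}$ then certainly $\vert m(\mathscr{X}) \vert = \vert \mathscr{Y} \vert$. For the converse, suppose $\vert m(\mathscr{X}) \vert = \vert \mathscr{Y} \vert$. Then both inequalities above are forced to be equalities, so in particular $\vert r(\mathscr{Y}) \vert = \vert \mathscr{Y} \vert$; because $r(y) \subseteq Y$, this forces $r(\mathscr{Y}) = \mathscr{Y}$, i.e. $\mathscr{Y}$ is already reachable. Moreover $\pi$ is then a surjection between finite sets of equal cardinality, hence a bijection. Finally I would appeal to the standard fact that a bijective $G$-coalgebra homomorphism is an isomorphism (its inverse function is automatically a homomorphism, since $G$ is an endofunctor on $\Set$), to conclude that $\pi: \mathscr{Y} = r(\mathscr{Y}) \to m(\mathscr{X})$ is an isomorphism, whence $m(\mathscr{X}) \cong \mathscr{Y}$.

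The only genuinely delicate point is the passage from equal cardinality to bijectivity of $\pi$, which relies on finiteness of the state spaces; in the intended (finite) setting a surjection between equinumerous finite sets is automatically a bijection, so the step is safe. Everything else is bookkeeping built on top of \Cref{uniquehomminimal} together with the elementary observation that bijective homomorphisms of coalgebras invert.
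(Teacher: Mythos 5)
Your proof is correct and follows essentially the same route as the paper: both rest on the surjective homomorphism $\pi: r(\mathscr{Y}) \twoheadrightarrow m(\mathscr{X})$ from \Cref{uniquehomminimal}, the chain $\vert m(\mathscr{X}) \vert \leq \vert r(\mathscr{Y}) \vert \leq \vert \mathscr{Y} \vert$, and the observation that in the equality case $\pi$ becomes a bijective (hence invertible) automata homomorphism; the paper merely takes a small detour through $m(\mathscr{X}) \cong m(\mathscr{Y})$ via \Cref{minimalunique} before reaching the same inequality, and like you it implicitly uses finiteness when turning equal cardinalities into bijections. The one point worth making explicit is that the inverse of the bijective coalgebra isomorphism also preserves the initial state, which the paper notes and you leave tacit.
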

\begin{proof}
From \Cref{minimalunique} it immediately follows $m(\mathscr{X}) \cong m(\mathscr{Y})$. We additionally observe \Cref{uniquediagonal} to derive
\[
\vert m(\mathscr{X}) \vert = \vert m(\mathscr{Y}) \vert \leq \vert r(\mathscr{Y}) \vert \leq \vert \mathscr{Y} \vert.
\]
We show next $\vert m(\mathscr{X}) \vert = \vert \mathscr{Y} \vert$ iff $m(\mathscr{X}) \cong \mathscr{Y}$:
\begin{itemize}
	\item Assume $m(\mathscr{X}) \cong \mathscr{Y}$, then immediately $\vert m(\mathscr{X}) \vert = \vert \mathscr{Y} \vert$.
	\item Assume $\vert m(\mathscr{X}) \vert = \vert \mathscr{Y} \vert$, then \Cref{uniquehomminimal} and \Cref{uniquediagonal} imply:
	\[
	\vert r(\mathscr{Y}) \vert \geq \vert m(\mathscr{X}) \vert = \vert \mathscr{Y} \vert \geq \vert r(\mathscr{Y}) \vert.
	\]
	It thus follows $\vert m(\mathscr{X}) \vert = \vert r(\mathscr{Y}) \vert = \vert \mathscr{Y} \vert$. From the second equality and the definition of $r(\mathscr{Y})$ it immediately follows $\mathscr{Y} \cong r(\mathscr{Y})$. The first equality implies that the epimorphism $\pi: r(\mathscr{Y}) \twoheadrightarrow m(\mathscr{X})$ in \Cref{uniquehomminimal} is a bijective automata homomorphism. Any bijective coalgebra homomorphism is a coalgebra isomorphism \cite[Prop. 2.3]{rutten2000universal}. It is clear that the inverse of an initial state preserving coalgebra isomorphism preserves initial states as well. Thus $\pi: r(\mathscr{Y}) \cong m(\mathscr{X})$ is an $G$-automata isomorphism. \qedhere
\end{itemize}
\end{proof}

\section{Learning $m(\mathscr{X})$}	 

\label{learningpart}

In this section we formally investigate the correctness of $\GLStar$ (\Cref{GlStaralgorithm}). Our main result is \Cref{correctnesstheorem}, which shows that if the oracle is instantiated with a deterministic language accepted by a finite normal $G$-automaton $\mathscr{X}$, then  $\GLStar$ terminates with a hypothesis isomorphic to $m(\mathscr{X})$.
For calculations, it will be convenient to use the following definition of an observation table.

\begin{definition}
\label{observationtable}
	An \emph{observation table} $T = (S, E, \row)$ consists of subsets $S \subseteq \GSM, E \subseteq \GS$ and a function $\row: S \cup S \cdot (\At \cdot \Sigma) \rightarrow 2^E$, such that:
	\begin{itemize}
		\item $\varepsilon \in S$ and $\At \subseteq E$ 		
		\item $\alpha p e \in E$ implies $e \in E$ (suffix-closed)	
		\item $s \alpha p \in S$ implies $s \in S$ (prefix-closed)
		\item $s \not = t$ implies $\row(s) \not = \row(t)$ for $s,t \in S$
		\item $\varepsilon \not = s \in S$ implies $\row(s)(e) = 1$ for some $e \in E$
		\item $\row(s\alpha p)(e) = \row(s)(\alpha p e)$, if $\alpha p e \in E$
	\end{itemize}
\end{definition}

Not every table induces a well-defined $G$-automaton. To ensure correctness, we have to restrict ourselves to a  subclass of tables that satisfies two important properties.
We call an observation table \emph{deterministic} if the guarded string language $\row(s) \subseteq \GS$ is deterministic for all $s \in S$. 
An observation table is \emph{closed}, if for all $t \in  S \cdot (\At \cdot \Sigma)$ with $\row(t)(e) = 1$ for some $e \in E$, there exists an $s \in S$ such that $\row(s) = \row(t)$.

The result below shows that if the oracle is instantiated with a deterministic language accepted by a finite normal $G$-automaton $\mathscr{X}$, we have a well-defined observation table at every step.

\begin{proposition}
\label{welldefinedeverystep}
	If $\Cref{GlStaralgorithm}$ is instantiated with a deterministic language accepted by a finite normal $G$-automaton $\mathscr{X}$, then $T$ is a well-defined deterministic observation table at every step.
\end{proposition}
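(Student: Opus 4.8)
The plan is to argue by induction on the number of updates that $\GLStar$ (\Cref{GlStaralgorithm}) performs on the pair $(S,E)$, showing that each of the six clauses of \Cref{observationtable} together with the determinism property is preserved. The crucial preliminary observation is that in every table the entries are given uniformly by $\row(u)(e) = L(ue)$ for $u \in S \cup S \cdot (\At \cdot \Sigma)$ and $e \in E$. Consequently the coherence clause $\row(s \alpha p)(e) = \row(s)(\alpha p e)$ is automatic, since both sides equal $L(s \alpha p e)$ whenever $\alpha p e \in E$ (and $e \in E$ then holds by suffix-closure). Identifying each row with the guarded string language $\row(s) = \lbrace e \in E \mid L(se)=1 \rbrace = (s^{-1}L)\cap E \subseteq s^{-1}L$, determinism will follow throughout from three facts: $L = \llbracket \mathscr{X} \rrbracket$ is deterministic, derivatives preserve determinism, and any subset of a deterministic language is again deterministic (the defining implication quantifies over pairs of words lying in the language).

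For the base case $S = \lbrace \varepsilon \rbrace$, $E = \At$, the clauses $\varepsilon \in S$ and $\At \subseteq E$ hold by construction, while suffix-closure, prefix-closure, row-distinctness and the nonzero-row clause hold vacuously (there is no $s \neq \varepsilon$ in $S$, and no element of $E = \At$ has the shape $\alpha p e$); determinism of $\row(\varepsilon) = L \cap \At$ is immediate. For the inductive step I would treat the two updates separately. When a row $t \in S \cdot (\At \cdot \Sigma)$ is promoted to $S$ to repair a closedness defect, prefix-closure is preserved because $t = s \alpha p$ with $s \in S$; row-distinctness is preserved because $t$ is chosen with $\row(t) \neq \row(s)$ for all $s \in S$; the nonzero-row clause holds because $t$ is selected with $\row(t)(e) = 1$ for some $e \in E$; the remaining clauses and determinism are inherited from the old table. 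When the suffixes $\suff(z)$ of a counterexample $z \in \GS$ are added to $E$, suffix-closure is preserved because $\suff(z)$ is itself suffix-closed and a union of suffix-closed sets is suffix-closed; $\At \subseteq E$ and prefix-closure survive since $S$ is untouched; row-distinctness and the nonzero-row clause are preserved because enlarging $E$ only refines rows, so any $e$ that previously separated two rows or witnessed a nonzero row still lies in $E$; and determinism persists because $\row(s) \subseteq s^{-1}L$ for the larger $E$ as well.

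The only genuinely non-syntactic points, and hence where I expect the main (mild) obstacle, are the monotonicity arguments under column extension and the determinism bookkeeping. For distinctness one must observe that equality of two rows is tested over all of $E$, so strictly enlarging $E$ can never merge rows that were already distinct; dually, the witness $e$ for the nonzero-row clause is retained. The determinism claim is the one place where the hypothesis that the oracle answers according to a deterministic $L$ is actually used, and some care is needed to phrase it as a statement about the languages $(s^{-1}L)\cap E$ rather than about individual table cells. All other clauses reduce to the definitional identity $\row(u)(e) = L(ue)$ and the closure properties of $\suff$, so no further calculation is required.
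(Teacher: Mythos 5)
Your proof is correct and follows essentially the same route as the paper's: a clause-by-clause verification that the initial table satisfies \Cref{observationtable} and that both update operations (promoting a row to $S$, adding suffixes of a counterexample to $E$) preserve each clause, with determinism reduced to the facts that $\row(s) \subseteq s^{-1}\llbracket\mathscr{X}\rrbracket$ and that determinacy is inherited by derivatives and subsets. The paper phrases the induction less explicitly and leaves the subset-preservation of determinacy implicit, but the content is the same.
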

\begin{proof}
	\begin{itemize}
		\item Any $G$-automaton accepts a deterministic language. Since $\row(s) \subseteq  s^{-1} \llbracket \mathscr{X} \rrbracket$, and determinacy is preserved under derivatives, the determinacy of $T$ is thus implied by the determinacy of $\llbracket \mathscr{X} \rrbracket$. 
		\item In the initial step we have $S = \lbrace \varepsilon \rbrace$ and $E = \At$. In every step that follows the sets $S$ and $E$ are only extended. We thus have $\varepsilon \in S$ and $\At \subseteq E$ in every step. 
		\item In the initial step $S = \lbrace \varepsilon \rbrace$ and $E = \At$ are clearly prefix and suffix closed, respectively. In the following steps $S$ is only extended with strings of the shape $s\alpha p$ for $s \in S$, and $E$ is only extended with the suffixes of some $z$. Hence $S$ and $E$ are prefix and suffix closed, respectively, in every step.
		\item In the initial step $S = \lbrace \varepsilon \rbrace$, hence all rows indexed by elements in $S$ are trivially disjoint. In the following steps we only add $s \alpha p$ to $S$, if $\row(s \alpha p) \not = \row(t)$ for all $t \in S$. Since disjoint rows do at no point collapse, we can deduce that $s \not = t$ implies $\row(s) \not = \row(t)$ for all $s,t \in S$ in every step.
		\item In the initial step we have $S = \lbrace \varepsilon \rbrace$, thus the observation that for all $s \in S$ with $s \not = \varepsilon$ we have $\row(s) \not = \emptyset$ is trivially true. In the following steps we only add elements $s \alpha p$ with $\row(s \alpha p) \not = \emptyset$ to $S$. 
		\item Since $\row(t)(e) = \llbracket \mathscr{X} \rrbracket(te)$, the identity $\row(s \alpha p)(e) = \row(s)(\alpha p e)$, if $\alpha p e \in E$, follows from the associativity of string concatenation. \qedhere
		\end{itemize}
\end{proof}

Any closed deterministic table induces a $G$-automaton in the following way.

\begin{definition}
\label{hypothesis}
Given a closed deterministic observation table $T = (S, E, \row)$, let $m(T) := (\lbrace \row(s) \mid s \in S \rbrace, \delta, \row(\varepsilon))$ be the $G$-automaton with
	\begin{align}
	\label{hypothesisdef}
	\delta(L)(\alpha) = \begin{cases}
		(p, (\alpha p)^{-1}L) & \textnormal{if } (\alpha p)^{-1}L \not = \emptyset \\
		1 & \textnormal{if } \alpha \in L \\
		0 & \textnormal{otherwise}
	\end{cases},
\end{align}
where $L \in \lbrace \row(s) \mid s \in S \rbrace$ and $(\alpha p)^{-1}\row(s) = \row(s \alpha p)$.
\end{definition}

A few remarks on the well-definedness of the above definition are in order.
	By \Cref{observationtable} the upper-rows of an observation table are disjoint. Since $T$ is deterministic, precisely one of the three cases in \eqref{hypothesisdef} occurs.
	  If $(\alpha p)^{-1}\row(s)$ is non-empty, there exists, because $T$ is closed, some $t \in S$ with $(\alpha p)^{-1}\row(s) = \row(t)$. This shows that $m(T)$ is closed under transitions.
\subsection{Properties of $m(T)$}

In what follows, let $T$ be a closed deterministic observation table, unless states otherwise. We will establish a few basic properties of $m(T)$. 
First, we observe its reachability, which is implied by a slightly stronger statement. 

\begin{lemma}
\label{reachability}
For all $s \in S$ and $t \in \GSM$ such that $st \in S$, we have $\row(s) \xrightarrow{t} \row(st)$ in $m(T)$. In particular, $m(T)$ is reachable.
\end{lemma}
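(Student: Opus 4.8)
The plan is to proceed by induction on the length of the word $t \in \GSM$, keeping $s \in S$ fixed throughout and assuming $st \in S$. For the base case $t = \varepsilon$ the claim $\row(s) \xrightarrow{\varepsilon} \row(s\varepsilon) = \row(s)$ is precisely the reflexivity rule in \eqref{transitiondef}, so nothing needs to be checked.

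For the inductive step I would write $t = v \alpha p$ with $v \in \GSM$, $\alpha \in \At$ and $p \in \Sigma$, so that $st = sv\alpha p \in S$. Since $S$ is prefix-closed by \Cref{observationtable}, it follows that $sv \in S$, and hence the induction hypothesis applies to $v$ (with $sv \in S$), yielding $\row(s) \xrightarrow{v} \row(sv)$ in $m(T)$. It then remains to supply one further step $\row(sv) \xrightarrow{\alpha p} \row(sv\alpha p) = \row(st)$ and to glue the two together using the concatenation rule of \eqref{transitiondef}.

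The heart of the argument, and the step I expect to require the most care, is establishing this single transition, that is, showing $\delta(\row(sv))(\alpha) = (p, \row(st))$ with the \emph{first} branch of the case distinction in \eqref{hypothesisdef} being selected. Using the identity $(\alpha p)^{-1}\row(sv) = \row(sv\alpha p) = \row(st)$ recorded in \Cref{hypothesis}, this reduces to verifying that $\row(st) \neq \emptyset$, which is exactly what prevents $\delta$ from falling through to the accepting or rejecting branches. Non-emptiness in turn follows from the table axioms of \Cref{observationtable}: since $t = v\alpha p$ is non-empty we have $st \neq \varepsilon$, and the axiom that every non-$\varepsilon$ element of $S$ has a row with some $1$-entry gives $\row(st) \neq \emptyset$. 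Feeding $\delta(\row(sv))(\alpha) = (p, \row(st))$ into the second rule of \eqref{transitiondef} produces the required step, and composing it with the induction hypothesis via the third rule completes the induction.

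Finally, for the ``in particular'' claim that $m(T)$ is reachable, I would instantiate $s = \varepsilon$: every state of $m(T)$ is of the form $\row(s')$ for some $s' \in S \subseteq \GSM$, and since $\varepsilon \cdot s' = s' \in S$, the main statement yields $\row(\varepsilon) \xrightarrow{s'} \row(s')$. As $\row(\varepsilon)$ is by definition the initial state of $m(T)$, this exhibits every state as reachable.
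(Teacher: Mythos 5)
Your proof is correct and follows essentially the same route as the paper's: induction on the length of $t$, using prefix-closedness of $S$ to apply the induction hypothesis to $v$, the table axiom that non-$\varepsilon$ rows in $S$ are non-empty to force the first branch of \eqref{hypothesisdef}, and the instantiation $s = \varepsilon$ for reachability. No gaps.
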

\begin{proof}
We show the statement by induction on the length of $t \in \GSM$:
	\begin{itemize}
		\item If $t = \varepsilon$, the statement follows from the base case of \eqref{transitiondef}, i.e. $\row(s) \xrightarrow{\varepsilon} \row(s)$.
		\item If $t = v \alpha p$ for $v \in \GSM$, we have $s v \in S$, since $s v \alpha p = st \in S$ and $S$ is prefix closed by \Cref{observationtable}. Thus $\row(s) \xrightarrow{v} \row(s v)$ by the induction hypothesis. Since $\varepsilon \not = st \in S$, we have $(\alpha p)^{-1}\row(sv) = \row(st) \not = \emptyset$ by \Cref{observationtable}. Thus it follows $\row(sv) \xrightarrow{\alpha p} \row(sv\alpha p)$ by \eqref{hypothesisdef}.  We conclude $\row(s)  \xrightarrow{v \alpha p = t} \row(s v \alpha p) = \row(st)$ by \eqref{transitiondef}.
	\end{itemize}
	Since $\epsilon \in S$ by \Cref{observationtable}, we in particular obtain $\row(\varepsilon)  \xrightarrow{s} \row(s)$ in $m(T)$ for all $s \in S$, which implies the reachability of $m(T)$.
\end{proof}

We call a $G$-automaton $(\mathscr{Y}, y)$ \emph{consistent with $T$}, if $S \subseteq R(y)$ and  $\llbracket y_s \rrbracket(e) = \row(s)(e)$ for all $s \in S$, $e \in E$, and $y_s \in Y$ with $y  \xrightarrow{s} y_s$.
By \Cref{reachability}, the automaton $m(T)$ is consistent with $T$ if and only if $\llbracket \row(s) \rrbracket(e) = \row(s)(e)$ for all $s \in S$ and $e \in E$. The consistency of $m(T)$ with $T$ should not be confused with the consistency of $T$ itself. Both terminologies appear frequently in the literature \cite{angluin1987learning}. We show that $m(T)$ is not only consistent with $T$, but has in fact the fewest number of states among all automata consistent with $T$.

\begin{lemma}
\label{consistent}
	$m(T)$ is size-minimal among automata consistent with $T$.
\end{lemma}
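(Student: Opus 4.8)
The plan is to reduce the claim to a counting argument. First I would record that the state space of $m(T)$ is by construction $\lbrace \row(s) \mid s \in S \rbrace$, i.e. the image of $\row$ restricted to $S$. By the observation-table axiom that $s \neq t$ implies $\row(s) \neq \row(t)$ (\Cref{observationtable}), the function $\row$ is injective on $S$, so $\vert m(\mathscr{X}) \vert$-style counting gives $\vert m(T) \vert = \vert S \vert$. Hence it suffices to show that every $G$-automaton $(\mathscr{Y}, y)$ consistent with $T$ has at least $\vert S \vert$ states.

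The key construction is a map witnessing this lower bound. Consistency of $(\mathscr{Y}, y)$ with $T$ requires $S \subseteq R(y)$, so for each $s \in S$ there is a state $y_s \in Y$ with $y \xrightarrow{s} y_s$; by \Cref{uniquenessstates} this $y_s$ is uniquely determined, so we obtain a well-defined function $\Phi: S \to Y$, $\Phi(s) = y_s$. I would then prove $\Phi$ is injective. Suppose $y_s = y_t$ for $s, t \in S$. Then $\llbracket y_s \rrbracket = \llbracket y_t \rrbracket$, and using the consistency condition $\llbracket y_s \rrbracket(e) = \row(s)(e)$ (valid for all $e \in E$) we get $\row(s)(e) = \llbracket y_s \rrbracket(e) = \llbracket y_t \rrbracket(e) = \row(t)(e)$ for every $e \in E$, i.e. $\row(s) = \row(t)$; the distinctness axiom then forces $s = t$. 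Injectivity of $\Phi$ yields $\vert S \vert \leq \vert Y \vert = \vert \mathscr{Y} \vert$, and combined with $\vert m(T) \vert = \vert S \vert$ this gives $\vert m(T) \vert \leq \vert \mathscr{Y} \vert$, the desired minimality.

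To make the phrase "size-minimal among automata consistent with $T$" meaningful, I would finally confirm that $m(T)$ itself lies in the class. Here I invoke the characterization recorded just before the lemma: by \Cref{reachability}, $m(T)$ is consistent with $T$ if and only if $\llbracket \row(s) \rrbracket(e) = \row(s)(e)$ for all $s \in S$ and $e \in E$. This identity is established by induction on the length of $e$, using the transition definition \eqref{hypothesisdef} of $\delta$ together with the suffix-closedness of $E$ and the table identity $\row(s\alpha p)(e) = \row(s)(\alpha p e)$ from \Cref{observationtable}; once it holds, $m(T)$ is consistent and attains the bound $\vert S \vert$.

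The lower-bound argument is genuinely clean, and I expect the only subtle points there to be the well-definedness of $\Phi$ (handled by uniqueness of reached states) and the observation that the consistency condition is exactly strong enough to separate distinct rows. The more delicate ingredient is the row-language identity $\llbracket \row(s) \rrbracket(e) = \row(s)(e)$ needed for the consistency of $m(T)$ itself; this is essentially a verification that the hypothesis automaton reproduces the observed data, and is where most of the real work sits, but it is logically separate from — and a prerequisite to — the counting argument that delivers minimality.
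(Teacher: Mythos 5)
Your proposal is correct and follows essentially the same route as the paper: first establish consistency of $m(T)$ via the row-language identity $\llbracket \row(s) \rrbracket(e) = \row(s)(e)$ by induction on $e$, then obtain the size bound from an injective map $s \mapsto y_s$ into the state space of any consistent automaton, using the consistency condition together with the distinct-rows axiom of \Cref{observationtable}. The only cosmetic difference is that you define the injection on $S$ and separately note $\vert m(T) \vert = \vert S \vert$, whereas the paper defines it directly on $\lbrace \row(s) \mid s \in S \rbrace$; these are equivalent.
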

\begin{proof}
We begin by showing that $m(T)$ is consistent with $T$, that is, it satisfies $\llbracket \row(s) \rrbracket(e) = \row(s)(e)$ for all $s \in S, e \in E$, by induction on the length of $e$:
		\begin{itemize}
		\item For the induction base, let $e = \alpha \in \At$, then it immediately follows that:
		\begin{align*}
			\llbracket \row(s) \rrbracket(e) = 1 			 & \Leftrightarrow  \delta(\row(s))(\alpha) = 1 && \textnormal{(Definition of } \llbracket - \rrbracket) \\
			& \Leftrightarrow \row(s)(\alpha) = 1 && \eqref{hypothesisdef}
		\end{align*}
		\item For the induction step, let $e = \alpha p w$ for $w \in \GS$, then \Cref{observationtable} implies $w \in E$ and $\row(s)(\alpha p w) = \row(s \alpha p)(w)$. Thus we can deduce: 
		\begin{align*}
			& \llbracket \row(s) \rrbracket(\alpha p w ) = 1 && \\
			\Leftrightarrow &\ \exists t \in S: \emptyset \not = \row(s \alpha p) = \row(t) \textnormal{ and } \llbracket \row(t) \rrbracket(w) = 1 && \textnormal{(Definition of } \llbracket - \rrbracket) \\
			\Leftrightarrow &\ \exists t \in S: \row(s \alpha p) = \row(t) \textnormal{ and } \row(t)(w) = 1  && (w \in E,\ \textnormal{IH}) \\
			\Leftrightarrow &\ \row(s \alpha p)(w) = 1 && (T \textnormal{ closed}) \\
			\Leftrightarrow &\ \row(s)(\alpha p w) = 1 && (\textnormal{\Cref{observationtable}})
		\end{align*}
	\end{itemize}
	Let $(\mathscr{Y}, y)$ be any $G$-automaton consistent with $T$, i.e.\ $S \subseteq R(y)$ and $\llbracket y_s \rrbracket(e) = \row(s)(e)$ for all $s \in S, e \in E$, and $y_s \in Y$ with $y  \xrightarrow{s} y_s$. We define a function $f: \lbrace \row(s) \mid s \in S \rbrace \rightarrow Y$ by $f(\row(s)) = y_s$. The function is well-defined, since $S \subseteq R(y)$. Assume $f(\row(s)) = f(\row(t))$, i.e. $y_s = y_t$ for $s,t \in S$. Then we can deduce
	\[
	\row(s)(e) = \llbracket y_s \rrbracket(e) = \llbracket y_t \rrbracket(e) = \row(t)(e)
	\]
	for all $e \in E$. Since by \Cref{observationtable} rows indexed by $S$ are disjoint, it follows $s = t$. This shows that $f$ is injective, which implies the size-minimality of $m(T)$.
\end{proof}

From the consistency of $m(T)$ with $T$ it is straightforward to derive its normality and observability.

\begin{lemma}
\label{normalityandobservable}
	$m(T)$ is normal and observable.
\end{lemma}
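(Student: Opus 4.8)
The plan is to derive both properties directly from the consistency of $m(T)$ with $T$ established in \Cref{consistent}, namely the identity $\llbracket \row(s) \rrbracket(e) = \row(s)(e)$ for all $s \in S$ and $e \in E$. This equation is the bridge that transfers information between the syntactic rows and their semantics, and once it is in hand both claims follow by unfolding definitions.

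For observability I would show that $\llbracket - \rrbracket$ is injective on the state space $\lbrace \row(s) \mid s \in S \rbrace$ of $m(T)$. Suppose two states $\row(s)$ and $\row(t)$, with $s,t \in S$, satisfy $\llbracket \row(s) \rrbracket = \llbracket \row(t) \rrbracket$. Restricting this semantic equality to columns $e \in E$ and applying consistency on both sides gives $\row(s)(e) = \llbracket \row(s) \rrbracket(e) = \llbracket \row(t) \rrbracket(e) = \row(t)(e)$ for every $e \in E$, hence $\row(s) = \row(t)$ as elements of $2^E$. Thus the two states coincide, which is precisely injectivity of $\llbracket - \rrbracket$.

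For normality I would start from a transition $\delta(\row(s))(\alpha) = (p, L_2)$ and must verify $\llbracket L_2 \rrbracket \neq \emptyset$. By the first clause of \eqref{hypothesisdef} this case occurs exactly when $L_2 = (\alpha p)^{-1}\row(s) = \row(s\alpha p) \neq \emptyset$. The crucial step is to recognise $L_2$ as a genuine state of $m(T)$: since $T$ is closed and $L_2 \neq \emptyset$, the remark following \Cref{hypothesis} supplies some $t \in S$ with $L_2 = \row(t)$. As $\row(t) \neq \emptyset$, there is a column $e \in E$ with $\row(t)(e) = 1$; consistency then yields $\llbracket L_2 \rrbracket(e) = \llbracket \row(t) \rrbracket(e) = \row(t)(e) = 1$, so $e \in \llbracket L_2 \rrbracket$ and therefore $\llbracket L_2 \rrbracket \neq \emptyset$.

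Neither direction presents a genuine obstacle once consistency is available; the only point needing care is the normality argument's reliance on closedness to re-express the reached language $L_2$ as an actual row $\row(t)$, since \Cref{consistent} only gives the identity $\llbracket \row(s)\rrbracket(e)=\row(s)(e)$ for rows indexed by $S$. Everything else is a routine unfolding of the transition map \eqref{hypothesisdef}, the semantics $\llbracket - \rrbracket$, and the table conventions of \Cref{observationtable}.
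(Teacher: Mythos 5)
Your proof is correct and follows essentially the same route as the paper's: both parts are derived from the consistency identity $\llbracket \row(s) \rrbracket(e) = \row(s)(e)$ of \Cref{consistent}, with normality obtained by unfolding \eqref{hypothesisdef} to see that the reached state is a non-empty row and observability by restricting the semantic equality to columns in $E$. The only difference is cosmetic — you make explicit the appeal to closedness to identify $(\alpha p)^{-1}\row(s)$ with some $\row(t)$, which the paper leaves implicit in writing the transition target directly as $\row(t)$.
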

\begin{proof}
\begin{itemize}
	\item Assume $\delta(\row(s))(\alpha) = (p, \row(t))$ for $s, t \in S$. Then we have \[ \row(t) = \row(s \alpha p) = (\alpha p)^{-1}\row(s) \not = \emptyset \] by \eqref{hypothesisdef}. From \Cref{consistent} $m(T)$ it follows that $\llbracket \row(t) \rrbracket$ is non-empty, which proves the normality of $m(T)$.
		\item Assume $\llbracket \row(s) \rrbracket = \llbracket \row(t) \rrbracket$ for $s, t \in S$. Then, by \Cref{consistent}, we have
	\[
	\row(s)(e) = \llbracket \row(s) \rrbracket (e) = \llbracket \row(t) \rrbracket (e) = \row(t)(e)
	\]
	for all $e \in E \subseteq \GS$. Thus $\row(s) = \row(t)$, which shows that $\llbracket - \rrbracket$ is injective, that is, $m(T)$ is observable. \qedhere
\end{itemize}
\end{proof}

\subsection{Relationship Between $m(T)$ and $m(\mathscr{X})$}

We will next deduce the correctness of $\GLStar$, that is, its termination with an automaton isomorphic to $m(\mathscr{X})$, if the teacher is instantiated with the language accepted by a finite normal automaton $\mathscr{X}$. 

In a first step we establish that any hypothesis admits an injective function from its state-space into the state-space of $m(\mathscr{X})$. The result below does not necessarily require the observation table to be deterministic or closed.

\begin{lemma}
\label{injectivefunction}
	Let $T = (S, E, \row)$ be an observation table with $\row(t)(e) = \llbracket \mathscr{X} \rrbracket (te)$ for all $t \in S \cup S \cdot (\At \cdot \Sigma),\ e \in E$, and let $x \in X$ be the initial state of $\mathscr{X}$. Then $\pi: \lbrace \row(s) \mid s \in S \rbrace \rightarrow \lbrace w^{-1}\llbracket \mathscr{X} \rrbracket \mid w \in R(x) \rbrace,\ \row(s) \mapsto s^{-1}\llbracket \mathscr{X} \rrbracket$ is a well-defined injective function.
\end{lemma}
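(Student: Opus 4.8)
The statement packages two claims --- that $\pi$ is a well-defined function into the stated codomain, and that it is injective --- and I would treat them separately. Well-definedness \emph{as a set map} is immediate: by \Cref{observationtable} the rows indexed by $S$ are pairwise distinct, so each element $\row(s)$ of the domain determines its index $s \in S$ uniquely, and the assignment $\row(s) \mapsto s^{-1}\llbracket \mathscr{X} \rrbracket$ is unambiguous. (Note this step genuinely needs injectivity of $\row$ on $S$, since agreement of two rows only on the finite set $E$ would not by itself force the full derivatives to agree.) The substantive part of well-definedness is that the value $s^{-1}\llbracket \mathscr{X} \rrbracket$ actually lies in $\lbrace w^{-1}\llbracket \mathscr{X} \rrbracket \mid w \in R(x) \rbrace$, the state space of $m(\mathscr{X})$; for this it suffices to show $S \subseteq R(x)$, i.e. that every row index is a reachability witness of the initial state $x$.

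I would prove $S \subseteq R(x)$ by induction on the length of $s \in \GSM$, exploiting that $S$ is prefix-closed (\Cref{observationtable}). The base case $s = \varepsilon$ holds since $x \xrightarrow{\varepsilon} x$ by \eqref{transitiondef}. For $s = v \alpha p$, prefix-closedness gives $v \in S$, so by the induction hypothesis $x \xrightarrow{v} x_v$ for some $x_v \in X$. Since $s \neq \varepsilon$, the table axiom in \Cref{observationtable} forces $\row(s)(e) = 1$ for some $e \in E$, and by hypothesis $\row(s)(e) = \llbracket \mathscr{X} \rrbracket(se)$, so $v \alpha p e \in \llbracket \mathscr{X} \rrbracket = \llbracket x \rrbracket$. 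The crucial step --- and what I expect to be the main obstacle --- is to convert this acceptance fact about $x$ into the existence of a concrete $\alpha$-transition out of $x_v$. I would establish the auxiliary fact that $x \xrightarrow{v} x_v$ implies $vw \in \llbracket x \rrbracket \Leftrightarrow w \in \llbracket x_v \rrbracket$ for every guarded string $w$ (equivalently $\llbracket x_v \rrbracket = v^{-1}\llbracket \mathscr{X} \rrbracket$), by a short induction on $v$ that decomposes $v = v'\alpha'p'$ at its final pair and uses \eqref{transitiondef}, \eqref{gkatautomatasemantics}, and the uniqueness of reached states (\Cref{uniquenessstates}). Applying this with $w = \alpha p e$ yields $\alpha p e \in \llbracket x_v \rrbracket$, whence \eqref{gkatautomatasemantics} provides $y \in X$ with $\delta(x_v)(\alpha) = (p, y)$. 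Then $x_v \xrightarrow{\alpha p} y$, and hence $x \xrightarrow{v \alpha p} y$ by \eqref{transitiondef}, so $s \in R(x)$.

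For injectivity I would argue directly from the defining identity $\row(s)(e) = \llbracket \mathscr{X} \rrbracket(se)$, which says precisely that $\row(s)$ is the characteristic function, restricted to $E$, of the derivative $s^{-1}\llbracket \mathscr{X} \rrbracket$, since $se \in \llbracket \mathscr{X} \rrbracket$ iff $e \in s^{-1}\llbracket \mathscr{X} \rrbracket$. Thus if $\pi(\row(s)) = \pi(\row(s'))$, that is $s^{-1}\llbracket \mathscr{X} \rrbracket = s'^{-1}\llbracket \mathscr{X} \rrbracket$, then for every $e \in E$ we obtain $\row(s)(e) = \row(s')(e)$, so $\row(s) = \row(s')$, proving $\pi$ injective. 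Overall I expect the reachability argument of the second paragraph to be the only place requiring real care; once the derivative identity $\llbracket x_v \rrbracket = v^{-1}\llbracket \mathscr{X} \rrbracket$ is in hand, both the codomain check and the injectivity claim are purely formal.
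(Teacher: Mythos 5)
Your proposal is correct and follows essentially the same route as the paper's proof: well-definedness is split into (a) the rows indexed by $S$ being pairwise distinct by the table axioms and (b) $S \subseteq R(x)$, the latter obtained from the axiom that every non-empty row contains a $1$ together with $\row(s)(e) = \llbracket \mathscr{X} \rrbracket(se)$, while injectivity follows because rows are exactly the $E$-restrictions of the derivatives. The only difference is one of detail: the step ``$\llbracket x \rrbracket(se) = 1$ implies $s \in R(x)$'', which the paper attributes directly to the definition of $\llbracket - \rrbracket$, you unfold via an explicit induction on prefixes and the auxiliary identity $\llbracket x_v \rrbracket = v^{-1}\llbracket \mathscr{X} \rrbracket$ --- a sound (and somewhat more careful) elaboration of the same argument.
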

\begin{proof}
We first show that $\pi$ is well-defined. To this end, we need to establish that i) $S \subseteq R(x)$; and ii) if $\row(s) = \row(t)$ for $s,t \in S$, then  $s^{-1}\llbracket \mathscr{X} \rrbracket = t^{-1}\llbracket \mathscr{X} \rrbracket$.  

For i) note that if $s = \varepsilon$, then $x  \xrightarrow{s} x$ by the base case of \eqref{transitiondef}, i.e. $s \in R(x)$. If $s \not = \varepsilon$, then \Cref{observationtable} implies the existence of some $e \in E$, such that $\row(s)(e) = 1$. Thus $\llbracket x \rrbracket(se) = \llbracket \mathscr{X} \rrbracket(se) = \row(s)(e) = 1$, which implies $s \in R(x)$ by the definition of $\llbracket - \rrbracket$. 
For ii) it is enough to observe that by \Cref{observationtable} all rows of an observation table are disjoint.

 To show that $\pi$ is injective, assume $\pi(\row(s)) = \pi(\row(t))$, for $s, t \in S$. By definition of $\pi$ we thus have an equivalence $s \equiv_{\llbracket \mathscr{X} \rrbracket} t$. From the definition of $\equiv_{\llbracket \mathscr{X} \rrbracket}$ and the assumptions it thus follows
 \[ e \in \row(s) \Leftrightarrow se \in \llbracket \mathscr{X} \rrbracket \Leftrightarrow te \in \llbracket \mathscr{X} \rrbracket \Leftrightarrow e \in \row(t) \] for all $e \in E$. This proves the equality $\row(s) = \row(t)$.
\end{proof}

If the algorithm terminates with a hypothesis $m(T)$, the latter is, by definition, language equivalent to $\mathscr{X}$, and thus to the minimisation $m(\mathscr{X})$, by \Cref{minimallanguage}. The next result implies a stronger statement: in case of termination, the hypothesis $m(T)$ is \emph{isomorphic} to $m(\mathscr{X})$, via the function $\pi$ of \Cref{injectivefunction}.

\begin{proposition}
\label{isolanguagequiv}
	Let $T = (S, E, \row)$ be a closed deterministic observation table with $\row(t)(e) = \llbracket \mathscr{X} \rrbracket(te)$ for all $t \in S \cup S \cdot (\At \cdot \Sigma),\ e \in E$. Let $\pi$ be the injection of \Cref{injectivefunction}, and $\mathscr{X}$ normal. The following are equivalent:
\begin{enumerate}
	\item $\pi: m(T) \simeq m(\mathscr{X})$ is a $G$-automata isomorphism
	\item $\llbracket m(T) \rrbracket =  \llbracket m(\mathscr{X}) \rrbracket$
\end{enumerate}
\end{proposition}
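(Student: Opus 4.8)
The plan is to prove the two implications separately; the forward direction is essentially immediate, while the reverse direction carries all the content. For $(1) \Rightarrow (2)$, if $\pi$ is a $G$-automata isomorphism then it preserves initial states and is a coalgebra homomorphism, hence commutes with the semantics maps into the final coalgebra $\mathcal{L}$. Concretely $\pi(\row(\varepsilon)) = \varepsilon^{-1}\llbracket \mathscr{X} \rrbracket = \llbracket \mathscr{X} \rrbracket$ is the initial state of $m(\mathscr{X})$, and behaviour is preserved along homomorphisms, so $\llbracket m(T) \rrbracket = \llbracket m(\mathscr{X}) \rrbracket$. This takes a single line.

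For $(2) \Rightarrow (1)$ the idea is to show that $\pi$ is a bijective $G$-automata homomorphism and then invoke the fact that a bijective coalgebra homomorphism is automatically an isomorphism. First I would upgrade the hypothesis: by \Cref{minimallanguage} we have $\llbracket m(\mathscr{X}) \rrbracket = \llbracket \mathscr{X} \rrbracket$, so assumption (2) yields $\llbracket m(T) \rrbracket = \llbracket \mathscr{X} \rrbracket$. Both $\mathscr{X}$ and $m(T)$ are normal — the latter by \Cref{normalityandobservable} — and $m(T)$ is reachable by \Cref{reachability}. I can therefore apply \Cref{uniquehomminimal} to the normal, language-equivalent pair $\mathscr{X}, m(T)$, obtaining a unique surjective $G$-automata homomorphism $r(m(T)) \to m(\mathscr{X})$; since $m(T)$ is reachable, $r(m(T)) = m(T)$, so this is a surjection $m(T) \twoheadrightarrow m(\mathscr{X})$.

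The key step is then to observe that this canonical surjection is exactly the map $\pi$ of \Cref{injectivefunction}. The homomorphism produced by \Cref{uniquehomminimal} sends a state $z$ to $w_z^{-1}\llbracket \mathscr{X} \rrbracket$ for any witness $\row(\varepsilon) \xrightarrow{w_z} z$ in $m(T)$; for $z = \row(s)$, \Cref{reachability} supplies the witness $s$, so the map sends $\row(s) \mapsto s^{-1}\llbracket \mathscr{X} \rrbracket$, which is precisely $\pi$. Hence $\pi$ is a surjective $G$-automata homomorphism. Since $\pi$ is also injective by \Cref{injectivefunction}, it is bijective, and a bijective coalgebra homomorphism is a coalgebra isomorphism \cite[Prop. 2.3]{rutten2000universal}; as its inverse again preserves initial states, $\pi$ is a $G$-automata isomorphism, establishing (1).

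The main obstacle I anticipate is the bookkeeping involved in identifying the abstractly defined homomorphism of \Cref{uniquehomminimal} with the concrete injection $\pi$ of \Cref{injectivefunction} — in particular, checking that the reachability witness $w_z = s$ is a legitimate choice (via \Cref{reachability}) and that the well-definedness of both maps forces them to agree on the entire state space $\lbrace \row(s) \mid s \in S \rbrace$. Everything else (surjectivity, preservation of initial states, and the passage from bijective homomorphism to isomorphism) is then routine.
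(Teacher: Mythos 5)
Your proposal is correct and follows essentially the same route as the paper's proof: the forward direction via preservation of behaviour along the homomorphism $\pi$, and the reverse direction by invoking \Cref{minimallanguage}, \Cref{normalityandobservable}, and \Cref{reachability} to apply \Cref{uniquehomminimal} to the pair $\mathscr{X}$, $m(T)$, identifying the resulting surjection with $\pi$ via the witness $w_s = s$, and concluding from injectivity (\Cref{injectivefunction}) that the bijective coalgebra homomorphism is an isomorphism. No gaps.
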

\begin{proof}
	\begin{itemize}
		\item $1. \rightarrow 2.$: Since $\pi$ is a homomorphism, it follows $\llbracket - \rrbracket_{m(\mathscr{X})} \circ \pi = \llbracket - \rrbracket_{m(T)}$ by uniqueness. In particular we have: 
		\begin{align*}
			\llbracket m(T) \rrbracket_{m(T)} &= \llbracket \row(\varepsilon) \rrbracket_{m(T)} && (\textnormal{Definition of } \llbracket - \rrbracket_{m(T)})  \\
			 &= \llbracket \pi(\row(\varepsilon)) \rrbracket_{m(\mathscr{X})} && (\llbracket - \rrbracket_{m(\mathscr{X})} \circ \pi = \llbracket - \rrbracket_{m(T)}) \\
			 &= \llbracket \varepsilon^{-1}\mathscr{X} \rrbracket_{m(\mathscr{X})} && (\textnormal{Definition of } \pi) \\
			  &= \llbracket \mathscr{X} \rrbracket_{m(\mathscr{X})} && (\textnormal{Definition of } \llbracket - \rrbracket_{m(\mathscr{X})})
		\end{align*}
		\item $2. \rightarrow 1.:$ By \Cref{normalityandobservable} and \Cref{reachability}, $m(T)$ is normal and reachable. From the assumption and \Cref{minimallanguage} it follows $\llbracket m(T) \rrbracket = \llbracket m(\mathscr{X}) \rrbracket = \llbracket \mathscr{X} \rrbracket$. By assumption $\mathscr{X}$ is normal. By \Cref{uniquehomminimal} there thus exists a unique surjective automata homomorphism  $f \colon m(T) = r(m(T)) \rightarrow m(\mathscr{X})$ that satisfies $f(\row(s)) = w_s^{-1}\llbracket \mathscr{X} \rrbracket$ for $\row(\varepsilon) \overset{w_s}{\rightarrow} \row(s)$ in $m(T)$. From \Cref{reachability} it follows that $w_s = s$. Therefore the definitions of $\pi$ and $f$ coincide, $\pi = f$. Since by \Cref{injectivefunction} the function $\pi = f$ is injective, it is a bijective coalgebra homomorphism. Any bijective coalgebra homomorphism is a coalgebra isomorphism \cite[Prop. 2.3]{rutten2000universal}. It is clear that the inverse of an initial state preserving coalgebra isomorphism preserves initial states as well. It thus follows that $\pi = f$ is a $G$-automata isomorphism. \qedhere
	\end{itemize}
\end{proof}

The main argument in the proof of \Cref{correctnesstheorem} is \Cref{ifclosedthen}. To prove the latter, we need the following two results. Both results assume two closed deterministic tables $T$ and $T'$, with the latter extending the former. The first statement, \Cref{mT'impliesT}, relates the transition function of $m(T)$ to the one of $m(T')$.

\begin{lemma}
	\label{mT'impliesT}
	Let $T = (S, E, \row)$ and $T' = (S, E', \row')$ be closed deterministic observation tables with $E \subseteq E'$ and $\row(t)(e) = \row'(t)(e)$ for all $t \in S \cup S \cdot (\At \cdot \Sigma),\ e \in E$. Let $m(T)$ and $m(T')$ have transition functions $\delta$ and $\delta'$, respectively, then for all $s, t \in S$: 
	\begin{itemize}
		\item $\delta'(\row'(s))(\alpha) = 1$ iff $\delta(\row(s))(\alpha) = 1$
		\item $\delta'(\row'(s))(\alpha) = (p, \row'(t))$ implies $\delta(\row(s))(\alpha) = (p, \row(t))$ or \\$\delta(\row(s))(\alpha) = 0$
		\item $\delta'(\row'(s))(\alpha) = 0$ implies $\delta(\row(s))(\alpha) = 0$
	\end{itemize}
\end{lemma}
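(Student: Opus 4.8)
The plan is to exploit two structural facts throughout. First, $\row$ and $\row'$ agree on the shared columns, so by $\At \subseteq E$ (\Cref{observationtable}) the rows agree on every atom, and for each derivative the restriction of $\row'(s\alpha p)$ to $E$ equals $\row(s\alpha p)$. Second, both tables are deterministic, so the three clauses of \eqref{hypothesisdef} defining $\delta$ and $\delta'$ are mutually exclusive, and within the transition clause the action $p$ witnessing $(\alpha p)^{-1}\row'(s) \neq \emptyset$ is unique. I would record this as a preliminary observation: if $\alpha p e$ and $\alpha p' e'$ both lie in the deterministic language $\row'(s)$, then they agree on the first atom $\alpha$, so determinism forces $p = p'$; and if $\alpha \in \row'(s)$, then no $(\alpha q)^{-1}\row'(s)$ can be nonempty. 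The same holds verbatim for $\row(s)$.

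For the first bullet I would just unfold \eqref{hypothesisdef}: by mutual exclusivity $\delta(\row(s))(\alpha) = 1$ iff $\row(s)(\alpha) = 1$, and likewise $\delta'(\row'(s))(\alpha) = 1$ iff $\row'(s)(\alpha) = 1$; since $\alpha \in \At \subseteq E$ these two Boolean entries coincide, giving the equivalence. For the third bullet, $\delta'(\row'(s))(\alpha) = 0$ means $\row'(s)(\alpha) = 0$ together with $\row'(s\alpha p) = \emptyset$ for every $p$ (no column of $E'$ is hit). Restricting to $E$ and using agreement there yields $\row(s)(\alpha) = 0$ and $\row(s\alpha p)(e) = 0$ for every $p$ and every $e \in E$, i.e. $(\alpha p)^{-1}\row(s) = \emptyset$ for all $p$. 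Hence neither of the first two clauses of \eqref{hypothesisdef} applies, and $\delta(\row(s))(\alpha) = 0$.

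The second bullet is the real work, and where I expect the only genuine obstacle. Here $\delta'(\row'(s))(\alpha) = (p, \row'(t))$, so $\row'(s\alpha p) = \row'(t) \neq \emptyset$, and the preliminary observation gives $\row'(s)(\alpha) = 0$, hence $\row(s)(\alpha) = 0$ and $\delta(\row(s))(\alpha) \neq 1$. The subtlety is that the nonemptiness of $\row'(s\alpha p)$ may be witnessed only by columns in $E' \setminus E$, so that over $E$ the same row can collapse to all zeros; this is precisely why the conclusion is a disjunction. I would split on whether $\row(s\alpha p)$ carries a $1$ on $E$. If it does, then $(\alpha p)^{-1}\row(s) \neq \emptyset$ and the (unique) transition clause gives $\delta(\row(s))(\alpha) = (p, \row(s\alpha p))$; restricting the equality $\row'(s\alpha p) = \row'(t)$ to $E$ and using agreement on both $s\alpha p \in S \cdot (\At \cdot \Sigma)$ and $t \in S$ yields $\row(s\alpha p) = \row(t)$, so $\delta(\row(s))(\alpha) = (p, \row(t))$. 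If instead $\row(s\alpha p)$ is all zeros on $E$, I would conclude $\delta(\row(s))(\alpha) = 0$: for the witnessing action $p$ the derivative is empty over $E$ by assumption, and for every other action $p'$ the uniqueness of the action in $T'$ gives $\row'(s\alpha p') = \emptyset$ over $E'$, hence over $E$ after restriction; combined with $\row(s)(\alpha) = 0$, only the last clause of \eqref{hypothesisdef} can fire.

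Overall the argument is bookkeeping about which columns can witness a $1$, organised around the determinism of the rows and the inclusion $E \subseteq E'$; the single conceptual point is that shrinking the column set can turn a live transition into a dead one, which is exactly what the disjunctive form of the last two bullets records.
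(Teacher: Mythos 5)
Your proof is correct and follows essentially the same route as the paper's: the first bullet by unfolding the definition and using $\At \subseteq E$, the second by the row-agreement chain $\row(t)(e) = \row'(t)(e) = \row'(s\alpha p)(e) = \row(s\alpha p)(e)$ on shared columns, and the third by observing that a live transition in $m(T)$ would be witnessed by some $e \in E \subseteq E'$. The only cosmetic differences are that you argue the third bullet directly rather than by contradiction, and you make explicit the determinism-based uniqueness of the action $p$, which the paper's proof uses implicitly when it asserts that any transition in $m(T)$ must carry the same label $p$.
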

\begin{proof}
	\begin{itemize}
		\item For the first point we deduce:
		\begin{align*}
		 \delta'(\row'(s))(\alpha) = 1
					& \Leftrightarrow \row'(s)(\alpha) = 1 && (\textnormal{Definition of } \delta') \\
			& \Leftrightarrow \row(s)(\alpha) = 1 && (\alpha \in \At \subseteq E) \\
			& \Leftrightarrow  \delta(\row(s))(\alpha) = 1 && (\textnormal{Definition of } \delta)
		\end{align*} 
		
			\item For the second point, assume $\delta'(\row'(s))(\alpha) = (p, \row'(t))$ for $t \in S$ with $\row'(s \alpha p) = \row'(t)$. Then by the first point \[ \delta(\row(s))(\alpha) = 0, \quad \textnormal{or} \quad \delta(\row(s))(\alpha) = (p, \row(u)) \] for some $u \in S$ with $\row(s \alpha p) = \row(u)$. We further have
			 \[
			\row(t)(e) = \row'(t)(e) = \row'(s \alpha p)(e) = \row(s \alpha p)(e) = \row(u)(e) \]
			 for all $e \in E$. In other words, we have derived $\row(t) = \row(u)$. 
			 		\item For the last point, assume $\delta'(\row'(s))(\alpha) = 0$. Then by the first point \[ \delta(\row(s))(\alpha) = 0, \quad \textnormal{or} \quad \delta(\row(s))(\alpha) = (p, \row(t)) \] for some $t \in S$ with $\row(s \alpha p) = \row(t)$. By the definition of $\delta$, the latter case implies \[ \row'(s \alpha p)(e) = \row(s \alpha p)(e) = 1 \] for some $e \in E \subseteq E'$. It thus follows $\delta'(row'(s))(\alpha) \not \in 2$, which contradicts the assumption $\delta'(\row'(s))(\alpha) = 0$. We thus can conclude $\delta(\row(s))(\alpha) = 0$. \qedhere
			 	\end{itemize}
\end{proof}

The second statement, \Cref{languageinclusion}, establishes an inclusion of the language semantics of $m(T)$ into the language semantics of $m(T')$.

\begin{lemma}
\label{languageinclusion}
	Let $T = (S, E, \row)$ and $T' = (S, E', \row')$ be closed deterministic observation table with $E \subseteq E'$ and $\row(t)(e) = \row'(t)(e)$ for all $t \in S \cup S \cdot (\At \cdot \Sigma),\ e \in E$. Then $\llbracket \row(s) \rrbracket_{m(T)} \subseteq \llbracket \row'(s) \rrbracket_{m(T')}$ for all $s \in S$.
\end{lemma}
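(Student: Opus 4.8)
The plan is to prove the inclusion by induction on the length of a guarded string $w \in \GS$, establishing the universally quantified statement that for every $s \in S$, if $w \in \llbracket \row(s) \rrbracket_{m(T)}$ then $w \in \llbracket \row'(s) \rrbracket_{m(T')}$. Writing $\delta$ and $\delta'$ for the transition functions of $m(T)$ and $m(T')$, the two clauses of the acceptance definition \eqref{gkatautomatasemantics} will drive the two cases of the induction. For the base case $w = \alpha \in \At$, acceptance reduces to $\delta(\row(s))(\alpha) = 1$, and the first bullet of \Cref{mT'impliesT}, which is an equivalence, immediately yields $\delta'(\row'(s))(\alpha) = 1$, hence $\alpha \in \llbracket \row'(s) \rrbracket_{m(T')}$.

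For the inductive step $w = \alpha p v$, the hypothesis $\alpha p v \in \llbracket \row(s) \rrbracket_{m(T)}$ unfolds, via \eqref{gkatautomatasemantics}, to $\delta(\row(s))(\alpha) = (p, \row(t))$ for some $t \in S$ with $v \in \llbracket \row(t) \rrbracket_{m(T)}$. The crux is to promote this to $\delta'(\row'(s))(\alpha) = (p, \row'(t))$, after which the induction hypothesis applied at $t$ closes the argument. Here lies the main obstacle: \Cref{mT'impliesT} only provides implications in the direction from $T'$ to $T$, so I cannot read off the value of $\delta'(\row'(s))(\alpha)$ directly. Instead I would rule out the other two possibilities by contraposition: if $\delta'(\row'(s))(\alpha) = 1$ then the first bullet forces $\delta(\row(s))(\alpha) = 1$, and if $\delta'(\row'(s))(\alpha) = 0$ then the third bullet forces $\delta(\row(s))(\alpha) = 0$, both contradicting $\delta(\row(s))(\alpha) = (p, \row(t))$, which is a pair rather than an element of $2$.

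Thus $\delta'(\row'(s))(\alpha)$ must be of the form $(p', \row'(t'))$ with $t' \in S$ (using closedness of $T'$, so that the successor row is again indexed in $S$). The second bullet of \Cref{mT'impliesT} then gives $\delta(\row(s))(\alpha) = (p', \row(t'))$; comparing this with $\delta(\row(s))(\alpha) = (p, \row(t))$ yields $p = p'$ and $\row(t) = \row(t')$, whence $t = t'$ by the disjointness of rows indexed by $S$ guaranteed in \Cref{observationtable}. This identifies $\delta'(\row'(s))(\alpha) = (p, \row'(t))$, and invoking the induction hypothesis on $v$ at state $t$ gives $v \in \llbracket \row'(t) \rrbracket_{m(T')}$, so that $\alpha p v \in \llbracket \row'(s) \rrbracket_{m(T')}$ by \eqref{gkatautomatasemantics}.

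The only genuinely delicate point is this inversion of \Cref{mT'impliesT}; everything else is a routine unfolding of the semantics. I would take care that the case analysis cleanly exhausts the three possible shapes of $\delta'(\row'(s))(\alpha)$ and that the row-disjointness step is explicitly cited, since it is precisely what turns the equality of rows $\row(t) = \row(t')$ into equality of their $S$-indices $t = t'$, which is what the transition of $m(T')$ needs in order to land on the correct state $\row'(t)$.
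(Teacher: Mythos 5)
Your proof is correct and follows essentially the same route as the paper: induction on the length of $w$, with the base case read off from the first (biconditional) bullet of \Cref{mT'impliesT} and the inductive step obtained by transferring the transition from $m(T)$ to $m(T')$. The only difference is that you explicitly carry out the inversion of \Cref{mT'impliesT} by excluding the cases $\delta'(\row'(s))(\alpha) \in 2$ and then matching states via row-disjointness, whereas the paper compresses this into a single citation of the lemma; your added care here is sound and arguably an improvement in rigour.
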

\begin{proof}
We show $w \in \llbracket \row(s) \rrbracket_{m(T)} $ implies  $w \in \llbracket \row'(s) \rrbracket_{m(T')}$ for all $s \in S$ and $w \in \GS$ by induction on $w$. We denote the transition functions of $m(T)$ and $m(T')$ by $\delta$ and $\delta'$, respectively.
	\begin{itemize}
		\item For the induction base, assume $w = \alpha$. Then we deduce: 
		\begin{align*}
		\alpha \in \llbracket \row(s) \rrbracket_{m(T)} & \Leftrightarrow 	\delta(\row(s))(\alpha) = 1 && \textnormal{(Definition of } \llbracket - \rrbracket ) \\
		& \Leftrightarrow \delta'(\row'(s))(\alpha) = 1 && \textnormal{(\Cref{mT'impliesT}}) \\
		& \Leftrightarrow \alpha \in \llbracket \row'(s) \rrbracket_{m(T')} && \textnormal{(Definition of } \llbracket - \rrbracket )
		\end{align*}
				\item In the induction step, let $w = \alpha p v$. Then we have:
				\begin{align*}
					& \alpha p v \in \llbracket \row(s) \rrbracket_{m(T)} && \\
					 \Leftrightarrow &\   \textnormal{(Definition of } \llbracket - \rrbracket ) \\
					 & \exists t \in S: \delta(\row(s))(\alpha) = (p, \row(t)),\ v \in \llbracket \row(t) \rrbracket_{m(T)} \\
					 \Rightarrow &\  \textnormal{(\Cref{mT'impliesT}, IH)} \\
					&  \exists t \in S: \delta'(\row'(s))(\alpha) = (p, \row'(t)),\ v \in \llbracket \row'(t) \rrbracket_{m(T')}  \\
					\Leftrightarrow &\ \textnormal{(Definition of } \llbracket - \rrbracket ) \\
					& \alpha p v \in \llbracket \row'(s) \rrbracket_{m(T')} 
				\end{align*} \qedhere
		 \end{itemize}
\end{proof}

The next result shows that, if the oracle replies \emph{no} to an equivalence query and provides us with a counterexample $z$, then the table extended with the suffixes of $z$ can immediately be closed only if it is the first time such a situation occurs.

\begin{proposition}
\label{ifclosedthen}
 	Let $T = (S, E, \row)$ be a closed deterministic observation table with $\row(t)(e) = \llbracket \mathscr{X} \rrbracket(te)$ for all $t \in S \cup S \cdot (\At \cdot \Sigma),\ e \in E$. Let $\llbracket m(T) \rrbracket(z) \not = \llbracket \mathscr{X} \rrbracket(z)$ for some $z \in \GS$, and $T' = (S, E \cup \suff(z), \row')$ with $\row'(t)(e) = \llbracket \mathscr{X} \rrbracket(te)$. If $T'$ is closed, then $\row'(\varepsilon)(e) = 0$ for all $e \in E$, but $\row'(\varepsilon)(z') = 1$ for some $z' \in \suff(z)$. 
 \end{proposition}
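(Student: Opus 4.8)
The plan is to first pin down the \emph{direction} of the counterexample, and then to locate the single point at which the runs of $z$ in $m(T)$ and $m(T')$ diverge; closedness of $T'$ will then force that divergence to occur already at the initial state.

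First I would record that $m(T)$ agrees with $\mathscr{X}$ on all of $E$: by \Cref{consistent} the hypothesis is consistent with $T$, so $\llbracket m(T) \rrbracket(e) = \llbracket \row(\varepsilon)\rrbracket(e) = \row(\varepsilon)(e) = \llbracket\mathscr{X}\rrbracket(e)$ for every $e \in E$. In particular $z \notin E$, and since $\At \subseteq E$ the counterexample cannot be a single atom, so it has the form $z = \alpha_0 p_1 \cdots p_n \alpha_n$ with $n \geq 1$. Applying \Cref{languageinclusion} to $E \subseteq E'$ gives $\llbracket m(T)\rrbracket \subseteq \llbracket m(T')\rrbracket$, while consistency of $m(T')$ with $T'$ (again \Cref{consistent}) yields $\llbracket m(T')\rrbracket(z) = \row'(\varepsilon)(z) = \llbracket\mathscr{X}\rrbracket(z)$ because $z \in \suff(z) \subseteq E'$. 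Together with $\llbracket m(T)\rrbracket(z) \neq \llbracket\mathscr{X}\rrbracket(z)$ this forces $\llbracket\mathscr{X}\rrbracket(z) = 1$ and $\llbracket m(T)\rrbracket(z) = 0$. This already settles the second claim, with witness $z' = z$, since $\row'(\varepsilon)(z) = \llbracket\mathscr{X}\rrbracket(z) = 1$.

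For the first claim I would analyse the (unique) accepting run of $z$ in $m(T')$, say $\row'(\varepsilon) = \row'(s'_0) \xrightarrow{\alpha_0 p_1} \row'(s'_1) \to \cdots \to \row'(s'_n)$ with $s'_i \in S$ (the states of $m(T')$ are precisely the rows $\row'(s)$, $s \in S$), and run the same input letters $\alpha_i p_{i+1}$ in $m(T)$. By \Cref{mT'impliesT}, as long as the transitions agree the runs land in the parallel states $\row(s'_i)$, and the final halt-and-accept on $\alpha_n$ is preserved; hence if every transition matched we would obtain $z \in \llbracket m(T)\rrbracket$, contradicting $\llbracket m(T)\rrbracket(z) = 0$. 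Therefore there is a least index $j$ at which $\delta'(\row'(s'_j))(\alpha_j) = (p_{j+1}, \row'(s'_{j+1}))$ while the corresponding transition in $m(T)$ collapses to $\delta(\row(s'_j))(\alpha_j) = 0$, the only remaining alternative allowed by \Cref{mT'impliesT}.

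The decisive step is then to exploit closedness of $T'$ at the bottom row $t := s'_j \alpha_j p_{j+1} \in S \cdot (\At \cdot \Sigma)$. By \eqref{hypothesisdef}, $\delta(\row(s'_j))(\alpha_j) = 0$ means $\row(t) = (\alpha_j p_{j+1})^{-1}\row(s'_j)$ is empty on $E$, whereas $\delta'(\row'(s'_j))(\alpha_j) = (p_{j+1}, \cdot)$ means $\row'(t)$ is non-empty on $E'$. Closedness of $T'$ provides some $s \in S$ with $\row'(s) = \row'(t)$; but $\row'$ and $\row$ coincide on $E$, so $\row(s)$ is empty on $E$, and the invariant ``$\varepsilon \neq s \in S$ implies $\row(s)(e) = 1$ for some $e \in E$'' of \Cref{observationtable} forces $s = \varepsilon$. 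Consequently $\row'(\varepsilon) = \row'(t)$ is empty on $E$, i.e. $\row'(\varepsilon)(e) = 0$ for all $e \in E$, as required. I expect the main obstacle to be the run/breakpoint bookkeeping of the third paragraph --- matching the two runs state-by-state through \Cref{mT'impliesT} and verifying that the first mismatch is necessarily of the collapsing type $0$ rather than an early halt --- rather than the concluding closedness argument, which is short once the offending bottom row $t$ has been identified.
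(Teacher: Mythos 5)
Your proof is correct and follows essentially the same route as the paper's: rule out the wrongly-accepting direction via \Cref{languageinclusion} and \Cref{consistent}, locate the first divergence of the two runs via \Cref{mT'impliesT} (necessarily a collapse to $0$), and use the observation-table invariant that non-empty rows indexed by $S\setminus\{\varepsilon\}$ have a nonzero $E$-entry to force the target row to be $\row(\varepsilon)$. The only (harmless) deviation is that you witness the second claim directly with $z'=z$, whereas the paper exhibits the shorter suffix past the divergence point; both satisfy the stated existential.
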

\begin{proof}
	\begin{itemize}
		\item Assume $\llbracket \mathscr{X} \rrbracket(z) = 0$ and $\llbracket m(T) \rrbracket(z) = 1$. Since $\llbracket m(T) \rrbracket \subseteq \llbracket m(T') \rrbracket$ by \Cref{languageinclusion}, it follows $\llbracket m(T') \rrbracket(z) = 1$. From \Cref{consistent} and the global assumptions we thus can deduce 
		\[ 1 = \llbracket m(T') \rrbracket(z) = \llbracket \row'(\varepsilon) \rrbracket(z) = \row'(\varepsilon)(z) = \llbracket \mathscr{X} \rrbracket(z), \] which contradicts $0 = \llbracket \mathscr{X} \rrbracket(z)$.
		\item Assume $\llbracket \mathscr{X} \rrbracket(z) = 1$ and $\llbracket m(T) \rrbracket(z) = 0$. From \Cref{consistent} and the global assumptions we can deduce
		\[
		1 = \llbracket \mathscr{X} \rrbracket(z) = \row'(\varepsilon)(z) = \llbracket \row'(\varepsilon) \rrbracket(z) = \llbracket m(T') \rrbracket(z).
		\] By \Cref{mT'impliesT}, there exists some decomposition $z = v\alpha pz'$ for $v \in \GSM, z' \in \suff(z)$, such that for some $t \in S$:
		\begin{enumerate}
		\item $\row'(\varepsilon)  \xrightarrow{v} \row'(t)$ in $m(T')$ and $\row(\varepsilon)  \xrightarrow{v} \row(t)$ in $m(T)$;
		\item $\delta'(\row'(t))(\alpha) = (p, \row'(t'))$ for some $t' \in S$, but $\delta(\row(t))(\alpha) = 0$.
		\end{enumerate}
		For all $e \in E$ it follows \[ 0 = \row(t \alpha p)(e) = \row'(t \alpha p)(e) = \row'(t')(e) = \row(t')(e), \] since otherwise $\delta(\row(t))(\alpha) \not = 0$ by the definition of $\delta$. From \Cref{observationtable} we can deduce $t' = \varepsilon$. Since $(v \alpha p) z' = z \in \llbracket m(T') \rrbracket = \llbracket \row'(\varepsilon) \rrbracket$ and $\row'(\varepsilon)  \xrightarrow{v \alpha p} \row'(t') = \row'(\varepsilon)$, it follows $z' \in \llbracket \row'(\varepsilon) \rrbracket$ by the definition of $\llbracket - \rrbracket$. From \Cref{consistent} we can conclude $z' \in \row'(\varepsilon)$.\qedhere
	\end{itemize}
\end{proof}
 
 In consequence, an infinite chain of negative equivalence queries and immediately closed extended tables is impossible. Since fixing a closedness defect increases the size of $m(T)$, which by \Cref{injectivefunction} is bounded by the finite number of states in $m(\mathscr{X})$, we can deduce the correctness of \Cref{GlStaralgorithm}.

\begin{theorem}
\label{correctnesstheorem}
	If $\Cref{GlStaralgorithm}$ is instantiated with the language accepted by a finite normal automaton $\mathscr{X}$, then it terminates with a hypothesis isomorphic to $m(\mathscr{X})$.
\end{theorem}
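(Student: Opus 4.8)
The plan is to combine the structural results about $m(T)$ and $m(\mathscr{X})$ established so far into a single termination-plus-correctness argument. First I would record, via \Cref{welldefinedeverystep}, that on input $L = \llbracket \mathscr{X} \rrbracket$ the table $T = (S,E,\row)$ remains a well-defined deterministic observation table throughout the run, so that every hypothesis the algorithm builds is a legitimate $G$-automaton and all the lemmas of the previous subsection apply. Since the oracle answers membership queries by $\row(t)(e) = \llbracket \mathscr{X} \rrbracket(te)$, the injection $\pi$ of \Cref{injectivefunction} is available at every stage, giving $\vert S \vert = \vert m(T)\vert \le \vert m(\mathscr{X})\vert$; and because $\mathscr{X}$ is a \emph{finite} normal automaton, $m(\mathscr{X})$ is finite, so this is a fixed finite upper bound on $\vert S \vert$. (Here $\vert S \vert = \vert m(T)\vert$ because distinct elements of $S$ index distinct rows by \Cref{observationtable}.)

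Next I would argue termination in two layers. For the inner \textbf{while} loop, each closedness fix adds one element to $S$ and hence strictly increases $\vert m(T)\vert$; since this quantity is bounded by $\vert m(\mathscr{X})\vert$, the loop can run only finitely often and always returns a closed table. For the outer \textbf{repeat} loop I would classify each negative equivalence query by the behaviour of the closing phase that follows it: either that phase enlarges $S$, or the table extended by the counterexample's suffixes is \emph{immediately} closed and $S$ is untouched. Across the whole run the total number of elements ever added to $S$ is at most $\vert m(\mathscr{X})\vert$, so queries of the first kind can occur only finitely often. The crux is that queries of the second kind occur \emph{at most once}: here I invoke \Cref{ifclosedthen}, which says that if adding $\suff(z)$ immediately closes the table, then the initial row $\row(\varepsilon)$ was zero on all old columns but becomes $1$ on some new column $z' \in \suff(z)$. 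Since $E$ only grows and the entry $\row(\varepsilon)(z') = \llbracket \mathscr{X}\rrbracket(z')$ is fixed, after such a step $\row(\varepsilon)$ can never again be all-zero on the current $E$, so the hypothesis of \Cref{ifclosedthen} can be met only the first time. Consequently there are only finitely many negative equivalence queries, and the algorithm must eventually receive a \emph{yes} and halt.

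Finally, for correctness I would observe that upon termination the teacher has accepted the current hypothesis, i.e.\ $\llbracket m(T)\rrbracket = \llbracket \mathscr{X}\rrbracket$, while $\llbracket m(\mathscr{X})\rrbracket = \llbracket \mathscr{X}\rrbracket$ by \Cref{minimallanguage}, so $\llbracket m(T)\rrbracket = \llbracket m(\mathscr{X})\rrbracket$. The terminal table is closed and deterministic with $\row(t)(e) = \llbracket \mathscr{X}\rrbracket(te)$, and $\mathscr{X}$ is normal, so the hypotheses of \Cref{isolanguagequiv} are met; its implication from language equivalence to isomorphism then upgrades this to an actual $G$-automata isomorphism $\pi\colon m(T) \cong m(\mathscr{X})$, which is exactly the claim.

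I expect the main obstacle to be the termination bookkeeping rather than any single deep fact: one must argue carefully that enlarging $E$ with a counterexample cannot trigger an unbounded sequence of \emph{no}-answers during which $S$ never grows, and that the one-time escape clause of \Cref{ifclosedthen} genuinely cannot recur. The subtlety is that adding a counterexample to $E$ merely refines rows and does not, a priori, force a fresh closedness defect, so the finiteness of the run rests entirely on the monotonicity of $E$ together with the fixed, $\mathscr{X}$-determined value of $\row(\varepsilon)$ at the witnessing suffix $z'$.
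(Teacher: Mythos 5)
Your proposal is correct and follows essentially the same route as the paper's proof: well-definedness via \Cref{welldefinedeverystep}, the bound $\vert S\vert \le \vert m(\mathscr{X})\vert$ via \Cref{injectivefunction}, finiteness of closedness repairs, the "at most once" use of \Cref{ifclosedthen} to rule out an infinite chain of immediately-closed counterexample extensions, and finally \Cref{isolanguagequiv} to upgrade language equivalence to isomorphism. Your spelled-out argument for why the escape clause of \Cref{ifclosedthen} cannot recur (the entry $\row(\varepsilon)(z')=1$ persists since $E$ only grows) is exactly what the paper compresses into "applying \Cref{ifclosedthen} twice."
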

\begin{proof}
By \Cref{welldefinedeverystep}, $T$ is a well-defined deterministic observation table at every step.
	We continue by showing that the algorithm yields $m(\mathscr{X})$ in finitely many steps.
		By \Cref{injectivefunction} we have $\vert X \vert \leq \vert Y \vert$ for $X = \lbrace \row(s) \mid s \in S \rbrace$ and $Y = \lbrace w^{-1}\llbracket \mathscr{X} \rrbracket \mid w \in R(x) \rbrace$ at any point of the algorithm. Since $\mathscr{X}$ is finite, the state-space $Y$ of $m(\mathscr{X})$ is finite. At no point of the algorithm does the size of $X$ decrease. Resolving a closedness defect strictly increases the size of $X$. Hence a closedness defect can only occur finitely many times.
		The only way the algorithm could not terminate is thus an infinite chain of negative equivalence queries, for which the subsequent suffix-enriched table is immediately closed again. By applying \Cref{ifclosedthen} twice, one observes that such a case can not occur.
\end{proof}

\section{Comparison with Moore Automata}

\label{comparisonmoore}

How are the minimal GKAT automaton (\Cref{GLStarmT3}) and the minimal Moore automaton (\Cref{LStarmT4}) representing the guarded deterministic language \eqref{acceptedlanguageexample} related? Why should we learn the former, and not the latter? Are there optimizations for $\LStar$ that we could adapt for $\GLStar$? Those are the questions this section seeks to answer. 
 	
\subsection{Embedding of GKAT Automata}	

Comparing the GKAT automaton in \Cref{GLStarmT3} with the Moore automaton (with input alphabet $\At \cdot \Sigma$ and output alphabet $2^{\At}$, short $M$\emph{-automaton}) in \Cref{LStarmT4} suggests that the latter can be recovered from the former by adding a sink-state to make halting transitions explicit. \Cref{embeddinglanguage} below formalises this idea. 

For completeness, we first briefly recall the language semantics of Moore automata. Let $\mathscr{X} = (X, \langle \varepsilon, \delta  \rangle)$ be a $M$-coalgebra, where $MX = B \times X^A$, for an input alphabet $A$ and an output alphabet $B$. Then one can inductively define a function $\llbracket - \rrbracket: X \rightarrow B^{A^*}$ via $\llbracket x \rrbracket(\varepsilon) = \varepsilon(x)$ and $\llbracket x \rrbracket(av) = \llbracket \delta(x)(a) \rrbracket(v)$.
In particular, if $A = (\At \cdot \Sigma)$ and $B = 2^{\At}$ for finite sets $\At$ and $\Sigma$, then the former induces via currying a semantics function $\llbracket - \rrbracket: X \rightarrow \mathcal{P}((\At \cdot \Sigma)^* \cdot \At) = \mathcal{P}(\GS)$ that is defined by: 
\[
\label{Mooreautomataacceptance}
	\alpha \in \llbracket x \rrbracket \Leftrightarrow \varepsilon(x)(\alpha) = 1; \qquad
	\alpha p v \in \llbracket x \rrbracket \Leftrightarrow \delta(x)(\alpha p) = y \textnormal{ and } v \in \llbracket y \rrbracket.
\]

\begin{lemma}
\label{embeddinglanguage}
	Given a $G$-automaton $\mathscr{X} = (X, \delta, x)$, let $f(\mathscr{X}) := (X + \lbrace \star \rbrace, \langle \varepsilon, \partial \rangle, x)$ be the $M$-automaton with: 
	\begin{align*}
		\partial(x)(\alpha p) &:= \begin{cases}
		y & \textnormal{if } x \in X,\ \delta(x)(\alpha) = (p, y) \\
		\star & \textnormal{otherwise}
			\end{cases}\\
		\varepsilon(x)(\alpha) &:= \begin{cases}
		1 & \textnormal{if } x \in X,\ \delta(x)(\alpha) = 1 \\
		0 & \textnormal{otherwise}
	\end{cases}.
	\end{align*}
	Then $\llbracket x \rrbracket_{\mathscr{X}} = \llbracket x \rrbracket_{f(\mathscr{X})}$ for all $x \in X$, and $\llbracket \star \rrbracket_{f(\mathscr{X})} = \emptyset$. In particular, $\llbracket f(\mathscr{X}) \rrbracket_{f(\mathscr{X})} = \llbracket \mathscr{X} \rrbracket_{\mathscr{X}}$.
\end{lemma}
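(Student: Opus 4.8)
The claim is that the embedding $f$ from $G$-automata to $M$-automata (with input alphabet $\At \cdot \Sigma$ and output alphabet $2^{\At}$) preserves language semantics on the original states, sends the fresh state $\star$ to the empty language, and hence preserves the accepted language of the whole automaton. The natural approach is a straightforward induction on the length of a guarded string $w \in \GS$, comparing the two coalgebraic semantics $\llbracket - \rrbracket_{\mathscr{X}}$ (defined by \eqref{gkatautomatasemantics}) and $\llbracket - \rrbracket_{f(\mathscr{X})}$ (defined by the Moore semantics in \eqref{Mooreautomataacceptance}), since both are given by the same shape of recursion on $w$.

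The plan is to prove the three assertions in the order they are stated. First I would establish $\llbracket \star \rrbracket_{f(\mathscr{X})} = \emptyset$: this is immediate from the definition of $\partial$ and $\varepsilon$ on $\star$, since $\varepsilon(\star)(\alpha) = 0$ for all $\alpha$ (so no single-atom word is accepted) and $\partial(\star)(\alpha p) = \star$ for all $\alpha p$ (so $\star$ loops back to itself, never accepting). A small induction on the length of $w$ then gives $w \notin \llbracket \star \rrbracket_{f(\mathscr{X})}$ for all $w \in \GS$. Next I would prove $\llbracket x \rrbracket_{\mathscr{X}} = \llbracket x \rrbracket_{f(\mathscr{X})}$ for all $x \in X$ by induction on $|w|$. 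In the base case $w = \alpha$, both sides reduce to the condition $\delta(x)(\alpha) = 1$: the left by definition of the $G$-automaton semantics, the right because $\varepsilon(x)(\alpha) = 1 \Leftrightarrow \delta(x)(\alpha) = 1$ by construction. In the inductive step $w = \alpha p v$, I would unfold both definitions: $\alpha p v \in \llbracket x \rrbracket_{\mathscr{X}}$ iff $\delta(x)(\alpha) = (p,y)$ and $v \in \llbracket y \rrbracket_{\mathscr{X}}$, while $\alpha p v \in \llbracket x \rrbracket_{f(\mathscr{X})}$ iff $\partial(x)(\alpha p) = z$ and $v \in \llbracket z \rrbracket_{f(\mathscr{X})}$, and then apply the induction hypothesis.

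The one point requiring genuine care — and the main obstacle — is the inductive step when $\delta(x)(\alpha) \neq (p,y)$ for any $y$, i.e.\ when the $G$-automaton halts (outputs $0$ or $1$) on $\alpha$. In that case the $M$-automaton takes the transition $\partial(x)(\alpha p) = \star$, so I must check that the two semantics still agree. Here the left-hand side $\alpha p v \in \llbracket x \rrbracket_{\mathscr{X}}$ is false, since there is no $y$ with $\delta(x)(\alpha) = (p,y)$; and the right-hand side requires $v \in \llbracket \star \rrbracket_{f(\mathscr{X})}$, which is empty by the first part of the proof. Thus both sides are false and the equivalence holds. This is exactly where the already-established fact $\llbracket \star \rrbracket_{f(\mathscr{X})} = \emptyset$ does the work, which is why I would prove that identity first. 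Finally, the ``in particular'' statement follows by instantiating the pointwise equality at the initial state $x$: $\llbracket f(\mathscr{X}) \rrbracket_{f(\mathscr{X})} = \llbracket x \rrbracket_{f(\mathscr{X})} = \llbracket x \rrbracket_{\mathscr{X}} = \llbracket \mathscr{X} \rrbracket_{\mathscr{X}}$.
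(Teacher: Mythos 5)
Your proposal is correct and follows essentially the same route as the paper: an induction on the length of $w \in \GS$ comparing the two semantics, with the fact $\llbracket \star \rrbracket_{f(\mathscr{X})} = \emptyset$ discharging the case where the $G$-automaton halts but the $M$-automaton transitions to the sink. The only cosmetic difference is that the paper runs the two inductions simultaneously whereas you run them sequentially; since the $\star$-claim does not depend on the main claim, this is immaterial.
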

\begin{proof}
	We simultaneously show i) $w \not \in \llbracket \star \rrbracket_{f(\mathscr{X})}$ and ii) $w \in \llbracket x \rrbracket_{f(\mathscr{X})}$ iff $w \in \llbracket x \rrbracket_{\mathscr{X}}$, for all $w \in \GS$ and $x \in X$ by induction on the length of $w$.
	\begin{itemize}
		\item For the induction base assume $w = \varepsilon$, then we find for i):
		\begin{align*}
			\alpha \in \llbracket \star \rrbracket_{f(\mathscr{X})} 
			& \Leftrightarrow
			\varepsilon(\star)(\alpha) = 1 && \textnormal{(Definition of } \llbracket - \rrbracket_{f(\mathscr{X})}) \\
			& \Leftrightarrow 0 = 1 && \textnormal{(Definition of } \varepsilon) \\
			& \Leftrightarrow \textnormal{false} && (0 \not = 1)
		\end{align*}
		Similarly, we derive the following chain for ii): 
		\begin{align*}
	\alpha \in \llbracket x \rrbracket_{f(\mathscr{X})} 
	&\Leftrightarrow 
	\varepsilon(x)(\alpha) = 1 && \textnormal{(Definition of } \llbracket - \rrbracket_{f(\mathscr{X})}) \\
			&\Leftrightarrow \delta(x)(\alpha) = 1 && \textnormal{(Definition of } \varepsilon,\ x \in X) \\
		&\Leftrightarrow \alpha \in \llbracket x \rrbracket_{\mathscr{X}} && \textnormal{(Definition of } \llbracket - \rrbracket_{\mathscr{X}})
		\end{align*}
		\item For the induction step, let $w = \alpha p v$ for $v \in \GS$, then we deduce for i):
		\begin{align*}
			\alpha p v \in \llbracket \star \rrbracket_{f(\mathscr{X})} & \Leftrightarrow \partial(\star)(\alpha p) = y,\ v \in \llbracket y \rrbracket_{f(\mathscr{X})} &&  \textnormal{(Definition of } \llbracket - \rrbracket_{f(\mathscr{X})}) \\
			& \Leftrightarrow  \partial(\star)(\alpha p) = \star,\ v \in \llbracket \star \rrbracket_{f(\mathscr{X})} && \textnormal{(Definition of } \partial) \\
			& \Leftrightarrow \textnormal{false} && \textnormal{(IH)}
		\end{align*} 
		Analogously, we deduce for ii):
		\begin{align*}
		& \alpha p v \in \llbracket x \rrbracket_{f(\mathscr{X})} \\
		\Leftrightarrow &\ \textnormal{(Definition of } \llbracket - \rrbracket_{f(\mathscr{X})}) \\
		& \partial(x)(\alpha p) = y,\ v \in \llbracket y \rrbracket_{f(\mathscr{X})} \\
		\Leftrightarrow &\ \textnormal{(Definition of } \partial) \\
		& \delta(x)(\alpha) = (p,y),\ v \in  \llbracket y \rrbracket_{f(\mathscr{X})} \textnormal{ or } \delta(x)(\alpha) \in 2,\ v \in  \llbracket \star \rrbracket_{f(\mathscr{X})} \\
		\Leftrightarrow &\ \textnormal{(IH)} \\
			 & \delta(x)(\alpha) = (p,y),\ v \in  \llbracket y \rrbracket_{\mathscr{X}} \textnormal{ or false}   \\
			\Leftrightarrow &\ \textnormal{(Definition of } \llbracket - \rrbracket_{\mathscr{X}}) \\
& \alpha p v \in \llbracket x \rrbracket_{\mathscr{X}} 	\end{align*}
	\end{itemize}
In particular, $\llbracket f(\mathscr{X}) \rrbracket_{f(\mathscr{X})} = \llbracket x \rrbracket_{f(\mathscr{X})} = \llbracket x \rrbracket_{\mathscr{X}} = \llbracket \mathscr{X} \rrbracket_{\mathscr{X}}$.
 \end{proof}

As one would hope for, the above construction maps, up to isomorphism, the minimal GKAT automaton $m(\mathscr{X})$ to the minimal Moore automaton accepting the same language as $\mathscr{X}$.
  
\begin{corollary}
\label{minimalembeddingiso}
 	 	Let $\mathscr{X}$ be a normal $G$-automaton, then $f(m(\mathscr{X})) \cong m(f(\mathscr{X}))$ as $M$-automata.
 \end{corollary}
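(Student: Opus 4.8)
The plan is to argue via the uniqueness of the minimal Moore automaton. As in the classical Myhill--Nerode theory recalled in the preliminaries, the minimal (that is, reachable and observable) Moore automaton accepting a fixed generalised language $A^* \to B$ is unique up to isomorphism. Since $m(f(\mathscr{X}))$ is by construction the minimal Moore automaton accepting $\llbracket f(\mathscr{X}) \rrbracket$, it therefore suffices to show that $f(m(\mathscr{X}))$ is \emph{also} a minimal Moore automaton accepting the same language; the desired isomorphism $f(m(\mathscr{X})) \cong m(f(\mathscr{X}))$ is then immediate.

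First I would check that both automata accept the same language. On one side, $\llbracket m(f(\mathscr{X})) \rrbracket = \llbracket f(\mathscr{X}) \rrbracket = \llbracket \mathscr{X} \rrbracket$, using that the minimal Moore automaton is language equivalent to its input together with the fact that $f$ preserves semantics (\Cref{embeddinglanguage}). On the other side, $\llbracket f(m(\mathscr{X})) \rrbracket = \llbracket m(\mathscr{X}) \rrbracket = \llbracket \mathscr{X} \rrbracket$, again by \Cref{embeddinglanguage} and \Cref{minimallanguage}.

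The substance is to verify that $f(m(\mathscr{X}))$ is reachable and observable. Write the states of $m(\mathscr{X})$ as the derivatives $L = w^{-1}\llbracket \mathscr{X} \rrbracket$. By \Cref{minimallanguage} the Moore behaviour of such a state in $f(m(\mathscr{X}))$, computed through \Cref{embeddinglanguage}, equals $\llbracket L \rrbracket_{m(\mathscr{X})} = L$, whereas $\llbracket \star \rrbracket_{f(m(\mathscr{X}))} = \emptyset$. For observability: the derivatives are pairwise distinct because $m(\mathscr{X})$ is observable (\Cref{minimalnormalobservable}), and each is distinct from $\emptyset$ because $m(\mathscr{X})$ is normal, so all of its states are live; hence every state of $f(m(\mathscr{X}))$, including $\star$, carries a distinct behaviour. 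For reachability: the Moore transitions of $f(m(\mathscr{X}))$ restrict on the derivatives exactly to the $G$-transitions of $m(\mathscr{X})$, which is reachable (\Cref{reachableminimal}); and $\star$ is reachable since any live state $L$ admits an accepted guarded string, following which one arrives at a state accepting some atom $\alpha$, and then reading $\alpha p$ for any $p \in \Sigma$ is a transition not of the form $(p,-)$ and is therefore redirected to $\star$.

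The main obstacle is the observability argument, and specifically the role of normality: it is precisely the liveness of every state of $m(\mathscr{X})$ that keeps the sink $\star$ (whose behaviour is $\emptyset$) distinct from the genuine states. This is also the point at which the empty-language corner case deserves separate attention, since an automaton accepting $\emptyset$ would identify $\star$ with the unique dead state; in the intended (non-degenerate) setting this does not arise. Once reachability and observability are established, uniqueness of the minimal Moore automaton delivers $f(m(\mathscr{X})) \cong m(f(\mathscr{X}))$ at once.
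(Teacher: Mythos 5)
Your proposal is correct and follows essentially the same route as the paper's proof: establish language equivalence via \Cref{embeddinglanguage} and \Cref{minimallanguage}, show $f(m(\mathscr{X}))$ is observable (using observability and normality of $m(\mathscr{X})$ to keep the sink $\star$ apart) and reachable, then invoke uniqueness of the minimal Moore automaton. Your explicit flagging of the empty-language corner case is a detail the paper glosses over, but otherwise the arguments coincide.
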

\begin{proof}
From \Cref{embeddinglanguage} and \Cref{minimallanguage} we can deduce that $f(m(\mathscr{X}))$ accepts $\llbracket \mathscr{X} \rrbracket$, which is also accepted by $m(f(\mathscr{X}))$. 

 	By \Cref{minimalnormalobservable} $\llbracket - \rrbracket_{m(\mathscr{X})}$ is injective. As \Cref{minimalnormalobservable} and \Cref{reachableminimal} imply that $m(\mathscr{X})$ is normal and reachable, we know that $\llbracket - \rrbracket_{m(\mathscr{X})}$ never evaluates to the empty set. From \Cref{embeddinglanguage} we thus can deduce that $\llbracket - \rrbracket_{f(m(\mathscr{X}))}$ is injective. 
 	
 	It is not hard to see that if a state is reachable in $\mathscr{Y}$, then it is reachable in $f(\mathscr{Y})$. The element $\star$ is reachable in $f(\mathscr{Y})$ in particular if $\mathscr{Y}$ is reachable and normal. Hence, since $m(\mathscr{X})$ is reachable and normal by \Cref{reachableminimal} and \Cref{minimalnormalobservable}, respectively, $f(m(\mathscr{X}))$ is reachable.
 	
 	The automaton $f(m(\mathscr{X}))$ thus accepts the same language as $m(f(\mathscr{X}))$, is observable, and reachable. By uniqueness, $f(m(\mathscr{X}))$ and $m(f(\mathscr{X}))$ are thus isomorphic.
 \end{proof}
 
\subsection{Complexity Analysis} 

We now compare the worst-case complexities of $\LStar$ (\Cref{LStaralgorithm}) and $\GLStar$ (\Cref{GlStaralgorithm}) for learning automata representations of GKAT programs $e$. We are mainly interested in a bound to the number of membership queries to $\llbracket e \rrbracket$. The example runs in \Cref{lstarexamplerun} and \Cref{glstarexamplerun} seem to indicate that with respect to this aspect, $\GLStar$ performs better than $\LStar$. The result below confirms this intuition.

\begin{proposition}
\label{complexity}
 	\Cref{LStaralgorithm} requires at most $O(a * (\vert \At \vert * b))$ many membership queries to $\llbracket e \rrbracket$ for learning a $M$-automaton representation of $e$, whereas $\Cref{GlStaralgorithm} $ requires at most $O(a * (\vert \At \vert + b))$ many membership queries to $\llbracket e \rrbracket$ for learning a $G$-automaton representation of $e$, for some\footnote{Let $m$ be the maximum length of a counterexample and $n$ the size of the minimal Moore automaton accepting $\llbracket e \rrbracket$, then  $a = n * \vert \At \vert * \vert \Sigma \vert$ and $b = m * n$.} integers $a, b \in \mathbb{N}$. 
 \end{proposition}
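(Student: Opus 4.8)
The plan is to count the total number of membership queries to $\llbracket e \rrbracket$ in each algorithm as the product of the number of \emph{distinct cells} that ever appear in the observation table and the number of queries needed to fill a single cell. Since both $S$ and $E$ only ever grow, this product is bounded by (final number of rows)$\times$(final number of columns)$\times$(per-cell cost). The essential discrepancy is two-fold: in $\LStar$ each cell $\row(t)(e)$ stores an output in $B = 2^{\At}$ and therefore costs $|\At|$ queries to $\llbracket e \rrbracket$, whereas in $\GLStar$ each cell stores a single boolean $\llbracket e \rrbracket(te) \in 2$ and costs one query; moreover the column index set $E$ is initialised to $\{\varepsilon\}$ in $\LStar$ but to $\At$ in $\GLStar$. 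I will show both tables have $O(a)$ rows, that $|E| = O(b)$ for $\LStar$ and $|E| = O(|\At| + b)$ for $\GLStar$, and then multiply by the per-cell cost.

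First I would bound the number of rows. The rows of either table are indexed by $S \cup S\cdot(\At\cdot\Sigma)$, of cardinality at most $|S|(1 + |\At||\Sigma|)$. The set $S$ grows by one element each time a closedness defect is repaired, and by \Cref{injectivefunction} the number of distinct rows $\{\row(s) \mid s \in S\}$ never exceeds the number of states of $m(\mathscr{X})$. Combined with \Cref{minimalembeddingiso}, which identifies the minimal Moore automaton with $f(m(\mathscr{X}))$ and hence gives $n = |m(\mathscr{X})| + 1$, this yields $|S| = O(n)$ throughout the run, for both algorithms. Therefore the number of rows is $O(n|\At||\Sigma|) = O(a)$.

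Next I would bound $|E|$. Since a negative equivalence query is answered by adjoining the at most $m$ suffixes of the counterexample $z$, each failed equivalence query enlarges $E$ by at most $m$ elements. It then remains to bound the number of failed equivalence queries, and this is exactly where \Cref{ifclosedthen} enters: after a counterexample the suffix-enriched table cannot be immediately closed twice in succession, so (apart from one exceptional step) each counterexample forces a subsequent closedness defect and hence strictly increases $|S|$. As $|S| = O(n)$, the number of equivalence queries is $O(n)$, whence $|E| \le |E_0| + O(nm)$ with $|E_0| = 1$ for $\LStar$ and $|E_0| = |\At|$ for $\GLStar$, giving $|E| = O(b)$ and $|E| = O(|\At| + b)$ respectively.

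Finally I would combine the estimates. For $\LStar$ the query count is $(\text{rows}) \times (\text{columns}) \times |\At| = O(a)\cdot O(b)\cdot|\At| = O(a(|\At| b))$, and for $\GLStar$ it is $(\text{rows}) \times (\text{columns}) \times 1 = O(a)\cdot O(|\At| + b) = O(a(|\At| + b))$; here the determinism-driven filling of cells in $\GLStar$ only ever decreases the actual number of queries and so leaves the upper bound intact. I expect the main obstacle to be the rigorous $O(n)$ bound on the number of equivalence queries, as this is the single point where a plain counting argument does not suffice and one must invoke \Cref{ifclosedthen} to exclude an unbounded chain of immediately-closable suffix extensions; the reduction of the per-cell cost from $|\At|$ to $1$, together with the relocation of the $|\At|$ factor from a multiplier of $b$ to a summand, is precisely what separates the two bounds.
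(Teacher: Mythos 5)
Your proposal is correct and follows essentially the same route as the paper: bound the table size as (rows)\,$\times$\,(columns) via $|S| = O(n)$ and $|E| \le |E_0| + m \cdot (\text{number of counterexamples})$, bound the number of counterexamples by $O(n)$ using the closedness-defect argument together with \Cref{ifclosedthen}, and multiply by the per-cell cost of $|\At|$ versus $1$. The only cosmetic difference is that the paper invokes \Cref{ifclosedthen} only for $\GLStar$ (for $\LStar$ it simply asserts that every counterexample triggers a closedness defect), whereas you appeal to it for both algorithms; this does not affect the bounds.
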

  \begin{proof}
 	 	We first derive the maximum number of entries a table indexed by $S$ and $E$ can have during a run of $\LStar$ for generalised languages with input alphabet $A$ and output alphabet $B$. Since we use a suffix-strategy for the handling of counterexamples (opposed to a prefix-strategy), our presentation slightly differs from the one in \cite{angluin1987learning}. Let $k$ denote the cardinality of the alphabet $A$. The number of states of the minimal Moore automaton accepting the target language is referred to by $n$, and the maximum length of a counterexample by $m$. The size of $S$ is bounded by $n$. In the worst case, the sets $S$ and $S \cdot A$ are disjoint. The cardinality of $S \cup S \cdot A$ is thus bounded by $n + n * k$. The maximum number of strings in $E$ is $1 + m * (n-1)$. This is because $E$ is instantiated with $\varepsilon$, and only extended with suffixes of counterexamples. Each counterexample has at most $m$ suffixes, and there can only be $n-1$ counterexamples, since any counterexample leads to a closedness defect, resolving which increases the size of $S$, which is instantiated with $\varepsilon$, and bounded in size by $n$. A table can thus have at most $(n + n * k)*(1 + m * (n-1))$, or $ O(m * n^2 * k)$, many entries.

In the case of $A = (\At \cdot \Sigma)$ and $B = 2^{\At}$, each entry requires $\vert \At \vert$ many membership queries. Overall \Cref{LStaralgorithm} thus requires at most
\[ O(n * \vert \At \vert *  \vert \Sigma \vert * (\vert \At \vert * m * n)) \]
many membership queries to learn a deterministic guarded string language.

We now derive a bound to the number of membership queries $\GLStar$ requires. As before, let $m$ denote the maximum length of a counterexample, and $n$ the number of states of the minimal Moore automaton accepting the target language. The cardinality of the set $S$ is bounded by the number of states in the minimal GKAT automaton accepting the target language, which by \Cref{minimalembeddingiso} is $n-1$. The cardinality of $S \cup S \cdot (\At \cdot \Sigma)$ is thus bounded by $(n-1) + (n-1) * \vert \At \vert * \vert \Sigma \vert$. The maximum number of strings in $E$ is $\vert \At \vert + m * (n-1)$. This is because $E$ is instantiated with $\At$, and only extended with suffixes of counterexamples. Each counterexample has at most $m$ suffixes, and there can only be $n-1$ counterexamples. Indeed, assume there are $u > n-1$ counterexamples. Since resolving a closedness defect increases the size of $S$, which is instantiated with $\varepsilon$, and in size bounded by $n-1$, there can be at most $(n-1)-1=n-2$ counterexamples for which the extended table is not closed. Thus there must be at least $u-(n-2) = u-n+2 \geq n-n+2 = 2$ counterexamples for which the extended table is closed. This is a contradiction, since \Cref{ifclosedthen} implies that this can be the case for at most $1$ counterexample. A table can thus have at most 
$ ((n-1) + (n-1) * \vert \At \vert * \vert \Sigma \vert )* (\vert \At \vert + m * (n-1)) $
entries. Each entry requires one membership query. Overall \Cref{GlStaralgorithm} requires at most
\[
O(n * \vert \At \vert * \vert \Sigma \vert * (\vert \At \vert + m * n))
\]
many membership queries to learn a deterministic guarded string language. The statement follows by setting $a := n * \vert \At \vert * \vert \Sigma \vert$ and $b := m * n$. 
 \end{proof}
 
 The following result shows that for all integers $x, y$ greater than $2$, the product $x * y$ is strictly greater than the sum $x + y$. 
 
   \begin{lemma}
 \label{aba+b}
 	Let $a,b \in \mathbb{N}_{\geq 3}$, then $a * b > a + b$.
 \end{lemma}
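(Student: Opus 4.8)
The plan is to reduce the strict inequality $a*b > a+b$ to a more transparent factored form and then read off the bound from the hypothesis $a, b \geq 3$. The key algebraic observation is that the inequality $ab > a + b$ is equivalent, after adding $1$ to both sides, to $ab - a - b + 1 > 1$, and the left-hand side factors neatly as $(a-1)(b-1)$. Thus the whole claim is equivalent to showing $(a-1)(b-1) > 1$.

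First I would record this equivalence explicitly: for any integers (indeed reals) $a, b$ one has the identity $(a-1)(b-1) = ab - a - b + 1$, so that $a*b > a+b$ holds if and only if $(a-1)(b-1) > 1$. Next I would invoke the hypothesis: since $a \in \mathbb{N}_{\geq 3}$ we have $a - 1 \geq 2$, and likewise $b - 1 \geq 2$. As both factors are positive, multiplying the two bounds gives $(a-1)(b-1) \geq 2 \cdot 2 = 4$. Since $4 > 1$, we obtain $(a-1)(b-1) > 1$, and undoing the equivalence yields $a*b > a + b$, as desired.

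There is no real obstacle here; the only point requiring a moment's care is that the bound $a, b \geq 3$ is genuinely needed and cannot be weakened to $a, b \geq 2$, since for $a = b = 2$ one gets $(a-1)(b-1) = 1$, giving equality $a*b = a+b = 4$ rather than a strict inequality. So the single substantive step is ensuring both factors are at least $2$, which is exactly what the hypothesis $\mathbb{N}_{\geq 3}$ guarantees, and the strictness then comes for free from $4 > 1$.
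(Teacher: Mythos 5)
Your proof is correct, but it takes a genuinely different route from the paper. The paper proves the lemma by a double induction: an outer induction on $a \in \mathbb{N}_{\geq 3}$ whose base case $a = 3$ is itself handled by an inner induction on $b$, with each inductive step massaging the inequality by adding terms (e.g.\ $3(b+1) = 3b + 3 > (3+b)+3 > 3 + (b+1)$). Your argument instead rewrites the claim via the identity $(a-1)(b-1) = ab - a - b + 1$, reducing it to $(a-1)(b-1) > 1$, which follows immediately from $a-1, b-1 \geq 2$. Your approach buys several things: it is shorter, it avoids induction entirely (so it works verbatim for real $a, b \geq 3$, not just naturals), and it makes the role of the hypothesis transparent --- the boundary case $a = b = 2$ gives $(a-1)(b-1) = 1$ and hence equality, so the threshold $3$ is seen to be exactly what forces strictness. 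The paper's induction, by contrast, is more mechanical and obscures why the bound $3$ matters; its only advantage is that it uses nothing beyond successor arithmetic, which is immaterial here.
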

 \begin{proof}
 	We prove the statement by induction on $a \in \mathbb{N}_{\geq 3}$. 
 	\begin{itemize}
 	\item For the induction base, assume $a = 3$. We show that $ 3 * b > 3 + b$ for all $b \in \mathbb{N}_{\geq 3}$ by induction on $b$. In the induction base, $b = 3$, the statement immediately is implied by $3 * 3 = 9 > 6 = 3  + 3$. Assume the statement is true for some $b \geq 3$. The induction step follows from
 	\[
 	3 * (b + 1) = 3b + 3 > (3 + b) + 3 = 6 + b > 3 + (b + 1).
 	\]
 	\item Assume the statement is true for $a \geq 3$. The induction step follows from
 	\[
 		(a + 1) * b = (a * b) + b > (a + b) + b > (a + 1) + b. \qedhere
 	\] 
 	\end{itemize}
 \end{proof}
 
In fact, it is also not hard to see that the difference between $a * b$ and $a + b$ increases with the sizes of $a$ and $b$.

\begin{lemma}
	$a + b$ lies in $o(a * b)$
\end{lemma}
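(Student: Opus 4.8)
The plan is to reduce the little-o claim to an elementary computation of the ratio $\frac{a+b}{a \cdot b}$ and to show that this quotient can be made arbitrarily small. The key algebraic identity is that for positive integers $a, b$ we have
\[
\frac{a+b}{a \cdot b} = \frac{1}{a} + \frac{1}{b}.
\]
This immediately reveals that the quotient is governed entirely by the reciprocals of $a$ and $b$, each of which tends to zero as its argument grows, so no clever manipulation is needed beyond this splitting.

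From here I would give the standard $\varepsilon$-$N$ argument. Fix an arbitrary $\varepsilon > 0$ and set $N := \lceil 2/\varepsilon \rceil$. Then whenever $a, b \geq N$ one has $\frac{1}{a} \leq \frac{1}{N} \leq \frac{\varepsilon}{2}$ and likewise $\frac{1}{b} \leq \frac{\varepsilon}{2}$, so that
\[
\frac{a+b}{a \cdot b} = \frac{1}{a} + \frac{1}{b} \leq \varepsilon.
\]
Since $\varepsilon$ was arbitrary, the ratio $(a+b)/(a \cdot b)$ tends to $0$ as $a, b \to \infty$, which is precisely the assertion $a + b \in o(a \cdot b)$.

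The only point requiring care is to fix the meaning of little-o for this two-variable expression: I would interpret $a + b \in o(a \cdot b)$ as the statement that for every $\varepsilon > 0$ there is a threshold $N$ beyond which both variables must lie for the bound $a + b \leq \varepsilon \cdot (a \cdot b)$ to hold. There is essentially no genuine obstacle here—the estimate is exact rather than delicately asymptotic—so the main task is simply to state this convention clearly and not to conflate it with the single-variable inequality of \Cref{aba+b} that precedes it. I would also remark in passing that this result refines that lemma: not only is the product $a \cdot b$ eventually strictly larger than the sum $a + b$, but the sum becomes a vanishing fraction of the product as the sizes of $a$ and $b$ increase.
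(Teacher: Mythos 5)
Your proof is correct and follows essentially the same route as the paper: both split $\frac{a+b}{a\cdot b}$ into $\frac{1}{a}+\frac{1}{b}$ and bound each term by half the target via a threshold of roughly $\lceil 2/\varepsilon\rceil$ (the paper adds $1$ to the threshold so as to obtain a strict inequality, a cosmetic difference). Your explicit remark fixing the two-variable convention for little-o is a sensible addition that the paper leaves implicit.
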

\begin{proof}
	We need to show that for any positive real number $c > 0$, there exists a natural number $N$, such that for all natural numbers $a, b \geq N$, we have $a + b < c * (a * b)$, or equivalently, $\frac{a + b}{a * b} < c$.
	
Given $c > 0$, we define $N := \lceil \frac{2}{c} \rceil + 1$. Let $a, b \geq N > \frac{2}{c}$. Then it immediately follows that $\frac{c}{2} > \frac{1}{b}$ and $\frac{c}{2} > \frac{1}{a}$. Thus we can compute
\begin{align*}
	\frac{a + b}{a*b} = \frac{a}{a * b} + \frac{b}{a * b} = \frac{1}{b} + \frac{1}{a} < \frac{c}{2} + \frac{c}{2} = c.
\end{align*}
\end{proof}

The advantage of $\GLStar$ over $\LStar$ for learning deterministic guarded string languages in terms of membership queries thus increases with the number of atoms, which is exponential in the number of primitive tests, $\At \cong 2^{T}$. In applications to network verification, the number of tests, thus atoms, is typically quite large \cite{anderson2014netkat}. The difference between $\GLStar$ and $\LStar$ described in \Cref{complexity} is mainly due to a subtle play with the table indices, based on currying. It can be further increased by avoiding querying certain rows all together, taking into account the deterministic nature of the target language, as indicated in \Cref{glstarexamplerunsection}.

 \subsection{Optimized Counterexamples}
 
 \label{optimizedcounterexamplesec}
 
 In this section we present an optimization of $\GLStar$ that is based on a subtle refinement of \Cref{ifclosedthen}. We show that while \Cref{GlStaralgorithm} reacts to a negative equivalence query with counterexample $z \in \GS$ by adding columns for \emph{all} suffixes in $\suff(z)$, it is in fact enough to add columns for a smaller subset of suffixes $\suff(z') \subseteq \suff(z)$, for some $z' \in \suff(z)$ of minimal length. Our approach is inspired by the optimized counterexample handling method of Rivest and Schapire for $\LStar$ \cite{rivest1993inference}. To state \Cref{optimizedcounterexample}, we need to define the following set:
\begin{align*}
	A_z := \lbrace & z' \in \suff(z) \mid\ z = v \alpha p z',\ \row(\varepsilon)  \xrightarrow{v} \row(s_v),\ x  \xrightarrow{s_v} x_{s_v}, \\
	&\llbracket \row(s_v) \rrbracket(\alpha p z') \not = \llbracket x_{s_v} \rrbracket(\alpha p z') \rbrace
\end{align*}
Intuitively, $A_z$ contains all strictly-shorter suffixes of $z$ that witness a mismatch between the behaviour of the hypothesis and the target language.

 \begin{lemma}
 \label{optimizedcounterexample}
 	Let $T = (S, E, \row)$ be a closed deterministic observation table with $\row(t)(e) = \llbracket \mathscr{X} \rrbracket(te)$ for all $t \in S \cup S \cdot (\At \cdot \Sigma),\ e \in E$. Let $\llbracket m(T) \rrbracket(z) \not = \llbracket \mathscr{X} \rrbracket(z)$ for some $z \in \GS$, and 
 	$z' := \minn(A_z)$. If $T' = (S, E \cup \suff(z'), \row')$ with $\row'(t)(e) = \llbracket \mathscr{X} \rrbracket(te)$ is closed, then $\row'(\varepsilon)(e) = 0$ for all $e \in E$, but $\row'(\varepsilon)(z') = 1$.
 \end{lemma}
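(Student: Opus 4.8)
The plan is to mirror the proof of \Cref{ifclosedthen} while tracking the specific breakpoint suffix $z' = \minn(A_z)$, in the spirit of the Rivest--Schapire counterexample analysis. First I would check that $\minn(A_z)$ is well-defined, i.e. that $A_z \neq \emptyset$. Since $\At \subseteq E$ by \Cref{observationtable} and $m(T)$ is consistent with $T$ by \Cref{consistent}, the single-atom case is excluded: for $\alpha \in \At$ we have $\llbracket m(T) \rrbracket(\alpha) = \llbracket \row(\varepsilon) \rrbracket(\alpha) = \row(\varepsilon)(\alpha) = \llbracket \mathscr{X} \rrbracket(\alpha)$, so the counterexample $z$ must contain at least one action. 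Writing $z = \alpha p z''$ and taking $v = \varepsilon$ (so $s_v = \varepsilon$ and $x_{s_v} = x$), the defining condition of $A_z$ reduces to $\llbracket m(T) \rrbracket(z) \neq \llbracket \mathscr{X} \rrbracket(z)$, which holds by hypothesis; hence $z'' \in A_z$.

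Next, at $z' = \minn(A_z)$ with witnessing decomposition $z = v \alpha p z'$, I would determine the orientation of the mismatch. The global discrepancy has a fixed orientation: exactly as in the first bullet of \Cref{ifclosedthen}, \Cref{languageinclusion} (which gives $\llbracket m(T) \rrbracket \subseteq \llbracket m(T') \rrbracket$) rules out the hypothesis accepting a word the target rejects, forcing $\llbracket \mathscr{X} \rrbracket(z) = 1$ and $\llbracket m(T) \rrbracket(z) = 0$, and this orientation is shared by the $v = \varepsilon$ witness. A telescoping argument over the suffixes of $z$ then propagates the orientation down to the minimal witness: using that $\llbracket \row(s) \rrbracket$ agrees with the residual $s^{-1}\llbracket \mathscr{X} \rrbracket = \llbracket x_s \rrbracket$ on every $e \in E$ (by \Cref{consistent} together with the global assumption $\row(t)(e) = \llbracket \mathscr{X} \rrbracket(te)$), and that no strictly shorter suffix lies in $A_z$, I would conclude $\llbracket x_{s_v} \rrbracket(\alpha p z') = 1$ and $\llbracket \row(s_v) \rrbracket(\alpha p z') = 0$. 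Determinism of the languages involved then pins down the transition behaviour: the target takes a $p$-transition on $\alpha$ into a state accepting $z'$, whereas $\row(s_v)$ admits no $p$-transition accepting $z'$, so passing to $T' = (S, E \cup \suff(z'), \row')$ creates a genuinely new transition $\delta'(\row'(s_v))(\alpha) = (p, \row'(s_v \alpha p))$ with $\row'(s_v \alpha p)(z') = \llbracket \mathscr{X} \rrbracket(s_v \alpha p z') = 1$.

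Finally, I would conclude as in \Cref{ifclosedthen}. Using \Cref{mT'impliesT} to compare the transition structures of $m(T)$ and $m(T')$, and \Cref{observationtable} (only the $\varepsilon$-row may be identically zero on $E$), the target of the new transition must be the initial-state row, i.e. $\row'(s_v \alpha p) = \row'(\varepsilon)$, which yields $\row(\varepsilon)(e) = \row'(\varepsilon)(e) = 0$ for all $e \in E$. Because closedness of $T'$ leaves $S$ unchanged, this transition loops back to $\row'(\varepsilon)$, and tracing $z = (v \alpha p) z'$ through $m(T')$ with \Cref{consistent} gives $z' \in \llbracket \row'(\varepsilon) \rrbracket$, i.e. $\row'(\varepsilon)(z') = 1$.

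The main obstacle is the middle step: establishing that the minimal witness $z' = \minn(A_z)$ is the correct breakpoint when only the reduced column set $\suff(z')$ — rather than all of $\suff(z)$ as in \Cref{ifclosedthen} — is added. The delicate point is the interplay between the actual reading path $v$ and the canonical representative $s_v$, which reach the same hypothesis state but possibly different states of $\mathscr{X}$, compounded by the partial, halting transition structure of $G$-automata. I expect the clean resolution to be the telescoping argument over suffixes of $z$ sketched above: minimality guarantees that every strictly shorter reset already agrees, so the single distinguishing experiment $z'$ together with its own suffixes suffices to expose the defect — which is precisely what makes the optimization over \Cref{ifclosedthen} sound.
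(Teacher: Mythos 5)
Your skeleton matches the paper's proof: show $z \notin \At$ so that $A_z \neq \emptyset$ and $z' = \minn(A_z)$ is well-defined, fix the orientation of the mismatch at the witnessing decomposition $z = v\alpha p z'$, and finally force the target row of the offending transition to be $\row'(\varepsilon)$ via \Cref{observationtable}, which yields both conclusions. The lemmas you invoke (\Cref{consistent}, \Cref{languageinclusion}, \Cref{mT'impliesT}, \Cref{observationtable}) are exactly the ones the paper uses.

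There is, however, a genuine gap in the middle step --- precisely the one you label ``the main obstacle'' without actually resolving it. From $\llbracket x_{s_v}\rrbracket(\alpha p z') = 1$ and $\llbracket \row(s_v)\rrbracket(\alpha p z') = 0$ you conclude that passing to $T'$ ``creates a genuinely new transition''. That does not follow: by \Cref{mT'impliesT}, $\delta'(\row'(s_v))(\alpha) = (p, \row'(t))$ is perfectly compatible with $\delta(\row(s_v))(\alpha) = (p, \row(t))$ already holding in $m(T)$, with the transition merely leading to a state whose language does not contain $z'$. The paper must refute this sub-case explicitly, and does so by splitting on $z'$: if $z' \notin \At$, the decomposition one step further along exhibits a strictly shorter element of $A_z$, contradicting minimality (this is the telescoping idea you gesture at); but if $z' \in \At$, minimality gives nothing --- one instead needs $\At \subseteq E$ together with \Cref{consistent} to compute $0 = \llbracket\row(s_{v\alpha p})\rrbracket(z') = \row(s_{v\alpha p})(z') = \llbracket x_{s_v}\rrbracket(\alpha p z') = 1$. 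This base case is entirely absent from your proposal, and without it the argument does not close. Only after both branches are eliminated does \Cref{mT'impliesT} leave $\delta(\row(s_v))(\alpha) = 0$, whence every $E$-entry of $\row(s_v\alpha p)$ vanishes and the representative in $S$ must be $\varepsilon$. A smaller imprecision: the orientation at the minimal witness is not obtained by ``propagating'' the global orientation along the reading path (the states of $\mathscr{X}$ reached by $v$ and by $s_v$ need not coincide, as you yourself note); the paper instead rules out $\llbracket\row(s_v)\rrbracket(\alpha p z') = 1$ by a direct local computation with \Cref{languageinclusion} and \Cref{consistent}, the same trick as in the first bullet of \Cref{ifclosedthen} but applied at $s_v$ rather than at $\varepsilon$.
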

 \begin{proof}
We begin by showing that $z \not \in \At$. Let us assume the opposite, $z = \alpha \in \At \subseteq E$. In that case, it follows:
 \begin{align*}
 	\llbracket m(T) \rrbracket(z) &= \llbracket \row(\varepsilon) \rrbracket(\alpha) && \textnormal{(Definition of } \llbracket - \rrbracket,\ z) \\
 	&= \row(\varepsilon)(\alpha) && (\textnormal{\Cref{consistent}}) \\
 	&= \llbracket \mathscr{X} \rrbracket(\alpha) && (\row(t)(e) = \llbracket \mathscr{X} \rrbracket(te)) \\
 	&= \llbracket \mathscr{X} \rrbracket(z) && \textnormal{(Definition of } z)
 \end{align*}
 which is a contradiction. Thus there exists a decomposition $z = \varepsilon \alpha p z'$ for some $z' \in \suff(z)$. The former immediately implies that the set $A_z$ is non-empty. Hence the shortest suffix $z' := \minn(A_z)$ is well-defined. By construction, we have $\row(\varepsilon)  \xrightarrow{v} \row(s_v)$,\ $x  \xrightarrow{s_v} x_{s_v}$, and $\llbracket \row(s_v) \rrbracket(\alpha p z') \not = \llbracket x_{s_v} \rrbracket(\alpha p z')$.
 \begin{itemize}
 	\item Assume $\llbracket x_{s_v} \rrbracket(\alpha p z') = 0$ and $\llbracket \row(s_v) \rrbracket(\alpha p z') = 1$, then there exists $s_{v \alpha p} \in S$ with $\row(\varepsilon) \xrightarrow{v} \row(s_v)  \xrightarrow{\alpha p} \row(s_{v \alpha p})$ such that:
		\begin{align*}
			1 &= \llbracket \row(s_v) \rrbracket(\alpha p z') && (\textnormal{Assumption}) \\
			&= \llbracket \row(s_{v \alpha p}) \rrbracket(z')	&& (\textnormal{Definition of } \llbracket - \rrbracket) \\
		  &= \llbracket \row'(s_{v \alpha p}) \rrbracket(z') && (\textnormal{\Cref{languageinclusion}}) \\	  
		  &= \row'(s_{v \alpha p})(z') && (\textnormal{\Cref{consistent}}) \\
		  &= \row'(s_{v} \alpha p)(z') && (\textnormal{Definition of } \row') \\
		  &=  \llbracket \mathscr{X} \rrbracket(s_{v} \alpha p z') && (\row'(t)(e) = \llbracket \mathscr{X} \rrbracket(te)) \\
		  &= \llbracket x_{s_v} \rrbracket(\alpha p z') && (\textnormal{Definition of } \llbracket - \rrbracket)
		\end{align*} 
		  which contradicts $0 = \llbracket x_{s_v} \rrbracket(\alpha p z')$.
	 \item Assume $\llbracket x_{s_v} \rrbracket(\alpha p z') = 1$ and $\llbracket \row(s_v) \rrbracket(\alpha p z') = 0$, then there exists $s_{v \alpha p} \in S$ with $\row'(\varepsilon)  \xrightarrow{v} \row'(s_v)  \xrightarrow{\alpha p} \row'(s_{v \alpha p})$ such that:
	 \begin{align*}
	 	1 &= \llbracket x_{s_v} \rrbracket(\alpha p z') && (\textnormal{Assumption}) \\
	 	&= \llbracket \mathscr{X} \rrbracket(s_v \alpha p z')  && (\textnormal{Definition of } \llbracket - \rrbracket) \\
	 	&= \row'(s_{v} \alpha p)(z') && (\row'(t)(e) = \llbracket \mathscr{X} \rrbracket(te)) \\
	 	&= \row'(s_{v \alpha p})(z') && (\textnormal{Definition of } \row') \\
	 	&= \llbracket \row'(s_{v \alpha p}) \rrbracket(z') && (\textnormal{\Cref{consistent}}) \\
	 	&= \llbracket  \row'(s_{v}) \rrbracket(\alpha p z') && (\textnormal{Definition of } \llbracket - \rrbracket)
	 \end{align*}
	 By \Cref{mT'impliesT} there thus are two possibilities:
	 \begin{itemize}
	 \item Assume $\row(s_v)  \xrightarrow{\alpha p} \row(s_{v \alpha p})$, then there are two options:
	 	\begin{itemize}
	 	\item If $z' \not \in \At$, we find a contradiction to the minimality of $z'$ in $A_z$. 
	 	\item If $z' \in \At \subseteq E$, then:
	 	\begin{align*}
	 		0 &= \llbracket \row(s_v) \rrbracket(\alpha p z') && (\textnormal{Assumption}) \\
		&= \llbracket \row(s_{v \alpha p}) \rrbracket(z') && (\row(s_v)  \xrightarrow{\alpha p} \row(s_{v \alpha p})) \\
		&=   \row(s_{v \alpha p})(z') && (\textnormal{\Cref{consistent}}) \\
		&= 	\llbracket \mathscr{X} \rrbracket(s_{v \alpha p} z') && (\row(t)(e) = \llbracket \mathscr{X} \rrbracket(te)) \\
		&= \llbracket x_{s_{v \alpha p}} \rrbracket(z') && (\textnormal{Definition of } \llbracket - \rrbracket) \\
		&= \llbracket x_{s_{v}} \rrbracket(\alpha p z')  && (\textnormal{Definition of } \llbracket - \rrbracket)
	 	\end{align*}
	 	which is a contradiction to $\llbracket x_{s_v} \rrbracket(\alpha p z') = 1$.
	 	\end{itemize}
	 \item Assume $\delta(\row(s_v))(\alpha) = 0$, then we have, for all $e \in E$:
	 \begin{align*}
	 	0 &= \row(s_v \alpha p)(e) && (\delta(\row(s_v))(\alpha) = 0) \\
	 	&=  \row'(s_v \alpha p)(e) && (\row(t)(e) = \row'(t)(e)) \\
	 	&= \row'(s_{v \alpha p})(e) && (\textnormal{Definition of } \row') \\
	 	&= \row(s_{v \alpha p})(e) && (\row(t)(e) = \row'(t)(e))
	 \end{align*}
	 From \Cref{observationtable} it follows $s_{v \alpha p} = \varepsilon$, which implies the claim. \qedhere
	 \end{itemize}	 
 \end{itemize}
\end{proof}

  Let $z_0$ be the shortest suffix of $z$ and $z_i$ the suffix of $z$ of length $\vert z_{i-1}\vert + 1$. The suffix $\minn(A_z)$ can be computed in at most $\vert \suff(z) \vert - 1$ steps: verify whether $z_i \in A_z$, beginning with $z_0$; if positive, break and set $\minn(A_z) := z_i$, otherwise loop with $z_{i+1}$.
  
 For example, if $T$ is the closed table in \Cref{GLStarT2} with the corresponding hypothesis $m(T)$ in \Cref{GLStarmT2} and counterexample $z = b p \overline{b} q b$, then $z' = \minn(A_z) = \overline{b}qb$, since $b \not \in A_z$. \Cref{optimizedcounterexample} shows that, instead of adding columns for the two non-present suffixes $b p \overline{b} q b$ and  $\overline{b}qb$ of $z$, it is sufficient to add only one column for the single non-present suffix  $\overline{b}qb$ of $z'$. In this case, the counterexample $z$ is relatively short, thus the number of avoided columns small; in general, however, the advantage can be more significant.

 \section{Implementation}
 
 \begin{figure}
 \centering
 \begin{subfigure}[c]{.48\textwidth}
 \centering	
\begin{subfigure}[c]{.48\textwidth}
\centering
 \adjustbox{scale=0.7,center}{
	 	\begin{tikzpicture}
\begin{axis}[
	width = 22em,
    xlabel={$\vert T \vert = \vert \lbrace t_1,...,t_n \rbrace \vert$},
    ylabel={Membership queries to $\llbracket e \rrbracket$},
    xmin=1, xmax=9,
    ymin=1, ymax=8000000,
    xtick={1,2,3,4,5,6,7,8,9},
    ytick={500000,1000000,2500000,8000000},
    legend pos=north west,
]
 \addplot[
    mark=x,
    ]
    coordinates {
    (1,114)
    (2,444)
    (3,1752)
    (4,6960)
    (5,27744)
    (6,110784)
    (7,442752)
    (8,1770240)
    (9,7079424)
    };
\addplot[
    mark=square,
    ]
    coordinates {
    (1,26)
    (2,100)
    (3,392)
    (4,1552)
    (5,6176)
    (6,24640)
    (7,98432)
    (8,393472)
    (9,1573376)
    };

    \legend{$\LStar$,$\GLStar$}   
    
\end{axis}
\end{tikzpicture}
}
\label{yy}
\end{subfigure}
\caption{$e = \iif\ t_1\ \tthen\ \ddo\ p_1\ \eelse\ \ddo\ p_2$}
\label{performance_ife}
\end{subfigure}
\qquad
\begin{subfigure}[c]{.4\textwidth}
 \centering	
\begin{subfigure}[c]{.4\textwidth}
\centering
 \adjustbox{scale=0.7,center}{
	 	\begin{tikzpicture}
\begin{axis}[
	width = 22em,
    xlabel={$\vert T \vert = \vert \lbrace t_1,...,t_n \rbrace \vert$},
    ylabel={Membership queries to $\llbracket e \rrbracket$},
    xmin=1, xmax=9,
    ymin=1, ymax=8000000,
    xtick={1,2,3,4,5,6,7,8,9},
    ytick={500000,1000000,2500000,5000000, 8000000},
    legend pos=north west,
]
 \addplot[
    mark=x,
    ]
    coordinates {
    (1,78)
    (2,300)
    (3,1176)
    (4,4659)
    (5,18528)
    (6,73920)
    (7,295296)
    (8,1180416)
    (9,4720128)
    };
    
\addplot[
    mark=square,
    ]
    coordinates {
    (1,36)
    (2,102)
    (3,330)
    (4,1170)
    (5,4386)
    (6,16962)
    (7,66690)
    (8,264450)
    (9,1053186)
    };

    \legend{$\LStar$,$\GLStar$}   
    
\end{axis}
\end{tikzpicture}
}
\label{zz}
\end{subfigure}

\caption{$e = (\wwhile\ t_1\ \ddo\ p_1); \ddo\ p_2$}
\label{performance_while}
\end{subfigure}
\caption{A comparison between $\GLStar$ and $\LStar$ with respect to membership queries}
\label{comparisongraph}
 \end{figure}

 \begin{figure*}
 \centering
\begin{subfigure}[c]{.48\textwidth}
\centering
\footnotesize
\begin{tabular}{ c | c | r | r }
 $\vert \Sigma \vert$ & $\vert T \vert$ & $\GLStar$ & $\LStar$ \\
 \hline
3 & 1 & 26 & 114 \\
3 & 2 & 100 & 444 \\ 
3 & 3 & 392 & 1.752 \\
3 & 4 & 1.552 & 6.960 \\
3 & 5 & 6.176 & 27.744 \\
3 & 6 & 24.640 & 110.784 \\
3 & 7 & 98.432 & 442.752 \\
3 & 8 & 393.472 & 1.770.240 \\
3 & 9 & 1.573.376 & 7.079.424	\\
\hline
\end{tabular}
\caption{$e = \iif\ t_1\ \tthen\ \ddo\ p_1\ \eelse\ \ddo\ p_2$}
\end{subfigure}
\begin{subfigure}[c]{.48\textwidth}
\centering
\footnotesize
\begin{tabular}{ c | c | r | r }
 $\vert \Sigma \vert$ & $\vert T \vert$ & $\GLStar$ & $\LStar$ \\
 \hline
2 & 1 & 36 & 78 \\
2 &2 & 102 & 300 \\ 
2 &3 & 330 & 1.176 \\
2 &4 & 1.170 & 4.656 \\
2 &5 & 4.386 & 18.528 \\
2 &6 & 16.962 & 73.920 \\
2 &7 & 66.690 & 295.296 \\
2 &8 & 264.450 & 1.180.416 \\
2 & 9 & 1.053.186 & 4.720.128	\\
\hline
\end{tabular}
\caption{$e = (\wwhile\ t_1\ \ddo\ p_1); \ddo\ p_2$}
\end{subfigure}
\caption{The exact number of membership queries to $\llbracket e \rrbracket$ underlying the comparison between $\GLStar$ and $\LStar$ in \Cref{comparisongraph}}
 \label{ife_comparison_table}
 \end{figure*}

\begin{figure*}[t]
\centering
\adjustbox{scale=0.8}{
	\begin{subfigure}[b]{.35\textwidth}
	\centering
		\begin{tabular}{ l|c|c}
		 & $t_1$ &  $\overline{t_1}$ \\
		 \hline $\varepsilon$ & 0 & 0  \\
		 \hline $\overline{t_1}p_2$ & 1 & 1 \\
		 \hline
		\hline $t_1 p_1$ & 1 & 1 \\
				\hline $t_1 p_2$ & 0 & 0 \\
		\hline $t_1 p_3$ & 0 & 0 \\
		 \hline $\overline{t_1}p_1$ & 0 & 0 \\
		 		 \hline $\overline{t_1}p_3$ & 0 & 0 \\
		 \hline $\overline{t_1}p_2t_1p_1$ & 0 & 0 \\
		 \hline $\overline{t_1}p_2t_1p_2$ & 0 & 0 \\
		 \hline $\overline{t_1}p_2t_1p_3$ & 0 & 0 \\
		 \hline $\overline{t_1}p_2\overline{t_1}p_1$ & 0 & 0 \\
		 \hline $\overline{t_1}p_2\overline{t_1}p_2$ & 0 & 0 \\
		 \hline $\overline{t_1}p_2\overline{t_1}p_3$ & 0 & 0 \\

	\end{tabular}
	\caption{}
	\end{subfigure}
	\begin{subfigure}[b]{.6\textwidth}
	\centering
\begin{tikzpicture}[node distance=6em]
	\node[state, shape=rectangle, initial, initial text=,  label=below:{$\Rightarrow 0t_1 + 0\overline{t_1}$}] (x) {$\row(\varepsilon)$};
	\node[state, shape=rectangle, right of=x, above of=x, label=right:{$\Rightarrow  0t_1 + 0\overline{t_1}$}] (y) {$\star$};
	\node[state, shape=rectangle, right of=y, below of=y, label=below:{$\Rightarrow  1t_1 + 1\overline{t_1}$}] (z) {$\row(\overline{t_1}p_2)$};
	    \path[->]
	(x) edge[below] node{$t_1p_1, \overline{t_1}p_2$} (z)
	(y) edge[loop above] node{$\lbrace t_1, \overline{t_1} \rbrace \cdot \lbrace p_1, p_2, p_3 \rbrace$} (y)
	(z) edge[right] node{$\lbrace t_1, \overline{t_1} \rbrace \cdot \lbrace p_1, p_2, p_3 \rbrace$}(y)
	(x) edge[left] node{$t_1p_2, t_1p_3, \overline{t_1}p_1, \overline{t_1}p_3$} (y)
	;
\end{tikzpicture}
	\caption{}
	\label{concrete_glstar_automaton}
\end{subfigure}
}
\caption{For $e = \iif\ t_1\ \tthen\ \ddo\ p_1\ \eelse\ \ddo\ p_2$, $\vert \Sigma \vert = 3$, and $\vert T \vert  = 2$, our implementation of $\GLStar$ accepts the table in (a), which induces the automaton in (b).}
\label{concrete_run_glstar}
\end{figure*}

\begin{figure*}[t]
\centering
\adjustbox{scale=0.7}{
	\begin{subfigure}[b]{.5 \textwidth}
	\centering
		\begin{tabular}{ l|c|c|c}
		 & $\varepsilon$ & $\overline{t_1}p_1\overline{t_1}p_2$ & $\overline{t_1}p_2$ \\
		 \hline $\varepsilon$ & $0t_1 + 0\overline{t_1}$ & $0t_1 + 0\overline{t_1}$ & $1t_1 + 1\overline{t_1}$  \\
		 \hline $\overline{t_1}p_1$ & $0t_1 + 0\overline{t_1}$ & $0t_1 + 0\overline{t_1}$ & $0t_1 + 0\overline{t_1}$ \\
		 \hline $\overline{t_1}p_2$ & $1t_1 + 1\overline{t_1}$ & $0t_1 + 0\overline{t_1}$ & $0t_1 + 0\overline{t_1}$ \\
		 \hline \hline
		$t_1p_1$ & $1t_1 + 1\overline{t_1}$ & $0t_1 + 0\overline{t_1}$ & $0t_1 + 0\overline{t_1}$ \\
\hline $t_1p_2$ & $0t_1 + 0\overline{t_1}$ & $0t_1 + 0\overline{t_1}$ & $0t_1 + 0\overline{t_1}$ \\
\hline $t_1p_3$ & $0t_1 + 0\overline{t_1}$ & $0t_1 + 0\overline{t_1}$ & $0t_1 + 0\overline{t_1}$ \\
	 \hline $\overline{t_1}p_1 t_1p_1$ & $0t_1 + 0\overline{t_1}$ & $0t_1 + 0\overline{t_1}$ & $0t_1 + 0\overline{t_1}$ \\
		 \hline $\overline{t_1}p_1 t_1p_2$ & $0t_1 + 0\overline{t_1}$ & $0t_1 + 0\overline{t_1}$ & $0t_1 + 0\overline{t_1}$ \\
		 \hline $\overline{t_1}p_1 t_1p_3$ & $0t_1 + 0\overline{t_1}$ & $0t_1 + 0\overline{t_1}$ & $0t_1 + 0\overline{t_1}$ \\
		 \hline $\overline{t_1}p_1 \overline{t_1}p_1$ & $0t_1 + 0\overline{t_1}$ & $0t_1 + 0\overline{t_1}$ & $0t_1 + 0\overline{t_1}$ \\
		 \hline $\overline{t_1}p_1 \overline{t_1}p_2$ & $0t_1 + 0\overline{t_1}$ & $0t_1 + 0\overline{t_1}$ & $0t_1 + 0\overline{t_1}$ \\
		 \hline $\overline{t_1}p_1 \overline{t_1}p_3$ & $0t_1 + 0\overline{t_1}$ & $0t_1 + 0\overline{t_1}$ & $0t_1 + 0\overline{t_1}$ \\
			 \hline $\overline{t_1}p_2 t_1p_1$ & $0t_1 + 0\overline{t_1}$ & $0t_1 + 0\overline{t_1}$ & $0t_1 + 0\overline{t_1}$ \\
		 \hline $\overline{t_1}p_2 t_1p_2$ & $0t_1 + 0\overline{t_1}$ & $0t_1 + 0\overline{t_1}$ & $0t_1 + 0\overline{t_1}$ \\
		 \hline $\overline{t_1}p_2 t_1p_3$ & $0t_1 + 0\overline{t_1}$ & $0t_1 + 0\overline{t_1}$ & $0t_1 + 0\overline{t_1}$ \\
		 \hline $\overline{t_1}p_2 \overline{t_1}p_1$ & $0t_1 + 0\overline{t_1}$ & $0t_1 + 0\overline{t_1}$ & $0t_1 + 0\overline{t_1}$ \\
		 \hline $\overline{t_1}p_2 \overline{t_1}p_2$ & $0t_1 + 0\overline{t_1}$ & $0t_1 + 0\overline{t_1}$ & $0t_1 + 0\overline{t_1}$ \\
		 \hline $\overline{t_1}p_2 \overline{t_1}p_3$ & $0t_1 + 0\overline{t_1}$ & $0t_1 + 0\overline{t_1}$ & $0t_1 + 0\overline{t_1}$
	\end{tabular}
	\caption{}
	\end{subfigure}
	}
	\adjustbox{scale=0.7}{
	\begin{subfigure}[b]{.7 \textwidth}
	\centering
\begin{tikzpicture}[node distance=6em]
	\node[state, shape=rectangle, initial, initial text=,  label=below:{$\Rightarrow 0t_1 + 0\overline{t_1}$}] (x) {$\row(\varepsilon)$};
	\node[state, shape=rectangle, right of=x, above of=x, label=right:{$\Rightarrow  0t_1 + 0\overline{t_1}$}] (y) {$\row(\overline{t_1}p_1)$};
	\node[state, shape=rectangle, right of=y, below of=y, label=below:{$\Rightarrow  1t_1 + 1\overline{t_1}$}] (z) {$\row(\overline{t_1}p_2)$};
	    \path[->]
	(x) edge[below] node{$t_1p_1, \overline{t_1}p_2$} (z)
	(y) edge[loop above] node{$\lbrace t_1, \overline{t_1} \rbrace \cdot \lbrace p_1, p_2, p_3 \rbrace$} (y)
	(z) edge[right] node{$\lbrace t_1, \overline{t_1} \rbrace \cdot \lbrace p_1, p_2, p_3 \rbrace$}(y)
	(x) edge[left] node{$t_1p_2, t_1p_3, \overline{t_1}p_1, \overline{t_1}p_3$} (y)
	;
\end{tikzpicture}
	\caption{}
	\label{concrete_lstar_automaton}
\end{subfigure}
}
\caption{For $e = \iif\ t_1\ \tthen\ \ddo\ p_1\ \eelse\ \ddo\ p_2$, $\vert \Sigma \vert = 3$, and $\vert T \vert  = 2$, our implementation of $\LStar$ accepts the table in (a), which induces the automaton in (b).}
\label{concrete_run_lstar}
\end{figure*}

We have implemented both $\GLStar$ and $\LStar$ in OCaml \cite{Ocaml}; the code is available on GitHub\footnote{\url{https://github.com/zetzschest/gkat-automata-learning}}. The implementation allows one to compare, for any GKAT expression $e \in \Expr_{\Sigma, T}$, the number of membership queries to $\llbracket e \rrbracket$ required by $\GLStar$ for learning a $G$-automaton representation of $e$, with the number of membership queries to $\llbracket e \rrbracket$ required by $\LStar$ for learning a $M$-automaton representation of $e$. For each run, we output, for both algorithms, a trace of the involved hypotheses as tables in the \texttt{.csv} format and graphs in the \texttt{.dot} format, as well as an overview of the involved queries in the \texttt{.csv} format. 

In \Cref{performance_ife} we plot the results for the expression $\iif\ t_1\ \tthen\ \ddo\ p_1\ \eelse\ \ddo\ p_2$, the primitive actions $\Sigma = \lbrace p_1, p_2, p_3 \rbrace$, and primitive tests $T = \lbrace t_1,...,t_n \rbrace$ parametric in $n = 1,...,9$. We find that $\GLStar$ outperforms $\LStar$ for all choices of $n$. The difference in the number of membership queries increases with the size of $n$, as suggested by \Cref{complexity}. For $n = 9$ the number of atoms is $2^9$, resulting in an already relatively large number of queries for both algorithms.
The picture is similar in \Cref{performance_while}, where we choose the expression $(\wwhile\ t_1\ \ddo\ p_1); \ddo\ p_2$, the primitive actions $\Sigma = \lbrace p_1, p_2 \rbrace$, and primitive tests $T = \lbrace t_1,...,t_n \rbrace$ parametric in $n = 1,...,9$. Again, $\GLStar$ requires significantly less queries in all cases of $n$, and the difference increases with the size of $n$. The exact numbers of membership queries underlying the graphs in \Cref{comparisongraph} can be found in \Cref{ife_comparison_table}.

 In  \Cref{concrete_run_glstar} and \Cref{concrete_run_lstar} we give concrete examples of the tables and corresponding automata our implementations of $\GLStar$ and $\LStar$ deduce. It is instructive to recall \Cref{minimalembeddingiso} in this context: the embedding in \Cref{concrete_glstar_automaton} is clearly isomorphic to the minimal Moore automaton in \Cref{concrete_lstar_automaton}.  
  
Our implementation generates an oracle for $\LStar$ from a GKAT expression $e$ in the following way. First, we interpret $e$ as a KAT expression $\iota(e)$ via the standard embedding of GKAT into KAT. Next, we generate from the latter a Moore automaton $\mathscr{X}_{\iota(e)}$ accepting $\llbracket e \rrbracket$, by using Kozen's syntactic Brzozowksi derivatives for KAT \cite{kozen2017coalgebraic}. Finally, we answer an equivalence query from a Moore automaton $\mathscr{Y}$ by running a bisimulation between $\mathscr{X}_{\iota(e)}$ and $\mathscr{Y}$, similarly to \cite[Fig. 1]{pous2015symbolic}, and a membership query from $w \alpha \in \GS$ by returning the value of $\alpha$ at the output of the state in $\mathscr{X}_{\iota(e)}$ reached by $w$, that is, $\llbracket e \rrbracket(w \alpha)$. A membership query from $w \in \GSM$ is answered by querying $w \alpha \in \GS$ for all $\alpha \in \At$.

With the oracle for $\LStar$, we can derive an oracle for $\GLStar$ as follows. Membership queries $w \alpha \in \GS$ are delegated and answered by the oracle for $\LStar$ as explained above. An equivalence query from a GKAT automaton $\mathscr{Y}$ is answered by posing an equivalence query to the oracle for $\LStar$ with the Moore automaton $f(\mathscr{Y})$ obtained via the embedding defined in \Cref{embeddinglanguage}. If the oracle for $\LStar$ replies with a counterexample $z \in \GSM$, we extend $z$ with an $\alpha \in \At$ such that $\llbracket  \mathscr{Y} \rrbracket(z\alpha) \not = \llbracket e \rrbracket(z\alpha)$.

\section{Related Work}
\label{relatedwork}

GKAT is a variation on KAT \cite{kozen1996kleene} that one obtains by restricting the union and iteration operations from KAT to guarded versions. While GKAT is less expressive than KAT, term equivalence is notably more efficiently decidable \cite{smolka2019guarded,kozen1996kleene}, making it a candidate for the foundations of network-programming \cite{smolka2019scalable,anderson2014netkat,foster2015coalgebraic}

GKAT automata appear in the literature already prior to \cite{smolka2019guarded}, e.g. in the work of Kozen \cite{kozen2008bohm} under the name \emph{strictly deterministic automata}. In the latter, Kozen states that GKAT automata correspond to a limited class of \emph{automata with guarded strings (AGS)} \cite{kozen2001automata}, for which he gives determinisation and minimisation constructions. In a different paper \cite{kozen2017coalgebraic} Kozen introduces a second definition of (deterministic) AGS as Moore automata, and states the difference to the definition in \cite{kozen2001automata} is inessential. 

Recently, a new perspective on the semantics and coalgebraic theory of GKAT has been given in terms of coequations \cite{schmid2021guarded,dahlqvist2021write}. Using the Thompson construction, it is possible to construct for every expression $e$ a language equivalent automaton $\mathscr{X}_e$. In \cite{kozen2008bohm} it was shown that in general there is no reverse construction: there exists a GKAT automaton that is inequivalent to $\mathscr{X}_e$ for all expressions $e$. In consequence, \cite{smolka2019guarded} proposed a subclass of \emph{well-nested} automata and showed that every finite well-nested automaton is bisimilar to $\mathscr{X}_e$ for some $e$. In \cite{schmid2021guarded} it was shown that well-nestedness is in fact too restrictive: there exists an automaton that is bisimilar to $\mathscr{X}_e$ for some $e$, but not well-nested. 
To capture the \emph{full} class of automata exhibiting the behaviour of expressions, one has to extend the class of well-nested automata to the class of automata satisfying the \emph{nesting coequation}, which forms a \emph{covariety} \cite{dahlqvist2021write} (cf. \Cref{reachablesubcoalgebra}) . 

Regular inference, or active automata learning, is a technique used for deriving a model from a black-box by interacting with it via observations. The original algorithm  $\LStar$ by Angluin \cite{angluin1987learning} learns deterministic finite automata, but since then has been extended to other classes of automata \cite{angluin1997learning,aarts2010learning,moerman2017learning}, including Moore automata, as we have seen in \Cref{intro_sec_othertypes}. Typically, algorithms such as $\LStar$ are designed to output for a given language a unique minimal acceptor. Not all classes admit a canonical minimal acceptor, for instance, learning non-deterministic models is a challenge \cite{denis2001residual,bollig2009angluin,zetzsche2021,van2020learning}, as is shown in detail in \Cref{canonical_automata_chapter}.

\section{Discussion and Future Work}

We have presented $\GLStar$, an algorithm for learning the GKAT automaton representation of a black-box, by observing its behaviour via queries to an oracle. We have shown that for every normal GKAT automaton there exists a unique size-minimal normal automaton, accepting the same language: its minimisation. We have identified the minimisation with an alternative but equivalent construction, and derived its preservation of the nesting coequation. A central result showed that if the oracle in $\GLStar$ is instantiated with the language accepted by a finite normal automaton, then $\GLStar$ terminates with its minimisation. A complexity analysis showed the advantage of $\GLStar$ over $\LStar$ for learning automata representations of GKAT programs in terms of membership queries. We discussed additional optimizations, and implemented $\GLStar$ and $\LStar$ in OCaml to compare their performances on example programs. 

There are numerous directions in which the present work could be further explored.
In \Cref{optimizedcounterexamplesec} we introduced an optimization for $\GLStar$ which is inspired by Rivest and Schapire's counterexample handling method for $\LStar$ \cite{rivest1993inference}. The \textit{oberservation pack} algorithm for $\LStar$ \cite{howar2012active} has successfully combined Rivest and Schapire's method with an efficient \textit{discrimination tree} data structure \cite{kearns1994introduction}. The state-of-the-art \textit{TTT}-algorithm \cite{isberner2014ttt} for $\LStar$ extends the former with discriminator finalization techniques. It thus is natural to ask whether for $\GLStar$ there exist similar data structures, potentially exploiting the deterministic nature of the languages accepted by GKAT automata.

While $\LStar$ has seen major improvements over the years and has inspired numerous variations for different types of transition systems, all approaches remain in common their focus on the \emph{equivalence} of observations. The recently presented $\Lsharp$ algorithm \cite{vaandrager2021new} takes a different perspective: it instead focuses on \emph{apartness}, a constructive form of inequality. $\Lsharp$ does not require data-structures such as observation tables or discrimination trees, instead operating directly on tree-shaped automata. It remains open whether a similar shift in perspective is feasible for $\GLStar$.

There exist various domain-specific extensions of KAT (e.g. KAT+B! \cite{grathwohl2014kat+}, NetKAT \cite{anderson2014netkat}, ProbNetKAT \cite{foster2016probabilistic}), and similar directions have been proposed for GKAT. It has been noted that GKAT is better fit for probabilistic domains than KAT, as it avoids mixing non-determinism with probabilities \cite{smolka2019scalable}. Due to its foundations in KAT, NetKAT's decision procedure is PSPACE-complete, thus hindering verification to scale.  In contrast to KAT, the equational theory of GKAT is decidable in (almost) linear time, making GKAT an interesting alternative candidate for the foundations of a SDN programming language like NetKAT. While there currently exists no explicit automata learning algorithm for NetKAT in the style of $\LStar$, there is work closely related \cite{smolka2015fast, foster2015coalgebraic}. Generally, we expect that in the future, for any extension of GKAT, there will be interest in developing the corresponding automata (learning) theories.

\chapter{Canonical Automata}

\label{canonical_automata_chapter}

In \Cref{gkatsection} we explored the automata theory of GKAT and presented $\GLStar$, an algorithm for learning a GKAT automaton by observing the behaviour of a black-box. The design of $\GLStar$ started with us assigning to a GKAT automaton a language equivalent second automaton -- its minimisation. The algorithm itself has been a derivation of this construction: for a given language, it iteratively refines approximations of the \emph{canonical} minimal target model we have proven to exist. In this chapter, we will investigate canonical target models of more general type.

The classical powerset construction is a standard method that is used to convert a non-deterministic automaton into a deterministic one recognising the same language. In \cite{silva2010generalizing}, the powerset construction has been lifted to a more general framework that converts an automaton with side-effects, given by a monad, into a deterministic automaton accepting the same language. The resulting automaton has additional algebraic properties, in the state space and transition structure, inherited from the monad. We will study the reverse construction and present a framework in which a deterministic automaton with additional algebraic structure over a given monad can be converted into an equivalent succinct automaton with side-effects. Apart from recovering examples from the literature, such as the canonical residual finite-state automaton and the \'atomaton, we discover a new canonical automaton for a regular language by relating the free vector space monad over the two element field to the neighbourhood monad. Finally, we show that every regular language satisfying a suitable property parametric in two monads admits a size-minimal succinct acceptor.  

\section{Introduction}

The existence of a unique minimal \emph{deterministic} acceptor is an important property of regular languages. Establishing a similar result for \emph{non-deterministic} acceptors is of great practical importance, as non-deterministic automata can be exponentially more succinct than deterministic ones. Unfortunately, a regular language can be accepted by several size-minimal NFAs that are not isomorphic. In \Cref{example_nonisomorphic_nfa_canonical_automata_chapter} we give one example of such behaviour.
A number of sub-classes of non-deterministic automata have been identified in the literature to tackle this issue, which all admit canonical representatives: the \emph{\'atomaton}~\cite{BrzozowskiT14}, the \emph{canonical residual finite-state automaton} (short \emph{canonical RFSA} and also known as \emph{jiromaton})~\cite{denis2001residual}, the \emph{minimal xor automaton}~\cite{VuilleminG210}, and the \emph{distromaton}~\cite{MyersAMU15}. 

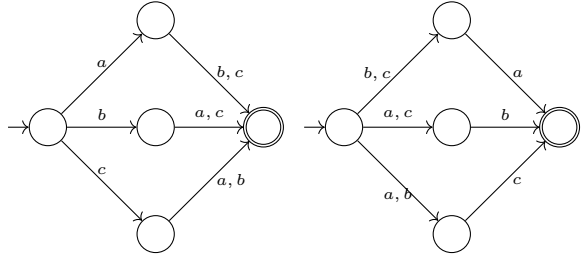
\begin{figure}
\tiny
\centering
	\begin{tikzpicture}[node distance=5.5em]
	\node[state, initial, initial text=, minimum size=1.9em] (q1) {};
	\node[state, right of=q1, above of=q1, minimum size=1.9em] (q2) {};
		\node[state, below of=q2, minimum size=1.9em] (q3) {};
		\node[state, below of=q3, minimum size=1.9em] (q4) {};
		\node[state, right of=q3, minimum size=1.9em, accepting] (q5) {};
	    \path[->]
	(q1) edge[above] node{$a$} (q2)
		(q1) edge[above] node{$b$} (q3)
		(q1) edge[above] node{$c$} (q4)
		(q2) edge[right] node{$b,c$} (q5)
		(q3) edge[above] node{$a,c$} (q5)
		(q4) edge[right] node{$a,b$} (q5)
        ;
\end{tikzpicture}
\begin{tikzpicture}[node distance=5.5em]
	\node[state, initial, initial text=, minimum size=1.9em] (q1) {};
	\node[state, right of=q1, above of=q1, minimum size=1.9em] (q2) {};
		\node[state, below of=q2, minimum size=1.9em] (q3) {};
		\node[state, below of=q3, minimum size=1.9em] (q4) {};
		\node[state, right of=q3, minimum size=1.9em, accepting] (q5) {};
	    \path[->]
	(q1) edge[left] node{$b,c$} (q2)
		(q1) edge[above] node{$a,c$} (q3)
		(q1) edge[below] node{$a,b$} (q4)
		(q2) edge[right] node{$a$} (q5)
		(q3) edge[above] node{$b$} (q5)
		(q4) edge[right] node{$c$} (q5)
        ;
\end{tikzpicture}
\caption{Two non-isomorphic size-minimal NFA accepting the language $\lbrace ab,ac,$ $ba,bc,ca,cb \rbrace \subseteq \lbrace a, b, c \rbrace^*$ \cite{arnold1992note}}
\label{example_nonisomorphic_nfa_canonical_automata_chapter}
\end{figure}

In this chapter we provide a general categorical framework that unifies constructions of canonical non-deterministic automata and unveils new ones. Our framework adopts the well-known representation of side-effects via monads (\Cref{def:monad}) to generalise non-determinism in automata. For instance, an NFA (without initial states) can be represented as a pair $( X,k ) $, where $X$ is the set of states and 
$
	k \colon X \to 2 \times \Pow(X)^A
$
combines the function classifying each state as accepting or rejecting with the function giving the set of next states for each input. The powerset forms a monad $( \Pow, \{-\}, \mu )$, where $\{-\}$ creates singleton sets and $\mu$ takes the union of a set of sets. This allows describing the classical powerset construction, converting an NFA into a DFA, in categorical terms as depicted on the left of \Cref{gen-det-diagrams}, where $k^\sharp \colon \Pow(X) \to 2 \times \Pow(X)^A$ represents an equivalent DFA\footnote{The classical powerset determinisation of $k = \langle \varepsilon, \delta \rangle: X \rightarrow 2 \times \mathcal{P}(X)^A$ is $k^{\sharp} = \langle \varepsilon^{\sharp}, \delta^{\sharp} \rangle: \mathcal{P}(X) \rightarrow 2 \times \mathcal{P}(X)^A$, where $\varepsilon^{\sharp}(U) = \vee_{x \in U} \varepsilon(x)$ and $\delta^{\sharp}(U)(a) = \cup_{x \in U} \delta(x)(a)$.}, obtained by taking the subsets of $X$ as states and $\langle \varepsilon, \delta \rangle : 2^{A^*} \rightarrow 2 \times (2^{A^*})^A$ is the automaton of languages. There then exists a unique automaton homomorphism $\obs$, assigning a language semantics to each set of states.

As seen on the right of \Cref{gen-det-diagrams} this perspective further enables a \emph{generalised determinisation} construction \cite{silva2010generalizing}, where $2 \times (-)^A$ is replaced by any (suitable) functor $F$ describing the automaton structure, and $\mathcal{P}$ by a monad $T$ describing the automaton side-effects.
In this picture, $\Omega \xrightarrow{\omega} F \Omega$ is the final coalgebra (\Cref{def:finalcoalgebra}), providing a semantic universe that generalises the automaton of languages.

Our work starts from the observation that the deterministic automata resulting from this generalised determinisation constructions have \emph{additional algebraic structure}: the state space $\Pow(X)$ of the determinised automaton defines a free complete join-semilattice (CSL) over $X$, and $k^\sharp$ and $\obs$ are CSL homomorphisms. More generally, $TX$ defines a (free) algebra for the monad $T$, and $k^\sharp$ and $\obs$ are $T$-algebra homomorphisms (\Cref{def:algebraovermonad}).

With this observation in mind, our question is: can we exploit the additional algebraic structure to ``reverse'' these constructions? In other words, can we convert a deterministic automaton with additional algebraic structure over a given monad to an equivalent succinct automaton with side-effects, possibly over another monad? To answer this question, we make the following contributions:

\begin{figure*}[t]
\centering
	\begin{tikzcd}[ ampersand replacement=\&]
		X \ar{d}[swap]{k} \ar{r}{\{-\}} \&
			\mathcal{P}(X) \ar{dl}{k^\sharp} \ar[dashed]{r}{\obs} \&
			2^{A^*} \ar{d}{\langle \varepsilon, \delta \rangle} \\
		2 \times \mathcal{P}(X)^A \ar[dashed]{rr}[below]{2 \times \obs^A} \&
			\&
			2 \times (2^{A^*})^A
	\end{tikzcd}
	\qquad
			\begin{tikzcd}[ampersand replacement=\&]
			X \ar{d}[swap]{k} \ar{r}{\eta} \&
			TX \ar{dl}{k^\sharp} \ar[dashed]{r}{\obs} \&
				\Omega \ar{d}{\omega} \\
			FTX \ar[dashed]{rr}[below]{F\obs} \&
				\&
				F\Omega
		\end{tikzcd}.
\caption{Generalised determinisation of automata with side-effects in a monad}
\label{gen-det-diagrams}
\end{figure*}

\begin{itemize}
	\item We present a categorical framework based on bialgebras (\Cref{bialgebradef}) and distributive law homomorphisms (\Cref{distributivelawhomdef}) that allows deriving canonical representatives for a wide class of succinct automata with side-effects in a monad. 		
	\item We strictly improve the expressivity of previous work \cite{HeerdtMSS19, arbib1975fuzzy}: our framework instantiates not only to well-known examples such as the canonical RFSA (\Cref{canonicalrfsaexample}) and the minimal xor automaton (\Cref{minimalxorexmaple}), but also includes the \'atomaton (\Cref{atomatonexample}) and the distromaton (\Cref{distromatonexample}), which were not covered in \cite{HeerdtMSS19, arbib1975fuzzy}.
 While other frameworks restrict themselves to the category of sets \cite{HeerdtMSS19}, we are able to include canonical acceptors in other categories, such as the \textit{canonical nominal RFSA} (\Cref{nominalexample}). 
	\item We relate vector spaces over the unique two element field with complete atomic Boolean algebras and consequently discover a previously unknown canonical mod-2 weighted acceptor for regular languages---the \emph{minimal xor-CABA automaton} (\Cref{minimalxorcabaexample})---that in some sense is to the minimal xor automaton what the \'atomaton is to the canonical RFSA  (\Cref{minimalxorcabadiagram}).
	\item We introduce an abstract notion of \emph{closedness} for succinct automata that is parametric in two monads (\Cref{closedsuccinctdef}), and 
	prove that every regular language satisfying a suitable property admits a canonical size-minimal representative among closed acceptors (\Cref{minimalitytheorem}). By instantiating the latter we subsume known minimality results for canonical automata, prove the xor-CABA automaton minimal, and establish a size comparison between different acceptors (\Cref{minimalityimplications}). 
\end{itemize}

\section{Overview of the Approach}

\label{overview}

In this section, we give an overview of the ideas of this chapter through an example. We show how our methodology allows recovering the construction of the 
\'atomaton for 
the regular language
$
	L = (a+b)^*a
$, which consists of all words over $A = \lbrace a, b \rbrace$ that end in $a$. For each step, we hint at how it is generalised in our framework.

The classical construction of the \'atomaton for $\lang$ consists in closing the \emph{residuals}\footnote{A language is a \textit{residual} or \textit{left quotient} of $L \subseteq A^*$, if it is of the form $v^{-1}L = \lbrace u \in A^* \mid vu \in L \rbrace$ for some $v \in A^*$. } of $\lang$ under all Boolean operations, and then forming a non-deterministic automaton whose states are the atoms\footnote{A non-zero element $a$ in a Boolean algebra B is called an \emph{atom}, if for all $x \in B$ with $x \leq a$ it follows $x = 0$ or $x = a$.}  of the ensuing complete atomic\footnote{A Boolean algebra $B$ is \emph{atomic}, if for all $x \in B$ there exists a decomposition $x = \vee_{I} a_i$, where $\lbrace a_i \mid i \in I \rbrace$ is some set of atoms.} Boolean algebra (CABA)---that is, non-empty intersections of complemented or uncomplemented residuals.
In our categorical setting, this construction is obtained in several steps, which we now describe.

\subsection{Computing Residuals}
We first construct the minimal DFA accepting $\lang$ as a coalgebra of type 
	$
		 \M_\lang \to 2 \times (\M_\lang)^{A} \enspace
	$.
	By the well-known Myhill-Nerode theorem~\cite{nerode1958linear}, $\M_\lang$ is the set of residuals for $\lang$. The automaton is depicted in \Cref{m(l)}.

	\begin{figure*}
		\small
		\center
	\begin{tikzpicture}[node distance=6em]
	\node[state, initial, initial text=] (x) {$x$};	
		\node[state, right of=x, accepting] (y) {$y$};
	    \path[->]
	(x) edge[loop above] node{$b$} (x)
	(y) edge[loop right] node{$a$} (y)
	(x) edge[above, bend left] node{$a$} (y)
	(y) edge[below, bend left] node{$b$} (x)
	;
	\end{tikzpicture}.
		\caption{The minimal DFA for $L = (a+b)^*a$}
	\label{m(l)}
	\end{figure*}
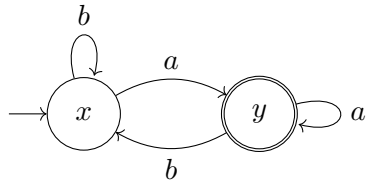
			In our framework, we consider coalgebras over an arbitrary endofunctor $F \colon \C \to \C$ ($F = 2 \times (-)^{A}$ and $\C = \Set$ in this case), as introduced in \Cref{def:coalgebra}.
Minimal realisations, generalising minimal DFAs, exist for a wide class of functors $F$ and categories $\C$, including all the examples in this chapter.

\subsection{Taking the Boolean Closure}
We close the minimal DFA under all Boolean operations, generating an equivalent deterministic automaton that has additional algebraic structure: its state space is a CABA.
This is achieved via a double powerset construction---where sets of sets are interpreted as full disjunctive normal form---and the resulting coalgebra is of type
	$
		\Pow^2(\M_\lang) \to 2 \times (\Pow^2(\M_\lang))^{A}	$.
	Our construction relies on the \emph{neighbourhood monad} $\mathcal{H}$ (\Cref{exampleofmonads}), whose algebras are precisely CABAs, and yields a (free) \emph{bialgebra} capturing both the coalgebraic and the algebraic structure; the interplay of these two structures is captured via a \emph{distributive law}. 
 We then minimise this DFA to identify Boolean expressions evaluating to the same language. As desired, the resulting state space is precisely the Boolean closure of the residuals of $\lang$. Formally, we obtain the minimal bialgebra for $\lang$ depicted in \Cref{overlinem(l)atom}.

This step in our framework is generalised as closure of an $F$-coalgebra w.r.t\ (the algebraic structured induced by) any monad $S$ for which a suitable distributive law $\lambda$ with the coalgebra endofunctor $F$ exists. The first step of the closure yields a free $\lambda$-bialgebra, comprised of both an $F$-coalgebra and an $S$-algebra over the same state space. In a second step, minimisation is carried out in the category of $\lambda$-bialgebras, which guarantees simultaneous preservation of the algebraic structure and of the language semantics.

\begin{figure*}
\center	
	 \tiny
         \adjustbox{valign=m}{\begin{tikzpicture}[node distance= 8.5em]
	\node[state, initial, shape=rectangle, initial text=] (1) {$\lbrack \lbrace \lbrace x \rbrace, \lbrace x, y \rbrace \rbrace \rbrack $};

	\node[state, shape=rectangle, right of = 1] (3) {$\lbrack \emptyset \rbrack$};

	\node[state, shape=rectangle,below of = 1, accepting] (5) {$\lbrack\lbrace \lbrace x \rbrace,  \lbrace y \rbrace, \lbrace x, y \rbrace \rbrace \rbrack$};

	\node[state, shape=rectangle,right of = 5, accepting] (7) {$\lbrack\lbrace \lbrace y \rbrace \rbrace \rbrack$};

	\node[state, shape=rectangle,right of = 3] (9) {$\lbrack\lbrace \emptyset \rbrace \rbrack$};

	\node[state, shape=rectangle,right of = 9] (11) {$\lbrack \lbrace \lbrace x, y \rbrace, \emptyset \rbrace \rbrack$};

	\node[state, shape=rectangle,right of = 7, accepting] (13) {$\lbrack \lbrace \lbrace y \rbrace, \emptyset \rbrace \rbrack$};

	\node[state, shape=rectangle, right of = 13, accepting] (15) {$\lbrack \lbrace \lbrace x \rbrace, \lbrace y \rbrace, \lbrace x, y \rbrace, \emptyset \rbrace \rbrack$};
			
		    \path[->]
	(3) edge[loop above] node{$a,b$} (3)
	(1) edge[loop above] node{$b$} (1)
	(1) edge[right, bend left] node{$a$} (5)
	(7) edge[left] node{$a,b$} (3)
	(5) edge[loop below] node{$a$} (5)
	(5) edge[left, bend left] node{$b$} (1)
	(9) edge[bend left, right] node{$b$} (13)
	(9) edge[loop above] node{$a$} (9)
	(11) edge[left] node{$a,b$} (15)
	(13) edge[left, bend left] node{$a$} (9)
	(13) edge[loop below] node{$b$} (13)
	(15) edge[loop below] node{$a,b$} (15)
;
         \end{tikzpicture}}
	\qquad
                \adjustbox{valign=m}{\resizebox{0.35 \columnwidth}{!}{%
	\begin{tabular}[]{ c|c|c|c|c|c|c|c|c } 
		$\wedge$ & $1$ & $2$ & $3$ & $4$ & $5$ & $6$ & $7$ & $8$ \\
		\hline
		$1$ & $1$ & $2$ & $2$ & $1$ & $1$ & $2$ & $2$ & $1$ \\
		\hline
		$2$ & $2$ & $2$ & $2$ & $2$  &  $2$ & $2$  & $2$  & $2$  \\
		\hline
		$3$ & $2$ & $2$ & $3$ & $3$ & $2$ & $2$ & $3$ & $3$ \\
		\hline
		$4$ & $1$ & $2$ & $3$ & $4$ & $1$ & $2$ & $3$ & $4$ \\
		\hline
		$5$ & $1$ & $2$ & $2$ & $1$ & $5$ & $6$ & $6$ & $5$ \\
		\hline
		$6$ & $2$ & $2$ & $2$ & $2$ & $6$ & $6$ & $6$ & $6$ \\
		\hline
		$7$ & $2$ & $2$ & $3$ & $3$ & $6$ & $6$ & $7$ & $7$ \\
		\hline
		$8$ & $1$ & $2$ & $3$ & $4$ & $5$ & $6$ & $7$ & $8$
	\end{tabular}
	\qquad
		\begin{tabular}[]{ c|c} 
		  & $\neg$\\
		 \hline
		 $1$ & $7$ \\
		 \hline
		 $2$ & $8$ \\
		 \hline
		 $3$ & $5$ \\
		 \hline
		 $4$ & $6$ \\
		 \hline
		 $5$ & $3$ \\
		 \hline
		 $6$ & $4$ \\
		 \hline
		 $7$ & $1$ \\
		 \hline
		 $8$ & $2$ 
		\end{tabular}
        }}
		\caption{The minimal CABA-structured DFA for $L = (a+b)^*a$, where
$1 \equiv \lbrack \lbrace \lbrace x \rbrace, \lbrace x, y \rbrace \rbrace \rbrack$, $2 \equiv \lbrack \emptyset \rbrack$, $3 \equiv \lbrack \lbrace \emptyset \rbrace \rbrack$, $4 \equiv \lbrack \lbrace \lbrace x, y \rbrace, \emptyset \rbrace \rbrack$, $5 \equiv \lbrack \lbrace \lbrace x \rbrace, \lbrace y \rbrace,$ $\lbrace x, y \rbrace \rbrace \rbrack$, $6 \equiv \lbrack \lbrace \lbrace y \rbrace \rbrace \rbrack$, $7 \equiv \lbrack \lbrace \lbrace y \rbrace, \emptyset \rbrace \rbrack$, $8 \equiv \lbrack \lbrace \lbrace x \rbrace, \lbrace y \rbrace, \lbrace x, y \rbrace, \emptyset \rbrace \rbrack$}
	\label{overlinem(l)atom}
	\end{figure*} 

\subsection{Constructing the \'Atomaton}
This step is the key technical result of this chapter. Atoms have the property that their Boolean closure generates the entire CABA. In our framework, this property is generalised via the notion of \emph{generators} for algebras over a monad, which allows one to represent a bialgebra as an equivalent \emph{free} bialgebra over its generators, and hence to obtain succinct canonical representations (\Cref{forgenerator-isharp-is-bialgebra-hom}). In \Cref{succinctbialgebra1} we apply this result to obtain the canonical RFSA, the canonical nominal RFSA, and the minimal xor automaton for a given regular language.

However, to recover the \'atomaton from the minimal CABA-structured DFA of the previous step, in addition a subtle change of perspective is required. In fact, we are still working with the ``wrong'' side-effect: the non-determinism of bialgebras so far is determined by $\mathcal{H}$, whereas we are interested in an NFA, whose non-determinism is captured by the powerset monad $\mathcal{P}$. 
As is well-known, every element of a CABA can be obtained as the join of the atoms below it.
In other words, those atoms are also generators of the underlying CSL, which is an algebra for $\mathcal{P}$. We formally capture this idea as a map between monads $\mathcal{H} \to \mathcal{P}$. Crucially, we show that this map lifts to a \emph{distributive law homomorphism} and allows translating a bialgebra over $\mathcal{H}$ to a bialgebra over $\mathcal{P}$, which can be represented as a free bialgebra over atoms---the \'atomaton for $\lang$, which is shown in \Cref{atomaton}.

In \Cref{succinctbialgebra2} we generalise this idea to the situation of two monads $S$ and $T$ involved in distributive laws with the coalgebra endofunctor $F$. In particular, \Cref{generatorbialgebrahom} is our free representation result, spelling out a condition under which a bialgebra over $S$ can be represented as a free bialgebra over $T$, and hence admits an equivalent succinct representation as an automaton with side-effects in $T$. Besides the \'atomaton and the examples in \Cref{succinctbialgebra1}, this construction allows us to capture the distromaton and a newly discovered canonical acceptor that relates CABAs with vector spaces over the two element field. 

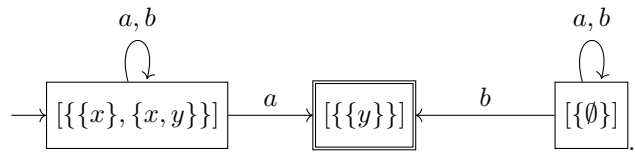
\begin{figure*}
	\center
	\footnotesize
\begin{tikzpicture}[node distance= 8.5em]
	\node[state, initial, shape=rectangle, initial text=] (1) {$\lbrack \lbrace \lbrace x \rbrace, \lbrace x, y \rbrace \rbrace \rbrack $};

	\node[state, shape=rectangle,right of = 1, accepting] (2) {$\lbrack\lbrace \lbrace y \rbrace \rbrace \rbrack$};

	\node[state, shape=rectangle,right of = 2] (3) {$\lbrack\lbrace \emptyset \rbrace \rbrack$};

		    \path[->]
	(1) edge[loop above] node{$a,b$} (1)
	(1) edge[above] node{$a$} (2)
	(3) edge[loop above] node{$a,b$} (3)
	(3) edge[above] node{$b$} (2)	
;
	\end{tikzpicture}.
	\caption{The \'atomaton for $L = (a+b)^*a$}
\label{atomaton}
\end{figure*}

\section{Distributive Laws and Bialgebras}

In this section, we briefly recall distributive laws and bialgebras, which form the technical foundations of our category-theoretical framework. 

Distributive laws have originally occurred as a way to compose monads \cite{beck1969distributive}, but now also exist in a wide range of other forms \cite{Street2009}. For our particular case it is sufficient to consider distributive laws between a monad and an endofunctor, sometimes referred to as \emph{Eilenberg-Moore laws} \cite{jacobs2012trace}.
\begin{definition}[Distributive Law]
\label{distributivelawdef}
	A \emph{distributive law} between a monad $T$ and an endofunctor $F$ on $\C$ is a natural transformation $\lambda: TF \Rightarrow FT$
such that the following two diagrams commute:
\[
\begin{tikzcd}
FX \arrow{r}{F\eta_X} \arrow{d}[left]{TFX} & FTX \\
TFX \arrow{ur}[right]{\lambda_X}
\end{tikzcd}
\qquad
\begin{tikzcd}
T^2FX \arrow{d}[left]{\mu_{FX}} \arrow{r}{T\lambda_X} & TFTX \arrow{r}{\lambda_{TX}} & FT^2X \arrow{d}{F\mu_X} \\
TFX \arrow{rr}[below]{\lambda_X} & & FTX	
\end{tikzcd}.
\]
\end{definition}

We are particularly interested in canonically induced distributive laws involving the endofunctor $F$ with $FX = B \times X^A $. The following statement is well-known and appears for instance in \cite{jacobs2006bialgebraic}. We include a proof of the result, since we were unable to locate one in the literature. Note that the result can easily be generalised to a strong monad on any cartesian closed category. 

\begin{lemma}
\label{induceddistrlaw}
	Every algebra $h: TB \rightarrow B$ for a set monad $T$ induces a distributive law $\lambda^h$ between $T$ and $F$ with $FX = B \times X^A $ defined as the composition
	\begin{equation}
 \label{induceddistrlaweq}
 	\lambda^{h}_X := T(B \times X^A) \overset{\langle T\pi_1, T\pi_2 \rangle}{\longrightarrow} T(B) \times T(X^A) \overset{h \times \st}{\longrightarrow} B \times (TX)^A,
 \end{equation}
 where $\st$ denotes the canonical strength function\footnote{\label{strengthdef}For any two sets $X,A$ the strength function $\st: T(X^A) \rightarrow (TX)^A$ is defined by $ \st(U)(a) = T(\ev_a)(U)$,
	where $\ev_{a}(f) = f(a)$.}.
\end{lemma}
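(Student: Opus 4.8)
The plan is to verify directly that the natural transformation $\lambda^h$ defined in \eqref{induceddistrlaweq} satisfies the two coherence conditions of \Cref{distributivelawdef}, namely the unit law (the left triangle) and the multiplication law (the right pentagon). Before checking these, I would first confirm that $\lambda^h$ is indeed a natural transformation; since it is built as a composite of $\langle T\pi_1, T\pi_2 \rangle$ (natural by the universal property of the product and naturality of $T$ applied to projections), $h \times \st$ (where $h$ is a fixed morphism and $\st$ is natural in both arguments), this follows from the fact that each component is natural and naturality is preserved under composition and products.

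The heart of the proof is the two commuting-diagram verifications, which I would approach by diagram chasing as described in the preliminaries. For the unit law, I would start from $F\eta_X: FX \to FTX$ and the canonical inclusion $FX \to TFX$, and unfold the definition of $\lambda^h_X$. The key facts I expect to need are: that $h \circ \eta_B = \id_B$ (the unit law for the algebra $(B,h)$, from \Cref{def:algebraovermonad}), and that the strength $\st$ interacts with the unit $\eta$ so that $\st \circ \eta_{X^A} = \eta_X^A$, i.e. $\st(\eta_{X^A}(f))(a) = \eta_X(f(a))$. Combined with the fact that $\langle T\pi_1, T\pi_2\rangle \circ \eta_{B \times X^A} = \eta_B \times \eta_{X^A}$ (naturality of $\eta$ with respect to the projections), these should collapse the composite to the identity paired appropriately, yielding $F\eta_X$.

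For the multiplication law, I would chase an element of $T^2(B \times X^A)$ along both paths of the pentagon. The left-bottom path applies $\mu_{B \times X^A}$ and then $\lambda^h_X$; the top-right path applies $T\lambda^h_X$, then $\lambda^h_{TX}$, then $F\mu_X$. Unfolding both sides, the first component reduces to a comparison that holds precisely because $(B,h)$ is an Eilenberg-Moore algebra, i.e. $h \circ \mu_B = h \circ Th$ (the associativity square of \Cref{def:algebraovermonad}), together with naturality of $\langle T\pi_1, T\pi_2\rangle$. The second component reduces to a multiplicativity property of the strength, namely that $\st \circ T\st$ composed with $\mu$ equals the appropriate composite involving $\mu^A$; this is the standard compatibility of strength with the monad multiplication, $\st_X \circ \mu_{X^A} = \mu_X^A \circ \st_{TX} \circ T\st_X$.

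The main obstacle, I expect, will be bookkeeping in the multiplication law: the strength $\st: T(X^A) \to (TX)^A$ and its iterates interact with $\mu$ and with the product-projection natural transformation $\langle T\pi_1, T\pi_2 \rangle$ in a way that requires care to align the two components correctly, especially tracking how $T$ applied to a pairing distributes over the strength on the second coordinate. Rather than grinding through the full element-wise computation, I would isolate the two components separately, cite the algebra axioms for the $B$-component and the standard strength-multiplication compatibility for the $X^A$-component, and assemble them. Since the paper remarks the result generalises to a strong monad on a cartesian closed category, I would phrase the strength manipulations abstractly enough that the same chase works in that generality, rather than relying on the explicit $\Set$ formula $\st(U)(a) = T(\ev_a)(U)$ — though I would use that formula to double-check the unit computation concretely.
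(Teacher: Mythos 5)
Your proposal is correct and follows essentially the same route as the paper's proof: both verify the unit and multiplication squares componentwise, reducing the $B$-component to the Eilenberg--Moore axioms $h \circ \eta_B = \id_B$ and $h \circ \mu_B = h \circ Th$, and the $X^A$-component to the standard strength--unit and strength--multiplication compatibilities $\st \circ \eta_{X^A} = \eta_X^A$ and $\st \circ \mu_{X^A} = \mu_X^A \circ \st \circ T\st$, with naturality of $\eta$ and $\mu$ handling the projections. The paper likewise dispatches naturality of $\lambda^h$ by appeal to naturality of the strength, so no gap remains.
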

\begin{proof}
\begingroup
\allowdisplaybreaks
The naturality of $\lambda^h$ essentially follows from the naturality of the strength function.
The equation involving the monad unit is a consequence of
\begin{align*}
	& \pi_1 \circ (h \times \st) \circ \langle T\pi_1, T\pi_2 \rangle \circ \eta_{B \times X^A} && \\
=\ & h \circ T\pi_1 \circ \eta_{B \times X^A} && \textnormal{(Definition of } \pi_1) \\
=\ & h \circ \eta_B \circ \pi_1 && \textnormal{(Naturality of } \eta) \\
=\ &  \pi_1 \circ (B \times \eta_X^A) && (\textnormal{Definition of } \pi_1, h \circ \eta_B = \id_B)
\end{align*}
and the chain of equalities
\begin{align*}
	 \pi_2 \circ (h \times \st) \circ \langle T\pi_1, T\pi_2 \rangle \circ \eta_{B \times X^A}  &= \st \circ T\pi_2 \circ \eta_{B \times X^A} && \textnormal{(Definition of } \pi_2) \\
	 &= \st \circ \eta_{X^A} \circ \pi_2 && \textnormal{(Naturality of } \eta) \\
	 &= \eta_X^A \circ \pi_2  && (\st \circ \eta_{X^A} = \eta_X^A)  \\
	&= \pi_2 \circ (B \times \eta_X^A) && \textnormal{(Definition of } \pi_2)
\end{align*}
	Similarly, the equation involving the monad multiplication is a consequence of
		\begin{align*}
		&\pi_1 \circ (B \times \mu_X^A) \circ (h \times \st) \circ \langle T \pi_1, T \pi_2 \rangle 
		 \circ T(h \times \st) \circ T\langle T\pi_1, T\pi_2 \rangle
		   \\
		=&\  \textnormal{(Definition of } \pi_1) \\
		& h \circ Th \circ T^2\pi_1 \\
		=&\  (h \circ Th = h \circ \mu_B) \\
		 & h \circ \mu_B \circ T^2 \pi_1 \\
		 =&\ \textnormal{(Naturality of } \mu) \\
		 & h \circ T \pi_1 \circ \mu_{B \times X^A} \\
		  =&\ \textnormal{(Definition of } \pi_1) \\
		  & \pi_1 \circ (h \times \st) \circ \langle T\pi_1, T\pi_2 \rangle \circ \mu_{B \times X^A} 
	\end{align*}
	and the equalities
	\begin{align*}
		&\pi_2 \circ (B \times \mu_X^A) \circ (h \times \st) \circ \langle T \pi_1, T \pi_2 \rangle
		 \circ T(h \times \st) \circ T( T\pi_1, T\pi_2 )  \\
		 =& \  \textnormal{(Definition of } \pi_2) \\
		 & \mu_X^A \circ \st \circ T(\st) \circ T^2\pi_2  \\
		=& \  (\mu^A_X \circ \st \circ T(\st) = \st \circ \mu_{X^A}) \\
		 & \st \circ \mu_{X^A} \circ T^2\pi_2 \\
		 =\  & \textnormal{(Naturality of } \mu) \\
		& \st \circ T\pi_2 \circ \mu_{B \times X^A} \\
		=& \  \textnormal{(Definition of } \pi_2) \\
		 & \pi_2 \circ (h \times \st) \circ \langle T\pi_1, T\pi_2 \rangle \circ \mu_{B \times X^A}
	\end{align*}
	\endgroup
\end{proof}

Of particular importance for us are the canonical algebra structures for the output set $B = 2$. For instance, the algebra structures defined by $h^{\mathcal{P}}(\varphi) = h^{\mathcal{R}}(\varphi) =  \varphi(1)$ and $h^{\mathcal{H}}(\Phi) = h^{\mathcal{A}}(\Phi) = \Phi(\id_2)$, where we identify subsets with their characteristic functions, and $\mathcal{R}$ and $\mathcal{A}$ denote the free vector space monad over the unique two element field $\mathbb{Z}_2$, and the monotone neighbourhood monad, respectively (cf. \Cref{exampleofmonads}). In these cases we will abuse notation and write $\lambda^T$ instead of $\lambda^{h^T}$. The soundness of the definitions is witnessed by the following four short results.

\begin{lemma}
\label{powersetalgebra}
	The morphism $h^{\mathcal{P}}: \mathcal{P}2 \rightarrow 2$ satisfying $\varphi \mapsto \varphi(1)$ defines a $\mathcal{P}$-algebra.
\end{lemma}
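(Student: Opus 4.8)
The plan is to verify directly the two defining axioms of an Eilenberg-Moore algebra from \Cref{def:algebraovermonad}, instantiated with the monad $T = \mathcal{P}$, carrier $X = 2$, and structure map $h^{\mathcal{P}}: \mathcal{P}2 \rightarrow 2$, $\varphi \mapsto \varphi(1)$. Concretely, I would use the explicit description of the powerset monad recalled in \Cref{exampleofmonads}: the unit $\eta_2(x)(y) = \lbrack x = y \rbrack$, the multiplication $\mu_2(\Phi)(x) = \vee_{\varphi \in 2^2} \Phi(\varphi) \wedge \varphi(x)$, and the functorial action $\mathcal{P}f(\Psi)(y) = \vee_{\varphi \in f^{-1}(y)} \Psi(\varphi)$. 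The two obligations are then the unit law $h^{\mathcal{P}} \circ \eta_2 = \id_2$ and the multiplication law $h^{\mathcal{P}} \circ \mu_2 = h^{\mathcal{P}} \circ \mathcal{P}h^{\mathcal{P}}$.

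For the unit law, I would evaluate $h^{\mathcal{P}}(\eta_2(x)) = \eta_2(x)(1) = \lbrack x = 1 \rbrack$ and observe that, since $x$ ranges over $2 = \lbrace 0, 1 \rbrace$, we have $\lbrack x = 1 \rbrack = x$; hence $h^{\mathcal{P}} \circ \eta_2 = \id_2$. This is a one-line case distinction on $x \in \lbrace 0, 1 \rbrace$.

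For the multiplication law, I would fix an arbitrary $\Phi \in \mathcal{P}^2 2$ and compute both composites. On the left, $h^{\mathcal{P}}(\mu_2(\Phi)) = \mu_2(\Phi)(1) = \vee_{\varphi \in 2^2} \Phi(\varphi) \wedge \varphi(1)$. On the right, unfolding the functorial action gives $h^{\mathcal{P}}(\mathcal{P}h^{\mathcal{P}}(\Phi)) = \mathcal{P}h^{\mathcal{P}}(\Phi)(1) = \vee_{\varphi \in (h^{\mathcal{P}})^{-1}(1)} \Phi(\varphi) = \vee_{\varphi : \varphi(1) = 1} \Phi(\varphi)$. The key observation is that the conjunct $\varphi(1)$ on the left acts exactly as the restriction $\varphi(1) = 1$ on the right: a summand $\Phi(\varphi) \wedge \varphi(1)$ is nonzero precisely when $\varphi(1) = 1$, so the two joins have identical nonzero contributions and therefore coincide.

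I expect no genuine obstacle here; the only care required is bookkeeping in matching the two join expressions and remaining consistent about the identification of subsets with characteristic functions. If one prefers, the whole argument becomes transparent in subset notation, where $h^{\mathcal{P}}(U) = \lbrack 1 \in U \rbrack$ and $\mu_2(\Phi) = \cup_{U \in \Phi} U$, and the multiplication law reduces to the tautology that $1 \in \cup_{U \in \Phi} U$ holds if and only if $1 \in U$ for some $U \in \Phi$.
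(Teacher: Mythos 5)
Your proposal is correct and follows essentially the same route as the paper's proof: both verify the unit law via $h^{\mathcal{P}}(\eta_2(x)) = \lbrack x = 1 \rbrack = x$ and the multiplication law by showing that the join $\vee_{\varphi \in 2^2} \Phi(\varphi) \wedge \varphi(1)$ collapses to $\vee_{\varphi \in (h^{\mathcal{P}})^{-1}(1)} \Phi(\varphi)$, which the paper does by an explicit case split on whether $\varphi(1)$ is $0$ or $1$. Your closing remark in subset notation is a pleasant additional sanity check but not part of the paper's argument.
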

\begin{proof}
On the one hand we find
\begingroup
\allowdisplaybreaks
	\begin{align*}
		& h^{\mathcal{P}} \circ \mu^{\mathcal{P}}_2(\Phi) && \\
		=\ & \mu^{\mathcal{P}}_2(\Phi)(1) && \textnormal{(Definition of } h^{\mathcal{P}}) \\
		=\ & \vee_{\varphi \in 2^2} \Phi(\varphi) \wedge \varphi(1) && \textnormal{(Definition of } \mu^{\mathcal{P}}_2) \\
		=\ & (\vee_{\varphi \in (h^{\mathcal{P}})^{-1}(1)} \Phi(\varphi) \wedge \varphi(1)) \vee (\vee_{\varphi \in (h^{\mathcal{P}})^{-1}(0)} \Phi(\varphi) \wedge \varphi(1)) && \textnormal{(Case split)} \\
		=\ & (\vee_{\varphi \in (h^{\mathcal{P}})^{-1}(1)} \Phi(\varphi) \wedge \varphi(1)) \vee (\vee_{\varphi \in (h^{\mathcal{P}})^{-1}(0)} \Phi(\varphi) \wedge 0) && (\varphi \in (h^{\mathcal{P}})^{-1}(0)) \\
		=\ & (\vee_{\varphi \in (h^{\mathcal{P}})^{-1}(1)} \Phi(\varphi) \wedge \varphi(1)) \vee (\vee_{\varphi \in (h^{\mathcal{P}})^{-1}(0)} 0) && (a \wedge 0 = 0) \\
		=\ & (\vee_{\varphi \in (h^{\mathcal{P}})^{-1}(1)} \Phi(\varphi) \wedge \varphi(1)) \vee 0 && (0 \vee 0 = 0)\\
		=\ & \vee_{\varphi \in (h^{\mathcal{P}})^{-1}(1)} \Phi(\varphi) \wedge \varphi(1) && (a \vee 0 = a)  \\
		=\ & \vee_{\varphi \in (h^{\mathcal{P}})^{-1}(1)} \Phi(\varphi) \wedge 1 && (\varphi \in (h^{\mathcal{P}})^{-1}(1)) \\
		=\ & \vee_{\varphi \in (h^{\mathcal{P}})^{-1}(1)} \Phi(\varphi) && (a \wedge 1 = a) \\
		=\ & \mathcal{P}(h^{\mathcal{P}})(\Phi)(1) && \textnormal{(Definition of } \mathcal{P}(h^{\mathcal{P}})) \\
		=\ & h^{\mathcal{P}} \circ \mathcal{P}(h^{\mathcal{P}})(\Phi) && \textnormal{(Definition of } h^{\mathcal{P}})
	\end{align*}
	\endgroup
	and on the other hand we can deduce
	\begin{align*}
		h^{\mathcal{P}} \circ \eta^{\mathcal{P}}_2(x) &= \eta^{\mathcal{P}}_2(x)(1) && \textnormal{(Definition of } h^{\mathcal{P}}) \\
		&= \lbrack x = 1 \rbrack && \textnormal{(Definition of } \eta^{\mathcal{P}}_2 ) \\
		&= x && \textnormal{(Definition of } \lbrack \cdot \rbrack)
	\end{align*}
where $\lbrack x = y \rbrack$ is $1$, if $x = y$, and $0$ otherwise.	
\end{proof}

As one verifies, under the equivalence $\Set^{\mathcal{P}} \cong \CSL$, the morphism $h^{\mathcal{P}}$ equips $2$ with its canonical complete join-semilattice structure $(2, \vee)$. 

\begin{lemma}
\label{neighbourhoodalgebra}
	The morphism $h^{\mathcal{H}}: \mathcal{H}2 \rightarrow 2$ assigning $\Phi \mapsto \Phi(\id_2)$ defines an $\mathcal{H}$-algebra.
\end{lemma}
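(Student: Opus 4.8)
The plan is to verify directly the two defining equations of an Eilenberg--Moore algebra from \Cref{def:algebraovermonad} for the pair $(2, h^{\mathcal{H}})$, mirroring the style of the proof of \Cref{powersetalgebra}. These are the unit law $h^{\mathcal{H}} \circ \eta^{\mathcal{H}}_2 = \id_2$ and the multiplication law $h^{\mathcal{H}} \circ \mathcal{H}(h^{\mathcal{H}}) = h^{\mathcal{H}} \circ \mu^{\mathcal{H}}_2$. Throughout I would lean on the definitions of $\mathcal{H}$, $\eta^{\mathcal{H}}$, and $\mu^{\mathcal{H}}$ recorded in \Cref{exampleofmonads}, together with $h^{\mathcal{H}}(\Phi) = \Phi(\id_2)$, and on the identification $2^2 = (2 \to 2)$, so that a function $2^2 \rightarrow 2$ is literally an element of $\mathcal{H}2 = (2 \to 2) \rightarrow 2$; keeping track of these nested double-dual function spaces is the only real subtlety.

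The unit law is immediate: for $x \in 2$ one computes $h^{\mathcal{H}}(\eta^{\mathcal{H}}_2(x)) = \eta^{\mathcal{H}}_2(x)(\id_2) = \id_2(x) = x$, using $\eta^{\mathcal{H}}_2(x)(\varphi) = \varphi(x)$ and the definition of $h^{\mathcal{H}}$. This disposes of the first diagram in one line.

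For the multiplication law I would evaluate both composites at an arbitrary $\Psi \in \mathcal{H}^2 2$. Unfolding the right-hand side via $\mathcal{H}(h^{\mathcal{H}})(\Psi)(\varphi) = \Psi(\varphi \circ h^{\mathcal{H}})$ at $\varphi = \id_2$ gives $h^{\mathcal{H}}(\mathcal{H}(h^{\mathcal{H}})(\Psi)) = \Psi(h^{\mathcal{H}})$, while unfolding the left-hand side via $\mu^{\mathcal{H}}_2(\Psi)(\varphi) = \Psi(\eta^{\mathcal{H}}_{2^2}(\varphi))$ at $\varphi = \id_2$ gives $h^{\mathcal{H}}(\mu^{\mathcal{H}}_2(\Psi)) = \Psi(\eta^{\mathcal{H}}_{2^2}(\id_2))$. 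The key step is then the identity $\eta^{\mathcal{H}}_{2^2}(\id_2) = h^{\mathcal{H}}$ as elements of $\mathcal{H}(2^2) = (\mathcal{H}2 \rightarrow 2)$: both send a $\psi \in \mathcal{H}2$ to $\psi(\id_2)$, the former by the definition of $\eta^{\mathcal{H}}$ and the latter by the definition of $h^{\mathcal{H}}$. Substituting this equality yields $\Psi(\eta^{\mathcal{H}}_{2^2}(\id_2)) = \Psi(h^{\mathcal{H}})$, so the two sides coincide and the second diagram commutes.

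The main obstacle is purely this bookkeeping of iterated types: one must recognise that $h^{\mathcal{H}} \colon \mathcal{H}2 \rightarrow 2$ is itself a legitimate element of $\mathcal{H}(2^2)$, and that under the identification $2^2 = (2 \to 2)$ it coincides exactly with $\eta^{\mathcal{H}}_{2^2}(\id_2)$. Once the types are lined up, every equation is forced by the definitions of $\eta^{\mathcal{H}}$, $\mu^{\mathcal{H}}$, the action of $\mathcal{H}$ on morphisms, and $h^{\mathcal{H}}$, with no genuine calculation required.
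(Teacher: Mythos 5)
Your proposal is correct and follows essentially the same route as the paper's proof: both laws are verified by direct unfolding, and the multiplication law hinges on exactly the same key identity $\eta^{\mathcal{H}}_{2^2}(\id_2) = h^{\mathcal{H}}$ that the paper states at the outset of its computation. No gaps.
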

\begin{proof}
\begingroup
\allowdisplaybreaks
	Since $\eta^{\mathcal{H}}_{2^2}(\id_2)(\Phi) = \Phi(\id_2) = h^{\mathcal{H}}(\Phi)$ we find
	\begin{align*}
		h^{\mathcal{H}} \circ \mu^{\mathcal{H}}_2(\Psi) &= \mu_2^{\mathcal{H}}(\Psi)(\id_2) && \textnormal{(Definition of } h^{\mathcal{H}}) \\
		&= \Psi(\eta_{2^2}^{\mathcal{H}}(\id_2)) && (\textnormal{Definition of } \mu^{\mathcal{H}}_2) \\
		&= \Psi(\id_2 \circ h^{\mathcal{H}}) && (\eta^{\mathcal{H}}_{2^2}(\id_2) = h^{\mathcal{H}})  \\
		&= \mathcal{H}(h^{\mathcal{H}})(\Psi)(\id_2) && \textnormal{(Definition of } \mathcal{H}(h^{\mathcal{H}})) \\
		&= h^{\mathcal{H}} \circ \mathcal{H}(h^{\mathcal{H}})(\Psi) && (\textnormal{Definition of } h^{\mathcal{H}})
	\end{align*}
	We further can deduce
	\begin{align*}
		h^{\mathcal{H}} \circ \eta_X^{\mathcal{H}}(x) &= \eta_X^{\mathcal{H}}(x)(\id_2) && \textnormal{(Definition of } h^{\mathcal{H}}) \\
		&= \id_2(x) && \textnormal{(Definition of } \eta^{\mathcal{H}}_X) \\
		&= x && \textnormal{(Definition of } \id_2) 
	\end{align*}
	\endgroup
\end{proof}

As one verifies, under the equivalence $\Set^{\mathcal{H}} \cong \CABA$, the morphism $h^{\mathcal{H}}$ equips $2$ with its canonical complete atomic Boolean algebra structure $(2, \vee, \wedge, \neg)$.

\begin{lemma}
\label{monotoneneighbourhoodoutputalgebra}
	The morphism $h^{\mathcal{A}}: \mathcal{A}2 \rightarrow 2$ assigning $\Phi \mapsto \Phi(\id_2)$ defines an $\mathcal{A}$-algebra.
\end{lemma}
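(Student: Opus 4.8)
The plan is to verify directly that $h^{\mathcal{A}}$ satisfies the two Eilenberg–Moore algebra laws from \Cref{def:algebraovermonad}, namely $h^{\mathcal{A}} \circ \eta^{\mathcal{A}}_2 = \id_2$ and $h^{\mathcal{A}} \circ \mu^{\mathcal{A}}_2 = h^{\mathcal{A}} \circ \mathcal{A}(h^{\mathcal{A}})$. The decisive observation is that, by the definition of $\mathcal{A}$ (which ``otherwise coincides with the neighbourhood monad''), the unit $\eta^{\mathcal{A}}$ and multiplication $\mu^{\mathcal{A}}$ are given by exactly the same formulas as $\eta^{\mathcal{H}}$ and $\mu^{\mathcal{H}}$, and $h^{\mathcal{A}}$ is defined by the same assignment $\Phi \mapsto \Phi(\id_2)$ as $h^{\mathcal{H}}$. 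Consequently, the entire argument of \Cref{neighbourhoodalgebra} transfers essentially verbatim, with $\mathcal{H}$ replaced by $\mathcal{A}$ throughout; I would simply replay that computation.

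For the unit law, I would compute $h^{\mathcal{A}} \circ \eta^{\mathcal{A}}_2(x) = \eta^{\mathcal{A}}_2(x)(\id_2) = \id_2(x) = x$, which is immediate from the definitions of $h^{\mathcal{A}}$, $\eta^{\mathcal{A}}$, and $\id_2$. For the multiplication law, the key auxiliary identity is $\eta^{\mathcal{A}}_{2^2}(\id_2) = h^{\mathcal{A}}$, since both functionals send $\Phi \mapsto \Phi(\id_2)$; with this in hand the chain of equalities
\[
h^{\mathcal{A}} \circ \mu^{\mathcal{A}}_2(\Psi) = \mu^{\mathcal{A}}_2(\Psi)(\id_2) = \Psi(\eta^{\mathcal{A}}_{2^2}(\id_2)) = \Psi(\id_2 \circ h^{\mathcal{A}}) = \mathcal{A}(h^{\mathcal{A}})(\Psi)(\id_2) = h^{\mathcal{A}} \circ \mathcal{A}(h^{\mathcal{A}})(\Psi)
\]
closes the argument, each step justified respectively by the definition of $h^{\mathcal{A}}$, the definition of $\mu^{\mathcal{A}}$, the identity above, the definition of the functorial action $\mathcal{A}(h^{\mathcal{A}})$, and again the definition of $h^{\mathcal{A}}$.

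The only genuinely new point compared to \Cref{neighbourhoodalgebra}, and the step I would be most careful about, is \textbf{well-definedness within the monotone setting}: unlike $\mathcal{H}$, the carrier $\mathcal{A}X$ consists only of the \emph{monotone} functionals in $(X \to 2) \to 2$, so I must check that every functional appearing in the computation really lies in the monotone subspace. Concretely, I would note that $\eta^{\mathcal{A}}_{2^2}(\id_2)$ is monotone (it is an evaluation map $\psi \mapsto \psi(\id_2)$, and $\psi \leq \psi'$ forces $\psi(\id_2) \leq \psi'(\id_2)$), that $h^{\mathcal{A}}$ is well-defined simply as the restriction of the neighbourhood operation to $\mathcal{A}2 \subseteq \mathcal{H}2$, and that $\mathcal{A}(h^{\mathcal{A}})$ lands in monotone functionals because $\mathcal{A}$ is already established to be a monad. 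Since these monotonicity checks are routine and the monad structure of $\mathcal{A}$ is assumed, this obstacle is minor, and the proof reduces to recording the two verifications above.
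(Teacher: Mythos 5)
Your proof is correct and takes essentially the same route as the paper, which simply records that the argument is analogous to that of \Cref{neighbourhoodalgebra}. Your additional check that the relevant functionals are monotone, so that the computation stays inside $\mathcal{A}$, is exactly the one point where the analogy needs justification, and you handle it correctly.
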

 \begin{proof}
 	Analogous to the proof of \Cref{neighbourhoodalgebra}.
 \end{proof}
 
 As one verifies, under the equivalence $\Set^{\mathcal{A}} \cong \CDL$, the morphism $h^{\mathcal{A}}$ equips $2$ with its canonical completely distributive lattice structure $(2, \vee, \wedge)$.

 \begin{lemma}
\label{xoroutputalgebra}
	The morphism $h^{\mathcal{R}}: \mathcal{R}2 \rightarrow 2$ satisfying $\varphi \mapsto \varphi(1)$ defines an $\mathcal{R}$-algebra.
\end{lemma}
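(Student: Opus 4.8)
The plan is to mirror the proof of \Cref{powersetalgebra} essentially verbatim, since $h^{\mathcal{R}}$ is given by exactly the same formula $\varphi \mapsto \varphi(1)$ as $h^{\mathcal{P}}$; the only change is that the join $\vee$ of the powerset monad is replaced by the $\mathbb{Z}_2$-addition $\oplus$, while the multiplication $\wedge$ stays the same. Concretely, I would verify the two Eilenberg--Moore laws from \Cref{def:algebraovermonad} for the pair $(2, h^{\mathcal{R}})$: the multiplication square $h^{\mathcal{R}} \circ \mu^{\mathcal{R}}_2 = h^{\mathcal{R}} \circ \mathcal{R}(h^{\mathcal{R}})$ and the unit triangle $h^{\mathcal{R}} \circ \eta^{\mathcal{R}}_2 = \id_2$. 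Recall from \Cref{exampleofmonads} that for $\mathcal{R}$ the multiplication is $\mu^{\mathcal{R}}_2(\Phi)(x) = \bigoplus_{\varphi \in \mathcal{R}2} \Phi(\varphi) \wedge \varphi(x)$, the unit is $\eta^{\mathcal{R}}_2(x)(y) = \lbrack x = y \rbrack$, and the functorial action unfolds to $\mathcal{R}(h^{\mathcal{R}})(\Phi)(y) = \bigoplus_{\varphi \in (h^{\mathcal{R}})^{-1}(y)} \Phi(\varphi)$.

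For the multiplication law I would start from $h^{\mathcal{R}} \circ \mu^{\mathcal{R}}_2(\Phi) = \mu^{\mathcal{R}}_2(\Phi)(1) = \bigoplus_{\varphi \in \mathcal{R}2} \Phi(\varphi) \wedge \varphi(1)$ and split the sum according to whether $\varphi \in (h^{\mathcal{R}})^{-1}(1)$ or $\varphi \in (h^{\mathcal{R}})^{-1}(0)$, using $h^{\mathcal{R}}(\varphi) = \varphi(1)$. On the block where $\varphi(1) = 0$ each summand is $\Phi(\varphi) \wedge 0 = 0$, and these collapse via $0 \oplus 0 = 0$ and $a \oplus 0 = a$; on the block where $\varphi(1) = 1$ each summand is $\Phi(\varphi) \wedge 1 = \Phi(\varphi)$. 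Hence the whole expression reduces to $\bigoplus_{\varphi \in (h^{\mathcal{R}})^{-1}(1)} \Phi(\varphi) = \mathcal{R}(h^{\mathcal{R}})(\Phi)(1) = h^{\mathcal{R}} \circ \mathcal{R}(h^{\mathcal{R}})(\Phi)$, which is exactly the required identity. This is the same case-split computation as in \Cref{powersetalgebra}, step for step.

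The unit law is immediate: $h^{\mathcal{R}} \circ \eta^{\mathcal{R}}_2(x) = \eta^{\mathcal{R}}_2(x)(1) = \lbrack x = 1 \rbrack = x$ for $x \in 2$. There is no genuine obstacle here; the only point deserving a sentence of care is that the reduction relies precisely on the annihilation and unit identities $a \wedge 0 = 0$, $a \wedge 1 = a$, $0 \oplus 0 = 0$, and $a \oplus 0 = a$ of the field $(\mathbb{Z}_2, \oplus, \wedge)$, which are the exact analogues of the semilattice identities $a \wedge 0 = 0$, $a \wedge 1 = a$, $0 \vee 0 = 0$, $a \vee 0 = a$ used in the $\mathcal{P}$-case. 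Because these hold verbatim, the computation transfers without modification, and I would in fact phrase the proof as ``analogous to the proof of \Cref{powersetalgebra}'' in the same spirit as \Cref{monotoneneighbourhoodoutputalgebra} defers to \Cref{neighbourhoodalgebra}.
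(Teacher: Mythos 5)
Your proposal is correct and matches the paper's own proof, which is literally stated as ``Analogous to the proof of \Cref{powersetalgebra}, after observing that $a \wedge 0 = 0$, $0 \oplus 0 = 0$, $a \oplus 0 = a$ and $a \wedge 1 = a$ for all $a \in 2$'' --- the very identities you single out. You have simply written out in full the computation the paper leaves implicit.
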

\begin{proof}
	Analogous to the proof of \Cref{powersetalgebra}, after observing that $a \wedge 0 = 0$, $0 \oplus 0 = 0$, $a \oplus 0 = a$ and $a \wedge 1 = a$ for all $a \in 2$.
\end{proof}

 As one verifies, under the equivalence $\Set^{\mathcal{R}} \cong \mathbb{Z}_2\Vect$, the morphism $h^{\mathcal{R}}$ equips $2$ with its canonical $\mathbb{Z}_2$-vector space structure $\mathbb{Z}_2 \cong (2, \oplus, \wedge)$.

\begin{example}[Generalised Determinisation \cite{rutten2013generalizing}]
\label{determinisationexample}
Given a distributive law, one can model the determinisation of a system with dynamics in $F$ and side-effects in $T$ (sometimes referred to as \emph{succinct} automaton) by lifting an $FT$-coalgebra $( X, k)$ to the $F$-coalgebra 
	$( TX, k^{\sharp} )$, where $k^{\sharp} := (F \mu_X \circ \lambda_{TX}) \circ Tk$. As one verifies, the latter is in fact a $T$-algebra homomorphism of type $k^{\sharp}: (TX, \mu_X) \rightarrow (FTX, F\mu_X \circ \lambda_{TX})$. For instance, if the distributive law $\lambda$ is induced by the  disjunctive $\mathcal{P}$-algebra $h^{\mathcal{P}}: \mathcal{P}2 \rightarrow 2$ with $h^{\mathcal{P}}(\varphi) = \bigvee_{u \in \varphi} u = \varphi(1)$, the lifting $k^{\sharp}$ is the DFA in $\CSL$ obtained from an NFA $k$ via the classical powerset construction.   
\end{example}	
	
	The example above illustrates the concept of a bialgebra: the algebraic part $(TX, \mu_X)$ and the coalgebraic part $(TX, k^{\sharp})$ of a lifted automaton are compatible along the distributive law $\lambda$.

    \begin{definition}[Bialgebra]
    \label{bialgebradef}
		A $\lambda$\emph{-bialgebra} is a tuple $(X, h, k)$ consisting of a $T$-algebra $(X,h)$ and an $F$-coalgebra $(X, k)$ 
such that the following diagram commutes:
\[
\begin{tikzcd}
TX \arrow{rr}{h} \arrow{d}[left]{Tk} & & X	\arrow{d}{k} \\
TFX \arrow{r}[below]{\lambda_X} & FTX \arrow{r}[below]{Fh} & FX
\end{tikzcd}
\]
	\end{definition}
	
	 A homomorphism between $\lambda$-bialgebras is a morphism between the underlying objects that is simultaneously a $T$-algebra homomorphism and an $F$-coalgebra homomorphism.
	The category of $\lambda$-bialgebras and homomorphisms is denoted by $\Bialg(\lambda)$. 
	
	The existence of a final $F$-coalgebra is equivalent to the existence of a final $\lambda$-bialgebra, as the next result shows.
	
	\begin{lemma}[\cite{jacobs2012trace}]
		Let $(\Omega, k_{\Omega})$ be the final $F$-coalgebra, then $(\Omega, h_{\Omega}, k_{\Omega})$ with $h_{\Omega}:= \obs_{(T\Omega, \lambda_{\Omega} \circ T k_{\Theta})}$ is the final $\lambda$-bialgebra satisfying $\obs_{(X, h, k)} = \obs_{(X, k)}$. Conversely, if $(\Omega, h_{\Omega}, k_{\Omega})$ is the final $\lambda$-bialgebra, then $(\Omega, k_{\Omega})$ is the final $F$-coalgebra.
	\end{lemma}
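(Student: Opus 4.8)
The plan is to exploit throughout the universal property of the final $F$-coalgebra: any two morphisms into $(\Omega, k_\Omega)$ coincide as soon as each is an $F$-coalgebra homomorphism out of a common source coalgebra. Write $c := \lambda_\Omega \circ Tk_\Omega \colon T\Omega \to FT\Omega$ for the $F$-coalgebra structure that the distributive law induces on $T\Omega$, so that by definition $h_\Omega = \obs_{(T\Omega, c)}$ is the unique $F$-coalgebra homomorphism $(T\Omega, c) \to (\Omega, k_\Omega)$, i.e. $k_\Omega \circ h_\Omega = Fh_\Omega \circ c$. This last identity is already the commutativity of the bialgebra square of \Cref{bialgebradef}, so once $(\Omega, h_\Omega)$ is shown to be a $T$-algebra, $(\Omega, h_\Omega, k_\Omega)$ is automatically a $\lambda$-bialgebra.

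First I would verify the two Eilenberg--Moore laws for $h_\Omega$. For the unit law $h_\Omega \circ \eta_\Omega = \id_\Omega$, I would check that $\eta_\Omega \colon (\Omega, k_\Omega) \to (T\Omega, c)$ is an $F$-coalgebra homomorphism, which follows from naturality of $\eta$ together with the unit axiom $\lambda_\Omega \circ \eta_{F\Omega} = F\eta_\Omega$ of the distributive law; then $h_\Omega \circ \eta_\Omega$ is an endomorphism of the final coalgebra, hence the identity. For the multiplication law $h_\Omega \circ \mu_\Omega = h_\Omega \circ Th_\Omega$, the key step --- and the main technical obstacle --- is to equip $T^2\Omega$ with the $F$-coalgebra structure $c' := \lambda_{T\Omega} \circ Tc$ and to show that both $\mu_\Omega$ and $Th_\Omega$ are $F$-coalgebra homomorphisms $(T^2\Omega, c') \to (T\Omega, c)$. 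The former uses naturality of $\mu$ and the multiplication axiom $\lambda_\Omega \circ \mu_{F\Omega} = F\mu_\Omega \circ \lambda_{T\Omega} \circ T\lambda_\Omega$; the latter uses that $h_\Omega$ is a coalgebra homomorphism together with naturality of $\lambda$. Post-composing each with $h_\Omega$ yields two coalgebra homomorphisms $(T^2\Omega, c') \to (\Omega, k_\Omega)$, which finality forces to agree.

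Next I would establish finality of $(\Omega, h_\Omega, k_\Omega)$ in $\Bialg(\lambda)$. Given an arbitrary $\lambda$-bialgebra $(X, h, k)$, the candidate morphism is $\obs_{(X,k)}$, which is an $F$-coalgebra homomorphism by construction; it remains to show it is also a $T$-algebra homomorphism, i.e. $h_\Omega \circ T\obs_{(X,k)} = \obs_{(X,k)} \circ h$. Writing $c_X := \lambda_X \circ Tk$ for the induced coalgebra on $TX$, I would show both sides are $F$-coalgebra homomorphisms $(TX, c_X) \to (\Omega, k_\Omega)$: for the left side, $T\obs_{(X,k)} \colon (TX, c_X) \to (T\Omega, c)$ is a coalgebra homomorphism (naturality of $\lambda$ plus $\obs_{(X,k)}$ being one), and one post-composes with $h_\Omega$; for the right side, the bialgebra axiom of \Cref{bialgebradef} says precisely that $h \colon (TX, c_X) \to (X, k)$ is a coalgebra homomorphism, and one post-composes with $\obs_{(X,k)}$. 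Finality of $\Omega$ then gives the equality. Uniqueness is immediate and simultaneously yields $\obs_{(X,h,k)} = \obs_{(X,k)}$: any bialgebra morphism $(X,h,k) \to (\Omega, h_\Omega, k_\Omega)$ is in particular an $F$-coalgebra homomorphism into the final coalgebra, of which there is exactly one.

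For the converse I would argue that the forgetful functor $U^\lambda \colon \Bialg(\lambda) \to \Coalg(F)$ carries the final bialgebra to the final coalgebra. Concretely, given any $F$-coalgebra $(X,k)$, form the free bialgebra $(TX, \mu_X, c_X)$ with $c_X = \lambda_X \circ Tk$; that this is a $\lambda$-bialgebra is the same computation as in the multiplication-law step above. By finality in $\Bialg(\lambda)$ there is a unique bialgebra morphism $g \colon (TX, \mu_X, c_X) \to (\Omega, h_\Omega, k_\Omega)$, and $g \circ \eta_X \colon (X,k) \to (\Omega, k_\Omega)$ is an $F$-coalgebra homomorphism since $\eta_X$ is one. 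For uniqueness, any coalgebra homomorphism $f \colon (X,k) \to (\Omega, k_\Omega)$ extends to the bialgebra morphism $h_\Omega \circ Tf$, which finality identifies with $g$; then $g \circ \eta_X = h_\Omega \circ Tf \circ \eta_X = h_\Omega \circ \eta_\Omega \circ f = f$, using naturality of $\eta$ and the unit law of the algebra $(\Omega, h_\Omega)$. Hence $(\Omega, k_\Omega)$ is final among $F$-coalgebras. The only points demanding genuine care are the two distributive-law axioms invoked in the algebra-law verifications; everything else is naturality bookkeeping and repeated appeals to finality.
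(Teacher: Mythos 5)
Your proof is correct, and it is essentially the argument behind the result the paper cites: the paper itself gives no proof of this lemma, deferring to \cite{jacobs2012trace}, and your construction — defining $h_{\Omega}$ as the unique coalgebra morphism out of $(T\Omega, \lambda_{\Omega}\circ Tk_{\Omega})$, deriving the unit, multiplication, and pentagon laws by repeated appeals to finality of $(\Omega,k_{\Omega})$, establishing finality in $\Bialg(\lambda)$ by showing $\obs_{(X,k)}$ is also an algebra morphism, and handling the converse via the free bialgebra $(TX,\mu_X,\lambda_X\circ Tk)$ and the retraction $h_{\Omega}\circ Tf\circ\eta_X = f$ — is exactly the standard route taken there. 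All the individual coalgebra-homomorphism verifications you sketch (using naturality of $\eta$, $\mu$, $\lambda$ and the two distributive-law axioms of \Cref{distributivelawdef}) check out, so there is no gap.
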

	For the distributive law in \Cref{determinisationexample},
	the final bialgebra is carried by the final coalgebra $\Pow(A^*)$ with the free $\mathcal{P}$-algebra structure that takes the union of languages.
	
	The generalised determinisation in \Cref{determinisationexample} can be rephrased as a functor $\expa_T$ that \emph{expands} an $F$-coalgebra with side-effects in $T$ into a $\lambda$-bialgebra.
	
	\begin{lemma}[\cite{jacobs2012trace}]
	\label[lemma]{expfunctor}
		Defining $\expa_T(X, k) := (TX, \mu_X, (F \mu_X \circ \lambda_{TX}) \circ Tk )$ and \\$\expa_T(f):= Tf$ yields a functor $\expa_T: \Coalg(FT) \rightarrow \Bialg(\lambda)$. 
	\end{lemma}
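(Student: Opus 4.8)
The plan is to establish three things: that $\expa_T$ sends each object $(X,k)$ of $\Coalg(FT)$ to a genuine $\lambda$-bialgebra, that it sends each $FT$-coalgebra homomorphism to a $\lambda$-bialgebra homomorphism, and that it preserves identities and composition. The last point will be immediate, since $\expa_T$ acts on morphisms as $T$, and $T$ is a functor, so $T\id = \id$ and $T(g \circ f) = Tg \circ Tf$. Likewise the algebra component $(TX, \mu_X)$ of $\expa_T(X,k)$ is a $T$-algebra for free: it is exactly the free Eilenberg-Moore algebra $F^T X$, whose axioms are the two monad laws \eqref{monadlaw} for $\mu$ and $\eta$.

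The substantive part for objects is the bialgebra compatibility square of \Cref{bialgebradef}, which for $(TX, \mu_X, k^\sharp)$ amounts to $k^\sharp \circ \mu_X = F\mu_X \circ \lambda_{TX} \circ T k^\sharp$, with $k^\sharp = F\mu_X \circ \lambda_{TX} \circ Tk$. I would prove this by a diagram chase using three ingredients, applied in this order. First, naturality of $\mu$ at $k$ rewrites $Tk \circ \mu_X = \mu_{FTX} \circ T^2 k$. Second, the multiplication axiom of the distributive law (\Cref{distributivelawdef}), instantiated at $TX$, replaces $\lambda_{TX} \circ \mu_{FTX}$ by $F\mu_{TX} \circ \lambda_{T^2X} \circ T\lambda_{TX}$. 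Third, monad associativity $\mu_X \circ \mu_{TX} = \mu_X \circ T\mu_X$ together with naturality of $\lambda$ at $\mu_X$ lets me fold the remaining terms back into $T(k^\sharp)$, yielding
\[
k^\sharp \circ \mu_X = F\mu_X \circ \lambda_{TX} \circ TF\mu_X \circ T\lambda_{TX} \circ T^2 k = F\mu_X \circ \lambda_{TX} \circ T(k^\sharp),
\]
which is precisely the desired identity.

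For morphisms, given an $FT$-coalgebra homomorphism $f \colon (X,k) \to (Y,l)$, I must check that $Tf$ is both a $T$-algebra and an $F$-coalgebra homomorphism between the expanded bialgebras. The algebra condition $\mu_Y \circ T(Tf) = Tf \circ \mu_X$ is again just naturality of $\mu$. The coalgebra condition $l^\sharp \circ Tf = F(Tf) \circ k^\sharp$ I would unfold using $l^\sharp = F\mu_Y \circ \lambda_{TY} \circ Tl$, the homomorphism property $l \circ f = FTf \circ k$ (so that $Tl \circ Tf = TF(Tf) \circ Tk$), then naturality of $\lambda$ at $Tf$ and naturality of $\mu$ at $f$ to slide $Tf$ across, arriving at $F(Tf) \circ F\mu_X \circ \lambda_{TX} \circ Tk = F(Tf) \circ k^\sharp$.

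The only real obstacle is the object-level computation in the second paragraph: it is the one place where all of the distributive-law data is needed simultaneously, and care is required to instantiate the $\mu$-axiom of $\lambda$ at the correct object ($TX$ rather than $X$) and to apply naturality of $\lambda$ at $\mu_X$ with the correct variance. Everything else reduces to naturality of $\mu$ and functoriality of $T$, which are routine.
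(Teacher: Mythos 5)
Your proof is correct: the object-level compatibility square follows exactly as you describe from naturality of $\mu$ at $k$, the multiplication axiom of $\lambda$ instantiated at $TX$, monad associativity, and naturality of $\lambda$ at $\mu_X$, and the morphism-level and functoriality checks are the routine naturality arguments you give. The paper itself states this lemma without proof, citing \cite{jacobs2012trace}, so there is no in-text argument to compare against; your direct diagram chase is the standard verification and fills that gap correctly.
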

	
We will also refer to the functor $\free_T$ that arises from $\expa_T$ by pre-composition with the canonical embedding of $F$-coalgebras into $FT$-coalgebras, therefore assigning to an $F$-coalgebra the $\lambda$-bialgebra it \emph{freely} generates.

	\begin{corollary}
	\label{freeexpfunctor}
		Defining $\free_T(X, k) := (TX, \mu_X, \lambda_X \circ Tk)$ and $\free_T(f) := Tf$ yields a functor $\free_T: \Coalg(F) \rightarrow \Bialg(\lambda)$ with $\free_T(X, k) = \expa_T(X, F\eta_X \circ k)$.
	\end{corollary}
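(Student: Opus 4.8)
The plan is to exhibit $\free_T$ as a composition of two functors, $\free_T = \expa_T \circ E$, where $E \colon \Coalg(F) \rightarrow \Coalg(FT)$ is the canonical embedding of $F$-coalgebras into $FT$-coalgebras, and then to simplify the resulting coalgebra structure to the claimed form. Since $\expa_T$ is already known to be a functor by \Cref{expfunctor}, the functoriality of $\free_T$ will follow at once once $E$ is shown to be a functor.

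First I would define $E$ on objects by $E(X, k) := (X, F\eta_X \circ k)$ and on morphisms by $E(f) := f$. To verify that $E$ is well-defined on morphisms, suppose $f \colon (X, k_X) \rightarrow (Y, k_Y)$ is an $F$-coalgebra homomorphism, that is, $k_Y \circ f = Ff \circ k_X$. I would then check that $f$ is an $FT$-coalgebra homomorphism from $(X, F\eta_X \circ k_X)$ to $(Y, F\eta_Y \circ k_Y)$ by computing
\[
(F\eta_Y \circ k_Y) \circ f = F\eta_Y \circ Ff \circ k_X = F(\eta_Y \circ f) \circ k_X = F(Tf \circ \eta_X) \circ k_X = FTf \circ (F\eta_X \circ k_X),
\]
where the third equality is the naturality of $\eta \colon 1_{\C} \Rightarrow T$ applied to $f$. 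Preservation of identities and composition is immediate, since $E$ is the identity on morphisms.

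Next, setting $\free_T := \expa_T \circ E$, I would unfold the definition of $\expa_T$ from \Cref{expfunctor} on the object $E(X,k) = (X, F\eta_X \circ k)$ to obtain $\free_T(X, k) = \bigl(TX,\ \mu_X,\ (F\mu_X \circ \lambda_{TX}) \circ T(F\eta_X \circ k)\bigr)$, and on morphisms $\free_T(f) = \expa_T(E(f)) = Tf$, matching the claimed assignment and the identity $\free_T(X,k) = \expa_T(X, F\eta_X \circ k)$. It then remains only to simplify the coalgebra component. Using functoriality of $T$, the naturality of $\lambda$ instantiated at the morphism $\eta_X \colon X \rightarrow TX$ (which gives $\lambda_{TX} \circ TF\eta_X = FT\eta_X \circ \lambda_X$), and the monad unit law $\mu_X \circ T\eta_X = \id_{TX}$, I would compute
\[
(F\mu_X \circ \lambda_{TX}) \circ T(F\eta_X \circ k) = F\mu_X \circ \lambda_{TX} \circ TF\eta_X \circ Tk = F\mu_X \circ FT\eta_X \circ \lambda_X \circ Tk = F(\mu_X \circ T\eta_X) \circ \lambda_X \circ Tk = \lambda_X \circ Tk,
\]
which yields exactly $\free_T(X,k) = (TX, \mu_X, \lambda_X \circ Tk)$.

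Since every step is a direct diagram chase relying only on the naturality of $\eta$ and $\lambda$ together with the monad laws, I do not anticipate a genuine obstacle. The only point requiring care is keeping the indices on the natural transformations straight, in particular instantiating the naturality square of $\lambda$ at the component $\eta_X$ rather than at $X$ itself.
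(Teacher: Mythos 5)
Your proposal is correct and matches the paper's intended argument: the paper itself only remarks that $\free_T$ arises from $\expa_T$ by pre-composition with the canonical embedding of $F$-coalgebras into $FT$-coalgebras, and your proof fills in exactly that route, with the diagram chases (naturality of $\eta$ for the embedding, naturality of $\lambda$ at $\eta_X$ plus the monad unit law for the simplification to $\lambda_X \circ Tk$) all checking out.
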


\section{Succinct Automata from Bialgebras}

\label{succinctbialgebra1}

In this section we discuss the foundations of our theoretical contributions.
We begin by recalling the notion of a \textit{generator} \cite{arbib1975fuzzy} and demonstrate how it can be used to translate a bialgebra into an equivalent free bialgebra.
While the treatment is very general, we are particularly interested in the case in which the bialgebra is given by a deterministic automaton that has additional algebraic structure over a given monad, and the translation results in an automaton with side-effects in that monad.
We will demonstrate that the theory in this section instantiates to the canonical RFSA \cite{denis2001residual}, the canonical \emph{nominal} RFSA \cite{moerman2019residual}, and the minimal xor automaton \cite{VuilleminG210}.

\begin{definition}[Generator and Basis]
\label{generatordefinition}
	A \emph{generator} for a $T$-algebra $( X, h )$ is a tuple $( Y, i, d )$  consisting of an object $Y$, a morphism $i \colon Y \rightarrow X$, and a morphism $d \colon X \rightarrow TY$ such that $(h \circ Ti) \circ d = \id_X$.
	A generator is called a \emph{basis} if it additionally satisfies $d \circ (h \circ Ti) = \id_{TY}$, that is, the following two diagrams commute:
	\[
	\begin{tikzcd}
	X \arrow{d}[left]{d} \arrow{r}{\id_X} & X \\
	TY \arrow{r}[below]{Ti} & TX \arrow{u}[right]{h}	
	\end{tikzcd}
	\qquad
	\begin{tikzcd}
	TY \arrow{d}[left]{Ti} \arrow{r}{\id_{TY}} & TY \\
	TX \arrow{r}[below]{h} & X \arrow{u}[right]{d}	
	\end{tikzcd}.
	\]
\end{definition}

A generator for an algebra is called a \textit{scoop} by Arbib and Manes~\cite{arbib1975fuzzy}. Here, we additionally introduce an abstract perspective on the notion of a basis.
Intuitively, one calls a set $Y$ that is embedded into an algebraic structure $X$ a generator for the latter if every element $x \in X$ admits a decomposition $d(x) \in TY$ into a formal combination of elements of $Y$ that evaluates to $x$.
If the decomposition is moreover \emph{unique}, that is, $h \circ Ti$ is not only a \emph{surjection} with right-inverse $d$, but a \emph{bijection} with two-sided inverse $d$, then a generator is called a basis.
Every algebra is generated by itself using the generator $( X, \id_X, \eta_X )$ and every free algebra $(TY, \mu_Y)$ admits the basis $(Y, \eta_Y, \id_{TY})$. In fact, an algebra admits a basis if and only if it is isomorphic to a free algebra (cf. \Cref{basisifffree}).
   We are particularly interested in classes of set-based algebras for which \emph{every} algebra admits a \emph{size-minimal} generator, that is, no generator has a carrier of smaller size. In such a situation we will also speak of \emph{canonical} generators.

\begin{example}
\label{setbasedexamples}
	\begin{itemize}
		\item 
			A tuple $( Y, i, d )$ is a generator for a $\mathcal{P}$-algebra $L = ( X,h ) \simeq ( X, \vee^h )$ iff $x = \vee^h_{y \in d(x)} i(y) $ for all $x \in X$, where we write $\vee^h$ for the complete join-semilattice structure induced by $h$. Note that if $Y \subseteq X$ is a subset, then $i(y) = y$ for all $y \in Y$.
            If $L$ satisfies the descending chain condition\footnote{\label{dcc}A partially ordered set $(P, \leq)$ satisfies the \emph{descending chain condition} if any descending chain $a_0 \geq a_1 \geq a_2 \geq ...$ in $P$ stabilises, that is, there exists some $n \geq 0$, such that $a_n = a_{n + k}$ for all $k \geq 0$.} (DCC), which is in particular the case if $X$ is finite, then defining $i(y) = y$ and $d(x) = \lbrace y \in J(L) \mid y \leq x \rbrace$ turns the set of join-irreducibles\footnote{A non-zero element $a$ in a lattice $L$ is called \emph{join-irreducible}, if for all $y,z \in L$ with $a=y \vee z$ it follows $a = y $ or $a = z$.} $J(L)$ into a size-minimal generator $( J(L), i, d )$ for $L$, cf. \Cref{joinirreducstateminimal} below.
		\item 
			A tuple $( Y, i, d )$ is a generator for an $\mathcal{R}$-algebra $V = ( X,h ) \simeq ( X, +^h, \cdot^h )$ iff $x = \sum^h_{y \in Y} d(x)(y) \cdot^h i(y)$ for all $x \in X$, where we write $+^h$ and $\cdot^h$ for the $\mathbb{Z}_2$-vector space structure induced by $h$.
			As it is well-known that every vector space can be equipped with a basis, every $\mathcal{R}$-algebra $V$ admits a basis. One can show that any finite basis is a size-minimal generator, cf. \Cref{xorbasisstateminimal} below.
	\end{itemize}
\end{example}

\begin{lemma}
\label{joinirreducstateminimal}
	For any finite $\mathcal{P}$-algebra $L = ( X, h )$ the join-irreducibles $( J(L), i, d )$ with $i(y) = y$ and $d(x) = \lbrace y \in J(L) \mid y \leq x \rbrace$ constitute a size-minimal generator.
	 \end{lemma}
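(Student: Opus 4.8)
The plan is to work under the equivalence $\Set^{\mathcal{P}} \cong \CSL$, so that $L = (X,h)$ becomes a finite complete join-semilattice $(X, \vee^h)$ and, by \Cref{setbasedexamples}, a tuple $(Y,i,d)$ is a generator precisely when $x = \vee^h_{y \in d(x)} i(y)$ for every $x \in X$. The argument then splits into two parts: first verifying that the proposed $(J(L), i, d)$ is indeed a generator, and second showing that no generator can have a strictly smaller carrier.

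For the first part, I would prove the classical fact that in a finite lattice every element is the join of the join-irreducibles below it, that is $x = \vee^h \lbrace j \in J(L) \mid j \leq x \rbrace$ for all $x \in X$. Since $X$ is finite, the order $\leq$ is well-founded, and I would argue by induction on it. The base case is $x = 0$, where the set of join-irreducibles below $0$ is empty (join-irreducibles are non-zero) and the empty join is $0$. For the inductive step, if $x$ is join-irreducible then $x$ itself lies in the indexing set and the claim is immediate; otherwise $x = y \vee^h z$ with $y, z < x$, and applying the induction hypothesis to $y$ and to $z$ expresses $x$ as a join of join-irreducibles, each of which is $\leq x$, forcing equality. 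This yields the generator axiom $(h \circ Ti) \circ d = \id_X$, since under the $\CSL$ interpretation $(h \circ Ti)(d(x))$ is exactly $\vee^h \lbrace j \in J(L) \mid j \leq x \rbrace = x$, with $i$ the inclusion.

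For size-minimality, let $(Y, i', d')$ be an arbitrary generator and set $G := i'(Y) \subseteq X$. The generator condition makes $G$ \emph{join-dense}: every $x \in X$ is a join of elements of $G$. I would then show $J(L) \subseteq G$. The key observation is that binary join-irreducibility lifts to finite joins: if $j \in J(L)$ and $j = g_1 \vee^h \dots \vee^h g_n$, then a straightforward induction on $n$ gives $j = g_k$ for some $k$. Given $j \in J(L)$, join-density together with finiteness of $X$ writes $j$ as a finite join of elements of $G$ lying below $j$, and the observation forces $j$ itself to be one of them, so $j \in G$. Hence $\vert J(L) \vert \leq \vert G \vert = \vert i'(Y) \vert \leq \vert Y \vert$, and since $(J(L), i, d)$ has carrier of size exactly $\vert J(L) \vert$, it attains this lower bound.

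The only genuinely delicate point is the lattice-theoretic core, namely that every element decomposes as the join of the join-irreducibles beneath it, and dually that every join-dense set contains all join-irreducibles. Both rest essentially on finiteness (equivalently the DCC hypothesis recalled in the statement) to license the inductions and to guarantee that the relevant joins are finite. Once these two facts are established, the identification with the generator axioms and the cardinality comparison are routine.
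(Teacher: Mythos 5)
Your proof is correct and follows essentially the same route as the paper's: the decomposition $x = \bigvee\lbrace j \in J(L) \mid j \leq x\rbrace$ is established via finiteness/DCC (you use well-founded induction where the paper uses infinite descent by contradiction), and size-minimality rests on the same key fact that a join-irreducible expressed as a join of generator images must equal one of them (you phrase this as the inclusion $J(L) \subseteq i'(Y)$, the paper as an explicit injection $J(L) \rightarrow Y$). These differences are cosmetic rather than substantive.
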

\begin{proof}
	Since $L$ is finite, it satisfies the DCC, which in turn can be used to show that the join-irreducibles constitute a generator as follows.

Assume there exists some $x \in X$ with $x \not= i^{\sharp}(d(x))$. We build an infinite sequence $(a_n)$ with $a_i > a_{i+1}$ and $a_i \not = i^{\sharp}(d(a_i))$, which contradicts the DCC. For the base case we define $a_0 := x$. For any $x \in X$, the property $x \in J(L)$ immediately implies $x = i^{\sharp}(d(x))$. Thus we can assume $a_i \not \in J(L)$.
In consequence we have $a_i = y \vee z$ for $y,z \not= a_i$, i.e $a_i > y$ and $a_i > z$. 
Assume $y = i^{\sharp}(d(y))$ and $z = i^{\sharp}(d(z))$. 
Then 
\[ i^{\sharp}(d(a_i)) \leq a_i = y \vee z = i^{\sharp}(d(y)) \vee i^{\sharp}(d(z)) = i^{\sharp}(d(y) \vee d(z)) \leq i^{\sharp}(d(a_i)). \]
It thus follows $a_i = i^{\sharp}(d(a_i))$, which is a contradiction. Hence, w.l.o.g. assume $y \not = i^{\sharp}(d(y))$ and define $a_{i+1} := y$.
	
	Let $( Y, i', d' )$ be an arbitrary generator for $L$. For any $a \in J(L)$ we have $a = \vee^h_{y \in d'(a)} i'(y)$. By the definition of join-irreducibles there exists at least one $y_a \in d'(a)$ such that $i'(y_a) = a$. One can thus define a function $f: J(L) \rightarrow Y$ with $f(a) = y_a$. Assume $f(a) = f(b)$, i.e. $y_a = y_b$. Then, by definition, \[ a = i'(y_a) = i'(y_b) = b \] which shows that $f$ is injective. It thus follows  $\vert J(L) \vert \leq \vert Y \vert$.
\end{proof}

\begin{lemma}
\label{xorbasisstateminimal}
	Any finite basis for an $\mathcal{R}$-algebra is a size-minimal generator.
\end{lemma}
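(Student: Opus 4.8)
The plan is to prove size-minimality by a direct cardinality count that exploits the fact that $\mathcal{R}$ is the free vector space monad over the two-element field $\mathbb{Z}_2$. Fix a finite basis $(B, i, d)$ for the $\mathcal{R}$-algebra $V = (X, h)$, and let $(Y, i', d')$ be an arbitrary generator for $V$; the goal is to establish $\vert B \vert \leq \vert Y \vert$. If $Y$ is infinite this is immediate, since $B$ is finite, so I would assume from the outset that $Y$ is finite.

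The first key step is to read off the cardinality of the carrier $X$ from the basis. By \Cref{generatordefinition}, a basis supplies a two-sided inverse, so $h \circ \mathcal{R}i \colon \mathcal{R}B \to X$ is a bijection with inverse $d$. Since $B$ is finite, every function $B \to \mathbb{Z}_2$ is finitely supported, whence $\mathcal{R}B = \mathbb{Z}_2^{B}$ has exactly $2^{\vert B \vert}$ elements; therefore $\vert X \vert = 2^{\vert B \vert}$, and in particular $X$ is finite. The second step is to bound $\vert X \vert$ from above using the generator. The generator condition $(h \circ \mathcal{R}i') \circ d' = \id_X$ exhibits $d'$ as a right inverse of $h \circ \mathcal{R}i' \colon \mathcal{R}Y \to X$, so the latter is a surjection, giving $\vert X \vert \leq \vert \mathcal{R}Y \vert = 2^{\vert Y \vert}$ (again all functions on the finite set $Y$ are finitely supported). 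Combining the two steps yields $2^{\vert B \vert} = \vert X \vert \leq 2^{\vert Y \vert}$, and hence $\vert B \vert \leq \vert Y \vert$, which is precisely size-minimality.

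I do not expect a genuine obstacle here; the only points requiring care are bookkeeping ones. One must justify that a morphism with a right inverse is a surjection and that a morphism with a two-sided inverse is a bijection (both immediate in $\Set$, where these maps live), and that the free $\mathcal{R}$-algebra on a finite set really has $2^{\vert B \vert}$ underlying elements — this is where the count $\vert \mathbb{Z}_2 \vert = 2$ together with the finiteness of the support condition enters. It is worth emphasising that this clean counting argument is special to the base field $\mathbb{Z}_2$: over a general field one would instead have to invoke the well-definedness of dimension and the fact that every spanning set of a finite-dimensional space has at least $\dim V$ elements, mirroring the explicit injection built in the proof of \Cref{joinirreducstateminimal}. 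The finite-field count short-circuits that route entirely.
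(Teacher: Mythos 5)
Your proof is correct and follows essentially the same route as the paper's: both arguments reduce to the cardinality count $2^{\vert B \vert} = \vert \mathcal{R}B \vert \leq \vert \mathcal{R}Y \vert = 2^{\vert Y \vert}$, using that the basis makes $h \circ \mathcal{R}i$ a bijection and the generator makes $h \circ \mathcal{R}i'$ a surjection (the paper merely composes the two maps into a surjection $\mathcal{R}Y \to \mathcal{R}B$ instead of comparing each against $\vert X \vert$). The handling of the infinite-generator case and the $\mathbb{Z}_2$-specific counting are likewise identical in substance.
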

\begin{proof}
Let $( Y, i, d )$ be a basis for an $\mathcal{R}$-algebra $(X, h)$ with $Y$ being finite. It immediately follows that $\mathcal{R}Y = 2^Y$. Let $( Y', i', d' )$ be any other generator for $( X, h )$. We can assume that $Y'$ is finite, as otherwise $Y$ would be immediately of smaller size. It thus follows that $\mathcal{R}Y'= 2^{Y'}$. By definition, $d: X \rightarrow \mathcal{R}Y$ is a bijection with inverse $h \circ \mathcal{R}i$, and $h \circ \mathcal{R}i': \mathcal{R}Y' \rightarrow X$ is a surjection. We thus know that $d \circ h \circ \mathcal{R}i': \mathcal{R}Y' \rightarrow \mathcal{R}Y$ is a surjection. In consequence we find that $\vert \mathcal{R}Y \vert \leq \vert \mathcal{R} Y' \vert$. Assume $\vert Y' \vert < \vert Y \vert$, then we can deduce \[ \vert \mathcal{R}Y' \vert = \vert 2 ^{Y'} \vert = \vert 2 \vert ^{\vert Y' \vert} < \vert 2 \vert ^ { \vert Y \vert} = \vert 2 ^Y \vert = \vert \mathcal{R}Y \vert, \] which contradicts $\vert \mathcal{R}Y \vert \leq \vert \mathcal{R} Y' \vert$. We thus can conclude that $\vert Y' \vert \geq \vert Y \vert$.
\end{proof}

It is enough to find generators for the underlying algebra of a bialgebra to derive an equivalent free bialgebra.
This is because the algebraic and coalgebraic components are tightly intertwined via a distributive law.

\begin{proposition}
\label{forgenerator-isharp-is-bialgebra-hom}
	Let $( X, h, k)$ be a $\lambda$-bialgebra and let $( Y, i, d )$ be a generator for the $T$-algebra $( X,h )$.
	Then $h \circ Ti \colon \expa_T(( Y, Fd \circ k \circ i) ) \rightarrow ( X, h, k )$ is a $\lambda$-bialgebra homomorphism.
\end{proposition}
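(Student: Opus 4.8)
The plan is to verify the two defining conditions of a $\lambda$-bialgebra homomorphism for the morphism $h \circ Ti$, namely that it is simultaneously a $T$-algebra homomorphism and an $F$-coalgebra homomorphism. Let me write $k' := Fd \circ k \circ i$ for the coalgebra structure on $Y$, so that by \Cref{expfunctor} the expanded bialgebra is
\[
\expa_T(Y, k') = (TY, \mu_Y, (F\mu_Y \circ \lambda_{TY}) \circ Tk').
\]
I must show that $\varphi := h \circ Ti \colon TY \rightarrow X$ is a morphism of bialgebras from this to $(X, h, k)$.

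First I would handle the \emph{algebraic} component, which is the easy half. The claim is that $\varphi = h \circ Ti$ is a $T$-algebra homomorphism from $(TY, \mu_Y)$ to $(X, h)$, i.e.\ that $h \circ T\varphi = \varphi \circ \mu_Y$. This follows from naturality of $\mu$ (giving $Ti \circ \mu_Y = \mu_X \circ T^2 i$), the $T$-algebra multiplication law for $(X,h)$ (giving $h \circ \mu_X = h \circ Th$), and the functoriality identity $Th \circ T^2 i = T(h \circ Ti) = T\varphi$. Chaining these, $h \circ T\varphi = h \circ Th \circ T^2 i = h \circ \mu_X \circ T^2 i = h \circ Ti \circ \mu_Y = \varphi \circ \mu_Y$, as required. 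This step is a routine diagram chase and requires no use of the generator property.

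The \emph{coalgebraic} component is where the generator assumption $(h \circ Ti) \circ d = \id_X$ and the bialgebra compatibility square must be combined, and this is the main obstacle. I need to show $k \circ \varphi = F\varphi \circ k^{\exp}$, where $k^{\exp} = (F\mu_Y \circ \lambda_{TY}) \circ Tk'$ is the expanded coalgebra structure. The strategy is to unfold $k^{\exp}$ using the definition of $k' = Fd \circ k \circ i$, so that $Tk' = TFd \circ Tk \circ Ti$, and then to transport the naturality of $\lambda$ through $d$ and the definition of the lifting $k^\sharp$ from \Cref{determinisationexample}. Concretely, I would first observe that the bialgebra law $k \circ h = Fh \circ \lambda_X \circ Tk$ for $(X,h,k)$ lets me rewrite $k \circ \varphi = k \circ h \circ Ti = Fh \circ \lambda_X \circ Tk \circ Ti$. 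The remaining work is to massage the right-hand side $F\varphi \circ F\mu_Y \circ \lambda_{TY} \circ TFd \circ Tk \circ Ti$ into the same form; here naturality of $\lambda$ in the object argument (applied along $d \colon X \to TY$) converts $\lambda_{TY} \circ TFd$ into $FTd \circ \lambda_X$, after which the generator identity $h \circ Ti \circ d = \id_X$ collapses the composite $F(h \circ Ti) \circ FTd = F((h\circ Ti)\circ d)\cdot(\ldots)$ back to the identity on the relevant factor. The delicate point will be keeping track of the $T$-multiplications and the functor $F$ applications so that the two expressions genuinely coincide; I expect the cleanest route is to recognise $h \circ Ti$ as a bialgebra morphism out of the \emph{free} bialgebra and to use the universal property rather than chasing the large pentagon directly.

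Finally, I would note that once both squares commute, the statement follows immediately by the definition of a $\lambda$-bialgebra homomorphism as a morphism that is both a $T$-algebra and an $F$-coalgebra homomorphism. As a sanity check on the semantic content, language preservation then comes for free: since bialgebra homomorphisms commute with the maps into the final bialgebra, $\obs_{\expa_T(Y,k')} = \obs_{(X,h,k)} \circ \varphi$, so the succinct automaton $(Y, k')$ accepts the same behaviour as $(X, h, k)$ through its generators. I anticipate the proof in the paper proceeds by exactly such a diagram chase, possibly organised around the functor $\expa_T$ and its action on the coalgebra morphism $i \colon Y \to X$ lifted appropriately.
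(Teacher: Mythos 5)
Your proposal is correct and matches the paper's proof in all essentials: the paper also dispatches the $T$-algebra half as standard and establishes the $F$-coalgebra half by a single diagram chase combining the bialgebra pentagon for $(X,h,k)$, naturality of $\lambda$ along $d$, naturality of $\mu$, the algebra law $h \circ \mu_X = h \circ Th$, and the generator identity $(h \circ Ti) \circ d = \id_X$ — exactly the ingredients you list. The only point to tighten is the ``collapse'' step: what must be shown is $h \circ Ti \circ \mu_Y \circ Td = h$, which needs the already-established algebra-homomorphism property of $h \circ Ti$ in addition to the generator identity, as you yourself anticipate when flagging the bookkeeping of the $T$-multiplications.
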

\begin{proof}
By definition we have the equality \[ \expa_T(( Y, Fd \circ k \circ i )) = ( TY, \mu_Y, F \mu_Y \circ \lambda_{TY} \circ T(Fd \circ k \circ i) ). \] It is well-known that $h \circ Ti$ is a  homomorphism between the underlying $T$-algebra structures. It thus remains to show that it is an $F$-coalgebra homomorphism. The latter follows from the commutativity of the diagram below:
\begin{equation*}
	\begin{tikzcd}[row sep = 1.5em, column sep = 5em]
			TY \arrow{r}{Ti} \arrow{d}[left]{Ti} & TX \arrow{dd}{Tk} \arrow{rr}{h} & & X \arrow{ddddd}{k} \\
		TX \arrow{d}[left]{Tk} \\
		TFX \arrow{d}[left]{TFd} \arrow{r}{\id_{TFX}} & TFX \arrow{ddr}{\lambda_X} \\
		TFTY \arrow{d}[left]{\lambda_{TY}} \arrow{r}{TFTi} & TFTX \arrow{u}{TFh} \\
		FT^2Y \arrow{r}{FT^2i} \arrow{d}[left]{F\mu_Y} & FT^2X \arrow{r}{FTh} \arrow{d}{F \mu_X} & FTX \arrow{d}{Fh} \\
		FTY \arrow{r}[below]{FTi} & FTX \arrow{r}[below]{Fh} & FX \arrow{r}[below]{\id_{FX}} & FX
	\end{tikzcd}.
\end{equation*}
\end{proof}

Intuitively, the bialgebra $( X, h, k )$ is a deterministic automaton with additional algebraic structure in the monad $T$ and say initial state $x \in X$, while the equivalent free bialgebra is the determinisation of the succinct automaton $Fd \circ k \circ i \colon Y \rightarrow FTY$ with side-effects in $T$ and initial state $d(x) \in TY$. 

The statement below establishes that, for bases, the morphism $d$ is an algebra homomorphism, and, intuitively, that elements of a basis are uniquely generated by their image under the monad unit, that is, typically by themselves. We will use this technical result, among others, in \Cref{forbasis-isharp-is-bialgebra-iso}.

\begin{lemma}
\label{forbasis-d-isalgebrahom}
	Let $(Y, i, d)$ be a basis for a $T$-algebra $(X, h)$. Then $\mu_Y \circ Td = d \circ h$ and $d \circ i = \eta_Y$.
\end{lemma}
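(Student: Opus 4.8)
The plan is to observe that the basis conditions make the ``evaluation'' morphism $e := h \circ Ti \colon TY \to X$ a two-sided inverse of $d$ in $\C$, and moreover a \emph{homomorphism} of $T$-algebras; both identities then follow formally, the first expressing that $d = e^{-1}$ is again an algebra homomorphism, the second pinning down the value of $d$ on the generators.

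First I would check that $e = h \circ Ti$ is a $T$-algebra homomorphism $(TY, \mu_Y) \to (X, h)$. It factors through $(TX,\mu_X)$, where $Ti = F^T(i)$ is a homomorphism $(TY, \mu_Y) \to (TX, \mu_X)$ by naturality of $\mu$ (i.e.\ $\mu_X \circ T(Ti) = Ti \circ \mu_Y$), and $h$ is a homomorphism $(TX, \mu_X) \to (X, h)$ precisely because of the associativity law $h \circ Th = h \circ \mu_X$ of \Cref{def:algebraovermonad}. Composing, $e$ satisfies $h \circ Te = e \circ \mu_Y$. The only mildly subtle point here is recognising the structure map $h$ itself as a homomorphism out of the free algebra on $X$.

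For the first identity, the basis conditions $(h \circ Ti) \circ d = \id_X$ and $d \circ (h \circ Ti) = \id_{TY}$ say exactly that $e \circ d = \id_X$ and $d \circ e = \id_{TY}$, so $e$ is an isomorphism in $\C$ with inverse $d$. The set-level inverse of a bijective algebra homomorphism is again a homomorphism; concretely, using $Te \circ Td = T(e \circ d) = \id_{TX}$ together with $h \circ Te = e \circ \mu_Y$,
\begin{align*}
	d \circ h &= d \circ h \circ Te \circ Td = d \circ e \circ \mu_Y \circ Td = \mu_Y \circ Td,
\end{align*}
which is the desired $\mu_Y \circ Td = d \circ h$.

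For the second identity I would compare $d \circ i$ with $\eta_Y$ through the isomorphism $e$. On one hand $e \circ (d \circ i) = (e \circ d) \circ i = i$; on the other hand, by naturality of $\eta$ (so $Ti \circ \eta_Y = \eta_X \circ i$) and the unit law $h \circ \eta_X = \id_X$, we get $e \circ \eta_Y = h \circ Ti \circ \eta_Y = h \circ \eta_X \circ i = i$. Since $e$ is injective it follows that $d \circ i = \eta_Y$; equivalently, one precomposes $\eta_Y = d \circ e \circ \eta_Y = d \circ h \circ \eta_X \circ i = d \circ i$ directly. I expect no genuine obstacle beyond this bookkeeping: the whole argument is the standard principle that a generator is a basis exactly when the comparison map $e$ is invertible, combined with the fact that inverses of algebra isomorphisms are themselves algebra homomorphisms.
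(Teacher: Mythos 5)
Your proof is correct and takes essentially the same route as the paper: the explicit computations you give for both identities are exactly the paper's two diagram chases (naturality of $\mu$ and $\eta$, the algebra laws for $h$, and the two basis conditions composed in the same order, since $Te = Th \circ T^2i$), and your conceptual framing --- that $e = h \circ Ti$ is an algebra homomorphism which is invertible in the base category, so its inverse $d$ is again a homomorphism --- is precisely the paper's alternative remark that the first equality follows from monadicity of the forgetful functor $U\colon \EM \rightarrow \mathscr{C}$, which reflects isomorphisms.
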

\begin{proof}
The statement follows from the commutativity of the following two diagrams:
	\begin{equation*}
			\begin{tikzcd}
				TX \arrow{rr}{Td} \arrow{d}[left]{\id_{TX}} & & T^2Y \arrow{dl}{T^2i} \arrow{dd}{\mu_Y} \\
				TX \arrow{dd}[left]{h} & T^2X \arrow{l}{Th} \arrow{d}{\mu_X} \\
				& TX \arrow{dl}{h} & TY \arrow{l}{Ti} \arrow{d}{\id_{TY}} \\
				X \arrow{rr}[below]{d} & & TY
			\end{tikzcd}
			\qquad
		\begin{tikzcd}
			Y \arrow{r}{i} \arrow{d}[left]{\eta_Y} & X \arrow{r}{\id_X} \arrow{d}[left]{\eta_X} & X \arrow[bend left]{dddll}{d} \\
			TY \arrow{dd}[left]{\id_{TY}} \arrow{r}{Ti} &TX \arrow{ur}{h} \\ 
			\\
			TY
		\end{tikzcd}.
	\end{equation*}
Alternatively, the first equality can be deduced from Beck's monadicity theorem: since the forgetful functor $U: \EM \rightarrow \mathscr{C}$ is monadic, it reflects isomorphisms.
\end{proof}

  \begin{lemma}
  \label{basisifffree}
  	A $T$-algebra admits a basis iff it is isomorphic to a free $T$-algebra.
  \end{lemma}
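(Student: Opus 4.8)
The plan is to prove the two implications separately. For the easier direction, I would start from the observation---already recorded in the text following \Cref{generatordefinition}---that every free algebra $(TY,\mu_Y)$ carries the canonical basis $(Y,\eta_Y,\id_{TY})$, where indeed $h\circ Ti = \mu_Y\circ T\eta_Y = \id_{TY}$ by the monad unit law, so that both basis axioms reduce to $\id_{TY}\circ\id_{TY}=\id_{TY}$. It then suffices to transport this basis along an isomorphism. Concretely, if $\phi\colon(TY,\mu_Y)\to(X,h)$ is a $T$-algebra isomorphism, I would set $i' := \phi\circ\eta_Y\colon Y\to X$ and $d' := \phi^{-1}\colon X\to TY$. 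The key computation is $h\circ Ti' = h\circ T\phi\circ T\eta_Y = \phi\circ\mu_Y\circ T\eta_Y = \phi$, where the middle step uses that $\phi$ is an algebra homomorphism ($h\circ T\phi=\phi\circ\mu_Y$) and the last step uses the monad unit law. With $h\circ Ti'=\phi$ in hand, the two basis axioms become $\phi\circ\phi^{-1}=\id_X$ and $\phi^{-1}\circ\phi=\id_{TY}$, which hold since $\phi$ is a bijection. Hence $(Y,i',d')$ is a basis for $(X,h)$.

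For the converse, suppose $(Y,i,d)$ is a basis for $(X,h)$. By the basis axioms the morphism $h\circ Ti\colon TY\to X$ is a bijection with two-sided inverse $d$. The plan is to show that $h\circ Ti$ is in fact a $T$-algebra homomorphism $(TY,\mu_Y)\to(X,h)$---a fact already invoked in the proof of \Cref{forgenerator-isharp-is-bialgebra-hom} for arbitrary generators, and which follows from $h\circ Ti\circ\mu_Y = h\circ\mu_X\circ T^2 i = h\circ Th\circ T^2 i = h\circ T(h\circ Ti)$ using naturality of $\mu$ and the multiplication axiom $h\circ Th=h\circ\mu_X$. Since $h\circ Ti$ is a $T$-algebra homomorphism that is bijective on underlying objects, it is an isomorphism in $\EM$: the set-theoretic inverse $d$ of a bijective homomorphism is automatically a homomorphism (equivalently, this is recorded in \Cref{forbasis-d-isalgebrahom}, which gives $\mu_Y\circ Td = d\circ h$, i.e.\ that $d$ itself is an algebra homomorphism). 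Therefore $(X,h)\cong(TY,\mu_Y)$ is free.

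I do not expect a genuine obstacle here: both directions are essentially bookkeeping with the monad laws, naturality of $\mu$, and the algebra axioms, and the two nontrivial ingredients---that $h\circ Ti$ is always a homomorphism and that $d$ is a homomorphism for a basis---are available from \Cref{forgenerator-isharp-is-bialgebra-hom} and \Cref{forbasis-d-isalgebrahom} respectively. The only point requiring a word of care is the standard categorical fact that a bijective (more precisely, underlying-iso) homomorphism of Eilenberg--Moore algebras is an isomorphism in $\EM$, which can either be checked directly by the short diagram chase above or justified by monadicity of the forgetful functor $U\colon\EM\to\C$, exactly as in the alternative argument of \Cref{forbasis-d-isalgebrahom}.
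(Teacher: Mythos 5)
Your proof is correct and takes essentially the same route as the paper's: the forward direction constructs the basis $(Y,\,\phi\circ\eta_Y,\,\phi^{-1})$ (the paper's $(Y,\,f^{-1}\circ\eta_Y,\,f)$ with $\phi=f^{-1}$) and verifies the axioms via the homomorphism property and the unit law, while the converse uses that $h\circ Ti$ and $d$ are mutually inverse homomorphisms, with $d$ being a homomorphism exactly by \Cref{forbasis-d-isalgebrahom}, just as in the paper. The only cosmetic caveat is your ``bijective / set-theoretic inverse'' phrasing, which---as you note yourself---must be read as ``isomorphism in the underlying category $\C$'' since the base category is arbitrary; with that reading your argument matches the paper's.
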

  \begin{proof}
  	Assume $(X,h)$ admits a basis $(Y, i, d)$. Then, by definition, $d: X \rightarrow TY$ and $h \circ Ti: TY \rightarrow X$ are inverse to each other, thus witnessing an isomorphism of $X$ and $TY$ in the base category $\mathscr{C}$. It remains to show that the isomorphism lifts to $\EM$. It immediately follows that $h \circ Ti$ lifts to a $T$-algebra homomorphism $h \circ Ti: (TY, \mu_Y) \rightarrow (X,h)$. The morphism $d$ lifts to a $T$-algebra homomorphism $d: (X,h) \rightarrow (TY, \mu_Y)$ by \Cref{forbasis-d-isalgebrahom}.
  	
  	Conversely, assume $f: (X,h) \rightarrow (TY, \mu_Y)$ is an isomorphism in $\EM$. Let $i := f^{-1} \circ \eta_Y: Y \rightarrow X$ and $d := f: X \rightarrow TY$. Then $(Y, i, d)$ is a basis for $(X,h)$, since
  	\begin{align*}
  		d \circ (h \circ Ti) &= f \circ h \circ T(f^{-1} \circ \eta_Y) && \textnormal{(Definitions of } i, d) \\
  		&= f \circ h \circ T(f^{-1}) \circ T(\eta_Y) && \textnormal{(Functoriality of } T) \\
  		&= f \circ f^{-1} \circ \mu_Y \circ T(\eta_Y) && (f^{-1} \textnormal{ is homomorphism)}\\
  		&= f \circ f^{-1} && (\mu_Y \circ T(\eta_Y) = \id_{TY}) \\
  		&= \id_{TY} && (f \textnormal{ is isomorphism)}
  	\end{align*}
  	and similarly
  	\begin{align*}
  	(h \circ Ti) \circ d &= h \circ T(f^{-1} \circ \eta_Y) \circ f	&& \textnormal{(Definitions of } i, d) \\
  	&= h \circ T(f^{-1}) \circ T(\eta_Y) \circ f && \textnormal{(Functoriality of } T) \\
  	&= f^{-1} \circ \mu_Y \circ T(\eta_Y) \circ f && (f^{-1} \textnormal{ is homomorphism)} \\
  	&= f^{-1} \circ f && (\mu_Y \circ T(\eta_Y) = \id_{TY}) \\
  	&= \id_X && (f \textnormal{ is isomorphism)}
  	\end{align*}
  	  \end{proof}

The following result observes that if the generator for the algebra of a bialgebra in \Cref{forgenerator-isharp-is-bialgebra-hom} is in fact a \emph{basis}, then the equivalence  involved is an isomorphism.

\begin{proposition}
\label{forbasis-isharp-is-bialgebra-iso}
	Let $( X, h, k)$ be a $\lambda$-bialgebra and let $( Y, i, d )$ be a basis for the $T$-algebra $( X,h )$.
	Then $h \circ Ti \colon \expa_T(( Y, Fd \circ k \circ i) ) \rightarrow ( X, h, k )$ is a $\lambda$-bialgebra isomorphism.
\end{proposition}
\begin{proof}
From \Cref{forgenerator-isharp-is-bialgebra-hom} we know that $h\circ Ti$ is a $\lambda$-bialgebra homomorphism. By the definition of a basis, $d$ is a two-sided inverse to $h\circ Ti$ as ordinary morphism. It thus remains to show that $d$ is a $\lambda$-bialgebra homomorphism. By \Cref{forbasis-d-isalgebrahom} it is a $T$-algebra homomorphism, and the diagram below shows that it commutes with $F$-coalgebra structures: 
\begin{equation*}
		\begin{tikzcd}[]
			X \arrow{rrrrr}{k} \arrow{dd}[left]{d} \arrow{rd}{\id_X} & & & & & FX \arrow{dd}[right]{Fd} \\
			& X \arrow{rrrru}{k} & & & FTX \arrow{d}{FTd} \arrow{ur}{Fh} & \\
			TY \arrow{r}[below]{Ti} & TX \arrow{u}{h} \arrow{r}[below]{Tk} & TFX \arrow{r}[below]{TFd} \arrow{rru}{\lambda_X} & TFTY \arrow{r}[below]{\lambda_{TY}} & FT^2Y \arrow{r}[below]{F\mu_Y} & FTY
		\end{tikzcd}.
\end{equation*}
\end{proof}

We conclude this section by illustrating how \Cref{forgenerator-isharp-is-bialgebra-hom} can be used to construct the canonical RFSA \cite{denis2001residual}, the canonical nominal RFSA \cite{moerman2019residual}, and the minimal xor automaton \cite{VuilleminG210} for a regular language $L$ over some alphabet $A$. All examples follow three analogous steps:
\begin{enumerate}
	\item We construct the minimal\footnote{Minimal in the sense that every state is reachable by an element of $A^*$ and no two different states observe the same language.} pointed coalgebra $\M_{L}$ for the (nominal) set endofunctor $F = 2 \times (-)^{A}$ accepting $L$. For the case $A = \lbrace a, b \rbrace$ and $L = (a+b)^*a$, the coalgebra $\M_{L}$ is depicted in \Cref{m(l)}.
	\item We equip the former with additional algebraic structure in a monad $T$ (which is related to $F$ via a canonically induced distributive law $\lambda$) by generating the $\lambda$-bialgebra $\free_T(\M_{L})$. By identifying semantically equivalent states we consequently derive the minimal\footnote{Minimal in the sense that every state is reachable by an element of $T(A^*)$ and no two different states observe the same language.} (pointed) $\lambda$-bialgebra $( X, h, k )$  for $L$.
	\item We identify canonical generators $( Y, i, d )$ for $( X, h )$ and use \Cref{forgenerator-isharp-is-bialgebra-hom} to derive an equivalent succinct automaton $( Y, Fd \circ k \circ i )$ with side-effects in $T$.
\end{enumerate} 

\begin{figure}[t]
	\centering
\begin{subfigure}[c]{\columnwidth}
		\centering	
		\scriptsize
		\begin{subfigure}[b]{.68 \columnwidth}
		\centering
		\centering
							\adjustbox{valign=m}{
			\begin{tikzpicture}[node distance=6em]
				\node[state] (0) {$\lbrack \emptyset \rbrack$};
				\node[state, right of=0, initial, initial text=] (x) {$\lbrack \lbrace x \rbrace \rbrack$};
					\node[state, right of=x, accepting] (y) {$\lbrack \lbrace y \rbrace \rbrack$};	
			    \path[->]
			(0) edge[loop above] node{$a,b$} (0)
			(x) edge[loop above] node{$b$} (x)
			(x) edge[above, bend left] node{$a$} (y)
			(y) edge[below, bend left] node{$b$} (x)
			(y) edge[loop right] node{$a$} (y)
			;
			\end{tikzpicture}
			}
					\adjustbox{valign=m}{
			\resizebox{0.4 \columnwidth}{!}{%
				\begin{tabular}{ c|c|c|c } 
 $\vee$ & $\lbrack \lbrace x \rbrace \rbrack$ & $ \lbrack \lbrace y \rbrace \rbrack$ &  $\lbrack \emptyset \rbrack$ \\
 \hline 
$\lbrack \lbrace x \rbrace \rbrack$ & $\lbrack \lbrace x \rbrace \rbrack$ & $\lbrack \lbrace y \rbrace \rbrack$ & $\lbrack \lbrace x \rbrace \rbrack$ \\ 
 \hline
 $\lbrack \lbrace y \rbrace \rbrack$ & $\lbrack \lbrace y \rbrace \rbrack$ & $\lbrack \lbrace y \rbrace \rbrack$ & $\lbrack \lbrace y \rbrace \rbrack$ \\
 \hline
 $\lbrack \emptyset \rbrack$ & $\lbrack \lbrace x \rbrace \rbrack$ & $\lbrack \lbrace y \rbrace \rbrack$ & $\lbrack \emptyset \rbrack$
\end{tabular}
}}
					\caption{}
		\label{overlineml}
		\end{subfigure}
	\begin{subfigure}[b]{.25 \columnwidth}
		\scriptsize
		\centering
							\adjustbox{valign=m}{
		\begin{tikzpicture}[node distance=6em]
			\node[state, initial, initial text=] (x) {$\lbrack \lbrace x \rbrace \rbrack$};
				\node[state, right of=x, accepting] (y) {$\lbrack \lbrace y \rbrace \rbrack$};	
		    \path[->]
		(x) edge[loop above] node{$a,b$} (x)
		(x) edge[above, bend left] node{$a$} (y)
		(y) edge[below, bend left] node{$a,b$} (x)
		(y) edge[loop right] node{$a$} (y)
		;
		\end{tikzpicture}
		}
		\caption{}
	\label{jiromaton}
	\end{subfigure}
	\end{subfigure}	
	\caption{(a) The minimal CSL-structured DFA for $L = (a+b)^*a$; (b) The canonical RFSA for $L = (a+b)^*a$}
	\label{crfsadiag1}
\end{figure}
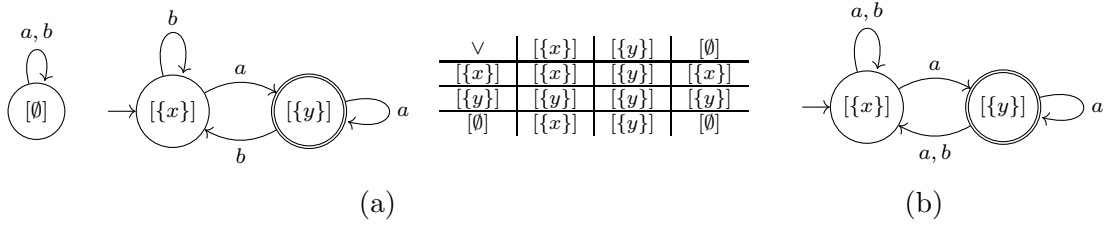

\begin{example}[The Canonical RFSA]
\label{canonicalrfsaexample}
    Using the $\mathcal{P}$-algebra structure $h^{\mathcal{P}}: \mathcal{P}2 \rightarrow 2$ with $h^{\mathcal{P}}(\varphi) = \varphi(1)$, we derive a canonical distributive law $\lambda^{\mathcal{P}}$ between $F$ and the powerset monad $\mathcal{P}$. The minimal pointed $\lambda^{\mathcal{P}}$-bialgebra for $L = (a+b)^*a$ with its underlying CSL structure is depicted in \Cref{overlineml}. The partially ordered state space $L = \lbrace \lbrack \emptyset \rbrack \leq \lbrack \lbrace x \rbrace \rbrack \leq \lbrack \lbrace y \rbrace \rbrack \rbrace$ is necessarily finite, thus satisfies the descending chain condition, which turns the set of join-irreducibles into a size-minimal generator $( J(L), i, d)$ with $i(y) = y$ and $d(x) = \lbrace y \in J(L) \mid y \leq x \rbrace$, cf. \Cref{joinirreducstateminimal}. In this case, the join-irreducibles are given by all non-zero states. The $\mathcal{P}$-succinct automaton consequently induced by \Cref{forgenerator-isharp-is-bialgebra-hom} is depicted in \Cref{jiromaton}; it can be recognised as the canonical RFSA, cf. e.g. \cite{MyersAMU15}.
 \end{example}
 
 The soundness of the construction in \Cref{overlineml} can be verified with the following result, which translates the abstract definition of a free $\lambda^{\mathcal{P}}$-bialgebra to concrete data.
  \begin{lemma}
\label[lemma]{freepowersetbialgebrastructure}
		Let $( \mathcal{P}X, \mu^{\mathcal{P}}_X, \langle \overline{\varepsilon}, \overline{\delta} \rangle ) := \free^{\lambda^{\mathcal{P}}}(( X, \langle \varepsilon, \delta \rangle ))$. Then it holds $\overline{\varepsilon}(\varphi) = \vee_{y \in \varepsilon^{-1}(1)} \varphi(y)$ and $\overline{\delta}_a(\varphi)(x) = \vee_{y \in \delta_a^{-1}(x)} \varphi(y)$.
\end{lemma}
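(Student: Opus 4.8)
The plan is to unfold the definition of the free bialgebra functor $\free_{\mathcal{P}}$ from \Cref{freeexpfunctor}, which gives $\free_T(X, k) = (TX, \mu_X, \lambda_X \circ Tk)$. Here $T = \mathcal{P}$, the distributive law is $\lambda^{\mathcal{P}} = \lambda^{h^{\mathcal{P}}}$ as constructed in \Cref{induceddistrlaw}, and the coalgebra is $k = \langle \varepsilon, \delta \rangle : X \to 2 \times X^A$. So the transition structure of the free bialgebra is precisely the composite $\langle \overline{\varepsilon}, \overline{\delta} \rangle = \lambda^{\mathcal{P}}_X \circ \mathcal{P}\langle \varepsilon, \delta \rangle$, and the whole statement reduces to computing this composite on a subset $\varphi \in \mathcal{P}X$ and projecting onto each component. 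First I would substitute the explicit formula \eqref{induceddistrlaweq}, namely $\lambda^{h}_X = (h \times \st) \circ \langle \mathcal{P}\pi_1, \mathcal{P}\pi_2 \rangle$, with $h = h^{\mathcal{P}}$ the disjunctive algebra $h^{\mathcal{P}}(\psi) = \psi(1) = \bigvee_{u \in \psi} u$ from \Cref{powersetalgebra}.

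\textbf{The output component.}
For $\overline{\varepsilon}$, I would trace $\varphi$ through $\pi_1 \circ \langle \overline{\varepsilon}, \overline{\delta} \rangle$. Projecting the pairing $\langle \mathcal{P}\pi_1, \mathcal{P}\pi_2 \rangle$ onto the first factor and then applying $h^{\mathcal{P}}$ gives $\overline{\varepsilon} = h^{\mathcal{P}} \circ \mathcal{P}\pi_1 \circ \mathcal{P}\langle \varepsilon, \delta \rangle = h^{\mathcal{P}} \circ \mathcal{P}\varepsilon$, using $\pi_1 \circ \langle \varepsilon, \delta \rangle = \varepsilon$ and functoriality of $\mathcal{P}$. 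Recalling from \Cref{exampleofmonads} that $\mathcal{P}\varepsilon(\varphi)(j) = \bigvee_{y \in \varepsilon^{-1}(j)} \varphi(y)$, and that $h^{\mathcal{P}}$ evaluates at $1$, this yields $\overline{\varepsilon}(\varphi) = \bigvee_{y \in \varepsilon^{-1}(1)} \varphi(y)$, as claimed.

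\textbf{The transition component.}
For $\overline{\delta}$, the second projection of $\lambda^{\mathcal{P}}_X$ is $\st \circ \mathcal{P}\pi_2$, where $\st : \mathcal{P}(X^A) \to (\mathcal{P}X)^A$ is the strength $\st(U)(a) = \mathcal{P}(\ev_a)(U)$ from \cref{strengthdef}. Thus $\overline{\delta} = \st \circ \mathcal{P}\delta$, and evaluating at a letter $a$ gives $\overline{\delta}_a = \mathcal{P}(\ev_a) \circ \mathcal{P}\delta = \mathcal{P}(\ev_a \circ \delta) = \mathcal{P}(\delta_a)$, since $\ev_a \circ \delta = \delta_a$. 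Applying the formula for $\mathcal{P}$ on morphisms once more, $\mathcal{P}(\delta_a)(\varphi)(x) = \bigvee_{y \in \delta_a^{-1}(x)} \varphi(y)$, which is exactly the desired expression.

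\textbf{Anticipated obstacle.}
The computation is essentially routine diagram-chasing, so there is no deep difficulty, only bookkeeping. The step most likely to require care is correctly threading the strength function and the two projections through the pairing $\langle \mathcal{P}\pi_1, \mathcal{P}\pi_2 \rangle$ — in particular making sure that $\pi_2 \circ \langle \overline{\varepsilon}, \overline{\delta}\rangle$ picks out $\st \circ \mathcal{P}\pi_2$ rather than the first component, and verifying the identity $\ev_a \circ \delta = \delta_a$ that collapses strength into evaluation at $a$. One should also double-check that the projection identities $\pi_i \circ (h^{\mathcal{P}} \times \st) = \dots$ are used consistently with the convention that subsets are identified with their characteristic functions, so that $h^{\mathcal{P}}(\psi) = \psi(1)$ genuinely computes the join. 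Once these identifications are made explicit, the two equalities drop out immediately.
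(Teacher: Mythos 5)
Your proposal is correct and follows essentially the same route as the paper's own proof: unfold $\langle \overline{\varepsilon}, \overline{\delta} \rangle = \lambda^{\mathcal{P}}_X \circ \mathcal{P}\langle \varepsilon, \delta \rangle$ via the explicit formula \eqref{induceddistrlaweq}, project onto each component so that $\overline{\varepsilon} = h^{\mathcal{P}} \circ \mathcal{P}(\varepsilon)$ and $\overline{\delta}_a = \mathcal{P}(\ev_a) \circ \mathcal{P}(\delta) = \mathcal{P}(\delta_a)$, and read off the joins from the definition of $\mathcal{P}$ on morphisms. The bookkeeping steps you flag (the projection identities and $\ev_a \circ \delta = \delta_a$) are exactly the steps the paper's calculation carries out, so no gap remains.
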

\begin{proof}
\begingroup
\allowdisplaybreaks
	The first equality is a consequence of 
	\begin{align*}
		\overline{\varepsilon}(\varphi) 
		&= \pi_1 \circ (h^{\mathcal{P}} \times \st) \circ ( \mathcal{P}\pi_1, \mathcal{P}\pi_2 ) \circ \mathcal{P}(\langle \varepsilon, \delta \rangle)(\varphi) && \textnormal{(Definition of } \overline{\varepsilon}) \\
		&= h^{\mathcal{P}} \circ \mathcal{P}(\varepsilon)(\varphi) && \textnormal{(Definition of } \pi_1) \\
		&= \mathcal{P}(\varepsilon)(\varphi)(1) && \textnormal{(Definition of } h^{\mathcal{P}}) \\
		&= \vee_{y \in \varepsilon^{-1}(1)} \varphi(y) && \textnormal{(Definition of } \mathcal{P}(\varepsilon))
	\end{align*}
	For the second equality we observe
	\begin{align*}
		\overline{\delta}_a(\varphi)(x) &= 
		\overline{\delta}(\varphi)(a)(x) && \textnormal{(Definition of } \overline{\delta}_a) \\
		&= \pi_2 \circ (h^{\mathcal{P}} \times \st ) \circ \langle \mathcal{P}\pi_1, \mathcal{P}\pi_2 \rangle \circ \mathcal{P}(\langle \varepsilon, \delta \rangle )(\varphi)(a)(x) && \textnormal{(Definition of } \overline{\delta}) \\
		&= \st \circ \mathcal{P}(\delta)(\varphi)(a)(x) && \textnormal{(Definition of } \pi_2) \\
		&= \mathcal{P}(\ev_a)(\mathcal{P}(\delta)(\varphi))(x) && \textnormal{(Definition of } \st) \\
		&= \mathcal{P}(\delta_a)(\varphi)(x) && (\textnormal{Definition of } \delta_a) \\
		&=  \vee_{y \in \delta_a^{-1}(x)} \varphi(y) && \textnormal{(Definition of } \mathcal{P}(\delta_a)) 
	\end{align*}
	\endgroup
\end{proof}
 
 To cover the canonical \emph{nominal} RFSA we need the following result which shows that the canonical strength\footref{strengthdef} function for the powerset monad $\mathcal{P}$ on the category $\Set$ naturally lifts to the nominal powerset monad $\mathcal{P}_{\textnormal{n}}$ (cf. \Cref{exampleofmonads}) on the category $\Nom{\mathbb{A}}$ of finitely supported nominal $\mathbb{A}$-sets and equivariant functions.
 
 \begin{lemma}
 \label{equivariantstrength}
 	The strength function $\st: \mathcal{P}_{\textnormal{n}}(X^A) \rightarrow (\mathcal{P}_{\textnormal{n}}X)^A$ satisfying $ \st(\Phi)(a) = \mathcal{P}_{\textnormal{n}}(\ev_a)(\Phi)$ is equivariant.
 \end{lemma}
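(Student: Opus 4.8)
The plan is to prove equivariance by a direct unfolding of the permutation actions on the four nominal sets involved, reducing the claim to the elementary conjugation identity for functions. Recall that equivariance of $\st$ means that for every $\pi \in \Perm(\mathbb{A})$, every finitely supported $\Phi \in \mathcal{P}_{\textnormal{n}}(X^A)$, and every $a \in A$ one has $\st(\pi.\Phi) = \pi.\st(\Phi)$ in $(\mathcal{P}_{\textnormal{n}}X)^A$. I would first record the three relevant action formulas: on the exponential $X^A$ (and likewise on $(\mathcal{P}_{\textnormal{n}}X)^A$) the action is by conjugation, $(\pi.f)(a) = \pi.(f(\pi^{-1}.a))$; on $\mathcal{P}_{\textnormal{n}}(X^A)$ and $\mathcal{P}_{\textnormal{n}}X$ the action is the direct image, $\pi.B = \lbrace \pi.b \mid b \in B \rbrace$; and $\mathcal{P}_{\textnormal{n}}(\ev_a)$ denotes the direct-image map $\Phi \mapsto \lbrace f(a) \mid f \in \Phi \rbrace$.

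The computation then proceeds by evaluating both sides at an arbitrary $a \in A$. On the left,
\begin{align*}
\st(\pi.\Phi)(a) = \lbrace f(a) \mid f \in \pi.\Phi \rbrace = \lbrace (\pi.g)(a) \mid g \in \Phi \rbrace = \lbrace \pi.(g(\pi^{-1}.a)) \mid g \in \Phi \rbrace,
\end{align*}
using the conjugation action on $X^A$ in the last step. On the right, using the conjugation action on $(\mathcal{P}_{\textnormal{n}}X)^A$ followed by the direct-image action on $\mathcal{P}_{\textnormal{n}}X$,
\begin{align*}
(\pi.\st(\Phi))(a) = \pi.\bigl( \st(\Phi)(\pi^{-1}.a) \bigr) = \pi.\lbrace g(\pi^{-1}.a) \mid g \in \Phi \rbrace = \lbrace \pi.(g(\pi^{-1}.a)) \mid g \in \Phi \rbrace.
\end{align*}
The two expressions coincide, which establishes the claim. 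The same computation, specialised to a $\pi$ fixing $\supp(\Phi)$, simultaneously shows that $\st(\Phi)$ is supported by $\supp(\Phi)$, so that $\st$ indeed lands in the exponential $(\mathcal{P}_{\textnormal{n}}X)^A$ and is a genuine morphism in $\Nom{\mathbb{A}}$.

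I expect no genuine obstacle here: the statement is the nominal refinement of the corresponding fact over $\Set$, and the only points requiring care are the bookkeeping of the conjugation action on exponentials — in particular the $\pi^{-1}$ appearing in $(\pi.f)(a) = \pi.(f(\pi^{-1}.a))$ — and the harmless abuse of notation by which $\mathcal{P}_{\textnormal{n}}(\ev_a)$ names a direct image along the (generally non-equivariant) map $\ev_a$ rather than a functor application. Tracking these two conventions correctly is exactly what makes the middle terms of the two chains align on the nose.
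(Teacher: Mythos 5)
Your proof is correct and follows essentially the same route as the paper's: both unfold the conjugation action on the exponentials and re-index the image of $\Phi$ along $g \mapsto \pi.g$, which is exactly the paper's bijection $\varphi \mapsto \pi^{-1}.\varphi$ between $\ev_a^{-1}(x)$ and $\ev_{\pi^{-1}.a}^{-1}(\pi^{-1}.x)$, only phrased with direct images of subsets rather than with characteristic functions and joins over preimages. Your closing remark that the same computation with $\pi$ fixing $\supp(\Phi)$ yields finite support of $\st(\Phi)$ is a small but welcome addition that the paper leaves implicit.
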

 \begin{proof}
  	\begingroup
 	\allowdisplaybreaks
 As before, let $\Perm(\mathbb{A})$ be the set of permutations of $\mathbb{A}$, that is, bijective functions $\pi: \mathbb{A} \rightarrow \mathbb{A}$. We first observe that for any $a \in \mathbb{A}, x \in X$ and $\pi \in \Perm(\mathbb{A})$ the mapping
 \begin{equation}
 	\label{eq:strengthequivariant}
 \lbrace \varphi \in X^{\mathbb{A}} \mid \varphi(a) = x \rbrace \rightarrow \lbrace \varphi \in X^{\mathbb{A}} \mid \varphi(\pi^{-1}.a) = \pi^{-1}.x \rbrace \qquad \pi \mapsto \pi^{-1}.\varphi
 \end{equation}
 defines a bijection with inverse assignment $\varphi \mapsto \pi.\varphi$. Note that the set $2$ is equipped with the trivial action.
 	The statement thus follows from

 	\begin{align*}
 		(\pi.\st(\Phi))(a)(x) 
 		&= \pi.(\st(\Phi)(\pi^{-1}.a))(x) && \textnormal{(Definition of } \pi.\st(\Phi)) \\
 		&= \st(\Phi)(\pi^{-1}.a)(\pi^{-1}.x) && \textnormal{(Definition of } \pi.(\st(\Phi)(\pi^{-1}.a)) ) \\
 		&= \mathcal{P}_{\textnormal{n}}(\ev_{\pi^{-1}.a})(\Phi)(\pi^{-1}.x) && \textnormal{(Definition of } \st) \\
 		&= \vee_{\varphi \in \ev_{\pi^{-1}.a}^{-1}(\pi^{-1}.x)} \Phi(\varphi) && \textnormal{(Definition of } \mathcal{P}_{\textnormal{n}}(\ev_{\pi^{-1}.a})) \\
 		&= \vee_{\varphi \in \ev_{a}^{-1}(x)} \Phi(\pi^{-1}.\varphi) && \textnormal{\eqref{eq:strengthequivariant}} \\
 		&= \vee_{\varphi \in \ev_a^{-1}(x)} (\pi.\Phi)(\varphi) && \textnormal{(Definition of } \pi.\Phi) \\
 		&= \mathcal{P}_{\textnormal{n}}(\ev_a)(\pi.\Phi)(x) && \textnormal{(Definition of } \mathcal{P}_{\textnormal{n}}(\ev_a)) \\
 		&= \st(\pi.\Phi)(a)(x) && \textnormal{(Definition of } \st) 
 		 	\end{align*}
 		 	\endgroup
 \end{proof}

\begin{figure*}
\footnotesize
\centering
	\begin{tikzpicture}[node distance=6em]
			\node[state, initial, initial text=] (L) {$L$};
				\node[state, right of=L] (aL) {$a^{-1}L$};	
\node[state, accepting, right of=aL] (A) {$\mathbb{A}^*$};	
		    \path[->]
		(L) edge[loop above] node{$\mathbb{A}$} (L)
		(L) edge[above, bend left] node{$a$} (aL)
		(aL) edge[above, bend left] node{$a$} (A)
		(aL) edge[loop above] node{$\mathbb{A}$} (aL)
		(aL) edge[above, bend left] node{$\mathbb{A}$} (L)
		(A) edge[loop above] node{$\mathbb{A}$} (A)
		(A) edge[above, bend left=45] node{$\mathbb{A}$} (L)
		(A) edge[above, bend left] node{$\mathbb{A}$} (aL)
		;
		\end{tikzpicture}
		\caption{The orbit-finite representation of the canonical nominal RFSA for $L =  \lbrace v a w a u \mid v, w, u \in \mathbb{A}^*, a \in \mathbb{A} \rbrace$}
				\label{fig:canonialnominalrfsa}
\end{figure*}
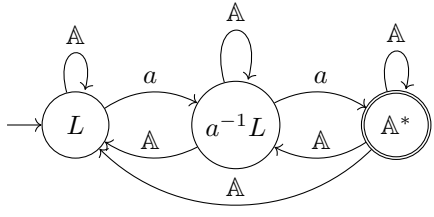

\begin{example}[The Canonical Nominal RFSA]
\label{nominalexample}
    Let $\mathbb{A}$ be a countably infinite set. In \Cref{equivariantstrength} we have established that the canonical strength function in $\Set$ lifts to $\Nom{\mathbb{A}}$. It is also not hard to see that the functor $F$ with $FX = 2 \times X^{\mathbb{A}}$ extends to a functor on $\Nom{\mathbb{A}}$, and $h^{\mathcal{P}_{\textnormal{n}}}: \mathcal{P}_{\textnormal{n}}2 \rightarrow 2$ with $h^{\mathcal{P}_{\textnormal{n}}}(\varphi) = \varphi(1)$ defines a $\mathcal{P}_{\textnormal{n}}$-algebra, which induces a canonical distributive law $\lambda^{\mathcal{P}_{\textnormal{n}}}$ between $F$ and the nominal powerset monad $\mathcal{P}_{\textnormal{n}}$ (cf. \Cref{exampleofmonads}). As in \cite{moerman2019residual}, let $L = \lbrace v a w a u \mid v, w, u \in \mathbb{A}^*, a \in \mathbb{A} \rbrace$, then $a^{-n}L = a^{-2}L = \mathbb{A}^*$ for $n \geq 2$, and $v^{-1}L = \cup_{a \in \mathbb{A}} a^{-\vert v \vert_a} L$, where $\vert v \vert_a$ denotes the number of $a$'s that occur in $v$. In consequence, the nominal CSL underlying the minimal pointed $\lambda^{\mathcal{P}_{\textnormal{n}}}$-bialgebra is generated by the orbit-finite nominal set of join-irreducibles $\lbrace L \rbrace \cup \lbrace a^{-1} L \mid a \in \mathbb{A} \rbrace \cup \lbrace \mathbb{A}^* \rbrace$, which is equipped with the obvious $\Perm(\mathbb{A})$-action and satisfies the inclusion $L \subseteq a^{-1} L \subseteq \mathbb{A}^*$. The orbit-finite representation of the $\mathcal{P}_{\textnormal{n}}$-succinct automaton induced by \Cref{forgenerator-isharp-is-bialgebra-hom} is depicted in \Cref{fig:canonialnominalrfsa}. 
\end{example}

\begin{figure}[t]
\centering
\begin{subfigure}[b]{.68 \columnwidth}
\scriptsize
					\adjustbox{valign=m}{
\begin{tikzpicture}[node distance=6em]
	\node[state] (0) {$\emptyset$};
		\node[state, right of=0, accepting] (xory) {$\lbrace x,y \rbrace$};
	\node[state, below of=0, initial, initial text=] (x) {$\lbrace x \rbrace$};
	\node[state, right of=x, accepting] (y) {$\lbrace y \rbrace$};	
	    \path[->]
	(0) edge[loop above] node{$a,b$} (0)
	(xory) edge[above] node{$a,b$} (0)
	(x) edge[loop above] node{$b$} (x)
	(x) edge[above, bend left] node{$a$} (y)
	(y) edge[below, bend left] node{$b$} (x)
	(y) edge[loop right] node{$a$} (y)
	;
	\end{tikzpicture}
	}
						\adjustbox{valign=m}{
				\resizebox{0.53 \columnwidth}{!}{%
		\begin{tabular}[]{ c|c|c|c|c } 
 $\oplus$ & $ \lbrace x \rbrace $ & $ \lbrace y \rbrace$ & $\lbrace x, y \rbrace$ &   $\emptyset$  \\
 \hline
 $ \lbrace x \rbrace $ & $\emptyset$ & $\lbrace x, y \rbrace$ & $\lbrace y \rbrace$ & $\lbrace x \rbrace$ \\
 \hline
 $ \lbrace y \rbrace$ & $\lbrace x, y \rbrace$ & $\emptyset$ & $\lbrace x \rbrace$ & $\lbrace y \rbrace$\\
\hline
 $\lbrace x, y \rbrace$ & $\lbrace y \rbrace$ & $\lbrace x \rbrace$ & $\emptyset$ & $\lbrace x, y \rbrace$\\
\hline
 $\emptyset$ & $ \lbrace x \rbrace $ & $ \lbrace y \rbrace$ & $\lbrace x, y \rbrace$ &   $\emptyset$
\end{tabular}
}}
	\caption{}
\label{fm(l)xor}
\end{subfigure}
\begin{subfigure}[b]{.23 \columnwidth}
\centering
\scriptsize
					\adjustbox{valign=m}{
\begin{tikzpicture}[node distance=6em]
		\node[state, initial, initial text=] (x) {$\lbrace x \rbrace$};
		\node[state, right of=x, accepting] (xory) {$\lbrace x,y \rbrace$};
	    \path[->]
	(x) edge[loop above] node{$a,b$} (x)
	(x) edge[above] node{$a$} (xory)
	;
	\end{tikzpicture}
	}
	\caption{}
\label{xorautomaton}
\end{subfigure}
\caption{(a) The minimal $\mathbb{Z}_2$-vector space structured DFA for $L = (a+b)^*a$ (freely-generated by the DFA in \Cref{m(l)}); (b) Up to the choice of a basis, the minimal xor automaton for $L = (a+b)^*a$}
\end{figure}
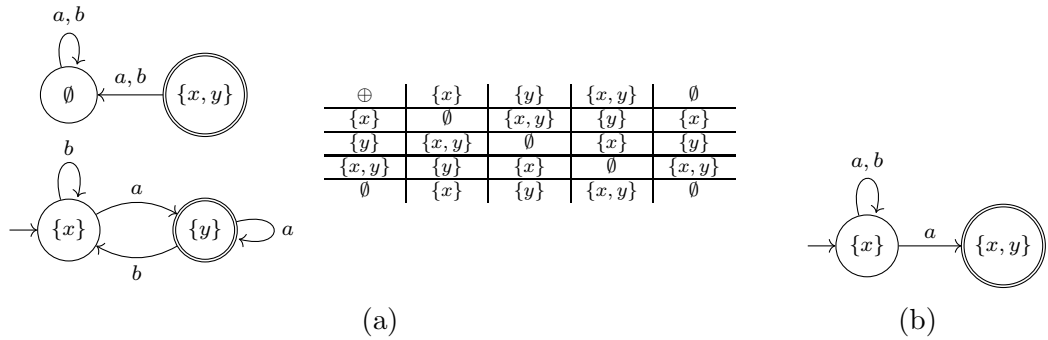

\begin{example}[The Minimal Xor Automaton]
\label{minimalxorexmaple}
    The $\mathcal{R}$-algebra structure $h^{\mathcal{R}}: \mathcal{R}2 \rightarrow 2$ with $h^{\mathcal{R}}(\varphi) = \varphi(1)$ induces a canonical distributive law $\lambda^{\mathcal{R}}$ between $F$ and the free vector space monad $\mathcal{R}$ over the two element field. The minimal pointed $\lambda^{\mathcal{R}}$-bialgebra accepting $L = (a+b)^*a$ is depicted in \Cref{fm(l)xor} and coincides with the bialgebra freely generated by the $F$-coalgebra in \Cref{m(l)}. 
    The underlying vector space structure necessarily has a basis. We choose $( Y, i, d )$ with $Y = \lbrace \lbrace x \rbrace, \lbrace x, y \rbrace \rbrace$, $i(y) = y$, and $d(\emptyset) = \emptyset$, $d(\lbrace x \rbrace) = \lbrace \lbrace x \rbrace \rbrace$, $d(\lbrace y \rbrace) = \lbrace \lbrace x \rbrace$, $\lbrace x, y \rbrace \rbrace$, $d(\lbrace x, y \rbrace) = \lbrace \lbrace x, y \rbrace \rbrace$. The $\mathcal{R}$-succinct automaton induced by \Cref{forgenerator-isharp-is-bialgebra-hom} is depicted in \Cref{xorautomaton}; it can be recognised as the minimal xor automaton, cf. e.g. \cite{MyersAMU15}.  
\end{example}

As before, the soundness of the construction in \Cref{fm(l)xor} can be verified with the help of the following result, which translates the abstract definition of a free $\lambda^{\mathcal{R}}$-bialgebra to concrete data.

 \begin{lemma}
\label[lemma]{freexorbialgebrastructure}
		Let $( \mathcal{R}X, \mu^{\mathcal{R}}_X, \langle \overline{\varepsilon}, \overline{\delta} \rangle ) := \free^{\lambda^{\mathcal{R}}}(( X, \langle \varepsilon, \delta \rangle ))$. Then it holds $\overline{\varepsilon}(\varphi) = \bigoplus_{y \in \varepsilon^{-1}(1)} \varphi(y)$ and $\overline{\delta}_a(\varphi)(x) = \bigoplus_{y \in \delta_a^{-1}(x)} \varphi(y)$.
\end{lemma}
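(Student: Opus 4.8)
The plan is to mimic the proof of \Cref{freepowersetbialgebrastructure} almost verbatim, substituting the free vector space monad $\mathcal{R}$ for the powerset monad $\mathcal{P}$ and the output algebra $h^{\mathcal{R}}$ from \Cref{xoroutputalgebra} for $h^{\mathcal{P}}$. First I would unfold the definition of the free bialgebra functor from \Cref{freeexpfunctor}: since $\free^{\lambda^{\mathcal{R}}}(X, \langle \varepsilon, \delta \rangle) = (\mathcal{R}X, \mu^{\mathcal{R}}_X, \lambda^{\mathcal{R}}_X \circ \mathcal{R}\langle \varepsilon, \delta \rangle)$, the transition structure is $\langle \overline{\varepsilon}, \overline{\delta} \rangle = \lambda^{\mathcal{R}}_X \circ \mathcal{R}\langle \varepsilon, \delta \rangle$, and I would then expand $\lambda^{\mathcal{R}} = \lambda^{h^{\mathcal{R}}}$ according to its definition \eqref{induceddistrlaweq} as $(h^{\mathcal{R}} \times \st) \circ \langle \mathcal{R}\pi_1, \mathcal{R}\pi_2 \rangle$.

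For the output component, projecting onto the first factor and using $\pi_1 \circ \langle \varepsilon, \delta \rangle = \varepsilon$ collapses the composite to $\overline{\varepsilon} = h^{\mathcal{R}} \circ \mathcal{R}\varepsilon$. Inserting the concrete descriptions $h^{\mathcal{R}}(\psi) = \psi(1)$ and the action of $\mathcal{R}$ on morphisms from \Cref{exampleofmonads}, namely $\mathcal{R}(\varepsilon)(\varphi)(1) = \bigoplus_{y \in \varepsilon^{-1}(1)} \varphi(y)$, yields the first claimed identity.

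For the transition component, projecting onto the second factor and using $\pi_2 \circ \langle \varepsilon, \delta \rangle = \delta$ gives $\overline{\delta} = \st \circ \mathcal{R}\delta$. Evaluating at an input $a$ and unfolding the strength via $\st(U)(a) = \mathcal{R}(\ev_a)(U)$ turns this into $\overline{\delta}_a(\varphi) = \mathcal{R}(\ev_a \circ \delta)(\varphi) = \mathcal{R}(\delta_a)(\varphi)$, whence the definition of $\mathcal{R}$ on morphisms again gives $\mathcal{R}(\delta_a)(\varphi)(x) = \bigoplus_{y \in \delta_a^{-1}(x)} \varphi(y)$, as required.

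There is essentially no obstacle here beyond careful bookkeeping: the argument is structurally identical to \Cref{freepowersetbialgebrastructure}, and the single genuine difference is that the ambient semiring addition is now the exclusive disjunction $\oplus$ of $\mathbb{Z}_2$ rather than the Boolean join $\vee$. Since $\langle \overline{\varepsilon}, \overline{\delta} \rangle$ arises from the same schematic distributive law \eqref{induceddistrlaweq}, and the relevant algebra laws hold in both settings (this is exactly the content of \Cref{xoroutputalgebra} mirroring \Cref{powersetalgebra}), the symbolic manipulations involving the naturality of $\st$ and the currying at $a$ go through unchanged. The only point I would keep in view is to track the strength function and the evaluation-at-$a$ currying correctly in the second factor; this is routine and identical to the powerset case.
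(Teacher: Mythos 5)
Your proof is correct and follows exactly the route the paper takes: the paper's own proof of this lemma is literally "Analogous to the proof of \Cref{freepowersetbialgebrastructure}," and your write-up is precisely that analogy spelled out, with $h^{\mathcal{R}}$, $\oplus$, and the action of $\mathcal{R}$ on morphisms substituted in the right places. The handling of the strength and the currying at $a$ in the second component matches the powerset computation verbatim, so there is nothing to add.
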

\begin{proof}
	Analogous to the proof of \Cref{freepowersetbialgebrastructure}.
\end{proof}

\section{Changing the Type of Succinct Automata}

\label{succinctbialgebra2}

This section contains a generalisation of the approach in \Cref{succinctbialgebra1}. The extension is based on the observation that in the last section we implicitly considered \textit{two} types of monads: (i) a monad $S$ that describes the additional algebraic structure of a given deterministic automaton; and (ii) a monad $T$ that captures the side-effects of the succinct automaton that is obtained by the generator-based translation.
In \Cref{forgenerator-isharp-is-bialgebra-hom}, the main result of the last section, the monads coincided, but to recover for instance the \'atomaton \cite{BrzozowskiT14} we will have to extend \Cref{forgenerator-isharp-is-bialgebra-hom} to a situation where $S$ and $T$ can differ. 

\subsection{Relating Distributive Laws}

We now recall the main technical ingredient of our extension:  \textit{distributive law homomorphisms}.
As before, we present the theory on the level of arbitrary bialgebras, even though we will later focus on the case where the coalgebraic dynamics are those of deterministic automata. Distributive law homomorphisms will allow us to shift a bialgebra over a monad $S$ to an equivalent bialgebra over a monad $T$, for which we can then find, analogous to \Cref{succinctbialgebra1}, an equivalent succinct representation. The notion we use is an instance of a much more general definition that allows to relate distributive laws on two different categories. We restrict to the case where both distributive laws are given over the same behavioural endofunctor $F$.

\begin{definition}[Distributive Law Homomorphism \cite{watanabe2002well, power2002combining}]
\label{distributivelawhomdef}
Let $\lambda^{S}: SF \rightarrow FS$ and $\lambda^{T}: TF \rightarrow FT$ be distributive laws between monads $S$ and $T$ and an endofunctor $F$, respectively.
	A \emph{distributive law homomorphism} $\alpha: \lambda^{S} \rightarrow \lambda^{T}$ consists of a natural transformation $\alpha: T \Rightarrow S$ such that the following three diagrams commute:
	\[
	\begin{tikzcd}
		T^2 \arrow{rr}{\mu^T} \arrow{d}[left]{T\alpha} & & T \arrow{d}{\alpha} \\
	TS \arrow{r}[below]{\alpha_S} & SS \arrow{r}[below]{\mu^S} & S 
	\end{tikzcd}
	\qquad
	\begin{tikzcd}
		1 \arrow{d}[left]{\eta^T} \arrow{r}{\eta^S} & S \\
		T \arrow{ur}[right]{\alpha} &
	\end{tikzcd}
	\qquad
	\begin{tikzcd}
	TF \arrow{d}[left]{\alpha_F} \arrow{r}{\lambda^T} & FT \arrow{d}{F\alpha}  \\
	SF \arrow{r}[below]{\lambda^S}	& FS 
	\end{tikzcd}.
	\]
\end{definition}

The above definition is such that $\alpha$ induces a functor between the categories of $\lambda^S$- and $\lambda^T$-bialgebras. The statement is well-known \cite{klin2015presenting, bonsangue2013presenting}. Since we were unable to locate a full proof in the literature, we include one by ourselves below.

\begin{lemma}[\cite{klin2015presenting, bonsangue2013presenting}]
\label{inducedbialgebra}
			Let $\alpha: \lambda^S \rightarrow \lambda^T$ be a distributive law homomorphism. Then $\alpha ( X, h, k) := ( X, h \circ \alpha_X, k )$ and  $\alpha(f) := f$ defines a functor $\alpha: \Bialg(\lambda^S) \rightarrow \Bialg(\lambda^T)$.
\end{lemma}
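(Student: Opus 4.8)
The plan is to verify, in order, three things: that $\alpha(X,h,k) = (X, h \circ \alpha_X, k)$ is a well-defined $\lambda^T$-bialgebra; that $\alpha$ sends $\lambda^S$-bialgebra homomorphisms to $\lambda^T$-bialgebra homomorphisms; and that functoriality (preservation of identities and composition) holds. The last point is immediate since $\alpha$ acts as the identity on morphisms and leaves the underlying object untouched, so identities and composites are preserved trivially. The first two points are the real content.

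First I would check that $(X, h \circ \alpha_X)$ is a $T$-algebra. The unit law $(h \circ \alpha_X) \circ \eta^T_X = \id_X$ follows from the second diagram in \Cref{distributivelawhomdef}, which gives $\alpha_X \circ \eta^T_X = \eta^S_X$, combined with the unit law $h \circ \eta^S_X = \id_X$ of the $S$-algebra $(X,h)$. For the associativity law I would chase the square relating $\mu^T_X$, $T(h \circ \alpha_X)$, and $h \circ \alpha_X$. The key ingredients are the first diagram in \Cref{distributivelawhomdef}, which expresses the compatibility $\alpha_X \circ \mu^T_X = \mu^S_X \circ \alpha_{SX} \circ T\alpha_X$, the naturality of $\alpha \colon T \Rightarrow S$ applied to $h \colon TX \to X$, i.e. $\alpha_X \circ Th = Sh \circ \alpha_{TX}$, and finally the associativity law $h \circ \mu^S_X = h \circ Sh$ of $(X,h)$. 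Threading these together in a single commuting diagram yields $(h \circ \alpha_X) \circ \mu^T_X = (h \circ \alpha_X) \circ T(h \circ \alpha_X)$, as required.

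Next I would verify the bialgebra pentagon of \Cref{bialgebradef} for $(X, h \circ \alpha_X, k)$ with respect to $\lambda^T$, assuming it holds for $(X,h,k)$ with respect to $\lambda^S$. The target is
\[
k \circ (h \circ \alpha_X) = F(h \circ \alpha_X) \circ \lambda^T_X \circ T k.
\]
The crucial step here is the third diagram in \Cref{distributivelawhomdef}, the compatibility of $\alpha$ with the two distributive laws, instantiated at $X$: $\lambda^S_X \circ \alpha_{FX} = F\alpha_X \circ \lambda^T_X$. I would start from the right-hand side, use naturality of $\alpha$ at $k \colon X \to FX$ (so that $\alpha_{FX} \circ Tk = Sk \circ \alpha_X$) to replace $\lambda^T_X \circ Tk$ via the third diagram, obtaining $F\alpha_X \circ \lambda^T_X \circ Tk = \lambda^S_X \circ \alpha_{FX} \circ Tk = \lambda^S_X \circ Sk \circ \alpha_X$, and then apply the $\lambda^S$-bialgebra law of $(X,h,k)$, namely $k \circ h = Fh \circ \lambda^S_X \circ Sk$, to fold everything back into $k \circ h \circ \alpha_X$. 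Assembling the functoriality of $F$ and naturality steps into one commuting diagram completes the argument.

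I do not anticipate a genuine conceptual obstacle; the proof is a triple diagram chase driven entirely by the three defining squares of a distributive law homomorphism together with naturality of $\alpha$ and the algebra/bialgebra laws already assumed. The main care needed is bookkeeping: correctly instantiating the naturality of $\alpha$ at the right morphisms ($h$, $k$, and $Sh$) and keeping the subscripts on $\mu$, $\eta$, $\lambda$, and $\alpha$ consistent across the two monads. I would present the algebra-law and bialgebra-law verifications as two explicit commuting diagrams, each sliced into small cells whose commutativity is one of the cited hypotheses, in the divide-and-conquer style used throughout the paper.
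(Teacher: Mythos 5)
Your proposal follows essentially the same route as the paper's proof: the same decomposition into algebra laws, the bialgebra pentagon, and morphisms/functoriality, with each part driven by the corresponding square of \Cref{distributivelawhomdef}, naturality of $\alpha$, and the laws already holding for $(X,h,k)$. Your pentagon chase is exactly the paper's diagram. Two blemishes are worth fixing.

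First, in the associativity step you invoke ``naturality of $\alpha\colon T \Rightarrow S$ applied to $h\colon TX \to X$, i.e.\ $\alpha_X \circ Th = Sh \circ \alpha_{TX}$''. But $h$ is the \emph{$S$-algebra} structure of the $\lambda^S$-bialgebra, $h\colon SX \to X$, so the naturality square is $\alpha_X \circ Th = Sh \circ \alpha_{SX}$; as written, your equation does not typecheck ($\alpha_X \circ Th$ has domain $TSX$, while $Sh \circ \alpha_{TX}$ has domain $T^2X$). The component $\alpha_{SX}$ is precisely what is needed to absorb the term $\alpha_{SX} \circ T\alpha_X$ produced by the multiplication square of \Cref{distributivelawhomdef}; with that correction the chase closes exactly as in the paper's left-hand diagram.

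Second, you announce ``$\alpha$ sends $\lambda^S$-bialgebra homomorphisms to $\lambda^T$-bialgebra homomorphisms'' as one of the two points of real content, but never discharge it. It is short — for $f\colon (X,h_X,k_X)\to(Y,h_Y,k_Y)$, combine naturality of $\alpha$ at $f$, i.e.\ $\alpha_Y \circ Tf = Sf \circ \alpha_X$, with $f$ being an $S$-algebra homomorphism, i.e.\ $h_Y \circ Sf = f \circ h_X$, to obtain $(h_Y\circ\alpha_Y)\circ Tf = f\circ(h_X\circ\alpha_X)$, the $F$-coalgebra condition being untouched — but it cannot be waved away as trivial bookkeeping, since it is the only place where the new algebra structures on domain and codomain are compared; this is the right-hand diagram in the paper's proof.
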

\begin{proof}
 We first show that the construction is well-defined on objects.
The commutativity of the two diagrams below shows that $( X, h \circ \alpha_X )$ is a $T$-algebra:
	\begin{equation*}
		\begin{tikzcd}
			T^2X \arrow{rr}{\mu^T_X} \arrow{d}[left]{T\alpha_X} & & TX \arrow{d}{\alpha_X} \\
			TSX \arrow{d}[left]{Th} \arrow{r}{\alpha_{SX}} & S^2X \arrow{r}{\mu^S_X} \arrow{d}{Sh} & SX  \arrow{d}{h}\\
			TX \arrow{r}[below]{\alpha_X} & SX \arrow{r}[below]{h} & X
		\end{tikzcd}
\qquad
	\begin{tikzcd}
		X \arrow{rr}{1} \arrow{dr}{\eta_X^S} \arrow{dd}[left]{\eta^T_X} & & X \\
		& SX \arrow{ur}[right]{h} & \\
		TX \arrow{ur}[right]{\alpha_X} & & 
	\end{tikzcd}.
	\end{equation*}
	To establish that $( X, h\circ \alpha_X, k )$ is a $\lambda^T$-bialgebra it thus remains to observe the commutativity of the diagram on the left below:
	\begin{equation*}
		\begin{tikzcd}
		TX \arrow{rr}{Tk} \arrow{dd}[left]{\alpha_X} & & TFX \arrow{d}{\lambda^T_X} \arrow{ddl}[left]{\alpha_{FX}} \\
		& & FTX \arrow{d}{F\alpha_X} \\
		SX \arrow{d}[left]{h} \arrow{r}{Sk} & SFX \arrow{r}{\lambda^S_X} & FSX \arrow{d}{Fh} \\
		X \arrow{rr}[below]{k} & & FX	
		\end{tikzcd}
		\qquad
				\begin{tikzcd}
		TX \arrow{d}[left]{\alpha_X} \arrow{r}{Tf} &TY  \arrow{d}{\alpha_Y}	\\
		SX \arrow{d}[left]{h_X} \arrow{r}{Sf} & SY \arrow{d}{h_Y} \\
		X \arrow{r}[below]{f} &Y
		\end{tikzcd}.
	\end{equation*}
	 Well-definedness on morphisms follows from the naturality of $\alpha$, as seen on the right above.
	Compositionality follows immediately from the definition of $\alpha$ on morphisms.
\end{proof}

The next result is a straightforward consequence of \Cref{forgenerator-isharp-is-bialgebra-hom}, and may be strengthened to an isomorphism in case one is given a basis instead of a generator, analogous to \Cref{forbasis-isharp-is-bialgebra-iso}. It can be seen as a road map to the approach we propose in this section.

\begin{corollary}
\label{generatorbialgebrahom}
	Let $\alpha: \lambda^S \rightarrow \lambda^T$ be a homomorphism between distributive laws and $( X,h,k )$ a $\lambda^S$-bialgebra. If $( Y, i, d )$ is a generator for the $T$-algebra $( X, h \circ \alpha_X )$, then $
	(h \circ \alpha_X) \circ Ti: \expa_T((  Y, Fd \circ k \circ i )) \rightarrow ( X, h \circ \alpha_X, k )
	$ is a $\lambda^T$-bialgebra homomorphism.
\end{corollary}
\begin{proof}
	By \Cref{inducedbialgebra} the tuple $( X, h \circ \alpha_X, k )$ constitutes a $\lambda^T$-bialgebra.
	The statement thus follows from \Cref{forgenerator-isharp-is-bialgebra-hom}.
\end{proof}

\subsection{Deriving Distributive Law  Relations}

We now turn to the procedure of deriving a distributive law homomorphism. In practice, coming up with a natural transformation and proving that it lifts to a distributive law homomorphism can be quite cumbersome. 

 Fortunately, for certain cases, there is a way to simplify things significantly. For instance, as the next result shows, if, as in \eqref{induceddistrlaweq}, the involved distributive laws are induced by algebra structures $h^{S}$ and $h^{T}$ for an output set $B$, respectively, then one of the conditions is implied by a less convoluted constraint.

\begin{lemma}
\label{distributivelawaxiomeasier}
	Let $\alpha: T \Rightarrow S$ be a natural transformation satisfying $h^{S} \circ \alpha_B = h^{T}$, then $\lambda^{S} \circ \alpha_F = F \alpha \circ \lambda^{T}$.
\end{lemma}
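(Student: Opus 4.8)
The plan is to verify the single square $\lambda^S \circ \alpha_F = F\alpha \circ \lambda^T$ componentwise, exploiting that its codomain $FSX = B \times (SX)^A$ is a product, so it suffices to check that the two composites $\lambda^S_X \circ \alpha_{FX}$ and $F\alpha_X \circ \lambda^T_X$ agree after post-composition with the projections $\pi_1 : B \times (SX)^A \to B$ and $\pi_2 : B \times (SX)^A \to (SX)^A$. Throughout I would unfold both distributive laws using their defining formula from \Cref{induceddistrlaw}, namely $\lambda^T_X = (h^T \times \st) \circ \langle T\pi_1, T\pi_2\rangle$ and $\lambda^S_X = (h^S \times \st) \circ \langle S\pi_1, S\pi_2\rangle$, and repeatedly invoke the naturality of $\alpha: T \Rightarrow S$.

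For the first projection, I would note that $\pi_1 \circ F\alpha_X = \pi_1 \circ (B \times \alpha_X^A) = \pi_1$, so the right-hand composite reduces to $\pi_1 \circ \lambda^T_X = h^T \circ T\pi_1$. On the left-hand side, $\pi_1 \circ \lambda^S_X \circ \alpha_{FX} = h^S \circ S\pi_1 \circ \alpha_{FX}$, and naturality of $\alpha$ at the morphism $\pi_1 : B \times X^A \to B$ gives $S\pi_1 \circ \alpha_{FX} = \alpha_B \circ T\pi_1$. The two then coincide precisely because the hypothesis $h^S \circ \alpha_B = h^T$ lets me rewrite $h^S \circ \alpha_B \circ T\pi_1 = h^T \circ T\pi_1$. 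This is the only step that uses the hypothesis, which is exactly why the compatibility condition on the output algebras is what is needed.

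The second projection is the more delicate part and I expect it to be the main obstacle. Post-composing with $\pi_2$, the right-hand side becomes $\alpha_X^A \circ \pi_2 \circ \lambda^T_X = \alpha_X^A \circ \st \circ T\pi_2$, while the left-hand side becomes $\pi_2 \circ \lambda^S_X \circ \alpha_{FX} = \st \circ S\pi_2 \circ \alpha_{FX} = \st \circ \alpha_{X^A} \circ T\pi_2$, using naturality of $\alpha$ at $\pi_2 : B \times X^A \to X^A$. Everything thus reduces to compatibility of $\alpha$ with the strength, namely the identity $\alpha_X^A \circ \st = \st \circ \alpha_{X^A} : T(X^A) \to (SX)^A$. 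I would establish this by evaluating at an arbitrary index $a \in A$: using the pointwise definition $\st(U)(a) = T(\ev_a)(U)$, the left composite sends $U \mapsto \alpha_X(T(\ev_a)(U))$ and the right composite sends $U \mapsto S(\ev_a)(\alpha_{X^A}(U))$, and these agree by naturality of $\alpha$ at $\ev_a : X^A \to X$, which yields $\alpha_X \circ T(\ev_a) = S(\ev_a) \circ \alpha_{X^A}$.

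Combining the two projection arguments establishes $\lambda^S \circ \alpha_F = F\alpha \circ \lambda^T$. The proof is then entirely bookkeeping with projections, strengths, and three instances of the naturality of $\alpha$ (at $\pi_1$, at $\pi_2$, and at each $\ev_a$); the only external input is the hypothesis $h^S \circ \alpha_B = h^T$, used once in the first component. The single point demanding care is the strength-compatibility identity, since it is where the pointwise definition of $\st$ must be unfolded, but once phrased via $\ev_a$ it follows immediately from naturality.
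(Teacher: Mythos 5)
Your proof is correct and follows essentially the same route as the paper's: the paper splits the square vertically into a top square (naturality of $\alpha$ against the pairings $\langle T\pi_1,T\pi_2\rangle$, $\langle S\pi_1,S\pi_2\rangle$) and a bottom square ($h^S\circ\alpha_B=h^T$ on the $B$-component plus the strength-compatibility $\alpha_X^A\circ\st=\st\circ\alpha_{X^A}$, proved pointwise via naturality at $\ev_a$), whereas you split horizontally via the projections of the codomain product — the same ingredients used in the same places. No gap.
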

\begin{proof}
	We need to establish the commutativity of the following diagram:
	\begin{equation*}
	\begin{tikzcd}
		T(X^A \times B) \arrow{r}{\alpha_{X^A \times B}} \arrow{d}[left]{( T \pi_1, T \pi_2 )} & S(X^A \times B) \arrow{d}{( S \pi_1, S \pi_2 )} \\
		T(X^A) \times TB \arrow{r}{\alpha_{X^A} \times \alpha_B} \arrow{d}[left]{\st \times h^{T}} & S(X^A) \times SB \arrow{d}{\st \times h^S}  \\
		(TX)^A \times B \arrow{r}[below]{(\alpha_X)^A \times B} & (SX)^A \times B
	\end{tikzcd}.
	\end{equation*}
The commutativity of the top square is a consequence of the naturality of $\alpha$. Similarly, the commutativity of the bottom square follows from the assumption and the naturality of $\alpha$,
\begin{align*}
	\st \circ \alpha_{X^A}(U)(a) 
	&= S(\ev_a) \circ \alpha_{X^A}(U) & \textnormal{(Definition of } \st)  \\
	&= \alpha_{X} \circ T(\ev_a)(U) & \textnormal{(Naturality of } \alpha) \\
	&= \alpha_X^A \circ \st(U)(a) & \textnormal{(Definition of } \st) 
\end{align*}
\end{proof}

The next result shows that for the neighbourhood monad there exists a family of \textit{canonical} choices of distributive law homomorphisms parametrised by Eilenberg-Moore algebra structures on the output set $B = 2$. While it is well-known that such algebras induce a monad morphism, for instance in the coalgebraic modal logic community \cite{klin2004coalgebraic, schroder2008expressivity, hansen2014strong}, its commutativity with canonical distributive laws has not been observed before. Moreover, we provide a new formalisation in terms of the strength function, which allows the result to be lifted to strong monads and arbitrary output objects on other categories than the one of sets and functions.

\begin{proposition}
\label{algebrainduceddistributivellawhom}
	Any algebra $h: T2 \rightarrow 2$ over a set monad $T$ induces a homomorphism $\alpha^{h}: \lambda^{\mathcal{H}} \rightarrow \lambda^{h}$ between distributive laws by $\alpha^{h}_X := h^{2^X} \circ \st \circ T(\eta^{\mathcal{H}}_X)$.
\end{proposition}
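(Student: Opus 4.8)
The plan is to first reduce the rather opaque definition of $\alpha^h$ to a transparent pointwise formula, and then to check the three conditions of \Cref{distributivelawhomdef} against it. Unfolding the strength and the neighbourhood unit, I would compute, for $t \in TX$ and $\varphi \colon X \to 2$,
$$\alpha^h_X(t)(\varphi) = h\big(\st(T(\eta^{\mathcal{H}}_X)(t))(\varphi)\big) = h\big(T(\ev_\varphi \circ \eta^{\mathcal{H}}_X)(t)\big) = h(T\varphi(t)),$$
using that $\ev_\varphi \circ \eta^{\mathcal{H}}_X = \varphi$ (since $\eta^{\mathcal{H}}_X(x)(\varphi) = \varphi(x)$) and $\st(U)(a) = T(\ev_a)(U)$. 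This identity $\alpha^h_X(t)(\varphi) = h(T\varphi(t))$ is the workhorse of the whole proof, and everything else is a short diagram chase on top of it.

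With the formula in hand, naturality of $\alpha^h$ and the unit law $\alpha^h \circ \eta^T = \eta^{\mathcal{H}}$ are immediate. For naturality, given $f \colon X \to Y$ I would combine functoriality $T(\psi \circ f) = T\psi \circ Tf$ with $\mathcal{H}f(\Phi)(\psi) = \Phi(\psi \circ f)$. For the unit law I would use naturality of $\eta^T$ to rewrite $T\varphi(\eta^T_X(x)) = \eta^T_2(\varphi(x))$ and then the algebra law $h \circ \eta^T_2 = \id_2$, obtaining $\alpha^h_X(\eta^T_X(x))(\varphi) = \varphi(x) = \eta^{\mathcal{H}}_X(x)(\varphi)$.

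The compatibility with the distributive laws---the third diagram of \Cref{distributivelawhomdef}---I would not verify by hand, but instead derive from \Cref{distributivelawaxiomeasier}, which (with $B = 2$, $S = \mathcal{H}$) reduces it to the single equation $h^{\mathcal{H}} \circ \alpha^h_2 = h$. This is a one-line check: $h^{\mathcal{H}}(\alpha^h_2(t)) = \alpha^h_2(t)(\id_2) = h(T\id_2(t)) = h(t)$.

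The remaining and genuinely substantial condition is the multiplication law $\mu^{\mathcal{H}} \circ \alpha^h_{\mathcal{H}} \circ T\alpha^h = \alpha^h \circ \mu^T$, and I expect this to be the main obstacle, as it is the only place where both the algebra axiom for $h$ and the intricate definition of $\mu^{\mathcal{H}}$ enter. My strategy is to evaluate both composites on $\Theta \in T^2X$ at a fixed $\varphi \colon X \to 2$. On the right, naturality of $\mu^T$ gives $T\varphi(\mu^T_X(\Theta)) = \mu^T_2(T^2\varphi(\Theta))$, and the $T$-algebra equation $h \circ \mu^T_2 = h \circ Th$ rewrites this to $h(T(h \circ T\varphi)(\Theta))$. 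On the left, I would use $\mu^{\mathcal{H}}_X(\Psi)(\varphi) = \Psi(\eta^{\mathcal{H}}_{2^X}(\varphi))$ with $\eta^{\mathcal{H}}_{2^X}(\varphi) = \ev_\varphi$, and then apply the basic formula at the set $\mathcal{H}X$; this yields $h(T(\ev_\varphi \circ \alpha^h_X)(\Theta))$, and since $\ev_\varphi \circ \alpha^h_X = h \circ T\varphi$ by the basic formula again, the two sides coincide. The only care needed is keeping track of the types of the nested applications of $\alpha^h$ at $X$ and at $\mathcal{H}X$.
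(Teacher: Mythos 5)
Your proof is correct, and it verifies exactly the same four conditions as the paper (naturality, the unit and multiplication laws of a monad morphism $T \Rightarrow \mathcal{H}$, and compatibility with the distributive laws via \Cref{distributivelawaxiomeasier} reduced to $h^{\mathcal{H}} \circ \alpha^h_2 = h$). The difference is purely in execution: the paper argues point-free, chasing diagrams built from the strength identities $(\eta^T_A)^B = \st \circ \eta^T_{A^B}$ and $\st \circ \mu_{A^B} = \mu_A^B \circ \st \circ T\st$ together with $2^{\eta^{\mathcal{H}}_{2^X}} \circ \eta^{\mathcal{H}}_{2^{2^X}} = \id$, whereas you first collapse the definition to the pointwise identity $\alpha^h_X(t)(\varphi) = h(T\varphi(t))$ and then every condition becomes a one- or two-line calculation (your handling of the multiplication law, using $\eta^{\mathcal{H}}_{2^X}(\varphi) = \ev_\varphi$ and $\ev_\varphi \circ \alpha^h_X = h \circ T\varphi$, is exactly the content of the paper's largest diagram). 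Your route is shorter and more transparent, and it also makes visible that $\alpha^h$ is the familiar ``predicate transformer'' $t \mapsto (\varphi \mapsto h(T\varphi(t)))$, which is what \Cref{alphapowersetneighbourhooddistrlaw} instantiates for $T = \mathcal{P}$. What the paper's point-free formulation buys in exchange is the remark made just before the proposition: the argument lifts verbatim to strong monads and arbitrary output objects over other cartesian closed categories, whereas your element-wise computation is tied to $\Set$ (which is all the stated proposition requires).
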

\begin{proof}
It is well-known that the strength operation is natural and satisfies the two equalities
\[
	(\eta_A^T)^B = \st \circ \eta^T_{A^B} \qquad	\st \circ \mu_{A^B} = \mu_{A}^B \circ \st \circ T\st. \]
It is also not hard to see that for functions $f: A \rightarrow B$ and $g: C \rightarrow D$ it holds $f^D \circ A^{g} = B^g \circ f^C$. We write $f^*$ for $A^f$ and $f_*$ for $f^A$, and omit components of natural transformations for readability. 
The naturality of $\alpha^{h^{T}}$ is a consequence of:
\[
			\begin{tikzcd}
				TX \arrow{d}[left]{Tf} \arrow{r}{T\eta^{\mathcal{H}}} & T(2^{2^X}) \arrow{d}{T((f^*)^*)} \arrow{r}{\st} & T(2)^{2^X} \arrow{r}{h_*} \arrow{d}{(f^*)^*}  & 2^{2^X} \arrow{d}{(f^*)^*} \\
				TY \arrow{r}[below]{T\eta^{\mathcal{H}}} & T(2^{2^Y}) \arrow{r}[below]{\st}  & T(2)^{2^Y} \arrow{r}[below]{h_*} & 2^{2^Y}
			\end{tikzcd}.
			\]
Next, note that $2^{\eta^{\mathcal{H}}_{2^X}} \circ \eta^{\mathcal{H}}_{2^{2^X}} = \id_{2^{2^X}}$, as for all $\Phi \in 2^{2^X}, \varphi \in 2^X$ we have:
\begingroup
\allowdisplaybreaks
\begin{align*}
	(2^{\eta^{\mathcal{H}}_{2^X}} \circ \eta^{\mathcal{H}}_{2^{2^X}})(\Phi)(\varphi) &=  (\eta^{\mathcal{H}}_{2^{2^X}}(\Phi) \circ \eta^{\mathcal{H}}_{2^X})(\varphi) && \textnormal{(Definition of } 2^f)  \\
	&= \eta^{\mathcal{H}}_{2^X}(\varphi)(\Phi) && \textnormal{(Definition of } \eta^{\mathcal{H}}_{2^{2^X}})  \\
	&= \Phi(\varphi) && \textnormal{(Definition of } \eta^{\mathcal{H}}_{2^X}) \\
	&= \id_{2^{2^X}}(\Phi)(\varphi) && \textnormal{(Definition of } \id_{2^{2^X}}).
\end{align*}
\endgroup			
	Using the above equality, the equation involving the monad multiplications is seen from the diagram on the left below. The equation involving the monad units is established by the diagram on the right below:
		\begin{equation*}
			\begin{tikzcd}[column sep = 0.5em]
				 T(2^{2^X}) \arrow{r}{T(\eta^{\mathcal{H}})}  \arrow{dr}{1} & T(2^{2^{2^{2^X}}}) \arrow{r}{\st} \arrow{d}{T((\eta^{\mathcal{H}})^*)} & T(2)^{2^{2^{2^X}}} \arrow{r}{h_*} \arrow{dd}{(\eta^{\mathcal{H}})^*} & 2^{2^{2^{2^X}}} \arrow{ddddd}[right]{(\eta^{\mathcal{H}})^*} \\
				& T(2^{2^X}) \arrow{dr}{\st} & & \\
				T(T(2)^{2^X}) \arrow{uu}[left]{T(h_{*})} \arrow{r}[below]{\st} & T^2(2)^{2^X} \arrow{dddr}{\mu^T_*} \arrow{r}[below]{T(h)_*} & T(2)^{2^X} \arrow{dddr}{h_*} & \\
				T^2(2^{2^X}) \arrow{ddr}{\mu^T} \arrow{u}[left]{T(\st)} & & & \\
				T^2(X) \arrow{u}[left]{T^2(\eta^{\mathcal{H}})} \arrow{d}[left]{\mu^T} & & & \\
				T(X) \arrow{r}[below]{T(\eta^{\mathcal{H}})} & T(2^{2^X}) \arrow{r}[below]{\st} & T(2)^{2^X} \arrow{r}[below]{h_*} & 2^{2^X} 
			\end{tikzcd}
			\quad
			\begin{tikzcd}[column sep = 0.5em]
				X \arrow{rrr}{\eta^T} \arrow{ddd}[left]{\eta^{\mathcal{H}}} \arrow{dr}{{\eta^{\mathcal{H}}}} & & & TX \arrow{dl}{T\eta^{\mathcal{H}}} \\
				& 2^{2^X} \arrow{d}{\eta^{T}_*} \arrow{r}{\eta^T} \arrow{ddl}[left]{1} & T(2^{2^X}) \arrow{dl}{\st} & \\
				& T(2)^{2^X} \arrow{dl}{h_*} & & \\
				2^{2^{X}} & & & 
			\end{tikzcd}.
		\end{equation*}
	To show that the equation involving the distributive laws holds, we use \Cref{distributivelawaxiomeasier}.  That is, we note that for any $f$ it holds $f \circ \ev_a = \ev_a \circ f_*$ and  $h^{\mathcal{H}} = \ev_{\id_2}$, before establishing the following commutative diagram:
	\begin{equation*}
					\begin{tikzcd}[row sep=2em, column sep = 5em]
				T(2) \arrow{ddr}[left]{h} \arrow{r}{T(\eta^{\mathcal{H}})} \arrow{dr}{1} & T(2^{2^{2}}) \arrow{d}{T(\ev_{\id_2})} \arrow{r}{\st} & T(2)^{2^{2}} \arrow{r}{h_*} \arrow{dl}{\ev_{\id_2}} & 2^{2^2} \arrow{ddll}{h^{\mathcal{H}}} \\
				& T(2) \arrow{d}{h} & & \\
				& 2 & & 		
					\end{tikzcd}.
	\end{equation*}
\end{proof}

The rest of the section is concerned with using \Cref{algebrainduceddistributivellawhom} and \Cref{generatorbialgebrahom} to derive canonical acceptors through distributive law homomorphisms.

\subsection{Example: The \'Atomaton}

\label{atomatonexample}
We will now justify the previous informal construction of the \'atomaton. 
As hinted before, the \'atomaton can be recovered by relating the neighbourhood monad $\mathcal{H}$---whose algebras are complete \emph{atomic} Boolean algebras (CABAs)---and the powerset monad $\mathcal{P}$. Formally, as a consequence of \Cref{algebrainduceddistributivellawhom} we obtain the following. 

\begin{corollary}
\label{alphapowersetneighbourhooddistrlaw}
	Let $\alpha_X: \mathcal{P}X \rightarrow \mathcal{H}X$ satisfy $\alpha_X(\varphi)(\psi) = \vee_{x \in X} \varphi(x) \wedge \psi(x)$, then $\alpha$ constitutes a distributive law homomorphism $\alpha: \lambda^{\mathcal{H}} \rightarrow \lambda^{\mathcal{P}}$.
\end{corollary}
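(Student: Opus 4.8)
The plan is to obtain $\alpha$ directly as an instance of \Cref{algebrainduceddistributivellawhom}, and then to verify that the abstractly-defined natural transformation it produces matches the concrete formula stated in the corollary. This reduces the whole statement to a short unfolding computation, since the general proposition already discharges all of the coherence conditions.

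First I would recall that by \Cref{powersetalgebra} the map $h^{\mathcal{P}}\colon \mathcal{P}2 \rightarrow 2$ with $h^{\mathcal{P}}(\varphi) = \varphi(1)$ is a genuine $\mathcal{P}$-algebra. Since $\mathcal{P}$ is a set monad, \Cref{algebrainduceddistributivellawhom} applies with $T = \mathcal{P}$ and $h = h^{\mathcal{P}}$, and immediately yields a distributive law homomorphism $\alpha^{h^{\mathcal{P}}}\colon \lambda^{\mathcal{H}} \rightarrow \lambda^{h^{\mathcal{P}}}$ given by $\alpha^{h^{\mathcal{P}}}_X = (h^{\mathcal{P}})^{2^X} \circ \st \circ \mathcal{P}(\eta^{\mathcal{H}}_X)$. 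Because the target distributive law $\lambda^{\mathcal{P}}$ appearing in the corollary is by our notational convention precisely $\lambda^{h^{\mathcal{P}}}$, this produces exactly a homomorphism of the desired type $\lambda^{\mathcal{H}} \rightarrow \lambda^{\mathcal{P}}$; all three commuting diagrams of \Cref{distributivelawhomdef} hold for free.

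It thus only remains to check that $\alpha^{h^{\mathcal{P}}}_X$ computes to $\alpha_X(\varphi)(\psi) = \vee_{x \in X}\varphi(x) \wedge \psi(x)$. I would unfold the composite on $\varphi \in \mathcal{P}X$ evaluated at $\psi \in 2^X$. Using $\eta^{\mathcal{H}}_X(x) = \ev_x$ and functoriality of $\mathcal{P}$, the key simplification is that $\ev_{\psi} \circ \eta^{\mathcal{H}}_X = \psi$ as a map $X \rightarrow 2$, since $\eta^{\mathcal{H}}_X(x)(\psi) = \psi(x)$; hence the strength step collapses $\st\bigl(\mathcal{P}(\eta^{\mathcal{H}}_X)(\varphi)\bigr)(\psi) = \mathcal{P}(\ev_{\psi})\bigl(\mathcal{P}(\eta^{\mathcal{H}}_X)(\varphi)\bigr) = \mathcal{P}(\psi)(\varphi) \in \mathcal{P}2$. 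Applying $h^{\mathcal{P}}$ then evaluates this at $1$, giving $\vee_{x\,:\,\psi(x)=1}\varphi(x)$, which equals $\vee_{x \in X}\varphi(x)\wedge\psi(x)$ because $\varphi(x)\wedge\psi(x)$ vanishes exactly when $\psi(x)=0$.

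I expect no serious obstacle: \Cref{algebrainduceddistributivellawhom} already handles the naturality and the three coherence squares, so the only remaining work is the elementary evaluation above. The single point requiring mild care is the bookkeeping of the strength function's indexing — identifying the exponent object as $2^X$ and the base object as $2$, so that $\st\colon \mathcal{P}(2^{2^X}) \rightarrow (\mathcal{P}2)^{2^X}$ with $\st(U)(\psi) = \mathcal{P}(\ev_{\psi})(U)$ — after which the identity $\mathcal{P}(\ev_{\psi}) \circ \mathcal{P}(\eta^{\mathcal{H}}_X) = \mathcal{P}(\psi)$ makes the computation immediate.
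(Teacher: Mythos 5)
Your proposal is correct and follows exactly the paper's own argument: invoke \Cref{algebrainduceddistributivellawhom} with the $\mathcal{P}$-algebra $h^{\mathcal{P}}$ from \Cref{powersetalgebra} to obtain the homomorphism for free, then verify by unfolding that $\alpha^{h^{\mathcal{P}}}_X(\varphi)(\psi) = \mathcal{P}(\psi)(\varphi)(1) = \vee_{x \in X}\varphi(x)\wedge\psi(x)$. The computation you sketch, including the key simplification $\ev_{\psi}\circ\eta^{\mathcal{H}}_X = \psi$, is precisely the chain of equalities in the paper's proof.
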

\begin{proof}
	We show that $\alpha^{h^{\mathcal{P}}} = \alpha$, the statement then follows from \Cref{algebrainduceddistributivellawhom}:
	\begin{align*}
		\alpha^{h^{\mathcal{P}}}_X(\varphi)(\psi) &= (h^{\mathcal{P}})^{2^X} \circ \st  \circ \mathcal{P}(\eta^{\mathcal{H}}_X)(\varphi)(\psi) && \textnormal{(Definition of } \alpha^{h^{\mathcal{P}}}_X)  \\
		&= \st \circ \mathcal{P}(\eta^{\mathcal{H}}_X)(\varphi)(\psi)(1) && \textnormal{(Definition of } h^{\mathcal{P}}) \\
		&= \mathcal{P}(\ev_{\psi})(\mathcal{P}(\eta^{\mathcal{H}}_X)(\varphi))(1) && \textnormal{(Definition of $\st$)} \\
		&= \mathcal{P}(\ev_{\psi} \circ \eta^{\mathcal{H}}_X)(\varphi)(1) && (\mathcal{P}(f) \circ \mathcal{P}(g) = \mathcal{P}(f \circ g)) \\
		&= \mathcal{P}(\psi)(\varphi)(1) && (\textnormal{Definition of $\ev$, }\eta^{\mathcal{H}}_X) \\
		&= \vee_{x \in \psi^{-1}(1)} \varphi(x) && (\textnormal{Definition of } \mathcal{P}(\psi)) \\
		&= \vee_{x \in X} \varphi(x) \wedge \psi(x) && (x \in \psi^{-1}(1))  \\
		&= \alpha_X(\varphi)(\psi) && \textnormal{(Definition of } \alpha_X)
	\end{align*}
\end{proof}

The next statement follows from a well-known Stone-type duality representation theorem for CABAs \cite{taylor2002subspaces}.

\begin{lemma}
\label{basisshiftecabapowerset}
	Let $\alpha_X: \mathcal{P}X \rightarrow \mathcal{H}X$ satisfy $\alpha_X(\varphi)(\psi) = \vee_{x \in X} \varphi(x) \wedge \psi(x)$. If $B = ( X, h )$ is an $\mathcal{H}$-algebra, then $( \At(B), i, d )$ with $i(a) = a$ and $d(x) = \lbrace a \in \At(B) \mid a \leq x \rbrace$ is a basis for the $\mathcal{P}$-algebra $( X, h \circ \alpha_X )$.
\end{lemma}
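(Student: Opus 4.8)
The plan is to verify the two equations of \Cref{generatordefinition} for the candidate basis $(\At(B), i, d)$ of the $\mathcal{P}$-algebra $(X, h \circ \alpha_X)$. By \Cref{basisifffree} and its underlying structure, it is cleaner to work with the concrete description: writing $h \circ \alpha_X =: \hat{h}$, I first unfold what $\hat{h} \circ \mathcal{P}i$ computes on a subset $U \subseteq \At(B)$. Using the formula for $\alpha_X$ from \Cref{basisshiftecabapowerset} together with the fact that, under the equivalence $\Set^{\mathcal{H}} \cong \CABA$ (\Cref{eilenbergexample}), the algebra $h$ realises the complete Boolean structure on $B = (X, h)$, I expect $\hat{h}(\mathcal{P}i(U))$ to evaluate to the join $\bigvee_{a \in U} a$ taken in the CABA $B$. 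So the generator/basis conditions reduce to the two classical facts about atoms in a complete atomic Boolean algebra.

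Concretely, the first diagram $(\hat{h} \circ \mathcal{P}i) \circ d = \id_X$ becomes the statement that every element $x \in X$ is the join of the atoms below it, i.e. $x = \bigvee \{ a \in \At(B) \mid a \le x \}$. This is exactly the \emph{atomicity} of $B$, guaranteed by the Stone-type duality representation theorem for CABAs cited before the lemma \cite{taylor2002subspaces}; I would invoke that theorem to assert both that $B$ is isomorphic (as a CABA) to the powerset $\mathcal{P}(\At(B))$ and that under this iso the join-of-atoms-below map is the identity. The second diagram $d \circ (\hat{h} \circ \mathcal{P}i) = \id_{\mathcal{P}(\At(B))}$ becomes the statement that for any set of atoms $U$, the atoms below its join are exactly the members of $U$, i.e. $\{ a \in \At(B) \mid a \le \bigvee U \} = U$. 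This is the injectivity half of the same duality: distinct atoms are incomparable and an atom $a$ satisfies $a \le \bigvee U$ iff $a \in U$, which follows because in a CABA $a \wedge \bigvee U = \bigvee_{b \in U}(a \wedge b)$ and $a \wedge b \in \{0, a\}$ with $a \wedge b = a$ precisely when $a = b$.

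In carrying this out I would first record the computation $\hat{h}(\mathcal{P}i(U)) = \bigvee_{a \in U} a$ explicitly, chasing the definition $\alpha_X(\varphi)(\psi) = \bigvee_{x} \varphi(x) \wedge \psi(x)$ through $h$; this is the one genuinely mechanical step but it is short. Then the two basis equations are each a one-line appeal to atomicity and to the incomparability/meet behaviour of atoms, respectively. I would phrase both as consequences of the representation $B \cong \mathcal{P}(\At(B))$ so that the verification of the \Cref{generatordefinition} diagrams is immediate rather than recomputed by hand.

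The main obstacle, such as it is, is bookkeeping rather than mathematical depth: one must be careful to distinguish the \emph{formal} powerset-level operations (the monad $\mathcal{P}$, its multiplication, and the map $i$ landing elements of $\At(B)$ back into $X$) from the \emph{actual} Boolean-algebra operations on $B$, and to confirm that $\hat{h} = h \circ \alpha_X$ really does send a formal join of atoms to their actual join in $B$. Once that identification is pinned down, the atomicity and separation properties of atoms in a CABA — which are precisely what the cited Stone duality \cite{taylor2002subspaces} supplies — close both diagrams, and the tuple is not merely a generator but a genuine basis, matching the claim that every CABA is (isomorphic to) a free $\mathcal{P}$-algebra via its atoms.
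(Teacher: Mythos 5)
Your proof is correct, but it takes a genuinely different route from the paper's. You verify the two diagrams of \Cref{generatordefinition} element-wise: after computing that $(h \circ \alpha_X) \circ \mathcal{P}i$ sends a set of atoms $U$ to its join $\bigvee_{a \in U} a$ in $B$, the first diagram becomes the atomicity of $B$, and the second becomes the fact that an atom lies below $\bigvee U$ iff it belongs to $U$, which you prove from complete distributivity and the incomparability of distinct atoms. The paper instead argues structurally: it first computes $2^{\eta^{\mathcal{H}}_X} \circ \alpha_{\mathcal{P}X} = \mu^{\mathcal{P}}_X$, i.e.\ that $\alpha$ carries free $\mathcal{H}$-algebras to free $\mathcal{P}$-algebras, and then transports the Stone-duality isomorphism $d \colon B \simeq K(\At(B))$ along the functor $\alpha \colon \Set^{\mathcal{H}} \rightarrow \Set^{\mathcal{P}}$ induced by \Cref{alphapowersetneighbourhooddistrlaw}, obtaining a $\mathcal{P}$-algebra isomorphism $(X, h \circ \alpha_X) \simeq (\mathcal{P}(\At(B)), \mu^{\mathcal{P}}_{\At(B)})$ from which the basis is read off as in \Cref{basisifffree}. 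The paper's route gets your second diagram (unique decomposition) for free, since it is packaged in the statement that $K$ and $\At$ are quasi-inverse, and it reuses general lemmas rather than lattice arithmetic; your route re-proves the separation property by hand, but in exchange it is more elementary and makes explicit exactly which order-theoretic facts about atoms do the work. One word of caution on your ``mechanical step'': identifying $h(\alpha_X(\mathcal{P}i(U)))$ with the join $\bigvee_{a \in U} a$ is not purely a matter of unfolding $\alpha_X$ --- it requires knowing how the abstract structure map $h$ encodes joins under the equivalence $\Set^{\mathcal{H}} \cong \CABA$ of \Cref{eilenbergexample}; you acknowledge this dependency, and citing that equivalence (or first checking the identity on free algebras, as the paper does) is what closes the point.
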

\begin{proof}
	Let $K: \Set^{\textnormal{op}} \rightarrow \Set^{\mathcal{H}}$ denote the comparison functor with $K(X) = ( \mathcal{P}X, 2^{\eta^{\mathcal{H}}_X} )$ induced by the self-dual contravariant powerset adjunction. It is well-known that $K$ has a quasi-inverse, namely the functor $\At: \Set^{\mathcal{H}} \rightarrow \Set^{\textnormal{op}}$ assigning to a complete atomic Boolean algebra $B$ its atoms $\At(B)$ \cite{taylor2002subspaces}. The equivalence $d: B \simeq K \circ \At(B)$ is given by $d(x) = \lbrace a \in \At(B) \mid a \leq x \rbrace$.
	The calculation below 
	\begin{align*}
		2^{\eta^{\mathcal{H}}_X} \circ \alpha_{\mathcal{P}X}(\Phi)(x) &=
		\alpha_{\mathcal{P}X}(\Phi)(\eta^{\mathcal{H}}_X(x)) && \textnormal{(Definition of } 2^{\eta^{\mathcal{H}}_X}) \\
		&= \vee_{\varphi \in 2^X} \Phi(\varphi) \wedge \eta^{\mathcal{H}}_X(x)(\varphi) &&\textnormal{(Definition of } \alpha_{\mathcal{P}X}) \\
		&= \vee_{\varphi \in 2^X} \Phi(\varphi) \wedge \varphi(x) && \textnormal{(Definition of } \eta_X^{\mathcal{H}} ) \\
		&= \mu_{X}^{\mathcal{P}}(\Phi)(x) && \textnormal{(Definition of } \mu^{\mathcal{P}}_X)
	\end{align*}
	shows that $2^{\eta^{\mathcal{H}}_X} \circ \alpha_{\mathcal{P}X} = \mu^{\mathcal{P}}_X$. By \Cref{alphapowersetneighbourhooddistrlaw} the definition $\alpha(( X, h )) = ( X, h \circ \alpha_X )$ yields a functor $\alpha: \Set^{\mathcal{H}} \rightarrow \Set^{\mathcal{P}}$. 
	We can thus deduce the following equivalence between $\mathcal{P}$-algebras:
	\begin{align*}
		( X, h \circ \alpha_X ) &= \alpha(B) && \textnormal{(Definition of } \alpha) \\
		&\simeq \alpha \circ K \circ \At(B) && (\id \simeq K \circ \At) \\
		&= ( \mathcal{P}(\At(B)), 2^{\eta^{\mathcal{H}}_{\At(B)}} \circ \alpha_{\mathcal{P}(\At(B))}  ) && \textnormal{(Definition of } \alpha \circ K \circ \At) \\
		&=  ( \mathcal{P}(\At(B)), \mu^{\mathcal{P}}_{\At(B)} ) && (2^{\eta^{\mathcal{H}}_X} \circ \alpha_{\mathcal{P}X} = \mu^{\mathcal{P}}_X)
	\end{align*}
	Using the definition of a basis, the former immediately implies the claim.
	\end{proof}

The \'atomaton for the regular language $L = (a+b)^*a$, for example, can now be obtained as follows.
First, we construct the minimal pointed $\lambda^{\mathcal{H}}$-bialgebra accepting $L$, which is depicted in \Cref{overlinem(l)atom} together with its underlying CABA structure $B$. The construction can be verified with the help of the following result.
\begin{lemma}
\label[lemma]{freeneighbourhoodbialgebrastructure}
	Let $( \mathcal{H}X, \mu^{\mathcal{H}}_X, \langle  \overline{\varepsilon}, \overline{\delta} \rangle ) := \free^{\lambda^{\mathcal{H}}}(( X, \langle \varepsilon, \delta \rangle ))$. Then it holds $\overline{\varepsilon}(\Phi) = \Phi(\varepsilon)$ and $\overline{\delta}_a(\Phi)(\varphi) = \Phi(\varphi \circ \delta_a)$.
\end{lemma}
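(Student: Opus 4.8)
The plan is to proceed exactly along the lines of the proof of \Cref{freepowersetbialgebrastructure}, unfolding the definition of the free bialgebra together with that of the canonically induced distributive law $\lambda^{\mathcal{H}} = \lambda^{h^{\mathcal{H}}}$. Recall from \Cref{freeexpfunctor} that $\free^{\lambda^{\mathcal{H}}}(( X, \langle \varepsilon, \delta \rangle )) = ( \mathcal{H}X, \mu^{\mathcal{H}}_X, \lambda^{\mathcal{H}}_X \circ \mathcal{H}\langle \varepsilon, \delta \rangle )$, so that by definition $\langle \overline{\varepsilon}, \overline{\delta} \rangle = \lambda^{\mathcal{H}}_X \circ \mathcal{H}\langle \varepsilon, \delta \rangle$. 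Here, by \eqref{induceddistrlaweq} instantiated at the $\mathcal{H}$-algebra $h^{\mathcal{H}}$ of \Cref{neighbourhoodalgebra}, the distributive law is $\lambda^{\mathcal{H}}_X = (h^{\mathcal{H}} \times \st) \circ \langle \mathcal{H}\pi_1, \mathcal{H}\pi_2 \rangle$. I would therefore treat $\overline{\varepsilon}$ and $\overline{\delta}$ separately by postcomposing with the two projections.

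For the first equality I would postcompose with $\pi_1$, using $\pi_1 \circ (h^{\mathcal{H}} \times \st) = h^{\mathcal{H}} \circ \pi_1$ together with $\pi_1 \circ \langle \mathcal{H}\pi_1, \mathcal{H}\pi_2 \rangle = \mathcal{H}\pi_1$ and functoriality $\mathcal{H}\pi_1 \circ \mathcal{H}\langle \varepsilon, \delta \rangle = \mathcal{H}\varepsilon$, which reduces $\overline{\varepsilon}$ to $h^{\mathcal{H}} \circ \mathcal{H}\varepsilon$. Unfolding $h^{\mathcal{H}}(\Psi) = \Psi(\id_2)$ from \Cref{neighbourhoodalgebra} and $\mathcal{H}\varepsilon(\Phi)(\varphi) = \Phi(\varphi \circ \varepsilon)$ from \Cref{exampleofmonads} then yields $\overline{\varepsilon}(\Phi) = \mathcal{H}\varepsilon(\Phi)(\id_2) = \Phi(\id_2 \circ \varepsilon) = \Phi(\varepsilon)$, as claimed.

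For the second equality I would symmetrically postcompose with $\pi_2$, using $\pi_2 \circ (h^{\mathcal{H}} \times \st) = \st \circ \pi_2$ and $\pi_2 \circ \langle \mathcal{H}\pi_1, \mathcal{H}\pi_2 \rangle = \mathcal{H}\pi_2$, so that $\overline{\delta} = \st \circ \mathcal{H}\delta$. Evaluating the strength via $\st(U)(a) = \mathcal{H}(\ev_a)(U)$ and invoking functoriality $\mathcal{H}(\ev_a) \circ \mathcal{H}\delta = \mathcal{H}(\ev_a \circ \delta) = \mathcal{H}\delta_a$ (writing $\delta_a = \ev_a \circ \delta$, exactly as in \Cref{freepowersetbialgebrastructure}) gives $\overline{\delta}_a(\Phi) = \mathcal{H}\delta_a(\Phi)$, whence $\overline{\delta}_a(\Phi)(\varphi) = \Phi(\varphi \circ \delta_a)$ by the action of $\mathcal{H}$ on morphisms.

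There is no genuine obstacle here: the argument is a routine diagram-level computation identical in shape to the powerset case in \Cref{freepowersetbialgebrastructure} and the vector-space case in \Cref{freexorbialgebrastructure}, and it can be presented in two short \emph{align} displays with the usual justification column. The only point requiring minor care is keeping the concrete descriptions of $h^{\mathcal{H}}$, of the action of $\mathcal{H}$ on morphisms, and of the strength function straight, and observing that $\id_2$ plays for $\mathcal{H}$ the role that evaluation-at-$1$ plays for $\mathcal{P}$; once these are fixed, both equalities follow purely from naturality and functoriality.
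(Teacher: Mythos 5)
Your proposal is correct and follows essentially the same route as the paper's proof: the paper likewise reduces $\overline{\varepsilon}$ to $h^{\mathcal{H}} \circ \mathcal{H}(\varepsilon)$ and $\overline{\delta}_a$ to $\mathcal{H}(\delta_a)$ by mirroring the computation of \Cref{freepowersetbialgebrastructure}, then concludes via $h^{\mathcal{H}}(\Psi) = \Psi(\id_2)$ and the action of $\mathcal{H}$ on morphisms. Your version merely spells out the projection and strength steps that the paper leaves implicit by reference to the powerset case.
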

\begin{proof}
The proof is analogous to the one of \Cref{freepowersetbialgebrastructure}.
	The first equality is seen as follows: 
	\begin{align*}
		\overline{\varepsilon}(\Phi) 
		&= \mathcal{H}(\varepsilon)(\Phi)(\id_2)  && \textnormal{(Cf. proof of \Cref{freepowersetbialgebrastructure})} \\
		&= \Phi(	\id_2 \circ \varepsilon) && \textnormal{(Definition of } \mathcal{H}(\varepsilon)) \\
		&= \Phi(\varepsilon) && (\id_2 \circ \varepsilon = \varepsilon)
	\end{align*}
		For the second equality we observe:
	\begin{align*}
		\overline{\delta}_a(\Phi)(\varphi) 
		&= \mathcal{H}(\delta_a)(\Phi)(\varphi) && \textnormal{(Cf. proof of \Cref{freepowersetbialgebrastructure})} \\
		&= \Phi(\varphi \circ \delta_a) && \textnormal{(Definition of } \mathcal{H}(\delta_a)) 
	\end{align*}
\end{proof}
 Using the distributive law homomorphism $\alpha$ of \Cref{alphapowersetneighbourhooddistrlaw}, the minimal pointed $\lambda^{\mathcal{H}}$-bialgebra can then be translated into an equivalent pointed $\lambda^{\mathcal{P}}$-bialgebra with underlying CSL-structure $\alpha(B)$. By \Cref{basisshiftecabapowerset} the atoms $\At(B)$ of $B$ form a basis for $\alpha(B)$. In this case the atoms are given by $\lbrack \lbrace \lbrace x \rbrace, \lbrace x, y \rbrace \rbrace \rbrack, \lbrack \lbrace \lbrace y \rbrace \rbrace \rbrack$ and $\lbrack \lbrace \emptyset \rbrace \rbrack$. The $\mathcal{P}$-succinct automaton consequently induced by \Cref{generatorbialgebrahom} is depicted in \Cref{atomaton}; it can be recognised as the \'atomaton, cf. e.g. \cite{MyersAMU15}.

\subsection{Example: The Distromaton}

\label{distromatonexample}

\begin{figure}[t]
\centering
\begin{subfigure}[c]{\columnwidth} 
\centering
\begin{subfigure}[b]{0.5 \columnwidth}
\center
\scriptsize
\begin{tikzpicture}[node distance= 2em, ]
	\node[state, initial, shape=rectangle, initial text=] (1) {$\lbrack \lbrace \lbrace x \rbrace, \lbrace x, y \rbrace \rbrace \rbrack$};

	\node[state, shape=rectangle, right = of 1] (3) {$\lbrack \emptyset \rbrack$};

	\node[state, shape=rectangle, below = of 1, accepting] (5) {$\lbrack \lbrace \lbrace x \rbrace,  \lbrace y \rbrace, \lbrace x, y \rbrace \rbrace \rbrack$};

	\node[state, shape=rectangle, right = of 5, accepting] (15) {$\lbrack \lbrace \lbrace x \rbrace, \lbrace y \rbrace, \lbrace x, y \rbrace, \emptyset \rbrace \rbrack$};
		    \path[->]
	(3) edge[loop above] node{$a,b$} (3)
	(1) edge[loop above] node{$b$} (1)
	(1) edge[right, bend left] node{$a$} (5)
	(5) edge[loop below] node{$a$} (5)
	(5) edge[left, bend left] node{$b$} (1)
	(15) edge[loop below] node{$a,b$} (15)
;
	\end{tikzpicture} \\
		\resizebox{0.4 \columnwidth}{!}{%
		\begin{tabular}[]{ c|c|c|c|c} 
	$\vee$ & $1$ & $2$ & $3$ & $4$ \\
	\hline
	$1$ & $1$ & $1$ & $3$ & $4$ \\
	\hline
	$2$ & $1$ & $2$ & $3$ & $4$ \\
	\hline
	$3$ & $3$ & $3$ & $3$ & $4$ \\
	\hline
	$4$ & $4$ & $4$ & $4$ & $4$\\
	\end{tabular}
	}
			\resizebox{0.4 \columnwidth}{!}{%
			\begin{tabular}[]{ c|c|c|c|c} 
	$\wedge$ & $1$ & $2$ & $3$ & $4$ \\
	\hline
	$1$ & $1$ & $2$ & $1$ & $1$ \\
	\hline
	$2$ & $2$ & $2$ & $2$ & $2$ \\
	\hline
	$3$ & $1$ & $2$ & $3$ & $3$ \\
	\hline
	$4$ & $1$ & $2$ & $3$ & $4$
	\end{tabular}
	}

\caption{}
\label{overlinem(l)distro}
\end{subfigure} 
\begin{subfigure}[b]{0.4 \columnwidth}
\centering
\scriptsize
\begin{tikzpicture}[node distance=2em]
	\node[state, initial, shape=rectangle, initial text=] (1) {$\lbrack \lbrace \lbrace x \rbrace, \lbrace x, y \rbrace \rbrace \rbrack$};

	\node[state, shape=rectangle,below= of  1, accepting] (5) {$\lbrack \lbrace \lbrace x \rbrace,  \lbrace y \rbrace, \lbrace x, y \rbrace \rbrace \rbrack$};

	\node[state, shape=rectangle, right= of  5, accepting] (15) {$\lbrack \lbrace \lbrace x \rbrace, \lbrace y \rbrace, \lbrace x, y \rbrace, \emptyset \rbrace \rbrack$};
		    \path[->]
	(1) edge[loop above] node{$a,b$} (1)
	(1) edge[right, bend left] node{$a$} (5)
	(5) edge[loop below] node{$a$} (5)
	(5) edge[left, bend left] node{$a,b$} (1)
	(15) edge[loop below] node{$a,b$} (15)
	(15) edge[bend right, right] node{$a,b$} (1)
	(15) edge[below] node{$a,b$} (5)
;
	\end{tikzpicture}
\caption{}
\label{distromaton}
\end{subfigure} 
\end{subfigure}
\caption{(a) The minimal CDL-structured DFA for $L = (a+b)^*a$, where
$1 \equiv \lbrack \lbrace \lbrace x \rbrace, \lbrace x, y \rbrace \rbrace \rbrack$, $2 \equiv \lbrack \emptyset \rbrack$, $3 \equiv \lbrack \lbrace \lbrace x \rbrace, \lbrace y \rbrace, \lbrace x, y \rbrace \rbrace \rbrack$, $4 \equiv \lbrack \lbrace \lbrace x \rbrace, \lbrace y \rbrace, \lbrace x, y \rbrace,$ $\emptyset \rbrace \rbrack$; (b) The distromaton for $L = (a+b)^*a$}
\end{figure}

We shall now use our framework to recover another canonical non-deterministic acceptor: the \textit{distromaton} \cite{MyersAMU15}. As the name suggests, it can be constructed by relating the monotone neighbourhood monad $\mathcal{A}$ -- whose algebras are completely \textit{distributive} lattices -- and the powerset monad $\mathcal{P}$. Formally, the relationship can be established by the same natural transformation we used for the \'atomaton. 

\begin{corollary}
\label{neighbourhoodpowersetmorphism}
	Let $\alpha_X: \mathcal{P}X \rightarrow \mathcal{A}X$ satisfy $\alpha_X(\varphi)(\psi) = \vee_{x \in X} \varphi(x) \wedge \psi(x)$, then $\alpha$ constitutes a distributive law homomorphism $\alpha: \lambda^{\mathcal{A}} \rightarrow \lambda^{\mathcal{P}}$.
\end{corollary}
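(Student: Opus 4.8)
The plan is to mirror the proof of \Cref{alphapowersetneighbourhooddistrlaw} for the \'atomaton, exploiting that the monotone neighbourhood monad $\mathcal{A}$ is the sub-monad of $\mathcal{H}$ consisting of the monotone neighbourhoods: it shares its unit and multiplication with $\mathcal{H}$ (cf. \Cref{exampleofmonads}) and carries the output algebra $h^{\mathcal{A}}(\Phi) = \Phi(\id_2)$ of \Cref{monotoneneighbourhoodoutputalgebra}. First I would check that $\alpha$ is well-defined as a natural transformation $\alpha \colon \mathcal{P} \Rightarrow \mathcal{A}$. The assignment $\alpha_X(\varphi)(\psi) = \vee_{x \in X} \varphi(x) \wedge \psi(x)$ is monotone in $\psi$, hence lands in $\mathcal{A}X \subseteq \mathcal{H}X$; naturality is the same computation as in the $\mathcal{H}$-case, so $\alpha$ is simply the corestriction of the natural transformation $\mathcal{P} \Rightarrow \mathcal{H}$ of \Cref{alphapowersetneighbourhooddistrlaw} along the inclusion $\mathcal{A} \hookrightarrow \mathcal{H}$.

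With well-definedness in hand, it remains to verify the three conditions of \Cref{distributivelawhomdef} for $\alpha \colon \lambda^{\mathcal{A}} \to \lambda^{\mathcal{P}}$, i.e. with $S = \mathcal{A}$ and $T = \mathcal{P}$. The unit condition is immediate, since $\alpha_X(\eta^{\mathcal{P}}_X(x))(\psi) = \vee_{y} \lbrack x = y \rbrack \wedge \psi(y) = \psi(x) = \eta^{\mathcal{A}}_X(x)(\psi)$. The distributive-law condition $F\alpha \circ \lambda^{\mathcal{P}} = \lambda^{\mathcal{A}} \circ \alpha_F$ I would obtain for free from \Cref{distributivelawaxiomeasier}, once I check the single equality $h^{\mathcal{A}} \circ \alpha_2 = h^{\mathcal{P}}$; this is a one-line computation, since $h^{\mathcal{A}}(\alpha_2(\varphi)) = \alpha_2(\varphi)(\id_2) = \vee_{u \in 2} \varphi(u) \wedge u = \varphi(1) = h^{\mathcal{P}}(\varphi)$.

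The main obstacle is the multiplication condition $\alpha \circ \mu^{\mathcal{P}} = \mu^{\mathcal{A}} \circ \alpha_{\mathcal{A}} \circ \mathcal{P}\alpha$, which is exactly the step the \'atomaton proof delegated to \Cref{algebrainduceddistributivellawhom}. I see two clean ways to discharge it. The first is to prove the $\mathcal{A}$-analogue of \Cref{algebrainduceddistributivellawhom}, namely that any set-monad algebra $h \colon T2 \to 2$ induces a distributive law homomorphism $\lambda^{\mathcal{A}} \to \lambda^{h}$ via $\alpha^{h}_X := h^{2^X} \circ \st \circ T(\eta^{\mathcal{A}}_X)$; its proof is verbatim that of \Cref{algebrainduceddistributivellawhom}, as that argument uses only the naturality and coherence laws of the strength, the identity $2^{\eta^{\mathcal{A}}_{2^X}} \circ \eta^{\mathcal{A}}_{2^{2^X}} = \id_{2^{2^X}}$, and $h^{\mathcal{A}} = \ev_{\id_2}$, all of which transfer to $\mathcal{A}$ because its unit and multiplication agree with those of $\mathcal{H}$ and $h^{\mathcal{A}}$ is as in \Cref{monotoneneighbourhoodoutputalgebra}. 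Instantiating at $T = \mathcal{P}$, $h = h^{\mathcal{P}}$ and identifying $\alpha^{h^{\mathcal{P}}} = \alpha$ by the computation of \Cref{alphapowersetneighbourhooddistrlaw} then closes the argument. The second, more elementary, route transfers the multiplication law directly from the $\mathcal{H}$-case: writing $\iota \colon \mathcal{A} \hookrightarrow \mathcal{H}$ for the componentwise-injective inclusion monad morphism, one has $\iota \circ \alpha = \bar{\alpha}$ for the homomorphism $\bar{\alpha} \colon \mathcal{P} \Rightarrow \mathcal{H}$ of \Cref{alphapowersetneighbourhooddistrlaw}, so composing the (already established) multiplication square for $\bar{\alpha}$ with $\iota$ and cancelling the injection $\iota$ on the left yields the multiplication square for $\alpha$. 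Either way, the only genuine subtlety is the bookkeeping that $\mathcal{A}$ inherits the relevant coherences from $\mathcal{H}$; the underlying computations are routine.
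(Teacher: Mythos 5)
Your proposal is correct and follows essentially the same route as the paper: the paper's own proof simply observes that $\alpha_X(\varphi)$ is monotone (so the map corestricts to $\mathcal{A}X$) and then, since $\mathcal{A}$ and $\mathcal{H}$ share their unit, multiplication, and output algebra and differ only on objects, transfers the result from \Cref{alphapowersetneighbourhooddistrlaw}. Your two ways of discharging the multiplication square (re-running \Cref{algebrainduceddistributivellawhom} for $\mathcal{A}$, or cancelling the injection $\mathcal{A} \hookrightarrow \mathcal{H}$) are just careful spellings-out of that same transfer.
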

\begin{proof}
	We observe that $\alpha_X(\varphi): ( 2^X, \subseteq ) \rightarrow ( 2, \leq )$ is monotone for all $\varphi \in 2^X$. Since the monotone neighbourhood monad $\mathcal{A}$ and the neighbourhood monad $\mathcal{H}$ only differ on objects, the result follows from \Cref{alphapowersetneighbourhooddistrlaw}.  
\end{proof}

The distromaton for the regular language $L = (a+b)^*a$, for example, can now be obtained as follows.
First, we construct the minimal pointed $\lambda^{\mathcal{A}}$-bialgebra for $L$, depicted in \Cref{overlinem(l)distro} with its underlying CDL structure $h$. The construction can be verified with the help of the result below. 

\begin{lemma}
\label[lemma]{freealternatingbialgebrastructure}
	Let $( \mathcal{A}X, \mu^{\mathcal{A}}_X, \langle \overline{\varepsilon}, \overline{\delta}  \rangle ) := \free^{\lambda^{\mathcal{A}}}(( X, \langle \varepsilon, \delta \rangle ))$. Then it holds $\overline{\varepsilon}(\Phi) = \Phi(\varepsilon)$ and $\overline{\delta}_a(\Phi)(\varphi) = \Phi(\varphi \circ \delta_a)$.
\end{lemma}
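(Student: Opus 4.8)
The plan is to proceed in exact analogy to the proof of \Cref{freepowersetbialgebrastructure}, observing that the only difference between this statement and \Cref{freeneighbourhoodbialgebrastructure} is that the monad $\mathcal{H}$ is replaced by $\mathcal{A}$. Since the monotone neighbourhood monad $\mathcal{A}$ and the neighbourhood monad $\mathcal{H}$ differ only on objects (that is, $\mathcal{A}X$ restricts $\mathcal{H}X$ to monotone functions, but the unit $\eta^{\mathcal{A}}$, multiplication $\mu^{\mathcal{A}}$, and action on morphisms are inherited from $\mathcal{H}$), the abstract computation unfolding $\free^{\lambda^{\mathcal{A}}}$ is formally identical to the one for $\free^{\lambda^{\mathcal{H}}}$. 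Recall that by \Cref{freeexpfunctor} we have $\free^{\lambda^{\mathcal{A}}}(( X, \langle \varepsilon, \delta \rangle )) = ( \mathcal{A}X, \mu^{\mathcal{A}}_X, \lambda^{\mathcal{A}}_X \circ \mathcal{A}\langle \varepsilon, \delta \rangle )$, so the task reduces to computing the two projections of $\lambda^{\mathcal{A}}_X \circ \mathcal{A}\langle \varepsilon, \delta \rangle$.

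First I would recall, exactly as in the proof of \Cref{freepowersetbialgebrastructure}, that post-composing with the first projection $\pi_1$ and unfolding the definition \eqref{induceddistrlaweq} of the induced distributive law $\lambda^{\mathcal{A}} = \lambda^{h^{\mathcal{A}}}$ yields $\overline{\varepsilon} = h^{\mathcal{A}} \circ \mathcal{A}(\varepsilon)$, and that post-composing with $\pi_2$ together with the definition of the strength function $\st$ yields $\overline{\delta}_a = \mathcal{A}(\delta_a)$. These two identities are obtained by the identical chain of equalities recorded in the proof of \Cref{freepowersetbialgebrastructure}, merely reading $\mathcal{A}$ in place of $\mathcal{P}$.

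It then remains to evaluate these expressions concretely, which is where the computation mirrors \Cref{freeneighbourhoodbialgebrastructure} rather than \Cref{freepowersetbialgebrastructure}, since $h^{\mathcal{A}}(\Phi) = \Phi(\id_2)$ just like $h^{\mathcal{H}}$. For the output I would compute
\[
\overline{\varepsilon}(\Phi) = h^{\mathcal{A}}(\mathcal{A}(\varepsilon)(\Phi)) = \mathcal{A}(\varepsilon)(\Phi)(\id_2) = \Phi(\id_2 \circ \varepsilon) = \Phi(\varepsilon),
\]
using the definition of the action of $\mathcal{A}$ on morphisms (inherited from $\mathcal{H}$) and $\id_2 \circ \varepsilon = \varepsilon$. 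For the transition structure I would compute
\[
\overline{\delta}_a(\Phi)(\varphi) = \mathcal{A}(\delta_a)(\Phi)(\varphi) = \Phi(\varphi \circ \delta_a),
\]
again directly from the definition of $\mathcal{A}(\delta_a)$. The only substantive point to verify beyond the routine calculation is well-definedness: one must check that $\overline{\varepsilon}$ and each $\overline{\delta}_a(\Phi)$ genuinely land in $\mathcal{A}X$, i.e.\ that monotonicity is preserved. I do not expect this to be a genuine obstacle, since it is precisely the content of $\mathcal{A}$ being a well-defined monad together with $h^{\mathcal{A}}$ being a legitimate $\mathcal{A}$-algebra (\Cref{monotoneneighbourhoodoutputalgebra}) and $\lambda^{\mathcal{A}}$ a legitimate distributive law (\Cref{induceddistrlaw}); these facts guarantee that the free bialgebra construction restricts correctly to monotone neighbourhood functionals, so no separate monotonicity argument is needed. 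Given that the whole derivation is word-for-word that of the two cited lemmas, the cleanest presentation simply states that the proof is analogous to \Cref{freepowersetbialgebrastructure} (for the reduction) and \Cref{freeneighbourhoodbialgebrastructure} (for the final evaluation).
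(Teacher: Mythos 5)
Your proposal is correct and matches the paper's proof, which is likewise just "analogous to \Cref{freeneighbourhoodbialgebrastructure}" — reducing via \eqref{induceddistrlaweq} to $h^{\mathcal{A}} \circ \mathcal{A}(\varepsilon)$ and $\mathcal{A}(\delta_a)$ and then evaluating with $h^{\mathcal{A}}(\Phi) = \Phi(\id_2)$. Your extra remark on monotonicity being guaranteed by \Cref{monotoneneighbourhoodoutputalgebra} and \Cref{induceddistrlaw} is a sensible addition but does not change the argument.
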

\begin{proof}
	Analogous to the proof of \Cref{freeneighbourhoodbialgebrastructure}.
\end{proof}

Using the distributive law homomorphism $\alpha$ in \Cref{neighbourhoodpowersetmorphism}, the minimal pointed $\lambda^{\mathcal{A}}$-bialgebra can be translated into an equivalent pointed $\lambda^{\mathcal{P}}$-bialgebra with underlying CSL structure $L = h \circ \alpha_X$. Its partially ordered state space \[ \lbrack \emptyset \rbrack \leq \lbrack \lbrace \lbrace x \rbrace, \lbrace x, y \rbrace \rbrace \rbrack \leq \lbrack \lbrace \lbrace x \rbrace, \lbrace y \rbrace, \lbrace x, y \rbrace \rbrace \rbrack \leq \lbrack \lbrace \lbrace x \rbrace, \lbrace y \rbrace, \lbrace x, y \rbrace, \emptyset \rbrace \rbrack \] is necessarily finite, which turns the set of join-irreducibles into a size-minimal generator $( J(L), i, d )$ for $L$, where $i(y) = y$ and $d(x) = \lbrace y \in J(L) \mid y \leq x \rbrace$. In this case, the join-irreducibles are given by all non-zero states.   
	The $\mathcal{P}$-succinct automaton consequently induced by \Cref{generatorbialgebrahom} is depicted in \Cref{distromaton} and can be recognised as the distromaton, cf. \cite{MyersAMU15}.

\subsection{Example: The Minimal Xor-CABA Automaton}

\label{minimalxorcabaexample}

We conclude this section by relating the neighbourhood monad $\mathcal{H}$ with the free vector space monad $\mathcal{R}$ over the unique two element field $\mathbb{Z}_2$. In particular, we derive a new canonical succinct acceptor for regular languages, which we call the \emph{minimal xor-CABA automaton}. 

The next result shows that every CABA can be equipped with a symmetric difference operation that turns it into a vector space over the two element field. 

\begin{corollary}
\label{alphaxorneighbourhooddistrlaw}
	Let $\alpha_X: \mathcal{R}X \rightarrow \mathcal{H}X$ satisfy $\alpha_X(\varphi)(\psi) = \bigoplus_{x \in X} \varphi(x) \cdot \psi(x)$, then $\alpha$ constitutes a distributive law homomorphism $\alpha: \lambda^{\mathcal{H}} \rightarrow \lambda^{\mathcal{R}}$.
\end{corollary}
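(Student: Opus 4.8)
The plan is to recognise this corollary as a direct instance of Proposition~\ref{algebrainduceddistributivellawhom}, exactly as was done for the \'atomaton in Corollary~\ref{alphapowersetneighbourhooddistrlaw}. By Lemma~\ref{xoroutputalgebra}, the map $h^{\mathcal{R}}\colon \mathcal{R}2 \to 2$ with $h^{\mathcal{R}}(\varphi) = \varphi(1)$ is an $\mathcal{R}$-algebra, so Proposition~\ref{algebrainduceddistributivellawhom} immediately supplies a distributive law homomorphism $\alpha^{h^{\mathcal{R}}}\colon \lambda^{\mathcal{H}} \rightarrow \lambda^{\mathcal{R}}$, defined component-wise by $\alpha^{h^{\mathcal{R}}}_X = (h^{\mathcal{R}})^{2^X} \circ \st \circ \mathcal{R}(\eta^{\mathcal{H}}_X)$. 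The entire content of the corollary then reduces to verifying the pointwise identity $\alpha^{h^{\mathcal{R}}} = \alpha$, where $\alpha$ is the map given in the statement; once this equality holds, $\alpha$ inherits the homomorphism property from $\alpha^{h^{\mathcal{R}}}$ and the proof is complete.

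The verification of $\alpha^{h^{\mathcal{R}}} = \alpha$ is a short chain of equalities, structurally identical to the one in Corollary~\ref{alphapowersetneighbourhooddistrlaw} but with the Boolean operations replaced by their $\mathbb{Z}_2$-linear counterparts. First I would fix a set $X$, an element $\varphi \in \mathcal{R}X$, and a predicate $\psi \in 2^X$, and unfold $\alpha^{h^{\mathcal{R}}}_X(\varphi)(\psi)$ along the composite. Applying $h^{\mathcal{R}}(-) = (-)(1)$ removes the outer $(h^{\mathcal{R}})^{2^X}$; the definition of the strength $\st(U)(\psi) = \mathcal{R}(\ev_\psi)(U)$ rewrites the middle as $\mathcal{R}(\ev_\psi)\big(\mathcal{R}(\eta^{\mathcal{H}}_X)(\varphi)\big)(1)$; functoriality of $\mathcal{R}$ collapses this to $\mathcal{R}(\ev_\psi \circ \eta^{\mathcal{H}}_X)(\varphi)(1)$; and the identity $\ev_\psi \circ \eta^{\mathcal{H}}_X = \psi$, which holds because $\eta^{\mathcal{H}}_X(x)(\psi) = \psi(x)$ by Examples~\ref{exampleofmonads}, reduces the expression to $\mathcal{R}(\psi)(\varphi)(1)$.

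It then remains to evaluate $\mathcal{R}(\psi)(\varphi)(1)$ concretely. Using the action of $\mathcal{R}$ on morphisms from Examples~\ref{exampleofmonads}, this equals $\bigoplus_{x \in \psi^{-1}(1)} \varphi(x)$. The only arithmetic point is the reindexing to a sum over all of $X$: since $\varphi(x)\cdot\psi(x)$ equals $\varphi(x)$ when $\psi(x)=1$ and $0$ when $\psi(x)=0$, and adjoining summands equal to $0$ leaves an $\oplus$-sum unchanged, one obtains $\bigoplus_{x\in X}\varphi(x)\cdot\psi(x) = \alpha_X(\varphi)(\psi)$, which is exactly the desired equality.

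I do not expect a genuine obstacle here: the computation is the $\mathbb{Z}_2$-linear analogue of the powerset case, with joins and meets systematically replaced by exclusive disjunction and field multiplication. The two spots deserving a moment of care are confirming that $h^{\mathcal{R}}$ indeed meets the hypothesis of Proposition~\ref{algebrainduceddistributivellawhom} (already settled by Lemma~\ref{xoroutputalgebra}) and justifying the final reindexing over $\mathbb{Z}_2$ rather than over the Boolean semiring. Crucially, as with the distromaton in Corollary~\ref{neighbourhoodpowersetmorphism}, all the diagrammatic obligations of a distributive law homomorphism — naturality together with the unit, multiplication, and distributive-law squares — are discharged abstractly once and for all by Proposition~\ref{algebrainduceddistributivellawhom}, so no diagram chase needs to be repeated by hand.
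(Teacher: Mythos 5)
Your proposal is correct and follows exactly the route the paper takes: the paper's proof of this corollary is literally ``Analogous to the proof of Corollary~\ref{alphapowersetneighbourhooddistrlaw}'', i.e.\ verify $\alpha^{h^{\mathcal{R}}} = \alpha$ pointwise and let Proposition~\ref{algebrainduceddistributivellawhom} (via Lemma~\ref{xoroutputalgebra}) discharge all the diagrammatic obligations. Your unfolding of the composite and the final reindexing over $\mathbb{Z}_2$ is precisely the intended computation, so there is nothing to add.
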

\begin{proof}
	Analogous to the proof of \Cref{alphapowersetneighbourhooddistrlaw}.
\end{proof}

Since every vector space admits a basis, the above result leads to the definition of a new acceptor of regular languages.
In what follows, let $\alpha$ denote the homomorphism in \Cref{alphaxorneighbourhooddistrlaw} and $F$ the endofunctor given by $FX = 2 \times X^{A}$.

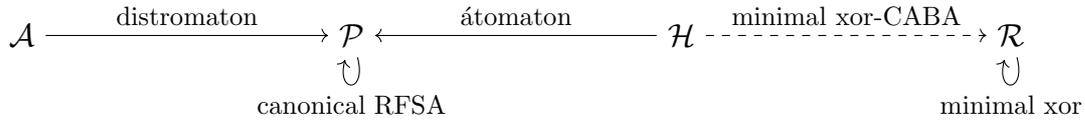
\begin{figure*}[t]
\centering
\[
	\begin{tikzpicture}[node distance=10.5em]
		\node[] (H) {$\mathcal{H}$};
		\node[left of = H] (P) {$\mathcal{P}$};
		\node[right of = H] (X) {$\mathcal{R}$};
		\node[left of = P] (A) {$\mathcal{A}$};
		    \path[->]
	(H) edge[above] node{\footnotesize \'atomaton} (P)
	(H) edge[above, dashed] node{\footnotesize
 minimal xor-CABA} (X)
	(P) edge[loop below] node{\footnotesize
 canonical RFSA} (P)
	(X) edge[loop below] node{\footnotesize
minimal xor} (X)
	(A) edge[above] node{\footnotesize
 distromaton} (P)
;
	\end{tikzpicture}
	\]
	\caption{The minimal xor-CABA automaton is to the minimal xor automaton what the \'atomaton is to the canonical RFSA}
	\label{minimalxorcabadiagram}
\end{figure*}

\begin{definition}[Minimal Xor-CABA Automaton]
	Let $( X, h, k )$ be the minimal $x$-pointed $\lambda^{\mathcal{H}}$-bialgebra accepting a regular language $L \subseteq A^*$, and $B = ( Y, i, d )$ a basis for the $\mathcal{R}$-algebra $( X, h \circ \alpha_X )$. The \emph{minimal xor-CABA automaton} for $L$ with respect to $B$ is the $d(x)$-pointed $\mathbb{Z}_2$-weighted automaton $Fd \circ k \circ i$.
	\end{definition}
	
	In \Cref{minimalxorcabadiagram} it is indicated how the canonical acceptors in this chapter, including the minimal xor-CABA automaton, are based on relations between pairs of monads.
		
	For the regular language $L = (a+b)^*a$ the above definition instantiates as follows. First, as for the \'atomaton, we construct the minimal pointed $\lambda^{\mathcal{H}}$-bialgebra $( X, h, k)$ for $L$; it is depicted in \Cref{overlinem(l)atom}. As one easily verifies, the $\mathbb{Z}_2$-vector space $(X, h \circ \alpha_X)$ is induced by the symmetric difference operation $\oplus$ on subsets. Using the notation in \Cref{overlinem(l)atom}, we choose the basis $( Y, i, d )$ with $Y = \lbrace  4,6,7,8 \rbrace$; $i(y) = y$; and $d(1) = 7 \oplus 8$, $d(2) = \emptyset$, $d(3) = 6 \oplus 7$, $d(4) = 4$, $d(5) = 6 \oplus 7 \oplus 8$, $d(6) = 6$, $d(7) = 7$, $d(8) = 8$. The induced $d(1) = 7 \oplus 8$-pointed $\mathcal{R}$-succinct automaton accepting $L$, i.e. the minimal xor-CABA automaton, is depicted in \Cref{minimalxorcaba}.
\begin{figure}[t]
\footnotesize
\centering
\[
	\begin{tikzpicture}[node distance=3em]
		\node[state, initial, shape=rectangle, initial text=, accepting] (7) {$\lbrack \lbrace \lbrace y \rbrace, \emptyset \rbrace \rbrack$};
		
	\node[state, shape=rectangle, right = of 7, accepting] (6) {$\lbrack \lbrace \lbrace y \rbrace \rbrace \rbrack$};

	\node[state, shape=rectangle, initial, right = of 6, accepting, initial text=] (8) {$\lbrack \lbrace \lbrace x \rbrace, \lbrace y \rbrace, \lbrace x, y \rbrace, \emptyset \rbrace \rbrack$};
	
		\node[state, shape=rectangle, right = of 8] (4) {$\lbrack \lbrace \lbrace x, y \rbrace, \emptyset \rbrace \rbrack$};
		
		    \path[->]
	(7) edge[loop above] node{$a,b$} (7)
	(7) edge[above] node{$a$} (6)
	(8) edge[loop above] node{$a,b$} (8)
	(4) edge[above] node{$a,b$} (8)
;
	\end{tikzpicture}
	\]
	\caption{The minimal xor-CABA automaton for $L = (a+b)^*a$}
		\label{minimalxorcaba}
\end{figure}
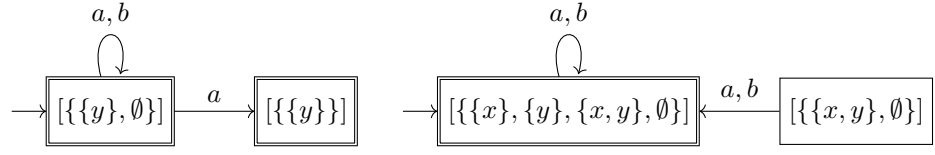

\section{Minimality}

\label{minimality}

 In this section we restrict ourselves to the category of (nominal) sets. We show that every language satisfying a suitable property parametric in monads $S$ and $T$ admits a size-minimal succinct automaton of type $T$ accepting it. As a main result we obtain \Cref{minimalitytheorem}. In \Cref{minimalityimplications} we instantiate the former to subsume known minimality results for canonical automata, to prove the xor-CABA automaton minimal, and to establish a size-comparison between different acceptors.

Given a distributive law homomorphism $\alpha: \lambda^S \rightarrow \lambda^T$, let $\ext: \Coalg(FT) \rightarrow \Coalg(FS)$ be the functor given by $\ext(( X, k )) = ( X, F\alpha_X \circ k )$ and $\ext(f) = f$. Moreover, let $\expa_U: \Coalg(FU) \rightarrow \Bialg(\lambda^U)$ for $U \in \lbrace S, T \rbrace$ denote the functor introduced in \Cref{expfunctor}.

\begin{proposition}
\label{alphaunderlies}
Let $\alpha: \lambda^{S} \rightarrow \lambda^T$ be a distributive law homomorphism. Then $\alpha_X: TX \rightarrow SX$ underlies a natural transformation $\alpha: \expa_T \Rightarrow \alpha \circ \expa_S \circ \ext$ between functors of type $\Coalg(FT) \rightarrow \Bialg(\lambda^T)$.
\end{proposition}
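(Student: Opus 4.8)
The plan is to unfold the two functors explicitly and to check that each component $\alpha_X$ is a morphism in $\Bialg(\lambda^T)$, and then to verify naturality of the family $(\alpha_X)$. On an object $(X,k)$ with $k \colon X \to FTX$, the source is $\expa_T(X,k) = (TX, \mu_X, F\mu_X \circ \lambda^T_{TX} \circ Tk)$ by \Cref{expfunctor}. The target $(\alpha \circ \expa_S \circ \ext)(X,k)$ is obtained by first forming the $FS$-coalgebra $(X, F\alpha_X \circ k)$, expanding it with $\expa_S$, and relabelling the algebra along $\alpha$ as in \Cref{inducedbialgebra}; concretely it is $(SX,\ \mu_X \circ \alpha_{SX},\ F\mu_X \circ \lambda^S_{SX} \circ SF\alpha_X \circ Sk)$. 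Both tuples are genuine $\lambda^T$-bialgebras (the source by \Cref{expfunctor}, the target by \Cref{inducedbialgebra} applied to the $\lambda^S$-bialgebra $\expa_S(\ext(X,k))$), so it remains to show that $\alpha_X \colon TX \to SX$ is simultaneously a $T$-algebra homomorphism and an $F$-coalgebra homomorphism between them.

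Two of these obligations are immediate. The $T$-algebra homomorphism condition $\alpha_X \circ \mu_X = \mu_X \circ \alpha_{SX} \circ T\alpha_X$ is exactly the left-hand (monad multiplication) coherence square of \Cref{distributivelawhomdef} read at component $X$. Naturality of the family $(\alpha_X)$ amounts to the identity $\alpha_{X'} \circ Tf = Sf \circ \alpha_X$ for any coalgebra morphism $f \colon (X,k) \to (X',k')$; this holds because $\expa_T(f) = Tf$ and $(\alpha \circ \expa_S \circ \ext)(f) = Sf$, so the required square is precisely the naturality square of $\alpha \colon T \Rightarrow S$ at $f$.

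The main work, and the only nontrivial part, is the $F$-coalgebra homomorphism condition, which I would establish by a diagram chase rewriting the target composite $F\mu_X \circ \lambda^S_{SX} \circ SF\alpha_X \circ Sk \circ \alpha_X$ into $F\alpha_X \circ (F\mu_X \circ \lambda^T_{TX} \circ Tk)$. The rewrites, in order, are: naturality of $\alpha$ at $k$, replacing $Sk \circ \alpha_X$ by $\alpha_{FTX} \circ Tk$; naturality of $\alpha$ at $F\alpha_X$, replacing $SF\alpha_X \circ \alpha_{FTX}$ by $\alpha_{FSX} \circ TF\alpha_X$; the right-hand (distributive law) coherence square of \Cref{distributivelawhomdef} at object $SX$, giving $\lambda^S_{SX} \circ \alpha_{FSX} = F\alpha_{SX} \circ \lambda^T_{SX}$; naturality of $\lambda^T$ at $\alpha_X$, giving $\lambda^T_{SX} \circ TF\alpha_X = FT\alpha_X \circ \lambda^T_{TX}$; and finally, after factoring the outer $F$, the first coherence condition $\mu_X \circ \alpha_{SX} \circ T\alpha_X = \alpha_X \circ \mu_X$ applied under $F$ to turn $F\mu_X \circ F\alpha_{SX} \circ FT\alpha_X$ into $F\alpha_X \circ F\mu_X$. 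I expect the subtlety to be purely bookkeeping: keeping track of which component of $\alpha$ is invoked at each step, and in particular applying the distributive-law coherence square at the object $SX$ rather than $TX$. Once the chase closes, $\alpha_X$ is a $\lambda^T$-bialgebra homomorphism $\expa_T(X,k) \to (\alpha \circ \expa_S \circ \ext)(X,k)$, and the naturality already noted completes the proof.
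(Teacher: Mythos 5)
Your proposal is correct and follows essentially the same route as the paper: the same two unfoldings of $\expa_T$ and $\alpha \circ \expa_S \circ \ext$, the algebra condition read off from the multiplication coherence square of \Cref{distributivelawhomdef}, naturality reduced to naturality of $\alpha \colon T \Rightarrow S$, and the coalgebra condition established by exactly the diagram chase (naturality of $\alpha$ at $k$ and at $F\alpha_X$, the distributive-law coherence square at $SX$, naturality of $\lambda^T$ at $\alpha_X$, and the multiplication coherence under $F$) that the paper draws as a single commuting diagram.
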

\begin{proof}
	Given a $T$-succinct automaton $\mathcal{X} = ( X, k )$ the definitions imply:
	\begin{align*}
		\expa_T(\mathcal{X}) &= ( TX, \mu^T_X, F \mu^T_X \circ \lambda^T_{TX} \circ Tk ) \\
		\alpha \circ \expa_S \circ \ext(\mathcal{X}) &= ( SX, \mu^S_X \circ \alpha_{SX}, F\mu^S_X \circ \lambda^S_{SX} \circ SF\alpha_X \circ Sk ).
	\end{align*}
	By the definition of distributive law homomorphisms, the morphism $\alpha_X$ commutes with the underlying $T$-algebra structures. Its commutativity with the underlying $F$-coalgebra structures follows from:
	\[
	\begin{tikzcd}
		TX \arrow{dd}[left]{\alpha_X} \arrow{r}{Tk} & TFTX \arrow{dd}{\alpha_{FTX}} \arrow{rr}{\lambda^T_{TX}} \arrow{dr}{TF\alpha_X} & & FT^2X \arrow{r}{F\mu^T_X} \arrow{d}{FT\alpha_X} & FTX \arrow{dd}{F\alpha_X} \\
		&& TFSX \arrow{r}{\lambda^T_{SX}} \arrow{d}{\alpha_{FSX}} & FTSX \arrow{d}{F\alpha_{SX}} &  \\
		SX \arrow{r}{Sk} & SFTX \arrow{r}{SF \alpha_X} & SFSX \arrow{r}{\lambda^S_{SX}} & FS^2X \arrow{r}{F\mu^S_X} & FSX
	\end{tikzcd}.
	\]
	Above we use the naturality of $\alpha$ and $\lambda^T$, and the definition of a distributivity law homomorphism. The naturality of $\alpha$ as natural transformation $\alpha: \expa_T \Rightarrow \alpha \circ \expa_S \circ \ext$ follows from the naturality of $\alpha$ as natural transformation $\alpha: T \Rightarrow S$.
\end{proof}

In the above situation a $T$-succinct automaton admits \emph{two} semantics, by lifting the former either to a bialgebra over $\lambda^S$ or $\lambda^T$.
The next definition introduces a notion of \emph{closedness} that captures the cases in which the image of both semantics coincides.

\begin{definition}[$\alpha$-Closed Succinct Automaton]
\label[definition]{closedsuccinctdef}
Let $\alpha: \lambda^{S} \rightarrow \lambda^T$ be a distributive law homomorphism.
	We say that a $T$-succinct automaton $\mathcal{X}$ is \emph{$\alpha$-closed} if the unique diagonal below is an isomorphism:
	\[
	\begin{tikzcd}
		\expa_T(\mathcal{X}) \arrow[twoheadrightarrow]{r}{\obs} \arrow{d}[left]{\obs \circ \alpha_X} & \img(\obs_{\expa_T(\mathcal{X})}) \arrow[dashed]{dl}{}   \\
		\img(\obs_{\alpha(\expa_S(\ext(\mathcal{X})))}) \arrow[hookrightarrow]{r}{} & \Omega \arrow[hookleftarrow]{u}{}
	\end{tikzcd}.
	\]
	\end{definition}
	
It immediately follows that the unique diagonal in \Cref{closedsuccinctdef} is injective.

The result below shows that instead of defining the $\alpha$-closedness of $\mathcal{X}$ as true if the unique diagonal is an isomorphism, we could have equivalently defined it to be true if their exists \emph{any} isomorphism of such type.

 \begin{lemma}
\label{alphaclosednessequivalentform}
	$\mathcal{X}$ is $\alpha$-closed iff $\img(\obs_{\expa_T(\mathcal{X})}) \cong \img(\obs_{\alpha(\expa_S(\ext(\mathcal{X})))})$.
\end{lemma}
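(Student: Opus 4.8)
The plan is to prove the two implications separately, the forward one being immediate from the definition and the reverse one resting on a finiteness argument. Throughout I would reduce all questions about isomorphisms and monomorphisms of $\lambda^T$-bialgebras to the underlying (nominal) sets, using that the forgetful functor from $\Bialg(\lambda^T)$ creates isomorphisms (a bijective bialgebra homomorphism is an isomorphism, cf. the argument already used via \cite{rutten2000universal}) and reflects monos.

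For the direction from $\alpha$-closedness to the abstract isomorphism, I would simply note that $\alpha$-closedness asserts precisely that the unique diagonal of \Cref{closedsuccinctdef} is an isomorphism, and that this diagonal is by construction a morphism $\img(\obs_{\expa_T(\mathcal{X})}) \to \img(\obs_{\alpha(\expa_S(\ext(\mathcal{X})))})$; hence it directly witnesses $\img(\obs_{\expa_T(\mathcal{X})}) \cong \img(\obs_{\alpha(\expa_S(\ext(\mathcal{X})))})$.

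For the converse I would first pin down what the diagonal actually is. By \Cref{alphaunderlies}, $\alpha_X$ is a $\lambda^T$-bialgebra homomorphism $\expa_T(\mathcal{X}) \to \alpha(\expa_S(\ext(\mathcal{X})))$, so by finality of the $\lambda^T$-bialgebra $\Omega$ and uniqueness of semantics it satisfies $\obs_{\alpha(\expa_S(\ext(\mathcal{X})))} \circ \alpha_X = \obs_{\expa_T(\mathcal{X})}$. Therefore $\img(\obs_{\expa_T(\mathcal{X})})$ sits inside $\img(\obs_{\alpha(\expa_S(\ext(\mathcal{X})))})$ as subobjects of $\Omega$, and the diagonal of \Cref{closedsuccinctdef} is exactly this subobject inclusion, which is a mono (as already remarked beneath that definition). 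The remaining content of the lemma is thus purely set-theoretic: a mono $A \hookrightarrow B$ between two subobjects of $\Omega$ must be an isomorphism once $A$ and $B$ are abstractly isomorphic. Given an isomorphism $A \cong B$, I would compose it with the inclusion $A \hookrightarrow B$ to obtain an injective self-map of $B$; since we are working with regular languages, the determinised state spaces $TX$ and $SX$ are finite (for each of the monads $\mathcal{P}, \mathcal{H}, \mathcal{A}, \mathcal{R}$ the value on a finite set is finite), so both images are finite, an injective self-map of a finite set is a bijection, and the diagonal is therefore a bijective bialgebra homomorphism, hence an isomorphism. In the nominal case I would replace ``finite'' by ``orbit-finite'' and invoke that orbit-finite nominal sets are Dedekind-finite, so that an injective equivariant self-map is again surjective.

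The main obstacle is precisely this last step: abstract isomorphism of two objects does \emph{not} in general force a chosen mono between them to be an isomorphism, as the embedding $2\mathbb{Z} \hookrightarrow \mathbb{Z}$ shows, so the argument genuinely depends on the finiteness (resp.\ orbit-finiteness) of the images. This is where the standing restriction to regular languages and to the listed monads is used, and I would take care to make that hypothesis explicit rather than leave it buried inside \Cref{closedsuccinctdef}.
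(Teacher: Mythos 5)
Your forward direction is exactly the paper's: the diagonal of \Cref{closedsuccinctdef} itself witnesses the isomorphism. Your identification of that diagonal as the mono $\img(\obs_{\expa_T(\mathcal{X})}) \hookrightarrow \img(\obs_{\alpha(\expa_S(\ext(\mathcal{X})))})$ induced by $\obs_{\alpha(\expa_S(\ext(\mathcal{X})))} \circ \alpha_X = \obs_{\expa_T(\mathcal{X})}$ is also correct and matches the setup around \Cref{alphaunderlies}.

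The converse is where there is a genuine gap. Your pigeonhole step needs both images to be finite (or orbit-finite), but the lemma carries no such hypothesis: it is stated for an arbitrary $T$-succinct automaton $\mathcal{X}$ over arbitrary monads related by a distributive law homomorphism, and it is applied in \Cref{minimalitytheorem} to an arbitrary $\alpha$-closed acceptor $\mathcal{Y}$ of $L$, whose carrier $Y$ — and hence $\img(\obs_{\expa_T(\mathcal{Y})})$ — need not be finite even when $L$ is regular (an infinite NFA for a regular language can realise infinitely many distinct languages at its states). So you are not proving the stated lemma but a finitary special case, and the mechanism you use does not extend. The finiteness detour is also unnecessary. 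You correctly observe that an abstract isomorphism $A \cong B$ does not in general force a given mono $A \hookrightarrow B$ to be invertible ($2\mathbb{Z} \hookrightarrow \mathbb{Z}$); what rescues the situation here is not cardinality but the fact that $A$ and $B$ are both subobjects of the \emph{final} $\lambda^T$-bialgebra $\Omega$. Any bialgebra morphism $\varphi \colon A \to B$ composed with $m_B \colon B \hookrightarrow \Omega$ is a morphism into the final object, hence equals $m_A$, so $m_B \circ \varphi = m_A = m_B \circ d$ and $\varphi = d$ since $m_B$ is mono; in particular, if any isomorphism exists, the diagonal is one. The paper packages this as an application of the uniqueness of epi-mono factorisations of $d$ (factoring $d$ both through $\im(d)$ and through the given isomorphism $\varphi$), but either way the argument is size-free. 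You should replace the counting step by this finality/factorisation argument; everything else in your proposal can stay.
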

\begin{proof}
	Clearly, if $\mathcal{X}$ is $\alpha$-closed, then, by definition, the unique diagonal in \Cref{closedsuccinctdef} witnesses the desired isomorphism.
	
	Conversely, assume there is \emph{any} isomorphism $\img(\obs_{\expa_T(\mathcal{X})}) \overset{\varphi}{\cong} \img(\obs_{\alpha(\expa_S(\ext(\mathcal{X})))})$. 
	As the unique diagonal $d$ in \Cref{closedsuccinctdef} is necessarily injective, it remains to show that it is also surjective, that is $\im(d) \cong \img(\obs_{\alpha(\expa_S(\ext(\mathcal{X})))})$. The latter follows immediately from the uniqueness of epi-mono factorisations:
	\[
			\begin{tikzcd}
			\img(\obs_{\expa_T(\mathcal{X})}) \arrow[twoheadrightarrow]{d}[left]{\varphi} \arrow[twoheadrightarrow]{rr}{d} & & \im(d)  \arrow[dashed]{dll}{\cong} \\
			\img(\obs_{\alpha(\expa_S(\ext(\mathcal{X})))}) \arrow[hookrightarrow]{r}[below]{\varphi^{-1}} & \img(\obs_{\expa_T(\mathcal{X})}) \arrow[hookrightarrow]{r}[below]{d} & \img(\obs_{\alpha(\expa_S(\ext(\mathcal{X})))}) \arrow[hookleftarrow]{u}
		\end{tikzcd}.
	\]
	\end{proof}	

Succinct automata obtained from generators for bialgebras are $\alpha$-closed.

\begin{lemma}
\label{generatorclosed}
Let $\alpha: \lambda^{S} \rightarrow \lambda^T$ be a distributive law homomorphism and $( X, h, k )$ a $\lambda^S$-bialgebra.
	If $( Y, i, d )$ is a generator for $( X, h \circ \alpha_X )$, then $( Y,  Fd \circ k \circ i)$ is $\alpha$-closed.
\end{lemma}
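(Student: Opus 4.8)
The plan is to show that the two semantics attached to $\mathcal{X} = (Y, Fd\circ k\circ i)$ have the \emph{same} image in $\Omega$, namely the image of the $F$-coalgebra semantics $\obs_{(X,k)}$ of the underlying bialgebra $(X,h,k)$; by \Cref{alphaclosednessequivalentform} this equality of images is exactly what is needed for $\alpha$-closedness. I will compute each image by exhibiting a \emph{surjective} bialgebra homomorphism onto the relevant bialgebra and then using that bialgebra homomorphisms preserve semantics (being in particular $F$-coalgebra homomorphisms, they commute with the final map $\obs$), together with the identity $\obs_{(-,-,-)} = \obs_{(-,-)}$ relating final bialgebra and final coalgebra, and the elementary fact that precomposition with a surjection leaves the set-theoretic image unchanged.

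For the $\lambda^T$-side, note that $(X, h\circ\alpha_X, k)$ is a $\lambda^T$-bialgebra by \Cref{inducedbialgebra}, and $(Y,i,d)$ generates the $T$-algebra $(X, h\circ\alpha_X)$ by hypothesis. Hence \Cref{generatorbialgebrahom} supplies a $\lambda^T$-bialgebra homomorphism $(h\circ\alpha_X)\circ Ti\colon \expa_T(\mathcal{X}) \to (X, h\circ\alpha_X, k)$, which is surjective since it is split by $d$, as $(h\circ\alpha_X)\circ Ti\circ d = \id_X$. Preservation of semantics then gives $\obs_{\expa_T(\mathcal{X})} = \obs_{(X,k)}\circ\big((h\circ\alpha_X)\circ Ti\big)$, and surjectivity of the split yields $\img(\obs_{\expa_T(\mathcal{X})}) = \img(\obs_{(X,k)})$.

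The $\lambda^S$-side is the crux. I will show that $d_S := \alpha_Y\circ d\colon X\to SY$ turns $(Y, i, d_S)$ into a generator for the $S$-algebra $(X,h)$. The key is the naturality square of $\alpha\colon T\Rightarrow S$ at $i$, that is $Si\circ\alpha_Y = \alpha_X\circ Ti$, from which
\[
(h\circ Si)\circ d_S = h\circ Si\circ\alpha_Y\circ d = h\circ\alpha_X\circ Ti\circ d = \id_X,
\]
the last equality being the generator property of $(Y,i,d)$ for $(X, h\circ\alpha_X)$. Since $F\alpha_Y\circ Fd = F d_S$, we have $\ext(\mathcal{X}) = (Y, F d_S\circ k\circ i)$, so \Cref{forgenerator-isharp-is-bialgebra-hom} produces a $\lambda^S$-bialgebra homomorphism $h\circ Si\colon \expa_S(\ext(\mathcal{X}))\to (X,h,k)$, again surjective as it is split by $d_S$. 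Consequently $\img(\obs_{\expa_S(\ext(\mathcal{X}))}) = \img(\obs_{(X,k)})$, and because the functor $\alpha$ of \Cref{inducedbialgebra} leaves the $F$-coalgebra component untouched, $\obs_{\alpha(\expa_S(\ext(\mathcal{X})))} = \obs_{\expa_S(\ext(\mathcal{X}))}$, so this image equals $\img(\obs_{(X,k)})$ as well.

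Combining the two computations gives $\img(\obs_{\expa_T(\mathcal{X})}) = \img(\obs_{(X,k)}) = \img(\obs_{\alpha(\expa_S(\ext(\mathcal{X})))})$, whence $\mathcal{X}$ is $\alpha$-closed by \Cref{alphaclosednessequivalentform}. The only genuinely delicate step I anticipate is the identification of $d_S = \alpha_Y\circ d$ as a generator for the $S$-algebra $(X,h)$ and the recognition that $\expa_S(\ext(\mathcal{X}))$ is precisely the free bialgebra over this generator; once the naturality computation above is in place, the remainder is a routine bookkeeping of the preservation of semantics under surjective bialgebra homomorphisms.
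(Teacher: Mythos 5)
Your proposal is correct and follows essentially the same route as the paper's proof: both establish the surjective $\lambda^T$-bialgebra homomorphism $(h\circ\alpha_X)\circ Ti\colon \expa_T(\mathcal{X})\to\alpha(\mathbb{X})$ via \Cref{forgenerator-isharp-is-bialgebra-hom}, both use naturality of $\alpha$ to show that $(Y,i,\alpha_Y\circ d)$ generates the $S$-algebra $(X,h)$ and hence obtain the surjection $h\circ Si\colon \expa_S(\ext(\mathcal{X}))\to\mathbb{X}$, and both conclude that the two images coincide with $\img(\obs_{\alpha(\mathbb{X})})=\img(\obs_{(X,k)})$. The only cosmetic difference is that you invoke \Cref{alphaclosednessequivalentform} at the end where the paper argues directly with uniqueness of epi-mono factorisations; these are interchangeable.
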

\begin{proof}
	We write $\mathbb{X} := ( X, h, k )$, $\mathbb{G} := ( Y, i, d )$, and $\gen(\alpha(\mathbb{X}), \mathbb{G}) := ( Y,Fd \circ k \circ i)$.  The definitions imply
	\begin{align*}
		\expa_T(\gen(\alpha(\mathbb{X}), \mathbb{G})) &= ( TY, \mu^T_Y, (Fd \circ k \circ i)^{\sharp} ) \\
		\alpha \circ \expa_S \circ \ext(\gen(\alpha(\mathbb{X}), \mathbb{G})) &= ( SY, \mu^S_Y \circ \alpha_{SY}, (F(\alpha_Y \circ d) \circ k \circ i)^{\sharp} ).
	\end{align*}
	Since $\mathbb{G}$ is a generator for $( X, h \circ \alpha_X )$, \Cref{forgenerator-isharp-is-bialgebra-hom} implies that $(h \circ \alpha_X) \circ Ti: \expa_T(\gen(\alpha(\mathbb{X}), \mathbb{G})) \rightarrow \alpha(\mathbb{X})$ is a $\lambda^T$-bialgebra homomorphism. By the definition of a generator, $(h \circ \alpha_X) \circ Ti$ has a right-inverse, which implies its surjectivity.
	Naturality of $\alpha$ shows that $\overline{G} = ( Y, i, \alpha_Y \circ d )$ is a generator for $( X, h )$. Thus \Cref{forgenerator-isharp-is-bialgebra-hom} implies that $h \circ Si: \expa_S(\gen(\mathbb{X}, \overline{\mathbb{G}})) \rightarrow \mathbb{X}$ is a $\lambda^S$-bialgebra homomorphism. By the definition of a generator, $h \circ Si$ has a right-inverse, which implies its surjectivity.
 Applying $\alpha$ shows that $h \circ Si: \alpha \circ \expa_S \circ \ext(\gen(\alpha(\mathbb{X}), \mathbb{G})) \rightarrow \alpha(\mathbb{X})$ is a surjective $\lambda^T$-bialgebra homomorphism.
	The statement follows from the uniqueness of epi-mono factorisations:
	\[
	\begin{tikzcd}[column sep = 1.5em]
		\expa_T(\gen(\alpha(\mathbb{X}), \mathbb{G})) \arrow[twoheadrightarrow]{rrr}{\obs} \arrow{d}[left]{\alpha_Y} \arrow[twoheadrightarrow]{dr}{(h \circ \alpha_X) \circ Ti} & & & \img(\obs_{\expa_T(\gen(\alpha(\mathbb{X}), \mathbb{G}))})  \arrow[dashed]{dl}{\simeq} \\
		\alpha \circ \expa_S \circ \ext(\gen(\alpha(\mathbb{X}), \mathbb{G})) \arrow[twoheadrightarrow]{d}[left]{\obs}\arrow[twoheadrightarrow]{r}[below]{h \circ Si} & \alpha(\mathbb{X}) \arrow[twoheadrightarrow]{r}{\obs} &\img(\obs_{\alpha(\mathbb{X})}) \arrow[dashed]{dll}{\simeq} \arrow[hookrightarrow]{dr}{} & \\
		\img(\obs_{\alpha \circ \expa_S \circ \ext(\gen(\alpha(\mathbb{X}), \mathbb{G}))}) \arrow[hookrightarrow]{rrr}{} & & & \Omega \arrow[hookleftarrow]{uu}{}
	\end{tikzcd}.
	\]
\end{proof}

We are now able to state our main result.

In \cite[Theorem 4.8]{MyersAMU15} it is shown that i) the canonical RFSA for $L$ accepts the least number of languages among all NFAs accepting $L$; and ii) among NFAs with the same number of accepted languages, the canonical RFSA is state-minimal. Note how the two bullet points of \Cref{minimalitytheorem} below resemble the ones in \cite[Theorem 4.8]{MyersAMU15}: if $S = T = \mathcal{P}$ and $\alpha$ is trivial, the theorem implies that i) the set of languages accepted by the canonical RFSA for $L$ is included in any other set of languages accepted by NFAs for $L$; and ii) among NFAs for $L$ that accept the same set of languages (i.e. $\overline{\Der(L)}^{\CSL}$), the canonical RFSA is size-minimal (cf. \Cref{canonicalrfsaminimal}).

\begin{theorem}[Minimal Succinct Automata]
\label{minimalitytheorem}
	Given a language $L \in \Omega$ such that there exists a minimal pointed $\lambda^S$-bialgebra $\mathbb{M}$ accepting $L$ and the underlying algebra of $\alpha(\mathbb{M})$ admits a size-minimal generator, there exists a pointed $\alpha$-closed $T$-succinct automaton $\mathcal{X}$ accepting $L$ such that: \begin{itemize}
		\item for any pointed $\alpha$-closed $T$-succinct automaton $\mathcal{Y}$ accepting $L$ we have that $\img(\obs_{\expa_T(\mathcal{X})}) \subseteq \img(\obs_{\expa_T(\mathcal{Y})})$;
		\item if $\img(\obs_{\expa_T(\mathcal{X})}) = \img(\obs_{\expa_T(\mathcal{Y})})$, then $\vert X \vert \leq \vert Y \vert$, where $X$ and $Y$ are the carriers of $\mathcal{X}$ and $\mathcal{Y}$, respectively.
	\end{itemize} 
 \end{theorem}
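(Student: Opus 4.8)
The plan is to construct the witness automaton $\mathcal{X}$ as the succinct automaton induced by a size-minimal generator of $\alpha(\mathbb{M})$, and then verify the two bullet points via the uniqueness of epi-mono factorisations. Concretely, let $\mathbb{M} = (M, h, k)$ be the minimal pointed $\lambda^S$-bialgebra accepting $L$ with initial state $m \in M$, and let $(Y, i, d)$ be a size-minimal generator for the $T$-algebra $(M, h \circ \alpha_M)$, which exists by hypothesis. I would set $\mathcal{X} := (Y, Fd \circ k \circ i)$, pointed at $d(m) \in TY$. By \Cref{generatorclosed} this automaton is $\alpha$-closed, and by \Cref{forgenerator-isharp-is-bialgebra-hom} together with the final-bialgebra semantics it accepts $L$, since $(h \circ \alpha_M) \circ Ti$ is a surjective $\lambda^T$-bialgebra homomorphism onto $\alpha(\mathbb{M})$, whose semantics is $L$.

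For the first bullet point, I would argue that $\img(\obs_{\expa_T(\mathcal{X})})$ is, up to isomorphism, the image of $\alpha(\mathbb{M})$ in $\Omega$. Indeed, $\alpha$-closedness (via \Cref{alphaclosednessequivalentform}) gives $\img(\obs_{\expa_T(\mathcal{X})}) \cong \img(\obs_{\alpha(\expa_S(\ext(\mathcal{X})))})$, and the surjection $h \circ Si$ onto $\alpha(\mathbb{M})$ identifies the latter image with $\img(\obs_{\alpha(\mathbb{M})})$. The key point is then that $\alpha(\mathbb{M})$ is the \emph{minimal} $\lambda^S$-bialgebra: for any other $\alpha$-closed $T$-succinct automaton $\mathcal{Y}$ accepting $L$, lifting $\mathcal{Y}$ via $\ext$ and $\expa_S$ produces a $\lambda^S$-bialgebra accepting $L$, whose image factors through the minimal one by finality. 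I would make this precise by observing that the final-bialgebra map from $\expa_S(\ext(\mathcal{Y}))$ has image a sub-bialgebra of $\Omega$ containing $\img(\obs_{\alpha(\mathbb{M})})$ (because $\mathbb{M}$ is minimal, its image is the smallest sub-bialgebra semantically generated by $L$), and then transport this inclusion back along $\alpha$-closedness to conclude $\img(\obs_{\expa_T(\mathcal{X})}) \subseteq \img(\obs_{\expa_T(\mathcal{Y})})$.

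For the second bullet point, under the assumption $\img(\obs_{\expa_T(\mathcal{X})}) = \img(\obs_{\expa_T(\mathcal{Y})})$, I would show $\vert Y \vert \leq \vert Y' \vert$ where $Y, Y'$ are the carriers. The idea is that equality of images forces the two $\lambda^S$-images to coincide, hence both $\mathcal{X}$ and $\mathcal{Y}$ determinise (after $\ext$) onto the \emph{same} minimal bialgebra $\alpha(\mathbb{M})$, up to isomorphism. Consequently $Y'$ underlies a generator of $\alpha(\mathbb{M})$: the surjection $(h' \circ \alpha) \circ Ti'$ exhibits $Y'$ as generating the (iso-image of the) algebra of $\alpha(\mathbb{M})$. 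Since $(Y, i, d)$ was chosen size-minimal among generators of that algebra, minimality yields $\vert Y \vert \leq \vert Y' \vert$. The main obstacle I anticipate is the second part: I must carefully justify that an arbitrary $\alpha$-closed $\mathcal{Y}$ with the same accepted-language image genuinely yields a \emph{generator} of $\alpha(\mathbb{M})$ itself, rather than merely of some larger or isomorphic-but-not-identical algebra. This requires tracking the semantic map $\obs_{\alpha(\expa_S(\ext(\mathcal{Y})))}$, showing its image is exactly $\img(\obs_{\alpha(\mathbb{M})})$, and then using that the semantic quotient of $\expa_S(\ext(\mathcal{Y}))$ is precisely $\alpha(\mathbb{M})$ by minimality and observability. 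Diagram-chasing through the $\ext$, $\expa$, and $\alpha$ functors while keeping the epi-mono factorisations aligned is where the bookkeeping becomes delicate, and establishing that the generator structure descends correctly along these surjections is the crux of the argument.
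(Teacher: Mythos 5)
Your proposal is correct and follows essentially the same route as the paper: the witness $\mathcal{X}$ is built from the size-minimal generator of $\alpha(\mathbb{M})$ via \Cref{forgenerator-isharp-is-bialgebra-hom} and \Cref{generatorclosed}, the first bullet is obtained by embedding $\alpha(\mathbb{M})$ into $\img(\obs_{\alpha(\expa_S(\ext(\mathcal{Y})))})$ by minimality and transporting along $\alpha$-closedness, and the second bullet uses that equality of images turns $\expa_T(\mathcal{Y}) \to \alpha(\mathbb{M})$ into a surjection, so $Y$ carries a generator (splitting in $\Set$) and size-minimality of the chosen generator concludes. The "delicate bookkeeping" you worry about at the end is resolved exactly as you suspect — up to isomorphism of the target algebra, which is harmless since size-minimality of generators is isomorphism-invariant.
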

 \begin{proof}
We use a similar notation as in the proof of \Cref{generatorclosed}.
	Let $\mathbb{G} = ( X, i, d )$ be the size-minimal generator for the underlying algebra of $\alpha(\mathbb{M})$, which we assume to be $x$-pointed.
	We define a $d(x)$-pointed $T$-succinct automaton $\mathcal{X} := \gen(\alpha(\mathbb{M}), \mathbb{G})$. By \Cref{generatorbialgebrahom} there exists a $\lambda^T$-bialgebra homomorphism $i^\sharp \colon \expa_T(\mathcal{X}) \to \alpha(\mathbb{M})$. By the definition of a generator, $i^{\sharp}$ has a right-inverse, which implies its surjectivity. From the uniqueness of morphisms into a final coalgebra we can deduce that $\mathcal{X}$ accepts the language accepted by $\alpha(\mathbb{M})$.
	Because $\alpha$ only modifies the algebraic part of a bialgebra and the final bialgebra homomorphism is induced by the underlying final coalgebra homomorphism, the language accepted by $\alpha(\mathbb{M})$ is the language $L$ accepted by $\mathbb{M}$.
	From \Cref{generatorclosed} it follows that $\mathcal{X}$ is $\alpha$-closed.

	Consider any pointed $\alpha$-closed $T$-succinct automaton $\mathcal{Y}$ accepting $L$.
	By minimality of $\mathbb{M}$ there is an injective $\lambda^S$-bialgebra homomorphism $j \colon \mathbb{M} \hookrightarrow \img(\obs_{\expa_S(\ext(\mathcal{Y}))})$, which is also an injective $\lambda^T$-bialgebra homomorphism $j \colon \alpha(\mathbb{M}) \hookrightarrow \img(\obs_{\alpha(\expa_S(\ext(\mathcal{Y}))}))$, because the functor $\alpha$ is the identity on morphisms, and only modifies the algebraic part of a bialgebra.
	Since $\mathcal{Y}$ is $\alpha$-closed, we have an isomorphism $\img(\obs_{\alpha(\expa_S(\ext(\mathcal{Y}))})) \cong \img(\obs_{\expa_T(\mathcal{Y})})$. By composing the previous functions, we thus obtain an injective $\lambda^T$-bialgebra homomorphism $k: \alpha(\mathbb{M}) \hookrightarrow \img(\obs_{\expa_T(\mathcal{Y})})$. Since any final bialgebra homomorphism is induced by the underlying final $F$-coalgebra homomorphism, we have $\obs_{\alpha(\mathbb{M})} = \obs_{\mathbb{M}}$. By assumption $\mathbb{M}$ is minimal, that is, $\obs_{\mathbb{M}}$ is injective. By the uniqueness of morphisms into a final object the outer square of the diagram below commutes: 	\[
		\begin{tikzcd}
			\expa_T(\mathcal{X}) \arrow[twoheadrightarrow]{r}{i^{\sharp}} \arrow[twoheadrightarrow]{d}[left]{\obs_{\expa_T(\mathcal{X})}} & \alpha(\mathbb{M})  \arrow[dashed]{dl}{\cong} \\
			\im(\obs_{\expa_T(\mathcal{X})}) \arrow[hookrightarrow]{r} & \Omega \arrow[hookleftarrow]{u}[right]{\obs_{\alpha(\mathbb{M})} = \obs_{\mathbb{M}}}
		\end{tikzcd}
	\]
	By the uniqueness of epi-mono factorisations, there thus exists a diagonal isomorphism $\alpha(\mathbb{M}) \cong \img(\obs_{\expa_T(\mathcal{X})})$. By composing the diagonal isomorphism with $k$, we obtain a $\lambda^T$-bialgebra homomorphism $\img(\obs_{\expa_T(\mathcal{X})}) \to \img(\obs_{\expa_T(\mathcal{Y})})$. The latter commutes with observability maps and thus is an inclusion, so $\img(\obs_{\expa_T(\mathcal{X})}) \subseteq \img(\obs_{\expa_T(\mathcal{Y})})$.

	Suppose $\img(\obs_{\expa_T(\mathcal{X})}) = \img(\obs_{\expa_T(\mathcal{Y})})$, which implies that $j$ is an isomorphism.
	Then there exists a surjective $\lambda^T$-bialgebra homomorphism $\expa_T(\mathcal{Y}) \to \alpha(\mathbb{M})$, which means that $Y$ forms the carrier of a generator for the underlying algebra of $\alpha(\mathbb{M})$.
	By the size-minimality of $\mathbb{G}$ we thus obtain $\vert X \vert \leq \vert Y \vert$.
\end{proof}

For a $T$-succinct automaton $\mathcal{X}$ let us write $\obs^{\dag}_{\mathcal{X}} := \obs_{\expa_T(\mathcal{X})} \circ \eta^T_X: X \rightarrow \Omega$ for a generalisation of the semantics of non-deterministic automata. \Cref{imgobssuccinctsem} provides an equivalent characterisation of $\alpha$-closedness in terms of $\obs^{\dag}$ that will be particularly useful in \Cref{minimalityimplications}. To prove \Cref{imgobssuccinctsem}, we need the following technical results.

\begin{lemma}
\label[lemma]{alphapreservesfinal}
	Let $\alpha: \lambda^S \rightarrow \lambda^T$ be a distributive law homomorphism. If $( \Omega, h, k )$ is the final $\lambda^S$-bialgebra, then $( \Omega, h \circ \alpha_{\Omega}, k )$ is the final $\lambda^T$-bialgebra.
\end{lemma}
\begin{proof}
	It is well-known that if $( \Omega, h, k )$ is the final $\lambda^S$-bialgebra, then $( \Omega, k )$ is the final $F$-coalgebra and $h: S\Omega \rightarrow \Omega$ is the unique homomorphism satisfying $k \circ h = Fh \circ \lambda^S_{\Omega} \circ Sk$. Similarly, it is well-known that $( \Omega, \overline{h}, k )$ is the final $\lambda^T$-bialgebra, where $\overline{h}: T\Omega \rightarrow \Omega$ is the unique homomorphism satisfying $k \circ \overline{h} = F\overline{h} \circ \lambda^T_{\Omega} \circ Tk$. The statement thus follows from uniqueness:
	\[
	\begin{tikzcd}
		T\Omega \arrow{d}[left]{Tk} \arrow{r}{\alpha_{\Omega}} & S\Omega \arrow{d}{Sk} \arrow{r}{h} & \Omega \arrow{dd}{k} \\
		TF\Omega \arrow{r}{\alpha_{F\Omega}} \arrow{d}[left]{\lambda^T_{\Omega}} & SF\Omega \arrow{d}{\lambda^S_{\Omega}} & \\
		FT\Omega \arrow{r}[below]{F\alpha_{\Omega}} & FS\Omega \arrow{r}[below]{Fh} & F\Omega
	\end{tikzcd}.
	\]
\end{proof} 

\begin{lemma}
\label{obsdagext}
Let $\mathcal{X}$ be a $T$-succinct automaton, then $\obs^{\dag}_{\mathcal{X}} = \obs^{\dag}_{\ext(\mathcal{X})}$.
\end{lemma}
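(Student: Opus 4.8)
The plan is to exploit that the two semantics of a $T$-succinct automaton $\mathcal{X}$, one obtained through $\expa_T$ and one through $\alpha \circ \expa_S \circ \ext$, are intertwined by the natural transformation $\alpha$ of \Cref{alphaunderlies}, together with the fact (stated in the lemma following \Cref{bialgebradef}) that the observability map of a $\lambda$-bialgebra coincides with that of its underlying $F$-coalgebra.

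First I would recall from \Cref{alphaunderlies} that $\alpha_X \colon TX \to SX$ is the component at $\mathcal{X}$ of the natural transformation $\alpha \colon \expa_T \Rightarrow \alpha \circ \expa_S \circ \ext$, hence a $\lambda^T$-bialgebra homomorphism $\alpha_X \colon \expa_T(\mathcal{X}) \to \alpha(\expa_S(\ext(\mathcal{X})))$. Since every $\lambda^T$-bialgebra homomorphism commutes with the observability map into the final $\lambda^T$-bialgebra, this gives $\obs_{\expa_T(\mathcal{X})} = \obs_{\alpha(\expa_S(\ext(\mathcal{X})))} \circ \alpha_X$.

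Next I would observe that the functor $\alpha \colon \Bialg(\lambda^S) \to \Bialg(\lambda^T)$ of \Cref{inducedbialgebra} leaves the underlying object and $F$-coalgebra of a bialgebra unchanged, so $\alpha(\expa_S(\ext(\mathcal{X})))$ and $\expa_S(\ext(\mathcal{X}))$ share the same underlying $F$-coalgebra. Because observability is insensitive to the algebraic part of a bialgebra, both carry the same observability map into $\Omega$, that is $\obs_{\alpha(\expa_S(\ext(\mathcal{X})))} = \obs_{\expa_S(\ext(\mathcal{X}))}$.

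Finally I would precompose the resulting identity $\obs_{\expa_T(\mathcal{X})} = \obs_{\expa_S(\ext(\mathcal{X}))} \circ \alpha_X$ with $\eta^T_X$ and invoke the monad-unit compatibility $\alpha_X \circ \eta^T_X = \eta^S_X$, which is exactly the triangle in \Cref{distributivelawhomdef}, to conclude
\[
\obs^{\dag}_{\mathcal{X}} = \obs_{\expa_T(\mathcal{X})} \circ \eta^T_X = \obs_{\expa_S(\ext(\mathcal{X}))} \circ \alpha_X \circ \eta^T_X = \obs_{\expa_S(\ext(\mathcal{X}))} \circ \eta^S_X = \obs^{\dag}_{\ext(\mathcal{X})}.
\]
I expect no serious obstacle: the argument is a short finality/diagram chase, and the only points requiring care are tracking the overloaded symbol $\alpha$ (the natural transformation $T \Rightarrow S$ versus the induced functor on bialgebras) and remembering that observability depends only on the coalgebraic part, which is precisely what makes the second step go through.
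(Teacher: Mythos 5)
Your proof is correct and follows essentially the same route as the paper's: both use \Cref{alphaunderlies} to obtain the bialgebra homomorphism $\alpha_X$, invoke finality to get $\obs_{\expa_T(\mathcal{X})} = \obs_{\alpha(\expa_S(\ext(\mathcal{X})))} \circ \alpha_X$, note that observability depends only on the coalgebraic part, and conclude via the unit law $\alpha_X \circ \eta^T_X = \eta^S_X$.
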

\begin{proof}
	Since by \Cref{alphaunderlies} the morphism $\alpha_X: \expa_T(\mathcal{X}) \rightarrow \alpha(\expa_S(\ext(\mathcal{X})))$ is a $\lambda^T$-bialgebra homomorphism, we have by uniqueness $\obs_{\alpha(\expa_S(\ext(\mathcal{X})))} \circ \alpha_X = \obs_{\expa_T(\mathcal{X})}$. Since any final bialgebra homomorphism is induced by the underlying final $F$-coalgebra homomorphism it holds $\obs_{\alpha(\expa_S(\ext(\mathcal{X})))} = \obs_{\expa(\ext_S(\mathcal{X}))}$, which thus implies 
	\begin{equation}
	\label{obsdagproofeq}
		\obs_{\expa_S(\ext(\mathcal{X}))} \circ \alpha_X = \obs_{\expa_T(\mathcal{X})}.
	\end{equation}
	 The statement follows from
	 \begingroup
	 \allowdisplaybreaks
	\begin{align*}
		\obs^{\dag}_{\mathcal{X}} &= \obs_{\expa_T(\mathcal{X})} \circ \eta^T_X && \textnormal{(Definition of } \obs^{\dag}_{\mathcal{X}}) \\
		&= \obs_{\expa_S(\ext(\mathcal{X}))} \circ \alpha_X \circ \eta^T_X && \eqref{obsdagproofeq} \\
		&= \obs_{\expa_S(\ext(\mathcal{X}))} \circ \eta^S_X && (\alpha \textnormal{ is distributive law hom.)}\\
		&= \obs^{\dag}_{\ext(\mathcal{X})} && \textnormal{(Definition of } \obs^{\dag}_{\ext(\mathcal{X})})
	\end{align*}
	\endgroup
\end{proof}

\begingroup
\allowdisplaybreaks
\begin{lemma}
\label{imgobssuccinctsem}
	Let $\alpha: \lambda^{S} \rightarrow \lambda^T$ be a distributive law homomorphism. For any $T$-succinct automaton $\mathcal{X}$ it holds that $\img(\obs_{\expa_T(\mathcal{X})}) = \img(h \circ \alpha_{\Omega} \circ T(\obs^{\dag}_{\mathcal{X}}))$ and 
		$\img(\obs_{\alpha(\expa_S(\ext(\mathcal{X})))}) = \img(h \circ S(\obs^{\dag}_{\mathcal{X}}))$, where $( \Omega, h, k )$ is the final $\lambda^S$-bialgebra.
	\end{lemma}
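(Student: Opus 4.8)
The plan is to prove something slightly stronger than stated: rather than merely matching images, I would show that in each equality the two morphisms being compared coincide \emph{on the nose} as maps $TX \to \Omega$ (resp.\ $SX \to \Omega$), whence equality of images is immediate. The whole argument rests on recognising the relevant $\obs$ map as an \emph{algebra} homomorphism and then exploiting one of the monad unit laws, $\mu \circ T\eta = \id$, after precomposing with the unit hidden inside $\obs^{\dag}$.

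For the first equality, I would begin by invoking \Cref{alphapreservesfinal}: since $(\Omega, h, k)$ is the final $\lambda^S$-bialgebra, the final $\lambda^T$-bialgebra is $(\Omega, h \circ \alpha_{\Omega}, k)$. Consequently $\obs_{\expa_T(\mathcal{X})}$, being the unique $\lambda^T$-bialgebra homomorphism from $\expa_T(\mathcal{X}) = (TX, \mu^T_X, \ldots)$ into it, is in particular a $T$-algebra homomorphism $(TX, \mu^T_X) \to (\Omega, h \circ \alpha_{\Omega})$, that is, $\obs_{\expa_T(\mathcal{X})} \circ \mu^T_X = h \circ \alpha_{\Omega} \circ T(\obs_{\expa_T(\mathcal{X})})$. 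I would then unfold $\obs^{\dag}_{\mathcal{X}} = \obs_{\expa_T(\mathcal{X})} \circ \eta^T_X$, apply functoriality of $T$ to rewrite $T(\obs^{\dag}_{\mathcal{X}}) = T(\obs_{\expa_T(\mathcal{X})}) \circ T\eta^T_X$, substitute the algebra-homomorphism identity, and finally use $\mu^T_X \circ T\eta^T_X = \id_{TX}$ to collapse the expression to $\obs_{\expa_T(\mathcal{X})}$ itself.

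For the second equality, I would first note that because $\alpha$ is the identity on morphisms (\Cref{inducedbialgebra}) and any final bialgebra homomorphism is induced by the underlying final $F$-coalgebra homomorphism, we have $\obs_{\alpha(\expa_S(\ext(\mathcal{X})))} = \obs_{\expa_S(\ext(\mathcal{X}))}$ as functions $SX \to \Omega$ — exactly the identification already used in the proof of \Cref{obsdagext}. Viewing this common map as the final $\lambda^S$-bialgebra homomorphism into $(\Omega, h, k)$, it is an $S$-algebra homomorphism $(SX, \mu^S_X) \to (\Omega, h)$. I would then combine \Cref{obsdagext}, which gives $\obs^{\dag}_{\mathcal{X}} = \obs^{\dag}_{\ext(\mathcal{X})} = \obs_{\expa_S(\ext(\mathcal{X}))} \circ \eta^S_X$, with the same three-step manipulation — functoriality of $S$, the algebra-homomorphism identity $h \circ S(\obs_{\expa_S(\ext(\mathcal{X}))}) = \obs_{\expa_S(\ext(\mathcal{X}))} \circ \mu^S_X$, and the unit law $\mu^S_X \circ S\eta^S_X = \id_{SX}$ — to obtain $h \circ S(\obs^{\dag}_{\mathcal{X}}) = \obs_{\expa_S(\ext(\mathcal{X}))} = \obs_{\alpha(\expa_S(\ext(\mathcal{X})))}$.

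I do not anticipate a serious obstacle in the computations themselves; they are a direct application of the algebra-homomorphism property and the monad laws. The only genuinely delicate bookkeeping is keeping straight \emph{which} final bialgebra is in play at each step — the final $\lambda^T$-bialgebra carries the algebra structure $h \circ \alpha_{\Omega}$, whereas the final $\lambda^S$-bialgebra carries $h$ — and making sure the two identifications of $\obs$ maps (one via \Cref{alphapreservesfinal}, one via the underlying $F$-coalgebra semantics) are each properly justified, so that $h$ and $h \circ \alpha_{\Omega}$ appear exactly where the statement demands.
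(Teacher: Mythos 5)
Your proposal is correct and follows essentially the same route as the paper's proof: both invoke \Cref{alphapreservesfinal} to identify the final $\lambda^T$-bialgebra as $(\Omega, h \circ \alpha_{\Omega}, k)$, exploit the fact that $\obs$ is an algebra homomorphism together with the unit law $\mu \circ T\eta = \id$, and reduce the second equality to the first via the identification $\obs_{\alpha(\expa_S(\ext(\mathcal{X})))} = \obs_{\expa_S(\ext(\mathcal{X}))}$ and \Cref{obsdagext}. The only cosmetic difference is that you run the computation from $h \circ \alpha_\Omega \circ T(\obs^{\dag}_{\mathcal{X}})$ back to $\obs_{\expa_T(\mathcal{X})}$ rather than the other way around, and you make explicit that the morphisms (not merely their images) coincide — which the paper's computation also establishes.
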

\begin{proof}
	By \Cref{alphapreservesfinal} $( \Omega, h \circ \alpha_{\Omega}, k )$ is the final $\lambda^T$-bialgebra. The first statement follows from
	\begin{align*}
		\obs_{\expa_T(\mathcal{X})} 
		&= \obs_{\expa_T(\mathcal{X})} \circ \mu^T_X  \circ T(\eta^T_X) && \textnormal{(Definition of } T) \\
		&= h \circ \alpha_{\Omega} \circ T(\obs_{\expa_T(\mathcal{X})}) \circ T(\eta^T_X) && (\obs_{\expa_T(\mathcal{X})} \textnormal{ is } T\textnormal{-algebra hom}) \\
		&= h \circ \alpha_{\Omega} \circ T(\obs^{\dag}_{\mathcal{X}}) && \textnormal{(Definition of } \obs^{\dag}_{\mathcal{X}})
	\end{align*}
	Similarly one shows that $	\obs_{\expa_S(\ext(\mathcal{X}))} =  h \circ S(\obs^{\dag}_{\ext(\mathcal{X})})$.
 Since any final bialgebra homomorphism is induced by the underlying final $F$-coalgebra homomorphism, it thus follows
	\begin{align*}
		\img(\obs_{\alpha(\expa_S(\ext(\mathcal{X})))}) &= \img(\obs_{\expa_S(\ext(\mathcal{X}))}) && (\obs_{\alpha(\expa_S(\ext(\mathcal{X})))} = \obs_{\expa_S(\ext(\mathcal{X}))}) \\
		&= \img(h \circ S(\obs^{\dag}_{\ext(\mathcal{X})})) && (\obs_{\expa_S(\ext(\mathcal{X}))} =  h \circ S(\obs^{\dag}_{\ext(\mathcal{X})})) \\
		&= \img(h \circ S(\obs^{\dag}_{\mathcal{X}})) && \textnormal{(\Cref{obsdagext})}
	\end{align*}
\end{proof}
\endgroup
In light of \Cref{alphaclosednessequivalentform}, the above \Cref{imgobssuccinctsem} implies that a $T$-succinct automaton $\mathcal{X}$ is $\alpha$-closed iff there exists an isomorphism $\img(h \circ \alpha_{\Omega} \circ T(\obs^{\dag}_{\mathcal{X}})) \cong \img(h \circ S(\obs^{\dag}_{\mathcal{X}}))$.

\subsection{Applications to Canonical Automata}

\label{minimalityimplications}

In this section we instantiate \Cref{minimalitytheorem} to characterise a variety of canonical acceptors from the literature as size-minimal representatives among subclasses of $\alpha$-closed succinct automata, i.e. those automata whose images of the two semantics induced by $\alpha$ coincide.
We begin with the canonical RFSA and the minimal xor automaton, for which $\alpha$ is the identitity and $\alpha$-closedness therefore is trivial. 

In \cite{denis2001residual} the canonical RFSA for $L$ has been characterised as size-minimal among those NFAs accepting $L$ for which states accept a residual of $L$. More recently, it was shown that the class in fact can be extended to those NFAs accepting $L$ for which states accept a \emph{union} of residuals of $L$ \cite{MyersAMU15}. In \Cref{canonicalrfsaminimal} we recover the latter as a consequence of the second point in \Cref{minimalitytheorem}. To prove \Cref{canonicalrfsaminimal}, we need the following technical result.

\begin{lemma}
\label{obsgenerated}
	Let $\mathbb{X} = ( X, h, k )$ be an observable $\lambda^S$-bialgebra and $\mathbb{G}$ a generator for $( X, h \circ \alpha_X )$, then $\img(\obs_{\expa_T(\gen(\alpha(\mathbb{X}), \mathbb{G}))}) \simeq X$.
\end{lemma}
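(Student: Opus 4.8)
The goal is to show that for an observable $\lambda^S$-bialgebra $\mathbb{X} = (X,h,k)$ with a generator $\mathbb{G} = (Y,i,d)$ for the $T$-algebra $(X, h \circ \alpha_X)$, the image $\img(\obs_{\expa_T(\gen(\alpha(\mathbb{X}),\mathbb{G}))})$ is isomorphic to $X$ itself. The plan is to exploit the $\lambda^T$-bialgebra homomorphism produced by \Cref{generatorbialgebrahom} together with the uniqueness of epi-mono factorisations, much as in the proof of \Cref{generatorclosed}.

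First I would recall from \Cref{generatorbialgebrahom} (applied to the $\lambda^S$-bialgebra $\mathbb{X}$ and the generator $\mathbb{G}$ for $(X, h \circ \alpha_X)$) that $(h \circ \alpha_X) \circ Ti \colon \expa_T(\gen(\alpha(\mathbb{X}), \mathbb{G})) \rightarrow \alpha(\mathbb{X})$ is a $\lambda^T$-bialgebra homomorphism. By the defining property of a generator, $(h \circ \alpha_X) \circ Ti$ has a right-inverse $d$, hence is surjective. Next I would observe that, because any final bialgebra homomorphism is induced by the underlying final $F$-coalgebra homomorphism, the observability map factors through this homomorphism: $\obs_{\expa_T(\gen(\alpha(\mathbb{X}),\mathbb{G}))} = \obs_{\alpha(\mathbb{X})} \circ (h \circ \alpha_X) \circ Ti$, where $\obs_{\alpha(\mathbb{X})} = \obs_{\mathbb{X}}$ since $\alpha$ only alters the algebraic part of a bialgebra.

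The key point is then that $\mathbb{X}$ is \emph{observable}, so $\obs_{\mathbb{X}}$ is injective (a monomorphism). Composing the surjection $(h \circ \alpha_X) \circ Ti$ with the injection $\obs_{\mathbb{X}}$ gives precisely an epi-mono factorisation of $\obs_{\expa_T(\gen(\alpha(\mathbb{X}),\mathbb{G}))}$ through $X$ (the carrier of $\alpha(\mathbb{X})$). By the uniqueness of epi-mono factorisations, the image $\img(\obs_{\expa_T(\gen(\alpha(\mathbb{X}),\mathbb{G}))})$ is isomorphic to the intermediate object $X$, yielding the desired isomorphism. I would display this as a commuting diagram analogous to the ones in \Cref{generatorclosed}:
\[
\begin{tikzcd}[column sep = 2em]
	\expa_T(\gen(\alpha(\mathbb{X}), \mathbb{G})) \arrow[twoheadrightarrow]{rr}{\obs} \arrow[twoheadrightarrow]{dr}[below]{(h \circ \alpha_X) \circ Ti} & & \img(\obs_{\expa_T(\gen(\alpha(\mathbb{X}), \mathbb{G}))}) \arrow[dashed]{dl}{\cong} \\
	& \alpha(\mathbb{X}) \arrow[hookrightarrow]{ur}[below]{\obs_{\alpha(\mathbb{X})} = \obs_{\mathbb{X}}} &
\end{tikzcd}.
\]

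The main obstacle, and the step that requires the most care, is justifying that the factorisation through $\alpha(\mathbb{X})$ is genuinely an epi-mono factorisation in the relevant category of $\lambda^T$-bialgebras (or $F$-coalgebras), so that uniqueness applies. Specifically I must confirm that $\obs_{\mathbb{X}}$ being injective at the level of the underlying object lifts to a monomorphism compatibly with the bialgebra structure, and that surjectivity of $(h \circ \alpha_X) \circ Ti$ is an epimorphism in the same setting; both follow from the fact that the relevant forgetful functors create/preserve the factorisation system, but this is exactly the point where the earlier remarks about $\alpha$ acting as the identity on morphisms and final homomorphisms being induced coalgebraically must be invoked precisely. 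Everything else is a routine diagram chase.
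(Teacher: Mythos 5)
Your proposal is correct and follows essentially the same route as the paper's own proof: a surjective $\lambda^T$-bialgebra homomorphism onto $\alpha(\mathbb{X})$ obtained from the generator, the identification $\obs_{\alpha(\mathbb{X})} = \obs_{\mathbb{X}}$, injectivity of $\obs_{\mathbb{X}}$ from observability, and uniqueness of epi-mono factorisations. The only cosmetic difference is that you invoke \Cref{generatorbialgebrahom} where the paper cites \Cref{forgenerator-isharp-is-bialgebra-hom} directly, but these amount to the same application.
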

\begin{proof}
By \Cref{forgenerator-isharp-is-bialgebra-hom} there exists a surjective $\lambda^T$-bialgebra homomorphism \\$\expa_T(\gen(\alpha(\mathbb{X}), \mathbb{G})) \rightarrow \alpha(\mathbb{X})$. Since the final $\lambda^T$-bialgebra homomorphism is induced by the underlying final $F$-coalgebra homomorphism and  $\alpha(\mathbb{X}) = ( X, h \circ \alpha_X, k )$, it holds $\obs_{\alpha(\mathbb{X})} = \obs_{\mathbb{X}}$. The statement follows from the uniqueness of epi-mono factorisations and the definition of $\alpha(\mathbb{X})$:
	\[
	\begin{tikzcd}
		\expa_T(\gen(\alpha(\mathbb{X}), \mathbb{G})) \arrow[twoheadrightarrow]{r}{} \arrow[twoheadrightarrow]{d}{}  & \alpha(\mathbb{X})  \arrow[dashed]{dl}{\simeq} \\
		\img(\obs_{\expa_T(\gen(\alpha(\mathbb{X}), \mathbb{G}))}) \arrow[hookrightarrow]{r}{} & \Omega \arrow[hookleftarrow]{u}[right]{\obs_{\alpha(\mathbb{X})} = \obs_{\mathbb{X}}}
	\end{tikzcd}.
	\]
\end{proof}

We write $\overline{Y}^{\mathbb{A}}$ for the algebraic closure of a subset $Y \subseteq A$ of some $T$-algebra $\mathbb{A}$ (for details see \Cref{closure_sec}). 
For example, if $Y = \im(f)$ for some $f$ with codomain $\mathbb{A} = (A,h)$, the closure is given by the induced $T$-algebra structure on $\im(h \circ Tf)$. Recall that the set $2 = \lbrace 0, 1 \rbrace$ -- and consequently the set $2^{A^*}$ underlying the final coalgebra for the functor $FX = 2 \times X^A$ -- can be turned into a $\mathcal{P}$-algebra (\Cref{powersetalgebra}), an $\mathcal{H}$-algebra (\Cref{neighbourhoodalgebra}), an $\mathcal{A}$-algebra (\Cref{monotoneneighbourhoodoutputalgebra}), and an $\mathcal{R}$-algebra (\Cref{xoroutputalgebra}). We thus can take the closure of a subset $Y \subseteq 2^{A^*}$ with respect to a $T$-algebra structure $(2^{A^*}, h_T)$, for any $T \in \lbrace \mathcal{P}, \mathcal{H}, \mathcal{A}, \mathcal{R} \rbrace$. In such a situation, we keep $h_T$ implicit and abbreviate $\overline{Y}^{\EM} := \overline{Y}^{(2^{A^*}, h_T)}$. For example, $\overline{\Der(L)}^{\CSL}$ denotes the $\mathcal{P}$-closure of the subset $\Der(L) \subseteq 2^{A^*}$. For more cases recall \Cref{eilenbergexample}. 

\begin{corollary}
\label{imgcabanfa}
	Let $\alpha: \mathcal{P}X \rightarrow \mathcal{H}X$ satisfy $\alpha_X(\varphi)(\psi) = \vee_{x \in X} \varphi(x) \wedge \psi(x)$. For any unpointed non-deterministic automaton $\mathcal{X}$ it holds:
	\begin{itemize}
		\item $\img(\obs_{\expa_{\mathcal{P}}(\mathcal{X})}) = \overline{\im(\obs^{\dag}_{\mathcal{X}})}^{\CSL}$;
		\item $\img(\obs_{\alpha(\expa_{\mathcal{H}}(\ext(\mathcal{X})))}) = \overline{\im(\obs^{\dag}_{\mathcal{X}})}^{\CABA}$.
	\end{itemize}
\end{corollary}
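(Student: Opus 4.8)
The plan is to obtain both equalities as immediate consequences of \Cref{imgobssuccinctsem}, which already rewrites the two images in terms of the generalised succinct semantics $\obs^{\dag}_{\mathcal{X}}$. First I would instantiate that lemma at $S = \mathcal{H}$ and $T = \mathcal{P}$; this is legitimate since \Cref{alphapowersetneighbourhooddistrlaw} shows that $\alpha$ with $\alpha_X(\varphi)(\psi) = \vee_{x \in X} \varphi(x) \wedge \psi(x)$ is a distributive law homomorphism $\alpha \colon \lambda^{\mathcal{H}} \rightarrow \lambda^{\mathcal{P}}$. Writing $( \Omega, h, k )$ for the final $\lambda^{\mathcal{H}}$-bialgebra, with $\Omega = 2^{A^*}$ carrying the final coalgebra for $FX = 2 \times X^A$, \Cref{imgobssuccinctsem} then gives
\begin{align*}
\img(\obs_{\expa_{\mathcal{P}}(\mathcal{X})}) &= \img(h \circ \alpha_{\Omega} \circ \mathcal{P}(\obs^{\dag}_{\mathcal{X}})), \\
\img(\obs_{\alpha(\expa_{\mathcal{H}}(\ext(\mathcal{X})))}) &= \img(h \circ \mathcal{H}(\obs^{\dag}_{\mathcal{X}})).
\end{align*}
An unpointed non-deterministic automaton $\mathcal{X} = ( X, k )$ is precisely a $\mathcal{P}$-succinct automaton of type $F$, so $\obs^{\dag}_{\mathcal{X}} \colon X \to \Omega$ is defined and both right-hand sides are subsets of $\Omega = 2^{A^*}$.

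The second ingredient is the identification of the two algebra structures appearing on the right-hand sides with the structures used to define the closures. The morphism $h \colon \mathcal{H}\Omega \to \Omega$ is the $\mathcal{H}$-algebra part of the final $\lambda^{\mathcal{H}}$-bialgebra; since $\lambda^{\mathcal{H}}$ is the canonical law induced by $h^{\mathcal{H}} \colon \mathcal{H}2 \to 2$, this structure is computed pointwise over the output, and hence coincides with the pointwise CABA structure $h_{\mathcal{H}}$ on $2^{A^*}$ used in \Cref{eilenbergexample} and in the definition of $\overline{(-)}^{\CABA}$. Likewise, $h \circ \alpha_{\Omega} \colon \mathcal{P}\Omega \to \Omega$ is the $\mathcal{P}$-algebra induced from $(\Omega, h)$ by $\alpha$, which by the argument of \Cref{basisshiftecabapowerset} is exactly the complete join-semilattice underlying the CABA $(\Omega, h)$; using $h^{\mathcal{H}} \circ \alpha_2 = h^{\mathcal{P}}$ (cf. \Cref{distributivelawaxiomeasier}, invoked in \Cref{alphapowersetneighbourhooddistrlaw}) and pointwiseness, this equals the CSL structure $h_{\mathcal{P}}$ on $2^{A^*}$ defining $\overline{(-)}^{\CSL}$.

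Finally I would apply the definition of the algebraic closure: for $Y = \im(f)$ with $f$ into a $T$-algebra $(A, h_T)$ one has $\overline{Y}^{(A, h_T)} = \im(h_T \circ Tf)$. Taking $f = \obs^{\dag}_{\mathcal{X}}$ and $Y = \im(\obs^{\dag}_{\mathcal{X}})$, this yields $\overline{\im(\obs^{\dag}_{\mathcal{X}})}^{\CSL} = \im(h_{\mathcal{P}} \circ \mathcal{P}(\obs^{\dag}_{\mathcal{X}})) = \img(h \circ \alpha_{\Omega} \circ \mathcal{P}(\obs^{\dag}_{\mathcal{X}}))$ and $\overline{\im(\obs^{\dag}_{\mathcal{X}})}^{\CABA} = \im(h_{\mathcal{H}} \circ \mathcal{H}(\obs^{\dag}_{\mathcal{X}})) = \img(h \circ \mathcal{H}(\obs^{\dag}_{\mathcal{X}}))$, which combined with the two displayed equalities proves the claim. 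The only non-bookkeeping step is the identification in the middle paragraph — verifying that the final-bialgebra structure $h$ is the pointwise CABA on $2^{A^*}$ and that $h \circ \alpha_{\Omega}$ is its underlying CSL; everything else is a direct substitution into \Cref{imgobssuccinctsem} and into the definition of closure, so I expect that identification to be the main, though still routine, obstacle.
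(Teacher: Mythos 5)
Your proposal is correct and follows essentially the same route as the paper: both apply \Cref{imgobssuccinctsem} with the homomorphism of \Cref{alphapowersetneighbourhooddistrlaw}, identify the final $\lambda^{\mathcal{H}}$-bialgebra structure as $2^{\eta^{\mathcal{H}}_{A^*}}$ and use $2^{\eta^{\mathcal{H}}_{X}} \circ \alpha_{\mathcal{P}X} = \mu^{\mathcal{P}}_X$ (from the proof of \Cref{basisshiftecabapowerset}) to recognise $h \circ \alpha_{\Omega}$ as the CSL structure, and then read off the closures. The only cosmetic difference is that the paper finishes by an explicit elementwise computation (unions, resp.\ full DNF combinations, of the languages $\obs^{\dag}_{\mathcal{X}}(x)$), whereas you invoke the characterisation $\overline{\im(f)}^{(A,h_T)} = \im(h_T \circ Tf)$ directly, which the paper itself licenses in the remark preceding the corollary.
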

\begin{proof}
	The final $\lambda^{\mathcal{H}}$-bialgebra is given by $( 2^{A^*}, 2^{\eta^{\mathcal{H}}_{A^*}}, ( \varepsilon, \delta ) )$. In the proof of \Cref{basisshiftecabapowerset} it was shown that $2^{\eta^{\mathcal{H}}_X} \circ \alpha_{\mathcal{P}X} = \mu^{\mathcal{P}}_X$. Thus it follows
	\begin{align*}
		\img(\obs_{\expa_{\mathcal{P}}(\mathcal{X})}) &= \img(2^{\eta^{\mathcal{H}}_{A^*}} \circ \alpha_{2^{A^*}} \circ \mathcal{P}(\obs^{\dag}_{\mathcal{X}})) && \textnormal{(\Cref{imgobssuccinctsem})} \\
		&= \img(\mu^{\mathcal{P}}_{A^*} \circ \mathcal{P}(\obs^{\dag}_{\mathcal{X}})) && (2^{\eta^{\mathcal{H}}_X} \circ \alpha_{\mathcal{P}X} = \mu^{\mathcal{P}}_X) \\
		&= \lbrace \cup_{u \in U} \obs^{\dag}_{\mathcal{X}}(u) \mid U \subseteq X \rbrace && \textnormal{(Definitions of } \mathcal{P}(-),\ \mu^{\mathcal{P}}) \\
		&= \overline{\lbrace \obs^{\dag}_{\mathcal{X}}(x) \mid x \in X \rbrace}^{\CSL} && \textnormal{(Definition of } \overline{(-)}^{\CSL})
	\end{align*}
	Similarly one computes
	\begin{align*}
&\img(\obs_{\alpha(\expa_{\mathcal{H}}(\ext(\mathcal{X})))}) \\
=& \ \textnormal{(\Cref{imgobssuccinctsem})} \\
& \img(2^{\eta^{\mathcal{H}}_{A^*}} \circ \mathcal{H}(\obs^{\dag}_{\mathcal{X}})) \\
=& \ \textnormal{(Definitions of } 2^{\eta^{\mathcal{H}}_{A^*}}, \mathcal{H}(-)) \\
& \lbrace \cup_{\varphi \in \Phi} \cap_{x \in \varphi} \obs^{\dag}_{\mathcal{X}}(x) \cap \cap_{x \not \in \varphi} \obs^{\dag}_{\mathcal{X}}(x)^c  \mid \Phi \subseteq 2^X \rbrace \\
=&\ \textnormal{(Set equality)}  \\
& \lbrace \lbrace w \in A^* \mid \lbrace x \in X \mid \obs^{\dag}_{\mathcal{X}}(x)(w) = 1 \rbrace \in \Phi \rbrace \mid \Phi \subseteq 2^X \rbrace \\
	=&\ \textnormal{(Definition of } \overline{(-)}^{\CABA})	\\
		& \overline{\lbrace \obs^{\dag}_{\mathcal{X}}(x) \mid x \in X \rbrace}^{\CABA} 
	\end{align*}
\end{proof}

\begin{corollary}
\label{canonicalrfsaminimal}
	The canonical RFSA for $L$ is size-minimal among NFAs $\mathcal{Y}$ accepting $L$ with 
	$
	 \overline{\im(\obs^{\dag}_{\mathcal{Y}})}^{\CSL} \subseteq \overline{\Der(L)}^{\CSL} 
	$.
\end{corollary}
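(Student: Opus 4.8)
The plan is to obtain \Cref{canonicalrfsaminimal} as a direct instantiation of \Cref{minimalitytheorem} in the degenerate case $S = T = \mathcal{P}$ with $\alpha = \id_{\mathcal{P}}$ the identity distributive law homomorphism on $\lambda^{\mathcal{P}}$. When $\alpha$ is the identity, the functor $\ext$ is the identity and the two semantics of any $\mathcal{P}$-succinct automaton coincide, so every NFA is automatically $\alpha$-closed and $\alpha$-closedness drops out of the statement. First I would record that, since $L$ is regular, the minimal pointed $\lambda^{\mathcal{P}}$-bialgebra $\mathbb{M}$ accepting $L$ exists and is finite, with carrier the finite complete join-semilattice $\overline{\Der(L)}^{\CSL}$ (cf. \Cref{canonicalrfsaexample}). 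As $\alpha(\mathbb{M}) = \mathbb{M}$, \Cref{joinirreducstateminimal} supplies a size-minimal generator $\mathbb{G}$ given by the join-irreducibles of the underlying lattice of $\mathbb{M}$, and the resulting succinct automaton $\gen(\alpha(\mathbb{M}), \mathbb{G})$ is precisely the canonical RFSA $\mathcal{X}$ for $L$.

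Next I would translate the hypotheses of the corollary into the language of \Cref{minimalitytheorem}. By \Cref{imgcabanfa} (first bullet), for any NFA $\mathcal{Y}$ one has $\img(\obs_{\expa_{\mathcal{P}}(\mathcal{Y})}) = \overline{\im(\obs^{\dag}_{\mathcal{Y}})}^{\CSL}$; in particular, by \Cref{obsgenerated} applied to the observable bialgebra $\mathbb{M}$, the canonical RFSA satisfies $\img(\obs_{\expa_{\mathcal{P}}(\mathcal{X})}) \simeq \overline{\Der(L)}^{\CSL}$. Hence the side condition $\overline{\im(\obs^{\dag}_{\mathcal{Y}})}^{\CSL} \subseteq \overline{\Der(L)}^{\CSL}$ imposed on an NFA $\mathcal{Y}$ is exactly the inclusion $\img(\obs_{\expa_{\mathcal{P}}(\mathcal{Y})}) \subseteq \img(\obs_{\expa_{\mathcal{P}}(\mathcal{X})})$.

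Then I would invoke the two bullet points of \Cref{minimalitytheorem}. Its first bullet, applied to any $\alpha$-closed (hence arbitrary) NFA $\mathcal{Y}$ accepting $L$, yields the reverse inclusion $\img(\obs_{\expa_{\mathcal{P}}(\mathcal{X})}) \subseteq \img(\obs_{\expa_{\mathcal{P}}(\mathcal{Y})})$. Combining this with the inclusion coming from the side condition forces the equality $\img(\obs_{\expa_{\mathcal{P}}(\mathcal{X})}) = \img(\obs_{\expa_{\mathcal{P}}(\mathcal{Y})})$, whereupon the second bullet of \Cref{minimalitytheorem} gives $\vert X \vert \leq \vert Y \vert$, where $X$ and $Y$ are the state spaces of $\mathcal{X}$ and $\mathcal{Y}$. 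Since $\mathcal{X}$ itself belongs to the class (its states accept residuals of $L$, all lying in $\overline{\Der(L)}^{\CSL}$), this establishes that the canonical RFSA is size-minimal within the stated class.

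The genuine content having already been packaged into the earlier lemmas, the difficulty here is bookkeeping rather than new mathematics. The one point requiring care, which I expect to be the main obstacle, is verifying that the class quantified over in the corollary matches the one in \Cref{minimalitytheorem}: that ``NFA accepting $L$'' coincides with ``pointed $\mathcal{P}$-succinct automaton accepting $L$'', that $\alpha$-closedness is vacuous for $\alpha = \id$, and that the identifications of \Cref{obsgenerated} and \Cref{imgcabanfa} (stated for unpointed automata, but independent of the initial state) are applied to the correct automata so that $\img(\obs_{\expa_{\mathcal{P}}(\mathcal{X})})$ is indeed $\overline{\Der(L)}^{\CSL}$. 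Getting these identifications aligned is the only place where a misstep could occur.
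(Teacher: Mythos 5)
Your proposal is correct and follows essentially the same route as the paper's proof: instantiating \Cref{minimalitytheorem} with $\alpha = \id$ (making $\alpha$-closedness vacuous), identifying the canonical RFSA as $\gen(\mathbb{M}, \mathbb{G})$ for the join-irreducible generator of \Cref{joinirreducstateminimal}, and using \Cref{obsgenerated} and \Cref{imgcabanfa} to match the side condition with the hypotheses of the theorem. Your write-up in fact spells out more explicitly than the paper does how the two bullet points of \Cref{minimalitytheorem} combine (inclusion from the first bullet plus the assumed reverse inclusion forces equality, triggering the second bullet), which is exactly the intended argument.
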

\begin{proof}
	By \Cref{powersetalgebra} the morphism $h^{\mathcal{P}}: \mathcal{P}2 \rightarrow 2$ with $h^{\mathcal{P}}(\varphi) = \varphi(1)$ is a $\mathcal{P}$-algebra. As shown in \Cref{induceddistrlaw}, it can used to derive a canonical distributive law $\lambda^{\mathcal{P}}$. It is not hard to see that the minimal pointed $\lambda^{\mathcal{P}}$-bialgebra $\mathbb{M}$ accepting $L$ exists and that its underlying state space is given by the finite complete join-semi lattice $\overline{\Der(L)}^{\CSL}$. By \Cref{joinirreducstateminimal} the join-irreducibles for $\mathbb{M}$ constitute a size-minimal generator $\mathbb{G}$. By definition, the canonical RFSA for $L$ is given by $\mathcal{X} := \gen(\mathbb{M}, \mathbb{G})$. From \Cref{obsgenerated} it follows that $\img(\obs_{\expa_{\mathcal{P}}(\mathcal{X})}) \simeq \overline{\Der(L)}^{\CSL}$. As seen in e.g. \Cref{imgcabanfa}, one has $\img(\obs_{\expa_{\mathcal{P}}(\mathcal{Y})}) =\overline{\im(\obs^{\dagger}_{\mathcal{Y}})}^{\CSL}$ for any NFA $\mathcal{Y}$. By choosing $\alpha$ as the identity, which implies $\alpha$-closedness for any NFA, the statement thus follows from \Cref{minimalitytheorem}.
\end{proof}

The second condition in \Cref{minimalitytheorem} is always satisfied for a \emph{reachable} succinct automaton $\mathcal{Y}$. Since for $\mathbb{Z}_2$-weighted automata it is possible to find an equivalent reachable $\mathbb{Z}_2$-weighted automaton with less or equally many states (which for NFA is not necessarily the case), the minimal xor automaton is minimal among \emph{all} $\mathbb{Z}_2$-weighted automata, as was already known from for instance \cite{VuilleminG210}.

\begin{corollary}
\label{minimalxor} 
		The minimal xor automaton for $L$ is size-minimal among $\mathbb{Z}_2$-weighted automata accepting $L$.
\end{corollary}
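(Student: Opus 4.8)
The plan is to instantiate the general machinery of \Cref{minimalitytheorem} with the choices $S = T = \mathcal{R}$ and $\alpha = \id$, exactly as was done for the canonical RFSA in \Cref{canonicalrfsaminimal}, and then to upgrade the conditional size-comparison into an unconditional one using the special reachability properties of $\mathbb{Z}_2$-weighted automata. First I would recall from \Cref{xoroutputalgebra} that $h^{\mathcal{R}}: \mathcal{R}2 \rightarrow 2$ with $h^{\mathcal{R}}(\varphi) = \varphi(1)$ is an $\mathcal{R}$-algebra, which by \Cref{induceddistrlaw} induces the canonical distributive law $\lambda^{\mathcal{R}}$ between $F$ (with $FX = 2 \times X^A$) and $\mathcal{R}$. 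Since $\Set^{\mathcal{R}} \cong \mathbb{Z}_2\Vect$, the minimal pointed $\lambda^{\mathcal{R}}$-bialgebra $\mathbb{M}$ accepting $L$ exists and carries the $\mathbb{Z}_2$-vector space $\overline{\Der(L)}^{\mathbb{Z}_2\Vect}$ freely spanned by the residuals. As every vector space admits a basis, the underlying algebra of $\mathbb{M} = \alpha(\mathbb{M})$ (here $\alpha$ is trivial) admits a basis, which by \Cref{xorbasisstateminimal} is a size-minimal generator $\mathbb{G}$. The minimal xor automaton for $L$ is by definition $\mathcal{X} := \gen(\mathbb{M}, \mathbb{G})$.

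Next I would apply \Cref{minimalitytheorem} with $\alpha = \id$. Because $\alpha$ is the identity, every $\mathcal{R}$-succinct automaton (i.e.\ every $\mathbb{Z}_2$-weighted automaton) is vacuously $\alpha$-closed, so the two bullet points of the theorem apply to all $\mathbb{Z}_2$-weighted automata $\mathcal{Y}$ accepting $L$. The second bullet yields $\vert X \vert \leq \vert Y \vert$ whenever $\img(\obs_{\expa_{\mathcal{R}}(\mathcal{X})}) = \img(\obs_{\expa_{\mathcal{R}}(\mathcal{Y})})$, where by \Cref{imgobssuccinctsem} (or the computation analogous to \Cref{imgcabanfa}) these images are precisely the $\mathbb{Z}_2$-vector spaces $\overline{\im(\obs^{\dag}_{\mathcal{X}})}^{\mathbb{Z}_2\Vect}$ and $\overline{\im(\obs^{\dag}_{\mathcal{Y}})}^{\mathbb{Z}_2\Vect}$ generated by the languages the states accept. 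By \Cref{obsgenerated}, the space for $\mathcal{X}$ equals $\overline{\Der(L)}^{\mathbb{Z}_2\Vect}$. So far this gives minimality only \emph{among those $\mathcal{Y}$ whose accepted languages span exactly $\overline{\Der(L)}^{\mathbb{Z}_2\Vect}$}, which is the residual-style restriction analogous to \Cref{canonicalrfsaminimal}.

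The key step that removes the hypothesis is the reachability argument flagged in the paragraph preceding the statement: for $\mathbb{Z}_2$-weighted automata, unlike NFAs, one can always pass to a language-equivalent \emph{reachable} automaton with no more states (this is standard linear-algebraic reachability, since the reachable subspace is a subspace spanned by the vectors $\widehat{\delta}_w(x_0)$, and restricting to it cannot increase dimension). Given an arbitrary $\mathbb{Z}_2$-weighted automaton $\mathcal{Y}$ accepting $L$, I would first replace it by such a reachable $\mathcal{Y}'$ with $\vert Y' \vert \leq \vert Y \vert$. For a reachable automaton the states' accepted languages generate the entire reachable behaviour, so one checks that $\overline{\im(\obs^{\dag}_{\mathcal{Y}'})}^{\mathbb{Z}_2\Vect}$ surjects onto $\overline{\Der(L)}^{\mathbb{Z}_2\Vect} = \img(\obs_{\expa_{\mathcal{R}}(\mathcal{X})})$; more precisely, the final-coalgebra image of the expansion of a reachable automaton accepting $L$ is exactly the minimal behaviour space, so the equality of images needed by the second bullet of \Cref{minimalitytheorem} holds for $\mathcal{Y}'$. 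Applying the theorem then gives $\vert X \vert \leq \vert Y' \vert \leq \vert Y \vert$.

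I expect the main obstacle to be the reachability reduction and the verification that a reachable $\mathbb{Z}_2$-weighted automaton accepting $L$ realises precisely the image $\overline{\Der(L)}^{\mathbb{Z}_2\Vect}$ under $\obs_{\expa_{\mathcal{R}}(-)}$, so that the \emph{equality} of images (not merely inclusion) required by the second bullet is genuinely available. This is where the contrast with the NFA case lives: for $\mathcal{P}$ one cannot shrink to a reachable automaton without possibly increasing the number of states, which is exactly why \Cref{canonicalrfsaminimal} must retain its side condition whereas here it can be dropped. The remaining pieces — existence of $\mathbb{M}$, size-minimality of a basis, $\alpha$-closedness being trivial — are immediate from the cited results and require no new work.
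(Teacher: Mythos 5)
Your proposal is correct and follows essentially the same route as the paper's proof: instantiate \Cref{minimalitytheorem} with $S=T=\mathcal{R}$ and trivial $\alpha$ (so $\alpha$-closedness is automatic) to get size-minimality among $\mathbb{Z}_2$-weighted automata whose observed languages span exactly $\overline{\Der(L)}^{\mathbb{Z}_2\Vect}$, and then discharge that side condition by the linear-algebraic reachability reduction. The paper phrases the last step as choosing a basis for the image of the (epi-mono factorised) reachability map rather than for the reachable subspace itself, but this is the same argument and yields the same bound $\vert X\vert \leq \vert Y'\vert \leq \vert Y\vert$.
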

\begin{proof}
	Analogous to \Cref{canonicalrfsaminimal} one can show that the minimal xor automaton for $L \subseteq A^*$ is size-minimal among all $\mathbb{Z}_2$-weighted automata $\mathcal{Y}$ accepting $L$ such that
$ \overline{\im(\obs^{\dagger}_{\mathcal{Y}})}^{\mathbb{Z}_2\Vect} \subseteq \overline{\Der(L)}^{\mathbb{Z}_2\Vect}$.
	Specific to this case are \Cref{xoroutputalgebra} and \Cref{xorbasisstateminimal}. It remains to observe that for any $\mathbb{Z}_2$-weighted automaton $\mathcal{X}$, one can find an equivalent $\mathbb{Z}_2$-weighted automaton $\mathcal{Y}$ with a state space of size not greater than the one of $\mathcal{X}$, such that above inclusion holds. The state space of $\mathcal{Y}$ can be chosen as a basis for the underlying vector space of the epi-mono factorisation of the reachability map $\mathcal{X}(A^*) \rightarrow \overline{\im(\obs^{\dagger}_{\mathcal{X}})}^{\mathbb{Z}_2\Vect}$.
	\end{proof}

For the \'atomaton, the distromaton, and the minimal xor-CABA automaton the distributive law homomorphism $\alpha$ in play is non-trivial; $\alpha$-closedness translates to the below equalities between closures. In all three cases it is possible to waive the inclusion induced by the second point in \Cref{minimalitytheorem}. The proof of the minimality result for the  \'atomaton (\Cref{minimalityatomaton}) requires the following three technical results (\Cref{atomsstateminimalgenerator}, \Cref{imgcabanfa}, \Cref{atomscaba}).

\begin{corollary}
\label{atomsstateminimalgenerator}
	Let $\alpha_X: \mathcal{P}X \rightarrow \mathcal{H}X$ satisfy $\alpha_X(\varphi)(\psi) = \vee_{x \in X} \varphi(x) \wedge \psi(x)$. If $B = ( X, h )$ is a finite $\mathcal{H}$-algebra, then $( \At(B), i, d )$ with $i(a) = a$ and $d(x) = \lbrace a \in \At(B) \mid a \leq x \rbrace$ is a size-minimal generator for $( X, h \circ \alpha_X )$. 
	\end{corollary}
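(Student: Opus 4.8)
The plan is to combine the already-established basis property from \Cref{basisshiftecabapowerset} with the size-minimality of join-irreducible generators from \Cref{joinirreducstateminimal}. The only genuinely new content beyond \Cref{basisshiftecabapowerset} is the size-minimality claim, so the strategy is to reduce it to the finite-$\mathcal{P}$-algebra case that has already been settled.

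First I would recall that, by \Cref{basisshiftecabapowerset}, the tuple $(\At(B), i, d)$ with $i(a) = a$ and $d(x) = \lbrace a \in \At(B) \mid a \leq x \rbrace$ is a \emph{basis}, and hence in particular a generator (\Cref{generatordefinition}), for the $\mathcal{P}$-algebra $(X, h \circ \alpha_X)$; the distributive law homomorphism $\alpha$ of \Cref{alphapowersetneighbourhooddistrlaw} ensures that $(X, h \circ \alpha_X)$ is indeed a $\mathcal{P}$-algebra, i.e.\ a complete join-semilattice $L$ whose order and joins are inherited from $B$. Since $B$ is finite, its carrier $X$---and therefore $L$---is finite.

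The key step is to identify the two candidate generators by showing that the atoms of $B$ coincide with the join-irreducibles of the underlying CSL $L$, that is $\At(B) = J(L)$. The inclusion $\At(B) \subseteq J(L)$ is immediate: if $a$ is an atom and $a = y \vee z$, then $y, z \leq a$ forces $y, z \in \lbrace 0, a \rbrace$, and $a \neq 0$ gives $a = y$ or $a = z$. For the reverse inclusion $J(L) \subseteq \At(B)$ I would argue contrapositively, and this is where the Boolean (CABA) structure is essential: if $a$ is non-zero but not an atom, there is some $b$ with $0 < b < a$, and then complementation yields the decomposition $a = b \vee (a \wedge \neg b)$ with $a \wedge \neg b \neq 0$ (as $b < a$) and $b \neq a$, so $a$ is not join-irreducible. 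Because the orders agree, the defining data $i$ and $d$ of $(\At(B), i, d)$ are exactly those of the join-irreducible generator $(J(L), i, d)$ from \Cref{joinirreducstateminimal}.

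With $\At(B) = J(L)$ in hand, the conclusion is immediate: \Cref{joinirreducstateminimal} states that $(J(L), i, d)$ is a size-minimal generator for the finite $\mathcal{P}$-algebra $L$, so any generator $(Y, i', d')$ for $L$ satisfies $\vert J(L) \vert \leq \vert Y \vert$; since $\At(B) = J(L)$, the generator $(\At(B), i, d)$ is size-minimal as well. The main obstacle is the equality $\At(B) = J(L)$, and specifically the direction showing that every join-irreducible is an atom---this is the one place where the argument must genuinely use complementation, since in a general finite CSL join-irreducibles need not be atoms, so the distributive-law reduction to $\mathcal{P}$-algebras alone does not suffice without this Boolean input.
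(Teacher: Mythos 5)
Your proof is correct, but it takes a different route from the paper's. The paper also starts from \Cref{basisshiftecabapowerset} to get that $(\At(B), i, d)$ is a basis, but then disposes of size-minimality by the purely cardinality-based argument of \Cref{xorbasisstateminimal} transplanted to $\mathcal{P}$: for a \emph{finite} basis $(Y,i,d)$ one has $\mathcal{P}Y = 2^Y$, the decomposition $d$ is a bijection $X \cong 2^Y$, any other generator $(Y',i',d')$ gives a surjection $2^{Y'} \twoheadrightarrow 2^Y$, and $\vert Y' \vert < \vert Y \vert$ would contradict $\vert 2^Y \vert \leq \vert 2^{Y'} \vert$. You instead identify $\At(B)$ with the join-irreducibles $J(L)$ of the underlying finite CSL and invoke \Cref{joinirreducstateminimal} directly. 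Both arguments are sound. The paper's is shorter and needs no information about how the order of $L$ relates to that of $B$, only that a basis exists; yours carries a small extra obligation — that the CSL structure $h \circ \alpha_X$ has the same order and joins as $B$, which you assert and which does hold (it is implicit in \Cref{basisshiftecabapowerset}, where $d(x) = \lbrace a \leq x\rbrace$ is phrased via the Boolean order) — but in exchange it exposes the structural fact that atoms and join-irreducibles coincide in a finite Boolean algebra, and correctly isolates where complementation is genuinely used. Your Boolean decomposition $a = b \vee (a \wedge \neg b)$ for a non-atom $a$ is the right argument for the nontrivial inclusion $J(L) \subseteq \At(B)$.
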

\begin{proof}
	By \Cref{basisshiftecabapowerset} $( \At(B), i, d )$ is a basis for $( X, h \circ \alpha_X )$. Analogously to \Cref{xorbasisstateminimal}, it follows that any finite basis for a $\mathcal{P}$-algebra is a size-minimal generator.
\end{proof}

\begin{lemma}[\cite{stackex}]
\label{join-irred-dlattice}
	Let $A$ be a sub-lattice of a finite distributive lattice $B$, then $\vert J(A) \vert \leq \vert J(B) \vert$.   
\end{lemma}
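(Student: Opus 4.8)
The plan is to reduce the inequality to a comparison of chain lengths, exploiting the fact that in a finite distributive lattice the number of join-irreducibles is exactly the \emph{height} of the lattice (the common length of all maximal chains). First I would record the elementary observation that $A$ is itself a finite distributive lattice: distributivity is expressed by an equational law, hence is inherited by every sublattice, and $A \subseteq B$ is finite because $B$ is. Moreover, since $A$ is a sublattice, its meet and join are the restrictions of those of $B$, so the partial order of $A$ is the restriction of the partial order of $B$; consequently every chain in $A$ is a chain in $B$.

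Next I would invoke two standard structural facts about a finite distributive lattice $L$. \emph{(i)} $L$ is graded: being distributive it is modular, and finite modular lattices satisfy the Jordan--Dedekind chain condition, so all maximal chains from $\hat{0}_L$ to $\hat{1}_L$ have the same length, namely the height of $L$. \emph{(ii)} This height equals $\vert J(L) \vert$. The cleanest justification of \emph{(ii)} is Birkhoff's representation theorem, $L \cong \mathcal{O}(J(L))$, the lattice of down-sets of the poset $J(L)$: a maximal chain of down-sets $\emptyset = D_0 \subsetneq D_1 \subsetneq \dots \subsetneq J(L)$ must adjoin exactly one element of $J(L)$ at each step, and hence has length $\vert J(L) \vert$. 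Applying \emph{(i)} and \emph{(ii)} to both $A$ and $B$ yields $\mathrm{height}(A) = \vert J(A) \vert$ and $\mathrm{height}(B) = \vert J(B) \vert$.

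Finally I would conclude: fix a maximal chain $C$ in $A$; by \emph{(i)} it has length $\vert J(A) \vert$, and by the first paragraph $C$ is a chain in $B$, so its length is bounded by the height of $B$, which is $\vert J(B) \vert$. Hence $\vert J(A) \vert \leq \vert J(B) \vert$. The only substantive content lies in the structural facts \emph{(i)} and \emph{(ii)}; these are classical (Birkhoff), so the main obstacle is expository---deciding how much of the representation machinery to reprove versus cite. An alternative, more hands-on route would construct an explicit injection $J(A) \hookrightarrow J(B)$ by sending a join-irreducible $a \in J(A)$, with unique lower cover $a_*$ in $A$, to a join-irreducible of $B$ lying below $a$ but not below $a_*$ (such an element exists since in a distributive lattice $a$ is the join of the join-irreducibles beneath it and $a \neq a_*$); this avoids gradedness but makes injectivity the delicate point, so I would prefer the chain-length argument.
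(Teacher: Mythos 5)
Your proof is correct, but it takes a genuinely different route from the paper's. You reduce the inequality to a comparison of heights, using the classical facts that a finite distributive lattice is graded and that its height equals $\vert J(L) \vert$ (via Birkhoff's representation $L \cong \mathcal{O}(J(L))$), and then observe that a maximal chain of $A$ is a chain of $B$ and hence bounded by the height of $B$. The paper instead argues directly and more elementarily: it defines, for each $x \in J(B)$, the element $\hat{x} := \bigwedge \lbrace y \in A \mid x \leq y \rbrace$, shows by a one-line distributivity computation that $\hat{x} \in J(A)$, and then uses join-density of join-irreducibles to show that every $z \in J(A)$ arises as $\widehat{x_z}$ for some $x_z \in J(B)$ with $x_z \leq z$; the resulting surjection of $J(B)$ onto $J(A)$ gives the cardinality bound. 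This is essentially the dual of the ``hands-on'' injection $J(A) \hookrightarrow J(B)$ you sketch and then set aside, except that working with a surjection out of $J(B)$ sidesteps the injectivity worry you flag. The trade-off is that your argument invokes more machinery (Jordan--Dedekind, the height formula, Birkhoff) but is conceptually transparent and immediately visualisable, whereas the paper's is self-contained, needing only distributivity and the join-density of join-irreducibles in a finite lattice. Both are sound; if you keep the chain-length version you should either cite the gradedness and height facts or be prepared to prove them, which is roughly as much work as the paper's direct computation.
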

\begin{proof}
	For $x \in J(B)$ define $\hat{x} := \bigwedge \lbrace y \in A \mid x \leq y \rbrace \geq x$. To see that $\hat{x} \in J(A)$, assume $\hat{x} = y \vee z$ for $y,z \in A$. By distributivity we have	$
	x = \hat{x} \wedge x = (y \vee z) \wedge x = (y \wedge x ) \vee (z \wedge x)$.
	Since $x \in J(B)$, it thus follows w.l.o.g. $x = y \wedge x$, which implies $x \leq y$. Consequently $\hat{x} \leq y \leq \hat{x}$, i.e. $\hat{x} = y$. Let $z \in J(A)$, then the join-density of join-irreducibles implies
	\[
	z = \vee \lbrace x \in J(B) \mid x \leq z \rbrace = \vee \lbrace \hat{x} \in J(A) \mid x \in J(B) : x \leq z \rbrace .
	\]
	Since $z$ is join-irreducible it follows $z = \widehat{x_z}$ for some $x_z \in J(B)$ with $x_z \leq z$. We thus find
	$
	J(A) = \lbrace \hat{x} \mid x \in J(B) \rbrace$, which implies the claim $\vert J(A) \vert \leq \vert J(B) \vert$.
\end{proof}

\begin{corollary}
\label{atomscaba}
	Let $A$ be a sub-algebra of a finite atomic Boolean algebra $B$. Then $\vert \At(A) \vert \leq  \vert \At(B) \vert$.
\end{corollary}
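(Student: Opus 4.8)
The plan is to reduce this statement about atoms to the statement about join-irreducibles already proven in \Cref{join-irred-dlattice}. The bridge is the classical fact that in a \emph{finite Boolean algebra} the join-irreducible elements are exactly the atoms, so that $\At(C) = J(C)$ for every finite Boolean algebra $C$; once this is in hand the corollary is immediate. The main conceptual step is therefore establishing this coincidence, while the remaining work is just checking that the hypotheses of \Cref{join-irred-dlattice} apply.

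First I would record the structural facts. Since $B$ is finite, its subalgebra $A$ is finite too, and both are finite Boolean algebras, hence automatically complete and atomic. A morphism in $\CABA$ preserves all meets and joins, so the inclusion $A \hookrightarrow B$ exhibits $A$ as a sublattice of $B$: the finite meets and joins computed in $A$ agree with those computed in $B$. As every Boolean algebra is a distributive lattice, $B$ is a finite distributive lattice and $A$ a sublattice of it, so \Cref{join-irred-dlattice} applies and gives $\vert J(A) \vert \leq \vert J(B) \vert$.

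The key step is to prove $\At(C) = J(C)$ for a finite Boolean algebra $C$. The inclusion $\At(C) \subseteq J(C)$ is routine: an atom $a$ satisfying $a = y \vee z$ forces $y \leq a$ and $z \leq a$, whence $y,z \in \{0,a\}$, and since $a \neq 0$ at least one of them equals $a$. For the reverse inclusion, suppose $a \in J(C)$ is not an atom, so there is some $x$ with $0 < x < a$. Using the Boolean structure, write $a = x \vee (a \wedge \neg x)$. Both joinands lie strictly below $a$: clearly $x \neq a$, and if $a \wedge \neg x = a$ then $a \leq \neg x$, i.e.\ $a \wedge x = 0$, which together with $x \leq a$ gives $x = a \wedge x = 0$, a contradiction. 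Thus $a$ is a join of two strictly smaller elements, contradicting join-irreducibility; hence $J(C) \subseteq \At(C)$.

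Combining the two steps, and applying the coincidence of atoms and join-irreducibles to both finite Boolean algebras $A$ and $B$, I obtain
\[
\vert \At(A) \vert = \vert J(A) \vert \leq \vert J(B) \vert = \vert \At(B) \vert,
\]
as desired. I expect the only mild subtlety to be the first paragraph, namely arguing cleanly that a subalgebra in $\CABA$ is a sublattice in the sense required by \Cref{join-irred-dlattice}; the atom/join-irreducible identification is standard and its verification is the short Boolean computation above.
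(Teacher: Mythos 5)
Your proof is correct and follows essentially the same route as the paper's: reduce to \Cref{join-irred-dlattice} by noting that every Boolean algebra is a distributive lattice and that atoms coincide with join-irreducibles in a finite Boolean algebra. The paper simply asserts these two facts without proof, whereas you spell out the (standard) verification that $\At(C)=J(C)$; no substantive difference.
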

\begin{proof}
	For atomic Boolean algebras, join-irreducibles and atoms coincide. Every Boolean algebra is in particular a distributive lattice. The claim thus follows from \Cref{join-irred-dlattice}.
\end{proof}

\begin{corollary}
\label{minimalityatomaton}
		The \'atomaton for $L$ is size-minimal among non-deterministic automata $\mathcal{Y}$ accepting $L$ with $\overline{\im(\obs^{\dag}_{\mathcal{Y}})}^{\CSL} = \overline{\im(\obs^{\dag}_{\mathcal{Y}})}^{\CABA}$.
\end{corollary}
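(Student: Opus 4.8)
The plan is to obtain the statement as an instance of \Cref{minimalitytheorem}, taking $S = \mathcal{H}$, $T = \mathcal{P}$, and $\alpha \colon \lambda^{\mathcal{H}} \to \lambda^{\mathcal{P}}$ the distributive law homomorphism of \Cref{alphapowersetneighbourhooddistrlaw}. First I would verify the two hypotheses of the theorem. Since $L$ is regular, $\Der(L)$ is finite, so the minimal pointed $\lambda^{\mathcal{H}}$-bialgebra $\mathbb{M}$ accepting $L$ exists and its carrier is the finite complete atomic Boolean algebra $B = \overline{\Der(L)}^{\CABA}$. By \Cref{atomsstateminimalgenerator} the atoms $(\At(B), i, d)$ form a size-minimal generator $\mathbb{G}$ for the underlying $\mathcal{P}$-algebra of $\alpha(\mathbb{M})$, so the second hypothesis holds as well. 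The succinct automaton produced by the theorem is then $\mathcal{X} = \gen(\alpha(\mathbb{M}), \mathbb{G})$, which by construction is exactly the \'atomaton for $L$: it has carrier $\At(B)$ (so $|X| = |\At(B)|$), it is $\alpha$-closed by \Cref{generatorclosed}, and it accepts $L$. Moreover, since $\mathbb{M}$ is observable, \Cref{obsgenerated} gives $\img(\obs_{\expa_{\mathcal{P}}(\mathcal{X})}) \cong B$ as a sub-object of $\Omega$.

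Second, I would identify the class named in the statement with the class of $\alpha$-closed NFAs. Combining \Cref{imgobssuccinctsem} with \Cref{alphaclosednessequivalentform} shows that a $\mathcal{P}$-succinct automaton $\mathcal{Y}$ is $\alpha$-closed iff $\img(\obs_{\expa_{\mathcal{P}}(\mathcal{Y})}) \cong \img(\obs_{\alpha(\expa_{\mathcal{H}}(\ext(\mathcal{Y})))})$, and \Cref{imgcabanfa} computes these two images as $\overline{\im(\obs^{\dag}_{\mathcal{Y}})}^{\CSL}$ and $\overline{\im(\obs^{\dag}_{\mathcal{Y}})}^{\CABA}$, respectively. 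As the former is always contained in the latter as sub-objects of $\Omega$, the isomorphism holds precisely when the two closures coincide, which is exactly the condition $\overline{\im(\obs^{\dag}_{\mathcal{Y}})}^{\CSL} = \overline{\im(\obs^{\dag}_{\mathcal{Y}})}^{\CABA}$ appearing in the corollary. Hence the automata occurring in the statement are exactly the $\alpha$-closed ones, and the \'atomaton is among them.

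Finally, I would establish size-minimality by \emph{waiving} the image-equality hypothesis of the second bullet of \Cref{minimalitytheorem} through an atom count. Let $\mathcal{Y}$ be any $\alpha$-closed NFA accepting $L$, and set $A_{\mathcal{Y}} := \img(\obs_{\expa_{\mathcal{P}}(\mathcal{Y})}) = \overline{\im(\obs^{\dag}_{\mathcal{Y}})}^{\CABA}$, which is a \emph{finite} sub-CABA of $\Omega$ (the complete join-semilattice closure of the at most $|Y|$ languages in $\im(\obs^{\dag}_{\mathcal{Y}})$). The first bullet of \Cref{minimalitytheorem} yields a $\lambda^{\mathcal{P}}$-bialgebra homomorphism that commutes with the maps into $\Omega$, hence an inclusion $B \hookrightarrow A_{\mathcal{Y}}$; as both are sub-CABAs of $\Omega$, this exhibits $B$ as a Boolean subalgebra of $A_{\mathcal{Y}}$, so \Cref{atomscaba} gives $|\At(B)| \leq |\At(A_{\mathcal{Y}})|$. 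On the other hand, $A_{\mathcal{Y}}$ is generated as a complete join-semilattice by $\im(\obs^{\dag}_{\mathcal{Y}})$, whence every element is a join of these generators; since a join-irreducible of $A_{\mathcal{Y}}$ — equivalently, as $A_{\mathcal{Y}}$ is atomic, an atom — cannot be written as a join of strictly smaller elements, it must itself be a generator, so $|\At(A_{\mathcal{Y}})| \leq |\im(\obs^{\dag}_{\mathcal{Y}})| \leq |Y|$. Chaining the inequalities gives $|X| = |\At(B)| \leq |\At(A_{\mathcal{Y}})| \leq |Y|$, as desired. The main obstacle I anticipate is precisely this last step: one must check that the inclusion supplied by the theorem is genuinely an inclusion of Boolean subalgebras (so that \Cref{atomscaba} applies), together with the finiteness of $A_{\mathcal{Y}}$ and the principle that join-irreducibles lie among the generators — these are exactly what allow the image-equality condition of \Cref{minimalitytheorem} to be dropped for this non-trivial $\alpha$.
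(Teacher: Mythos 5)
Your proposal is correct and follows essentially the same route as the paper's proof: instantiate \Cref{minimalitytheorem} with the homomorphism of \Cref{alphapowersetneighbourhooddistrlaw}, identify the stated class with the $\alpha$-closed NFAs via \Cref{imgcabanfa}, and then discharge the image-equality hypothesis by an atom count combining \Cref{atomscaba} with the fact that join-irreducibles (= atoms) must lie among any generating set. The only cosmetic difference is that you inline the argument that $|\At(A_{\mathcal{Y}})| \leq |Y|$, whereas the paper obtains it by citing the size-minimality of join-irreducibles as generators (\Cref{joinirreducstateminimal}); the substance is identical.
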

\begin{proof}
Let $\alpha: \lambda^{\mathcal{H}} \rightarrow \lambda^{\mathcal{P}}$ be the distributive law homomorphism in \Cref{alphapowersetneighbourhooddistrlaw}.
\begin{itemize}
	\item By \Cref{imgcabanfa} and the definition of $\alpha$-closedness, any non-deterministic automaton $\mathcal{Y}$ satisfies $\overline{\im(\obs^{\dag}_{\mathcal{Y}})}^{\CSL} = \overline{\im(\obs^{\dag}_{\mathcal{Y}})}^{\CABA}$ iff it is $\alpha$-closed.
	\item Let $\mathbb{M}$ be the minimal pointed $\lambda^{\mathcal{H}}$-bialgebra accepting $L$. The state-space of $\mathbb{M}$ is the finite set $\overline{\Der(L)}^{\CABA}$. By \Cref{atomsstateminimalgenerator}, the set $\At(\overline{\Der(L)}^{\CABA})$ underlies a size-minimal generator $\mathbb{G}$ for the algebraic part of $\alpha(\mathbb{M})$. The \'atomaton for $L$ can thus be recovered as the $\alpha$-closed non-deterministic automaton $\mathcal{X} := \gen(\alpha(\mathbb{M}), \mathbb{G})$ accepting $L$ in \Cref{minimalitytheorem}. In particular, by the first bullet point above, the \'atomaton thus lives in the class of non-deterministic automata $\mathcal{Y}$ accepting $L$ with $\overline{\im(\obs^{\dag}_{\mathcal{Y}})}^{\CSL} = \overline{\im(\obs^{\dag}_{\mathcal{Y}})}^{\CABA}$.
	\item Let $\mathcal{Y}$ be any non-deterministic automaton accepting $L$ with $\overline{\im(\obs^{\dag}_{\mathcal{Y}})}^{\CSL} = \overline{\im(\obs^{\dag}_{\mathcal{Y}})}^{\CABA}$ and state-space $Y$. By construction, there exists an epimorphism $\obs_{\expa_{\mathcal{P}}(\mathcal{Y})}: \mathcal{P}Y \twoheadrightarrow \overline{\im(\obs^{\dag}_{\mathcal{Y}})}^{\CSL}$, which turns $Y$ into a generator for the finite $\mathcal{P}$-algebra $B := \overline{\im(\obs^{\dag}_{\mathcal{Y}})}^{\CSL} = \overline{\im(\obs^{\dag}_{\mathcal{Y}})}^{\CABA}$. As for CABAs join-irreducibles and atoms coincide, the size-minimality of join-irreducibles in \Cref{joinirreducstateminimal} thus implies
	$\vert \At(B) \vert \leq \vert Y \vert$. By \Cref{minimalitytheorem}, we have that $\im(\obs_{\exp_T(\mathcal{X})}) \subseteq B$, where $\mathcal{X}$ denotes the \'atomaton. From \Cref{obsgenerated} and the definition of $\mathcal{X}$ it follows that $\im(\obs_{\exp_T(\mathcal{X})}) \cong \overline{\Der(L)}^{\CABA}$. We thus have $\overline{\Der(L)}^{\CABA} \subseteq B$. By \Cref{atomscaba}, it follows $\vert \At(\overline{\Der(L)}^{\CABA}) \vert \leq \vert \At(B) \vert$. Consequently we can deduce $\vert  \At(\overline{\Der(L)}^{\CABA}) \vert \leq \vert Y \vert$, which shows that the \'atomaton is size-minimal. \qedhere
	\end{itemize}
	  \end{proof}

The above result can be shown to be similar to \cite[Theorem 4.9]{MyersAMU15}, which characterises the \'atomaton as size-minimal among non-deterministic automata whose accepted languages are \emph{closed under complement}. 

\Cref{distromatonminimal} below is very similar to a characterisation of the distromaton as size-minimal among non-deterministic automata whose accepted languages are \emph{closed under intersection} \cite[Theorem 4.13]{MyersAMU15}. The proof of \Cref{distromatonminimal} requires the following technical result.

\begin{corollary}
\label{imgdistromaton}
	Let $\alpha: \mathcal{P}X \rightarrow \mathcal{A}X$ satisfy $\alpha_X(\varphi)(\psi) = \vee_{x \in X} \varphi(x) \wedge \psi(x)$. For any unpointed non-deterministic automaton $\mathcal{X}$ it holds:
	\begin{itemize}
		\item $\img(\obs_{\expa_{\mathcal{P}}(\mathcal{X})}) = \overline{\im(\obs^{\dag}_{\mathcal{X}})}^{\CSL}$;
		\item $\img(\obs_{\alpha(\expa_{\mathcal{A}}(\ext(\mathcal{X})))}) = \overline{\im(\obs^{\dag}_{\mathcal{X}})}^{\CDL}$.
	\end{itemize}
\end{corollary}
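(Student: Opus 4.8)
The plan is to follow the proof of \Cref{imgcabanfa} almost verbatim, replacing the neighbourhood monad $\mathcal{H}$ by the monotone neighbourhood monad $\mathcal{A}$ and the closure $\overline{(-)}^{\CABA}$ by $\overline{(-)}^{\CDL}$ throughout. This is legitimate because, as observed in the proof of \Cref{neighbourhoodpowersetmorphism}, $\mathcal{A}$ and $\mathcal{H}$ agree on units, multiplications, and the induced distributive law, differing only in that $\mathcal{A}X$ is restricted to the \emph{monotone} maps $(2^X, \subseteq) \to (2, \leq)$. Every manipulation in the proof of \Cref{imgcabanfa} that does not exploit the presence of complements therefore carries over unchanged, and only the final identification of the image with a closure needs genuine rechecking.

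For the first bullet I would reuse the argument of \Cref{imgcabanfa} without alteration. Since $\alpha$ leaves the $\mathcal{P}$-component of a bialgebra untouched and, by the computation $2^{\eta^{\mathcal{H}}_X} \circ \alpha_{\mathcal{P}X} = \mu^{\mathcal{P}}_X$ recorded in the proof of \Cref{basisshiftecabapowerset}, \Cref{imgobssuccinctsem} gives $\img(\obs_{\expa_{\mathcal{P}}(\mathcal{X})}) = \img(\mu^{\mathcal{P}}_{A^*} \circ \mathcal{P}(\obs^{\dag}_{\mathcal{X}})) = \{ \cup_{u \in U} \obs^{\dag}_{\mathcal{X}}(u) \mid U \subseteq X \} = \overline{\im(\obs^{\dag}_{\mathcal{X}})}^{\CSL}$, which is exactly the claim.

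For the second bullet I would first apply \Cref{imgobssuccinctsem} to obtain $\img(\obs_{\alpha(\expa_{\mathcal{A}}(\ext(\mathcal{X})))}) = \img(h \circ \mathcal{A}(\obs^{\dag}_{\mathcal{X}}))$, where $(2^{A^*}, h, \langle \varepsilon, \delta \rangle)$ is the final $\lambda^{\mathcal{A}}$-bialgebra, i.e. $h$ is the restriction of $2^{\eta^{\mathcal{H}}_{A^*}}$ to $\mathcal{A}(2^{A^*})$. Unfolding the definitions of $\mathcal{A}$ on morphisms and of $h$ then expresses a typical element of this image, for a monotone family $\Phi \in \mathcal{A}(X)$, as $\{ w \in A^* \mid \{ x \in X \mid \obs^{\dag}_{\mathcal{X}}(x)(w) = 1 \} \in \Phi \}$, precisely the expression appearing in the $\CABA$ computation but now with $\Phi$ ranging over \emph{up-closed} families instead of arbitrary ones.

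The main obstacle, and the only genuinely new step, is the closing set equality identifying this collection with $\overline{\im(\obs^{\dag}_{\mathcal{X}})}^{\CDL}$. In the CABA case letting $\Phi$ range over all of $2^X$ realised arbitrary full disjunctive normal forms, hence every Boolean combination (including complements) of the languages $\obs^{\dag}_{\mathcal{X}}(x)$. Here I must check that restricting to monotone $\Phi$ yields exactly the \emph{positive} combinations: writing such a $\Phi$ as the up-set generated by its minimal members $\varphi$, the corresponding language is $\bigcup_{\varphi} \bigcap_{x \in \varphi} \obs^{\dag}_{\mathcal{X}}(x)$, a join of meets of generators with no complementation. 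Thus the image is the closure of $\im(\obs^{\dag}_{\mathcal{X}})$ under arbitrary meets and joins, which by the equivalence $\Set^{\mathcal{A}} \cong \CDL$ (\Cref{eilenbergexample}) and the definition of $\overline{(-)}^{\CDL}$ is $\overline{\im(\obs^{\dag}_{\mathcal{X}})}^{\CDL}$. I expect this monotonicity argument to be the one place where more than a mechanical transfer of the $\mathcal{H}$-proof is required.
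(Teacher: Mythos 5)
Your proposal is correct and matches the paper, whose proof of this corollary is literally ``analogous to the proof of \Cref{imgcabanfa}''; your identification of the final set equality (monotone $\Phi$ yielding exactly the positive join-of-meets combinations) as the only step requiring genuine rechecking is exactly the content of that analogy. One tiny remark: you need not invoke minimal members of $\Phi$ (which may not exist for infinite $X$) --- since $\Phi$ is up-closed, the condition $\lbrace x \mid \obs^{\dag}_{\mathcal{X}}(x)(w) = 1 \rbrace \in \Phi$ is already equivalent to $w \in \bigcup_{\varphi \in \Phi} \bigcap_{x \in \varphi} \obs^{\dag}_{\mathcal{X}}(x)$ with the union taken over \emph{all} $\varphi \in \Phi$.
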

\begin{proof}
	Analogous to the proof of \Cref{imgcabanfa}.
\end{proof}

\begin{corollary}
\label{distromatonminimal}
		The distromaton for $L$ is size-minimal among non-deterministic automata $\mathcal{Y}$ accepting $L$ with $ \overline{\im(\obs^{\dag}_{\mathcal{Y}})}^{\CSL} = \overline{\im(\obs^{\dag}_{\mathcal{Y}})}^{\CDL}$.
\end{corollary}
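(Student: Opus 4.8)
The plan is to follow the blueprint of the minimality proof for the \'atomaton (\Cref{minimalityatomaton}) essentially verbatim, replacing the neighbourhood monad $\mathcal{H}$ by the monotone neighbourhood monad $\mathcal{A}$, complete atomic Boolean algebras by completely distributive lattices, and atoms by join-irreducibles. The relevant distributive law homomorphism is $\alpha \colon \lambda^{\mathcal{A}} \rightarrow \lambda^{\mathcal{P}}$ of \Cref{neighbourhoodpowersetmorphism}, and the two closures to be compared are $\overline{(-)}^{\CSL}$ and $\overline{(-)}^{\CDL}$ rather than $\overline{(-)}^{\CSL}$ and $\overline{(-)}^{\CABA}$.

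First I would use \Cref{imgdistromaton} together with the characterisation of $\alpha$-closedness recorded after \Cref{imgobssuccinctsem} (via \Cref{alphaclosednessequivalentform}) to show that a non-deterministic automaton $\mathcal{Y}$ accepting $L$ is $\alpha$-closed precisely when $\overline{\im(\obs^{\dag}_{\mathcal{Y}})}^{\CSL} = \overline{\im(\obs^{\dag}_{\mathcal{Y}})}^{\CDL}$; this identifies the side-condition of the statement with the hypothesis of \Cref{minimalitytheorem}. Next I would realise the distromaton inside this framework: let $\mathbb{M}$ be the minimal pointed $\lambda^{\mathcal{A}}$-bialgebra accepting $L$, whose finite state space is $\overline{\Der(L)}^{\CDL}$, and let $\mathbb{G} = (J(\overline{\Der(L)}^{\CDL}), i, d)$ be the generator of join-irreducibles of the $\mathcal{P}$-algebra underlying $\alpha(\mathbb{M})$, which is size-minimal by \Cref{joinirreducstateminimal}. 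Then $\mathcal{X} := \gen(\alpha(\mathbb{M}), \mathbb{G})$ is exactly the distromaton of \Cref{distromatonexample}, it is $\alpha$-closed by \Cref{generatorclosed}, and it accepts $L$; hence it is the automaton produced by \Cref{minimalitytheorem}.

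The size-minimality argument then runs as in the third bullet of \Cref{minimalityatomaton}. Given any $\alpha$-closed $\mathcal{Y}$ accepting $L$ with state space $Y$, \Cref{imgdistromaton} yields an epimorphism $\obs_{\expa_{\mathcal{P}}(\mathcal{Y})} \colon \mathcal{P}Y \twoheadrightarrow B$ onto the finite $\mathcal{P}$-algebra $B := \overline{\im(\obs^{\dag}_{\mathcal{Y}})}^{\CSL} = \overline{\im(\obs^{\dag}_{\mathcal{Y}})}^{\CDL}$, exhibiting $Y$ as a generator of $B$ and so giving $\vert J(B) \vert \leq \vert Y \vert$ by \Cref{joinirreducstateminimal}. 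From the first bullet of \Cref{minimalitytheorem} and \Cref{obsgenerated} one obtains $\overline{\Der(L)}^{\CDL} \simeq \img(\obs_{\expa_{\mathcal{P}}(\mathcal{X})}) \subseteq B$, and I would then verify that this is an inclusion of finite distributive lattices with $\overline{\Der(L)}^{\CDL}$ a sub-lattice of $B$: this holds because $B$, being a $\CDL$ stable under the coalgebra structure and containing $L$, contains all of $\Der(L)$, whence it contains the smallest sub-$\CDL$ generated by them. Applying \Cref{join-irred-dlattice} gives $\vert J(\overline{\Der(L)}^{\CDL}) \vert \leq \vert J(B) \vert$, and chaining the two inequalities yields $\vert J(\overline{\Der(L)}^{\CDL}) \vert \leq \vert Y \vert$, which is the claimed size-minimality.

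The only genuine divergence from the \'atomaton proof, and thus the step deserving the most care, is the final counting argument: since a completely distributive lattice need not be Boolean, join-irreducibles no longer coincide with atoms, so the appeal to \Cref{atomscaba} must be replaced by the purely lattice-theoretic \Cref{join-irred-dlattice}, and join-irreducibles (not atoms) must be tracked throughout steps two and three. The prerequisite for \Cref{join-irred-dlattice} is exactly the sub-lattice inclusion $\overline{\Der(L)}^{\CDL} \subseteq B$, so confirming that this inclusion is closed under both meets and joins — rather than merely joins, as a bare $\CSL$-inclusion would guarantee — is where I expect the main obstacle to lie.
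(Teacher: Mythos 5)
Your proposal is correct and follows essentially the same route as the paper: the paper's own proof is stated precisely as ``analogous to \Cref{minimalityatomaton}'', invoking \Cref{neighbourhoodpowersetmorphism}, \Cref{joinirreducstateminimal}, \Cref{join-irred-dlattice}, and \Cref{imgdistromaton} in place of their CABA counterparts, which is exactly the substitution you carry out. Your closing observation about the sub-lattice hypothesis of \Cref{join-irred-dlattice} is well placed, and it is satisfied for the reason you give (both closures sit inside $2^{A^*}$ with the pointwise meets and joins, and the inclusion comes from the first bullet of \Cref{minimalitytheorem} together with \Cref{obsgenerated}).
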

\begin{proof}
	Analogous to the proof of \Cref{minimalityatomaton}. Specific to this case are the results  \Cref{powersetalgebra}, \Cref{monotoneneighbourhoodoutputalgebra}, \Cref{neighbourhoodpowersetmorphism}, \Cref{joinirreducstateminimal}, \Cref{join-irred-dlattice}, and \Cref{imgdistromaton}.
\end{proof}

The size-minimality result (\Cref{corminimalxorcaba}) for the newly discovered minimal xor-CABA automaton is analogous to the ones for the \'atomaton and the distromaton. It requires the following technical result.

\begin{corollary}
\label{imgxorcaba}
	Let $\alpha: \mathcal{R}X \rightarrow \mathcal{H}X$ satisfy $\alpha_X(\varphi)(\psi) = \bigoplus_{x \in X} \varphi(x) \cdot \psi(x)$. For any unpointed $\mathbb{Z}_2$-weighted automaton $\mathcal{X}$ it holds:
	\begin{itemize}
		\item $\img(\obs_{\expa_{\mathcal{R}}(\mathcal{X})}) = \overline{\im(\obs^{\dag}_{\mathcal{X}})}^{\mathbb{Z}_2\Vect}$;
		\item $\img(\obs_{\alpha(\expa_{\mathcal{H}}(\ext(\mathcal{X})))}) = \overline{\im(\obs^{\dag}_{\mathcal{X}})}^{\CABA}$.
	\end{itemize}
\end{corollary}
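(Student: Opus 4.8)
The plan is to mirror the proof of \Cref{imgcabanfa} essentially verbatim, substituting the free vector space monad $\mathcal{R}$ for the powerset monad $\mathcal{P}$ in the first bullet while keeping the second bullet unchanged. Here the distributive law homomorphism is $\alpha \colon \lambda^{\mathcal{H}} \rightarrow \lambda^{\mathcal{R}}$, so $S = \mathcal{H}$ and $T = \mathcal{R}$, and the relevant final bialgebra is the final $\lambda^{\mathcal{H}}$-bialgebra $( 2^{A^*}, 2^{\eta^{\mathcal{H}}_{A^*}}, \langle \varepsilon, \delta \rangle )$ whose carrier algebra $h = 2^{\eta^{\mathcal{H}}_{A^*}}$ is the canonical CABA structure on the set of languages. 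First I would invoke \Cref{imgobssuccinctsem} to rewrite the two images as $\img(h \circ \alpha_{2^{A^*}} \circ \mathcal{R}(\obs^{\dag}_{\mathcal{X}}))$ and $\img(h \circ \mathcal{H}(\obs^{\dag}_{\mathcal{X}}))$, respectively.

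The key algebraic fact driving the first bullet is the identity $2^{\eta^{\mathcal{H}}_X} \circ \alpha_{\mathcal{R}X} = \mu^{\mathcal{R}}_X$, the xor-analogue of the identity $2^{\eta^{\mathcal{H}}_X} \circ \alpha_{\mathcal{P}X} = \mu^{\mathcal{P}}_X$ already established inside the proof of \Cref{basisshiftecabapowerset}. I would verify it pointwise: for $\Phi \in \mathcal{R}(\mathcal{R}X)$ and $x \in X$,
\begin{align*}
	2^{\eta^{\mathcal{H}}_X} \circ \alpha_{\mathcal{R}X}(\Phi)(x) &= \alpha_{\mathcal{R}X}(\Phi)(\eta^{\mathcal{H}}_X(x)) && \textnormal{(Definition of } 2^{\eta^{\mathcal{H}}_X}) \\
	&= \bigoplus_{\varphi \in \mathcal{R}X} \Phi(\varphi) \cdot \eta^{\mathcal{H}}_X(x)(\varphi) && \textnormal{(Definition of } \alpha_{\mathcal{R}X}) \\
	&= \bigoplus_{\varphi \in \mathcal{R}X} \Phi(\varphi) \cdot \varphi(x) && \textnormal{(Definition of } \eta^{\mathcal{H}}_X) \\
	&= \mu^{\mathcal{R}}_X(\Phi)(x) && \textnormal{(Definition of } \mu^{\mathcal{R}}_X),
\end{align*}
using the identification $\mathbb{Z}_2 \cong 2$. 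Substituting this identity shows that $\img(\obs_{\expa_{\mathcal{R}}(\mathcal{X})}) = \img(\mu^{\mathcal{R}}_{A^*} \circ \mathcal{R}(\obs^{\dag}_{\mathcal{X}}))$, and unfolding the definitions of $\mathcal{R}(-)$ and $\mu^{\mathcal{R}}$ then exhibits this set as $\lbrace \bigoplus_{x \in X} \varphi(x) \cdot \obs^{\dag}_{\mathcal{X}}(x) \mid \varphi \in \mathcal{R}X \rbrace$, that is, the set of all finite $\mathbb{Z}_2$-linear combinations of the vectors $\obs^{\dag}_{\mathcal{X}}(x)$. As $\varphi$ ranges over all finitely supported functions $X \to \mathbb{Z}_2$ this is precisely the $\mathbb{Z}_2$-linear span $\overline{\im(\obs^{\dag}_{\mathcal{X}})}^{\mathbb{Z}_2\Vect}$.

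For the second bullet, since $S = \mathcal{H}$ is the same monad as in \Cref{imgcabanfa}, the computation of $\img(h \circ \mathcal{H}(\obs^{\dag}_{\mathcal{X}}))$ is word-for-word identical to the one carried out there, yielding $\overline{\im(\obs^{\dag}_{\mathcal{X}})}^{\CABA}$; I would simply cite that calculation rather than repeat it. The only genuinely new content is therefore the first bullet, and the main (minor) obstacle is bookkeeping: one must keep track of the two distinct algebra structures carried by the set $2$ -- the complete atomic Boolean structure governing the CABA-closure half and the $\mathbb{Z}_2$-field structure $(2, \oplus, \wedge)$ from \Cref{xoroutputalgebra} governing the vector-space half -- and confirm that the image above is the \emph{full} span and not a proper subspace, which follows at once from the fact that $\mu^{\mathcal{R}}_{A^*} \circ \mathcal{R}(\obs^{\dag}_{\mathcal{X}})$ realises every finitely supported coefficient assignment $\varphi$. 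No conceptual difficulty beyond this is expected.
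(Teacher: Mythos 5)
Your proposal is correct and matches the paper's intent exactly: the paper's own proof of \Cref{imgxorcaba} is literally ``analogous to the proof of \Cref{imgcabanfa},'' and you have spelled out precisely that analogy, with the key new ingredient being the identity $2^{\eta^{\mathcal{H}}_X} \circ \alpha_{\mathcal{R}X} = \mu^{\mathcal{R}}_X$ mirroring the identity $2^{\eta^{\mathcal{H}}_X} \circ \alpha_{\mathcal{P}X} = \mu^{\mathcal{P}}_X$ from \Cref{basisshiftecabapowerset}. Nothing further is needed.
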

\begin{proof}
	Analogous to the proof of \Cref{imgcabanfa}.
\end{proof}

\begin{corollary}
\label{corminimalxorcaba}
		The minimal xor-CABA automaton for $L$ is size-minimal among $\mathbb{Z}_2$-weighted automata $\mathcal{Y}$ accepting $L$ with $ \overline{\im(\obs^{\dag}_{\mathcal{Y}})}^{\mathbb{Z}_2\Vect} = \overline{\im(\obs^{\dag}_{\mathcal{Y}})}^{\CABA}$.
\end{corollary}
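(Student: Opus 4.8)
The plan is to follow the template of \Cref{minimalityatomaton} and \Cref{distromatonminimal}, but with the free vector space monad $\mathcal{R}$ in place of the powerset monad $\mathcal{P}$ and a vector space basis in place of the atom-generators. Throughout, let $\alpha \colon \lambda^{\mathcal{H}} \rightarrow \lambda^{\mathcal{R}}$ be the distributive law homomorphism of \Cref{alphaxorneighbourhooddistrlaw}. The first step is to identify the class over which minimality is asserted with the class of $\alpha$-closed automata: by \Cref{imgxorcaba} any unpointed $\mathbb{Z}_2$-weighted automaton $\mathcal{Y}$ satisfies $\img(\obs_{\expa_{\mathcal{R}}(\mathcal{Y})}) = \overline{\im(\obs^{\dag}_{\mathcal{Y}})}^{\mathbb{Z}_2\Vect}$ and $\img(\obs_{\alpha(\expa_{\mathcal{H}}(\ext(\mathcal{Y})))}) = \overline{\im(\obs^{\dag}_{\mathcal{Y}})}^{\CABA}$, so by \Cref{alphaclosednessequivalentform} it is $\alpha$-closed exactly when $\overline{\im(\obs^{\dag}_{\mathcal{Y}})}^{\mathbb{Z}_2\Vect} = \overline{\im(\obs^{\dag}_{\mathcal{Y}})}^{\CABA}$, which is precisely the hypothesis of the corollary.

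Next I would exhibit the minimal xor-CABA automaton as the size-minimal $\alpha$-closed acceptor delivered by \Cref{minimalitytheorem}. Take $\mathbb{M}$ to be the minimal pointed $\lambda^{\mathcal{H}}$-bialgebra accepting $L$, whose carrier is the finite CABA $\overline{\Der(L)}^{\CABA}$. Since every vector space has a basis, the underlying $\mathcal{R}$-algebra of $\alpha(\mathbb{M})$ admits a basis $\mathbb{G}$, which by \Cref{xorbasisstateminimal} is a size-minimal generator; by definition the minimal xor-CABA automaton is then $\mathcal{X} := \gen(\alpha(\mathbb{M}), \mathbb{G})$. Thus \Cref{minimalitytheorem} applies verbatim and shows that $\mathcal{X}$ is $\alpha$-closed, accepts $L$, and enjoys the two minimality bullets.

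To discharge the size comparison, consider a competing $\mathbb{Z}_2$-weighted automaton $\mathcal{Y}$ accepting $L$ with state space $Y$ and satisfying the closure equality. The determinisation surjection $\obs_{\expa_{\mathcal{R}}(\mathcal{Y})} \colon \mathcal{R}Y \twoheadrightarrow B$ onto the finite $\mathbb{Z}_2$-vector space $B := \overline{\im(\obs^{\dag}_{\mathcal{Y}})}^{\mathbb{Z}_2\Vect} = \overline{\im(\obs^{\dag}_{\mathcal{Y}})}^{\CABA}$ shows that $Y$ spans $B$, hence $\dim_{\mathbb{Z}_2} B \leq \vert Y \vert$. By the first bullet of \Cref{minimalitytheorem} we have $\img(\obs_{\expa_{\mathcal{R}}(\mathcal{X})}) \subseteq B$, and by \Cref{obsgenerated} together with the definition of $\mathcal{X}$ this image is isomorphic to $\overline{\Der(L)}^{\CABA}$. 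As the inclusion $\img(\obs_{\expa_{\mathcal{R}}(\mathcal{X})}) \hookrightarrow B$ is a $\lambda^{\mathcal{R}}$-bialgebra homomorphism, in particular an injective $\mathbb{Z}_2$-linear map, it exhibits $\overline{\Der(L)}^{\CABA}$ as a subspace of $B$, so $\dim_{\mathbb{Z}_2} \overline{\Der(L)}^{\CABA} \leq \dim_{\mathbb{Z}_2} B \leq \vert Y \vert$. Since the carrier of $\mathcal{X}$ is a basis of $\overline{\Der(L)}^{\CABA}$, its cardinality equals $\dim_{\mathbb{Z}_2} \overline{\Der(L)}^{\CABA}$, which yields the claim.

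The main obstacle is this final dimension comparison. In the átomaton case the analogous inequality required \Cref{atomscaba}, because the atoms of a subalgebra need not be atoms of the ambient CABA; here the corresponding fact — that a sub-CABA is closed under symmetric difference and therefore is a $\mathbb{Z}_2$-subspace whose dimension is bounded by that of the ambient space — is elementary, so the only genuine care is to verify that the inclusion produced by \Cref{minimalitytheorem} is honestly $\mathbb{Z}_2$-linear (which it is, being a $\lambda^{\mathcal{R}}$-bialgebra homomorphism). It would also be worth recording, in keeping with \Cref{minimalxorcabadiagram}, that a finite CABA with $n$ atoms has $\mathbb{Z}_2$-dimension $n$, so that the minimal xor-CABA automaton and the átomaton have the same number of states.
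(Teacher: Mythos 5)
Your proof is correct and follows essentially the same route as the paper, which simply declares the argument ``analogous to \Cref{minimalityatomaton}'' and lists the case-specific ingredients \Cref{neighbourhoodalgebra}, \Cref{xoroutputalgebra}, \Cref{alphaxorneighbourhooddistrlaw}, \Cref{xorbasisstateminimal}, \Cref{imgxorcaba}, and the fact that a subspace of a finite vector space has smaller or equal dimension — all of which you invoke in the right places. Your spelled-out version, including the observation that the subspace-dimension bound plays the role of \Cref{atomscaba}, is a faithful and somewhat more explicit rendering of the intended proof.
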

\begin{proof}
	Analogous to the proof of \Cref{minimalityatomaton}. Specific to this case are \Cref{neighbourhoodalgebra}, \Cref{xoroutputalgebra}, \Cref{alphaxorneighbourhooddistrlaw}, \Cref{xorbasisstateminimal}, \Cref{imgxorcaba} and the observation that if $A \subseteq B$ is a subvector space of a finite vector space $B$, then $\dimension(A) \leq \dimension(B)$.
\end{proof}

We conclude with a size-comparison between acceptors that is parametric in the closure of derivatives.

\begin{corollary}
\label{sizecomparison}
	\begin{itemize}
		\item If $\overline{\Der(L)}^{\mathbb{Z}_2\Vect} = \overline{\Der(L)}^{\CABA}$, then the minimal xor automaton and the minimal xor-CABA automaton for $L$ are of the same size.
		\item If $\overline{\Der(L)}^{\CSL} = \overline{\Der(L)}^{\CDL}$, then the canonical RFSA and the distromaton for $L$ are of the same size.
		\item If $\overline{\Der(L)}^{\CSL} = \overline{\Der(L)}^{\CABA}$, then the canonical RFSA and the \'atomaton for $L$ are of the same size. 
	\end{itemize}
\end{corollary}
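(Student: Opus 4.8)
The plan is to handle all three items by a single uniform strategy: read off the size of each acceptor as an intrinsic invariant of a suitable closure of $\Der(L)$ inside the final coalgebra $2^{A^*}$, and then observe that equality of two such closures \emph{as subsets} of $2^{A^*}$ forces the invariants to coincide. First I would collect, from the constructions earlier in this chapter, the sizes involved: the canonical RFSA has size $\vert J(\overline{\Der(L)}^{\CSL}) \vert$ (\Cref{canonicalrfsaexample}, \Cref{joinirreducstateminimal}); the distromaton has size $\vert J(\overline{\Der(L)}^{\CDL}) \vert$ (\Cref{distromatonexample}); the átomaton has size $\vert \At(\overline{\Der(L)}^{\CABA}) \vert$ (\Cref{atomatonexample}, \Cref{atomsstateminimalgenerator}); the minimal xor automaton has size $\dimension(\overline{\Der(L)}^{\mathbb{Z}_2\Vect})$ (\Cref{minimalxorexmaple}); and the minimal xor-CABA automaton has size $\dimension(\overline{\Der(L)}^{\CABA})$, the dimension of the CABA regarded as a $\mathbb{Z}_2$-vector space along $\alpha$ (\Cref{minimalxorcabaexample}).

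The key preliminary observation is that in each mixed acceptor (distromaton, átomaton, xor-CABA) the generating structure is computed only after transporting the $S$-algebra structure on the closure to a $T$-algebra structure along the distributive law homomorphism $\alpha$. By \Cref{alphapreservesfinal}, the $T$-algebra structure obtained on $2^{A^*}$ by applying $\alpha$ to the final $\lambda^S$-bialgebra is exactly the final $\lambda^T$-bialgebra structure; concretely, this rests on the identity $h^S \circ \alpha_2 = h^T$ that underlies \Cref{distributivelawaxiomeasier}. Consequently the join-semilattice structure inherited by $\overline{\Der(L)}^{\CDL}$ and by $\overline{\Der(L)}^{\CABA}$ (transported to $\mathcal{P}$) is literally the restriction of the join on $2^{A^*}$, and the $\mathbb{Z}_2$-vector space structure inherited by $\overline{\Der(L)}^{\CABA}$ (transported to $\mathcal{R}$) is the restriction of the symmetric-difference vector space on $2^{A^*}$. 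Thus all the invariants above are computed with respect to one and the same ambient order, respectively vector space.

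With this in hand each item is a short argument. For the second item, if $\overline{\Der(L)}^{\CSL} = \overline{\Der(L)}^{\CDL}$ as subsets, they form the same join-semilattice, hence carry the same partial order and the same set of join-irreducibles, so $\vert J(\overline{\Der(L)}^{\CSL}) \vert = \vert J(\overline{\Der(L)}^{\CDL}) \vert$. For the third item I additionally invoke the fact, recalled in the proof of \Cref{atomscaba}, that in a complete atomic Boolean algebra atoms and join-irreducibles coincide; hence the átomaton's size $\vert \At(\overline{\Der(L)}^{\CABA}) \vert$ equals $\vert J(\overline{\Der(L)}^{\CABA}) \vert$, which under the hypothesis $\overline{\Der(L)}^{\CSL} = \overline{\Der(L)}^{\CABA}$ equals $\vert J(\overline{\Der(L)}^{\CSL}) \vert$, the RFSA size. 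For the first item, equality of $\overline{\Der(L)}^{\mathbb{Z}_2\Vect}$ and $\overline{\Der(L)}^{\CABA}$ as subsets identifies them as one and the same $\mathbb{Z}_2$-subspace of $2^{A^*}$, so they have equal dimension and equal-size bases.

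The only step demanding genuine care — the main obstacle — is justifying that the algebraic structures being compared are the restrictions of the operations on $2^{A^*}$, rather than some a priori different operations arising from the $\mathcal{A}$-, $\mathcal{H}$-, or $\mathcal{R}$-closures. Once \Cref{alphapreservesfinal} pins these down, join-irreducibility and linear dimension are intrinsic to the shared ambient structure, and the remaining reasoning is routine order-theoretic and linear-algebraic bookkeeping.
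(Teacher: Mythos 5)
Your proof is correct, but it takes a genuinely different route from the paper. The paper argues by two inequalities, each obtained from the minimality corollaries already established: for instance, in the first item it invokes \Cref{minimalxor} to get that the minimal xor automaton is no larger than the xor-CABA automaton, and then uses the chain $\overline{\Der(L)}^{\CABA} = \overline{\Der(L)}^{\mathbb{Z}_2\Vect} = \img(\obs_{\expa_{\mathcal{R}}(\mathcal{X})}) = \overline{\im(\obs^{\dagger}_{\mathcal{X}})}^{\mathbb{Z}_2\Vect}$ (via \Cref{obsgenerated} and \Cref{imgxorcaba}) to place the minimal xor automaton inside the competition class of \Cref{corminimalxorcaba}, yielding the reverse inequality; the other two items are analogous, using \Cref{canonicalrfsaminimal}, \Cref{distromatonminimal} and \Cref{minimalityatomaton}. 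You instead read off each acceptor's size as an intrinsic invariant of the relevant closure (join-irreducibles, atoms, $\mathbb{Z}_2$-dimension) and observe that equality of the closures as subsets of $2^{A^*}$ identifies these invariants; the load-bearing step, which you correctly isolate, is that the transported algebraic structures are restrictions of the ambient operations on $2^{A^*}$, which does follow from \Cref{alphapreservesfinal} together with the functoriality of $\alpha$ applied to the subalgebra inclusions. The paper's route stays entirely within the abstract minimality framework and needs no concrete description of the operations, so it transfers verbatim to any future pair of canonical acceptors covered by \Cref{minimalitytheorem}; your route is more computational but also more informative, since it exhibits the common size explicitly (e.g.\ as $\vert J(\overline{\Der(L)}^{\CSL})\vert$ or $\dimension(\overline{\Der(L)}^{\mathbb{Z}_2\Vect})$) rather than merely asserting that the two sizes agree.
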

\begin{proof}
	\begin{itemize}
		\item By \Cref{minimalxor} the minimal xor automaton $\mathcal{X}$ is of size not greater than the minimal xor-CABA automaton $\mathcal{Y}$. Conversely, we find
		\begin{align*}
			\overline{\Der(L)}^{\CABA} &= \overline{\Der(L)}^{\mathbb{Z}_2\Vect} && \textnormal{(Assumption)} \\
			&= \im(\obs_{\expa_{\mathcal{R}}(\mathcal{X})}) && \textnormal{(\Cref{obsgenerated})}  \\
			&= \overline{\im(\obs^{\dagger}_{\mathcal{X}})}^{\mathbb{Z}_2\Vect} && \textnormal{(\Cref{imgxorcaba})}
		\end{align*}
		which can be used to show $\overline{\im(\obs^{\dagger}_{\mathcal{X}})}^{\mathbb{Z}_2\Vect} = \overline{\im(\obs^{\dagger}_{\mathcal{X}})}^{\CABA}$. By \Cref{corminimalxorcaba} the latter implies that $\mathcal{Y}$ is of size not greater than $\mathcal{X}$, which shows the claim.
	\item Let $\mathcal{X}$ denote the canonical RFSA and $\mathcal{Y}$ the distromaton. On the one hand we find
	\begin{align*}
			\overline{\Der(L)}^{\CSL} &= \overline{\Der(L)}^{\CDL} && \textnormal{(Assumption)} \\
			&= \im(\obs_{\expa_{\mathcal{P}}(\mathcal{Y})}) && \textnormal{(\Cref{obsgenerated})}  \\
			&= \overline{\im(\obs^{\dagger}_{\mathcal{Y}})}^{\CSL} && \textnormal{(\Cref{imgdistromaton})}
		\end{align*}	
		which by \Cref{canonicalrfsaminimal} implies that $\mathcal{X}$ is of size not greater than $\mathcal{Y}$. Conversely, we establish the equality
		\begin{align*}
			\overline{\Der(L)}^{\CDL} &= \overline{\Der(L)}^{\CSL} && \textnormal{(Assumption)} \\
			&= \im(\obs_{\expa_{\mathcal{P}}(\mathcal{X})}) && \textnormal{(\Cref{obsgenerated})}  \\
			&= \overline{\im(\obs^{\dagger}_{\mathcal{X}})}^{\CSL} && \textnormal{(\Cref{imgdistromaton})}
		\end{align*}	
		which can be used to show $\overline{\im(\obs^{\dagger}_{\mathcal{X}})}^{\CSL} = \overline{\im(\obs^{\dagger}_{\mathcal{X}})}^{\CDL}$. By \Cref{distromatonminimal} the latter implies that $\mathcal{Y}$ is of size not greater than $\mathcal{X}$, which shows the claim.
		\item The proof for the \'atomaton is analogous to the proof for the distromaton in the previous point. \qedhere
	\end{itemize}
\end{proof}

\section{Related Work}

\label{relatedwork}

One of the motivations for our work are active learning algorithms for the derivation of succinct state-based models \cite{angluin1987learning}. A major challenge in learning non-deterministic models is the lack of a canonical target acceptor for a given language \cite{denis2001residual}. The problem has been independently approached for different variants of non-determinism, often with the idea of finding a subclass admitting a unique representative \cite{esposito2002learning,berndt2017learning} such as the canonical RFSA, the minimal xor automaton, or the \'atomaton. 

 A more general and unifying perspective on learning automata that may not have a canonical target was given by Van Heerdt \cite{van2020learning, van2016master, van2020phd}. One of the central notions in this work is the concept of a scoop, originally introduced by Arbib and Manes \cite{arbib1975fuzzy} and here referred to as a generator. The main contribution in \cite{van2020learning} is a general procedure to find irreducible sets of generators, which thus restricts the work to the category of sets. In this chapter we generally work over arbitrary categories, although we assume the existence of a minimal set-based generator in \Cref{minimalitytheorem}. Furthermore, the work of Van Heerdt has no size-minimality results.

Closely related to the content of this chapter is the work of Myers et al.\ \cite{MyersAMU15}, who present a coalgebraic construction for canonical non-deterministic automata. They cover the canonical RFSA, the minimal xor automaton, the \'atomaton, and the distromaton. The underlying idea in \cite{MyersAMU15} for finding succinct representations is similar to ours: first they build the minimal DFA for a regular language in a locally finite variety, then they apply an equivalence between the category of finite algebras and a suitable category of finite structured sets and relations.
On the one hand, the category of finite algebras in a locally finite variety can be translated into our setting by considering a category of algebras over a monad preserving finite sets. In fact, modulo this translation, many of the categories considered here already appear in \cite{MyersAMU15}, e.g.\ vector spaces, Boolean algebras, complete join-semi lattices, and distributive lattices. On the other hand, their construction seems to be restricted to the category of sets and non-deterministic automata, while we work over arbitrary monads on arbitrary categories. Their work does not provide a general algorithm to construct a succinct automaton, i.e., the specifics vary with the equivalences considered, while we give a general definition and a soundness argument in \Cref{generatorbialgebrahom}. While Myers et al. \cite{MyersAMU15} give minimality results for a wide range of acceptors, each proof follows case-specific arguments. In \Cref{minimalitytheorem} we provide a unifying minimality result for succinct automata that implies \Cref{minimalityatomaton} (cf. \cite[Theorem 4.9]{MyersAMU15}), \Cref{minimalxor} (cf. \cite[Theorem 4.10]{MyersAMU15}), \Cref{canonicalrfsaminimal} (cf. \cite[Corollary 4.11]{MyersAMU15}), \Cref{distromatonminimal} (cf. \cite[Theorem 4.13]{MyersAMU15}).

\section{Discussion and Future Work}

\label{discussion}

We have presented a general categorical framework based on bialgebras and distributive law homomorphisms for the derivation of canonical automata. The framework instantiates to a wide range of well-known examples from the literature and allowed us to discover a previously unknown canonical acceptor for regular languages. Finally, we presented a theorem that subsumes previously independently proven minimality results for canonical acceptors, implied new characterisations, and allowed us to make size-comparisons between canonical automata. 

In the future, we would like to cover other examples, such as the canonical probabilistic RFSA \cite{esposito2002learning} and the canonical alternating RFSA \cite{berndt2017learning, angluin2015learning}.
Probabilistic automata of the type in \cite{esposito2002learning} are typically modelled as $TF$-coalgebras instead of $FT$-coalgebras \cite{jacobs2012trace}, and thus will need a shift in perspective.
For alternating RFSAs we expect a canonical form can be constructed in the spirit of this work, from generators for algebras over the neighbourhood monad, by interpreting the join-dense atoms of a CABA as a full meet of ground elements. 

Generally, it would be valuable to have a more systematic treatment of the range of available monads and distributive law homomorphisms \cite{zwart2019no}, making use of the fact that distributive law homomorphisms compose.

Further generalisation in another direction could be achieved by distributive laws between monads and endofunctors on different categories. For instance, we expect that operations on automata as the product can be captured by homomorphisms between distributive laws of such more general type. 

Finally, we would like to lift existing double-reversal characterisations of the minimal DFA \cite{brzozowski1962canonical}, the \'atomaton \cite{BrzozowskiT14}, the distromaton \cite{MyersAMU15}, and the minimal xor automaton \cite{VuilleminG210} to general canonical automata. The work in \cite{bonchi2012brzozowski, bonchi2014algebra} gives a coalgebraic generalisation of Brzozowski's algorithm based on dualities between categories, but does not cover the cases we are interested in. The framework in \cite{adamek2012coalgebraic} recovers the \'atomaton as the result of a minimisation procedure, but does not consider other canonical acceptors.

\chapter{Generating Monadic Closures}

  We have seen that every regular language admits a unique size-minimal deterministic acceptor, while there can be several size-minimal non-deterministic acceptors that are not isomorphic (cf. \Cref{example_nonisomorphic_nfa_canonical_automata_chapter}). We also have seen that to tackle this issue, authors have identified a number of sub-classes of non-deterministic automata, all admitting canonical minimal representatives. In \Cref{canonical_automata_chapter} we demonstrated that such representatives can generally be recovered categorically in two steps. First, one constructs the minimal bialgebra accepting a given regular language, by closing the minimal coalgebra with additional algebraic structure over a monad. Second, one identifies canonical generators for the algebraic part of the bialgebra, to derive an equivalent coalgebra with side effects in a monad. In this chapter, we further develop the general theory underlying these two steps. On the one hand, we show that deriving a minimal bialgebra from a minimal coalgebra can be realized by applying a monad on an appropriate category of subobjects. On the other hand, we explore the abstract theory of generators and bases for algebras over a monad.

\section{Introduction}

Recall that the framework for the construction of canonical automata in \Cref{canonical_automata_chapter} adopts the well-known representation of automata as coalgebras (\Cref{def:coalgebra}) and side-effects like non-determinism as monads (\Cref{def:monad}). For instance, an NFA (without initial states) is represented as a coalgebra $k \colon X \to 2 \times \Pow(X)^A$ with side-effects in the powerset monad $\Pow$. 

To derive canonical non-deterministic acceptors, we suggested an idea that is closely related to the (generalised) \emph{powerset construction} of \Cref{determinisationexample}. Under the latter construction, a coalgebra $k: X \rightarrow FTX$ with dynamics in a functor $F$ and side-effects in a monad $T$ is transformed into an equivalent coalgebra $k^{\sharp}: TX \rightarrow FTX$ \cite{silva2010generalizing}. The deterministic automata resulting from such determinisation constructions have \emph{additional algebraic structure}: the state-space $TX$ defines a (free) algebra for the monad $T$, and $k^\sharp$ is a $T$-algebra homomorphism, thus constituting a \emph{bialgebra} over a distributive law relating $F$ and $T$ (\Cref{bialgebradef}). Using the powerset construction, we were able to obtain a canonical succinct acceptor for a regular language $L \subseteq A^*$ by consecutively following two steps.

First, we constructed the minimal\footnote{Minimal in the sense that every state is reachable by an element of $A^*$ and no two different states observe the same language.} pointed coalgebra $\mathsf{M}_L$ for the functor $F = 2 \times (-)^{A}$ accepting $L$. We then equipped the former with additional algebraic structure in a monad $T$ by applying the determinisation procedure to $\textsf{M}_L$ when seen as coalgebra with trivial side-effects in $T$ (cf. \Cref{freeexpfunctor}). By identifying semantically equivalent states, we then derived the minimal\footnote{Minimal in the sense that every state is reachable by an element of $T(A^*)$ and no two different states observe the same language.} (pointed) bialgebra for $L$.

Second, we exploited the additional algebraic structure underlying the minimal bialgebra for $L$ to  ``reverse'' the generalised determinisation procedure. That is, we identified canonical generators (\Cref{generatordefinition}) to derive an equivalent succinct automaton with side-effects in $T$ (cf. \Cref{forgenerator-isharp-is-bialgebra-hom}).

 In this chapter, we further develop the general theory underlying these two steps by making the following contributions, respectively: 
\begin{itemize}
	\item We generalise the closure of a subset of an algebraic structure with respect to the latter as a functor  between categories of subobjects relative to a factorisation system (\Cref{functorprop}). We then equip the functor with the structure of a monad (\Cref{inducedmonad}). 
	 We investigate the closure of a particular subclass of subobjects: the ones that arise by taking the image of a morphism (\Cref{closureofimage}). We show that deriving a minimal bialgebra from a minimal coalgebra can be realized by applying the monad to a subobject in this particular class (\Cref{closure_of_coalg}). 
	 	\item We define a category of algebras with generators (\Cref{generatorcategory}), which is in adjunction with the category of Eilenberg-Moore algebras (\Cref{galgfreeforgetful}), and, under certain assumptions, monoidal (\Cref{monoidalproduct}).
We generalise the matrix representation theory of vector spaces (\Cref{representationtheorysec}) and discuss bases for bialgebras, which are algebras over a particular monad (\Cref{basesforbialgebrasec}).
We compare our ideas with an approach that generalises bases as coalgebras (\Cref{basesascoaglebrassec}). We find that a basis in our sense induces a basis the sense of \cite{jacobs2011bases} (\Cref{impliesjacobsbasis}), and identify assumptions under which the reverse is true, too.
	We characterise generators for finitary varieties in the sense of universal algebra (\Cref{sigmatermlemma}) and relate our work to the theory of locally finitely presentable categories (\Cref{finitarymonadlemma}).
\end{itemize}

\section{Step 1: Closure}
\label{closure_sec}

In this section, we further explore the categorical construction of minimal canonical acceptors given in \Cref{canonical_automata_chapter}. In particular, we show that deriving a minimal bialgebra from a minimal coalgebra by closing the latter with additional algebraic structure has a direct analogue in universal algebra: taking the closure of a subset of an algebra.

\subsection{Factorisation Systems and Subobjects}

In the category of sets and functions, every morphism can be factored into a surjection onto its image followed by an injection into the codomain of the morphism. In this section, we recall an abstraction of this phenomenon for arbitrary categories. The ideas are well established \cite{bousfield1977constructions,riehl2008factorization,maclane1950duality}. We choose to adapt the formalism of \cite{adamek2009abstract}.

\begin{definition}[Factorisation System]
\label{factorisation_system_def}
Let $\mathscr{E}$ and $\mathscr{M}$ be classes of morphisms in a category $\mathscr{C}$. We call the pair $(\mathscr{E}, \mathscr{M})$ a \emph{factorisation system} for $\mathscr{C}$ if the following three conditions hold:
\begin{itemize}
	\item[(F1)] Each of $\mathscr{E}$ and $\mathscr{M}$ is closed under composition with isomorphisms.
	\item[(F2)] Each morphism $f$ in $\mathscr{C}$ can be factored as $f = m \circ e$, with $e \in \mathscr{E}$ and $m \in \mathscr{M}$.
	\item[(F3)] For each commutative square with $e \in \mathscr{E}$ and $m \in \mathscr{M}$ as on the left below
	 \begin{equation*}
			\begin{tikzcd}
				\cdot \arrow{r}{e} \arrow{d}[left]{f} &\cdot \arrow{d}{g} \\
				\cdot \arrow{r}[below]{m} &\cdot
			\end{tikzcd}
			\qquad
			\begin{tikzcd}
							\cdot \arrow{r}{e} \arrow{d}[left]{f} &\cdot \arrow{d}{g} \arrow[dashed]{dl}{d} \\
				\cdot \arrow{r}[below]{m} &\cdot
			\end{tikzcd}
		\end{equation*}
	there exists a unique diagonal $d$ such that the diagram on the right above commutes.
\end{itemize}	
\end{definition}

We will use double headed ($\twoheadrightarrow$) and hooked ($\hookrightarrow$) arrows to indicate that a morphism is in $\mathscr{E}$ and $\mathscr{M}$, respectively. If $f$ factors into $e$ and $m$, we call the codomain of $e$, or equivalently, the domain of $m$, the \emph{image} of $f$ and denote it by $\im(f)$.

One can show that each of $\mathscr{E}$ and $\mathscr{M}$ contains all isomorphisms and is closed under composition \cite[Prop. 14.6]{adamek2009abstract}.
From the uniqueness condition on the diagonal one can deduce that factorisations are unique up to unique isomorphism \cite[Prop. 14.4]{adamek2009abstract}. 
It further follows that $\mathscr{E}$ has the \emph{right cancellation property}, that is $g \circ f \in \mathscr{E}$ and $f \in \mathscr{E}$ implies $g \in \mathscr{E}$. Dually, $\mathscr{M}$ has the \emph{left cancellation property}, that is, $g \circ f \in \mathscr{M}$ and $g \in \mathscr{M}$ implies $f \in \mathscr{M}$ \cite[Prop. 14.9]{adamek2009abstract}.

As intended, in the category of sets and functions, surjective and injective functions, or equivalently, epi- and monomorphisms, constitute a factorisation system \cite[Ex. 14.2]{adamek2009abstract}. More involved examples can be constructed for e.g. the category of topological spaces or the category of categories \cite[Ex. 14.2]{adamek2009abstract}. We are particularly interested in factorisation systems for the category of algebras over a monad.

The naive categorical generalisation of a subset $Y \subseteq X$ is a monomorphism $Y \rightarrow X$. Since in the category of sets epi- and monomorphism constitute a factorisation system, we may generalise subsets to arbitrary categories $\mathscr{C}$ with a factorisation system $(\mathscr{E}, \mathscr{M})$ in the following way:   

\begin{definition}[Subobjects]
\label{subobject_def}
A \emph{subobject} of an object $X \in \mathscr{C}$ is a morphism $m_Y: Y \hookrightarrow X \in \mathscr{M}$. A morphism $f: m_{Y_1} \rightarrow m_{Y_2}$ between subobjects of $X$ consists of a morphism $f: Y_1 \rightarrow Y_2$ such that $m_{Y_2} \circ f = m_{Y_1}$. 
\end{definition}

The category of (isomorphism classes of) subobjects of $X$ is denoted by $\Sub(X)$.

 As $\mathscr{M}$ has the left cancellation property, every morphism between subobjects in fact lies in $\mathscr{M}$. We work with isomorphism classes of subobjects since factorisations of morphisms are only defined up to unique isomorphism. For epi-mono factorisations, there is at most one morphism between any two subobjects, that is, $\Sub(X)$ is simply a partially ordered set.

\subsection{Factorising Algebra Homomorphisms}

 \label{factorisationsystemalgebra}

In this section, we recall that if one is given a category $\mathscr{C}$ with a factorisation system $(\mathscr{E}, \mathscr{M})$ and a monad $T$ on $\mathscr{C}$ that preserves $\mathscr{E}$ (i.e. satisfies $T(e) \in \mathscr{E}$ for all $e \in \mathscr{E}$), it is possible to lift the factorisation system of the base category $\mathscr{C}$ to a factorisation system on the category of Eilenberg-Moore algebras $\EM$. The result appears in e.g. \cite{wissmann2022minimality} and may be extended to algebras over an endofunctor. Alternatively, it can be stated in its dual version: if an endofunctor on $\mathscr{C}$ preserves $\mathscr{M}$, it is possible to lift the factorisation system of $\mathscr{C}$ to the category of coalgebras \cite{kurzlogics, wissmann2022minimality}. 

The factorisation system we propose for $\EM$ consists of those algebra homomorphisms, whose underlying morphism lies in $\mathscr{E}$ or $\mathscr{M}$, respectively. Clearly such a system preserves (F1). The next result shows that it also satisfies (F3). Note that the statement is slightly more general, as it holds not just for Eilenberg-Moore algebras over the monad $T$, but for algebras over the underlying endofunctor.

\begin{lemma}[{\cite[Lem. 3.6]{wissmann2022minimality}}]
\label{diagonal_algebras}
For each commutative square of homomorphisms between algebras for the endofunctor $T$ as on the left below
\begin{equation*}
		\begin{tikzcd}
	(A, h_A) \arrow[twoheadrightarrow]{r}{e} \arrow{d}[left]{f} & (B, h_B) \arrow{d}{g}  \\
	(C, h_C) \arrow[hookrightarrow]{r}[below]{m} &(D, h_D)
\end{tikzcd}
\qquad
	\begin{tikzcd}
	(A, h_A) \arrow[twoheadrightarrow]{r}{e} \arrow{d}[left]{f} & (B, h_B) \arrow{d}{g} \arrow[dashed]{dl}{d} \\
	(C, h_C) \arrow[hookrightarrow]{r}[below]{m} &(D, h_D)
\end{tikzcd}
\end{equation*}
there exists a unique diagonal $d: (B, h_B) \rightarrow (C, h_C)$ such that the diagram on the right above commutes.
\end{lemma}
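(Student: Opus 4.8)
The plan is to reduce the statement to the factorisation system on the base category $\mathscr{C}$ and then verify that the resulting diagonal respects the algebra structure. First I would forget the algebra structure and regard the square purely as a commutative square in $\mathscr{C}$ whose top $e$ lies in $\mathscr{E}$ and whose bottom $m$ lies in $\mathscr{M}$. Property (F3) of $(\mathscr{E}, \mathscr{M})$ then yields a unique morphism $d \colon B \to C$ in $\mathscr{C}$ satisfying $d \circ e = f$ and $m \circ d = g$. Since a $T$-algebra homomorphism is entirely determined by its underlying $\mathscr{C}$-morphism, this immediately settles both triangle identities and the uniqueness claim; the only remaining task is to show that $d$ is in fact a homomorphism $(B, h_B) \to (C, h_C)$, i.e.\ that $d \circ h_B = h_C \circ Td$.

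To establish this, I would exhibit both $d \circ h_B$ and $h_C \circ Td$ (two morphisms $TB \to C$) as diagonal fillers of a single commutative square, and conclude their equality from the uniqueness part of (F3). The appropriate square has top edge $Te \colon TA \to TB$, bottom edge $m \colon C \hookrightarrow D$, left edge $f \circ h_A$, and right edge $g \circ h_B$. Here the crucial hypothesis enters: because $T$ preserves $\mathscr{E}$, we have $Te \in \mathscr{E}$, so (F3) applies. Commutativity of this square follows from the commutativity $m \circ f = g \circ e$ of the original square together with the fact that $e$ is a homomorphism, which gives $m \circ f \circ h_A = g \circ e \circ h_A = g \circ h_B \circ Te$.

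It then remains to check, by routine diagram chasing, that each of the two candidate morphisms is a diagonal of this square: for $d \circ h_B$ one uses that $e$ is a homomorphism together with the triangle identities $d \circ e = f$ and $m \circ d = g$; for $h_C \circ Td$ one uses that $f$, $g$, and $m$ are homomorphisms together with $d \circ e = f$ and $m \circ d = g$, turning $m \circ h_C \circ Td$ into $h_D \circ Tg = g \circ h_B$ and $h_C \circ Td \circ Te = h_C \circ Tf = f \circ h_A$. Since the diagonal of a square is unique, the two morphisms coincide, which is exactly the homomorphism condition $d \circ h_B = h_C \circ Td$.

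I expect the main obstacle to be organisational rather than conceptual: correctly choosing the auxiliary square and keeping track of which arrows are homomorphisms, so that each homomorphism identity is applied on the correct side of each composite. The single genuinely load-bearing assumption is that $T$ preserves $\mathscr{E}$, which is precisely what guarantees $Te \in \mathscr{E}$ and hence the applicability of (F3) to the auxiliary square; without it the argument that $d$ is a homomorphism breaks down, and only the plain $\mathscr{C}$-diagonal would survive.
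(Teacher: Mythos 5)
Your proposal is correct and follows essentially the same route as the paper's proof: obtain the diagonal from (F3) in the base category, then show $d \circ h_B$ and $h_C \circ Td$ are both fillers of the auxiliary square with top edge $Te \in \mathscr{E}$, left edge $f \circ h_A$, right edge $g \circ h_B$, and bottom edge $m$, concluding by uniqueness of diagonals. All the diagram-chasing steps you indicate check out.
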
	
\begin{proof}
The proof for {\cite[Lem. 3.6]{wissmann2022minimality}} consists of a corollary to the dual statement for coalgebras \cite[Lem. 3.3]{wissmann2022minimality}. Below we offer an explicit version for algebras.

Since any commuting diagram of algebra homomorphism projects to a commuting diagram in $\mathscr{C}$, the factorisation system of $\mathscr{C}$ implies the existence of a unique diagonal $d$ in $\mathscr{C}$. It remains to show that $d$ is an algebra homomorphism, that is, we need to establish the following identity:
\[
h_C \circ Td = d \circ h_B.
\]
	To this end, we observe that since the following two diagrams commute
		\begin{equation*}
		\begin{tikzcd}
			TA \arrow{dr}{Tf} \arrow{d}[left]{h_A} \arrow{rrr}{Te} & & & TB \arrow{dl}{Tg} \arrow{dll}[above]{Td} \arrow{d}{h_B} \\
			A \arrow{d}[left]{f} & TC \arrow{dl}{h_C} \arrow{r}[below]{Tm} & TD \arrow{dr}{h_D} & B \arrow{d}{g} \\
			C \arrow{rrr}[below]{m} &&& D
		\end{tikzcd}
		\qquad
					\begin{tikzcd}
				TA \arrow{rr}{Te} \arrow{d}[left]{h_A} & & TB \arrow{dl}{h_B} \arrow{d}{h_B} \\
				A \arrow{d}[left]{f} \arrow{r}{e} & B \arrow{dl}{d} \arrow{dr}{g} & B \arrow{d}{g} \\
				C \arrow{rr}[below]{m} & & D		
					\end{tikzcd}
		\end{equation*}
	both $h_C \circ Td$ and $d \circ h_B$ are solutions to the unique diagonal below:
	\begin{equation*}
						\begin{tikzcd}							TA \arrow[twoheadrightarrow]{r}{Te} \arrow{d}[left]{f \circ h_A} & TB \arrow[dashed]{dl}{} \arrow{d}{g \circ h_B} \\
							C \arrow[hookrightarrow]{r}[below]{m} & D
						\end{tikzcd}. 
	\end{equation*}
\end{proof}

Let us now show that the proposed factorisation system satisfies (F2). Assume we are given a homomorphism $f$ as on the left of \Cref{monadpreserveepilift}. Using the factorisation system of the base category $\mathscr{C}$, we can factorise it, as ordinary morphism, into $e \in \mathscr{E}$ and $m \in \mathscr{M}$. In consequence the outer square of the diagram on the right of \Cref{monadpreserveepilift} commutes. Since by assumption the morphism $Te$ is again in $\mathscr{E}$, we thus find a unique diagonal $h_{\im(f)}$ in $\mathscr{C}$ that makes the triangles on the right of \Cref{monadpreserveepilift}  commute. The result below shows that $h_{\im(f)}$ equips $\im(f)$ with the structure of a $T$-algebra.

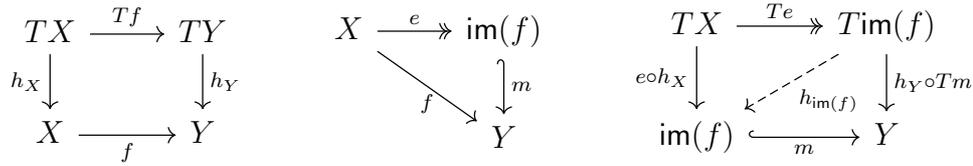
\begin{figure*}
\centering
		\begin{tikzcd}
		TX \arrow{d}[left]{h_X} \arrow{r}{Tf} & TY \arrow{d}{h_Y} \\
		X \arrow{r}[below]{f} & Y
	\end{tikzcd}
	\qquad
	\begin{tikzcd}
		X \arrow[twoheadrightarrow]{r}{e}  \arrow{dr}[below]{f} & \im(f) \arrow[hook']{d}{m} \\
		  & Y 
	\end{tikzcd}
	\qquad
			\begin{tikzcd}
		TX \arrow{d}[left]{e \circ h_X} \arrow[twoheadrightarrow]{r}{Te} & T\im(f) \arrow{d}{h_Y \circ Tm} \arrow[dashed]{dl}{h_{\im(f)}} \\
		\im(f) \arrow[hookrightarrow]{r}[below]{m} & Y
	\end{tikzcd}
\caption{Factorising a $T$-algebra homomorphism via the factorisation system of a base category}
\label{monadpreserveepilift}
\end{figure*}

\begin{lemma}[{\cite[Prop. 3.7]{wissmann2022minimality}}]
	$(\im(f), h_{\im(f)})$ is an Eilenberg-Moore $T$-algebra.
\end{lemma}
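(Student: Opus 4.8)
The plan is to verify that the pair $(\im(f), h_{\im(f)})$ satisfies the two Eilenberg-Moore laws (\Cref{def:algebraovermonad}): the unit law $h_{\im(f)} \circ \eta_{\im(f)} = \id_{\im(f)}$ and the multiplication law $h_{\im(f)} \circ \mu_{\im(f)} = h_{\im(f)} \circ Th_{\im(f)}$. The essential difficulty is that $h_{\im(f)}$ is not given by an explicit formula, but is only characterised implicitly as the unique diagonal, that is, by the two triangle identities $h_{\im(f)} \circ Te = e \circ h_X$ and $m \circ h_{\im(f)} = h_Y \circ Tm$ (\Cref{monadpreserveepilift}). I would therefore avoid direct computation with $h_{\im(f)}$ and instead establish both laws through the uniqueness clause (F3) of the factorisation system: for each law I would exhibit a commuting square whose top edge is an $\mathscr{E}$-morphism obtained by applying $T$ to $e$, whose bottom edge is $m \in \mathscr{M}$, and then show that both sides of the desired equation are diagonals for this square.

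For the unit law I would use the square with top $e \colon X \twoheadrightarrow \im(f)$ and bottom $m \colon \im(f) \hookrightarrow Y$ (which commutes since $m \circ e = f = m \circ e$), whose unique diagonal is plainly $\id_{\im(f)}$. I would then check that $h_{\im(f)} \circ \eta_{\im(f)}$ is also a diagonal: the left triangle follows from naturality of $\eta$ (giving $\eta_{\im(f)} \circ e = Te \circ \eta_X$), the defining identity $h_{\im(f)} \circ Te = e \circ h_X$, and the unit law for $(X, h_X)$; the right triangle follows symmetrically from naturality of $\eta$, the identity $m \circ h_{\im(f)} = h_Y \circ Tm$, and the unit law for $(Y, h_Y)$. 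Uniqueness then forces $h_{\im(f)} \circ \eta_{\im(f)} = \id_{\im(f)}$.

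The multiplication law is where the real work lies, and I expect it to be the main obstacle. Here I would work with the square whose top edge is $T^2 e$; this lands in $\mathscr{E}$ precisely because $T$ preserves $\mathscr{E}$, so that $Te$, and hence $T^2 e = T(Te)$, are again in $\mathscr{E}$, while its commutativity is the delicate point, relying on $f$ being an algebra homomorphism ($f \circ h_X = h_Y \circ Tf$) together with naturality of $\mu$. I would then verify that both $h_{\im(f)} \circ \mu_{\im(f)}$ and $h_{\im(f)} \circ Th_{\im(f)}$ are diagonals: precomposing each with $T^2 e$ and using the triangle identities and naturality of $\mu$ reduces the two expressions to $e \circ h_X \circ \mu_X$ and $e \circ h_X \circ Th_X$, which agree by the multiplication law for $(X,h_X)$; postcomposing each with $m$ reduces them to $h_Y \circ \mu_Y \circ T^2 m$ and $h_Y \circ Th_Y \circ T^2 m$, which agree by the multiplication law for $(Y, h_Y)$. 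Uniqueness of the diagonal then yields the multiplication law. Should one additionally invoke that every $\mathscr{E}$-morphism is an epimorphism, as holds for the epi-mono system and more generally for any orthogonal factorisation system, the argument shortens to a single cancellation of the epimorphism $T^2 e$ after the precomposition computation.
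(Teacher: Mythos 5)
Your proposal is correct and follows essentially the same route as the paper: both laws are obtained by exhibiting the two sides as diagonals of a commuting $(\mathscr{E},\mathscr{M})$-square with top edge $e$ (respectively $T^2e$, which lies in $\mathscr{E}$ since $T$ preserves $\mathscr{E}$) and bottom edge $m$, and then invoking uniqueness of the diagonal. The only cosmetic difference is your closing remark about cancelling the epimorphism $T^2e$ directly, which the paper does not use.
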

\begin{proof}
We need to establish the following two identities
\begin{align*}
	h_{\im(f)} \circ \eta_{\im(f)} &= \id_{\im(f)} \\
	h_{{\im(f)}} \circ \mu_{{\im(f)}} &= h_{\im(f)} \circ Th_{\im(f)}
\end{align*}
where the latter captures that $h_{\im(f)}: (T\im(f), \mu_{{\im(f)}}) \rightarrow (\im(f), h_{\im(f)} )$ is a $T$-algebra homomorphism.

For the first equality, we observe (as in the proof for (\cite[Prop. 3.7]{wissmann2022minimality})) that since the diagram below commutes
\begin{equation*}
	\begin{tikzcd}
					X \arrow{r}{1}  \arrow{dd}[left]{e} & X \arrow{d}[left]{\eta_X} \arrow{rrr}{e} & & & \im(f) \arrow{d}{m} \arrow{dll}{\eta_{\im(f)}} \\
					& TX \arrow{d}[left]{e \circ h_X} \arrow{r}{Te} & T\im(f) \arrow{dl}{h_{\im(f)}}\arrow{r}{Tm} & TY \arrow{dr}{h_Y} & Y \arrow{l}[above]{\eta_Y} \arrow{d}{1} \\
					\im(f) \arrow{r}[below]{1} & \im(f) \arrow{rrr}[below]{m} & & & Y
				\end{tikzcd}
\end{equation*}			
both $h_{\im(f)} \circ \eta_{\im(f)}$ and $\id_{\im(f)}$ are solutions to the unique diagonal in $\mathscr{C}$ below:
\begin{equation*}
\begin{tikzcd}
				X \arrow{d}[left]{e}	\arrow[twoheadrightarrow]{r}{e}  & \im(f) \arrow{d}{m} \arrow[dashed]{dl}{} \\
				\im(f) \arrow[hookrightarrow]{r}[below]{m} & Y
				\end{tikzcd}.
\end{equation*}
Similarly, for the second equality, we observe that since the following two diagrams commute
\begin{equation*}
		\begin{tikzcd}[column sep = 1em, row sep =1.5em ]
				T^2X \arrow{dd}[left]{\mu_X} \arrow{rrr}{T^2e} & & &	T^2 \im(f) \arrow{ddll}{\mu_{\im(f)}} \arrow{d}{T^2m} \\
				& & & T^2Y \arrow{d}{Th_Y} \arrow{dl}{\mu_Y} \\
		TX \arrow{d}[left]{e \circ h_X} \arrow{r}{Te} & T\im(f) \arrow{d}{h_{\im(f)}} \arrow{r}{Tm} & TY   \arrow{dr}{h_Y} & TY \arrow{d}{h_Y} \\
			\im(f) \arrow{r}[below]{1} & \im(f) \arrow{rr}[below]{m} & &Y	
		\end{tikzcd}
\qquad
	\begin{tikzcd}[column sep = 1em, row sep =1.5em ]
		T^2 X \arrow{r}{1} \arrow{d}[left]{\mu_X} & T^2X \arrow{d}{Th_X} \arrow{r}{T^2e} & T^2 	\im(f) \arrow{ddl}{Th_{\im(f)}} \arrow{d}{T^2m} \\
		TX \arrow{d}[left]{h_X} & TX \arrow{dl}{h_X} \arrow{d}{Te} & T^2Y \arrow{d}{Th_Y} \\
		X \arrow{d}[left]{e} & T\im(f) \arrow{dl}{h_{	\im(f)}} \arrow{r}{Tm} & TY \arrow{d}{h_Y} \\
			\im(f) \arrow{rr}[below]{m} & & Y
	\end{tikzcd}
\end{equation*}
both $h_{_{\im(f)}} \circ \mu_{{\im(f)}}$ and  $h_{\im(f)} \circ Th_{\im(f)}$ are solutions to the unique diagonal below:
\begin{equation*}
\begin{tikzcd}
				T^2X \arrow{d}[left]{e \circ h_X \circ \mu_X}	\arrow[twoheadrightarrow]{r}{T^2e}  & T^2\im(f) \arrow{d}{h_Y \circ Th_Y \circ T^2m} \arrow[dashed]{dl}{} \\
				\im(f) \arrow[hookrightarrow]{r}[below]{m} & Y
				\end{tikzcd}
\end{equation*}
Alternatively (as in the proof for \cite[Prop. 3.7]{wissmann2022minimality}), one may observe that since the following outer square of homomorphisms between algebras for the endofunctor $T$ commutes
\[
\begin{tikzcd}
	(TX, \mu_X) \arrow[twoheadrightarrow]{r}{Te} \arrow{d}[left]{e \circ h_X} & (T\im(f), \mu_{\im(f)}) \arrow{d}{h_Y \circ Tm} \arrow[dashed]{dl}{} \\
	(\im(f), h_{\im(f)}) \arrow[hookrightarrow]{r}[below]{m} & (Y, h_Y)
\end{tikzcd}
\]
\Cref{diagonal_algebras} implies the existence of a unique diagonal algebra homomorphism making the two triangles above commute. As we know that the diagonal coincides with the unique diagonal of the corresponding diagram in $\mathscr{C}$, which is given by  $h_{\im(f)}$, we can deduce that the latter is a $T$-algebra homomorphism.
\end{proof}

We thus obtain the following factorisation of $f$ into Eilenberg-Moore $T$-algebra homomorphisms: 
\[
f = (X,h_X) \overset{e}{\twoheadrightarrow} (\im(f), h_{\im(f)}) \overset{m}{\hookrightarrow} (Y,h_Y).
\]

\subsection{The Subobject Closure Functor}

\label{closureasfunctorsec}

While subobjects in the category of sets generalise subsets, subobjects in the category of algebras generalise subalgebras. By taking the algebraic closure of a subset of an algebra one can thus transition from one category of subobjects to the other. 

In this section, we generalise this phenomenon from the category of sets to more general categories. As before, we assume a base category $\mathscr{C}$ with a factorisation system $(\mathscr{E}, \mathscr{M})$ and a monad $T$ on $\mathscr{C}$ that preserves $\mathscr{E}$. Our aim is to construct, for any $T$-algebra $\mathbb{X}$ with carrier $X$, a functor from the subobjects $\Sub(X)$ in $\mathscr{C}$ to the subobjects $\Sub(\mathbb{X})$ in $\EM$ that assigns to a subobject of $X$ its \emph{closure}, that is, the least $T$-subalgebra of $\mathbb{X}$ containing it.

  Recall the natural isomorphism that witnesses the free Eilenberg-Moore algebra adjunction. For any object $Y$ in $\mathscr{C}$ and $T$-algebra $\mathbb{X} = (X,h)$, it maps a morphism $\varphi: Y \rightarrow X$ to the $T$-algebra homomorphism $\varphi^{\sharp} := h \circ T\varphi : (TY, \mu_Y) \rightarrow (X, h)$. In \Cref{factorisationsystemalgebra} we have seen that the factorisation system of $\mathscr{C}$ naturally lifts to a factorisation system on the category of $T$-algebras. In particular, we know that up to isomorphism the homomorphism $\varphi^{\sharp}$ admits a factorisation of the following form:
  \begin{equation*}
 	\varphi^{\sharp} = (TY, \mu_Y) \overset{e_{\im(\varphi^{\sharp})}}{\twoheadrightarrow} (\im(\varphi^{\sharp}), h_{\im(\varphi^{\sharp})}) \overset{m_{\im(\varphi^{\sharp})}}{\hookrightarrow} (X, h).
 \end{equation*}
 In the case that the morphism $\varphi$ is given by a subobject \[ m_Y: Y \rightarrow X \in \mathscr{M},\] the above construction yields a second subobject \[ m_{\overline{Y}}: \overline{Y} \rightarrow \mathbb{X} \in \mathscr{M},\] where $\overline{Y} := (\im(m_Y^{\sharp}), h_{\im(m_Y^{\sharp})})$.
  Since for a morphism $f: m_{Y_1} \rightarrow m_{Y_2}$ between subobjects of $X$ the diagram on the left of \eqref{subobjectfunctor_onmorphism} below commutes, there further exists a unique diagonal algebra homomorphism $\overline{f}: m_{\overline{Y_1}} \rightarrow m_{\overline{Y_2}}$ between subobjects of $\mathbb{X}$ making the two triangles on the right of \eqref{subobjectfunctor_onmorphism} commute:
  
 \begin{equation}
 \label{subobjectfunctor_onmorphism}
  	 	\begin{tikzcd}[column sep = 1em]  	 	(TY_1, \mu_{Y_1}) \arrow[twoheadrightarrow]{rr}{e_{\overline{Y_1}}}
  	 	\arrow{dr}{Tm_{Y_1}} \arrow{d}[left]{Tf} && (\overline{Y_1}, h_{\overline{Y_1}}) \arrow{dd}[right]{m_{\overline{Y_1}}} \\
  	 	(TY_2, \mu_{Y_2}) \arrow{d}[left]{e_{\overline{Y_2}}} \arrow{r}[below]{Tm_{Y_2}} & (TX, \mu_X) \arrow{dr}{h} \\
  	 	(\overline{Y_2}, h_{\overline{Y_2}})  \arrow[hookrightarrow]{rr}[below]{m_{\overline{Y_2}}} && (X,h)
  	 	\end{tikzcd}   	 
  	 	\quad
  	 	 	\begin{tikzcd}[column sep = 1em]
 	 		(TY_1, \mu_{Y_1}) \arrow[twoheadrightarrow]{r}{e_{\overline{Y_1}}} \arrow{d}[left]{e_{\overline{Y_2}} \circ Tf} & (\overline{Y_1}, h_{\overline{Y}}) \arrow{d}{m_{\overline{Y_1}}} \arrow[dashed]{dl}{\overline{f}} \\
			(\overline{Y_2}, h_{Y_2}) \arrow[hookrightarrow]{r}[below]{m_{\overline{Y_2}}} &(X,h)
 	 	\end{tikzcd}			.
 \end{equation}
 
 The following result shows that the above constructions are compositional.

\begin{proposition}
\label{functorprop}
The assignments $m_Y \mapsto m_{\overline{Y}}$ and $f \mapsto \overline{f}$ yield a functor \\$\overline{(\cdot)}^{\mathbb{X}}: \Sub(X) \rightarrow \Sub(\mathbb{X})$.
\end{proposition}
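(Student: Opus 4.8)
The plan is to verify the two functoriality axioms, preservation of identities and of composition, both of which will follow purely formally from the uniqueness of the diagonal fill-in. Recall first that $\overline{f}$ is defined as the unique diagonal in the right-hand square of \eqref{subobjectfunctor_onmorphism}, which exists by the lifted factorisation system on $\EM$ (\Cref{diagonal_algebras}) since $e_{\overline{Y_1}} \in \mathscr{E}$ and $m_{\overline{Y_2}} \in \mathscr{M}$; it is characterised by the two equations
\[
\overline{f} \circ e_{\overline{Y_1}} = e_{\overline{Y_2}} \circ Tf, \qquad m_{\overline{Y_2}} \circ \overline{f} = m_{\overline{Y_1}}.
\]
The second of these says precisely that $\overline{f}$ is a morphism of subobjects of $\mathbb{X}$, so the assignment indeed lands in $\Sub(\mathbb{X})$. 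Before treating the axioms, I would note that the construction is well defined on isomorphism classes: factorisations are unique up to unique isomorphism, so a different choice of factorisation of $m_Y^{\sharp}$ yields an isomorphic subobject together with a compatible diagonal.

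For identities, I would show that $\id_{\overline{Y}}$ satisfies the defining equations of $\overline{\id_{m_Y}}$. Taking $Y_1 = Y_2 = Y$ and $f = \id_Y$, functoriality of $T$ gives $Tf = \id_{TY}$, so the two required equations reduce to $\id_{\overline{Y}} \circ e_{\overline{Y}} = e_{\overline{Y}}$ and $m_{\overline{Y}} \circ \id_{\overline{Y}} = m_{\overline{Y}}$, which hold trivially. By the uniqueness of the diagonal, $\overline{\id_{m_Y}} = \id_{m_{\overline{Y}}}$.

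For composition, let $f \colon m_{Y_1} \rightarrow m_{Y_2}$ and $g \colon m_{Y_2} \rightarrow m_{Y_3}$ be morphisms of subobjects of $X$. I would check that $\overline{g} \circ \overline{f}$ satisfies the two equations characterising $\overline{g \circ f}$. For the lower triangle,
\[
m_{\overline{Y_3}} \circ (\overline{g} \circ \overline{f}) = (m_{\overline{Y_3}} \circ \overline{g}) \circ \overline{f} = m_{\overline{Y_2}} \circ \overline{f} = m_{\overline{Y_1}},
\]
using the defining equations of $\overline{g}$ and of $\overline{f}$ in turn. For the upper triangle,
\begin{align*}
(\overline{g} \circ \overline{f}) \circ e_{\overline{Y_1}} &= \overline{g} \circ (e_{\overline{Y_2}} \circ Tf) = (e_{\overline{Y_3}} \circ Tg) \circ Tf = e_{\overline{Y_3}} \circ T(g \circ f),
\end{align*}
where the last step uses the functoriality of $T$. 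Since $\overline{g \circ f}$ is the unique diagonal satisfying these two equations, we conclude $\overline{g \circ f} = \overline{g} \circ \overline{f}$.

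I do not expect a genuine obstacle: the entire argument is an instance of the standard fact that diagonal fill-ins compose, and the only inputs are the functoriality of $T$ and the uniqueness clause of (F3). The one point requiring a little care — and the closest thing to a subtlety — is the bookkeeping that $\overline{f}$ really is a morphism in the category of subobjects rather than merely of the underlying objects, and that the whole assignment is stable under passage to isomorphism classes; both are handled by the left cancellation property of $\mathscr{M}$ and the essential uniqueness of factorisations noted above.
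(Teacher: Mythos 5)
Your proof is correct and follows essentially the same route as the paper: both arguments reduce each functoriality axiom to the uniqueness clause of the diagonal fill-in, by checking that $\id_{\overline{Y}}$ (resp.\ $\overline{g}\circ\overline{f}$) satisfies the two triangle identities characterising $\overline{\id_{m_Y}}$ (resp.\ $\overline{g\circ f}$). The only difference is presentational — you write the two commuting diagrams of the paper's proof out as equations.
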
 
\begin{proof}
We need to establish the identities 
\begin{align*}
	\overline{\id_{Y}} = \id_{\overline{Y}} \quad \textnormal{and} \quad 
	\overline{f \circ g} = \overline{f} \circ \overline{g}.
\end{align*}
As before, the equations follow from the uniqueness of diagonals and the commutativity of the left, respectively right, diagram below:
 \begin{equation*}
 	 	 	 	\begin{tikzcd}[row sep=3em, column sep = 3em]
 	 	 	 			(TY, \mu_Y) \arrow[twoheadrightarrow]{r}{e_{\overline{Y}}} \arrow{d}[left]{ e_{\overline{Y}} \circ T(\id_Y) } & (\overline{Y}, h_{\overline{Y}}) \arrow{d}{m_{\overline{Y}}} \arrow{dl}{\id_{\overline{Y}}} \\
		(\overline{Y}, h_{\overline{Y}}) \arrow[hookrightarrow]{r}[below]{m_{\overline{Y}}} & (X,h)
 	 	 	 	\end{tikzcd}
 	 	 	 	\qquad
 	 	 	 	 	 	 	\begin{tikzcd}[row sep=3em, column sep = 3em]
 	 	 	 	 	 	 		(TY_1, \mu_{Y_1}) \arrow[twoheadrightarrow]{rr}{e_{\overline{Y_1}}} \arrow{d}[left]{Tg} && (\overline{Y_1}, h_{\overline{Y_1}}) \arrow{ddd}{m_{\overline{Y_1}}}  \arrow{dl}{\overline{g}} \\
 	 	 	 	 	 	 		(TY_2, \mu_{Y_2}) \arrow{d}[left]{Tf} \arrow{r}{e_{\overline{Y_2}}} & (\overline{Y_2}, h_{\overline{Y_2}}) \arrow{ddl}{\overline{f}} \arrow{ddr}{m_{\overline{Y_2}}} & \\
 	 	 	 	 	 	(TY_3, \mu_{Y_3}) \arrow{d}[left]{e_{\overline{Y_3}}} & & \\
 	 	 	 	 	 	(\overline{Y_3}, h_{\overline{Y_3}}) \arrow[hookrightarrow]{rr}[below]{m_{\overline{Y_3}}} & & (X, h)
 	 	 	 	 	 	 	\end{tikzcd}.
 \end{equation*}
  \end{proof}
  
   Let us instantiate above result for the free vector space monad on the category of sets equipped with its canonical surjective-injective factorisation system.
 
 \begin{example}
 \label{freevectorspaceexamplefunctor}
 Given an injective embedding $m_Y$ of some set $Y$ into a vector space $\mathbb{V}$, one easily verifies that the lifting $(m_Y)^{\sharp}$ maps a formal linear combination $\sum_i \lambda_i \cdot y_i$ to the vector $\sum_i \lambda_i \cdot m_Y(y_i)$ in $V$. The image $\overline{Y}$ is thus given by the vector space that consists of equivalence classes of formal linear combinations, and the injection $m_{\overline{Y}}$ interprets representatives as demonstrated. In particular, if $Y$ is a subset of $\mathbb{V}$, that is, $m_Y(y) = y$ for all $y \in Y$, the closure can be recognised as the sub vector space of $\mathbb{V}$ generated by $Y$.
 \end{example}
   
Mapping an algebra homomorphism with codomain $\mathbb{X}$ to the $\mathscr{M}$-part of its factorisation extends to a functor from the slice category\footnote{The \emph{slice category} $\mathscr{C}/X$ of a category $\mathscr{C}$ over an object $X \in \mathscr{C}$ has as objects (isomorphism classes of) morphisms $g_Y: Y \rightarrow X$ with codomain $X$, and a morphism $f: g_{Y_1} \rightarrow g_{Y_2}$ consists of a morphism $f: Y_1 \rightarrow Y_2$ in $\mathscr{C}$ that satisfies $g_{Y_1} = g_{Y_2} \circ f$.} over $\mathbb{X}$ to the category of subobjects of $\mathbb{X}$.  
Similarly, one observes that the free Eilenberg-Moore algebra adjunction gives rise to a functor from the slice category over $X$ to the slice category over $\mathbb{X}$. Finally, it is clear that there exists a functor from the category of subobjects of $X$ to the slice category over $X$. The functor defined in \Cref{functorprop} can thus be recognised as the following composition:
\begin{equation}
\label{closure_as_composition}
\begin{tikzcd}
 			\mathscr{C}/X \arrow{r}{} & \EM/\mathbb{X} \arrow{d}{} \\
 			\Sub(X) \arrow{u}{} \arrow{r}[above]{\overline{(\cdot)}^{\mathbb{X}}} & \Sub(\mathbb{X})
 		\end{tikzcd}.		
\end{equation}

 \subsection{The Subobject Closure Monad}
 
\label{closureasmonadsec} 
 
 In this section, we show that the functor $\overline{(\cdot)}^{\mathbb{X}}$ in \Cref{functorprop} induces a monad on the category of subobjects $\Sub(X)$. 
 As before, we assume a base category $\mathscr{C}$ with a factorisation system $(\mathscr{E}, \mathscr{M})$ and a monad $T = (T, \eta, \mu)$ on $\mathscr{C}$ that preserves $\mathscr{E}$. 

 We begin by establishing the following two technical identities, which assume a $T$-algebra $\mathbb{X} = (X,h)$ and a subobject $m_Y: Y \hookrightarrow X \in \mathscr{M}$.

\begin{lemma} 
The following two equalities hold:
 \begin{itemize}
	\item $m_{\overline{Y}} \circ e_{\overline{Y}} \circ \eta_Y = m_Y$
	\item $m_{\overline{Y}} \circ e_{\overline{Y}} \circ \mu_Y = m_{\overline{\overline{Y}}} \circ e_{\overline{\overline{Y}}} \circ Te_{\overline{Y}}$
\end{itemize}
  \end{lemma}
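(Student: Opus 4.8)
The plan is to reduce both identities to the single defining property of the $(\mathscr{E}, \mathscr{M})$-factorisation of $m_Y^{\sharp}$, namely that $m_{\overline{Y}} \circ e_{\overline{Y}} = m_Y^{\sharp} = h \circ Tm_Y$, and then to push the claims through using only naturality of the monad structure together with the Eilenberg-Moore algebra laws for $\mathbb{X} = (X,h)$. Before starting the calculations I would make explicit the reading of $\overline{\overline{Y}}$: here the subobject $m_{\overline{Y}} : \overline{Y} \hookrightarrow X$ (which lies in $\mathscr{M}$ by the left cancellation property) is itself viewed as an object of $\Sub(X)$, so that $\overline{\overline{Y}}$ is its closure, characterised by the factorisation $m_{\overline{\overline{Y}}} \circ e_{\overline{\overline{Y}}} = (m_{\overline{Y}})^{\sharp} = h \circ Tm_{\overline{Y}}$. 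With both factorisation identities in hand the two equalities become short chains of rewrites.

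For the first identity I would compute
\[ m_{\overline{Y}} \circ e_{\overline{Y}} \circ \eta_Y = h \circ Tm_Y \circ \eta_Y = h \circ \eta_X \circ m_Y = m_Y, \]
using the defining factorisation property in the first step, naturality of $\eta : 1_{\mathscr{C}} \Rightarrow T$ (applied to $m_Y$) in the second, and the unit law $h \circ \eta_X = \id_X$ of the algebra $(X,h)$ in the third.

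For the second identity I would similarly rewrite the left-hand side as
\[ m_{\overline{Y}} \circ e_{\overline{Y}} \circ \mu_Y = h \circ Tm_Y \circ \mu_Y = h \circ \mu_X \circ T^2 m_Y = h \circ Th \circ T^2 m_Y, \]
invoking naturality of $\mu : T^2 \Rightarrow T$ and then the multiplication law $h \circ \mu_X = h \circ Th$. Recognising $h \circ Th \circ T^2 m_Y = h \circ T(h \circ Tm_Y) = h \circ T(m_Y^{\sharp})$ and substituting $m_Y^{\sharp} = m_{\overline{Y}} \circ e_{\overline{Y}}$ then yields $h \circ Tm_{\overline{Y}} \circ Te_{\overline{Y}}$. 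On the other hand, by the factorisation property for $\overline{\overline{Y}}$, the right-hand side is $m_{\overline{\overline{Y}}} \circ e_{\overline{\overline{Y}}} \circ Te_{\overline{Y}} = (m_{\overline{Y}})^{\sharp} \circ Te_{\overline{Y}} = h \circ Tm_{\overline{Y}} \circ Te_{\overline{Y}}$, so the two sides coincide.

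Both arguments are purely equational, so there is no serious obstacle; the only point requiring care is the bookkeeping around the meaning of $\overline{\overline{Y}}$ and the consistent use of the factorisation identity $m_{\overline{(\cdot)}} \circ e_{\overline{(\cdot)}} = (-)^{\sharp}$ at both levels of iteration. Since these two equalities are precisely the unit and multiplication coherence conditions needed to upgrade $\overline{(\cdot)}^{\mathbb{X}}$ from the functor of \Cref{functorprop} to a monad, I would present them so that they feed directly into the monad construction of \Cref{inducedmonad}.
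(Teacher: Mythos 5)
Your proof is correct and follows essentially the same route as the paper's: both identities are reduced to the factorisation property $m_{\overline{Y}} \circ e_{\overline{Y}} = m_Y^{\sharp} = h \circ Tm_Y$ (and its iterate for $\overline{\overline{Y}}$) and then discharged by naturality of $\eta$, $\mu$ and the Eilenberg--Moore laws for $(X,h)$. The only cosmetic difference is that for the second identity you rewrite both sides to meet at $h \circ Tm_{\overline{Y}} \circ Te_{\overline{Y}}$, whereas the paper runs a single left-to-right chain; the steps are identical.
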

\begin{proof}
  	For the first identity we observe:
  	 \begin{align*}
  	 	m_{\overline{Y}} \circ e_{\overline{Y}} \circ \eta_Y &= m_Y^{\sharp} \circ \eta_Y && (m_{\overline{Y}} \circ e_{\overline{Y}} = m_Y^{\sharp}) \\
  	 	&= h \circ Tm_Y \circ \eta_Y && \textnormal{(Definition of } m_Y^{\sharp}) \\
  	 	&= h \circ \eta_X \circ m_Y && \textnormal{(Naturality of } \eta) \\
  	 	&= m_Y && (h \circ \eta_X = \id_X).
  	 \end{align*}
Similarly, for the second identity we deduce:
  	\begin{align*}
  		m_{\overline{Y}} \circ e_{\overline{Y}} \circ \mu_Y &= m_Y^{\sharp} \circ \mu_Y && (m_{\overline{Y}} \circ e_{\overline{Y}} = m_Y^{\sharp}) \\
  		&= h \circ Tm_Y \circ \mu_Y && \textnormal{(Definition of } m_Y^{\sharp}) \\
  		&= h \circ \mu_X \circ T^2m_Y && \textnormal{(Naturality of } \mu)\\
  		 &= h \circ Th \circ T^2 m_Y && (h \circ \mu_X = h \circ Th) \\
  		 &= h \circ Tm_Y^{\sharp} && \textnormal{(Definition of } m_Y^{\sharp}) \\
  		 &= h \circ Tm_{\overline{Y}} \circ Te_{\overline{Y}} && (m_Y^{\sharp} = m_{\overline{Y}} \circ e_{\overline{Y}}) \\
  		 &= m_{\overline{Y}}^{\sharp} \circ Te_{\overline{Y}} && \textnormal{(Definition of } m_{\overline{Y}}^{\sharp}) \\
  		 &= m_{\overline{\overline{Y}}} \circ e_{\overline{\overline{Y}}} \circ Te_{\overline{Y}} && (m_{\overline{Y}}^{\sharp} = m_{\overline{\overline{Y}}} \circ e_{\overline{\overline{Y}}}). 
  	\end{align*} 
  \end{proof}

  In consequence, we can define candidates for the monad unit $\eta^{\mathbb{X}}$ and the monad multiplication $\mu^{\mathbb{X}}$, respectively, as the unique diagonals below:
 \begin{equation}
 \label{unitmultdef}
 		\begin{tikzcd}[row sep=3em, column sep = 4em]
 			Y \arrow[twoheadrightarrow]{r}{1} \arrow{d}[left]{e_{\overline{Y}} \circ \eta_Y} & Y \arrow[dashed]{dl}{\eta^{\mathbb{X}}_{m_Y}} \arrow{d}{m_Y} \\
 			\overline{Y} \arrow[hookrightarrow]{r}[below]{m_{\overline{Y}}} & X
 		\end{tikzcd} 
 		\qquad
\begin{tikzcd}[row sep=3em, column sep = 4em]
 			T^2Y \arrow[twoheadrightarrow]{r}{e_{\overline{\overline{Y}}} \circ Te_{\overline{Y}}} \arrow{d}[left]{e_{\overline{Y}} \circ \mu_Y} & \overline{\overline{Y}} \arrow[dashed]{dl}{\mu^{\mathbb{X}}_{m_Y}} \arrow{d}{m_{\overline{\overline{Y}}}} \\
 			\overline{Y} \arrow[hookrightarrow]{r}[below]{m_{\overline{Y}}} & X
 		\end{tikzcd}.
 \end{equation} 
  
 By construction both morphisms are homomorphisms of subobjects: 
 \[\eta^{\mathbb{X}}_{m_Y}: m_Y \longrightarrow m_{\overline{Y}} \qquad \mu^{\mathbb{X}}_{m_Y}: m_{\overline{\overline{Y}}} \longrightarrow m_{\overline{Y}}. \] 
 
  The remaining proofs of naturality and the monad laws are covered below. By a slight abuse of notation, we write $\overline{(\cdot)}^{\mathbb{X}}$ for the endofunctor on $\Sub(X)$ that arises by post-composition of the functor in \Cref{functorprop} with the canonical forgetful functor from $\Sub(\mathbb{X})$ to $\Sub(X)$. 
 
 \begin{theorem}
 \label{inducedmonad}
 	$(\overline{(\cdot)}^{\mathbb{X}}, \eta^{\mathbb{X}}, \mu^{\mathbb{X}})$ is a monad on $\textnormal{\Sub}(X)$.
 \end{theorem}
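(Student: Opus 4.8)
The plan is to obtain the monad structure conceptually, by exhibiting the closure as a left adjoint and invoking the fact that every adjunction $F \dashv G$ induces a monad $(GF, \eta, \mu)$ (cf. the construction preceding \Cref{def:monad}). Concretely, write $C := \overline{(\cdot)}^{\mathbb{X}} \colon \Sub(X) \to \Sub(\mathbb{X})$ for the functor of \Cref{functorprop}, and let $U \colon \Sub(\mathbb{X}) \to \Sub(X)$ be the evident forgetful functor sending a $T$-subalgebra $m_Z \colon Z \hookrightarrow \mathbb{X}$ to its underlying subobject $U^T(m_Z) \colon U^T Z \hookrightarrow X$ (well-defined since $U^T$ preserves $\mathscr{M}$). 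I would show $C \dashv U$, so that the endofunctor $\overline{(\cdot)}^{\mathbb{X}} = U \circ C$ on $\Sub(X)$ carries the induced monad, whose unit and multiplication I then identify with the morphisms $\eta^{\mathbb{X}}$ and $\mu^{\mathbb{X}}$ of \eqref{unitmultdef}.

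To establish the adjunction I would use \Cref{adjunctiondef2}, taking the universal morphism at $m_Y$ to be $\eta^{\mathbb{X}}_{m_Y} \colon m_Y \to U(m_{\overline{Y}})$ as already defined. Given a $T$-subalgebra $m_Z$ of $\mathbb{X}$ and a subobject morphism $g \colon m_Y \to U(m_Z)$, the required factorisation $g^{\sharp} \colon m_{\overline{Y}} \to m_Z$ arises as follows: the homomorphism $h_Z \circ Tg \colon (TY, \mu_Y) \to (Z, h_Z)$ satisfies $m_Z \circ (h_Z \circ Tg) = m_Y^{\sharp} = m_{\overline{Y}} \circ e_{\overline{Y}}$ (using that $m_Z$ is an algebra homomorphism and $m_Z \circ g = m_Y$), so the square with top $e_{\overline{Y}} \in \mathscr{E}$ and bottom $m_Z \in \mathscr{M}$ commutes and \Cref{diagonal_algebras} supplies a unique algebra homomorphism $g^{\sharp}$ with $g^{\sharp} \circ e_{\overline{Y}} = h_Z \circ Tg$ and $m_Z \circ g^{\sharp} = m_{\overline{Y}}$. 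The last equation makes $g^{\sharp}$ a morphism in $\Sub(\mathbb{X})$, and a short diagonal chase gives $U(g^{\sharp}) \circ \eta^{\mathbb{X}}_{m_Y} = g$ together with uniqueness, which is exactly the universal property of \Cref{adjunctiondef2}. The induced monad then has unit $\eta^{\mathbb{X}}$ by construction and multiplication $U \varepsilon_{C(-)}$; identifying the counit component $\varepsilon_{m_{\overline{Y}}}$ (the canonical isomorphism $\overline{\overline{Y}} \cong \overline{Y}$ expressing that closing a subalgebra changes nothing) with the diagonal defining $\mu^{\mathbb{X}}_{m_Y}$ in \eqref{unitmultdef} is again a matter of uniqueness of diagonals.

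Alternatively, one can verify the monad axioms for $(\overline{(\cdot)}^{\mathbb{X}}, \eta^{\mathbb{X}}, \mu^{\mathbb{X}})$ directly, in the style of \Cref{functorprop}: naturality of $\eta^{\mathbb{X}}$ and $\mu^{\mathbb{X}}$ and the three monad laws each reduce to an equality of two subobject morphisms with common source and target, and each such equality is obtained by precomposing with the relevant $\mathscr{E}$-morphism (namely $e_{\overline{Y}}$, or $e_{\overline{\overline{Y}}} \circ T e_{\overline{Y}}$ for the multiplication) and appealing to the uniqueness of diagonals, the defining equations of \eqref{unitmultdef}, and the monad laws of $T$. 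I expect the associativity law to be the main obstacle: it lives over $T^3 Y$, and matching the two composites requires combining naturality of $\mu$ with the base associativity $\mu \circ T\mu = \mu \circ \mu_T$ while repeatedly using that $T$ preserves $\mathscr{E}$, so that the composite epimorphisms into $\overline{\overline{\overline{Y}}}$ are available as the arrows to cancel. In the epi--mono case this difficulty evaporates, since $\Sub(X)$ is then a poset and all the monad diagrams commute automatically.
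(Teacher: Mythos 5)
Your proposal is correct, but it takes a genuinely different route from the paper. The paper proves \Cref{inducedmonad} by brute force: naturality of $\eta^{\mathbb{X}}$ and $\mu^{\mathbb{X}}$, the two unit laws, and associativity are each established by exhibiting the two candidate morphisms as solutions of one and the same unique-diagonal problem over $(\mathscr{E},\mathscr{M})$, the associativity square living over $T^3Y$ exactly as you predicted. Your main argument instead exhibits $\overline{(\cdot)}^{\mathbb{X}}$ as the composite $U\circ C$ of an adjunction $C\dashv U$ between $\Sub(X)$ and $\Sub(\mathbb{X})$, and I checked that this adjunction does hold: the factorisation $g^{\sharp}$ exists by \Cref{diagonal_algebras} exactly as you describe, the triangle $U(g^{\sharp})\circ\eta^{\mathbb{X}}_{m_Y}=g$ follows from $\eta^{\mathbb{X}}_{m_Y}=e_{\overline{Y}}\circ\eta_Y$ and the unit law of $T$, and uniqueness follows because an algebra homomorphism out of the free algebra $(TY,\mu_Y)$ is determined by its precomposition with $\eta_Y$, after which diagonal uniqueness applies; the identification of $U\varepsilon_{C(-)}$ with $\mu^{\mathbb{X}}$ also goes through, since both are diagonals of the square whose top edge is the $\mathscr{E}$-morphism $e_{\overline{\overline{Y}}}\circ Te_{\overline{Y}}$. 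What your approach buys is that associativity and the unit laws come for free from the general theory of adjunctions, and it makes conceptually explicit that the closure monad is the free--forgetful monad of $\Sub(X)\leftrightarrows\Sub(\mathbb{X})$, which dovetails with diagram \eqref{closure_as_composition}; what the paper's direct verification buys is self-containment. One small caveat: your parenthetical that the counit is ``the canonical isomorphism $\overline{\overline{Y}}\cong\overline{Y}$'' is not needed for the monad laws and is only guaranteed when split epimorphisms lying in $\mathscr{M}$ are isomorphisms (true for the epi--mono systems used in the paper's examples, but not for an arbitrary factorisation system); the identification with $\mu^{\mathbb{X}}$ holds by diagonal uniqueness regardless.
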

 \begin{proof}
 	On the one hand, we have to establish the naturality of  $\eta^{\mathbb{X}}$ and $ \mu^{\mathbb{X}}$, that is, the identities
 	\begin{align}
 	\label{natproof}
 	 	\begin{split}
 		\overline{f} \circ \eta^{\mathbb{X}}_{m_{Y_1}} &= \eta^{\mathbb{X}}_{m_{Y_2}} \circ f \\
 		\overline{f} \circ \mu^{\mathbb{X}}_{m_{Y_1}} &= \mu^{\mathbb{X}}_{m_{Y_2}} \circ \overline{\overline{f}}
 		\end{split}
 	\end{align} for any subobject homomorphism $f: m_{Y_1} \rightarrow m_{Y_2}$. On the other hand, we need to establish the unitality and associativity laws:
 	\begin{align}
 	\label{unitassoclawproof}
 	\begin{split}
 		\mu^{\mathbb{X}}_{m_Y} \circ \eta^{\mathbb{X}}_{m_{\overline{Y}}} &= \id_{\overline{Y}} = \mu^{\mathbb{X}}_{m_Y} \circ \overline{\eta^{\mathbb{X}}_{m_Y}} \\
 		\mu^{\mathbb{X}}_{m_Y} \circ \mu^{\mathbb{X}}_{m_{\overline{Y}}} &= \mu^{\mathbb{X}}_{m_Y} \circ \overline{\mu^{\mathbb{X}}_{m_Y}}.
 	\end{split}	
 	\end{align}
 	The first equation of \eqref{natproof} follows from the commutativity of the diagram below:
 	\begin{equation*}
 				    		\begin{tikzcd}
 				    		Y_1 \arrow{r}{1} \arrow{dd}[left]{\eta^{\mathbb{X}}_{m_{Y_1}}} & Y_1 \arrow{d}{\eta_{Y_1}} \arrow{r}{f} & Y_2 \arrow{d}{\eta_{Y_1}} \arrow{r}{1} & Y_2 \arrow{dd}{\eta^{\mathbb{X}}_{m_{Y_2}}} \\
 				    		& TY_1 \arrow{r}{Tf} \arrow{d}{e_{\overline{Y_1}}} & TY_2 \arrow{dr}{e_{\overline{Y_2}}} & \\
 				    		\overline{Y_1} \arrow{r}[below]{1} & \overline{Y_1} \arrow{rr}[below]{\overline{f}} & & \overline{Y_2}
 				    		\end{tikzcd}.
 	\end{equation*}
 	For the second equation of \eqref{natproof} we observe that since the following two diagrams commute
 	\begin{equation*}
 	 		\begin{tikzcd}
 		 			T^2Y_1 \arrow{d}[left]{\mu_{Y_1}} \arrow{rr}{e_{\overline{\overline{Y_1}}} \circ T e_{\overline{Y_1}}} & & \overline{\overline{Y_1}} \arrow{dl}[left]{\mu^{\mathbb{X}}_{m_{Y_1}}} \arrow{dd}{m_{\overline{\overline{Y_1}}}} \\
 		 			TY_1 \arrow{r}{e_{\overline{Y_1}}} \arrow{d}[left]{e_{\overline{Y_2}} \circ Tf} & \overline{Y_1} \arrow{dl}{\overline{f}} \arrow{dr}{m_{\overline{Y_1}}} & \\
 		 			\overline{Y_2} \arrow{rr}[below]{m_{\overline{Y_2}}} & & X
 		 		\end{tikzcd}
 		 		 		\quad
 		\begin{tikzcd}
 			T^2 Y_1 \arrow{r}{1} \arrow{d}[left]{Tf \circ \mu_{Y_1}} & T^2 Y_1 \arrow{d}{T^2f} \arrow{r}{T e_{\overline{Y_1}}} & T\overline{Y_1} \arrow{d}{T\overline{f}} \arrow{r}{e_{\overline{\overline{Y_1}}}} & \overline{\overline{Y_1}} \arrow{ddl}{\overline{\overline{f}}} \arrow{ddd}{m_{\overline{\overline{Y_1}}}} \\
 			TY_2 \arrow{dd}[left]{e_{\overline{Y_2}}} & T^2 Y_2 \arrow{l}{\mu_{Y_2}} \arrow{r}{Te_{\overline{Y_2}}} & T \overline{Y_2}  \arrow{d}[left]{e_{\overline{\overline{Y_2}}}} & \\
 			& & \overline{\overline{Y_2}} \arrow{lld}{\mu^{\mathbb{X}}_{m_{Y_2}}} \arrow{dr}{m_{\overline{\overline{Y_2}}}} & \\
 			\overline{Y_2} \arrow{rrr}[below]{m_{\overline{Y_2}}} & & & X
 		\end{tikzcd}
 	\end{equation*}
 	both $\overline{f} \circ \mu^{\mathbb{X}}_{m_{Y_1}}$ and $\mu^{\mathbb{X}}_{m_{Y_2}} \circ \overline{\overline{f}}$ are solutions to the unique diagonal below:
 	\begin{equation*}
 		  		\begin{tikzcd}
 		  	 		  		T^2Y_1 \arrow[twoheadrightarrow]{r}{e_{\overline{\overline{Y_1}}} \circ T e_{\overline{Y_1}}} \arrow{d}[left]{e_{\overline{Y_2}} \circ Tf \circ \mu_{Y_1}}  & \overline{\overline{Y_1}} \arrow[dashed]{dl}{}	\arrow{d}{m_{\overline{\overline{Y_1}}}} \\
 		  		\overline{Y_2} \arrow[hookrightarrow]{r}[below]{m_{\overline{Y_2}}} & X
 		  		\end{tikzcd}.
 	\end{equation*}
 	For the first equation of \eqref{unitassoclawproof} we observe that since the following diagrams commute
 	\begin{equation*}
 		    		\begin{tikzcd}[column sep = 2em, row sep = 1em]
 		    TY \arrow{r}{1} \arrow{dd}[left]{1} & TY \arrow{d}[left]{\eta_{TY}} \arrow{r}{e_{\overline{Y}}} &  \overline{Y} \arrow{d}[left]{\eta_{\overline{Y}}}  \arrow{r}{1}  & \overline{Y} \arrow{ddl}{\eta^{\mathbb{X}}_{m_{\overline{Y}}}} \arrow{ddd}{m_{\overline{Y}}} \\
 		    & T^2Y \arrow{r}{T e_{\overline{Y}}} \arrow{d}[left]{\mu_Y} & T\overline{Y} \arrow{d}[left]{e_{\overline{\overline{Y}}}} & & \\
 		    TY \arrow{r}{1} \arrow{d}[left]{e_{\overline{Y}}} & TY  \arrow{d}[left]{e_{\overline{Y}}}  & \overline{\overline{Y}} \arrow{dl}{\mu^{\mathbb{X}}_{m_Y}} \arrow{dr}{m_{\overline{\overline{Y}}}} & \\
 		    \overline{Y} \arrow{r}[below]{1} 	& \overline{Y} \arrow{rr}[below]{m_{\overline{Y}}} & &  X		
 		    		\end{tikzcd}
 \quad
 		    		\begin{tikzcd}[column sep = 2em, row sep = 1em]
 		    TY \arrow{r}{1} \arrow{dd}[left]{1} & TY \arrow{d}[left]{T\eta_{Y}} \arrow{r}{1} &  TY \arrow{d}[left]{T\eta^{\mathbb{X}}_{m_Y}}  \arrow{r}{e_{\overline{Y}}}  & \overline{Y} \arrow{ddl}{\overline{\eta^{\mathbb{X}}_{m_{Y}}}} \arrow{ddd}{m_{\overline{Y}}} \\
 		    & T^2Y \arrow{r}{T e_{\overline{Y}}} \arrow{d}[left]{\mu_Y} & T\overline{Y} \arrow{d}[left]{e_{\overline{\overline{Y}}}} & & \\
 		    TY \arrow{r}{1} \arrow{d}[left]{e_{\overline{Y}}} & TY  \arrow{d}[left]{e_{\overline{Y}}}  & \overline{\overline{Y}} \arrow{d}[left]{\mu^{\mathbb{X}}_{m_Y}} \arrow{dr}{m_{\overline{\overline{Y}}}} & \\
 		    \overline{Y} \arrow{r}[below]{1} 	& \overline{Y}  \arrow{r}[below]{1} & \overline{Y} \arrow{r}[below]{m_{\overline{Y}}} &  X	
 		    		\end{tikzcd}
 	\end{equation*}
 all three morphisms $\mu^{\mathbb{X}}_{m_Y} \circ \eta^{\mathbb{X}}_{m_{\overline{Y}}}$,  $\id_{\overline{Y}}$ and $\mu^{\mathbb{X}}_{m_Y} \circ \overline{\eta^{\mathbb{X}}_{m_Y}}$ are solutions to the unique diagonal:
  	 	\begin{equation*}
 		    		\begin{tikzcd}
 		    		 		    			TY \arrow[twoheadrightarrow]{r}{e_{\overline{Y}}} \arrow{d}[left]{e_{\overline{Y}}} & \overline{Y} \arrow{d}{m_{\overline{Y}}} \arrow[dashed]{dl}{} \\
 		    			\overline{Y} \arrow[hookrightarrow]{r}[below]{m_{\overline{Y}}} & X
 		    		\end{tikzcd}.
\end{equation*}

Similarly, for the second equation of \eqref{unitassoclawproof} we note that since the following two diagrams commute
\begin{equation*}
	 		    		\begin{tikzcd}[column sep = 2.5em, row sep = 1em]
	 	T^3Y \arrow{d}[left]{\mu_{TY}} \arrow{r}{T^2e_{\overline{Y}}} & T^2 \overline{X} \arrow{d}[left]{\mu_{\overline{Y}}} \arrow{r}{e_{\overline{\overline{\overline{Y}}}} \circ T e_{\overline{\overline{Y}}}} & \overline{\overline{\overline{Y}}} \arrow{ddl}{\mu^{\mathbb{X}}_{m_{\overline{Y}}}} \arrow{ddd}{m_{\overline{\overline{\overline{Y}}}}} \\
	 	T^2 Y \arrow{r}{Te_{\overline{Y}}} \arrow{dd}[left]{e_{\overline{Y}} \circ \mu_Y} & T \overline{Y} \arrow{d}[left]{e_{\overline{\overline{Y}}}} & \\
	 	& \overline{\overline{Y}} \arrow{dl}{\mu^{\mathbb{X}}_{m_Y}} \arrow{dr}{m_{\overline{\overline{Y}}}} & \\
	 	\overline{Y} \arrow{rr}[below]{m_{\overline{Y}}} & & X 	    
	 		    		\end{tikzcd}
\quad
	 		    		\begin{tikzcd}[column sep = 3em, row sep = 1em]
	 		    		T^3Y \arrow{r}{1} \arrow{d}[left]{\mu_{TY}} & T^3Y \arrow{d}[left]{T \mu_Y} \arrow{r}{T e_{\overline{\overline{Y}}} \circ T^2 e_{\overline{Y}}} & T \overline{\overline{Y}} \arrow{d}[left]{T \mu^{\mathbb{X}}_{m_Y}} \arrow{r}{e_{\overline{\overline{\overline{Y}}}}} & \overline{\overline{\overline{Y}}} \arrow{ddl}{\overline{\mu^{\mathbb{X}}_{m_Y}}} \arrow{ddd}{m_{\overline{\overline{\overline{Y}}}}} \\
	 		    		T^2 Y \arrow{d}[left]{\mu_Y} & T^2Y \arrow{d}[left]{\mu_Y} \arrow{r}[below]{Te_{\overline{Y}}} & T \overline{Y} \arrow{d}[left]{e_{\overline{\overline{Y}}}} & \\
	 		    		TY \arrow{d}[left]{e_{\overline{Y}}} \arrow{r}{1} & TY \arrow{d}{e_{\overline{Y}}} & \overline{\overline{Y}} \arrow{dl}{\mu^{\mathbb{X}}_{m_Y}} \arrow{dr}{m_{\overline{\overline{Y}}}} & \\
	 		    		\overline{Y} \arrow{r}[below]{1} & \overline{Y} \arrow{rr}[below]{m_{\overline{Y}}} & & X	
	 		    		\end{tikzcd}
\end{equation*}
both morphisms $\mu^{\mathbb{X}}_{m_Y} \circ \mu^{\mathbb{X}}_{m_{\overline{Y}}}$ and  $\mu^{\mathbb{X}}_{m_Y} \circ \overline{\mu^{\mathbb{X}}_{m_Y}}$ are solutions to the unique diagonal below:
\begin{equation*}
		 		    		\begin{tikzcd}[column sep = 5em]
		 		    			T^3Y \arrow{d}[left]{e_{\overline{Y}} \circ \mu_Y \circ \mu_{TY}} \arrow[twoheadrightarrow]{r}{e_{\overline{\overline{\overline{Y}}}} \circ Te_{\overline{\overline{Y}}} \circ T^2 e_{\overline{Y}}} & \overline{\overline{\overline{Y}}} \arrow{d}{m_{\overline{\overline{\overline{Y}}}}} \arrow[dashed]{dl}{} \\
		 		    			\overline{Y} \arrow[hookrightarrow]{r}[below]{m_{\overline{Y}}} & X
		 		    		\end{tikzcd}.
\end{equation*}
 \end{proof}
 
As before, we exemplify \Cref{inducedmonad} for the free vector space monad on the category of sets and functions with its canonical factorisation system.

\begin{example}
We have previously seen that the closure of a subobject $m_Y$ of a vector space $\mathbb{V}$ consists of formal linear combinations that are considered equivalent, if their interpretation in $\mathbb{V}$ via $m_Y$ coincides. By construction (cf. \eqref{unitmultdef}), the monad unit $\eta^{\mathbb{V}}_{m_Y}$ maps an element $y \in Y$ to the equivalence class $\lbrack 1 \cdot y \rbrack$ in $\overline{Y}$. One further verifies that the multiplication $\mu^{\mathbb{V}}_{m_Y}$ assigns to the element $\lbrack \sum_{\lbrack \varphi \rbrack} \lambda_{\lbrack \varphi \rbrack} \cdot \lbrack \varphi \rbrack \rbrack$ in $\overline{\overline{Y}}$  the element $\lbrack \sum_x (\sum_{\lbrack \varphi \rbrack} \lambda_{\lbrack \varphi \rbrack} \cdot \varphi(y)) \cdot y \rbrack$ in $\overline{Y}$. If $Y$ is a subset of $\mathbb{V}$, that is, $m_Y(y) = y$ for all $y \in Y$, the multiplication thus flattens vectors of vectors in the usual way.
\end{example}

We will now show that the mapping of an algebra $\mathbb{X}$ to the monad $\overline{(\cdot)}^{\mathbb{X}}$ in \Cref{inducedmonad} extends to algebra homomorphisms. To this end, for any algebra homomorphism $f: \mathbb{A} \rightarrow \mathbb{B}$ in $\mathscr{M}$, let 
$f_{*}: \Sub(A) \rightarrow \Sub(B)$ be the induced functor defined by $f_{*}(m_X) = f \circ m_X$ and $f_*(g) = g$. The result below shows that $f_*$ can be extended to a morphism between monads. 

\begin{lemma}
\label{morphismmonad1}
	For any $f: \mathbb{A} \rightarrow \mathbb{B} \in \mathscr{M}$, there exists a monad morphism $(f_{*}, \alpha): (\Sub(A), \overline{(\cdot)}^{\mathbb{A}}) \rightarrow (\Sub(B), \overline{(\cdot)}^{\mathbb{B}})$. 
\end{lemma}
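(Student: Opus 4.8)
The plan is to construct the natural transformation $\alpha$ componentwise and then check the two axioms of a monad morphism; in fact $\alpha$ will turn out to be a natural \emph{isomorphism}. The observation that drives everything is that $f$, being a $T$-algebra homomorphism, commutes with the free-algebra lifting: for a subobject $m_X \colon X \hookrightarrow A$ one has
\begin{equation*}
(f \circ m_X)^{\sharp} = h_B \circ Tf \circ Tm_X = f \circ h_A \circ Tm_X = f \circ m_X^{\sharp}.
\end{equation*}
Writing $m_X^{\sharp} = m_{\overline{X}} \circ e_{\overline{X}}$ for the factorisation defining the closure $\overline{m_X}^{\mathbb{A}}$, this yields $(f \circ m_X)^{\sharp} = (f \circ m_{\overline{X}}) \circ e_{\overline{X}}$. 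Since $e_{\overline{X}} \in \mathscr{E}$ and, by closure of $\mathscr{M}$ under composition, $f \circ m_{\overline{X}} \in \mathscr{M}$, this is an $(\mathscr{E},\mathscr{M})$-factorisation of the very morphism $(f \circ m_X)^{\sharp}$ whose image defines $\overline{f_*(m_X)}^{\mathbb{B}}$. By uniqueness of factorisations up to unique isomorphism there is a unique isomorphism $\alpha_{m_X} \colon \overline{f_*(m_X)}^{\mathbb{B}} \to f_*(\overline{m_X}^{\mathbb{A}})$, characterised by
\begin{equation*}
\alpha_{m_X} \circ e_{\overline{f \circ m_X}} = e_{\overline{X}}, \qquad (f \circ m_{\overline{X}}) \circ \alpha_{m_X} = m_{\overline{f \circ m_X}}.
\end{equation*}
The second equation exhibits $\alpha_{m_X}$ as a morphism of subobjects of $B$, as required for a component of $\alpha \colon \overline{(\cdot)}^{\mathbb{B}} \circ f_* \Rightarrow f_* \circ \overline{(\cdot)}^{\mathbb{A}}$.

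Naturality of $\alpha$ I would deduce from the uniqueness clause (F3): for a morphism $g \colon m_{X_1} \to m_{X_2}$ in $\Sub(A)$, both legs of the naturality square agree after pre-composition with the defining epimorphism and after post-composition with the defining monomorphism, using the two characterising identities together with the action of the closure functor on morphisms from \eqref{subobjectfunctor_onmorphism}; hence they coincide. (For epi-mono systems this is automatic, since $\Sub(B)$ is then merely a poset.)

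The unit law $\alpha \circ \eta^T f_* = f_* \eta^S$ is immediate: since $\eta^{\mathbb{A}}_{m_X} = e_{\overline{X}} \circ \eta_X$ and $\eta^{\mathbb{B}}_{f \circ m_X} = e_{\overline{f \circ m_X}} \circ \eta_X$ by \eqref{unitmultdef}, the first characterising identity gives $\alpha_{m_X} \circ \eta^{\mathbb{B}}_{f \circ m_X} = e_{\overline{X}} \circ \eta_X = \eta^{\mathbb{A}}_{m_X} = f_*(\eta^{\mathbb{A}}_{m_X})$. The multiplication law $f_* \mu^S \circ \alpha S \circ T\alpha = \alpha \circ \mu^T f_*$ (with $S = \overline{(\cdot)}^{\mathbb{A}}$, $T = \overline{(\cdot)}^{\mathbb{B}}$, $F = f_*$) is the main obstacle and where the bookkeeping of the iterated closures lives. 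My plan is to reduce it, exactly as in the proof of \Cref{inducedmonad}, to the uniqueness of a single diagonal. Both composites are morphisms $\overline{\overline{f \circ m_X}}^{\mathbb{B}} \to \overline{X}$, and one first checks that the square
\begin{equation*}
\begin{tikzcd}[row sep=3em, column sep=6em]
T^2X \arrow[twoheadrightarrow]{r}{e_{\overline{\overline{f \circ m_X}}} \circ Te_{\overline{f \circ m_X}}} \arrow{d}[left]{e_{\overline{X}} \circ \mu_X} & \overline{\overline{f \circ m_X}}^{\mathbb{B}} \arrow{d}{m_{\overline{\overline{f \circ m_X}}}} \\
\overline{X} \arrow[hookrightarrow]{r}[below]{f \circ m_{\overline{X}}} & B
\end{tikzcd}
\end{equation*}
commutes, both sides reducing to $h_B \circ T((f \circ m_X)^{\sharp})$ by naturality of $\mu$ and the algebra laws. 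It then remains to show that each of the two composites is a diagonal filler of this square. For the right-hand side $\alpha_{m_X} \circ \mu^{\mathbb{B}}_{f \circ m_X}$ this follows directly from the defining identities of $\mu^{\mathbb{B}}$ in \eqref{unitmultdef} and of $\alpha$. For the left-hand side $f_*(\mu^{\mathbb{A}}_{m_X}) \circ \alpha_{\overline{m_X}^{\mathbb{A}}} \circ T\alpha_{m_X}$ one pushes the epimorphism $e_{\overline{\overline{f \circ m_X}}} \circ Te_{\overline{f \circ m_X}}$ through the three factors in turn --- using \eqref{subobjectfunctor_onmorphism} for $T\alpha_{m_X}$, the characterising identity for the component $\alpha_{\overline{m_X}^{\mathbb{A}}}$, and the defining identity of $\mu^{\mathbb{A}}$ --- until it collapses to $e_{\overline{X}} \circ \mu_X$, and dually verifies the mono condition. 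By (F3) the two composites then coincide, establishing the law. The genuinely delicate point is keeping track of the two nested applications of the closure functor and of the component of $\alpha$ at the already-closed subobject $\overline{m_X}^{\mathbb{A}}$; conceptually, however, the identity is forced because every map in sight is a unique diagonal assembled from the same base data $\eta$, $\mu$ and the factorisation arrows of $T$, all of which $\alpha$ matches up on the nose.
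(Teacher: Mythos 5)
Your proposal is correct and follows essentially the same route as the paper: the component $\alpha_{m_X}$ is the unique comparison map between the two $(\mathscr{E},\mathscr{M})$-factorisations of $(f\circ m_X)^{\sharp}=f\circ m_X^{\sharp}$ (the paper calls it $\phi_{m_X}$ and obtains it as a unique diagonal), naturality and the multiplication law are reduced to uniqueness of diagonal fillers, and the unit law follows from the characterising identity $\alpha_{m_X}\circ e_{\overline{f\circ m_X}}=e_{\overline{X}}$ together with $\eta^{\mathbb{X}}_{m_Y}=e_{\overline{Y}}\circ\eta_Y$. Your observation that $\alpha$ is in fact a natural isomorphism is a correct (and mildly sharper) reading of the same uniqueness-of-factorisation argument the paper uses.
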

\begin{proof}
We need to define a natural transformation $\alpha: \overline{(\cdot)}^{\mathbb{B}} \circ f_{*} \Rightarrow f_{*} \circ \overline{(\cdot)}^{\mathbb{A}}$ between functors of type $\Sub(A) \rightarrow \Sub(B)$. That is, for any subobject $m_X : X \rightarrow A$, we require a homomorphism 
 \[ \alpha_{m_X}: m_{\overline{X}^{\mathbb{B}}} \rightarrow f \circ m_{\overline{X}^{\mathbb{A}}} \]
between subobjects of $B$. Since factorisations are unique up to unique isomorphism, and the diagram on the left below commutes 
\begin{equation*}
		\begin{tikzcd}
	TX \arrow[twoheadrightarrow]{dd}[left]{e_{\overline{X}^{\mathbb{A}}}}  \arrow[twoheadrightarrow]{rrr}[above]{e_{\overline{X}^{\mathbb{B}}}} \arrow{dr}{Tm_X} & & & \overline{X}^{\mathbb{B}}  \\
		& TA \arrow{d}{h_A} \arrow{r}{Tf} & TB \arrow{dr}{h_B} & \\
		\overline{X}^{\mathbb{A}} \arrow[hookrightarrow]{r}[below]{m_{\overline{X}^{\mathbb{A}}}} & A \arrow[hookrightarrow]{rr}[below]{f} & & B	 \arrow[hookleftarrow]{uu}[right]{m_{\overline{X}^{\mathbb{B}}}}	
		\end{tikzcd}
	\qquad
		\begin{tikzcd}
			TX \arrow[twoheadrightarrow]{d}[left]{e_{\overline{X}^{\mathbb{A}}}} \arrow[twoheadrightarrow]{r}[above]{e_{\overline{X}^{\mathbb{B}}}}  & \overline{X}^{\mathbb{B}} \arrow[dashed]{dl}{\phi_{m_X}}  \\
			\overline{X}^{\mathbb{A}} \arrow[hookrightarrow]{r}[below]{f \circ m_{\overline{X}^{\mathbb{A}}}} &B \arrow[hookleftarrow]{u}[right]{m_{\overline{X}^{\mathbb{B}}}}
		\end{tikzcd}
\end{equation*}
there exists a unique homomorphism $
\phi_{m_X}: m_{\overline{X}^{\mathbb{B}}} \rightarrow f \circ m_{\overline{X}^{\mathbb{A}}}$  of subobjects of $B$ as indicated on the right above.
	We thus propose the definition \[ \alpha_{m_X} := \phi_{m_X}. \]
	We begin by showing that the above proposal turns $\alpha$ into a \emph{natural} transformation. Let $g: m_{X} \rightarrow m_{Y}$ be a morphism of subobjects of $A$ and $f_{*}(g) = g: f_{*}(m_{X}) \rightarrow f_{*}(m_{Y})$ the induced morphism of subobjects of $B$. We need to prove the equality
	\[
	\phi_{m_Y} \circ \overline{f_{*}(g)}^{\mathbb{B}} = \overline{g}^{\mathbb{A}} \circ \phi_{m_X}.
	\]
	To this end, note that, as the two diagrams below commute
	\[
	  \begin{tikzcd}[column sep = 3.5em]
TX \arrow{dr}{e_{\overline{X}^{\mathbb{A}}}} \arrow{rr}{e_{\overline{X}^{\mathbb{B}}}} \arrow{dd}[left]{e_{\overline{Y}^{\mathbb{A}}} \circ Tg} & &  \overline{X}^{\mathbb{B}} \arrow{dd}{m_{\overline{X}^{\mathbb{B}}}} \arrow{dl}{\phi_{m_X}} \\
	& \overline{X}^{\mathbb{A}} \arrow{dl}{\overline{g}^{\mathbb{A}}} \arrow{dr}{f \circ m_{\overline{X}^{\mathbb{A}}}} & \\
	\overline{Y}^{\mathbb{A}} \arrow{rr}[below]{f \circ m_{\overline{Y}^{\mathbb{A}}}} & & B
		  \end{tikzcd}
	\quad
		  \begin{tikzcd}[column sep = 3em]
		  	TX \arrow{d}[left]{Tg} \arrow{rr}{e_{\overline{X}^{\mathbb{B}}}} & & \overline{X}^{\mathbb{B}} \arrow{dd}{m_{\overline{X}^{\mathbb{B}}}} \arrow{dl}[left]{\overline{f_{*}(g)}^{\mathbb{B}}} \\
		  	TY \arrow{d}[left]{e_{\overline{Y}^{\mathbb{A}}}} \arrow{r}{e_{\overline{Y}^{\mathbb{B}}}} & \overline{Y}^{\mathbb{B}} \arrow{dl}{\phi_{m_Y}} \arrow{dr}{m_{\overline{Y}^{\mathbb{B}}}} & \\
		  	\overline{Y}^{\mathbb{A}} \arrow{rr}[below]{f \circ m_{\overline{Y}^{\mathbb{A}}}} & & B
		  \end{tikzcd}
	\]
	both $\phi_{m_Y} \circ \overline{f_{*}(g)}^{\mathbb{B}}$ and $\overline{g}^{\mathbb{A}} \circ \phi_{m_X}$ are solutions to the unique diagonal below:
	\[
	  \begin{tikzcd}
	TX \arrow[twoheadrightarrow]{r}{e_{\overline{X}^{\mathbb{B}}}} \arrow{d}[left]{e_{\overline{Y}^{\mathbb{A}}} \circ Tg} & \overline{X}^{\mathbb{B}} \arrow{d}{m_{\overline{X}^{\mathbb{B}}}}\arrow[dashed]{dl}{} \\
	\overline{Y}^{\mathbb{A}} \arrow[hookrightarrow]{r}[below]{f \circ m_{\overline{Y}^{\mathbb{A}}}} & B
			  \end{tikzcd}.
	\]
Finally, one verifies that the commutative diagrams turning $(f_{*}, \alpha)$ into a morphism between monads correspond to the two equations 
\[ \phi_{m_X} \circ \mu^{\mathbb{B}}_{f_{*}(m_X)} = \mu_{m_X}^{\mathbb{A}} \circ \phi_{m_{\overline{X}^{\mathbb{A}}}} \circ \overline{\phi_{m_X}}^{\mathbb{B}} \qquad \eta^{\mathbb{A}}_{m_X} = \phi_{m_X} \circ \eta^{\mathbb{B}}_{f_{*}(m_X)}. \]
For the first equation we observe that since the two diagrams below commute
\[
  \begin{tikzcd}[row sep = 2em, column sep = 2em]
T^2X \arrow{rr}{e_{\overline{\overline{X}^{\mathbb{B}}}^{\mathbb{B}}} \circ Te_{\overline{X}^{\mathbb{B}}}} \arrow{d}[left]{\mu_X} && \overline{\overline{X}^{\mathbb{B}}}^{\mathbb{B}} \arrow{dd}{m_{\overline{\overline{X}^{\mathbb{B}}}^{\mathbb{B}}}} \arrow{dl}[left]{\mu^{\mathbb{B}}_{f_{*}(m_X)}} \\ 
TX \arrow{d}[left]{e_{\overline{X}^{\mathbb{A}}}} \arrow{r}{e_{\overline{X}^{\mathbb{B}}}} & \overline{X}^{\mathbb{B}} \arrow{dl}{\phi_{m_X}} \arrow{dr}{m_{\overline{X}^{\mathbb{B}}}} &  \\
\overline{X}^{\mathbb{A}} \arrow{rr}[below]{f \circ m_{\overline{X}^{\mathbb{A}}}} & & B
  \end{tikzcd}
  \quad
    \begin{tikzcd}[row sep = 2.5em, column sep = 2em]
    	T^2X \arrow{ddd}[left]{e_{\overline{X}^{\mathbb{A}}} \circ \mu_X} \arrow{r}{Te_{\overline{X}^{\mathbb{B}}}} \arrow{dr}{Te_{\overline{X}^{\mathbb{A}}}} & T\overline{X}^{\mathbb{B}}  \arrow{d}{T\phi_{m_X}} \arrow{rr}{e_{\overline{\overline{X}^{\mathbb{B}}}^{\mathbb{B}}}} & & \overline{\overline{X}^{\mathbb{B}}}^{\mathbb{B}} \arrow{ddd}{m_{\overline{\overline{X}^{\mathbb{B}}}^{\mathbb{B}}}} \arrow{dl}[left]{\overline{\phi_{m_X}}^{\mathbb{B}}} \\
    	& T \overline{X}^{\mathbb{A}} \arrow{d}[left]{e_{\overline{\overline{X}^{\mathbb{A}}}^{\mathbb{A}}}} \arrow{r}{e_{\overline{\overline{X}^{\mathbb{A}}}^{\mathbb{B}}}} & \overline{\overline{X}^{\mathbb{A}}}^{\mathbb{B}} \arrow{ddr}[left]{m_{\overline{\overline{X}^{\mathbb{A}}}^{\mathbb{B}}}} \arrow{dl}{\phi_{m_{\overline{X}^{\mathbb{A}}}}} & \\
    	& \overline{\overline{X}^{\mathbb{A}}}^{\mathbb{A}} \arrow{dl}{\mu_{m_X}^{\mathbb{A}}} \arrow{drr}[left]{f \circ m_{\overline{\overline{X}^{\mathbb{A}}}^{\mathbb{A}}}} & & \\
    	\overline{X}^{\mathbb{A}} \arrow{rrr}[below]{f \circ m_{\overline{X}^{\mathbb{A}}}} & & & B 
    \end{tikzcd}
\]
both $\phi_{m_X} \circ \mu^{\mathbb{B}}_{f_{*}(m_X)}$ and $\mu_{m_X}^{\mathbb{A}} \circ \phi_{m_{\overline{X}^{\mathbb{A}}}} \circ \overline{\phi_{m_X}}^{\mathbb{B}}$ are solutions to the unique diagonal below:
\[
  \begin{tikzcd}[column sep = 5em]
  	T^2X \arrow{d}[left]{e_{\overline{X}^{\mathbb{A}}} \circ \mu_X} \arrow[twoheadrightarrow]{r}{e_{\overline{\overline{X}^{\mathbb{B}}}^{\mathbb{B}}} \circ Te_{\overline{X}^{\mathbb{B}}}} & \overline{\overline{X}^{\mathbb{B}}}^{\mathbb{B}} \arrow[dashed]{dl} \arrow{d}{m_{\overline{\overline{X}^{\mathbb{B}}}^{\mathbb{B}}}} \\
  	\overline{X}^{\mathbb{A}} \arrow[hookrightarrow]{r}[below]{f \circ m_{\overline{X}^{\mathbb{A}}}} & B
  \end{tikzcd}.
\]
 For the second equation we observe that since the two diagrams below commute
 \[
  \begin{tikzcd}
    	X \arrow{r}{1} \arrow{d}[left]{e_{\overline{X}^{\mathbb{A}}} \circ \eta_X} & X \arrow{dl}{\eta^{\mathbb{A}}_{m_X}} \arrow{d}{f \circ m_X} \\
  	\overline{X}^{\mathbb{A}} \arrow{r}[below]{f \circ m_{\overline{X}^{\mathbb{A}}}} & B
  \end{tikzcd}
\qquad
 \begin{tikzcd}
 	X \arrow{rr}{1} \arrow{d}[left]{\eta_X} && X \arrow{dd}{f \circ m_X} \arrow{dl}[left]{\eta_{f_{*}(m_X)}^{\mathbb{B}}} \\
 	TX \arrow{d}[left]{e_{\overline{X}^{\mathbb{A}}}} \arrow{r}{e_{\overline{X}^{\mathbb{B}}}}& \overline{X}^{\mathbb{B}} \arrow{dr}{m_{\overline{X}^{\mathbb{B}}}} \arrow{dl}{\phi_{m_X}} & \\ 
 	\overline{X}^{\mathbb{A}} \arrow{rr}[below]{f \circ m_{\overline{X}^{\mathbb{A}}}} & & B
 \end{tikzcd} 
 \]
 both $\eta^{\mathbb{A}}_{m_X}$ and $\phi_{m_X} \circ \eta^{\mathbb{B}}_{f_{*}(m_X)}$ are solutions to the unique diagonal below:
 \[
   \begin{tikzcd}
  	X \arrow[twoheadrightarrow]{r}{1} \arrow{d}[left]{e_{\overline{X}^{\mathbb{A}}} \circ \eta_X} & X \arrow[dashed]{dl}{} \arrow{d}{f \circ m_X} \\
  	\overline{X}^{\mathbb{A}} \arrow[hookrightarrow]{r}[below]{f \circ m_{\overline{X}^{\mathbb{A}}}} & B
  \end{tikzcd}.
 \]
	\end{proof} 
  
The next statement establishes that the canonical forgetful functor $U: \Sub(X) \rightarrow \mathscr{C}$ defined by $U(m_Y) = Y$ and $U(f) = f$ extends to a morphism between monads.

  \begin{lemma}
  \label{morphismmonad2}
 	There exists a monad morphism $(U, \alpha): (\Sub(X), \overline{(\cdot)}^{\mathbb{X}}) \rightarrow (\mathscr{C}, T)$.
 \end{lemma}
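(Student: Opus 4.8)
The plan is to define the natural transformation $\alpha$ componentwise by the $\mathscr{E}$-part of the lifted factorisations, and then to observe that each of the three monad-morphism axioms collapses verbatim to a defining diagonal equation from \Cref{inducedmonad}, so that no genuine calculation is required. Recall that for a subobject $m_Y : Y \hookrightarrow X$ the homomorphism $m_Y^{\sharp} = h \circ Tm_Y$ factors as $e_{\overline{Y}} \colon (TY, \mu_Y) \twoheadrightarrow (\overline{Y}, h_{\overline{Y}})$ followed by $m_{\overline{Y}} \colon (\overline{Y}, h_{\overline{Y}}) \hookrightarrow (X,h)$. Since $TU(m_Y) = TY$ and $U\,\overline{(\cdot)}^{\mathbb{X}}(m_Y) = \overline{Y}$, the morphism $e_{\overline{Y}} \colon TY \to \overline{Y}$ has exactly the type required of a component $\alpha_{m_Y} \colon TU(m_Y) \to U\,\overline{(\cdot)}^{\mathbb{X}}(m_Y)$, so I set $\alpha_{m_Y} := e_{\overline{Y}}$. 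Well-definedness on isomorphism classes of subobjects is inherited from the coherence of the factorisations used throughout \Cref{functorprop} and \Cref{inducedmonad}.

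For naturality, a morphism $g \colon m_{Y_1} \to m_{Y_2}$ in $\Sub(X)$ requires $\overline{g} \circ e_{\overline{Y_1}} = e_{\overline{Y_2}} \circ Tg$, where $\overline{g} = U\,\overline{(\cdot)}^{\mathbb{X}}(g)$ and $Tg = TU(g)$. This is precisely the upper-left triangle of the right-hand defining diagram \eqref{subobjectfunctor_onmorphism} for $\overline{g}$, so naturality holds by the very definition of the functor on morphisms.

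It then remains to verify the two monad-morphism laws for $(U, \alpha) \colon (\Sub(X), \overline{(\cdot)}^{\mathbb{X}}) \to (\mathscr{C}, T)$, namely $\alpha \circ \eta^T U = U\eta^{\mathbb{X}}$ and $U\mu^{\mathbb{X}} \circ \alpha\,\overline{(\cdot)}^{\mathbb{X}} \circ T\alpha = \alpha \circ \mu^T U$. Evaluating the unit law at $m_Y$ gives $\alpha_{m_Y} \circ \eta_Y = e_{\overline{Y}} \circ \eta_Y$ on the left and $\eta^{\mathbb{X}}_{m_Y}$ on the right; these agree because the left diagram in \eqref{unitmultdef} defines $\eta^{\mathbb{X}}_{m_Y}$ to satisfy exactly $\eta^{\mathbb{X}}_{m_Y} = e_{\overline{Y}} \circ \eta_Y$. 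For the multiplication law, unwinding the whiskerings at $m_Y$ — identifying the component of $T\alpha$ with $T e_{\overline{Y}}$, the component of $\alpha\,\overline{(\cdot)}^{\mathbb{X}}$ with $\alpha_{m_{\overline{Y}}} = e_{\overline{\overline{Y}}}$, and the component of $U\mu^{\mathbb{X}}$ with $\mu^{\mathbb{X}}_{m_Y}$ — yields $\mu^{\mathbb{X}}_{m_Y} \circ e_{\overline{\overline{Y}}} \circ T e_{\overline{Y}}$ on the left and $e_{\overline{Y}} \circ \mu_Y$ on the right; these agree because the right diagram in \eqref{unitmultdef} defines $\mu^{\mathbb{X}}_{m_Y}$ precisely by $\mu^{\mathbb{X}}_{m_Y} \circ (e_{\overline{\overline{Y}}} \circ T e_{\overline{Y}}) = e_{\overline{Y}} \circ \mu_Y$.

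Since every axiom reduces to one of these defining equations, I expect the only real obstacle to be clerical rather than mathematical: correctly translating the abstract whiskered transformations $\alpha\,\overline{(\cdot)}^{\mathbb{X}}$ and $T\alpha$ into their components over a fixed subobject $m_Y$, and confirming that with these identifications the composites match, term for term, the diagonals that define $\eta^{\mathbb{X}}$ and $\mu^{\mathbb{X}}$ in \eqref{unitmultdef}.
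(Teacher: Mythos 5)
Your proposal is correct and coincides with the paper's own proof: the paper also sets $\alpha_{m_Y} := e_{\overline{Y}}$, derives naturality from the defining diagonal of $\overline{(\cdot)}^{\mathbb{X}}$ on morphisms, and observes that the two monad-morphism laws reduce exactly to the equations $\eta^{\mathbb{X}}_{m_Y} = e_{\overline{Y}} \circ \eta_Y$ and $\mu^{\mathbb{X}}_{m_Y} \circ e_{\overline{\overline{Y}}} \circ Te_{\overline{Y}} = e_{\overline{Y}} \circ \mu_Y$ built into \eqref{unitmultdef}. Your unwinding of the whiskered components is the right bookkeeping and matches the paper's (modulo a subscript typo in the paper, which writes $\mu^{\mathbb{X}}_{m_{\overline{Y}}}$ where your $\mu^{\mathbb{X}}_{m_Y}$ is the correct component).
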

    \begin{proof}
 We propose the following definition:
 \[
 \alpha: T \circ U \Rightarrow U \circ \overline{(\cdot)}^{\mathbb{X}} \qquad \alpha_{m_Y} := e_{\overline{Y}} : TY \rightarrow \overline{Y}.
 \]
  From the definition of $\overline{(\cdot)}^{\mathbb{X}}$ on morphisms it follows that $\alpha$ is a natural transformation.
   	The commutative diagrams turning $(U, \alpha)$ into a morphism between monads correspond to the following two equations:
 	\[
 	 \mu^{\mathbb{X}}_{m_{\overline{Y}}} \circ e_{\overline{\overline{Y}}} \circ T e_{\overline{Y}} = e_{\overline{Y}} \circ \mu_Y \qquad \eta^{\mathbb{X}}_{m_Y} = e_{\overline{Y}} \circ \eta_Y. \] The equalities are satisfied by the definitions of $\eta^{\mathbb{X}}$ and $\mu^{\mathbb{X}}$, respectively.
 	\end{proof}
 
We conclude with the observation that the monad morphism defined in \Cref{morphismmonad1} commutes with the monad morphisms defined in \Cref{morphismmonad2}.

\begin{lemma}
For any algebra homomorphism $f: \mathbb{A} \rightarrow \mathbb{B} \in \mathscr{M}$, the following diagram commutes:
\[
 \begin{tikzcd}[ampersand replacement=\&]
 	(\Sub(A), \overline{(\cdot)}^{\mathbb{A}}) \arrow{dr}[left]{(U_{\mathbb{A}}, \alpha_{\mathbb{A}})} \arrow{rr}{(f_*, \alpha_f)} \& \& (\Sub(B), \overline{(\cdot)}^{\mathbb{B}}) \arrow{dl}{(U_{\mathbb{B}}, \alpha_{\mathbb{B}})} \\
 	\& (\mathscr{C}, T) \&
 \end{tikzcd}
 \]	
\end{lemma}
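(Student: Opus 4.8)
The plan is to verify the equality of monad morphisms directly from the composition formula recalled in the preliminaries, namely that the composite of $(F,\alpha)\colon(\mathscr{C},S)\to(\mathscr{D},T)$ and $(G,\beta)\colon(\mathscr{D},T)\to(\mathscr{E},U)$ is $(GF,\, G\alpha\circ\beta F)$. Applying this to $(f_*,\alpha_f)$ followed by $(U_{\mathbb{B}},\alpha_{\mathbb{B}})$, the composite is the pair $(U_{\mathbb{B}}\circ f_*,\ U_{\mathbb{B}}\alpha_f\circ\alpha_{\mathbb{B}}f_*)$, and it is automatically a monad morphism, so no separate coherence check is needed. Since two monad morphisms coincide exactly when their underlying functors and their underlying natural transformations coincide, it suffices to compare these two data against $(U_{\mathbb{A}},\alpha_{\mathbb{A}})$.

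First I would treat the functor component. Unfolding the definitions, $f_*(m_X)=f\circ m_X$, and $U_{\mathbb{B}}$ returns the domain of a subobject, so $U_{\mathbb{B}}(f_*(m_X))$ is the domain $X$ of $f\circ m_X$, which is exactly $U_{\mathbb{A}}(m_X)$; on a morphism $g$ of subobjects we have $f_*(g)=g$ and $U_{\mathbb{B}}(g)=g=U_{\mathbb{A}}(g)$. Hence $U_{\mathbb{B}}\circ f_*=U_{\mathbb{A}}$ on the nose, matching the underlying functor of the right-hand monad morphism.

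It remains to compare the natural transformations at an arbitrary subobject $m_X\colon X\hookrightarrow A$. Reading off the whiskerings, the component of $U_{\mathbb{B}}\alpha_f\circ\alpha_{\mathbb{B}}f_*$ at $m_X$ is $U_{\mathbb{B}}\big((\alpha_f)_{m_X}\big)\circ(\alpha_{\mathbb{B}})_{f_*(m_X)}$. By \Cref{morphismmonad2} applied to $\mathbb{B}$, the transformation $(\alpha_{\mathbb{B}})_{f\circ m_X}$ is $e_{\overline{X}^{\mathbb{B}}}\colon TX\to\overline{X}^{\mathbb{B}}$, and by \Cref{morphismmonad1} the morphism $U_{\mathbb{B}}\big((\alpha_f)_{m_X}\big)$ is the underlying arrow $\phi_{m_X}\colon\overline{X}^{\mathbb{B}}\to\overline{X}^{\mathbb{A}}$ of the subobject homomorphism $\alpha_f$. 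Thus the composite component equals $\phi_{m_X}\circ e_{\overline{X}^{\mathbb{B}}}$, whereas $(\alpha_{\mathbb{A}})_{m_X}=e_{\overline{X}^{\mathbb{A}}}$ by \Cref{morphismmonad2} applied to $\mathbb{A}$. The desired identity $\phi_{m_X}\circ e_{\overline{X}^{\mathbb{B}}}=e_{\overline{X}^{\mathbb{A}}}$ is precisely the upper-left triangle of the diagram that defines $\phi_{m_X}$ in the proof of \Cref{morphismmonad1}, so it holds by construction, and the two natural transformations agree componentwise.

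The verification is genuinely routine; the only real care lies in the bookkeeping. I expect the main obstacle to be keeping the variance conventions straight: the definition of a monad morphism uses transformations of the form $TF\Rightarrow FS$, and the composite $G\alpha\circ\beta F$ must be whiskered correctly so that its $m_X$-component types as $\phi_{m_X}\circ e_{\overline{X}^{\mathbb{B}}}\colon TX\to\overline{X}^{\mathbb{A}}$ rather than as an ill-typed expression. Once the domains and codomains are pinned down and the identification $U_{\mathbb{B}}\circ f_*=U_{\mathbb{A}}$ is made explicit, the whole claim collapses to the defining triangle of $\phi_{m_X}$, requiring no further diagram chase.
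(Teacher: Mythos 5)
Your proposal is correct and follows essentially the same route as the paper's own proof: compute the composite monad morphism via the composition formula, note that $U_{\mathbb{B}}\circ f_*=U_{\mathbb{A}}$ is immediate from the definitions, and reduce the equality of the natural-transformation components to the identity $\phi_{m_X}\circ e_{\overline{X}^{\mathbb{B}}}=e_{\overline{X}^{\mathbb{A}}}$, which holds by the defining diagonal property of $\phi_{m_X}$ from \Cref{morphismmonad1}. The only difference is that you spell out the functor-component check and the whiskering bookkeeping a bit more explicitly than the paper does.
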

\begin{proof}
	We have to show that the monad morphism $(U_{\mathbb{B}} \circ f_*, \beta): (\Sub(A), \overline{(\cdot)}^{\mathbb{A}}) \rightarrow (\mathscr{C}, T)$ with the natural transformation
\[
		\beta = T \circ (U_{\mathbb{B}} \circ f_*) \overset{\alpha_{\mathbb{B}} \circ f_*}{\Longrightarrow} U_{\mathbb{B}} \circ \overline{(\cdot)}^{\mathbb{B}} \circ f_* \overset{U_{\mathbb{B}} \circ \alpha_f}{\Longrightarrow} (U_{\mathbb{B}} \circ f_*) \circ \overline{(\cdot)}^{\mathbb{A}}
		\]
	given on a subobject $m_Y : Y \rightarrow A$ in $\Sub(A)$ by the morphism
	\[
		\beta_{m_Y} : TY \overset{e_{\overline{Y}^{\mathbb{B}}}}{\longrightarrow} \overline{Y}^{\mathbb{B}} \overset{\phi_{m_Y}}{\longrightarrow}  \overline{Y}^{\mathbb{A}} \]
	coincides with the morphism $(U_{\mathbb{A}}, \alpha_{\mathbb{A}}): (\Sub(A), \overline{(\cdot)}^{\mathbb{A}}) \rightarrow (\mathscr{C}, T)$ with $(\alpha_{\mathbb{A}})_{m_Y} = e_{\overline{Y}^{\mathbb{A}}}$. 
	
	The equality $U_{\mathbb{B}} \circ f_{*} = U_{\mathbb{A}}$ is immediate from the involved definitions. The identity $\phi_{m_Y} \circ e_{\overline{Y}^{\mathbb{B}}} = e_{\overline{Y}^{\mathbb{A}}}$ follows from the definition of $\phi_{m_Y}$ as unique diagonal.
\end{proof}

 \subsection{Closing an Image}
 
 \label{closinganimagesec}
 
 In this section we investigate the closure of a particular class of subobjects: the ones that arise by taking the image of a morphism. We then show that deriving a minimal bialgebra from a minimal coalgebra by equipping the latter with additional algebraic structure can be realized as the closure of a subobject in this class.
 
  We assume a category $\mathscr{C}$ with a factorisation system $(\mathscr{E}, \mathscr{M})$ and a monad $T$ on $\mathscr{C}$ that preserves $\mathscr{E}$. Suppose that $\mathbb{X} = (X, h_X)$ is a $T$-algebra and $f: Y \rightarrow X$ a morphism in $\mathscr{C}$.
On the one hand, there exists a factorisation of $f$ in $\mathscr{C}$:
 \[
 f = Y \overset{e_{\im(f)}}{\twoheadrightarrow} \im(f) \overset{m_{\im(f)}}{\hookrightarrow} X.
 \] 
 On the other hand, there exists a factorisation of the lifing $f^{\sharp} = h_X \circ Tf$ in the category of Eilenberg-Moore algebras $\EM$:
 \[
 f^{\sharp} = (TY, \mu_Y) \overset{e_{\im(f^{\sharp})}}{\twoheadrightarrow} (\im(f^{\sharp}), h_{\im(f^{\sharp})}) \overset{m_{\im(f^{\sharp})}}{\hookrightarrow} (X,h_X).
 \]

 The next result shows that, up to isomorphism, the closure of the subobject $m_{\im(f)}$ with respect to the algebra $\mathbb{X}$ is given by the subobject $m_{\im(f^{\sharp})}$.
 \begin{lemma}
 \label{closureofimage}
 	$\overline{m_{\textnormal{\im}(f)}}^{\mathbb{X}} = m_{\textnormal{\im}(f^{\sharp})}$ in $\Sub(\mathbb{X})$.
 \end{lemma}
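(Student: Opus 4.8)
The plan is to reduce the claim to the uniqueness of factorisations in the lifted factorisation system on $\EM$, exploiting the standing assumption that $T$ preserves $\mathscr{E}$. The whole argument is an instance of the slogan \emph{precomposition with an $\mathscr{E}$-morphism does not change the image}, transported from $\mathscr{C}$ to the Eilenberg--Moore category.

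First I would record the key identity relating $f^\sharp$ to the lifting of the embedding $m_{\im(f)}$. Since $f = m_{\im(f)} \circ e_{\im(f)}$ is the factorisation of $f$ in $\mathscr{C}$, functoriality of $T$ and the definition of the lifting give
\[
	f^\sharp = h_X \circ Tf = h_X \circ T(m_{\im(f)}) \circ T(e_{\im(f)}) = (m_{\im(f)})^\sharp \circ T(e_{\im(f)}).
\]
All three of these morphisms are $T$-algebra homomorphisms: $(m_{\im(f)})^\sharp$ and $f^\sharp$ because they are liftings along the free-algebra adjunction, and $T(e_{\im(f)}) = F^T(e_{\im(f)})$ because it is the free functor applied to a morphism.

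Next I would invoke the definition of the closure. By construction $\overline{m_{\im(f)}}^{\mathbb{X}}$ is the $\mathscr{M}$-part of the $\EM$-factorisation of $(m_{\im(f)})^\sharp$, namely
\[
	(m_{\im(f)})^\sharp = (T\im(f),\mu_{\im(f)}) \overset{e_{\overline{\im(f)}}}{\twoheadrightarrow} \overline{\im(f)} \overset{m_{\overline{\im(f)}}}{\hookrightarrow} (X,h_X),
\]
so that $\overline{m_{\im(f)}}^{\mathbb{X}} = m_{\overline{\im(f)}}$. Substituting this into the identity above yields $f^\sharp = m_{\overline{\im(f)}} \circ \big(e_{\overline{\im(f)}} \circ T(e_{\im(f)})\big)$. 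The crucial point is to recognise the right-hand side as a valid $\EM$-factorisation of $f^\sharp$: here I use that $T$ preserves $\mathscr{E}$, so the underlying morphism of $T(e_{\im(f)})$ lies in $\mathscr{E}$ and hence $T(e_{\im(f)})$ belongs to the lifted $\mathscr{E}$ on $\EM$. Since the lifted $\mathscr{E}$ is closed under composition and contains $e_{\overline{\im(f)}}$, the composite $e_{\overline{\im(f)}} \circ T(e_{\im(f)})$ lies in the lifted $\mathscr{E}$, while $m_{\overline{\im(f)}}$ lies in the lifted $\mathscr{M}$.

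Finally, by uniqueness of factorisations up to unique isomorphism in $\EM$, this factorisation is isomorphic to the defining one $f^\sharp = m_{\im(f^\sharp)} \circ e_{\im(f^\sharp)}$, and the mediating isomorphism $\overline{\im(f)} \cong \im(f^\sharp)$ commutes with the two $\mathscr{M}$-parts. Thus $m_{\overline{\im(f)}}$ and $m_{\im(f^\sharp)}$ represent the same element of the poset $\Sub(\mathbb{X})$, which is exactly the claim $\overline{m_{\im(f)}}^{\mathbb{X}} = m_{\im(f^\sharp)}$. The algebraic content is light; I expect the only genuinely delicate bookkeeping to be insisting that every step happens in $\EM$ relative to the lifted factorisation system rather than in the base category $\mathscr{C}$ — in particular, verifying that $T(e_{\im(f)})$ is an $\EM$-morphism in the lifted $\mathscr{E}$ and that closure of $\mathscr{E}$ under composition (inherited from the factorisation-system axioms) is being applied at the level of $\EM$.
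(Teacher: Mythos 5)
Your proposal is correct and follows essentially the same route as the paper: factorise $(m_{\im(f)})^\sharp$ in $\EM$, observe that $f^\sharp = m_{\overline{\im(f)}} \circ \bigl(e_{\overline{\im(f)}} \circ T(e_{\im(f)})\bigr)$ is a second $(\mathscr{E},\mathscr{M})$-factorisation of $f^\sharp$ (using that $T$ preserves $\mathscr{E}$), and conclude by uniqueness of factorisations up to unique isomorphism. The paper presents the middle step as a commuting diagram rather than an equation chain, but the content is identical.
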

  \begin{proof}
 Using the factorisation of $(m_{\im(f)})^{\sharp} = h_X \circ Tm_{\im(f)}$ in $\EM$,
 \[
 (m_{\im(f)})^{\sharp} = (T\textnormal{\im}(f), \mu_{\textnormal{\im}(f)}) \overset{e_{\overline{\textnormal{\im}(f)}}}{\twoheadrightarrow} (\overline{\textnormal{\im}(f)} , h_{\overline{\textnormal{\im}(f)}}) \overset{m_{\overline{\textnormal{\im}(f)}}}{\hookrightarrow} (X, h_X)
 \]
 one easily verifies that the diagram below commutes: 
 	\begin{equation*}
 		\begin{tikzcd}[row sep = 1em]
 			(TY, \mu_Y) \arrow[twoheadrightarrow]{rr}{e_{\im(f^{\sharp})}} \arrow[twoheadrightarrow]{d}[left]{Te_{\im(f)}} \arrow{dr}{Tf} & & (\im(f^{\sharp}), h_{\im(f^{\sharp})})  \\
 			(T \im(f), \mu_{\im(f)}) \arrow{r}[below]{Tm_{\im(f)}} \arrow[twoheadrightarrow]{d}[left]{e_{\overline{\im(f)}}} & (TX, \mu_X) \arrow{dr}{h_X} & \\
 			(\overline{\im(f)} , h_{\overline{\im(f)}})\arrow[hookrightarrow]{rr}[below]{m_{\overline{\im(f)}}} & & (X,h_X) \arrow[hookleftarrow]{uu}[right]{m_{\im(f^{\sharp})}}
 		\end{tikzcd}.
 	\end{equation*}
 	Since factorisations are unique up to unique isomorphism, there thus exists a unique isomorphism $\phi: m_{\textnormal{\im}(f^{\sharp})}  \simeq m_{\overline{\im(f)}} $ of subobjects of $\mathbb{X}$ as indicated below:
\begin{equation*}
\begin{tikzcd}
 		 			(TY, \mu_Y) \arrow[twoheadrightarrow]{r}{e_{\im(f^{\sharp})}} \arrow[twoheadrightarrow]{d}[left]{e_{\overline{\im(f)}} \circ Te_{\im(f)}} &  (\im(f^{\sharp}), h_{\im(f^{\sharp})})  \arrow[dashed]{dl}{\phi} \\
 		 			(\overline{\im(f)} , h_{\overline{\im(f)}}) \arrow[hookrightarrow]{r}[below]{m_{\overline{\im(f)}}} &  (X,h_X)  \arrow[hookleftarrow]{u}[right]{m_{\im(f^{\sharp})}}
 		 		\end{tikzcd}	.
\end{equation*}
Since by definition $\overline{m_{\textnormal{\im}(f)}}^{\mathbb{X}} \simeq m_{\overline{\im(f)}} $, this shows the claim.
  	 	 \end{proof}

The next example uses \Cref{closureofimage} to show that deriving a minimal bialgebra from a minimal coalgebra can be realised as the closure of a subobject with respect to a monad of the type in \Cref{inducedmonad}.

\begin{example}[Closure of Minimal Moore Automata]
\label{closure_of_coalg}	
Let $F$ be the set endofunctor with $FX = B \times X^A$, for fixed sets $A$ and $B$. Let $T$ be a set monad, $h: TB \rightarrow B$ a $T$-algebra structure for $B$, and let $L: A^* \rightarrow B$ be a generalised language.

As $F$ preserves monomorphisms, the canonical epi-mono factorisation system of the category of sets lifts to the category $\Coalg(F)$, which consists of unpointed Moore automata with input $A$ and output $B$. 

There exists a size-minimal Moore automaton $\M_L$ that accepts $L$. It can be recovered as the epi-mono factorisation of the final $F$-coalgebra homomorphism $\obs: A^* \rightarrow  \Omega$, that is, $\M_L = m_{\im(\obs)}$. In more detail: $\Omega$ is carried by $B^{A^*}$; $\obs$ satisfies $\obs(w)(v) = L(wv)$; and $A^*$ is equipped with the $F$-coalgebra structure $\langle \varepsilon, \delta \rangle: A^* \rightarrow B \times (A^*)^A$ defined by $\varepsilon(w) = L(w)$ and $\delta(w)(a) = wa$ \cite{van2016master}. 

The algebra structure $h$ induces a canonical\footnote{\label{canonicaldistributivelawfootnote}Given an algebra $h: TB \rightarrow B$ for a set monad $T$, one can define a distributive law $\lambda$ between $T$ and $F$ with $FX = B \times X^A$ by $\lambda_X := (h \times \st) \circ \langle T\pi_1, T\pi_2 \rangle: TFX \rightarrow FTX$ \cite{jacobs2006bialgebraic}. (We write $\st$ for the usual strength function $\st: T(X^A) \rightarrow (TX)^A$ defined by $ \st(U)(a) = T(\ev_a)(U)$,
	where $\ev_{a}(f) = f(a)$.)} distributive law $\lambda$ between $T$ and $F$. One can show that $\lambda$-bialgebras are algebras over the monad $T_{\lambda}$ on $\Coalg(F)$ defined by $T_{\lambda}(X,k) = (TX, \lambda_X \circ Tk)$ and $T_{\lambda}f = Tf$ \cite{turi1997towards}. One such $T_{\lambda}$-algebra is the final $F$-coalgebra $\Omega$, when equipped with a canonical $T$-algebra structure \cite[Prop. 3]{jacobs2012trace}.

The functor $T_{\lambda}$ preserves epimorphisms in the category $\Coalg(F)$, if $T$ preserves epimorphisms in the category of sets. The latter is the case for every set endofunctor. By \Cref{inducedmonad}, there thus exists a well-defined monad $\overline{(\cdot)}$ on $\textnormal{\Sub}(\Omega)$. 

By construction, the minimal Moore automaton $\M_L$ lives in $\textnormal{\Sub}(\Omega)$. Reviewing the constructions in \cite{zetzsche2021} shows that the minimal $\lambda$-bialgebra $\mathbb{M}_L$ for $L$ is given by the image of the lifting of $\obs$, that is, $\mathbb{M}_L = m_{\im(\obs^{\sharp})}$. From \Cref{closureofimage} it thus follows $\mathbb{M}_L = \overline{\M_L}$. Hence the minimal $\lambda$-bialgebra for $L$ can be obtained from the minimal $F$-coalgebra for $L$ by closing the latter with respect to the $T_{\lambda}$-algebra structure of $\Omega$. 

For an example of the monad unit, observe how the minimal coalgebra in \Cref{m(l)} embeds into the minimal bialgebra in \Cref{overlineml}. 
\end{example}

The situation can be further generalised. We assume that i) $\mathscr{C}$ is a category with an $(\mathscr{E}, \mathscr{M})$-factorisation system; ii) $\lambda$ is a distributive law between a monad $T$ on $\mathscr{C}$ that preserves $\mathscr{E}$ and an endofunctor $F$ on $\mathscr{C}$ that preserves $\mathscr{M}$; iii) $(\Omega, h_{\Omega}, k_{\Omega})$ is a final $\lambda$-bialgebra.

\begin{theorem}
\label{coalgebratheorem}
There exists a functor $\overline{(\cdot)}: \Sub(\Omega, k_{\Omega}) \rightarrow \Sub(\Omega, h_{\Omega}, k_{\Omega})$ yielding a monad on $\Sub(\Omega, k_{\Omega})$ and satisfying $\overline{m_{\im(\obs_{(X,k)})}} \cong m_{\im(\obs_{\free_T(X,k)})}$ in $\Sub(\Omega, h_{\Omega}, k_{\Omega})$, for any $F$-coalgebra $(X,k)$.
\end{theorem}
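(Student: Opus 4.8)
The plan is to realise the statement as a direct instance of the abstract closure machinery of \Cref{inducedmonad} and \Cref{closureofimage}, applied not to the base category $\mathscr{C}$ but to the category $\Coalg(F)$ of $F$-coalgebras. The crucial observation, due to Turi and Plotkin \cite{turi1997towards}, is that $\lambda$-bialgebras are precisely the Eilenberg--Moore algebras for the monad $T_{\lambda}$ on $\Coalg(F)$ defined by $T_{\lambda}(X,k) = (TX, \lambda_X \circ Tk)$ and $T_{\lambda}f = Tf$: a $T_{\lambda}$-algebra structure $h \colon T_{\lambda}(X,k) \to (X,k)$ is an $F$-coalgebra homomorphism whose underlying map satisfies the bialgebra law of \Cref{bialgebradef}, while the Eilenberg--Moore laws for $h$ over $(\eta,\mu)$ are exactly the $T$-algebra laws. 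Under this identification $\Bialg(\lambda)$ is the Eilenberg--Moore category of $T_{\lambda}$, the final $\lambda$-bialgebra $(\Omega, h_{\Omega}, k_{\Omega})$ becomes a $T_{\lambda}$-algebra with carrier the $F$-coalgebra $(\Omega, k_{\Omega})$, and $\free_T$ becomes the free $T_{\lambda}$-algebra functor, since by \Cref{freeexpfunctor} the bialgebra $\free_T(X,k) = (TX, \mu_X, \lambda_X \circ Tk)$ has underlying coalgebra $T_{\lambda}(X,k)$ and algebra map $\mu_X$.

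To invoke \Cref{inducedmonad} with $\Coalg(F)$ as base category, I first lift the factorisation system. Since $F$ preserves $\mathscr{M}$, the pair $(\mathscr{E},\mathscr{M})$ lifts to a factorisation system $(\mathscr{E}',\mathscr{M}')$ on $\Coalg(F)$, whose classes consist of the coalgebra homomorphisms whose underlying $\mathscr{C}$-morphism lies in $\mathscr{E}$, respectively $\mathscr{M}$ \cite{kurzlogics, wissmann2022minimality}. Next I check that $T_{\lambda}$ preserves $\mathscr{E}'$: if $e$ is a coalgebra homomorphism with underlying map in $\mathscr{E}$, then $T_{\lambda}e = Te$ has underlying map $Te \in \mathscr{E}$ because $T$ preserves $\mathscr{E}$, so $T_{\lambda}e \in \mathscr{E}'$. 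With these two facts in place, \Cref{inducedmonad} applied to the $T_{\lambda}$-algebra $(\Omega, h_{\Omega}, k_{\Omega})$ yields a monad $\overline{(\cdot)}$ on $\Sub(\Omega, k_{\Omega})$, the subobjects of $(\Omega, k_{\Omega})$ in $\Coalg(F)$, and \Cref{functorprop} read in $\Coalg(F)$ shows this monad is carried by a functor $\overline{(\cdot)} \colon \Sub(\Omega, k_{\Omega}) \rightarrow \Sub(\Omega, h_{\Omega}, k_{\Omega})$ into the subobjects in $\Bialg(\lambda)$, exactly as claimed.

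For the isomorphism I would apply \Cref{closureofimage} in the base category $\Coalg(F)$ to the morphism $f = \obs_{(X,k)} \colon (X,k) \to (\Omega, k_{\Omega})$, the unique homomorphism into the final $F$-coalgebra. The lemma gives $\overline{m_{\im(f)}} \cong m_{\im(f^{\sharp})}$, where $f^{\sharp}$ is the transpose of $f$ along the free/forgetful adjunction for $T_{\lambda}$, i.e.\ the unique $T_{\lambda}$-algebra homomorphism $\free_T(X,k) \to (\Omega, h_{\Omega}, k_{\Omega})$ extending $f$. Since there is a unique such homomorphism into the final bialgebra, $f^{\sharp}$ coincides with the final bialgebra homomorphism $\obs_{\free_T(X,k)}$, giving $\overline{m_{\im(\obs_{(X,k)})}} \cong m_{\im(\obs_{\free_T(X,k)})}$. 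This is the abstract counterpart of \Cref{closure_of_coalg}, which treats the Moore-automaton case $FX = B \times X^A$.

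The main obstacle is not a single calculation but assembling the three identifications correctly: that $T_{\lambda}$-algebras coincide with $\lambda$-bialgebras, so that the hypotheses of \Cref{inducedmonad} and \Cref{closureofimage} are met verbatim by $\Coalg(F)$, $(\mathscr{E}',\mathscr{M}')$ and $T_{\lambda}$; that the lifted class $\mathscr{E}'$ is genuinely preserved by $T_{\lambda}$; and that the transpose $f^{\sharp}$ equals $\obs_{\free_T(X,k)}$. The last point relies on the fact that the semantic map into a final bialgebra is the unique bialgebra homomorphism and is induced by the underlying final coalgebra homomorphism; care is needed to confirm that the $f^{\sharp}$ of \Cref{closureofimage} --- built from the $T_{\lambda}$-algebra structure of $(\Omega, h_{\Omega}, k_{\Omega})$ --- is exactly this transpose. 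Once these identifications are verified, both the monad structure and the isomorphism follow immediately from the already-established \Cref{inducedmonad} and \Cref{closureofimage}.
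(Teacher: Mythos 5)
Your proposal is correct and follows essentially the same route as the paper's proof: lift the $(\mathscr{E},\mathscr{M})$-factorisation system to $\Coalg(F)$, identify $\lambda$-bialgebras with $T_{\lambda}$-algebras so that \Cref{functorprop}, \Cref{inducedmonad} and \Cref{closureofimage} apply verbatim, and conclude by identifying the transpose $\obs_{(X,k)}^{\sharp}$ with $\obs_{\free_T(X,k)}$ via finality. The identifications you flag as the main obstacles are exactly the steps the paper spells out, so nothing is missing.
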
 
  \begin{proof}
 As $F$ preserves $\mathscr{M}$, the $(\mathscr{E}, \mathscr{M})$-factorisation system of $\mathscr{C}$ lifts to $\Coalg(F)$. The category of $\lambda$-bialgebras is isomorphic to the category of algebras over the monad $T_{\lambda}$ on $\Coalg(F)$ defined by $T_{\lambda}(X,k) = (TX, \lambda_X \circ Tk)$ and $T_{\lambda}f = Tf$ \cite{turi1997towards}. The functor $T_{\lambda}$ preserves the $\mathscr{E}$-part of the lifted factorisation system of $\Coalg(F)$, if $T$ preserves the $\mathscr{E}$-part of the factorisation system of $\mathscr{C}$. In consequence, the factorisation system of $\Coalg(F)$ thus lifts to $\Bialg(\lambda)$. By \Cref{functorprop} and \Cref{inducedmonad} it follows that there exists a functor $\overline{(\cdot)}: \Sub(\Omega, k_{\Omega}) \rightarrow \Sub(\Omega, h_{\Omega}, k_{\Omega})$ that yields a monad on $\Sub(\Omega, k_{\Omega})$. 
 Since $(\Omega, h_{\Omega}, k_{\Omega})$ is a final $\lambda$-bialgebra, $(\Omega, k_{\Omega})$ is a final $F$-coalgebra. By \Cref{closureofimage} we have the equality $\overline{m_{\im(\obs_{(X,k)})}} = m_{\im(\obs_{(X,k)}^{\sharp})}$ in $\Sub(\Omega, h_{\Omega}, k_{\Omega})$, where $\obs_{(X,k)}^{\sharp} = h_{\Omega} \circ T_{\lambda}(\obs_{(X,k)})$ is of type $\free_T(X, k) = (TX, \mu_X, \lambda_X \circ Tk) \rightarrow (\Omega, h_{\Omega}, k_{\Omega})$. By uniqueness it follows $\obs_{(X,k)}^{\sharp} = \obs_{\free_T(X,k)}$, which proves the claim.
 \end{proof}

To recover \Cref{closure_of_coalg} as a special case of \Cref{coalgebratheorem}, one instantiates the latter for $F$ with $FX = B \times X^A$ and the canonical $F$-coalgebra with carrier $A^*$.

Finally, using analogous functors to the ones present in \eqref{closure_as_composition}, we observe that, as a consequence of \Cref{closureofimage}, the diagram below commutes:
 \begin{equation*}
  		\begin{tikzcd}
 			\mathscr{C}/X \arrow{r}{} \arrow{d}[left]{}  & \EM/\mathbb{X} \arrow{d}{} \\
 			\Sub(X) \arrow{r}[above]{\overline{(\cdot)}^{\mathbb{X}}} & \Sub(\mathbb{X})
 		\end{tikzcd}.	
 \end{equation*}

\section{Step 2: Generators and Bases}

\label{generatorsandbases_sec}

One of the central concepts of linear algebra is the notion of a basis for a vector space: a subset of a vector space is called a basis for the former if every vector can be uniquely  written as a finite linear combination of basis elements. Part of the importance of bases stems from the convenient consequences that follow from their existence. For example, linear transformations between vector spaces admit matrix representations relative to pairs of bases \cite{lang2004algebra}, which can be used for efficient calculations. The idea of a basis however is not restricted to the theory of vector spaces: other algebraic theories have analogous notions of bases (and generators, by waiving the uniqueness constraint), for instance modules, semi-lattices, Boolean algebras, convex sets, and many more. In fact, the theory of bases for vector spaces is special only in the sense that every vector space admits a basis, which is not  the case for e.g. modules. 

In this section, we use the compact category-theoretical abstraction of generators and bases given in \Cref{generatordefinition} to lift results from one theory to the others. For example, one may wonder if there exists a matrix representation theory for convex sets that is analogous to the one of vector spaces.

\subsection{Categorification}

This section introduces morphisms between algebras with a generator or a basis.

\begin{definition}
\label{generatorcategory}
The category $\Galg(T)$ of algebras with a generator over a monad $T$ is defined as follows:
\begin{itemize}
	\item Objects are pairs $(\mathbb{X}_{\alpha}, \alpha)$, where $\mathbb{X}_{\alpha} = (X_{\alpha}, h_{\alpha})$ is a $T$-algebra with generator $\alpha = (Y_{\alpha}, i_{\alpha}, d_{\alpha})$.
	\item A morphism $(f, p): (\mathbb{X}_{\alpha}, \alpha) \rightarrow (\mathbb{X}_{\beta}, \beta)$ consists of a $T$-algebra homomorphism $f: \mathbb{X}_{\alpha} \rightarrow \mathbb{X}_{\beta}$ and a Kleisli-morphism $p: Y_{\alpha} \rightarrow TY_{\beta}$, such that the diagram below commutes:
	\begin{equation}
	\label{galgmorph}
	\begin{tikzcd}
		X_{\alpha}  \arrow{r}{d_{\alpha}} \arrow{d}{f} & TY_{\alpha} \arrow{d}{p^{\sharp}}   \arrow{r}{i_{\alpha}^{\sharp}}  & X_{\alpha} \arrow{d}{f} \\
		 X_{\beta}  \arrow{r}{d_{\beta}} & TY_{\beta} \arrow{r}{i_{\beta}^{\sharp}} &  X_{\beta} 
	\end{tikzcd}.
	\end{equation}
	Given $(f, p): (\mathbb{X}_{\alpha}, \alpha) \rightarrow (\mathbb{X}_{\beta}, \beta)$ and $(g, q): (\mathbb{X}_{\beta}, \beta) \rightarrow (\mathbb{X}_{\gamma}, \gamma)$, their composition is defined componentwise as $(g, q) \circ (f, p) := (g \circ f, q \cdot p)$, where $q \cdot p := \mu_{Y_{\gamma}} \circ Tq \circ p$ denotes the usual Kleisli-composition.
\end{itemize}
The category $\Balg(T)$ of algebras with a basis is defined as the obvious full subcategory of $\Galg(T)$.
\end{definition}

Let $F:  \EM \rightarrow \Galg(T)$ be the functor with $F(\mathbb{X})  := (\mathbb{X}, (X, \id_X, \eta_{X}))$ and $F(f: \mathbb{X} \rightarrow \mathbb{Y}) := (f, \eta_Y \circ f)$, and $U: \Galg(T) \rightarrow \EM$ the forgetful functor defined as the projection on the first component. Then $F$ and $U$ are in an adjoint relation:

\begin{lemma}
\label{galgfreeforgetful}
$F \dashv U \,\colon\, \Galg(T) \leftrightarrows \EM$.
\end{lemma}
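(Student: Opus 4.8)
The plan is to establish the adjunction via the unit--counit characterisation of \Cref{adjunctiondef2}, observing that $U \circ F = \id_{\EM}$ and that the identity natural transformation serves as the unit. Since $UF\mathbb{X} = U(\mathbb{X}, (X, \id_X, \eta_X)) = \mathbb{X}$, I would take $\eta_{\mathbb{X}} := \id_{\mathbb{X}}$ and show it has the required universal property: for every object $(\mathbb{X}_\beta, \beta)$ of $\Galg(T)$, with $\mathbb{X}_\beta = (X_\beta, h_\beta)$ and $\beta = (Y_\beta, i_\beta, d_\beta)$, and every $T$-algebra homomorphism $g \colon \mathbb{X} \to U(\mathbb{X}_\beta, \beta) = \mathbb{X}_\beta$, there is a unique morphism $g^\sharp \colon F\mathbb{X} \to (\mathbb{X}_\beta, \beta)$ in $\Galg(T)$ with $U(g^\sharp) = g$. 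The candidate is $g^\sharp := (g, d_\beta \circ g)$.

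First I would unfold what a morphism $(f, p) \colon F\mathbb{X} \to (\mathbb{X}_\beta, \beta)$ amounts to. The generator attached to $F\mathbb{X}$ is the trivial one $(X, \id_X, \eta_X)$, so in the defining diagram \eqref{galgmorph} we have $d_\alpha = \eta_X$ and $i_\alpha^\sharp = h_X \circ T\id_X = h_X$, while $p \colon X \to TY_\beta$. Using $p^\sharp \circ \eta_X = \mu_{Y_\beta} \circ Tp \circ \eta_X = p$ (naturality of $\eta$ together with the monad unit law), the left square of \eqref{galgmorph} reduces to the single equation $p = d_\beta \circ f$. This forces the Kleisli component and shows that a morphism out of $F\mathbb{X}$ is determined by its first component, which will yield uniqueness; conversely it dictates how to define $g^\sharp$.

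The main obstacle is to verify that the right square of \eqref{galgmorph} is then automatic, i.e.\ that $f \circ h_X = i_\beta^\sharp \circ p^\sharp$ holds once $p = d_\beta \circ f$. I would argue by freeness: both $f \circ h_X$ and $i_\beta^\sharp \circ p^\sharp$ are $T$-algebra homomorphisms $(TX, \mu_X) \to (X_\beta, h_\beta)$ --- the former since $h_X$ and $f$ are homomorphisms, the latter since $p^\sharp \colon (TX, \mu_X) \to (TY_\beta, \mu_{Y_\beta})$ and $i_\beta^\sharp \colon (TY_\beta, \mu_{Y_\beta}) \to \mathbb{X}_\beta$ are. By the universal property of the free algebra $(TX, \mu_X)$ it suffices to check equality after precomposition with $\eta_X$. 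On the left this gives $f \circ h_X \circ \eta_X = f$ by the algebra unit law, and on the right $i_\beta^\sharp \circ p^\sharp \circ \eta_X = i_\beta^\sharp \circ d_\beta \circ f = f$, where the final step uses precisely the generator identity $i_\beta^\sharp \circ d_\beta = \id_{X_\beta}$ from \Cref{generatordefinition}. Hence both composites agree, the right square commutes, and $(g, d_\beta \circ g)$ is a well-defined morphism of $\Galg(T)$.

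Finally I would assemble these facts: $g^\sharp = (g, d_\beta \circ g)$ satisfies $U(g^\sharp) = g$, and any $(f, p)$ with $U(f, p) = g$ has $f = g$ and, by the left-square computation, $p = d_\beta \circ g$, so $g^\sharp$ is unique. This verifies the hypotheses of \Cref{adjunctiondef2} with unit $\eta = \id$, yielding $F \dashv U$. Equivalently, the assignment $(f, p) \mapsto U(f, p) = f$ is a bijection $\Galg(T)(F\mathbb{X}, (\mathbb{X}_\beta, \beta)) \cong \EM(\mathbb{X}, \mathbb{X}_\beta)$ with inverse $g \mapsto (g, d_\beta \circ g)$, whose naturality in both arguments is routine and follows from the functoriality of $U$ and $F$.
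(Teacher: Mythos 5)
Your proof is correct and follows essentially the same route as the paper's: both exhibit the bijection $\Galg(T)(F\mathbb{X}, (\mathbb{X}_\beta,\beta)) \cong \EM(\mathbb{X}, \mathbb{X}_\beta)$ by sending $(f,p)$ to $f$ and $g$ to $(g, d_\beta \circ g)$, with the left square of \eqref{galgmorph} forcing $p = d_\beta \circ f$. Your use of freeness of $(TX,\mu_X)$ to verify the right square is a slightly cleaner packaging of the same computation the paper carries out directly.
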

\begin{proof}
	Since every algebra can be generated by itself, the definition for $F$ is well-defined on objects. For morphisms, one easily establishes \eqref{galgmorph} from the naturality of $\eta$, the monad law $\mu_{Y} \circ T\eta_Y = \id_{TY}$, and the commutativity of $f$ with algebra structures. The compositionality of $F$ follows analogously; preservation of identity is trivial. For the natural isomorphism
	\[
	\Hom_{\Galg(T)}(F(\mathbb{X}), (\mathbb{X}_{\alpha}, \alpha)) \simeq \Hom_{\EM}(\mathbb{X}, U(\mathbb{X}_{\alpha}, \alpha))
	\]
	we propose mapping $(f, p)$ to $f$, and conversely, $f$ to $(f, d_{\alpha} \circ f)$. The latter is well-defined since \[ (d_{\alpha} \circ f)^{\sharp} \circ \eta_X = d_{\alpha} \circ f \quad \textnormal{and} \quad i_{\alpha}^{\sharp} \circ (d_{\alpha} \circ f)^{\sharp} = i_{\alpha}^{\sharp} \circ d_{\alpha} \circ f^{\sharp} = f^{\sharp} = f \circ (\id_X)^{\sharp}. \] Composition in one of the directions trivially yields the identity; for the other direction we note that if $(f, p)$ satisfies \eqref{galgmorph}, then $p = p^{\sharp} \circ \eta_X = d_{\alpha} \circ f$. 	
\end{proof}

\subsection{Products}

\label{productofbasessec}

In this section we show that, under certain assumptions, the monoidal product of a base category naturally extends to a monoidal product of algebras with bases in the base category. As a natural example we obtain the tensor-product of vector spaces with fixed bases.

We assume basic familiarity with monoidal categories. A monoidal monad $T$ on a monoidal category $(\mathscr{C}, \otimes, I)$ is a monad which is equipped with natural transformations $T_{X,Y}: TX \otimes TY \rightarrow T(X \otimes Y)$ and $T_0: I \rightarrow TI$, satisfying certain coherence conditions (see e.g. \cite{seal2013tensors}). One can show that, given such additional data, the monoidal structure of $\mathscr{C}$ induces a monoidal category $(\EM, \boxtimes, (TI, \mu_I))$, if the following two assumptions are given \cite[Corollary 2.5.6]{seal2013tensors}:
\begin{enumerate}
	\item[(A1)] For any two algebras $\mathbb{X}_{\alpha} = (X_{\alpha}, h_{\alpha})$ and $\mathbb{X}_{\beta} = (X_{\beta}, h_{\beta})$ the coequaliser $q_{\mathbb{X}_{\alpha}, \mathbb{X}_{\beta}}$ of the algebra homomorphisms 
$T(h_{\alpha} \otimes h_{\beta})$ and $\mu_{X_{\alpha} \otimes X_{\beta}} \circ T(T_{X_{\alpha}, X_{\beta}})$ of type $(T(TX_{\alpha} \otimes TX_{\beta}), \mu_{TX_{\alpha} \otimes TX_{\beta}}) \rightarrow (T(X_{\alpha} \otimes X_{\beta}), \mu_{X_{\alpha} \otimes X_{\beta}})$ exists (we denote its codomain by $\mathbb{X}_{\alpha} \boxtimes \mathbb{X}_{\beta} := (X_{\alpha} \boxtimes X_{\beta}, h_{\alpha \boxtimes \beta})$);
\item[(A2)] Left and right-tensoring with the induced functor $\boxtimes$ preserves reflexive coequaliser.
\end{enumerate}
The two monoidal products $\otimes$ and $\boxtimes$ are related via the natural embedding $
\iota_{\mathbb{X}_{\alpha}, \mathbb{X}_{\beta}} := q_{\mathbb{X}_{\alpha}, \mathbb{X}_{\beta}} \circ \eta_{X_{\alpha} \otimes X_{\beta}}: X_{\alpha} \otimes X_{\beta} \rightarrow X_{\alpha} \boxtimes X_{\beta}$.
  One can show that the product $(TY_{\alpha}, \mu_{Y_{\alpha}}) \boxtimes (TY_{\beta}, \mu_{Y_{\beta}})$ is given by $(T(Y_{\alpha} \otimes Y_{\beta}), \mu_{Y_{\alpha} \otimes Y_{\beta}})$ and the coequaliser $q_{(TY_{\alpha}, \mu_{Y_{\alpha}}), (TY_{\beta}, \mu_{Y_{\beta}})}$ by $\mu_{Y_{\alpha} \otimes Y_{\beta}} \circ T(T_{Y_{\alpha}, Y_{\beta}})$ \cite{seal2013tensors}. 

With the previous remarks in mind, we are able to claim the following.
\begingroup
 \allowdisplaybreaks
\begin{lemma}
\label{productofbases}
	Let $T$ be a monoidal monad on $(\mathscr{C}, \otimes, I)$ such that (A1) and (A2) are satisfied. Let $\alpha = (Y_{\alpha}, i_{\alpha}, d_{\alpha})$ and $\beta = (Y_{\beta}, i_{\beta}, d_{\beta})$ be generators (bases) for $T$-algebras $\mathbb{X}_{\alpha}$ and $\mathbb{X}_{\beta}$. Then $\alpha \boxtimes \beta = (Y_{\alpha} \otimes Y_{\beta}, \iota_{\mathbb{X}_{\alpha}, \mathbb{X}_{\beta}} \circ (i_{\alpha} \otimes i_{\beta}), (d_{\alpha} \boxtimes d_{\beta}))$ is a generator (basis) for the $T$-algebra $\mathbb{X}_{\alpha} \boxtimes \mathbb{X}_{\beta}$.	
\end{lemma}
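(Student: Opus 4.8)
The plan is to reduce both the generator and the basis statement to a single structural identity, namely that the free extension of the proposed embedding $i_{\alpha\boxtimes\beta} := \iota_{\mathbb{X}_\alpha, \mathbb{X}_\beta}\circ(i_\alpha\otimes i_\beta)$ coincides with the $\boxtimes$-product of the free extensions of $i_\alpha$ and $i_\beta$, i.e.
\[
i_{\alpha\boxtimes\beta}^\sharp = i_\alpha^\sharp \boxtimes i_\beta^\sharp.
\]
Recall that $(Y,i,d)$ is a generator for $(X,h)$ exactly when the algebra homomorphism $i^\sharp = h\circ Ti$ admits $d$ as a section in $\mathscr{C}$, and a basis exactly when $i^\sharp$ and $d$ are mutually inverse (with $d$ an algebra homomorphism by \Cref{forbasis-d-isalgebrahom}). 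Thus, once the identity above is in place, the generator (resp.\ basis) property of $\alpha\boxtimes\beta$ becomes a statement about sections (resp.\ inverses) of $i_\alpha^\sharp\boxtimes i_\beta^\sharp$.

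First I would prove the structural identity. Since $i_\alpha^\sharp\boxtimes i_\beta^\sharp$ is an algebra homomorphism out of the free algebra $(TY_\alpha,\mu_{Y_\alpha})\boxtimes(TY_\beta,\mu_{Y_\beta}) = (T(Y_\alpha\otimes Y_\beta),\mu_{Y_\alpha\otimes Y_\beta})$, and $i_{\alpha\boxtimes\beta}^\sharp$ is the unique such homomorphism restricting to $i_{\alpha\boxtimes\beta}$ along $\eta_{Y_\alpha\otimes Y_\beta}$, it suffices to check $(i_\alpha^\sharp\boxtimes i_\beta^\sharp)\circ\eta_{Y_\alpha\otimes Y_\beta} = i_{\alpha\boxtimes\beta}$. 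Here I would use the two facts recalled before the statement: $\iota = q\circ\eta$ and $q_{(TY_\alpha,\mu),(TY_\beta,\mu)} = \mu_{Y_\alpha\otimes Y_\beta}\circ T(T_{Y_\alpha,Y_\beta})$, which together with naturality of $\eta$ and the unit law give $\iota_{(TY_\alpha,\mu),(TY_\beta,\mu)} = T_{Y_\alpha,Y_\beta}$; combined with the monoidal-monad coherence $T_{Y_\alpha,Y_\beta}\circ(\eta_{Y_\alpha}\otimes\eta_{Y_\beta}) = \eta_{Y_\alpha\otimes Y_\beta}$, this rewrites $\eta_{Y_\alpha\otimes Y_\beta}$ through the embedding. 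Naturality of $\iota$ in its two arguments then lets me slide $i_\alpha^\sharp\boxtimes i_\beta^\sharp$ past $\iota$ and finish with $i_\alpha^\sharp\circ\eta_{Y_\alpha} = i_\alpha$.

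For the basis case the argument then closes cleanly: by \Cref{forbasis-d-isalgebrahom} and \Cref{basisifffree} each $d_\gamma$ is an algebra isomorphism with inverse $i_\gamma^\sharp$, so functoriality of $\boxtimes$ makes $d_\alpha\boxtimes d_\beta$ an isomorphism inverse to $i_\alpha^\sharp\boxtimes i_\beta^\sharp = i_{\alpha\boxtimes\beta}^\sharp$; both basis equations follow immediately, identifying $\mathbb{X}_\alpha\boxtimes\mathbb{X}_\beta$ as the free algebra on $Y_\alpha\otimes Y_\beta$.

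The main obstacle is the generator case, where $d_\alpha$ and $d_\beta$ are merely sections in $\mathscr{C}$ and not algebra homomorphisms, so the bifunctor $\boxtimes$ does not apply to them directly and one must give the symbol $d_\alpha\boxtimes d_\beta$ an independent meaning as a $\mathscr{C}$-morphism $X_\alpha\boxtimes X_\beta\to T(Y_\alpha\otimes Y_\beta)$. The naive candidate---freely extending $T_{Y_\alpha,Y_\beta}\circ(d_\alpha\otimes d_\beta)$ along $q_{\mathbb{X}_\alpha,\mathbb{X}_\beta}$---fails, since for non-homomorphic $d_\gamma$ it does not coequalise the defining pair of the tensor coequaliser. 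Instead I would obtain the section from the absolute (split) coequaliser presentations of $i_\alpha^\sharp$ and $i_\beta^\sharp$ induced by $d_\alpha$ and $d_\beta$, pushing these through $\boxtimes$ by invoking assumption (A2) that tensoring preserves reflexive coequalisers; this is the delicate point, and is exactly where (A1), (A2) and the monoidal coherence of $T$ are needed. Once $d_\alpha\boxtimes d_\beta$ is available as a $\mathscr{C}$-morphism, the generator equation $i_{\alpha\boxtimes\beta}^\sharp\circ(d_\alpha\boxtimes d_\beta) = \id$ reduces, via the structural identity and naturality, to the $\boxtimes$-combination of the two generator equations $i_\alpha^\sharp\circ d_\alpha = \id_{X_\alpha}$ and $i_\beta^\sharp\circ d_\beta = \id_{X_\beta}$.
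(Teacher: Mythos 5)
Your central identity $i_{\alpha\boxtimes\beta}^{\sharp} = i_{\alpha}^{\sharp} \boxtimes i_{\beta}^{\sharp}$ is exactly what the paper's proof establishes, only packaged differently: the paper first shows $h_{\alpha}\boxtimes h_{\beta} = h_{\alpha\boxtimes\beta}\circ T(\iota_{\mathbb{X}_{\alpha},\mathbb{X}_{\beta}})$ (using $q_{\mathbb{X}_{\alpha},\mathbb{X}_{\beta}} = h_{\alpha}\boxtimes h_{\beta}$, the unit law, and the fact that $q$ is an algebra homomorphism), and post-composing with $T(i_{\alpha}\otimes i_{\beta})$ and using $Tf\boxtimes Tg = T(f\otimes g)$ recovers your identity. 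Your handling of the basis case is correct and is essentially the paper's ``follows analogously'' made explicit via \Cref{forbasis-d-isalgebrahom} and \Cref{basisifffree}.

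The divergence is in the generator case, and there you have put your finger on a genuine soft spot: the paper simply applies functoriality of $\boxtimes$ to $d_{\alpha}$ and $d_{\beta}$, computing $(h_{\alpha}\boxtimes h_{\beta})\circ(Ti_{\alpha}\boxtimes Ti_{\beta})\circ(d_{\alpha}\boxtimes d_{\beta}) = (i_{\alpha}^{\sharp}\circ d_{\alpha})\boxtimes(i_{\beta}^{\sharp}\circ d_{\beta})$, which is only licensed when $d_{\alpha}$ and $d_{\beta}$ are algebra homomorphisms --- true for bases by \Cref{forbasis-d-isalgebrahom}, but not for mere generators. So your worry is justified; however, your proposed repair does not close the gap. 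The splitting $d$ exhibits $i^{\sharp}$ as the absolute coequaliser of the pair $(d\circ i^{\sharp},\,\id)$, which is reflexive only when $d\circ i^{\sharp}=\id$ (the basis case), so (A2) is not the relevant tool --- and absolute coequalisers are preserved by every functor anyway. The obstruction is unchanged: $\boxtimes$ acts only on homomorphisms, so there is nothing to ``push through''. A repair that does work whenever the coequaliser $q_{\mathbb{X}_{\alpha},\mathbb{X}_{\beta}}$ admits a section $s_{q}$ in $\mathscr{C}$ (automatic over $\Set$, where regular epis of algebras are surjections) is to set
\[
d_{\alpha}\boxtimes d_{\beta} \;:=\; \mu_{Y_{\alpha}\otimes Y_{\beta}}\circ T(T_{Y_{\alpha},Y_{\beta}})\circ T(d_{\alpha}\otimes d_{\beta})\circ s_{q},
\]
since then $(i_{\alpha}^{\sharp}\boxtimes i_{\beta}^{\sharp})\circ \mu_{Y_{\alpha}\otimes Y_{\beta}}\circ T(T_{Y_{\alpha},Y_{\beta}}) = q_{\mathbb{X}_{\alpha},\mathbb{X}_{\beta}}\circ T(i_{\alpha}^{\sharp}\otimes i_{\beta}^{\sharp})$ by naturality of $q$, and the generator equation collapses to $q_{\mathbb{X}_{\alpha},\mathbb{X}_{\beta}}\circ T\bigl((i_{\alpha}^{\sharp}\circ d_{\alpha})\otimes(i_{\beta}^{\sharp}\circ d_{\beta})\bigr)\circ s_{q} = q_{\mathbb{X}_{\alpha},\mathbb{X}_{\beta}}\circ s_{q} = \id$. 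Either make such an additional assumption explicit, or restrict the generator half of the statement to the case where $d_{\alpha}$ and $d_{\beta}$ are homomorphisms.
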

\begin{proof}
First, we calculate
\begin{align}
\begin{split}
\label{productfirsteq}
	& h_{\alpha} \boxtimes h_{\beta} \\
	=&\ (\id = \mu_{X_{\alpha} \otimes X_{\beta}} \circ T(\eta_{X_{\alpha} \otimes X_{\beta}})) \\
	& (h_{\alpha} \boxtimes h_{\beta}) \circ \mu_{X_{\alpha} \otimes X_{\beta}} \circ T(\eta_{X_{\alpha} \otimes X_{\beta}}) \\
	=&\ (q_{\mathbb{X}_{\alpha}, \mathbb{X}_{\beta}} = h_{\alpha} \boxtimes h_{\beta}\ \text{\cite{seal2013tensors}}) \\
	& q_{\mathbb{X}_{\alpha}, \mathbb{X}_{\beta}} \circ \mu_{X_{\alpha} \otimes X_{\beta}} \circ T(\eta_{X_{\alpha} \otimes X_{\beta}}) \\	
	=&\ (q_{\mathbb{X}_{\alpha}, \mathbb{X}_{\beta}}\ \textnormal{is algebra homomorphism}) \\
	& h_{\alpha \boxtimes \beta} \circ T(q_{\mathbb{X}_{\alpha}, \mathbb{X}_{\beta}}) \circ T(\eta_{X_{\alpha} \otimes X_{\beta}}) \\	 
	=&\ (\textnormal{Definition of } \iota_{\mathbb{X}_{\alpha}, \mathbb{X}_{\beta}}) \\
	 & h_{\alpha \boxtimes \beta} \circ T(\iota_{\mathbb{X}_{\alpha}, \mathbb{X}_{\beta}}).
\end{split}
\end{align}

If $\alpha$ and $\beta$ are generators, it thus follows
\begin{align*}
	& h_{\alpha \boxtimes \beta} \circ T(\iota_{\mathbb{X}_{\alpha}, \mathbb{X}_{\beta}}) \circ T(i_{\alpha} \otimes i_{\beta}) \circ (d_{\alpha} \boxtimes d_{\beta}) \\
	=&\ \eqref{productfirsteq} \\ 
	& (h_{\alpha} \boxtimes h_{\beta}) \circ T(i_{\alpha} \otimes i_{\beta}) \circ (d_{\alpha} \boxtimes d_{\beta}) \\
	=&\ (T(f \otimes g) = Tf \boxtimes Tg\ \text{\cite{seal2013tensors}}) \\
	& (h_{\alpha} \boxtimes h_{\beta}) \circ (T(i_{\alpha}) \boxtimes T(i_{\beta})) \circ (d_{\alpha} \boxtimes d_{\beta}) \\
	=&\ (\boxtimes \textnormal{ is functorial}) \\
	& (h_{\alpha} \circ T(i_{\alpha}) \circ d_{\alpha}) \boxtimes (h_{\beta} \circ T(i_{\beta}) \circ d_{\beta}) \\
	=&\ (\alpha, \beta \textnormal{ are generators}) \\
	& \id_{X_{\alpha}} \boxtimes \id_{X_{\beta}} \\
	=&\ (\boxtimes \textnormal{ is functorial})  \\
 & \id_{X_{\alpha} \boxtimes X_{\beta}}.
\end{align*} 
The additional equality for the case in which $\alpha$ and $\beta$ are bases follows analogously. 
\end{proof} 
\endgroup
 
\begin{corollary}
\label{monoidalproduct}
Let $T$ be a monoidal monad on $(\mathscr{C}, \otimes, I)$ such that (A1) and (A2) are satisfied. The definitions $(\mathbb{X}_{\alpha}, \alpha) \boxtimes (\mathbb{X}_{\beta}, \beta) := (\mathbb{X}_{\alpha} \boxtimes \mathbb{X}_{\beta}, \alpha \boxtimes \beta)$ and $(f, p) \boxtimes (g,q) := (f \boxtimes g, T_{Y_{\alpha'}, Y_{\beta'}} \circ (p \otimes q))$ yield monoidal structures with unit $((TI, \mu_I), (I, \eta_I, \id_{TI}))$ on $\Galg(T)$ and $\Balg(T)$. 
\end{corollary}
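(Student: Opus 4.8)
The plan is to promote the two structural results already established---namely \Cref{productofbases}, which shows that $\alpha \boxtimes \beta$ is a generator (basis) for $\mathbb{X}_\alpha \boxtimes \mathbb{X}_\beta$, and \Cref{galgfreeforgetful}, which relates $\Galg(T)$ to $\EM$---into a full monoidal structure. The key observation is that the monoidal structure $(\EM, \boxtimes, (TI, \mu_I))$ on the category of Eilenberg-Moore algebras is \emph{given} to us by \cite[Corollary 2.5.6]{seal2013tensors} under the standing assumptions (A1) and (A2). So the task reduces to showing that this existing monoidal structure lifts \emph{along the forgetful functor} $U \colon \Galg(T) \to \EM$ once we equip the lifted objects with the canonical generators (bases) from \Cref{productofbases}, and that the proposed action on morphisms is coherent.

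First I would check well-definedness on objects and morphisms. On objects, $(\mathbb{X}_\alpha, \alpha) \boxtimes (\mathbb{X}_\beta, \beta) := (\mathbb{X}_\alpha \boxtimes \mathbb{X}_\beta, \alpha \boxtimes \beta)$ is well-defined precisely because \Cref{productofbases} guarantees $\alpha \boxtimes \beta$ is a generator (basis); for $\Balg(T)$ we additionally use the basis clause of the same lemma. On morphisms, given $(f,p) \colon (\mathbb{X}_\alpha,\alpha) \to (\mathbb{X}_{\alpha'},\alpha')$ and $(g,q) \colon (\mathbb{X}_\beta,\beta) \to (\mathbb{X}_{\beta'},\beta')$, I must verify that $(f \boxtimes g, T_{Y_{\alpha'},Y_{\beta'}} \circ (p \otimes q))$ is a legitimate morphism in $\Galg(T)$, i.e. that its two components satisfy the commuting square \eqref{galgmorph}. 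Here $f \boxtimes g$ is an algebra homomorphism by functoriality of $\boxtimes$ on $\EM$, and the Kleisli component is a morphism $Y_\alpha \otimes Y_\beta \to T(Y_{\alpha'} \otimes Y_{\beta'})$ built from the monoidal multiplication $T_{-,-}$; the square \eqref{galgmorph} should then follow by combining the squares for $(f,p)$ and $(g,q)$ after tensoring, using the naturality of $T_{-,-}$, the compatibility of $\iota$ with the embeddings $i_\alpha, i_\beta$, and the functoriality of $\boxtimes$ established inside the proof of \Cref{productofbases}.

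Next I would verify functoriality of $\boxtimes$ on $\Galg(T)$ (preservation of identities and compositionality in both components simultaneously) and then assemble the coherence data---the associator, left and right unitors---by transporting the corresponding natural isomorphisms from $(\EM, \boxtimes, (TI,\mu_I))$ through $U$, equipping each with its canonical generator component. The unit object is $((TI,\mu_I),(I,\eta_I,\id_{TI}))$, which is exactly $F(TI,\mu_I)$ under the free functor of \Cref{galgfreeforgetful}, and the fact that $(I,\eta_I,\id_{TI})$ is a basis for $(TI,\mu_I)$ is an instance of the general observation (recorded before \Cref{setbasedexamples}) that every free algebra $(TY,\mu_Y)$ carries the basis $(Y,\eta_Y,\id_{TY})$. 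The triangle and pentagon identities reduce to those already holding in $\EM$ on the algebra components, while on the Kleisli (generator) components they reduce to coherence of the monoidal monad structure $T_{-,-}, T_0$, i.e. to the associativity and unit coherence conditions built into the definition of a monoidal monad.

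The main obstacle I anticipate is the morphism-level bookkeeping: verifying that the proposed tensor of morphisms genuinely satisfies \eqref{galgmorph} and that the associator/unitor components are themselves $\Galg(T)$-morphisms (not merely $\EM$-morphisms). This is where the interaction between the Kleisli composition $q \cdot p = \mu \circ Tq \circ p$ used in $\Galg(T)$, the monoidal multiplication $T_{-,-}$, and the coequaliser-defined embedding $\iota_{\mathbb{X}_\alpha,\mathbb{X}_\beta}$ must be reconciled, and it will lean heavily on the identities $T(f \otimes g) = Tf \boxtimes Tg$ and $q_{(TY_\alpha,\mu),(TY_\beta,\mu)} = \mu_{Y_\alpha \otimes Y_\beta} \circ T(T_{Y_\alpha,Y_\beta})$ quoted from \cite{seal2013tensors}. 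I expect no genuinely new difficulty beyond careful diagram chasing, since all the hard structural content has been isolated in \Cref{productofbases} and the cited monoidality of $\EM$; the remaining work is to confirm that nothing in the generator/basis data obstructs the transported coherence isomorphisms, which it does not because those isomorphisms act as identities on the underlying generating objects up to the canonical $\otimes$-to-$\boxtimes$ comparison.
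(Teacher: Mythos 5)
Your proposal is correct and follows essentially the same route as the paper's proof: well-definedness on objects via \Cref{productofbases}, well-definedness on morphisms via the identities $Tf \boxtimes Tg = T(f \otimes g)$ and $q_{\mathbb{X}_{\alpha}, \mathbb{X}_{\beta}} = h_{\alpha} \boxtimes h_{\beta}$ from \cite{seal2013tensors}, and transporting the coherence isomorphisms from $\EM$ by pairing them with the canonical Kleisli-morphisms between generators (the paper does this explicitly via the basis representation \eqref{basisrepresentation}, which is exactly your ``canonical generator component''). No gaps.
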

\begin{proof}
	By \Cref{productofbases} the construction is well-defined on objects. That it is well-defined on morphisms, i.e. the commutativity of \eqref{galgmorph}, is a consequence of the equalities $Tf \boxtimes Tg = T(f \otimes g)$ and $q_{\mathbb{X}_{\alpha}, \mathbb{X}_{\beta}} = h_{\alpha} \boxtimes h_{\beta}$ \cite{seal2013tensors}, which imply $(T_{Y_{\alpha}, Y_{\beta}} \circ (p \otimes q))^{\sharp} = (\mu_{Y_{\alpha'}} \boxtimes \mu_{Y_{\beta'}}) \circ (Tp \boxtimes Tq)$. The natural isomorphisms underlying the monoidal structure for $\EM$ can be extended to morphisms in $\Galg(T)$ by associating canonical Kleisli-morphisms between generators as in \eqref{basisrepresentation}.
\end{proof}

We conclude by instantiating above construction to the setting of vector spaces.

\begin{example}[Tensor Product of Vector Spaces]
	Recall the free $\mathbb{K}$-vector space monad $\mathcal{V}_{\mathbb{K}}$ from \Cref{exampleofmonads}. The category of sets is monoidal (in fact, cartesian) with respect to the cartesian product $\times$ and the singleton set $\lbrace \star \rbrace$. The monad $\mathcal{V}_{\mathbb{K}}$ is monoidal when equipped with the morphisms $(\mathcal{V}_{\mathbb{K}})_{X,Y}(\varphi, \psi)(x,y) := \varphi(x) \cdot \psi(y)$ and $(\mathcal{V}_{\mathbb{K}})_0(\star)(\star) := 1_{\mathbb{K}}$ \cite{parlant_et_al:LIPIcs:2020:12746}. The category of $\mathcal{V}_{\mathbb{K}}$-algebras is isomorphic to the category of $\mathbb{K}$-vector spaces, and satisfies (A1) and (A2). The monoidal structure induced by $\mathcal{V}_{\mathbb{K}}$ is the usual tensor product $\otimes$ of vector spaces with the unit field $\mathcal{V}_{\mathbb{K}}(\lbrace \star \rbrace) \simeq \mathbb{K}$. \Cref{productofbases} captures the well-known fact that the dimension of the tensor product of two vector spaces equals the product of the dimensions of each vector space. The structure maps of the product generator map an element $(y_{\alpha}, y_{\beta})$ to the vector $i(y_{\alpha}) \otimes i(y_{\beta})$, and a vector $x$ to the function $(d_{\alpha} \otimes d_{\beta})(x)$, where 
	\begin{align*}
		d_{\alpha} \otimes d_{\beta} = \overline{d_{\alpha} \times d_{\beta}}: \mathbb{X}_{\alpha} \otimes \mathbb{X}_{\beta} &\rightarrow (\mathcal{V}_{\mathbb{K}}(Y_{\alpha}), \mu_{Y_{\alpha}}) \otimes (\mathcal{V}_{\mathbb{K}}(Y_{\beta}), \mu_{Y_{\beta}}) \\ &\simeq (\mathcal{V}_{\mathbb{K}}(Y_{\alpha} \times Y_{\beta}), \mu_{Y_{\alpha} \otimes Y_{\beta}})
	\end{align*}  is the unique linear extension of the bilinear map defined by \[(d_{\alpha} \times d_{\beta})(x_{\alpha}, x_{\beta})(y_{\alpha}, y_{\beta}) = d_{\alpha}(x_{\alpha})(y_{\alpha}) \cdot d_{\beta}(x_{\beta})(y_{\beta}). \] 
\end{example}

\subsection{Kleisli Representation Theory}

\label{representationtheorysec}

In this section we use our category-theoretical definition of a basis to derive a representation theory for homomorphisms between algebras over monads that is analogous to the matrix representation theory for linear transformations between vector spaces.

In more detail, recall that a linear transformation $L: V \rightarrow W$ between $k$-vector spaces with finite bases $\alpha = \lbrace v_1, ... , v_n \rbrace$ and $\beta = \lbrace w_1,  ..., w_m \rbrace$, respectively, admits a matrix representation $L_{\alpha \beta} \in \Mat_{k}(m, n)$ with  $ L(v_j) = \sum_i (L_{\alpha \beta})_{i,j} w_i$, such that for any vector $v$ in $V$ the coordinate vectors $L(v)_{\beta} \in k^m$ and $v_{\alpha} \in k^n$   
satisfy the equality $L(v)_{\beta} = L_{\alpha \beta} v_{\alpha}$. A great amount of linear algebra is concerned with finding bases such that the corresponding matrix representation is in an efficient shape, for instance diagonalised. The following definitions generalise the situation by substituting Kleisli morphisms for matrices. 
  
\begin{definition}
\label{representationdef}
	Let $\alpha = (Y_{\alpha}, i_{\alpha}, d_{\alpha})$ and $\beta = (Y_{\beta}, i_{\beta}, d_{\beta})$ be bases for $T$-algebras $\mathbb{X}_{\alpha} = (X_{\alpha},h_{\alpha})$ and $\mathbb{X}_{\beta} = (X_{\beta},h_{\beta})$, respectively. The basis representation $f_{\alpha \beta}$ of a $T$-algebra homomorphism $f: \mathbb{X}_{\alpha} \rightarrow \mathbb{X}_{\beta}$ with respect to $\alpha$ and $\beta$ is defined by 
	\begin{equation}
	\label{basisrepresentation}
		f_{\alpha \beta} := Y_{\alpha} \overset{i_{\alpha}}{\longrightarrow} X_{\alpha} \overset{f}{\longrightarrow} X_{\beta} \overset{d_{\beta}}{\longrightarrow} TY_{\beta}.
	\end{equation}
	 Conversely, the morphism $p^{\alpha \beta}$ associated with a Kleisli morphism $p: Y_{\alpha} \rightarrow TY_{\beta}$ with respect to $\alpha$ and $\beta$ is defined by \begin{equation}
		\label{associatedmorph}
		p^{\alpha \beta} := X_{\alpha} \overset{d_{\alpha}}{\longrightarrow} TY_{\alpha} \overset{Tp}{\longrightarrow} T^2Y_{\beta} \overset{\mu_{Y_{\beta}}}{\longrightarrow} TY_{\beta} \overset{Ti_{\beta}}{\longrightarrow} TX_{\beta} \overset{h_{\beta}}{\longrightarrow} X_{\beta}.
			\end{equation}
	\end{definition}	
	
	\begin{figure*}[t]
\[ A = L_{\alpha' \alpha'} = 
\begin{pmatrix}
	  0 & -1 \\ 1 & 0
\end{pmatrix}, 
\quad
L_{\alpha \alpha} = \begin{pmatrix}
	3 & 2 \\
	-5 & -3
\end{pmatrix},
\quad
P = \begin{pmatrix}
	-1 & 1 \\
	2 & -1
\end{pmatrix},
\quad
P^{-1} = \begin{pmatrix}
	1 & 1 \\ 2 & 1
\end{pmatrix}
\]
	\caption{The basis representation of the counter-clockwise rotation by 90 degree $L: \mathbb{R}^2 \rightarrow \mathbb{R}^2$, $L(v) = Av$ with respect to $\alpha = \lbrace (1,2), (1,1) \rbrace$ and $\alpha' = \lbrace (1,0), (0,1) \rbrace$ satisfies $L_{\alpha' \alpha'} = P^{-1} L_{\alpha \alpha} P$}
		\label{basisrepex1}
\end{figure*}

The morphism associated with a Kleisli morphism should be understood as the linear transformation between vector spaces induced by some matrix of the right type. The following result confirms this intuition.
	
\begin{lemma}
\label{associsalgebrahom}
	The function \eqref{associatedmorph} is a $T$-algebra homomorphism $p^{\alpha \beta}: \mathbb{X}_{\alpha} \rightarrow \mathbb{X}_{\beta}$.
\end{lemma}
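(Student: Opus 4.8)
The statement claims that the morphism $p^{\alpha \beta} = h_{\beta} \circ Ti_{\beta} \circ \mu_{Y_{\beta}} \circ Tp \circ d_{\alpha}$ defined in \eqref{associatedmorph} is a $T$-algebra homomorphism from $\mathbb{X}_{\alpha} = (X_{\alpha}, h_{\alpha})$ to $\mathbb{X}_{\beta} = (X_{\beta}, h_{\beta})$. Thus I must verify the single equation
\[
p^{\alpha \beta} \circ h_{\alpha} = h_{\beta} \circ T(p^{\alpha \beta}).
\]

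My plan is to prove this by diagram chasing, in the style advocated earlier in the excerpt, slicing the large diagram into cells whose commutativity is already established. The key observation is that the composite $p^{\alpha \beta}$ is built from pieces that are \emph{all} $T$-algebra homomorphisms except possibly the leading $d_{\alpha}$. Concretely: $h_{\beta} \circ Ti_{\beta} \circ \mu_{Y_{\beta}} \circ Tp$ is a composite of homomorphisms, since $\mu_{Y_{\beta}} \colon (T^2 Y_{\beta}, \mu_{TY_{\beta}}) \to (TY_{\beta}, \mu_{Y_{\beta}})$, the Kleisli extension $Tp$ composed appropriately, $Ti_{\beta}$ as a free-algebra map, and $h_{\beta}$ as an Eilenberg-Moore structure map are each homomorphisms of the relevant free algebras. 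The only non-obvious ingredient is $d_{\alpha} \colon \mathbb{X}_{\alpha} \to (TY_{\alpha}, \mu_{Y_{\alpha}})$. But here I would invoke \Cref{forbasis-d-isalgebrahom}, which states precisely that for a \emph{basis} $(Y_{\alpha}, i_{\alpha}, d_{\alpha})$ the decomposition map $d_{\alpha}$ satisfies $\mu_{Y_{\alpha}} \circ T d_{\alpha} = d_{\alpha} \circ h_{\alpha}$, i.e. $d_{\alpha}$ is itself a $T$-algebra homomorphism $\mathbb{X}_{\alpha} \to (TY_{\alpha}, \mu_{Y_{\alpha}})$.

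Granting that, the proof reduces to observing that $p^{\alpha \beta}$ is a composite of $T$-algebra homomorphisms and that composites of homomorphisms are homomorphisms. To make this rigorous I would recall the free-algebra adjunction: any morphism $g \colon A \to UB$ in the base lifts to a homomorphism $g^{\sharp} = h_B \circ Tg \colon (TA, \mu_A) \to B$, and $\mu_A$, $Ti_{\beta}$, $Tp$ arrange exactly into such Kleisli-extension shapes. In fact $i_{\beta}^{\sharp} = h_{\beta} \circ Ti_{\beta}$ is the lift of $i_{\beta}$, and $p$ composed with $\mu$ realizes the Kleisli composite, so $h_{\beta} \circ Ti_{\beta} \circ \mu_{Y_{\beta}} \circ Tp = i_{\beta}^{\sharp} \circ p^{\sharp}$ where $p^{\sharp} = \mu_{Y_{\beta}} \circ Tp \colon (TY_{\alpha}, \mu_{Y_{\alpha}}) \to (TY_{\beta}, \mu_{Y_{\beta}})$ is a homomorphism by the monad laws. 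Hence
\[
p^{\alpha \beta} = i_{\beta}^{\sharp} \circ p^{\sharp} \circ d_{\alpha},
\]
a composite of three homomorphisms $\mathbb{X}_{\alpha} \xrightarrow{d_{\alpha}} (TY_{\alpha}, \mu_{Y_{\alpha}}) \xrightarrow{p^{\sharp}} (TY_{\beta}, \mu_{Y_{\beta}}) \xrightarrow{i_{\beta}^{\sharp}} \mathbb{X}_{\beta}$, and is therefore a homomorphism.

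I expect the main obstacle to be purely presentational rather than mathematical: the genuinely substantive content — that $d_{\alpha}$ is a homomorphism — is precisely what \Cref{forbasis-d-isalgebrahom} supplies, so the work is entirely bookkeeping. The one subtlety worth checking carefully is that $p^{\sharp} = \mu_{Y_{\beta}} \circ Tp$ is indeed a homomorphism from $(TY_{\alpha}, \mu_{Y_{\alpha}})$ to $(TY_{\beta}, \mu_{Y_{\beta}})$, which amounts to the naturality of $\mu$ together with the associativity monad law $\mu_{Y_{\beta}} \circ T\mu_{Y_{\beta}} = \mu_{Y_{\beta}} \circ \mu_{TY_{\beta}}$; this is the standard fact that Kleisli extensions lift to free-algebra homomorphisms. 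If I preferred a fully explicit diagrammatic argument instead of factoring through the adjunction, I would paste $d_{\alpha}$ as the left-hand cell (commuting by \Cref{forbasis-d-isalgebrahom}) onto a staircase of naturality squares for $\mu$ and the algebra-homomorphism square for $h_{\beta}$, and read off the outer commuting rectangle; but the factored form $p^{\alpha\beta} = i_\beta^\sharp \circ p^\sharp \circ d_\alpha$ is cleaner and is what I would write up.
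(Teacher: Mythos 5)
Your proof is correct and is in substance the same as the paper's: the paper establishes $p^{\alpha\beta}\circ h_{\alpha}=h_{\beta}\circ T(p^{\alpha\beta})$ by pasting exactly the diagram you sketch at the end — the left square supplied by \Cref{forbasis-d-isalgebrahom} (that $d_{\alpha}$ is a homomorphism), naturality squares for $\mu$, the associativity law $\mu\circ T\mu=\mu\circ\mu_T$, and the Eilenberg--Moore square for $h_{\beta}$. Your factorisation $p^{\alpha\beta}=i_{\beta}^{\sharp}\circ p^{\sharp}\circ d_{\alpha}$ into three $T$-algebra homomorphisms is just a cleaner packaging of those same cells, so nothing is missing.
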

\begin{proof}
Using \Cref{forbasis-d-isalgebrahom} we deduce the commutativity of the following diagram:
\begin{equation*}
	\begin{tikzcd}[row sep=3em, column sep = 3em]
	TX_{\alpha} \arrow{d}[left]{h_{\alpha}} \arrow{r}{Td_{\alpha}} & T^2Y_{\alpha} \arrow{r}{T^2p} \arrow{d}{\mu_{Y_{\alpha}}} & T^3Y_{\beta} \arrow{r}{T\mu_{Y_{\beta}}} \arrow{d}{\mu_{TY_{\beta}}} & T^2Y_{\beta} \arrow{d}{\mu_{Y_{\beta}}} \arrow{r}{T^2i_{\beta}} & T^2X_{\beta} \arrow{r}{Th_{\beta}} \arrow{d}{\mu_{X_{\beta}}} & TX_{\beta} \arrow{d}{h_{\beta}} \\
	X_{\alpha} \arrow{r}{d_{\alpha}} & TY_{\alpha} \arrow{r}{Tp} & T^2Y_{\beta} \arrow{r}{\mu_{Y_{\beta}}} & TY_{\beta} \arrow{r}{Ti_{\beta}} & TX_{\beta} \arrow{r}{h_{\beta}} & X_{\beta}	
	\end{tikzcd}.
\end{equation*}
\end{proof}

The next result establishes a generalisation of the observation that for fixed bases, constructing a matrix representation of a linear transformation on the one hand, and associating a linear transformation to a matrix of the right type on the other hand, are mutually inverse operations.

\begin{lemma}
\label{inverseoperations}
	The operations $\eqref{basisrepresentation}$ and $\eqref{associatedmorph}$ are mutually inverse.	
\end{lemma}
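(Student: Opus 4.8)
The plan is to verify directly that the two assignments compose to the identity in both orders, i.e.\ that $(p^{\alpha\beta})_{\alpha\beta} = p$ for every Kleisli morphism $p\colon Y_\alpha \to TY_\beta$ and $(f_{\alpha\beta})^{\alpha\beta} = f$ for every $T$-algebra homomorphism $f\colon \mathbb{X}_\alpha \to \mathbb{X}_\beta$. Both checks are pure equational reasoning, and the only ingredients needed are the two basis axioms of \Cref{generatordefinition} (namely that $i^{\sharp}_\alpha := h_\alpha \circ Ti_\alpha$ is a two-sided inverse of $d_\alpha$, and likewise for $\beta$), the two identities $\mu_Y \circ Td = d \circ h$ and $d \circ i = \eta_Y$ from \Cref{forbasis-d-isalgebrahom}, the monad laws, and the homomorphism property of $f$.

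For the direction $(p^{\alpha\beta})_{\alpha\beta} = p$, I would unfold $d_\beta \circ p^{\alpha\beta} \circ i_\alpha$ and first simplify $p^{\alpha\beta}\circ i_\alpha$. Precomposing with $i_\alpha$ lets me apply $d_\alpha \circ i_\alpha = \eta_{Y_\alpha}$ from \Cref{forbasis-d-isalgebrahom}; then naturality of $\eta$ turns $Tp \circ \eta_{Y_\alpha}$ into $\eta_{TY_\beta} \circ p$, and the unit law $\mu_{Y_\beta}\circ \eta_{TY_\beta} = \id_{TY_\beta}$ collapses the middle, leaving $p^{\alpha\beta}\circ i_\alpha = i^{\sharp}_\beta \circ p$. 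Postcomposing with $d_\beta$ and using the basis identity $d_\beta \circ i^{\sharp}_\beta = \id_{TY_\beta}$ yields $p$.

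For the direction $(f_{\alpha\beta})^{\alpha\beta} = f$, I would unfold the composite $h_\beta \circ Ti_\beta \circ \mu_{Y_\beta} \circ T(d_\beta \circ f \circ i_\alpha) \circ d_\alpha$ and rewrite the functorial image as $Td_\beta \circ Tf \circ Ti_\alpha$. The identity $\mu_{Y_\beta}\circ Td_\beta = d_\beta \circ h_\beta$ from \Cref{forbasis-d-isalgebrahom} lets me replace $\mu_{Y_\beta}\circ Td_\beta$ by $d_\beta \circ h_\beta$; then $i^{\sharp}_\beta \circ d_\beta = \id_{X_\beta}$ cancels the outer $h_\beta \circ Ti_\beta \circ d_\beta$. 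At this point the homomorphism property $h_\beta \circ Tf = f \circ h_\alpha$ pushes $f$ to the outside, and the basis identity $i^{\sharp}_\alpha \circ d_\alpha = \id_{X_\alpha}$ removes the remaining $h_\alpha \circ Ti_\alpha \circ d_\alpha$, giving $f$.

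There is no genuine obstacle here; the argument is a routine diagram chase (and could equivalently be presented as two commuting diagrams in the style of the preceding lemmas). The point worth emphasising is that \emph{both} occurrences of the two-sided inverse property are essential: a mere generator would only supply $i^{\sharp}\circ d = \id$, which suffices for $(f_{\alpha\beta})^{\alpha\beta}=f$ but not for $(p^{\alpha\beta})_{\alpha\beta}=p$, where the reverse equation $d_\beta \circ i^{\sharp}_\beta = \id_{TY_\beta}$ is exactly what makes the coordinate representation faithful. Thus the mutual invertibility is precisely the feature that distinguishes bases from generators in this representation theory.
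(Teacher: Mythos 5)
Your proof is correct and follows essentially the same route as the paper's: both directions are verified by the same equational unfolding, using $d\circ i=\eta_Y$ and $\mu_Y\circ Td=d\circ h$ from \Cref{forbasis-d-isalgebrahom} together with the two-sided inverse property of a basis, exactly as in the paper's two commuting diagrams. (Your closing aside that $(f_{\alpha\beta})^{\alpha\beta}=f$ already holds for mere generators is true, though the particular derivation you sketch for it routes through the basis-only identity $\mu_{Y_\beta}\circ Td_\beta=d_\beta\circ h_\beta$; a generator-only argument would instead use naturality of $\mu$ and the algebra law.)
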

\begin{proof}
Essentially, the statement follows from the observation that, for bases, the functions involved in the composition below are isomorphisms:
\begin{align}
\label{repisos}
\begin{split}
	\Hom_{\EM}(\mathbb{X}_{\alpha}, \mathbb{X}_{\beta})
	 \overset{(d_{\beta})_* \circ (i_{\alpha}^{\sharp})^*}&{\longrightarrow} \Hom_{\EM}((TY_{\alpha}, \mu_{Y_{\alpha
}}), (TY_{\beta}, \mu_{Y_{\beta}}))  \\
 \overset{(\eta_{Y_{\alpha}})^*}&{\longrightarrow} \Hom_{\KL}(Y_{\alpha}, Y_{\beta}).	
\end{split}
\end{align}
More concretely, the definitions imply: 
	\begin{align*}
		(p^{\alpha \beta})_{\alpha \beta} &= d_{\beta} \circ (h_{\beta} \circ Ti_{\beta} \circ \mu_{Y_{\beta}} \circ Tp \circ d_{\alpha}) \circ i_{\alpha} \\
		(f_{\alpha \beta})^{\alpha \beta} &= h_{\beta} \circ Ti_{\beta} \circ \mu_{Y_{\beta}} \circ T(d_{\beta} \circ f \circ i_{\alpha}) \circ d_{\alpha}.
	\end{align*}
Using \Cref{forbasis-d-isalgebrahom} we deduce the commutativity of the diagrams below:
	\[
	\begin{gathered}
		\begin{tikzcd}[row sep=3em, column sep = 3em]
			Y_{\alpha} \arrow{d}[left]{i_{\alpha}} \arrow{rrr}{p} \arrow{dr}{\eta_{Y_{\alpha}}} & & &  TY_{\beta} \arrow{d}{\id_{TY_{\beta}}} \arrow{rr}{\id_{TY_{\beta}}} \arrow{dl}{\eta_{TY_{\beta}}} & & TY_{\beta} \\
			X_{\alpha} \arrow{r}{d_{\alpha}} & TY_{\alpha} \arrow{r}{Tp} & T^2Y_{\beta} \arrow{r}{\mu_{Y_{\beta}}}  & TY_{\beta} \arrow{r}{Ti_{\beta}} & TX_{\beta} \arrow{r}{h_{\beta}} & X_{\beta} \arrow{u}[right]{d_{\beta}}
			\end{tikzcd} \\
		\begin{tikzcd}[row sep=3em, column sep = 3em]
				X_{\alpha} \arrow{r}{\id_{X_{\alpha}}} \arrow{d}[left]{Ti_{\alpha} \circ d_{\alpha}} & X_{\alpha} \arrow{rr}{f} & & X_{\beta} \arrow{d}{d_{\beta}} \arrow{r}{\id_{X_{\beta}}}  & X_{\beta} \\
				TX_{\alpha} \arrow{ur}{h_{\alpha}} \arrow{r}{Tf} & TX_{\beta} \arrow{r}{Td_{\beta}} \arrow{urr}{h_{\beta}} & T^2Y_{\beta} \arrow{r}{\mu_{Y_{\beta}}} & TY_{\beta} \arrow{r}{Ti_{\beta}} & TX_{\beta} \arrow{u}[right]{h_{\beta}}
				\end{tikzcd}	.		
	\end{gathered}
	\]	
\end{proof}

At the beginning of this section we recalled the soundness identity $L(v)_{\beta} = L_{\alpha \beta} v_{\alpha}$ for the matrix representation $L_{\alpha \beta}$ of a linear transformation $L$. The next result is a natural generalisation of this statement.

\begin{lemma}
	\label{uniquehom}
	$f_{\alpha \beta}$ is the unique Kleisli-morphism such that $f_{\alpha \beta} \cdot d_{\alpha} = d_{\beta} \circ f $. Conversely, $p^{\alpha \beta}$ is the unique $T$-algebra homomorphism such that $p \cdot d_{\alpha} = d_{\beta} \circ p^{\alpha \beta}$.  
\end{lemma}
	\begin{proof}
	The definitions imply
	\[
	 f_{\alpha \beta} \cdot d_{\alpha} = \mu_{Y_{\beta}} \circ T(d_{\beta} \circ f \circ i_{\alpha}) \circ d_{\alpha}.
	\]
	Using \Cref{forbasis-d-isalgebrahom} we deduce the commutativity of the diagram below:
	\begin{equation*}
									\begin{tikzcd}
					X_{\alpha} \arrow{d}[left]{\id_{X_{\alpha}}} \arrow{r}{d_{\alpha}} & TY_{\alpha} \arrow{r}{Ti_{\alpha}} & TX_{\alpha} \arrow{dll}{h_{\alpha}} \arrow{r}{Tf} & TX_{\beta} \arrow{dll}{h_{\beta}} \arrow{r}{Td_{\beta}} & T^2Y_{\beta} \arrow{d}{\mu_{Y_{\beta}}} \\
					X_{\alpha} \arrow{r}[below]{f} & X_{\beta} \arrow{rrr}[below]{d_{\beta}} & & & TY_{\beta}				
									\end{tikzcd}.
	\end{equation*}
	Since an equality of the type $p \cdot d_{\alpha} = d_{\beta} \circ f$ implies 	\begin{align*}
		p &= \mu_{Y_{\beta}} \circ \eta_{TY_{\beta}} \circ p = \mu_{Y_{\beta}} \circ Tp \circ \eta_{Y_{\alpha}} = \mu_{Y_{\beta}} \circ Tp \circ d_{\alpha} \circ i_{\alpha} = d_{\beta} \circ f \circ i_{\alpha} = f_{\alpha \beta},
	\end{align*}
	the morphism $f_{\alpha \beta}$ is moreover uniquely determined. For the second part of the claim we observe that by above and \Cref{inverseoperations} it holds
	$
	p \cdot d_{\alpha} = (p^{\alpha \beta})_{\alpha \beta} \cdot d_{\alpha} = d_{\beta} \circ p^{\alpha \beta}$,
	and that an equality of the type $p \cdot d_{\alpha} = d_{\beta} \circ f$ implies $ p^{\alpha \beta} = i_{\beta}^{\sharp} \circ (p \cdot d_{\alpha}) = i^{\sharp}_{\beta} \circ d_{\beta} \circ f = f$.
\end{proof}

The next result establishes the compositionality of basis representations: the matrix representation of the composition of two linear transformations is given by the multiplication of the matrix representations of the individual linear transformations. 

\begin{lemma}
	\label{compone}
$g_{\beta \gamma} \cdot f_{\alpha \beta} = (g \circ f)_{\alpha \gamma}$.
\end{lemma}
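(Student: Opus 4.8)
The plan is to prove the compositionality identity $g_{\beta\gamma} \cdot f_{\alpha\beta} = (g \circ f)_{\alpha\gamma}$ by reducing it, via \Cref{uniquehom}, to a uniqueness argument. The cleanest route avoids unfolding the definition \eqref{basisrepresentation} directly and instead exploits the characterising property established in \Cref{uniquehom}: namely that $(g \circ f)_{\alpha\gamma}$ is the \emph{unique} Kleisli-morphism $p$ satisfying $p \cdot d_\alpha = d_\gamma \circ (g \circ f)$. Thus it suffices to verify that the left-hand side $g_{\beta\gamma} \cdot f_{\alpha\beta}$ satisfies this same defining equation, and the result follows by uniqueness.

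First I would write down what \Cref{uniquehom} gives us for the two individual representations: $f_{\alpha\beta} \cdot d_\alpha = d_\beta \circ f$ and $g_{\beta\gamma} \cdot d_\beta = d_\gamma \circ g$. The key computation is then to chain these together using associativity of Kleisli composition (which holds since $\cdot$ is the composition in $\KL$, as recalled in \Cref{generatorcategory}). Concretely, I would compute
\begin{align*}
	(g_{\beta\gamma} \cdot f_{\alpha\beta}) \cdot d_\alpha &= g_{\beta\gamma} \cdot (f_{\alpha\beta} \cdot d_\alpha) && \textnormal{(associativity of } \cdot) \\
	&= g_{\beta\gamma} \cdot (d_\beta \circ f) && \textnormal{(\Cref{uniquehom} for } f) \\
	&= (g_{\beta\gamma} \cdot d_\beta) \circ f && \textnormal{(}\cdot \textnormal{ absorbs post-composition)} \\
	&= (d_\gamma \circ g) \circ f && \textnormal{(\Cref{uniquehom} for } g) \\
	&= d_\gamma \circ (g \circ f).
\end{align*}
By the uniqueness clause of \Cref{uniquehom} applied to the homomorphism $g \circ f : \mathbb{X}_\alpha \to \mathbb{X}_\gamma$, this forces $g_{\beta\gamma} \cdot f_{\alpha\beta} = (g \circ f)_{\alpha\gamma}$.

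The one step that requires a small justification is the third equality, the claim that Kleisli composition with $d_\beta \circ f$, where $f$ is an ordinary morphism of the underlying objects, factors as $(g_{\beta\gamma} \cdot d_\beta) \circ f$. This holds because for a Kleisli morphism $q : Y_\beta \to TY_\gamma$ and an ordinary morphism $r : X_\alpha \to TY_\beta$ arising as $r = s \circ f$ with $s$ a Kleisli morphism, one has $q \cdot (s \circ f) = \mu_{Y_\gamma} \circ Tq \circ s \circ f = (q \cdot s) \circ f$ by functoriality of $T$ and the definition of $\cdot$; the ordinary post-composition by $f$ simply passes through. This is the sole place where I would need to be careful about the interplay between genuine Kleisli composition and plain composition in the base category, but it is a routine unfolding of $\mu_{Y_\gamma} \circ Tq \circ (-)$ rather than a genuine obstacle.

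I do not anticipate a serious difficulty here: the hard work has already been done in \Cref{uniquehom}, which supplies exactly the universal characterisation that makes compositionality almost immediate. An alternative, more computational route would unfold both sides using \eqref{basisrepresentation} and the two equalities of \Cref{forbasis-d-isalgebrahom} (that $\mu_{Y} \circ Td = d \circ h$ and $d \circ i = \eta$), inserting an identity $i_\beta^\sharp \circ d_\beta = \id_{X_\beta}$ to bridge the two representations; but this amounts to re-deriving the content of \Cref{uniquehom} inline, so the uniqueness-based argument is preferable for both brevity and clarity.
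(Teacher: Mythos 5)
Your proof is correct, but it takes a genuinely different route from the paper's. The paper proves \Cref{compone} computationally: it unfolds both $g_{\beta\gamma}\cdot f_{\alpha\beta}$ and $(g\circ f)_{\alpha\gamma}$ via \eqref{basisrepresentation} and the definition of Kleisli composition, cancels the common terms, and then verifies a commuting diagram directly from \Cref{forbasis-d-isalgebrahom} (essentially the inline re-derivation you mention at the end of your proposal). You instead treat \Cref{uniquehom} as a universal characterisation — $(g\circ f)_{\alpha\gamma}$ is the unique Kleisli morphism $p$ with $p\cdot d_\alpha = d_\gamma\circ(g\circ f)$ — and check that $g_{\beta\gamma}\cdot f_{\alpha\beta}$ satisfies that equation using associativity of Kleisli composition, the two instances of \Cref{uniquehom}, and the (correctly justified) observation that $q\cdot(s\circ f)=(q\cdot s)\circ f$ for plain post-composition. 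Your argument is sound: \Cref{uniquehom} precedes \Cref{compone} and does not depend on it, and its uniqueness clause (which rests on $d_\alpha\circ i_\alpha=\eta_{Y_\alpha}$, i.e.\ on $\alpha$ being a basis) is available in the stated setting. What your approach buys is modularity and brevity, making compositionality an immediate corollary of the soundness lemma; what the paper's approach buys is independence from the uniqueness clause of \Cref{uniquehom} and a proof in the same diagram-chasing style as the surrounding results. The chain of isomorphisms \eqref{repisos} in the proof of \Cref{inverseoperations} suggests the authors were aware of this functorial viewpoint even though they did not exploit it here.
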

\begin{proof}
	The definitions imply
	\begin{align*}
		g_{\beta \gamma} \cdot f_{\alpha \beta} &= \mu_{Y_{\gamma}} \circ T(d_{\gamma} \circ g \circ i_{\beta}) \circ d_{\beta} \circ f \circ i_{\alpha} \\
		(g \circ f)_{\alpha \gamma} &= d_{\gamma} \circ (g \circ f) \circ i_{\alpha}.
	\end{align*}
	We delete common terms and use  \Cref{forbasis-d-isalgebrahom} to deduce the commutativity of the diagram below:
\begin{equation*}
					\begin{tikzcd}
						 X_{\beta} \arrow{d}[left]{\id_{X_{\beta}}} \arrow{r}{d_{\beta}} & TY_{\beta} \arrow{r}{Ti_{\beta}} & TX_{\beta} \arrow{r}{Tg} \arrow{dll}{h_{\beta}} & TX_{\gamma} \arrow{r}{Td_{\gamma}} \arrow{dll}{h_{\gamma}} & T^2 Y_{\gamma} \arrow{d}{\mu_{Y_{\gamma}}} \\
						 X_{\beta} \arrow{r}[below]{g} & X_{\gamma} \arrow{rrr}[below]{d_{\gamma}} & & & TY_{\gamma}
					\end{tikzcd}.
\end{equation*}
\end{proof}

Similarly to the previous result, the next observation captures the compositionality of the operation that assigns to a Kleisli morphism its associated homomorphism.

\begin{lemma}
\label{comptwo}
	$q^{\beta \gamma} \circ p^{\alpha \beta} = (q \cdot p)^{\alpha \gamma}$. 	
\end{lemma}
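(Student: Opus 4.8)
The plan is to avoid a direct diagram chase and instead exploit that, for bases, the two operations $(-)_{\alpha\beta}$ and $(-)^{\alpha\beta}$ of \Cref{representationdef} are mutually inverse bijections between $\Hom_{\EM}(\mathbb{X}_\alpha, \mathbb{X}_\beta)$ and $\Hom_{\KL}(Y_\alpha, Y_\beta)$ (\Cref{inverseoperations}), together with the already-established compositionality of $(-)_{\alpha\beta}$ (\Cref{compone}). Since the present statement is precisely the compositionality of the inverse operation $(-)^{\alpha\beta}$, it should follow formally by transporting \Cref{compone} across the bijection, rather than by redoing a calculation analogous to the proof of \Cref{compone}.

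In more detail, first I would note that $p^{\alpha\beta}\colon \mathbb{X}_\alpha \to \mathbb{X}_\beta$ and $q^{\beta\gamma}\colon \mathbb{X}_\beta \to \mathbb{X}_\gamma$ are genuine $T$-algebra homomorphisms by \Cref{associsalgebrahom}, so their composite $q^{\beta\gamma}\circ p^{\alpha\beta}$ is a homomorphism to which $(-)_{\alpha\gamma}$ may be applied. Applying \Cref{compone} with $f := p^{\alpha\beta}$ and $g := q^{\beta\gamma}$ gives $(q^{\beta\gamma}\circ p^{\alpha\beta})_{\alpha\gamma} = (q^{\beta\gamma})_{\beta\gamma}\cdot (p^{\alpha\beta})_{\alpha\beta}$, and by \Cref{inverseoperations} the right-hand side simplifies to $q\cdot p$. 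On the other hand, \Cref{inverseoperations} applied to the Kleisli morphism $q\cdot p$ yields $((q\cdot p)^{\alpha\gamma})_{\alpha\gamma} = q\cdot p$ directly. Hence both $q^{\beta\gamma}\circ p^{\alpha\beta}$ and $(q\cdot p)^{\alpha\gamma}$ are homomorphisms $\mathbb{X}_\alpha \to \mathbb{X}_\gamma$ whose image under $(-)_{\alpha\gamma}$ equals $q\cdot p$; since $(-)_{\alpha\gamma}$ is a bijection, in particular injective, by \Cref{inverseoperations}, the two homomorphisms coincide, which is the claim.

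An equally short alternative would bypass \Cref{compone} and invoke the uniqueness clause of \Cref{uniquehom}: there $(q\cdot p)^{\alpha\gamma}$ is characterised as the unique $T$-algebra homomorphism $g$ with $(q\cdot p)\cdot d_\alpha = d_\gamma\circ g$. One then checks that $q^{\beta\gamma}\circ p^{\alpha\beta}$ satisfies this equation via the Kleisli-associativity computation $(q\cdot p)\cdot d_\alpha = q\cdot(p\cdot d_\alpha) = q\cdot(d_\beta\circ p^{\alpha\beta}) = (q\cdot d_\beta)\circ p^{\alpha\beta} = (d_\gamma\circ q^{\beta\gamma})\circ p^{\alpha\beta} = d_\gamma\circ(q^{\beta\gamma}\circ p^{\alpha\beta})$, using the defining equations $p\cdot d_\alpha = d_\beta\circ p^{\alpha\beta}$ and $q\cdot d_\beta = d_\gamma\circ q^{\beta\gamma}$ of \Cref{uniquehom} and the fact that $q^{\beta\gamma}\circ p^{\alpha\beta}$ is a homomorphism by \Cref{associsalgebrahom}.

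The routine content is minimal in both routes; the only genuinely delicate point is bookkeeping the two distinct composition operations — ordinary composition $\circ$ of algebra homomorphisms versus Kleisli composition $\cdot$ — and, in the second route, justifying the mixed identity $q\cdot(d_\beta\circ p^{\alpha\beta}) = (q\cdot d_\beta)\circ p^{\alpha\beta}$, which is just associativity of ordinary composition inside the Kleisli formula $q\cdot(-) = \mu_{Y_\gamma}\circ Tq\circ(-)$. I expect the bijection route to be the cleanest to write, since it reduces everything to \Cref{compone} and \Cref{inverseoperations} with no new diagram to verify.
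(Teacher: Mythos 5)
Your argument is correct, but it proceeds differently from the paper. The paper proves \Cref{comptwo} by direct computation: it unfolds both sides via \Cref{representationdef}, cancels the common prefix using $d_{\beta} \circ h_{\beta} \circ Ti_{\beta} = \id_{TY_{\beta}}$, and reduces the claim to the identity $\mu_{Y_{\gamma}} \circ Tq \circ \mu_{Y_{\beta}} = \mu_{Y_{\gamma}} \circ T\mu_{Y_{\gamma}} \circ T^2q$, which is settled by a diagram chase combining naturality of $\mu$ with the monad associativity law. Your first route instead transports \Cref{compone} across the bijection of \Cref{inverseoperations}: both $q^{\beta\gamma} \circ p^{\alpha\beta}$ and $(q \cdot p)^{\alpha\gamma}$ are algebra homomorphisms (by \Cref{associsalgebrahom} and closure under composition) whose image under the injective map $(-)_{\alpha\gamma}$ is $q \cdot p$, so they coincide; your second route verifies the defining equation of the uniqueness clause in \Cref{uniquehom}. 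Both are sound and non-circular, since \Cref{associsalgebrahom}, \Cref{inverseoperations}, \Cref{uniquehom}, and \Cref{compone} are all established independently of \Cref{comptwo}. What your approach buys is economy: no new diagram needs to be chased, and the proof makes visible that \Cref{comptwo} is formally dual to \Cref{compone} under the bijection. What the paper's direct computation buys is self-containedness at the level of the monad laws, which keeps the functoriality claims of \Cref{isomorphismrep} independent of the order in which the surrounding lemmas are invoked; it also serves as a template for readers who want to see where associativity of $\mu$ actually enters.
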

\begin{proof}
	The definitions imply 
	\begin{align*}
		q^{\beta \gamma} \circ p^{\alpha \beta} &= (h_{\gamma} \circ Ti_{\gamma} \circ \mu_{Y_{\gamma}} \circ Tq \circ d_{\beta}) \circ (h_{\beta} \circ Ti_{\beta} \circ \mu_{Y_{\beta}} \circ Tp \circ d_{\alpha}) \\
		(q \cdot p)^{\alpha \gamma} &= h_{\gamma} \circ Ti_{\gamma} \circ \mu_{Y_{\gamma}} \circ T\mu_{Y_{\gamma}} \circ T^2q \circ Tp \circ d_{\alpha}.
	\end{align*}
	By deleting common terms and using the equality $d_{\beta} \circ h_{\beta} \circ Ti_{\beta} = \id_{TY_{\beta}}$ it is thus sufficient to show
	\[
	\mu_{Y_{\gamma}} \circ Tq \circ \mu_{Y_{\beta}} = \mu_{Y_{\gamma}} \circ T\mu_{Y_{\gamma}} \circ T^2q.
	\]
	Above equation follows from the commutativity of the diagram below:
	\begin{equation*}
							\begin{tikzcd}
								T^2Y_{\beta} \arrow{d}[left]{T^2q} \arrow{r}{\mu_{Y_{\beta}}} & TY_{\beta} \arrow{r}{Tq} & T^2Y_{\gamma} \arrow{d}{\mu_{Y_{\gamma}}} \\
								T^3Y_{\gamma} \arrow{urr}{\mu_{TY_{\gamma}}} \arrow{r}[below]{T\mu_{Y_{\gamma}}} & T^2 Y_{\gamma} \arrow{r}[below]{\mu_{Y_{\gamma}}} & TY_{\gamma}
							\end{tikzcd}.
	\end{equation*}
\end{proof}

The previous statements may be summarised as functors between the following two categories, which arise from the usual Eilenberg-Moore and Kleisli categories.

\begin{definition}[$\Algb(T)$ and $\Klb(T)$]
	Let $\Algb(T)$ be the category defined as follows:
		\begin{itemize}
		\item objects are given by pairs $(\mathbb{X}_{\alpha}, \alpha)$, where $\mathbb{X}_{\alpha}$ is a $T$-algebra with basis $\alpha = (Y_{\alpha}, i_{\alpha}, d_{\alpha})$; and
		\item a morphism $f: (\mathbb{X}_{\alpha}, \alpha) \rightarrow (\mathbb{X}_{\beta}, \beta)$ consists of a $T$-algebra homomorphism $f: \mathbb{X}_{\alpha} \rightarrow \mathbb{X}_{\beta}$.
		\end{itemize}
Similarly, let $\Klb(T)$ be the category defined as follows:
\begin{itemize}
		\item objects are given by pairs $(\mathbb{X}_{\alpha}, \alpha)$, where $\mathbb{X}_{\alpha}$ is a $T$-algebra with basis $\alpha = (Y_{\alpha}, i_{\alpha}, d_{\alpha})$; and
		\item a morphism $p: (\mathbb{X}_{\alpha}, \alpha) \rightarrow (\mathbb{X}_{\beta}, \beta)$ consists of a morphism $p: Y_{\alpha} \rightarrow TY_{\beta}$; the composition is given by the usual Kleisli-composition.
		\end{itemize}	
\end{definition}

\begin{corollary}
\label{isomorphismrep}
	There exist the following isomorphisms of categories:
	\[
	\Balg(T) \simeq \Algb(T) \simeq \Klb(T).
	\]
\end{corollary}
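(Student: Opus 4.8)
The plan is to prove the two isomorphisms separately, exploiting the fact that all three categories have literally the same objects — $T$-algebras equipped with a basis $\alpha = (Y_\alpha, i_\alpha, d_\alpha)$. Consequently each claimed isomorphism will be the identity on objects, and the entire content lies in exhibiting a bijective, functorial correspondence of hom-sets built from the basis representation \eqref{basisrepresentation} and the associated morphism \eqref{associatedmorph}.

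For $\Algb(T) \simeq \Klb(T)$ I would define $R \colon \Algb(T) \to \Klb(T)$ by $R(f) = f_{\alpha\beta}$ and $S \colon \Klb(T) \to \Algb(T)$ by $S(p) = p^{\alpha\beta}$, both identity on objects. Well-definedness is immediate: $f_{\alpha\beta}$ is by construction a Kleisli morphism $Y_\alpha \to TY_\beta$, and $p^{\alpha\beta}$ is a $T$-algebra homomorphism by \Cref{associsalgebrahom}. Functoriality of $R$ on composition is exactly \Cref{compone}, and of $S$ exactly \Cref{comptwo}; preservation of identities reduces to $(\id_{X_\alpha})_{\alpha\alpha} = d_\alpha \circ i_\alpha = \eta_{Y_\alpha}$, which is the second equation of \Cref{forbasis-d-isalgebrahom}, and dually for $S$ using \Cref{inverseoperations}. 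Finally $R$ and $S$ are mutually inverse on morphisms by \Cref{inverseoperations}, which settles this isomorphism.

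For $\Balg(T) \simeq \Algb(T)$ I would define $P \colon \Balg(T) \to \Algb(T)$ to forget the Kleisli component, $(f,p) \mapsto f$, again identity on objects; this is trivially functorial since composition in $\Balg(T)$ acts as $g \circ f$ on first components. The work is to show $P$ is bijective on hom-sets. The crucial observation is that the left square of \eqref{galgmorph} reads $p \cdot d_\alpha = d_\beta \circ f$, since $p^\sharp \circ d_\alpha = \mu_{Y_\beta} \circ Tp \circ d_\alpha$ is precisely the Kleisli composite; hence by the uniqueness clause of \Cref{uniquehom} necessarily $p = f_{\alpha\beta}$, so $P$ is injective. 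For surjectivity I would send $f$ to $(f, f_{\alpha\beta})$ and verify this is a genuine $\Balg(T)$-morphism: the left square holds by \Cref{uniquehom}, and the right square $i_\beta^\sharp \circ p^\sharp = f \circ i_\alpha^\sharp$ then follows by rewriting $p^\sharp = p^\sharp \circ d_\alpha \circ i_\alpha^\sharp$ using the basis identity $d_\alpha \circ i_\alpha^\sharp = \id_{TY_\alpha}$, and then applying the left square together with $i_\beta^\sharp \circ d_\beta = \id_{X_\beta}$. The inverse assignment $Q \colon f \mapsto (f, f_{\alpha\beta})$ is functorial by the identity computation above and \Cref{compone}, and $P, Q$ are mutually inverse by construction. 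Composing the two isomorphisms yields $\Balg(T) \simeq \Algb(T) \simeq \Klb(T)$.

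The main obstacle is the surjectivity half of $P$: one must confirm that the right square of \eqref{galgmorph} is not an independent constraint but is forced once the left square holds. This step genuinely requires \emph{both} generator identities of a basis — in particular $d_\alpha \circ i_\alpha^\sharp = \id_{TY_\alpha}$, which fails for a mere generator — and is precisely the reason the statement concerns $\Balg(T)$ rather than all of $\Galg(T)$.
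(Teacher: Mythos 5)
Your proposal is correct and follows essentially the same route as the paper: identity-on-objects functors with $(f,p)\mapsto f$, $f\mapsto (f,f_{\alpha\beta})$, $f\mapsto f_{\alpha\beta}$, and $p\mapsto p^{\alpha\beta}$, with functoriality and mutual invertibility supplied by \Cref{compone}, \Cref{comptwo}, \Cref{inverseoperations}, and \Cref{uniquehom}. The only difference is that you spell out in more detail the verification that $(f,f_{\alpha\beta})$ satisfies both squares of \eqref{galgmorph} and that the Kleisli component is forced, which the paper leaves implicit in its appeal to \Cref{uniquehom}.
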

\begin{proof}
For the first isomorphism we define a functor $F: \Balg(T) \rightarrow  \Algb(T)$ by $F(\mathbb{X}_{\alpha}, \alpha) = (\mathbb{X}_{\alpha}, \alpha)$ and $F(f,p) = f$; and a functor $G: \Algb(T) \rightarrow \Balg(T)$ by $G(\mathbb{X}_{\alpha}, \alpha) = (\mathbb{X}_{\alpha}, \alpha)$ and $G(f) := (f, f_{\alpha \beta})$. The functoriality of $G$ is a consequence of \Cref{compone}. The mutual invertibility of $F$ and $G$ is a consequence of \Cref{uniquehom}.
For the second isomorphism we define a functor $F: \Algb(T) \rightarrow \Klb(T)$ by  $F(\mathbb{X}_{\alpha}, \alpha) = (\mathbb{X}_{\alpha}, \alpha)$ and $Ff = f_{\alpha \beta}$; and a functor $G: \Klb(T) \rightarrow \Algb(T)$ by  $G(\mathbb{X}_{\alpha}, \alpha) = (\mathbb{X}_{\alpha}, \alpha)$ and $Gp = p^{\alpha \beta}$. The functoriality of $F$ and $G$ is a consequence of \Cref{compone} and \Cref{comptwo}, respectively. Their mutual invertibility is a consequence \Cref{inverseoperations}.
\end{proof}

Assume we are given bases $\alpha, \alpha'$ and $\beta, \beta'$ for $T$-algebras $(X_{\alpha}, h_{\alpha})$ and $(X_{\beta}, h_{\beta})$, respectively. 
The following result clarifies how the two basis representations $f_{\alpha \beta}$ and $f_{\alpha' \beta'}$ are related.

\begin{proposition}
\label{similaritygeneral}
	There exist Kleisli isomorphisms $p$ and $q$ such that $f_{\alpha' \beta'} = q \cdot f_{\alpha \beta} \cdot p$.
\end{proposition}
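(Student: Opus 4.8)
The plan is to exhibit $p$ and $q$ explicitly as \emph{change-of-basis} morphisms, namely as the Kleisli representations of the relevant identity homomorphisms, and then to obtain the identity $f_{\alpha'\beta'} = q \cdot f_{\alpha\beta} \cdot p$ as a formal consequence of the compositionality of basis representations (\Cref{compone}). Concretely, I would set
\[
p := (\id_{\mathbb{X}_\alpha})_{\alpha'\alpha} = d_\alpha \circ i_{\alpha'} \colon Y_{\alpha'} \to TY_\alpha,
\qquad
q := (\id_{\mathbb{X}_\beta})_{\beta\beta'} = d_{\beta'} \circ i_\beta \colon Y_\beta \to TY_{\beta'},
\]
so that $p$ records how the $\alpha'$-basis vectors are expressed in the $\alpha$-basis, and $q$ how the $\beta$-basis vectors are expressed in the $\beta'$-basis. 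This mirrors the classical formula $L_{\alpha'\beta'} = Q\,L_{\alpha\beta}\,P$ with $P,Q$ the change-of-basis matrices.

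The computation is then two applications of \Cref{compone}. Reading Kleisli composition as in \Cref{generatorcategory}, I first compute
\[
f_{\alpha\beta} \cdot p = f_{\alpha\beta} \cdot (\id_{\mathbb{X}_\alpha})_{\alpha'\alpha} = (f \circ \id_{\mathbb{X}_\alpha})_{\alpha'\beta} = f_{\alpha'\beta},
\]
instantiating \Cref{compone} with the first homomorphism $\id_{\mathbb{X}_\alpha}$ represented from $\alpha'$ to $\alpha$ and the second homomorphism $f$ represented from $\alpha$ to $\beta$. Composing on the left with $q$ and applying \Cref{compone} again, now with $f$ represented from $\alpha'$ to $\beta$ and $\id_{\mathbb{X}_\beta}$ represented from $\beta$ to $\beta'$, gives
\[
q \cdot f_{\alpha\beta} \cdot p = (\id_{\mathbb{X}_\beta})_{\beta\beta'} \cdot f_{\alpha'\beta} = (\id_{\mathbb{X}_\beta} \circ f)_{\alpha'\beta'} = f_{\alpha'\beta'},
\]
which is exactly the claimed equality.

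It remains to verify that $p$ and $q$ are Kleisli isomorphisms. The natural inverses are the reversed change-of-basis morphisms $p^{-1} := (\id_{\mathbb{X}_\alpha})_{\alpha\alpha'}$ and $q^{-1} := (\id_{\mathbb{X}_\beta})_{\beta'\beta}$. Using \Cref{compone} once more together with the basis identity $d \circ i = \eta$ from \Cref{forbasis-d-isalgebrahom}, I obtain
\[
p^{-1} \cdot p = (\id_{\mathbb{X}_\alpha})_{\alpha\alpha'} \cdot (\id_{\mathbb{X}_\alpha})_{\alpha'\alpha} = (\id_{\mathbb{X}_\alpha})_{\alpha'\alpha'} = d_{\alpha'} \circ i_{\alpha'} = \eta_{Y_{\alpha'}},
\]
and symmetrically $p \cdot p^{-1} = d_\alpha \circ i_\alpha = \eta_{Y_\alpha}$; since $\eta$ is the Kleisli identity, $p$ is invertible, and the same argument applies to $q$. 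Alternatively, one may simply invoke the isomorphism $\Algb(T) \simeq \Klb(T)$ of \Cref{isomorphismrep}: the functor sending a homomorphism to its basis representation carries the $\Algb(T)$-isomorphism $\id_{\mathbb{X}_\alpha}\colon (\mathbb{X}_\alpha,\alpha')\to(\mathbb{X}_\alpha,\alpha)$ to the Kleisli morphism $p$, and functors preserve isomorphisms. I do not expect a genuine obstacle here; the only point requiring care is bookkeeping the \emph{direction} of the change-of-basis morphisms (which basis is the source and which the target in each representation), so that the two invocations of \Cref{compone} cancel the intermediate bases correctly and yield $f_{\alpha'\beta'}$ rather than, say, $f_{\alpha\beta}$ conjugated the wrong way.
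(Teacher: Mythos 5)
Your proof is correct, and it constructs exactly the same change-of-basis morphisms as the paper: $p = d_\alpha \circ i_{\alpha'}$, $q = d_{\beta'} \circ i_\beta$, with the reversed composites as inverses. Where you diverge is in the verification. The paper proves both the invertibility of $p$, $q$ and the identity $f_{\alpha'\beta'} = q \cdot f_{\alpha\beta} \cdot p$ by direct diagram chases, unfolding the definitions and repeatedly invoking \Cref{forbasis-d-isalgebrahom}; you instead observe that $p$ and $q$ are themselves basis representations of identity homomorphisms, namely $(\id_{\mathbb{X}_\alpha})_{\alpha'\alpha}$ and $(\id_{\mathbb{X}_\beta})_{\beta\beta'}$, and derive everything from two applications of the compositionality lemma \Cref{compone} (plus $d \circ i = \eta$ for the inverses, or alternatively functoriality via \Cref{isomorphismrep}). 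Your instantiations of \Cref{compone} are legitimate even though the source and middle algebra coincide, since nothing in its proof requires the three algebras or bases to be distinct, and the directions of the two representations are bookkept correctly so that the intermediate bases cancel. The trade-off is that your argument is shorter and makes the conceptual content transparent — the proposition is literally the image under the representation functor of the factorisation $f = \id \circ f \circ \id$ through differently based copies of the same algebras — whereas the paper's diagram chase is self-contained and does not depend on \Cref{compone} having been proved in the generality you use it.
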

\begin{proof}
	The Kleisli morphisms $p$ and $q$ and their respective candidates for inverses $p^{-1}$ and $q^{-1}$ are defined below
	\begin{alignat*}{6}
		& p &&:= d_{\alpha} \circ i_{\alpha'}: Y_{\alpha'} &&\longrightarrow TY_{\alpha} \qquad \qquad && q &&:= d_{\beta'} \circ i_{\beta}: Y_{\beta} && \longrightarrow TY_{\beta'} \\
		& p^{-1} &&:= d_{\alpha'} \circ i_{\alpha}: Y_{\alpha}  && \longrightarrow TY_{\alpha'} \qquad \qquad  &&  q^{-1} &&:= d_{\beta} \circ i_{\beta'} : Y_{\beta'} && \longrightarrow  TY_{\beta}.
	\end{alignat*}
From \Cref{forbasis-d-isalgebrahom} it follows that the diagram below commutes:
 \begin{equation*}
		\begin{tikzcd}
			Y_{\alpha} \arrow{r}{i_{\alpha}} \arrow{dd}[left]{\eta_{Y_{\alpha}}} & X_{\alpha} \arrow{d}{\id_{X_{\alpha}}} \arrow{r}{d_{\alpha'}} & TY_{\alpha'} \arrow{dd}{Ti_{\alpha'}} \\
			& X_{\alpha} \arrow{dl}{d_{\alpha}} & \\
			TY_{\alpha}  & T^2Y_{\alpha}  \arrow{l}{\mu_{Y_{\alpha}}} & TX_{\alpha} \arrow{ul}{h_{\alpha}} \arrow{l}{Td_{\alpha}}
		\end{tikzcd}.
		\end{equation*}
		This shows that $p^{-1}$ is a Kleisli right-inverse of $p$. A symmetric version of above diagram shows that $p^{-1}$ is also a Kleisli left-inverse of $p$. Analogously it follows that $q^{-1}$ is a Kleisli inverse of $q$.
		
		The definitions further imply the equalities
		\begin{align*}
			q \cdot f_{\alpha \beta} \cdot p &= \mu_{Y_{\beta'}} \circ T(d_{\beta'} \circ i_{\beta}) \circ \mu_{Y_{\beta}} \circ T(d_{\beta} \circ f \circ i_{\alpha}) \circ d_{\alpha} \circ i_{\alpha'} \\
			f_{\alpha' \beta'} &= d_{\beta'} \circ f \circ i_{\alpha'}.
		\end{align*}
		We delete common terms and use \Cref{forbasis-d-isalgebrahom} to establish the commutativity of the diagram below:
		\begin{equation*}
			\begin{tikzcd}
			X_{\alpha} \arrow{dr}{\id_{X_{\alpha}}} \arrow{r}{d_{\alpha}} \arrow{dd}[left]{f} & TY_{\alpha} \arrow{r}{Ti_{\alpha}} & TX_{\alpha} \arrow{dl}[above]{h_{\alpha}} \arrow{r}{Tf} & TX_{\beta} \arrow{ddll}{h_{\beta}} \arrow{d}{Td_{\beta}} \\
			 & X_{\alpha} \arrow{d}{f} &  & T^2Y_{\beta} \arrow{dd}{\mu_{Y_{\beta}}} \\
	X_{\beta} \arrow{d}[left]{d_{\beta'}} & X_{\beta} \arrow{l}[above]{\id_{X_{\beta}}} \arrow{drr}{d_{\beta}} & & \\
			TY_{\beta'} & T^2 Y_{\beta'} \arrow{l}{\mu_{Y_{\beta'}}} & TX_{\beta} \arrow{llu}{h_{\beta}} \arrow{l}{Td_{\beta'}} & TY_{\beta} \arrow{l}{Ti_{\beta}}
		\end{tikzcd}.
		\end{equation*}	
\end{proof}

Above result simplifies if one restricts to an endomorphism: the basis representations are \emph{similar}. This generalises the situation for vector spaces, cf. \Cref{basisrepex1}.

\begin{corollary}
	\label{similarity}
There exists a Kleisli isomorphism $p$ with Kleisli inverse $p^{-1}$ such that $f_{\alpha' \alpha'} = p^{-1} \cdot f_{\alpha \alpha} \cdot p$.
\end{corollary}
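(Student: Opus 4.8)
The plan is to obtain the statement directly from \Cref{similaritygeneral}, of which it is the special case where the domain and codomain algebras coincide and $f$ is an endomorphism. First I would instantiate \Cref{similaritygeneral} by taking $\beta := \alpha$ and $\beta' := \alpha'$. This is legitimate because an endomorphism $f \colon (X_{\alpha}, h_{\alpha}) \rightarrow (X_{\alpha}, h_{\alpha})$ has identical domain and codomain algebra, so the same pair of bases $\alpha, \alpha'$ can serve on both sides. The proposition then immediately supplies Kleisli isomorphisms $p$ and $q$ such that $f_{\alpha' \alpha'} = q \cdot f_{\alpha \alpha} \cdot p$.

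The only thing left is to verify that under this instantiation the morphism $q$ is precisely the Kleisli inverse $p^{-1}$ of $p$, so that the general identity collapses to the claimed form $f_{\alpha' \alpha'} = p^{-1} \cdot f_{\alpha \alpha} \cdot p$. To see this I would unfold the explicit definitions recorded in the proof of \Cref{similaritygeneral}: there $p = d_{\alpha} \circ i_{\alpha'}$ and $p^{-1} = d_{\alpha'} \circ i_{\alpha}$, while $q = d_{\beta'} \circ i_{\beta}$. Substituting $\beta = \alpha$ and $\beta' = \alpha'$ turns $q$ into $d_{\alpha'} \circ i_{\alpha}$, which is literally the morphism named $p^{-1}$ there. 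Since the proof of \Cref{similaritygeneral} already establishes (via \Cref{forbasis-d-isalgebrahom}) that $p^{-1}$ is a genuine two-sided Kleisli inverse of $p$, this identification is all that is required.

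There is essentially no real obstacle: the mathematical content is carried entirely by \Cref{similaritygeneral}, and the corollary amounts to recognising that the four auxiliary Kleisli morphisms $p, p^{-1}, q, q^{-1}$ defined in its proof degenerate to two mutually inverse ones once the bases on the two sides are paired up. The only point needing a moment's care is the index bookkeeping — confirming that the $q$ delivered by the general statement coincides with the inverse of the $p$ it delivers, rather than with some unrelated isomorphism — but this is immediate from the symmetric explicit formulas for $p$, $p^{-1}$, $q$, and $q^{-1}$.
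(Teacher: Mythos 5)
Your proposal is correct and matches the paper's own proof exactly: the paper also instantiates \Cref{similaritygeneral} with $\beta = \alpha$ and $\beta' = \alpha'$ and observes that the definitions of $q = d_{\beta'} \circ i_{\beta}$ and $p^{-1} = d_{\alpha'} \circ i_{\alpha}$ then coincide. Your explicit unfolding of the formulas to confirm this identification is precisely the verification the paper leaves to the reader.
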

\begin{proof}
	In \Cref{similaritygeneral} let $\beta = \alpha$ and $\beta' = \alpha'$. One verifies that in the corresponding proof the definitions of the morphisms $p^{-1}$ and $q$ coincide.
\end{proof}

\subsection{Bases for Bialgebras}

\label{basesforbialgebrasec}

This subsection is concerned with generators and bases for a bialgebra. It is well-known \cite{turi1997towards} that an Eilenberg-Moore law $\lambda$ between a monad $T$ and an endofunctor $F$ induces simultaneously
\begin{itemize}
	\item a monad $T_{\lambda} = (T_{\lambda}, \mu, \eta)$ on $\Coalg(F)$ by $T_{\lambda}(X,k) = (TX, \lambda_X \circ Tk)$ and $T_{\lambda}f = Tf$; and
	\item an endofunctor $F_{\lambda}$ on $\EM$ by $F_{\lambda}(X,h) = (FX, Fh \circ \lambda_X)$ and $F_{\lambda}f = Ff$,
\end{itemize}
such that the algebras over $T_{\lambda}$, the coalgebras of $F_{\lambda}$, and $\lambda$-bialgebras coincide.
 We will consider generators and bases for $T_{\lambda}$-algebras, or equivalently, $\lambda$-bialgebras.

By \Cref{generatordefinition}, a generator for a $\lambda$-bialgebra $(X,h,k)$ consists of an $F$-coalgebra $(Y,k_Y)$ and morphisms $i: Y \rightarrow X$ and $d: X \rightarrow TY$, such that the three diagrams on the left below commute:
\begin{equation}
\label{bialgebrageneratorequations}	
	\begin{tikzcd}[column sep = 1.5em]
			Y \arrow{r}{i} \arrow{d}[left]{k_Y} & X \arrow{d}{k} \\
		FY \arrow{r}{Fi} &FX
	\end{tikzcd}
	\quad
		\begin{tikzcd}[column sep = 1.5em]
		X \arrow{r}{d} \arrow{d}[left]{k} & TY \arrow{d}[left]{\lambda_Y \circ Tk_Y} \\
		FX \arrow{r}{Fd} &FTY
	\end{tikzcd}
	\quad
	\begin{tikzcd}[column sep = 1.5em]			TY \arrow{r}{Ti} & TX \arrow{d}{h}\\
			X \arrow{u}{d} \arrow{r}{\id_X}  & X 
		\end{tikzcd}
		\quad
		\begin{tikzcd}[column sep = 1.5em]
			TX \arrow{r}{h} & X \arrow{d}{d}\\
			TY \arrow{u}{Ti} \arrow{r}{\id_{TY}}  & TY 
		\end{tikzcd}.
\end{equation}

A basis for a $\lambda$-bialgebra is a generator, such that in addition the diagram on the right above commutes.

It is easy to verify that by forgetting the $F$-coalgebra structure, every generator for a bialgebra in particular provides a generator for its underlying $T$-algebra. By \Cref{forgenerator-isharp-is-bialgebra-hom} it thus follows that there exists a $\lambda$-bialgebra homomorphism 
$ i^{\sharp}: \expa(Y, Fd \circ k \circ i) \rightarrow (X,h,k)$. The next result establishes that there exists a second equivalent free bialgebra with a different coalgebra structure.

\begin{lemma}
\label{generatorbialgebraisharp}
	Let $(Y,k_Y, i,d)$ be a generator for $(X,h,k)$. Then $i^{\sharp}: TY \rightarrow X$ is a $\lambda$-bialgebra homomorphism $i^{\sharp} : \free_T(Y, k_Y) \rightarrow (X,h,k)$.
\end{lemma}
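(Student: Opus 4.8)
The plan is to unfold the definition of a $\lambda$-bialgebra homomorphism: such a morphism must be simultaneously a $T$-algebra homomorphism and an $F$-coalgebra homomorphism. Recall from \Cref{freeexpfunctor} that $\free_T(Y, k_Y) = (TY, \mu_Y, \lambda_Y \circ Tk_Y)$, and that $i^{\sharp} = h \circ Ti$ is the adjoint transpose of $i$ under the free Eilenberg-Moore adjunction. The algebraic half is then immediate and requires nothing special about the generator: $i^{\sharp} = h \circ Ti \colon (TY, \mu_Y) \to (X,h)$ is a $T$-algebra homomorphism simply because it is the canonical lifting of $i \colon Y \to X$ provided by the adjunction $F^T \dashv U^T$. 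So the entire content of the statement lies in verifying the coalgebraic half, namely the identity
\[ k \circ i^{\sharp} = Fi^{\sharp} \circ (\lambda_Y \circ Tk_Y). \]

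To establish this, I would carry out a short diagram chase, using three ingredients in turn. First, I substitute $i^{\sharp} = h \circ Ti$ and apply the bialgebra law of $(X,h,k)$ (\Cref{bialgebradef}), $k \circ h = Fh \circ \lambda_X \circ Tk$, to rewrite $k \circ i^{\sharp} = k \circ h \circ Ti = Fh \circ \lambda_X \circ Tk \circ Ti$. Second, I use the only generator equation actually needed, the first square of \eqref{bialgebrageneratorequations}, which says that $i \colon (Y, k_Y) \to (X,k)$ is an $F$-coalgebra homomorphism, i.e.\ $k \circ i = Fi \circ k_Y$; applying $T$ gives $Tk \circ Ti = TFi \circ Tk_Y$. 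Third, I invoke the naturality of the distributive law $\lambda \colon TF \Rightarrow FT$ (\Cref{distributivelawdef}) on $i$, yielding $\lambda_X \circ TFi = FTi \circ \lambda_Y$. Chaining these, $k \circ i^{\sharp} = Fh \circ FTi \circ \lambda_Y \circ Tk_Y = F(h \circ Ti) \circ \lambda_Y \circ Tk_Y = Fi^{\sharp} \circ \lambda_Y \circ Tk_Y$, which is exactly the required coalgebra-homomorphism condition. I would present this as a single commuting diagram whose cells are labelled by the bialgebra law, the coalgebra-homomorphism property of $i$, and the naturality of $\lambda$.

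It is worth emphasising that this argument uses only the first of the four generator equations in \eqref{bialgebrageneratorequations}; the splitting data $d$ and its compatibility conditions play no role, since the $F$-coalgebra structure $k_Y$ on $Y$ is precisely what distinguishes a generator for the bialgebra from a generator for the underlying $T$-algebra. Consequently there is no genuine obstacle to overcome: the result is a formal consequence of naturality together with the defining bialgebra square, and the only point demanding care is bookkeeping, applying the bialgebra law and the naturality of $\lambda$ in the correct order so that the rewritten morphisms compose. The statement should be contrasted with \Cref{forgenerator-isharp-is-bialgebra-hom}, which equips $TY$ with the \emph{different} coalgebra structure $(F\mu_Y \circ \lambda_{TY}) \circ T(Fd \circ k \circ i)$ obtained via $\expa_T$; here we instead recover the freely generated structure $\lambda_Y \circ Tk_Y$, giving a second, a priori distinct, free bialgebra mapping onto $(X,h,k)$.
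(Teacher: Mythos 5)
Your proof is correct and follows essentially the same route as the paper: the $T$-algebra half is immediate from the adjunction, and the coalgebra half is the same diagram chase combining the bialgebra law of $(X,h,k)$, the fact that $i$ is an $F$-coalgebra homomorphism, and the naturality of $\lambda$. Your remark that only the first generator equation is used (and the contrast with \Cref{forgenerator-isharp-is-bialgebra-hom}) is accurate and matches the paper's surrounding discussion.
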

\begin{proof}
By definition we have $\free_T(Y, k_Y) = (TY, \mu_Y, \lambda_Y \circ Tk_Y)$.
	Clearly the lifting $i^{\sharp} = h \circ Ti$ is a $T$-algebra homomorphism $i^{\sharp}: (TY, \mu_Y) \rightarrow (X,h)$. It is an $F$-coalgebra homomorphism $i^{\sharp}: (TY, \lambda_Y \circ Tk_Y) \rightarrow (X, k)$  since the diagram below commutes:
	\begin{equation*}
		\begin{tikzcd}
			TY \arrow{r}{Ti} \arrow{d}[left]{Tk_Y} & TX \arrow{r}{h} \arrow{d}{Tk} & X \arrow{dd}{k} \\
			TFY \arrow{r}{TFi} \arrow{d}[left]{\lambda_Y} & TFX \arrow{d}{\lambda_X} & \\
			FTY \arrow{r}{FTi} & FTX \arrow{r}{Fh} & FX
		\end{tikzcd}.
	\end{equation*}	
\end{proof}

If one moves from generators to bases for bialgebras, both structures coincide.

\begin{lemma}
\label{samecoalgebrastructures}
	Let $(Y,k_Y, i, d)$ be a basis for $(X,h,k)$, then $\free_T(Y, k_Y) = \expa_T(Y, Fd \circ k \circ i)$.
\end{lemma}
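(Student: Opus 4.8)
The plan is to exploit that both bialgebras in the claimed equation share the same carrier $TY$ and the same underlying $T$-algebra structure $\mu_Y$, so that the entire statement reduces to proving that their $F$-coalgebra components agree. Unfolding the definitions of $\free_T$ (\Cref{freeexpfunctor}) and $\expa_T$ (\Cref{expfunctor}), this amounts to the single identity
\begin{equation*}
\lambda_Y \circ Tk_Y \;=\; (F\mu_Y \circ \lambda_{TY}) \circ T(Fd \circ k \circ i)
\end{equation*}
between two morphisms $TY \to FTY$. I would therefore concentrate entirely on this equation.

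First I would record that, because $(Y, k_Y, i, d)$ is in particular a basis for the underlying $T$-algebra $(X,h)$, the lifting $i^{\sharp} = h \circ Ti$ is an isomorphism with two-sided inverse $d$; this is precisely the pair of basis equations in \Cref{generatordefinition} (cf.\ \Cref{basisifffree}). Next I would bring in the two homomorphism results already available. By \Cref{generatorbialgebraisharp}, $i^{\sharp}$ is a $\lambda$-bialgebra homomorphism $\free_T(Y, k_Y) \to (X,h,k)$; and by \Cref{forgenerator-isharp-is-bialgebra-hom}, applied to the generator for $(X,h)$ obtained by forgetting the coalgebra structure $k_Y$, the same $i^{\sharp}$ is a $\lambda$-bialgebra homomorphism $\expa_T(Y, Fd \circ k \circ i) \to (X,h,k)$. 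In particular $i^{\sharp}$ is an $F$-coalgebra homomorphism from $(TY, \lambda_Y \circ Tk_Y)$ to $(X,k)$ and, simultaneously, from $(TY, (F\mu_Y \circ \lambda_{TY})\circ T(Fd\circ k\circ i))$ to $(X,k)$.

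The key step is then a uniqueness argument: whenever an isomorphism $f$ is an $F$-coalgebra homomorphism onto a fixed coalgebra $(X,k)$, the coalgebra structure on its domain is forced to equal $Ff^{-1}\circ k\circ f$. Applying this to $f = i^{\sharp}$ with inverse $d$, both coalgebra structures named above must equal $Fd \circ k \circ i^{\sharp}$, and hence coincide, which proves the displayed identity and thus the lemma. Concretely, I would verify this by the short calculation carried out for each structure: rewrite $k \circ i^{\sharp}$ using that $i^{\sharp}$ is a coalgebra homomorphism, then precompose with $Fd$ and collapse the leading $F(d\circ i^{\sharp}) = F(\id_{TY})$ via the basis equation $d\circ i^{\sharp} = \id_{TY}$.

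I do not expect a genuine obstacle; the only delicate point is the bookkeeping — confirming that forgetting the $F$-coalgebra structure of a basis for $(X,h,k)$ really yields a legitimate generator for the plain $T$-algebra $(X,h)$, so that \Cref{forgenerator-isharp-is-bialgebra-hom} is applicable, and that \Cref{generatorbialgebraisharp} and \Cref{forgenerator-isharp-is-bialgebra-hom} produce exactly the two coalgebra structures I have named. Should one prefer to avoid invoking those two lemmas, the identity can instead be established by a direct diagram chase pasting together the naturality squares of $\lambda$, the coalgebra equation for $d$ from \eqref{bialgebrageneratorequations}, and the basis equations; but the isomorphism-uniqueness route is shorter, and I would present that one.
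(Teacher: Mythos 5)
Your proposal is correct, but it takes a genuinely different route from the paper. The paper proves the required identity $\lambda_Y \circ Tk_Y = F\mu_Y \circ \lambda_{TY} \circ T(Fd \circ k \circ i)$ by a single direct diagram chase, pasting together the naturality of $\lambda$ with the two equations of \Cref{forbasis-d-isalgebrahom} ($\mu_Y \circ Td = d \circ h$ and $d \circ i = \eta_Y$) — essentially the ``direct diagram chase'' you mention as your fallback. Your main argument instead factors through the two homomorphism results: by \Cref{generatorbialgebraisharp} and \Cref{forgenerator-isharp-is-bialgebra-hom} (the latter applicable because, as the paper itself notes just before the lemma, forgetting $k_Y$ turns a generator for the bialgebra into one for the underlying $T$-algebra), the single morphism $i^{\sharp}$ is an $F$-coalgebra homomorphism into $(X,k)$ from $TY$ equipped with \emph{either} candidate structure; since $i^{\sharp}$ is invertible with inverse $d$, so is $Fi^{\sharp}$, and cancelling it in $Fi^{\sharp} \circ a_1 = k \circ i^{\sharp} = Fi^{\sharp} \circ a_2$ forces the two structures to coincide (equivalently, both equal $Fd \circ k \circ i^{\sharp}$). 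This is sound, shorter, and more conceptual — it explains the result as ``both structures are the transport of $k$ along the same isomorphism'' — at the cost of leaning on two prior lemmas where the paper's chase is self-contained. Your only delicate bookkeeping points (that the forgetful step yields a legitimate generator, and that the two lemmas produce exactly the two named coalgebra structures) check out against the definitions of $\free_T$ and $\expa_T$ in \Cref{freeexpfunctor} and \Cref{expfunctor}, and the shared algebra part $\mu_Y$ makes the reduction to the coalgebra components legitimate.
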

\begin{proof}
	Using \Cref{forbasis-d-isalgebrahom} we establish the commutativity of the diagram below:
	\begin{equation*}
	\begin{tikzcd}[column sep=2em]
		TY \arrow{dd}[left]{\id_{TY}}  \arrow{rr}{Tk_Y}  & & TFY \arrow{rr}{\lambda_Y} \arrow{dd}{TFi} & & FTY  \arrow{d}{FTi} \arrow{rr}{\id_{FTY}} && FTY \arrow{dd}{\id_{FTY}} \\
	& & & & FTX \arrow{d}{FTd} \arrow{r}{Fh} & FX \arrow{dr}{Fd} & \\
	TY \arrow{r}{Ti} & TX \arrow{r}{Tk} & TFX \arrow{r}{TFd} & TFTY \arrow{r}{\lambda_{TY}} & FT^2Y  \arrow{rr}{F\mu_{Y}} & & FTY	
	\end{tikzcd}.
	\end{equation*}
\end{proof}

\begin{example}[Canonical RFSA]
	In \Cref{canonicalrfsaexample} we considered the generator $(J(L), i, d)$ with $i(y) = y$ and $d(x) = \lbrace y \in J(L) \mid y \leq x \rbrace$ for the underlying algebraic part of the minimal pointed bialgebra $(X, h, k)$ to recover the canonical RFSA for $L= (a+b)^*a$ as the coalgebra $Fd \circ k \circ i$. \Cref{crfsadiag1} shows that the coalgebraic part $k$ restricts to the join-irreducibles $J(L)$, suggesting $\alpha = (J(L), k, i, d)$ as a possible generator for the full bialgebra. However, as one easily verifies, the $a$-action on $\lbrack \lbrace y \rbrace \rbrack$ implies the non-commutativity of the second diagram on the left of \eqref{bialgebrageneratorequations}. The issue can be fixed by modifying the definition of $d$ by $d(\lbrack \lbrace y \rbrace \rbrack) := \lbrace \lbrack \lbrace y \rbrace \rbrack \rbrace$. In consequence $\free_{\mathcal{P}}(J(L)), k)$ and $\expa_{\mathcal{P}}(J(L), Fd \circ k \circ i)$ coincide (even though the modification does not yield a basis).	\end{example}

We close this section by observing that a basis for the underlying algebra of a bialgebra is sufficient for constructing a generator for the full bialgebra.

\begin{lemma}
	\label{generatorfullbialgebra}
	Let $(X,h,k)$ be a $\lambda$-bialgebra and $(Y,i,d)$ a basis for the $T$-algebra $(X,h)$. Then $(TY,F\mu_Y \circ \lambda_{TY} \circ T(Fd \circ k \circ i)
,h \circ Ti,\eta_{TY} \circ d)$ is a generator for $(X,h,k)$.
\end{lemma}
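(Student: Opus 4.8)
The plan is to recognise the claimed tuple as a generator for the bialgebra $(X,h,k)$ viewed as an algebra over the monad $T_{\lambda}$ on $\Coalg(F)$, and to reduce all three defining conditions to facts already established for bases of the underlying $T$-algebra $(X,h)$. Recall that $T_{\lambda}(Z,k_Z)=(TZ,\lambda_Z\circ Tk_Z)$, so that, unwinding \Cref{generatordefinition} in $\Coalg(F)$ with monad $T_{\lambda}$, a generator for $(X,h,k)$ consists of an $F$-coalgebra $\mathbb{Y}$ together with $F$-coalgebra homomorphisms $i'\colon\mathbb{Y}\to(X,k)$ and $d'\colon(X,k)\to T_{\lambda}\mathbb{Y}$ satisfying $h\circ Ti'\circ d'=\id_X$ on underlying morphisms. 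Writing $k_{TY}:=F\mu_Y\circ\lambda_{TY}\circ T(Fd\circ k\circ i)$ for the coalgebraic part of $\expa_T(Y,Fd\circ k\circ i)$, I would set $\mathbb{Y}=(TY,k_{TY})$, $i'=h\circ Ti$, and $d'=\eta_{TY}\circ d$, leaving three things to check.

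First I would argue that $i'=h\circ Ti$ is an $F$-coalgebra homomorphism $(TY,k_{TY})\to(X,k)$. Since a basis is in particular a generator for the underlying $T$-algebra $(X,h)$, \Cref{forgenerator-isharp-is-bialgebra-hom} already gives that $h\circ Ti$ is a $\lambda$-bialgebra homomorphism $\expa_T(Y,Fd\circ k\circ i)\to(X,h,k)$, and hence in particular an $F$-coalgebra homomorphism with exactly the required source and target. This step is therefore immediate.

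Next I would verify that $d'=\eta_{TY}\circ d$ is an $F$-coalgebra homomorphism into $T_{\lambda}(TY,k_{TY})=(T^2Y,\lambda_{TY}\circ Tk_{TY})$; that is, $\lambda_{TY}\circ Tk_{TY}\circ\eta_{TY}\circ d=F(\eta_{TY}\circ d)\circ k$. The key input is that, because $(Y,i,d)$ is a \emph{basis}, \Cref{forbasis-isharp-is-bialgebra-iso} makes $d$ the two-sided inverse of the bialgebra isomorphism $h\circ Ti$, so that $d$ is itself a $\lambda$-bialgebra homomorphism $(X,h,k)\to\expa_T(Y,Fd\circ k\circ i)$, which yields the crucial identity $k_{TY}\circ d=Fd\circ k$. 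Using the naturality of $\eta$ (so that $Tk_{TY}\circ\eta_{TY}=\eta_{FTY}\circ k_{TY}$) and the unit axiom of the distributive law from \Cref{distributivelawdef} (so that $\lambda_{TY}\circ\eta_{FTY}=F\eta_{TY}$), the left-hand side collapses to $F\eta_{TY}\circ k_{TY}\circ d$, which by the crucial identity equals $F\eta_{TY}\circ Fd\circ k=F(\eta_{TY}\circ d)\circ k$, as required.

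Finally, the generator equation $h\circ T(h\circ Ti)\circ\eta_{TY}\circ d=\id_X$ follows by naturality of $\eta$ (rewriting $T(h\circ Ti)\circ\eta_{TY}=\eta_X\circ h\circ Ti$), the Eilenberg--Moore unit law $h\circ\eta_X=\id_X$, and the basis (generator) condition $(h\circ Ti)\circ d=\id_X$. I expect no serious obstacle here: the only genuine subtlety is keeping the lifted monad $T_{\lambda}$ and its algebra/coalgebra structures straight, and locating the two earlier results (\Cref{forgenerator-isharp-is-bialgebra-hom} and \Cref{forbasis-isharp-is-bialgebra-iso}) that supply the two coalgebra-homomorphism conditions essentially for free. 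Once the identity $k_{TY}\circ d=Fd\circ k$ is in hand, everything reduces to routine naturality and unit-law manipulations, so I would present those as short diagram chases and conclude.
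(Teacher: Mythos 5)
Your proof is correct and follows essentially the same route as the paper's: the first coalgebra-homomorphism condition comes from \Cref{forgenerator-isharp-is-bialgebra-hom}, the second from the identity $k_{TY}\circ d=Fd\circ k$ supplied by \Cref{forbasis-isharp-is-bialgebra-iso} combined with naturality of $\eta$ and the unit axiom of the distributive law, and the generator equation from the unit laws together with $(h\circ Ti)\circ d=\id_X$. These are precisely the three diagram chases in the paper's proof.
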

\begin{proof}
	In the following we abbreviate $k_{TY} := F\mu_Y \circ \lambda_{TY} \circ T(Fd \circ k \circ i): TY \rightarrow FTY$. By \Cref{forgenerator-isharp-is-bialgebra-hom} the lifting $h \circ Ti$ is an $F$-coalgebra homomorphism $h \circ Ti: (TY, k_{TY}) \rightarrow (X, k)$. This shows the commutativity of the diagram on the left of \eqref{bialgebrageneratorequations}. 
	By \Cref{forbasis-isharp-is-bialgebra-iso} the morphism $d$ is an $F$-coalgebra homomorphism in the reverse direction. Together with the commutativity of the diagram on the left below this implies the commutativity of the second diagram to the left of \eqref{bialgebrageneratorequations}:
	\[
	\begin{tikzcd}	TY \arrow{r}{\eta_{TY}} \arrow{dd}[left]{k_{TY}} & T^2Y \arrow{d}{Tk_{TY}} \\
	& TFTY \arrow{d}{\lambda_{TY}} \\
	FTY \arrow{ur}{\eta_{FTY}} \arrow{r}{F\eta_{TY}} & FT^2Y	
	\end{tikzcd}
	\qquad
	\begin{tikzcd}				T^2Y \arrow{r}{T^2i} & T^2X \arrow{rr}{Th}  \arrow{dr}{\mu_X} & &  TX \arrow{dd}{h} \\
				TY \arrow{u}{\eta_{TY}} \arrow{r}{Ti} & TX \arrow{u}{\eta_{TX}} \arrow{r}{\id_{TX}} & TX \arrow{dr}{h}  \\
				X \arrow{u}{d} \arrow{rrr}{\id_X} & & & X
			\end{tikzcd}.
	\] 		
	Similarly, the commutativity of third diagram to the left of \eqref{bialgebrageneratorequations} follows from the commutativity of the diagram on the right above.
\end{proof}

\subsection{Bases as Coalgebras}

\label{basesascoaglebrassec}

In this section, we compare our approach with an alternative, coalgebraic, perspective on the generalisation of bases. 
More specifically, we are interested in the work of Jacobs \cite{jacobs2011bases}, where a basis for an algebra over a monad is defined as a coalgebra for the comonad on the category of Eilenberg-Moore algebras induced by the free algebra adjunction. Explicitly, a basis for a $T$-algebra $(X,h)$, in the sense of \cite{jacobs2011bases}, consists of a $T$-coalgebra $(X,k)$ such that the following three diagrams commute:

\begin{equation}
\label{jacobsbasis}
\begin{tikzcd}
			TX \arrow{d}[left]{h} \arrow{r}{Tk} & T^2X \arrow{d}{\mu_X} \\
			X \arrow{r}{k} & TX
		\end{tikzcd}
		\qquad
	\begin{tikzcd}			X \arrow{dr}[left]{\id_X} \arrow{r}{k} & TX \arrow{d}{h} \\
			& X
		\end{tikzcd}	
		\qquad
		\begin{tikzcd}
			X \arrow{d}[left]{k} \arrow{r}{k} & TX \arrow{d}{T\eta_X} \\
			TX \arrow{r}{Tk} & T^2X
		\end{tikzcd}.
\end{equation}

Next we show that a basis as in \Cref{generatordefinition} induces a basis as in  \cite{jacobs2011bases}.

\begin{lemma}
\label{impliesjacobsbasis}
	Let $(Y,i,d)$ be a basis for a $T$-algebra $(X,h)$. Then \eqref{jacobsbasis} commutes for $k := Ti \circ d$.
\end{lemma}
\begin{proof}
	The commutativity of the diagram on the left of  \eqref{jacobsbasis} follows from the naturality of $\mu$ and \Cref{forbasis-d-isalgebrahom}. The diagram in the middle of  \eqref{jacobsbasis} commutes by the definition of a generator. The commutativity of the diagram on the right of  \eqref{jacobsbasis} is again a consequence of \Cref{forbasis-d-isalgebrahom}:
	\begin{equation*}
				\begin{tikzcd}
					X \arrow{r}{d} \arrow{d}[left]{d} & TY \arrow{r}{Ti} & TX \arrow{dd}{T \eta_X} \\
					TY \arrow{ur}[right]{\id_{TY}} \arrow{d}[left]{Ti} \arrow{dr}{T \eta_Y} \\
					TX \arrow{r}{Td} & T^2Y \arrow{r}{T^2i} & T^2X
				\end{tikzcd}.
	\end{equation*}
\end{proof}

Conversely, assume $(X,k)$ is a $T$-coalgebra structure satisfying \eqref{jacobsbasis} and $i_k: Y_k \rightarrow X$ an equaliser  of $k$ and $\eta_X$. If the underlying category is the category of sets and functions, the equaliser of any two functions exists. If $Y_k$ is non-empty, one can show that the equaliser is preserved under $T$, that is, $Ti_k$ is an equaliser of $Tk$ and $T\eta_X$ \cite{jacobs2011bases}. By \eqref{jacobsbasis} we have $Tk \circ k = T\eta_X \circ k$. Thus there exists a unique morphism $d_k: X \rightarrow TY_k$ such that $Ti_k \circ d_k = k$, which can be shown to be the inverse of $h \circ Ti_k$ \cite{jacobs2011bases}. In other words, $G(X,k) := (Y_k, i_k, d_k)$ is a basis for $(X,h)$ in the sense of \Cref{generatordefinition}. In the following let $F(Y,i,d) := (X,Ti \circ d)$ for any basis of $(X,h)$.

\begin{lemma}
\label{equaliserlemma}	
	Let $(Y,i,d)$ be a basis for a $T$-algebra $(X,h)$ and $k := Ti \circ d$. Then $\eta_X \circ i = k \circ i$ and $Tk \circ (\eta_X \circ i) = T\eta_{X} \circ (\eta_X \circ i)$.  
\end{lemma}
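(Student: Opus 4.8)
The plan is to derive both identities directly from \Cref{forbasis-d-isalgebrahom}, which supplies the crucial equation $d \circ i = \eta_Y$, together with the naturality of the monad unit $\eta$. No induction or elaborate diagram chase should be required; everything reduces to two naturality squares and one substitution.

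First I would establish $\eta_X \circ i = k \circ i$. Unfolding the definition $k = Ti \circ d$ gives $k \circ i = Ti \circ d \circ i$. Applying $d \circ i = \eta_Y$ from \Cref{forbasis-d-isalgebrahom} turns this into $Ti \circ \eta_Y$, and the naturality of $\eta$ with respect to $i \colon Y \rightarrow X$ rewrites $Ti \circ \eta_Y = \eta_X \circ i$, which is exactly the first claim. Conceptually, this says that $\eta_X \circ i$ equalises the pair $(k, \eta_X)$, and hence that $i$ factors through the equaliser $i_k$ appearing in the surrounding discussion of bases as coalgebras.

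For the second identity I would show that both composites collapse to the common value $\eta_{TX} \circ \eta_X \circ i$. On the left, naturality of $\eta$ with respect to $k \colon X \rightarrow TX$ yields $Tk \circ \eta_X = \eta_{TX} \circ k$, so that $Tk \circ \eta_X \circ i = \eta_{TX} \circ k \circ i$, which by the first identity equals $\eta_{TX} \circ \eta_X \circ i$. On the right, naturality of $\eta$ with respect to $\eta_X \colon X \rightarrow TX$ gives $T\eta_X \circ \eta_X = \eta_{TX} \circ \eta_X$, hence $T\eta_X \circ \eta_X \circ i = \eta_{TX} \circ \eta_X \circ i$. Since both sides agree, the claim follows.

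There is no genuine obstacle in the calculation itself; the only care needed is in keeping the two applications of naturality distinct (one for $k$, one for $\eta_X$) and in invoking the first identity at the correct point. The real role of the lemma is preparatory: the first equation lets $i$ factor through the equaliser $i_k$ of $k$ and $\eta_X$, while the second feeds into the argument that $T$ preserves this equaliser, which is ultimately what makes the assignments $F$ and $G$ between our notion of basis and that of \cite{jacobs2011bases} mutually inverse.
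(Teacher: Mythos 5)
Your proof is correct and uses the same ingredients as the paper's: the identity $d \circ i = \eta_Y$ from \Cref{forbasis-d-isalgebrahom} together with naturality of $\eta$. The only (harmless) difference is in the second identity, where you apply naturality of $\eta$ at $k$ and then reuse the first identity, whereas the paper's diagram instead unfolds $Tk = T^2i \circ Td$ and chases through $d\circ i=\eta_Y$ a second time; your version is slightly more economical but establishes the same equation.
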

\begin{proof}
The statement follows from \Cref{forbasis-d-isalgebrahom}:
\begin{equation*}
\begin{tikzcd}Y \arrow{rr}{i} \arrow{d}{i} \arrow{dr}{\eta_Y} && X \arrow{d}{\eta_X} \\
X \arrow{r}{d} & TY \arrow{r}{Ti} & TX 	
\end{tikzcd}
\qquad
	\begin{tikzcd}			Y \arrow{r}{i} \arrow{dd}{i} \arrow{dr}{\eta_Y} & X \arrow{d}{d} \arrow{r}{\eta_X} & TX \arrow{r}{Td} & T^2Y \arrow{dd}{T^2i} \\
		& TY \arrow{d}{Ti} \arrow{rru}{\eta_{TY}} & & \\
		X \arrow{r}{\eta_X} \arrow[bend right]{rrr}{T\eta_X \circ \eta_X} & TX \arrow{rr}{\eta_{TX}} & & T^2X
	\end{tikzcd}.
\end{equation*}
\end{proof}

\begin{corollary}
\label{uniquemorphism}
	Let $\alpha := (Y,i,d)$  be a basis for a set-based $T$-algebra $(X,h)$ and $k := Ti \circ d$. Let $i_k: Y_k \rightarrow X$ be an equaliser of $k$ and $\eta_X$, and $Y_k$ non-empty, then $(\id_{(X,h)})_{\alpha, GF\alpha}: Y \rightarrow TY_{k}$ is the unique morphism $\psi$ making the diagram below commute:
\[
\begin{tikzcd}[ampersand replacement=\&, row sep = 0.75em]
Y \arrow[dashed]{r}[below]{\psi} \arrow[bend left=20]{rr}{\eta_X \circ i} \& TY_k \arrow{r}[below]{Ti_k}	 \& TX \arrow[shift left=2]{r}[above]{Tk} \arrow[shift right=1.5]{r}[below]{T\eta_X} \& T^2X
\end{tikzcd}.
\]	
\end{corollary}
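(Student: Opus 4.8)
The plan is to recognise the statement as a direct application of the universal property of the equaliser $i_k$ together with the two identities provided by \Cref{equaliserlemma}, once the notation is unwound. First I would compute the morphism in question explicitly. Writing $GF\alpha = (Y_k, i_k, d_k)$, where $d_k \colon X \to TY_k$ is the unique map satisfying $Ti_k \circ d_k = k$ (as recalled in the discussion preceding \Cref{equaliserlemma}), the basis representation of \Cref{representationdef} gives $(\id_{(X,h)})_{\alpha, GF\alpha} = d_k \circ \id_X \circ i = d_k \circ i$. So the candidate morphism is $\psi = d_k \circ i$, and the task reduces to showing that $d_k \circ i$ is the unique $\psi \colon Y \to TY_k$ with $Ti_k \circ \psi = \eta_X \circ i$.

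For uniqueness, I would invoke that $k = Ti \circ d$ turns $(X,k)$ into a basis in the sense of \eqref{jacobsbasis} by \Cref{impliesjacobsbasis}, so that---since $Y_k$ is non-empty---the map $Ti_k$ is again an equaliser of $Tk$ and $T\eta_X$, as recalled before \Cref{equaliserlemma}. The second identity of \Cref{equaliserlemma} states precisely that $\eta_X \circ i$ equalises $Tk$ and $T\eta_X$, i.e. $Tk \circ (\eta_X \circ i) = T\eta_X \circ (\eta_X \circ i)$. The universal property of the equaliser $Ti_k$ therefore yields a \emph{unique} morphism $\psi \colon Y \to TY_k$ with $Ti_k \circ \psi = \eta_X \circ i$; equivalently, since equalisers are monic, $Ti_k$ is a monomorphism and admits at most one such factorisation.

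It then remains to verify that the candidate $d_k \circ i$ indeed factors $\eta_X \circ i$ through $Ti_k$. This is the short computation $Ti_k \circ (d_k \circ i) = (Ti_k \circ d_k) \circ i = k \circ i = \eta_X \circ i$, where the middle equality is the defining property $Ti_k \circ d_k = k$ of $d_k$, and the last equality is the first identity of \Cref{equaliserlemma}. By the uniqueness established above, this forces $(\id_{(X,h)})_{\alpha, GF\alpha} = d_k \circ i = \psi$, which completes the argument. I do not expect a genuine obstacle here: the entire content is carried by \Cref{equaliserlemma} and the preservation of the equaliser under $T$ (which is exactly where the non-emptiness of $Y_k$ enters). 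The only real care needed is notational---correctly unwinding $GF\alpha$ and the representation $(\id_{(X,h)})_{\alpha, GF\alpha}$ to identify the candidate morphism as $d_k \circ i$---after which both commutativity and uniqueness are immediate.
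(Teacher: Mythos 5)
Your proposal is correct and follows essentially the same route as the paper: uniqueness comes from $Ti_k$ being an equaliser of $Tk$ and $T\eta_X$ (using non-emptiness of $Y_k$) together with the second identity of \Cref{equaliserlemma}, and the candidate is identified as $d_k \circ i$ from \Cref{representationdef}. Your final verification $Ti_k \circ d_k \circ i = k \circ i = \eta_X \circ i$ is a mild streamlining of the paper's argument, which instead factors $i$ through $i_k$ via an intermediate $\varphi$ and uses $d_k \circ i_k = \eta_{Y_k}$; both computations are valid and rest on the same ingredients.
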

\begin{proof}
Since $i_k$ is an equaliser of $k$ and $\eta_X$, it follows from \Cref{equaliserlemma} that there exists a unique morphism $\varphi: Y \rightarrow Y_k$ such that $i_k \circ \varphi = i$. Since $Y_k$ is non-empty, $Ti_k$ is an equaliser of $Tk$ and $T\eta_X$ \cite{jacobs2011bases}. It follows from \Cref{equaliserlemma} that there exists a unique morphism $\psi: Y \rightarrow TY_k$ such that $Ti_k \circ \psi = \eta_X \circ i$. It is not hard to see that $\psi = \eta_{Y_k} \circ \varphi$. The statement thus follows from $
(\id_{(X,h)})_{\alpha, GF\alpha} = d_k \circ i = d_k \circ i_k \circ \varphi = \eta_{Y_k} \circ \varphi = \psi$.
\end{proof}

\subsection{Signatures, Equations, and Finitary Monads}

\label{varietiessec}

Most of the algebras over set monads one usually considers generators for constitute finitary varieties in the sense of universal algebra. In this section, we will briefly explore the consequences for generators that arise from this observation. The constructions are well-known; we include them for completeness.

 Let $\Sigma$ be a set, whose elements we think of as \emph{operations}, and $\textnormal{ar}: \Sigma \rightarrow \mathbb{N}$ a function that assigns to an operation its \emph{arity}. Any such \emph{signature} induces a set endofunctor $H_{\Sigma}$ defined on a set as the coproduct $H_{\Sigma}X = \coprod_{\sigma \in \Sigma} X^{\textnormal{ar}(\sigma)}$, and consequently, a set monad $\mathbb{S}_{\Sigma}$ that assigns to a set $V$ of variables the initial algebra $S_{\Sigma} V = \mu X.(V + H_{\Sigma} X)$, i.e. the set of $\Sigma$-terms generated by $V$ (see e.g. \cite{turi1996functorial}). One can show that the categories of $H_{\Sigma}$-algebras and $\mathbb{S}_{\Sigma}$-algebras are isomorphic. A $\mathbb{S}_{\Sigma}$-algebra $\mathbb{X}$ \emph{satisfies a set of equations} $E \subseteq S_{\Sigma} V \times S_{\Sigma} V$, if for all $(s,t) \in E$ and valuations $v: V \rightarrow X$ it holds $v^{\sharp}(s) = v^{\sharp}(t)$, where $v^{\sharp}: (S_{\Sigma}V, \mu_V) \rightarrow \mathbb{X}$ is the unique extension of $v$ to a $\mathbb{S}_{\Sigma}$-algebra homomorphism \cite{adamek1994locally}. The set of $\mathbb{S}_{\Sigma}$-algebras that satisfy $E$ is denoted by $\Alg(\Sigma, E)$. As one verifies, the forgetful functor $U: \Alg(\Sigma, E) \rightarrow \Set$ admits a left-adjoint $F: \Set \rightarrow \Alg(\Sigma, E)$, thus resulting in a set monad $T_{\Sigma, E}$ with underlying endofunctor $U \circ F$ that preserves directed colimits. The functor $U$ can be shown to be monadic, that is, the comparison functor $K: \Alg(\Sigma, E) \rightarrow \Set^{T_{\Sigma, E}}$ is an isomorphism \cite{mac2013categories}. In other words, the category of Eilenberg-Moore algebras over $T_{\Sigma, E}$ and the finitary variety of algebras over $\Sigma$ and $E$ coincide. In fact, set monads preserving directed colimits (\emph{finitary} monads \cite{adamek1994locally}) and finitary varieties are in \emph{bijection}.
 
The following result characterises generators for algebras over $T_{\Sigma, E}$. It can be seen as a unifying proof for observations analogous to the one in \Cref{setbasedexamples}. For any $\Sigma$-term $t \in \mathbb{S}_{\Sigma}V$ over variables $V$, let $\llbracket t \rrbracket_E \in S_{\Sigma}V /{\cong_E}$ denote the equivalence class of $t$ w.r.t. the smallest congruence relation $\cong_E$ on $S_{\Sigma}V$ generated by the equations $E$.

\begin{lemma}
\label{sigmatermlemma}
A morphism $i: Y \rightarrow X$ is part of a generator for a $T_{\Sigma, E}$-algebra $\mathbb{X}$ iff every element $x \in X$ can be expressed as a $\Sigma$-term in $i\lbrack Y \rbrack$ modulo $E$, that is, there is a term $d(x) \in S_{\Sigma}Y$ such that $i^{\sharp}(\llbracket d(x) \rrbracket_E) = x$.
\end{lemma}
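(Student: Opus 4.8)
The plan is to exploit the explicit description of the finitary monad $T_{\Sigma, E}$ recalled above: since $U \colon \Alg(\Sigma, E) \to \Set$ is monadic with left adjoint $F$, the free algebra $FY$ has carrier $T_{\Sigma, E}Y = S_{\Sigma}Y /{\cong_E}$, the set of $\Sigma$-terms over $Y$ quotiented by the least congruence generated by $E$. Thus every element of $T_{\Sigma, E}Y$ is of the form $\llbracket t \rrbracket_E$ for some term $t \in S_{\Sigma}Y$, and the lemma becomes a direct unwinding of \Cref{generatordefinition}, once we recall that for a fixed $i \colon Y \to X$, being \emph{part of a generator} means precisely that there exists a morphism $d \colon X \to T_{\Sigma, E}Y$ with $i^\sharp \circ d = \id_X$, where $i^\sharp = h \circ T_{\Sigma, E}i \colon T_{\Sigma, E}Y \to X$.

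First I would record the conceptual bridge that justifies the phrase ``expressed as a $\Sigma$-term'': applying $T_{\Sigma, E}i$ substitutes the variables of a term via $i$, while $h$ evaluates terms over $X$ in $\mathbb{X}$, so that $i^\sharp(\llbracket t \rrbracket_E)$ is exactly the interpretation of $t$ in $\mathbb{X}$ under the valuation $i \colon Y \to X$. This evaluation is well-defined on $\cong_E$-classes because $\mathbb{X} \in \Alg(\Sigma, E)$ satisfies every equation in $E$, hence their congruence closure. For the formal argument only the displayed equation $i^\sharp(\llbracket d(x) \rrbracket_E) = x$ is actually needed, so I would keep this remark brief and lean on the preceding discussion of $\Alg(\Sigma, E) \cong \Set^{T_{\Sigma, E}}$.

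For the forward direction, I would assume $(Y, i, d)$ is a generator, i.e.\ $i^\sharp \circ d = \id_X$. Given $x \in X$, the element $d(x) \in T_{\Sigma, E}Y$ is an equivalence class, so choosing any representative term $d(x) \in S_{\Sigma}Y$ (abusing notation as in the statement) yields $i^\sharp(\llbracket d(x) \rrbracket_E) = (i^\sharp \circ d)(x) = x$, as required. Conversely, I would assume that for every $x \in X$ there is a term $d(x) \in S_{\Sigma}Y$ with $i^\sharp(\llbracket d(x) \rrbracket_E) = x$. Then the assignment $x \mapsto \llbracket d(x) \rrbracket_E$ defines a function $d \colon X \to T_{\Sigma, E}Y$ satisfying $(h \circ T_{\Sigma, E}i) \circ d = i^\sharp \circ d = \id_X$, so $(Y, i, d)$ is a generator with embedding $i$, witnessing that $i$ is part of a generator.

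The argument is essentially a translation between the abstract generator condition of \Cref{generatordefinition} and the syntactic notion of term-expressibility, so no serious obstacle arises. The only point demanding care is the identification $T_{\Sigma, E}Y = S_{\Sigma}Y /{\cong_E}$ together with the well-definedness of term evaluation on congruence classes; both are consequences of the monadicity of $U$ and the fact that the objects of $\Alg(\Sigma, E)$ satisfy $E$, which I would invoke from the preceding paragraph rather than reprove.
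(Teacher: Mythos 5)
Your proof is correct and follows essentially the same route as the paper's: both directions reduce to the identification $T_{\Sigma,E}Y = S_\Sigma Y/{\cong_E}$, choosing a representative term for $\overline{d}(x)$ in the forward direction and defining $\overline{d}(x) := \llbracket d(x) \rrbracket_E$ in the converse. Your additional remark on why evaluation is well-defined on $\cong_E$-classes is a harmless elaboration the paper leaves implicit.
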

\begin{proof}
	Let $i: Y \rightarrow X$ be part of a generator $(Y, i, \overline{d})$ for a $T_{\Sigma, E}$-algebra $\mathbb{X}$. Then any $x \in X$ admits some $\overline{d}(x) \in T_{\Sigma, E}Y$ such that $i^{\sharp}(\overline{d}(x)) = x$, where $i^{\sharp}: (T_{\Sigma, E}Y, \mu_Y) \rightarrow \mathbb{X}$. By construction $T_{\Sigma, E}Y = UFY$, where $FY = S_{\Sigma}Y /{\cong_E}$ is the set of $\Sigma$-terms generated by $Y$ modulo the smallest congruence $\cong_E$ generated by $E$. Let $d(x) \in S_{\Sigma}Y$ be any representative of $\overline{d}(x) \in T_{\Sigma, E}Y$, that is, such that $\llbracket d(x) \rrbracket_E = \overline{d}(x)$. Then it follows $i^{\sharp}(\llbracket d(x) \rrbracket_E) = i^{\sharp}(\overline{d}(x)) = x$.
	
	Conversely, assume we have a $T_{\Sigma, E}$-algebra $\mathbb{X}$ and for any $x \in X$ there exists a term $d(x) \in S_{\Sigma}Y$ such that $i^{\sharp}(\llbracket d(x) \rrbracket_E) = x$. Then we can define a function $\overline{d}: X \rightarrow T_{\Sigma, E}Y$ by $\overline{d}(x) = \llbracket d(x) \rrbracket_E$. It immediately follows $i^{\sharp}(\overline{d}(x)) = i^{\sharp}(\llbracket d(x) \rrbracket_E) = x$, which shows that $(Y, i, \overline{d})$ is a generator for $\mathbb{X}$. 
\end{proof}

\subsection{Finitely Generated Objects}

\label{finitelygeneratedsec}

In this section, we relate our abstract definition of a generator to the theory of \emph{locally finitely presentable} categories, in particular, to the notions of \emph{finitely generated} and \emph{finitely presentable} objects, which are categorical abstractions of finitely generated algebraic structures.

For intuition, recall that an element $x \in X$ of a partially ordered set is \emph{compact}, if for each directed set $D \subseteq X$ with $x \leq \bigvee D$, there exists some $d \in D$ satisfying $x \leq d$. An \emph{algebraic lattice} is a partially ordered set that has all joins, and every element is a join of compact elements. The naive categorification of compact elements is equivalent to the following definition: a object $Y$ in $\mathscr{C}$ is \emph{finitely presentable (generated)}, if $\Hom_{\mathscr{C}}(Y, -): \mathscr{C} \rightarrow \Set$ preserves filtered colimits (of monomorphisms). Consequently, one can categorify algebraic lattices as  \emph{locally finitely presentable} (lfp) categories, which are cocomplete and admit a set of finitely presentable objects, such that every object is a filtered colimit of objects from 	that set \cite{adamek1994locally}.

In \cite[Theor. 3.5]{adamek2019finitely} it is shown that an algebra $\mathbb{X}$ over a finitary monad $T$ on an lfp category $\mathscr{C}$ is a finitely generated object of $\EM$ iff there exists a finitely presentable object $Y$ of $\mathscr{C}$ and a morphism $i: Y \rightarrow X$, such that $i^{\sharp}: (TY, \mu_Y) \rightarrow \mathbb{X}$ is a strong\footnote{An epimorphism $e: A \rightarrow B$ is said to be \emph{strong}, if for any monomorphism $m: C \rightarrow D$ and any morphisms $f: A \rightarrow C$ and $g: B \rightarrow D$ such that $g \circ e = m \circ f$, there exists a diagonal monomorphism $d: B \rightarrow C$ such that $f = d \circ e$ and $g = m \circ d$.} epimorphism in $\EM$. Below, we give a variant of this statement where instead the carrier of $i^{\sharp}$ is a split\footnote{A morphism $e: A \rightarrow B$ is called \emph{split}, if there exists a morphism $s: B \rightarrow A$ such that $e \circ s = \id_B$. Any morphism that is split is necessarily a strong epimorphism.} epimorphism in $\mathscr{C}$, which is the case iff $\mathbb{X}$ admits a generator in the sense of \Cref{generatordefinition}.
  
 \begin{proposition}
 \label{finitarymonadlemma}	
 Let $\mathscr{C}$ be a lfp category in which strong epimorphisms split and $T$ a finitary monad on $\mathscr{C}$ preserving epimorphisms.
 Then an algebra $\mathbb{X}$ over $T$ is a finitely generated object of $\EM$ iff it is generated by a finitely presentable object $Y$ in $\mathscr{C}$ in the sense of \Cref{generatordefinition}.
 \end{proposition}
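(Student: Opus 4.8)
The plan is to derive the statement from the characterisation of finitely generated objects in \cite[Theor. 3.5]{adamek2019finitely}, bridging it to \Cref{generatordefinition} through the elementary observation that a generator for $\mathbb{X} = (X,h)$ with carrier $Y$ is nothing but a section in $\mathscr{C}$ of the underlying morphism $U(i^{\sharp})$ of the lifting $i^{\sharp} = h \circ Ti \colon (TY, \mu_Y) \to \mathbb{X}$. Indeed, by \Cref{generatordefinition}, the data of a generator $(Y, i, d)$ is exactly a morphism $i \colon Y \to X$ together with $d \colon X \to TY$ satisfying $i^{\sharp} \circ d = \id_X$, i.e.\ $d$ is a right inverse of $U(i^{\sharp})$. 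Since \cite[Theor. 3.5]{adamek2019finitely} states that $\mathbb{X}$ is a finitely generated object of $\EM$ iff there is a finitely presentable $Y \in \mathscr{C}$ and a morphism $i \colon Y \to X$ for which $i^{\sharp}$ is a \emph{strong} epimorphism in $\EM$, it remains to show that, under our hypotheses and for such an $i^{\sharp}$, being a strong epimorphism in $\EM$ is equivalent to $U(i^{\sharp})$ being a split epimorphism in $\mathscr{C}$.

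The central step is to transport the canonical factorisation system of the lfp category $\mathscr{C}$ to $\EM$. As $\mathscr{C}$ is locally finitely presentable, it carries the proper factorisation system $(\mathscr{E}, \mathscr{M})$ in which $\mathscr{M}$ is the class of all monomorphisms and $\mathscr{E}$ the class of all strong epimorphisms. By assumption strong epimorphisms in $\mathscr{C}$ split; since any functor preserves split epimorphisms, $T$ preserves $\mathscr{E}$. Hence, by the construction of \Cref{factorisationsystemalgebra} (cf.\ \Cref{diagonal_algebras}), the system lifts to a factorisation system on $\EM$ whose left class $\mathscr{E}^{\EM}$ consists of the homomorphisms with carrier in $\mathscr{E}$ and whose right class $\mathscr{M}^{\EM}$ consists of the homomorphisms with carrier in $\mathscr{M}$. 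Because $U$ is monadic, it preserves monomorphisms (being a right adjoint) and reflects them (being faithful), so $\mathscr{M}^{\EM}$ is precisely the class of \emph{all} monomorphisms of $\EM$; consequently $(\mathscr{E}^{\EM}, \mathscr{M}^{\EM})$ is the (strong epi, mono) system of $\EM$ and $\mathscr{E}^{\EM}$ is exactly the class of strong epimorphisms of $\EM$. Putting this together: $i^{\sharp}$ is a strong epimorphism in $\EM$ iff $i^{\sharp} \in \mathscr{E}^{\EM}$ iff $U(i^{\sharp}) \in \mathscr{E}$ is a strong epimorphism of $\mathscr{C}$ iff, by the splitting hypothesis, $U(i^{\sharp})$ is a split epimorphism of $\mathscr{C}$.

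Combining the two observations yields the claim: $\mathbb{X}$ is finitely generated in $\EM$ iff there is a finitely presentable $Y$ and $i \colon Y \to X$ with $U(i^{\sharp})$ a split epimorphism, iff there is a finitely presentable $Y$ and a section $d$ making $(Y, i, d)$ a generator for $\mathbb{X}$ in the sense of \Cref{generatordefinition}. I expect the main obstacle to be the precise identification of the lifted left class $\mathscr{E}^{\EM}$ with the categorical strong epimorphisms of $\EM$: this relies on $\mathscr{M}^{\EM}$ being all of $\mathrm{Mono}(\EM)$, which in turn uses the preservation and reflection of monomorphisms by the forgetful functor of a monadic adjunction, together with the fact that in an lfp category the right class of the canonical system is all monomorphisms. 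A secondary point worth making explicit is the role of the two monad hypotheses: the splitting of strong epimorphisms already forces $T$ to preserve $\mathscr{E}$, while the assumption that $T$ preserves epimorphisms is what keeps this lifting compatible with the standing assumptions of \Cref{closure_sec} and of \cite{adamek2019finitely}.
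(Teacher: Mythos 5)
Your proof is correct, and at the top level it follows the same route as the paper: both directions reduce to the characterisation of finitely generated algebras in \cite[Theor. 3.5]{adamek2019finitely}, and the splitting hypothesis is used exactly as in the paper to pass between strong and split epimorphisms, with a generator being read off as a section of $U(i^{\sharp})$. Where you genuinely diverge is in how the gap between ``$i^{\sharp}$ is a strong epimorphism in $\EM$'' and ``$U(i^{\sharp})$ is a strong epimorphism in $\mathscr{C}$'' is closed. The paper handles this by asking the reader to re-run the proof of the cited theorem with the (strong-epi, mono) system of $\EM$ replaced by the lifted system of \Cref{factorisationsystemalgebra} --- a modification it asserts but does not carry out. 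You instead prove once and for all that the two factorisation systems \emph{coincide}: the lifted right class is all of $\mathrm{Mono}(\EM)$ because the monadic forgetful functor preserves and reflects monomorphisms, and hence the lifted left class is the class of strong epimorphisms of $\EM$. This is cleaner and more self-contained, and it also makes transparent your side remark that the splitting hypothesis alone already forces $T$ to preserve strong epimorphisms.

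One step you flag but do not fully discharge deserves to be made explicit: from ``$\mathscr{M}^{\EM} = \mathrm{Mono}(\EM)$'' you conclude ``$\mathscr{E}^{\EM} = \mathrm{StrongEpi}(\EM)$''. This uses two standard facts: that in an orthogonal factorisation system the left class is exactly the class of morphisms left-orthogonal to the right class, and that in a category with equalisers every morphism left-orthogonal to all monomorphisms is automatically an epimorphism (factor a pair $f e = g e$ through the equaliser of $f$ and $g$ and use orthogonality to split that equaliser), so that $^{\perp}\mathrm{Mono}$ coincides with the strong epimorphisms. Since $U$ creates limits and $\mathscr{C}$ is complete (being lfp), $\EM$ has equalisers and the argument goes through; but without this remark the identification of $\mathscr{E}^{\EM}$ with the categorical strong epimorphisms of $\EM$ is not yet justified.
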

\begin{proof}
Assume that an algebra $\mathbb{X}$ over $T$ is a finitely generated object of $\EM$. From \cite[Theor. 3.5]{adamek2019finitely} it follows that there exists a finitely presentable object $Y$ of $\mathscr{C}$ and a morphism $i: Y \rightarrow X$ such that $i^{\sharp}: (TY, \mu_Y) \rightarrow \mathbb{X}$ is a strong epimorphism in $\EM$. Since $T$ preserves epis, it is sound to assume that the carrier $i^{\sharp}: TY \rightarrow X$ is a strong epimorphism in $\mathscr{C}$. (This is because the proof of \cite[Theor. 3.5]{adamek2019finitely} can be modified by replacing the (strong-epi, mono)-factorisation system of the lfp category $\EM$ (cf. \cite[Remark 2.2.1]{adamek2019finitely} and \cite[Remark 3.1]{adamek2019finitely}) with the factorisation system for $\EM$ induced by lifting the (strong-epi, mono)-factorisation system of $\mathscr{C}$. The lifted factorisation system (cf. \Cref{factorisationsystemalgebra}) consists of those algebra homomorphisms whose carrier is a strong-epi- or monomorphism in $\mathscr{C}$, respectively.) By assumption, $i^{\sharp}: TY \rightarrow X$ thus splits in $\mathscr{C}$, that is, there exists at least one morphism $d: X \rightarrow TY$ in $\mathscr{C}$ such that $i^{\sharp} \circ d = \id_X$. This shows that $(Y, i, d)$ is a generator for $\mathbb{X}$ in the sense of \Cref{generatordefinition}.

Conversely, assume that an algebra $\mathbb{X}$ over $T$ is generated by $(Y, i, d)$, where $Y$ is a finitely presentable object in $\mathscr{C}$. Then $d$ witnesses that $i^{\sharp}: TY \rightarrow X$ splits in $\mathscr{C}$. Since every split epimorphism is necessarily strong, $i^{\sharp}: TY \rightarrow X$ thus is a strong epimorphism in $\mathscr{C}$. It immediately follows that $i^{\sharp}: (TY, \mu_Y) \rightarrow \mathbb{X}$ is an epimorphism in $\EM$. Since $T$ preserves epis, it also is a \emph{strong} epimorphism in $\EM$. From \cite[Theor. 3.5]{adamek2019finitely} it follows that the algebra $\mathbb{X}$ over $T$ is a finitely generated object of $\EM$. 
\end{proof}

\section{Related Work}

\label{relatedworksec_closure}

One of the motivations for this chapter has been our broad interest in active learning algorithms for state-based models \cite{angluin1987learning}, in particular automata for NetKAT \cite{anderson2014netkat}, a formal system for the verification of networks based on Kleene Algebra with Tests \cite{kozen1996kleene}.   
One of the main challenges in learning non-deterministic models such as NetKAT automata is the common lack of a unique minimal acceptor for a given language \cite{denis2001residual}. The problem has been independently approached for different variants of non-determinism, often with the common idea of finding a subclass admitting a unique representative \cite{esposito2002learning, berndt2017learning}. More general and unifying perspectives were given by van Heerdt  \cite{van2020learning, van2016master, van2020phd} and Myers et al. \cite{MyersAMU15}. One of the central notions in the work of van Heerdt is the concept of a scoop, originally introduced by Arbib and Manes \cite{arbib1975fuzzy}.

 In \Cref{canonical_automata_chapter} we have presented a categorical framework that recovers minimal non-deterministic representatives in two steps. The framework is based on ideas closely related to the ones in \cite{MyersAMU15}, adopts scoops under the name generators and strengthens the former to the notion of a basis (\Cref{generatordefinition}). In a first step, the framework constructs the minimal bialgebra accepting a given regular language, by closing the minimal coalgebra with additional algebraic structure over a monad. In a second step, it identifies generators for the algebraic part of the bialgebra, to derive an equivalent coalgebra with side effects in a monad. 
  In this chapter, we generalise the first step as application of a monad on an appropriate category of subobjects with respect to an $(\mathscr{E}, \mathscr{M})$-factorisation system, and explore the second step by further developing the abstract theory of generators and bases.
  
  	Categorical factorisation systems are well-established \cite{bousfield1977constructions,riehl2008factorization,maclane1950duality}. Among others, they have been used for a general view on the minimisation and determinisation of state-based systems \cite{adamek2009abstract, adamek2012coalgebraic, wissmann2022minimality}. In \Cref{closure_sec} we use the formalism of \cite{adamek2009abstract}. In \Cref{factorisationsystemalgebra} we have shown that under certain assumptions factorisation systems can be lifted to the categories of algebras and coalgebras. We later realised that the constructions had recently been published in \cite{wissmann2022minimality}.
  
The notion of a basis for an algebra over an arbitrary monad has been subject of previous interest. Jacobs, for instance, defines a basis as a coalgebra for the comonad on the category of algebras induced by the free algebra adjunction \cite{jacobs2011bases}. In \Cref{basesascoaglebrassec} we have shown that a basis in our sense always induces a basis in their sense, and, conversely, it is possible to recover a basis in our sense from a basis in their sense, if certain assumptions about the existence and preservation of equaliser are given. As equaliser do not necessarily exist and are not necessarily preserved, our approach carries additional data and thus can be seen as finer.

\section{Discussion and Future Work}

We generalised the closure of a subset of an algebraic structure with respect to the latter as a monad between categories of subobjects relative to a factorisation system. We have identified the closure of a minimal coalgebra with additional algebraic structure as an instance of the closure of subobjects that arise by taking the image of a morphism. We have extended the notion of a generator to a category of algebras with generators, and explored its characteristics.
  We have generalised the matrix representation theory of vector spaces and discussed bases for bialgebras. We compared our ideas with a coalgebraic generalisation of bases, explored the case in which a monad is induced by a variety, and briefly related our notion to finitely generated objects in finitely presentable categories. 
  
In \Cref{canonical_automata_chapter} we have shown that generators and bases in the sense of \Cref{generatorsandbases_sec} are central ingredients in the definitions of minimal canonical acceptors. Many such acceptors admit double-reversal characterisations \cite{brzozowski1962canonical,BrzozowskiT14,MyersAMU15,VuilleminG210}. Duality based characterisations as the former have been shown to be closely related to minimisation procedures with respect to factorisation systems \cite{bonchi2012brzozowski,bonchi2014algebra,wissmann2022minimality}. In the future, it would be interesting to further explore the connection between the minimality of generators on the one side, and the notion of minimality of an acceptor with respect to a factorisation system on the other side.

Another interesting question is whether the construction that underlies our definition of a monad in \Cref{inducedmonad} could be introduced at a more general level of an arbitrary adjunction between categories with suitable factorisation systems, such that the adjunction between the base category $\mathscr{C}$ and the category of Eilenberg-Moore algebras $\EM$ is a special case.

Overall, our presentation primarily focused on applications to coalgebra. It would be valuable to also explore possible implications to other fields, for instance the minimisation of logical formulae or proofs. 
\cleardoublepage
\addcontentsline{toc}{chapter}{Bibliography}
\printbibliography

@article{smolka2019guarded,
  title={Guarded Kleene Algebra with Tests: Verification of Uninterpreted Programs in Nearly Linear Time},
  author={Smolka, Steffen and Foster, Nate and Hsu, Justin and Kapp{\'e}, Tobias and Kozen, Dexter and Silva, Alexandra},
  journal={Proceedings of the ACM on Programming Languages},
  volume={4},
  number={POPL},
  pages={1--28},
  year={2019},
publisher = {Association for Computing Machinery}, 
doi = {10.1145/3371129}, 
}

@inproceedings{schmid2021guarded,
  title={Guarded Kleene Algebra with Tests: Coequations, Coinduction, and Completeness},
  author={Schmid, T and Kapp{\'e}, T and Kozen, D and Silva, A},
  booktitle =	{48th International Colloquium on Automata, Languages, and Programming (ICALP 2021)},
   series =	{Leibniz International Proceedings in Informatics (LIPIcs)},
  volume={\\198},
  pages={142},
  year={2021},
  publisher={Schloss Dagstuhl -- Leibniz-Zentrum f{\"u}r Informatik},
    doi =		{10.4230/LIPIcs.ICALP.2021.142},

}

@techreport{kozen2001automata,
  title={Automata on Guarded Strings and Applications},
  author={Kozen, Dexter},
  year={2001},
  institution={Cornell University}
}

@article{angluin1987learning,
  title={Learning Regular Sets from Queries and Counterexamples},
  author={Angluin, Dana},
  journal={Information and Computation},
  volume={75},
  number={2},
  pages={87--106},
  year={1987},
  publisher={Elsevier},
  doi={10.1016/0890-5401(87)90052-6}
}

@inproceedings{kozen1996kleene,
  title={Kleene Algebra with Tests: Completeness and Decidability},
  author={Kozen, Dexter and Smith, Frederick},
  booktitle={International Workshop on Computer Science Logic},
  pages={244--259},
  year={1997},
  organization={Springer},
  doi={10.1007/3-540-63172-0_43}
}

@article{kozen1994completeness,
  title={A Completeness Theorem for {Kleene} Algebras and the Algebra of Regular Events},
  author={Kozen, Dexter},
  journal={Information and Computation},
  volume={110},
  number={2},
  pages={366--390},
  year={1994},
  publisher={Elsevier},
  doi = {10.1006/inco.1994.1037},
}

@inbook{kleene1951representation,
title = {Representation of Events in Nerve Nets and Finite Automata},
booktitle = {Automata Studies},
volume={34},
author = {S. C. Kleene},
publisher = {Princeton University Press},
pages = {3--42},
doi = {10.1515/9781400882618-002},
year = {1956},
}

@inproceedings{smolka2019scalable,
  title={Scalable Verification of Probabilistic Networks},
  author={Smolka, Steffen and Kumar, Praveen and Kahn, David M and Foster, Nate and Hsu, Justin and Kozen, Dexter and Silva, Alexandra},
  booktitle={Proceedings of the 40th ACM SIGPLAN Conference on Programming Language Design and Implementation},
  pages={190--203},
  year={2019},
  publisher = {Association for Computing Machinery}, 
  doi = {10.1145/3314221.3314639}, 
  numpages = {14}, 
  series = {PLDI 2019} 
}

@article{anderson2014netkat,
  title={NetKAT: Semantic Foundations for Networks},
  author={Anderson, Carolyn Jane and Foster, Nate and Guha, Arjun and Jeannin, Jean-Baptiste and Kozen, Dexter and Schlesinger, Cole and Walker, David},
  journal={ACM SIGPLAN Notices},
  volume={49},
  number={1},
  pages={113--126},
  year={2014},
  publisher={Association for Computing Machinery},
  doi={10.1145/2578855.2535862}
}

@article{kozen1997kleene,
  title={Kleene Algebra with Tests},
  author={Kozen, Dexter},
  journal={ACM Transactions on Programming Languages and Systems (TOPLAS)},
  volume={19},
  number={3},
  pages={427--443},
  year={1997},
  publisher={Association for Computing Machinery},
  doi={10.1145/256167.256195}
}

@inproceedings{foster2015coalgebraic,
  title={A Coalgebraic Decision Procedure for NetKAT},
  author={Foster, Nate and Kozen, Dexter and Milano, Matthew and Silva, Alexandra and Thompson, Laure},
  booktitle={Proceedings of the 42nd Annual ACM SIGPLAN-SIGACT Symposium on Principles of Programming Languages},
  pages={343--355},
  year={2015},
  doi = {10.1145/2676726.2677011}, 
  series = {POPL '15},
  publisher = {Association for Computing Machinery}
}

@incollection{kozen2017coalgebraic,
  title={On the Coalgebraic Theory of Kleene Algebra with Tests},
  author={Kozen, Dexter},
  booktitle={Rohit Parikh on Logic, Language and Society},
  pages={279--298},
  year={2017},
  publisher={Springer},
  doi={10.1007/978-3-319-47843-2_15}
}

@inproceedings{dahlqvist2021write,
  title={How to Write a Coequation},
  author={Dahlqvist, Fredrik and Schmid, Todd},
  booktitle={9th Conference on Algebra and Coalgebra in Computer Science (CALCO 2021)},
  year={2021},
    pages =	{13:1--13:25},
  series =	{Leibniz International Proceedings in Informatics (LIPIcs)},
  doi={10.4230/LIPIcs.CALCO.2021.13}
}

@inproceedings{aarts2010learning,
  title={Learning I/O Automata},
  author={Aarts, Fides and Vaandrager, Frits},
  booktitle={International Conference on Concurrency Theory},
  pages={71--85},
  year={2010},
  organization={Springer},
  doi={10.1007/978-3-642-15375-4_6}
}

@inproceedings{moerman2017learning,
  author = {Moerman, Joshua and Sammartino, Matteo and Silva, Alexandra and Klin, Bartek and Szynwelski, Micha\l{}},
 title = {Learning Nominal Automata}, 
 year = {2017}, 
 publisher = {Association for Computing Machinery}, 
 doi = {10.1145/3009837.3009879}, 
 booktitle = {Proceedings of the 44th ACM SIGPLAN Symposium on Principles of Programming Languages}, 
 pages = {613–625}, 
 numpages = {13}, 
 series = {POPL '17}
}

@inproceedings{angluin1997learning,
title = {Learning Markov Chains with Variable Memory Length from Noisy Output},
author = {Angluin, Dana and Csundefinedr\"{o}s, Mikl\'{o}s},   booktitle={Proceedings of the Tenth Annual Conference on Computational Learning Theory},
publisher = {Association for Computing Machinery}, 
  pages={298--308},
  year={1997},
  doi={10.1145/267460.267517}
}

@inproceedings{denis2001residual,
  title={Residual Finite State Automata},
  author={Denis, Fran{\c{c}}ois and Lemay, Aur{\'e}lien and Terlutte, Alain},
  booktitle={Annual Symposium on Theoretical Aspects of Computer Science},
  pages={144--157},
  year={2001},
  organization={Springer},
  doi={10.1007/3-540-44693-1_13}
}

@inproceedings{bollig2009angluin,
  title={Angluin-Style Learning of NFA},
  author={Bollig, Benedikt and Habermehl, Peter and Kern, Carsten and Leucker, Martin},
booktitle = {Proceedings of the 21st International Joint Conference on Artificial Intelligence},
  year={2009},
  pages = {1004–1009}, 
  numpages = {6},
  series = {IJCAI'09},
}

@article{zetzsche2021,
   title={Canonical Automata via Distributive Law Homomorphisms},
   volume={351},
   DOI={10.4204/eptcs.351.18},
   journal={Electronic Proceedings in Theoretical Computer Science},
   publisher={Open Publishing Association},
   author={Zetzsche, Stefan and van Heerdt, Gerco and Sammartino, Matteo and Silva, Alexandra},
   year={2021},
   pages={296–313}
}

@article{zetzsche2022guarded,
  TITLE = {{Guarded Kleene Algebra with Tests: Automata Learning}},
  AUTHOR = {Stefan Zetzsche and Alexandra Silva and Matteo Sammartino},
  DOI = {10.46298/entics.10505},
  JOURNAL = {{Electronic Notes in Theoretical Informatics and Computer Science}},
  VOLUME = {{Volume 1 - Proceedings of MFPS XXXVIII}},
  YEAR = {2023},
}

@article{zetzsche2020generators,
  title={Generators and Bases for Monadic Closures},
  author={Zetzsche, Stefan and Silva, Alexandra and Sammartino, Matteo},
  journal={arXiv preprint arXiv:2010.10223},
  year={2023}
}

@inproceedings{van2020learning,
  title={Learning Automata with Side-Effects},
  author={van Heerdt, Gerco and Sammartino, Matteo and Silva, Alexandra},
  booktitle={Coalgebraic Methods in Computer Science},
  pages={68--89},
  year={2020},
  publisher={\\Springer},
  doi={10.1007/978-3-030-57201-3_5}
}

@article{thompson1968programming,
  title={Programming Techniques: Regular Expression Search Algorithm},
  author={Thompson, Ken},
  journal={Communications of the ACM},
  volume={11},
  number={6},
  pages={419--422},
  year={1968},
publisher = {Association for Computing Machinery}, 
numpages = {4}, 
doi = {10.1145/363347.363387}, 
}

@inproceedings{kozen2008bohm,
  title={The B{\"o}hm--Jacopini Theorem is False, Propositionally},
  author={Kozen, Dexter and Tseng, Wei-Lung Dustin},
  booktitle={International Conference on Mathematics of Program Construction},
  pages={177--192},
  year={2008},
  publisher={Springer},
  doi={10.1007/978-3-540-70594-9_11}
}

@inproceedings{levy2011similarity,
  title={Similarity Quotients as Final Coalgebras},
  author={Levy, Paul Blain},
booktitle={Foundations of Software Science and Computational Structures},
  pages={27--41},
  year={2011},
  organization={Springer},
  doi={10.1007/978-3-642-19805-2_3}
}

@article{baltag2000logic,
  title={A Logic for Coalgebraic Simulation},
  author={Baltag, Alexandru},
  journal={Electronic Notes in Theoretical Computer Science},
  volume={33},
  pages={42--60},
  year={2000},
  publisher={Elsevier},
  doi={10.1016/S1571-0661(05)80343-3}
}

@article{hughes2004simulations,
title = {Simulations in Coalgebra},
journal = {Electronic Notes in Theoretical Computer Science},
volume = {82},
number = {1},
pages = {128-149},
year = {2003},
issn = {1571-0661},
doi = {10.1016/S1571-0661(04)80636-4},
author = {Bart Jacobs and Jesse Hughes},
}

@article{hesselink2000fixpoint,
  title={Fixpoint Semantics and Simulation},
  author={Hesselink, Wim H. and Thijs, Albert},
  journal={Theoretical Computer Science},
  volume={238},
  number={1},
  pages={275--311},
  year={2000},
  publisher={Elsevier},
  doi={10.1016/S0304-3975(98)00176-5}
}

@inproceedings{chalupar2014automated,
  title={Automated Reverse Engineering using Lego{\textregistered}},
  author={Chalupar, Georg and Peherstorfer, Stefan and Poll, Erik and De Ruiter, Joeri},
  booktitle={8th USENIX Workshop on Offensive Technologies (WOOT 14)},
  year={2014},
  publisher = {USENIX Association},

}

@inproceedings{hagerer2002model,
  title={Model Generation by Moderated Regular Extrapolation},
  author={Hagerer, Andreas and Hungar, Hardi and Niese, Oliver and Steffen, Bernhard},
  booktitle={International Conference on Fundamental Approaches to Software Engineering},
  pages={80--95},
  year={2002},
  organization={Springer},
  doi={10.1007/3-540-45923-5_6}
}

@article{vaandrager2017model,
  title={Model Learning},
  author={Vaandrager, Frits},
  journal={Communications of the ACM},
  volume={60},
  number={2},
  pages={86--95},
  year={2017},
publisher = {Association for Computing Machinery}, 
doi={10.1145/2967606}
}

@article{feamster2014road,
  title={The Road to SDN: An Intellectual History of Programmable Networks},
  author={Feamster, Nick and Rexford, Jennifer and Zegura, Ellen},
  journal={ACM SIGCOMM Computer Communication Review},
  volume={44},
  number={2},
  pages={87--98},
  year={2014},
  publisher={Association for Computing Machinery},
  doi={10.1145/2602204.2602219}
}

@inproceedings{moore1956gedanken,
  title={Gedanken--experiments on Sequential Machines},
  author={Moore, Tyler},
  booktitle={Automata Studies, Annals of Mathematical Studies, no. 34},
  year={1956}, 	
  organization={Citeseer}
}

@inproceedings{pous2015symbolic,
  title={Symbolic Algorithms for Language Equivalence and Kleene Algebra with Tests},
  author={Pous, Damien},
booktitle = {Proceedings of the 42nd Annual ACM SIGPLAN-SIGACT Symposium on Principles of Programming Languages},
  pages={357--368},
  year={2015},	
  numpages = {12}, 
  series = {POPL '15},
  publisher = {Association for Computing Machinery}, 
doi = {10.1145/2676726.2677007}, 
}

@inproceedings{jacobs2011bases,
  title={Bases as Coalgebras},
  author={Jacobs, Bart},
  booktitle={Algebra and Coalgebra in Computer Science},
  pages={237--252},
  year={2011},
  organization={Springer},
  doi={10.1007/978-3-642-22944-2_17}
}

@inproceedings{jacobs2015recipe,
  title={A Recipe for State-and-Effect Triangles},
  author={Jacobs, Bart},
  booktitle={6th Conference on Algebra and Coalgebra in Computer Science (CALCO 2015)},
  year={2015},
  organization={Schloss Dagstuhl-Leibniz-Zentrum fuer Informatik}
}

@inproceedings{moerman2019residual,
  author    = {Joshua Moerman and
               Matteo Sammartino},
  title     = {Residual Nominal Automata},
  booktitle = {{CONCUR}},
  volume    = {171},
  pages     = {44:1--44:21},
  year      = {2020},
  doi       = {10.4230/LIPIcs.CONCUR.2020.44}
}

@book{pitts2013nominal,
  title={Nominal Sets: Names and Symmetry in Computer Science},
  author={Pitts, Andrew M.},
  volume={57},
  year={2013},
  publisher={Cambridge University Press}
}

@article{MyersAMU15,
  author    = {Robert S. R. Myers and
               Jiri Adamek and
               Stefan Milius and
               Henning Urbat},
  title     = {Coalgebraic Constructions of Canonical Nondeterministic Automata},
journal = {Theoretical Computer Science},
  volume    = {604},
  pages     = {81--101},
  year      = {2015},
  doi       = {10.1016/j.tcs.2015.03.035}
 }

@article{BrzozowskiT14,
  author    = {Janusz A. Brzozowski and
               Hellis Tamm},
  title     = {Theory of {\'{A}}tomata},
  journal   = {Theor. Comput. Sci.},
  volume    = {539},
  pages     = {13--27},
  year      = {2014},
  doi       = {10.1016/j.tcs.2014.04.016}}

@article{HeerdtMSS19,
  author    = {Gerco van Heerdt and
               Joshua Moerman and
               Matteo Sammartino and
               Alexandra Silva},
  title     = {A (Co)Algebraic Theory of Succinct Automata},
journal = {Journal of Logical and Algebraic Methods in Programming},
  volume    = {105},
  pages     = {112--125},
  year      = {2019},
  doi       = {10.1016/j.jlamp.2019.02.008}
 }

@article{arbib1975fuzzy,
  title={Fuzzy Machines in a Category},
  author={Arbib, Michael A and Manes, Ernest G},
  journal={Bulletin of the Australian Mathematical Society},
  volume={13},
  number={2},
  pages={169--210},
  year={1975},
  publisher={Cambridge University Press},
  doi={10.1017/S0004972700024412}
}

@article{power2002combining,
  title={Combining a Monad and a Comonad},
  author={Power, John and Watanabe, Hiroshi},
  journal={Theoretical Computer Science},
  volume={280},
  number={1},
  pages={137--162},
  year={2002},
  publisher={Elsevier},
  doi = {https://doi.org/10.1016/S0304-3975(01)00024-X},
}

@article{watanabe2002well,
  title={Well-Behaved Translations Between Structural Operational Semantics},
  author={Watanabe, Hiroshi},
  journal={Electronic Notes in Theoretical Computer Science},
  volume={65},
  number={1},
  pages={337--357},
  year={2002},
  publisher={Elsevier},
  doi = {10.1016/S1571-0661(04)80372-4},
}

@inproceedings{bonsangue2013presenting,
  title={Presenting Distributive Laws},
  author={Bonsangue, Marcello M and Hansen, Helle Hvid and Kurz, Alexander and Rot, Jurriaan},
  booktitle={International Conference on Algebra and Coalgebra in Computer Science},
  pages={95--109},
  year={2013},
  organization={Springer},
  doi={10.1007/978-3-642-40206-7_9}
}

@inproceedings{klin2015presenting,
  title={Presenting Morphisms of Distributive Laws},
  author={Klin, Bartek and Nachyla, Beata},
  booktitle={6th Conference on Algebra and Coalgebra in Computer Science (CALCO 2015)},
  year={2015},
  organization={Schloss Dagstuhl-Leibniz-Zentrum fuer Informatik},
  doi = {10.4230/LIPIcs.CALCO.2015.190},
  volume={35},
    pages =	{\\190--204},
}

@article{klin2004coalgebraic,
author = {Klin, Bartek},
year = {2004},
pages = {201-218},
title = {A Coalgebraic Approach to Process Equivalence and a Coinduction Principle for Traces},
volume = {106},
journal = {Electronic Notes in Theoretical Computer Science},
doi = {10.1016/j.entcs.2004.02.029},
  publisher={Elsevier}
}

@article{schroder2008expressivity,
  title={Expressivity of Coalgebraic Modal Logic: The Limits and Beyond},
  author={Schr{\"o}der, Lutz},
  journal={Theoretical Computer Science},
  volume={390},
  number={2},
  pages={230--247},
  year={2008},
  publisher={Elsevier},
  doi = {10.1016/j.tcs.2007.09.023},
}

@inproceedings{hansen2014strong,
  title={Strong Completeness for Iteration-Free Coalgebraic Dynamic Logics},
  author={Hansen, Helle Hvid and Kupke, Clemens and Leal, Raul Andres},
  booktitle={IFIP International Conference on Theoretical Computer Science},
  pages={281--295},
  year={2014},
  organization={Springer},
  doi={10.1007/978-3-662-44602-7_22}
}

@article{taylor2002subspaces,
  title={Subspaces in Abstract Stone Duality},
  author={Taylor, Paul},
  journal={Theory and Applications of Categories},
  volume={10},
  number={13},
  pages={301--368},
  year={2002},
}

@article{eilenberg1965adjoint,
  title={Adjoint Functors and Triples},
  author={Eilenberg, Samuel and Moore, John C. and others},
  journal={Illinois Journal of Mathematics},
  volume={9},
  number={3},
  pages={381--398},
  year={1965},
  publisher={University of Illinois at Urbana-Champaign},
  doi={10.1215/ijm/1256068141}
}

@inproceedings{linton1966some,
  title={Some Aspects of Equational Categories},
  author={Linton, Fred EJ},
  booktitle={Proceedings of the Conference on Categorical Algebra},
  pages={84--94},
  year={1966},
  organization={Springer},
  doi={10.1007/978-3-642-99902-4_3}
}

@book{moggi1988computational,
  title={Computational Lambda-Calculus and Monads},
  author={Moggi, Eugenio},
  year={1988},
  publisher={University of Edinburgh, Department of Computer Science, Laboratory for Foundations of Computer Science}
}

@book{moggi1990abstract,
  title={An Abstract View of Programming Languages},
  author={Moggi, Eugenio},
  year={1990},
  publisher={University of Edinburgh, Department of Computer Science, Laboratory for Foundations of Computer Science}
}

@article{moggi1991notions,
  title={Notions of Computation and Monads},
  author={Moggi, Eugenio},
  journal={Information and Computation},
  volume={93},
  number={1},
  pages={55--92},
  year={1991},
  publisher={Elsevier},
  doi={10.1016/0890-5401(91)90052-4}
}

@article{rutten2000universal,
  title={Universal Coalgebra: A Theory of Systems},
  author={Rutten, Jan},
  journal={Theoretical Computer Science},
  volume={249},
  number={1},
  pages={3--80},
  year={2000},
  publisher={Elsevier},
  doi={10.1016/S0304-3975(00)00056-6}
}

@inproceedings{beck1969distributive,
  title={Distributive Laws},
  author={Beck, Jon},
  booktitle={Seminar on Triples and Categorical Homology Theory},
  pages={119--140},
  year={1969},
  organization={Springer},
  doi={10.1007/BFb0083084}
}

@article{Street2009,
author = {Street, Ross},
journal = {Theory and Applications of Categories},
pages = {313-320},
publisher = {Mount Allison University, Department of Mathematics and Computer Science, Sackville},
title = {Weak Distributive Laws},
volume = {22},
year = {2009},
}

@article{rutten2013generalizing,
  title={Generalizing Determinization from Automata to Coalgebras},
  author={Rutten, Jan and Bonsangue, Marcello and Bonchi, Filippo and Silva, Alexandra},
  journal={Logical Methods in Computer Science},
  volume={9},
  year={2013},
  number={1},
  doi={10.2168/LMCS-9(1:9)2013}
}

@inproceedings{jacobs2012trace,
  title={Trace Semantics via Determinization},
  author={Jacobs, Bart and Silva, Alexandra and Sokolova, Ana},
  booktitle={International Workshop on Coalgebraic Methods in Computer Science},
  pages={109--129},
  year={2012},
  organization={Springer},
  doi={10.1007/978-3-642-32784-1_7}
}

@incollection{mohri2009weighted,
  title={Weighted Automata Algorithms},
  author={Mohri, Mehryar},
booktitle="Handbook of Weighted Automata",
  pages={213--254},
  year={2009},
  publisher={Springer},
  doi={10.1007/978-3-642-01492-5_6},
}

@incollection{mohri2008speech,
  title={Speech Recognition with Weighted Finite-State Transducers},
  author={Mohri, Mehryar and Pereira, Fernando and Riley, Michael},
booktitle={Springer Handbook of Speech Processing},
  pages={559--584},
  year={2008},
  publisher={Springer},
  doi={10.1007/978-3-540-49127-9_28},
}

@article{bonchi2014algebra,
  title={Algebra-Coalgebra Duality in Brzozowski's Minimization Algorithm},
  author={Bonchi, Filippo and Bonsangue, Marcello M and Hansen, Helle H and Panangaden, Prakash and Rutten, Jan and Silva, Alexandra},
  journal={ACM Transactions on Computational Logic (TOCL)},
  volume={15},
  number={1},
  pages={1--29},
  year={2014},
  publisher={Association for Computing Machinery},
  doi={10.1145/2490818}
}

@article{arnold1992note,
  title={A Note About Minimal Non-Deterministic Automata},
  author={Arnold, Andr{\'e} and Dicky, Anne and Nivat, Maurice},
  journal={Bulletin of the EATCS},
  volume={47},
  pages={166--169},
  year={1992}
}

@inproceedings{silva2010generalizing,
  title={Generalizing the Powerset Construction, Coalgebraically},
  author={Silva, Alexandra and Bonchi, Filippo and Bonsangue, Marcello M and Rutten, Jan},
  booktitle =	{IARCS Annual Conference on Foundations of Software Technology and Theoretical Computer Science (FSTTCS 2010)},
  year={2010},
    volume =	{8},
      pages =	{272--283},
  publisher =	{Schloss Dagstuhl -- Leibniz-Zentrum fuer Informatik},
    doi =		{10.4230/LIPIcs.FSTTCS.2010.272},
}

@article{nerode1958linear,
  title={Linear Automaton Transformations},
  author={Nerode, Anil},
  journal={Proceedings of the American Mathematical Society},
  volume={9},
  number={4},
  pages={541--544},
  year={1958},
  publisher={JSTOR},
  doi={10.2307/2033204}
}

@book{awodey2010category,
  title={Category Theory},
  author={Awodey, Steve},
  year={2010},
publisher = {Oxford University Press, Inc.}, 
}

@techreport{VuilleminG210,
  title={Efficient Equivalence and Minimization for Non Deterministic Xor Automata},
  author={Vuillemin, Jean and Gama, Nicolas},
    institution = {{Ecole Normale Sup{\'e}rieure}},
  year={2010}
}

@mastersthesis{van2016master,
  title={An Abstract Automata Learning Framework},
  author={van Heerdt, Gerco},
  school={Radboud University Nijmegen},
  year={2016}
}

@phdthesis{van2020phd,
  title={{CALF}: Categorical Automata Learning Framework},
  author={van Heerdt, Gerco},
  school={University College London},
  year={2020}
}

@inproceedings{esposito2002learning,
  title={Learning Probabilistic Residual Finite State Automata},
  author={Esposito, Yann and Lemay, Aur{\'e}lien and Denis, Fran{\c{c}}ois and Dupont, Pierre},
  booktitle={International Colloquium on Grammatical Inference},
  pages={77--91},
  year={2002},
  organization={Springer},
  doi={10.1007/3-540-45790-9_7}
}

@inproceedings{berndt2017learning,
  title={Learning Residual Alternating Automata},
  author={Berndt, Sebastian and Li{\'s}kiewicz, Maciej and Lutter, Matthias and Reischuk, R{\"u}diger},
  booktitle={Thirty-First AAAI Conference on Artificial Intelligence},
  year={2017},
  doi={10.1609/aaai.v31i1.10891}
}

@online{stackex,
title={Number of join-irreducible elements of a lattice: is it monotonic?},
	url={https://math.stackexchange.com/questions/801833/number-of-join-irreducible-elements-of-a-lattice-is-it-monotonic},
	urldate={2021-03-29}
	year={2014}
	}

@inproceedings{angluin2015learning,
  title={Learning Regular Languages via Alternating Automata},
  author={Angluin, Dana and Eisenstat, Sarah and Fisman, Dana},
booktitle = {Proceedings of the 24th International Conference on Artificial Intelligence},
  pages={3308--3314},
  year={2015},
  series = {IJCAI'15}
}

@inproceedings{brzozowski1962canonical,
  title={Canonical Regular Expressions and Minimal State Graphs for Definite Events},
  author={Brzozowski, Janusz A},
  booktitle={Proc. Symposium of Mathematical Theory of Automata},
  pages={529--561},
  volume={12},
  year={1962}
}

@incollection{bonchi2012brzozowski,
  title={Brzozowski’s Algorithm (Co)Algebraically},
  author={Bonchi, Filippo and Bonsangue, Marcello M and Rutten, Jan and Silva, Alexandra},
  booktitle={Logic and Program Semantics},
  pages={12--23},
  year={2012},
  publisher={Springer},
  doi={10.1007/978-3-642-29485-3_2}
}

@inproceedings{adamek2012coalgebraic,
  title={A Coalgebraic Perspective on Minimization and Determinization},
  author={Adamek, Jiri and Bonchi, Filippo and H{\"u}lsbusch, Mathias and K{\"o}nig, Barbara and Milius, Stefan and Silva, Alexandra},
  booktitle={International Conference on Foundations of Software Science and Computational Structures},
  pages={58--73},
  year={2012},
  organization={Springer}, 
  doi={10.1007/978-3-642-28729-9_4}
}

@inproceedings{zwart2019no,
  title={No-Go Theorems for Distributive Laws},
  author={Zwart, Maaike and Marsden, Dan},
  booktitle={2019 34th Annual ACM/IEEE Symposium on Logic in Computer Science (LICS)},
  pages={1--13},
  year={2019},
  publisher={IEEE Computer Society},
  doi={10.1109/LICS.2019.8785707}
}

@article{rivest1993inference,
  title={Inference of Finite Automata Using Homing Sequences},
  author={Rivest, Ronald L and Schapire, Robert E},
  journal={Information and Computation},
  volume={103},
  number={2},
  pages={299--347},
  year={1993},
  publisher={Elsevier},
  doi = {10.1006/inco.1993.1021},
}

@phdthesis{howar2012active,
  title={Active Learning of Interface Programs},
  author={Howar, Falk},
  year={2012},
  school={Dortmund University of Technology},
  doi={10.17877/DE290R-4817}
}

@book{kearns1994introduction,
  title={An Introduction to Computational Learning Theory},
  author={Kearns, Michael J and Vazirani, Umesh Virkumar and Vazirani, Umesh},
  year={1994},
  publisher={MIT Press},
  doi={10.7551/mitpress/3897.001.0001}
}

@inproceedings{isberner2014ttt,
  title={The TTT Algorithm: A Redundancy-Free Approach to Active Automata Learning},
  author={Isberner, Malte and Howar, Falk and Steffen, Bernhard},
  booktitle={International Conference on Runtime Verification},
  pages={307--322},
  year={2014},
  organization={Springer},
  doi={10.1007/978-3-319-11164-3_26}
}

@InProceedings{vaandrager2021new,
author={Vaandrager, Frits
and Garhewal, Bharat
and Rot, Jurriaan
and Wi{\ss}mann, Thorsten},
title={A New Approach for Active Automata Learning Based on Apartness},
booktitle={Tools and Algorithms for the Construction and Analysis of Systems},
year={2022},
publisher={Springer},
pages={223--243},
doi={10.1007/978-3-030-99524-9_12}
}

@inproceedings{foster2016probabilistic,
  title={Probabilistic NetKAT},
  author={Foster, Nate and Kozen, Dexter and Mamouras, Konstantinos and Reitblatt, Mark and Silva, Alexandra},
booktitle = {Proceedings of the 25th European Symposium on Programming Languages and Systems},
  pages={282--309},
  numpages = {28},
  year={2016},
  publisher={Springer},
  doi={10.1007/978-3-662-49498-1_12}
}

@inproceedings{grathwohl2014kat+,
  title={KAT + B!},
  author={Grathwohl, Niels Bj{\o}rn Bugge and Kozen, Dexter and Mamouras, Konstantinos},
booktitle = {Proceedings of the Joint Meeting of the Twenty-Third EACSL Annual Conference on Computer Science Logic and the Twenty-Ninth Annual ACM/IEEE Symposium on Logic in Computer Science},
  pages={1--10},
  year={2014},
  publisher = {Association for Computing Machinery}, 
  doi = {10.1145/2603088.2603095}, 
  series = {CSL-LICS '14}
}

@article{maler1995learnability,
  title={On the Learnability of Infinitary Regular Sets},
  author={Maler, Oded and Pnueli, Amir},
  journal={Information and Computation},
  volume={118},
  number={2},
  pages={316--326},
  year={1995},
  publisher={Elsevier},
  doi={10.1006/inco.1995.1070}
}

@book{mac2013categories,
  title={Categories for the Working Mathematician},
  author={Mac Lane, Saunders},
  volume={5},
  year={2013},
  publisher={\\Springer},
  doi={10.1007/978-1-4757-4721-8}
}

@book{leinster2014basic,
  title={Basic Category Theory},
  series={Cambridge Studies in Advanced Mathematics},
  author={Leinster, Tom},
  year={2014},
  publisher={Cambridge University Press},
  doi={10.1017/CBO9781107360068},
  collection={Cambridge Studies in Advanced Mathematics}
}

@inproceedings{drews2017learning,
  title={Learning Symbolic Automata},
  author={Drews, Samuel and D’Antoni, Loris},
  booktitle={International Conference on Tools and Algorithms for the Construction and Analysis of Systems},
  pages={173--189},
  year={2017},
  organization={Springer},
  doi={10.1007/978-3-662-54577-5_10}
}

@inproceedings{maler2014learning,
  title={Learning Regular Languages over Large Alphabets},
  author={Maler, Oded and Mens, Irini-Eleftheria},
booktitle={Tools and Algorithms for the Construction and Analysis of Systems},
  pages={485--499},
  year={2014},
  publisher={Springer},
  doi={10.1007/978-3-642-54862-8_41}
}

@inproceedings{Farzan2008Extending,
  title={Extending Automated Compositional Verification to the Full Class of Omega-Regular Languages},
  author={Farzan, Azadeh and Chen, Yu Fang and Clarke, Edmund M. and Tsay, Yih Kuen and Wang, Bow Yaw},
  booktitle={Tools and Algorithms for the Construction and Analysis of Systems},
  pages={2-17},
  year={2008},
  publisher={Springer}
  doi={10.1007/978-3-540-78800-3_2}
}

@article{angluin2016learning,
  title = {Learning Regular Omega Languages},
  journal = {Theoretical Computer Science},
  volume = {650},
  pages = {57-72},
  year = {2016},
  publisher={Elsevier},
  doi = {10.1016/j.tcs.2016.07.031},
  author = {Dana Angluin and Dana Fisman},
}

@InProceedings{li2021novel,
  title={A Novel Learning Algorithm for B{\"u}chi Automata Based on Family of DFAs and Classification Trees},
  author={Li, Yong and Chen, Yu-Fang and Zhang, Lijun and Liu, Depeng},
booktitle={Tools and Algorithms for the Construction and Analysis of Systems},  volume={281},
pages={208--226},
  year={2017},
  publisher={Springer},
  doi={10.1007/978-3-662-54577-5_12}
}

@inproceedings{calbrix1993ultimately,
  title={Ultimately Periodic Words of Rational $\omega$-Languages},
  author={Calbrix, Hugues and Nivat, Maurice and Podelski, Andreas},
  booktitle={International Conference on Mathematical Foundations of Programming Semantics},
  pages={554--566},
  year={1993},
  organization={Springer},
  doi={10.1007/3-540-58027-1_27}
}

@inproceedings{li2017novel,
  title={A Novel Learning Algorithm for B{\"u}chi Automata Based on Family of DFAs and Classification Trees},
  author={Li, Yong and Chen, Yu-Fang and Zhang, Lijun and Liu, Depeng},
  booktitle={International Conference on Tools and Algorithms for the Construction and Analysis of Systems},
  pages={208--226},
  year={2017},
  organization={Springer},

}

@article{alpern1987recognizing,
  title={Recognizing Safety and Liveness},
  author={Alpern, Bowen and Schneider, Fred B},
  journal={Distributed Computing},
  volume={2},
  number={3},
  pages={117--126},
  year={1987},
  publisher={Springer}, 
  doi={10.1007/BF01782772}
}

@inproceedings{howar2012inferring,
  title={Inferring Canonical Register Automata},
  author={Howar, Falk and Steffen, Bernhard and Jonsson, Bengt and Cassel, Sofia},
  booktitle={International Workshop on Verification, Model Checking, and Abstract Interpretation},
  pages={251--266},
  year={2012},
  organization={Springer}
}

@article{isberner2014learning,
  title={Learning Register Automata: From Languages to Program Structures},
  author={Isberner, Malte and Howar, Falk and Steffen, Bernhard},
  journal={Machine Learning},
  volume={96},
  number={1},
  pages={65--98},
  year={2014},
  publisher={Springer},
  doi={10.1007/s10994-013-5419}
}

@inproceedings{vardi1986automata,
  title={An Automata-Theoretic Approach to Automatic Program Verification},
  author={Vardi, Moshe Y and Wolper, Pierre},
  booktitle={Proceedings of the Symposium on Logic in Computer Science (LICS)},
  year={1986},
  pages     = {332--344},
  publisher={IEEE Computer Society}
}

@inproceedings{pnueli1989synthesis,
  title={On the Synthesis of a Reactive Module},
  author={Pnueli, Amir and Rosner, Roni},
  booktitle = {Proceedings of the 16th ACM SIGPLAN-SIGACT Symposium on Principles of Programming Languages},
  pages={179--190},
  year={1989},
  publisher = {Association for Computing Machinery}, 
  doi = {10.1145/75277.75293}, 
  numpages = {12}, 
  series = {POPL '89} 
}

@inproceedings{lee2001size,
  title={The Size-Change Principle for Program Termination},
  author={Lee, Chin Soon and Jones, Neil D and Ben-Amram, Amir M},
  booktitle = {Proceedings of the 28th ACM \\SIGPLAN-SIGACT Symposium on Principles of Programming Languages},
  pages={81--92},
  publisher = {Association for Computing Machinery}, 
  doi = {10.1145/360204.360210},
  numpages = {12}, 
  series = {\\POPL '01},
  year={2001}
}

@article{bergadano1996learning,
  title={Learning Behaviors of Automata from Multiplicity and Equivalence Queries},
  author={Bergadano, Francesco and Varricchio, Stefano},
  journal={SIAM Journal on Computing},
  volume={25},
  number={6},
  pages={1268--1280},
  year={1996},
  publisher={SIAM},
  doi = {10.1137/S009753979326091X},
}

@InProceedings{bergadano1994learning,
 author={Bergadano, Francesco and Varricchio, Stefano},
title="Learning Behaviors of Automata from Multiplicity and Equivalence Queries",
booktitle="Algorithms and Complexity",
year="1994",
publisher="Springer",
pages="54--62",
doi={10.1007/3-540-57811-0_6},
}

@inproceedings{balle2015learning,
  title={Learning Weighted Automata},
  author={Balle, Borja and Mohri, Mehryar},
  booktitle={International Conference on Algebraic Informatics},
  pages={1--21},
  year={2015},
  organization={Springer},
  doi={10.1007/978-3-319-23021-4_1}
}

@inproceedings{van2020learningweighted,
  title={Learning Weighted Automata over Principal Ideal Domains},
  author={van Heerdt, Gerco and Kupke, Clemens and Rot, Jurriaan and Silva, Alexandra},
booktitle={Foundations of Software Science and Computation Structures},
doi={10.1007/978-3-030-45231-5_31},
  pages={602--621},
  publisher={Springer},
  year={2020}
}

@inproceedings{aarts2015learning,
  title={Learning Register Automata with Fresh Value Generation},
  author={Aarts, Fides and Fiterau-Brostean, Paul and Kuppens, Harco and Vaandrager, Frits},
  booktitle={International Colloquium on Theoretical Aspects of Computing},
  pages={165--183},
  year={2015},
  organization={Springer},
  doi = {10.1007/978-3-319-25150-9_11}
}

@inproceedings{balle2012spectral,
 author = {Balle, Borja and Mohri, Mehryar},
 booktitle = {Advances in Neural Information Processing Systems},
 editor = {F. Pereira and C.J. Burges and L. Bottou and K.Q. Weinberger},
 pages = {},
 publisher = {Curran Associates, Inc.},
 title = {Spectral Learning of General Weighted Automata via Constrained Matrix Completion},
 volume = {25},
 year = {2012}
}

@inproceedings{salomaa2012automata,
  title={Automata-Theoretic Aspects of Formal Power Series},
  author={Salomaa, Arto and Soittola, Matti},
  year={1978},
  publisher={Springer},
  doi={10.1007/978-1-4612-6264-0},
    booktitle={Texts and Monographs in Computer Science},
}

@inproceedings{breuel2008ocropus,
  title={The OCRopus open source OCR system},
  author={Breuel, Thomas M},
  booktitle = {Document Recognition and Retrieval XV},
  volume={6815},
  pages={120--134},
  year={2008},
  organization = {International Society for Optics and Photonics},
  publisher = {SPIE},
  doi = {10.1117/12.783598}
}

@incollection{albert2009digital,
  title={Digital Image Compression},
  author={Albert, J{\"u}rgen and Kari, Jarkko},
  booktitle={Handbook of Weighted Automata},
  pages={453--479},
  year={2009},
  publisher={Springer},
  doi={10.1007/978-3-642-01492-5_11}
}

@article{culik1993image,
  title={Image Compression Using Weighted Finite Automata},
  author={Culik II, Karel and Kari, Jarkko},
  journal={Computers \& Graphics},
  volume={17},
  number={3},
  pages={305--313},
  year={1993},
  publisher={Elsevier},
  doi={10.1016/0097-8493(93)90079-O}
}

@inproceedings{allauzen2008sequence,
  title={Sequence Kernels for Predicting Protein Essentiality},
  author={Allauzen, Cyril and Mohri, Mehryar and Talwalkar, Ameet},
  booktitle={Proceedings of the 25th International Conference on Machine Learning},
  pages={9--16},
  year={2008},
  doi={10.1145/1390156.1390158}
}

@inproceedings{aminof2011formal,
  title={Formal Analysis of Online Algorithms},
  author={Aminof, Benjamin and Kupferman, Orna and Lampert, Robby},
  booktitle={International Symposium on Automated Technology for Verification and Analysis},
  pages={213--227},
  year={2011},
  organization={Springer},
  doi={10.1007/978-3-642-24372-1_16}
}

@inproceedings{shahbaz2009inferring,
  title={Inferring Mealy Machines},
  author={Shahbaz, Muzammil and Groz, Roland},
  booktitle={International Symposium on Formal Methods},
  pages={207--222},
  year={2009},
  organization={Springer},
  doi={10.1007/978-3-642-05089-3_14}
}

@inproceedings{aarts2014algorithms,
  title={Algorithms for Inferring Register Automata},
  author={Aarts, Fides and Howar, Falk and Kuppens, Harco and Vaandrager, Frits},
  booktitle={International Symposium On Leveraging Applications of Formal Methods, Verification and Validation},
  pages={202--219},
  year={2014},
  organization={Springer}, 
  doi={10.1007/978-3-662-45234-9_15}
}

@inproceedings{steffen2012active,
  title={Active Automata Learning: From DFAs to Interface Programs and Beyond},
  author={Steffen, Bernhard and Howar, Falk and Isberner, Malte},
  booktitle = 	 {Proceedings of the Eleventh International Conference on Grammatical Inference},
  pages={195--209},
  year={2012},
  publisher={PMLR},
    volume = 	 {21},
    series = 	 {Proceedings of Machine Learning Research},
}

@inproceedings{steffen2011introduction,
  title={Introduction to Active Automata Learning From a Practical Perspective},
  author={Steffen, Bernhard and Howar, Falk and Merten, Maik},
  booktitle={International School on Formal Methods for the Design of Computer, Communication and Software Systems},
  pages={256--296},
  year={2011},
  organization={Springer}
}

@inproceedings{pena1998new,
  title={A New Algorithm for the Reduction of Incompletely Specified Finite State Machines},
  author={Pena, Jorge M and Oliveira, Arlindo L},
  booktitle={1998 IEEE/ACM International Conference on Computer-Aided Design}, 
  pages={482--489},
  year={1998},
    doi={10.1145/288548.289075}
}

@article{mealy1955method,
  author={Mealy, George H.},
  journal={The Bell System Technical Journal}, 
  title={A Method for Synthesizing Sequential Circuits}, 
  year={1955},
  volume={34},
  number={5},
  pages={1045-1079},
  doi={10.1002/j.1538-7305.1955.tb03788.x}
}

@article{lasota2014automata,
  title={Automata Theory in Nominal Sets},
  author={Lasota, S{\l}awomir and Klin, Bartek and Boja{\'n}czyk, Miko{\l}aj},
  journal={Logical Methods in Computer Science},
  volume={10},
  number={3},
  year={2014},
  doi={10.2168/LMCS-10(3:4)2014}
}

@inproceedings{d2014minimization,
  title={Minimization of Symbolic Automata},
  author={D'Antoni, Loris and Veanes, Margus},
booktitle = {Proceedings of the 41st ACM SIGPLAN-SIGACT Symposium on Principles of Programming Languages}, 
  pages={541--553},
  publisher = {Association for Computing Machinery}, 
  series = {POPL '14},
  numpages = {13},
  year={2014},
  doi={10.1145/2535838.2535849}
}

@article{cassel2016active,
  title={Active Learning for Extended Finite State Machines},
  author={Cassel, Sofia and Howar, Falk and Jonsson, Bengt and Steffen, Bernhard},
  journal={Formal Aspects of Computing},
  volume={28},
  number={2},
  pages={233--263},
  year={2016},
  publisher={Springer},
  doi={10.1007/s00165-016-0355-5}
}

@inproceedings{bollig2013fresh,
  title={A Fresh Approach to Learning Register Automata},
  author={Bollig, Benedikt and Habermehl, Peter and Leucker, Martin and Monmege, Benjamin},
  booktitle={International Conference on Developments in Language Theory},
  pages={118--130},
  year={2013},
  organization={Springer},
  doi={10.1007/978-3-642-38771-5_12}
}

@inproceedings{tamm2018theoretical,
  title={Theoretical Aspects of Symbolic Automata},
  author={Tamm, Hellis and Veanes, Margus},
booktitle={SOFSEM 2018: Theory and Practice of Computer Science},
  pages={428--441},
  year={2018},
  publisher={Springer},
  doi={10.1007/978-3-319-73117-9_30}
}

@inproceedings{fisman2022inferring,
  title={Inferring Symbolic Automata},
  author={Fisman, Dana and Frenkel, Hadar and Zilles, Sandra},
  booktitle={30th EACSL Annual Conference on Computer Science Logic (CSL 2022)},
  year={2022},
    volume =	{216},
  publisher =	{Schloss Dagstuhl -- Leibniz-Zentrum f{\"u}r Informatik},
    pages =	{21:1--21:19},
  doi={10.4230/LIPIcs.CSL.2022.21}
}

@inproceedings{schroder2017nominal,
  title={Nominal Automata with Name Binding},
  author={Schr{\"o}der, Lutz and Kozen, Dexter and Milius, Stefan and Wi{\ss}mann, Thorsten},
booktitle={Foundations of Software Science and Computation Structures},
  pages={124--142},
  year={2017},
  publisher={Springer},
  doi={10.1007/978-3-662-54458-7_8}
}

@article{kaminski1994finite,
  title={Finite-Memory Automata},
  author={Kaminski, Michael and Francez, Nissim},
  journal={Theoretical Computer Science},
  volume={134},
  number={2},
  pages={329--363},
  year={1994},
  publisher={Elsevier},
  doi={10.1016/0304-3975(94)90242-9}
}

@inproceedings{d2019symbolic,
  title={Symbolic Register Automata},
  author={D’Antoni, Loris and Ferreira, Tiago and Sammartino, Matteo and Silva, Alexandra},
  booktitle={International Conference on Computer Aided Verification},
  pages={3--21},
  year={2019},
  organization={Springer},
  doi={10.1007/978-3-030-25540-4_1}
}

@InProceedings{giantamidis2021learning,
author={Giantamidis, Georgios
and Tripakis, Stavros},
title={Learning Moore Machines from Input-Output Traces},
booktitle={FM 2016: Formal Methods},
year={2016},
publisher={Springer},
pages={291--309},
doi={10.1007/978-3-319-48989-6_18}
}

@inproceedings{moerman2019learning,
  title={Learning Product Automata},
  author={Moerman, Joshua},
  booktitle = 	 {Proceedings of The 14th International Conference on Grammatical Inference 2018},
  pages={54--66},
  year={2019},
volume = 	 {93},
  series = 	 {Proceedings of Machine Learning Research},
   publisher =    {PMLR},
}

@inproceedings{de2015protocol,
  title={Protocol State Fuzzing of TLS Implementations},
  author={De Ruiter, Joeri and Poll, Erik},
booktitle = {Proceedings of the 24th USENIX Conference on Security Symposium},  pages={193--206},
  publisher = {USENIX Association}, 
  numpages = {14}, 
  series = {SEC'15},
  year={2015}
}

@article{angluin1981note,
  title={A Note on the Number of Queries Needed to Identify Regular Languages},
  author={Angluin, Dana},
  journal={Information and Control},
  volume={51},
  number={1},
  pages={76--87},
  year={1981},
  publisher={Elsevier},
  doi={10.1016/S0019-9958(81)90090-5}
}

@article{angluin1988queries,
  title={Queries and Concept Learning},
  author={Angluin, Dana},
  journal={Machine Learning},
  volume={2},
  number={4},
  pages={319--342},
  year={1988},
  publisher={Springer},
  doi={10.1023/A:1022821128753}
}

@article{seal2013tensors,
  title={Tensors, Monads and Actions},
  author={Seal, Gavin J},
  journal={Theory and Applications of Categories},
  volume={28},
  number={15},
  pages={403--433},
  year={2013}
}

@InProceedings{parlant_et_al:LIPIcs:2020:12746,
  author =	{Louis Parlant and Jurriaan Rot and Alexandra Silva and Bas Westerbaan},
  title =	{{Preservation of Equations by Monoidal Monads}},
  booktitle =	{45th International Symposium on Mathematical Foundations of Computer Science (MFCS 2020)},
  pages =	{77:1--77:14},
  series =	{Leibniz International Proceedings in Informatics (LIPIcs)},
  year =	{2020},
  volume =	{170},
  publisher =	{Schloss Dagstuhl--Leibniz-Zentrum f{\"u}r Informatik},
  doi =		{10.4230/LIPIcs.MFCS.2020.77},
}

@inproceedings{turi1997towards,
  title={Towards a Mathematical Operational Semantics},
  author={Turi, Daniele and Plotkin, Gordon},
  booktitle={Proceedings of Twelfth Annual IEEE Symposium on Logic in Computer Science},
  pages={280--291},
  year={1997},
  publisher={IEEE},
  doi={10.1109/LICS.1997.614955}
}

@book{adamek1994locally,
  title={Locally Presentable and Accessible Categories},
  author={Adamek, Jiri and Rosicky, Jiri},
  volume={189},
  year={1994},
  publisher={Cambridge University Press},
  doi={10.1017/CBO9780511600579}
}

@phdthesis{turi1996functorial,
  title={Functorial Operational Semantics},
  author={Turi, Daniele},
  year={1996},
  school={Vrije Universiteit Amsterdam}
}

@article{adamek2019finitely,
  title={Finitely Presentable Algebras For Finitary Monads},
  author={Adamek, Jiri and Milius, Stefan and Sousa, Lurdes and Wi{\ss}mann, Thorsten},
  journal={Theory and Applications of Categories},
  volume={34},
  number={37},
  pages={1179--1195},
  year={2019}
}

@incollection{coumans2010scalars,
    author = {Coumans, Dion and Jacobs, Bart},
    title = {Scalars, Monads, and Categories},
    booktitle = {Quantum Physics and Linguistics: A Compositional, Diagrammatic Discourse},
    publisher = {Oxford University Press},
    year = {2013},
    doi = {10.1093/acprof:oso/9780199646296.003.0007}
}

@book{jacobs2017introduction, 
  series={Cambridge Tracts in Theoretical Computer Science}, 
  title={Introduction to Coalgebra: Towards Mathematics of States and Observation}, 
  DOI={10.1017/CBO9781316823187}, 
  publisher={Cambridge University Press}, 
  author={Jacobs, Bart},
  year={2016}, 
  collection={Cambridge Tracts in Theoretical Computer Science}
}

@article{rutten2019method,
  title={The Method of Coalgebra: Exercises in Coinduction},
  author={Rutten, Jan},
  year={2019},
  publisher={CWI Amsterdam}
}

@inbook{jacobs2014automata,
  title={Automata Learning: A Categorical Perspective},
  author={Jacobs, Bart and Silva, Alexandra},
  booktitle={Horizons of the Mind. A Tribute to Prakash Panangaden},
  pages={384--406},
  year={2014},
  publisher={Springer},
  doi={10.1007/978-3-319-06880-0_20}
}

@article{klin2011bialgebras,
  title={Bialgebras for Structural Operational Semantics: An Introduction},
  author={Klin, Bartek},
  journal={Theoretical Computer Science},
  volume={412},
  number={38},
  pages={5043--5069},
  year={2011},
  publisher={Elsevier},
  doi={10.1016/j.tcs.2011.03.023}
}

@incollection{jacobs2006bialgebraic,
  title={A Bialgebraic Review of Deterministic Automata, Regular Expressions and Languages},
  author={Jacobs, Bart},
  booktitle={Algebra, Meaning, and Computation},
  pages={375--404},
  year={2006},
  publisher={Springer},
  doi={10.1007/11780274_20}
}

@article{lenisa2000distributivity,
  title={Distributivity for Endofunctors, Pointed and Co-pointed Endofunctors, Monads and Comonads},
  author={Lenisa, Marina and Power, John and Watanabe, Hiroshi},
  journal={Electronic Notes in Theoretical Computer Science},
  volume={33},
  pages={230--260},
  year={2000},
  publisher={Elsevier},
  doi = {10.1016/S1571-0661(05)80350-0},
}

@InProceedings{van2017calf,
  author =	{Gerco van Heerdt and Matteo Sammartino and Alexandra Silva},
  title =	{{CALF: Categorical Automata Learning Framework}},
  booktitle =	{26th EACSL Annual Conference on Computer Science Logic (CSL 2017)},
  pages =	{29:1--29:24},
  series =	{Leibniz International Proceedings in Informatics (LIPIcs)},
  year =	{2017},
  volume =	{82},
  publisher =	{Schloss Dagstuhl--Leibniz-Zentrum fuer Informatik},
  doi =		{10.4230/LIPIcs.CSL.2017.29},
}

@article{lang2004algebra,
  title={Algebra},
  author={Lang, Serge},
  journal={Graduate Texts in Mathematics},
  year={2002},
  publisher={Springer}
  doi={10.1007/978-1-4613-0041-0}
}

@article{adamek2009abstract,
  title={Abstract and Concrete Categories: The Joy of Cats},
  author={Adamek, Jiri and Herrlich, Horst and Strecker, George E},
  journal={Reprints in Theory and Applications of Categories},
  year={2009}
}

@article{riehl2008factorization,
  title={Factorization Systems},
  author={Riehl, Emily},
  year={2008},
  url={https://math.jhu.edu/~eriehl/factorization.pdf},
  urldate={2022-11-28}
}

@article{bousfield1977constructions,
  title={Constructions of Factorization Systems in Categories},
  author={Bousfield, Aldridge K},
  journal={Journal of Pure and Applied Algebra},
  volume={9},
  number={2-3},
  pages={207--220},
  year={1977},
  publisher={Elsevier},
  doi={10.1016/0022-4049(77)90067-6}
}

@article{maclane1950duality,
  title={Duality for Groups},
  author={MacLane, Saunders},
  journal={Bulletin of the American Mathematical Society},
  volume={56},
  number={6},
  pages={485--516},
  year={1950},
  publisher={American Mathematical Society}
}

@phdthesis{kurzlogics,
  title={Logics for Coalgebras and Applications to Computer Science},
  author={Kurz, Alexander},
  school={Ludwig-Maximilians-Universität München},
  year={2000}
}

@article{wissmann2022minimality,
  author       = {Thorsten Wißmann},
  title        = {Minimality Notions via Factorization Systems and Examples},
  year         = {2022},
  journal      = {Logical Methods in Computer Science},
  number= {3},
  doi          = {10.46298/lmcs-18(3:31)2022},
  volume       = {18},
}

@article{STREET1972149,
title = {The Formal Theory of Monads},
journal = {Journal of Pure and Applied Algebra},
volume = {2},
number = {2},
pages = {149-168},
year = {1972},
doi = {10.1016/0022-4049(72)90019-9},
author = {Ross Street}
}

@book{kozen1997automata, author = {Kozen, Dexter C.}, title = {Automata and Computability}, year = {1997}, publisher = {Springer}
, edition = {1st}, doi={10.1007/978-1-4612-1844-9}
 }

@misc{Ocaml,
 title={OCaml},
 url={https://ocaml.org}
 }

@ARTICLE{1702519,
  author={Chow, T.S.},
  journal={IEEE Transactions on Software Engineering}, 
  title={Testing Software Design Modeled by Finite-State Machines}, 
  year={1978},
  volume={SE-4},
  number={3},
  pages={178-187},
  doi={10.1109/TSE.1978.231496}}

@book{hopcroft1971linear,
  title={A Linear Algorithm for Testing Equivalence of Finite Automata},
  author={Hopcroft, John E and Karp, Richard M},
  volume={114},
  year={1971},
  publisher={Defense Technical Information Center}
}

@article{bonchi2013checking,
  title={Checking NFA Equivalence With Bisimulations up to Congruence},
  author={Bonchi, Filippo and Pous, Damien},
  journal={ACM SIGPLAN Notices},
  volume={48},
  number={1},
  pages={457--468},
  year={2013},
  publisher={ACM New York, NY, USA}
}

@inproceedings{birkhoff1935structure,
  title={On the Structure of Abstract Algebras},
  author={Birkhoff, Garrett},
  booktitle={Mathematical Proceedings of the Cambridge Philosophical Society},
  volume={31},
  number={4},
  pages={433--454},
  year={1935},
  organization={Cambridge University Press}
}

@inproceedings{smolka2015fast,
  title={A Fast Compiler for NetKAT},
  author={Smolka, Steffen and Eliopoulos, Spiridon and Foster, Nate and Guha, Arjun},
  booktitle={Proceedings of the 20th ACM SIGPLAN International Conference on Functional Programming},
  pages={328--341},
  year={2015}
}
\cleardoublepage
\addcontentsline{toc}{chapter}{\listfigurename}
\listoffigures


%
%

\end{document}